\documentclass[10pt]{article}
\usepackage[T1]{fontenc}
\usepackage[utf8]{inputenc}
\usepackage[a4paper,width=150mm,top=25mm,bottom=25mm]{geometry}

\usepackage{macros}
\usepackage{thmtools,thm-restate}

\bibliography{./bib}

\newcommand{\betasecond}{\beta_{\mathsf{2nd}}}
\newcommand{\betaunique}{\beta_{\mathsf{unique}}}
\newcommand{\betaKS}{\beta_{\mathsf{KS}}}

\newcommand{\isingHamiltonian}{\mathcal{H}}
\newcommand{\typeProportions}{\vec \alpha}
\newcommand{\configurationModel}{\mathbb{G}}
\newcommand{\rectangularStrip}{\mathsf{rect}}
\newcommand{\lowerBoundTag}{\mathsf{lb}}
\newcommand{\upperBoundTag}{\mathsf{ub}}
\newcommand{\plantedTag}{\mathsf{pl}}
\newcommand{\nullModelTag}{\mathsf{null}}
\newcommand{\broadcastTag}{\mathsf{bd}}
\newcommand{\zeroMultiset}{\mathsf{zeros}}
\newcommand{\differential}{\mathsf{d}}
\newcommand{\observedData}{\mathsf{DATA}}

\newcommand{\momentCounts}{\vec m}
\newcommand{\contourVariables}{\vec x}
\newcommand{\contourCoefficients}{\vec u}

\title{On Zeros and Algorithms for Disordered Systems II: The Antiferromagnetic Ising Model up to Kesten--Stigum}
\author{Ferenc Bencs\thanks{Email: \texttt{ferenc.bencs@gmail.com},
Centrum Wiskunde \& Informatica. Supported by the Netherlands Organisation of Scientific Research (NWO): VI.Veni.222.303.} \and Brice Huang\thanks{Email: \texttt{bmhuang@stanford.edu},
Stanford University. Supported by a Stanford Science Fellowship and NSF Mathematical Sciences Postdoctoral Research Fellowship.} \and Daniel Z. Lee\thanks{Email: \texttt{lee\_d@mit.edu},
Massachusetts Institute of Technology. Supported by the NSF CAREER grant CCF-2443045, and the Reed Fund at MIT.} \and Kuikui Liu\thanks{Email: \texttt{liukui@mit.edu},
Massachusetts Institute of Technology. Supported by the NSF CAREER grant CCF-2443045, and the Reed Fund at MIT.} \and Guus Regts\thanks{Email: \texttt{guusregts@gmail.com},
Korteweg de Vries Institute for Mathematics, University of Amsterdam. Supported by the Netherlands Organisation of Scientific Research (NWO): VI.Vidi.193.068.}}
\date{}

\begin{document}
\maketitle

\begin{abstract}
We give the first polynomial-time algorithm to estimate the partition function of the antiferromagnetic Ising model on the uniformly random $d$-regular graph to arbitrarily high accuracy for all inverse temperatures up to the Kesten--Stigum threshold. This is the entire regime in which the model is ``replica symmetric'', and is conjectured to be optimal for polynomial-time counting and sampling. To achieve our algorithmic result, we establish the absence of Fisher zeros for the partition function in a constant-width strip around the interval from $0$ up to the Kesten--Stigum threshold, with high probability over the randomness of the input graph.

\paragraph{AI Disclosure} The overarching vision and proof architecture were developed solely by the authors without AI input. GPT 5.5 Pro was used, with extensive guidance from the authors, for only a few minor technical aspects of the proof. The authors independently verified correctness and wrote everything by hand. We take full responsibility for its contents. A more detailed statement can be found in \cref{subsec:ack}.
\end{abstract}

{\small
\tableofcontents
}

\newpage

\section{Introduction}
The \emph{Ising model} is among the most important and well-studied models in statistical physics. Originally invented to mathematically model phase transition phenomena in real-world magnets \cite{Len20, Isi25}, it has since become a foundational model in the complexity theory of high-dimensional statistical tasks such as approximate counting and sampling. Given a $n$-vertex graph $G=(V,E)$, the \emph{Hamiltonian} of the Ising model is defined by the function
\begin{align*}
    \mathcal{H}_{G}(\sigma) \defeq \sum_{uv \in E} \frac{1 - \sigma_{u}\sigma_{v}}{2}, \qquad \forall \sigma \in \{\pm1\}^{V},
\end{align*}
which counts the number of edges cut by the bipartition formed between $\{v \in V : \sigma(v) = +1\}$ and $\{v \in V : \sigma(v) = -1\}$.\footnote{Note that the convention we use for the Hamiltonian differs from the one used in \cite{FL21} by an additive constant independent of $\sigma$.} For an \emph{inverse temperature} $\beta \in \R$, this Hamiltonian induces the \emph{Gibbs measure}
\begin{align*}
    \mu_{G,\beta}(\sigma) \propto \exp\wrapp{\beta \cdot \mathcal{H}_{G}(\sigma)}
\end{align*}
over $\{\pm1\}^{V}$, where the constant of proportionality normalizing $\mu_{G,\beta}$ into a probability measure is the \emph{partition function}
\begin{align*}
    Z_{G}(\beta) \defeq \sum_{\sigma \in \{\pm1\}^{V}} \exp\wrapp{\beta \cdot \mathcal{H}_{G}(\sigma)}.
\end{align*}
When $\beta \geq 0$, the model is \emph{antiferromagnetic} since $\mu_{G,\beta}$ favors configurations with a large number of disagreements across edges. When $\beta \leq 0$, the model is \emph{ferromagnetic} since $\mu_{G,\beta}$ favors configurations with a high amount of agreement. In this paper, we consider the antiferromagnetic setting. Our goal is to design efficient algorithms solving the following task.
\begin{problem}[Approximate Counting]\label{prob:approx-count}
Given a graph $G$, an inverse temperature $\beta \geq 0$, an error tolerance $\eta > 0$ and a failure probability $0 < \delta < 1$, compute an estimate $\widehat{Z}_{G}(\beta)$ such that
\begin{align*}
    \Pr\wrapb{e^{-\eta} \cdot Z_{G}(\beta) \leq \widehat{Z}_{G}(\beta) \leq e^{\eta} \cdot Z_{G}(\beta)} \geq 1 - \delta
\end{align*}
over the randomness of the algorithm (and of $G$, if applicable).
\end{problem}
Approximate counting is a fundamental algorithmic primitive in high-dimensional statistics because the partition function $Z_{G}(\beta)$ encodes many meaningful observables of the Gibbs measure $\mu_{G,\beta}$. These include, for instance, the expected energy $\E_{\sigma \sim \mu_{G,\beta}}\wrapb{\mathcal{H}_{G}(\sigma)}$, which can be obtained by differentiating $\log Z_{G}(\beta)$. Approximate counting is also intimately related to the task of \emph{(approximately) sampling} from the Gibbs measure $\mu_{G,\beta}$ \cite{JVV86}.

Over the past several decades, it was gradually realized that the computational complexity of \cref{prob:approx-count} is deeply influenced by \emph{phase transitions} the antiferromagnetic Ising model undergoes in the large-$n$ limit as the inverse temperature $\beta$ is varied. In particular, four disparate phase transitions have received extensive attention in the statistical physics and theoretical computer science communities:
\begin{itemize}
    \item the transition between computational tractability and intractability for approximate counting (and approximate sampling),
    \item the dynamical transition between rapid and torpid mixing of local Markov chains for sampling from $\mu_{G,\beta}$ (e.g. \emph{Glauber dynamics}),
    \item the static transition between absence and presence of long-range correlations in the Gibbs measure $\mu_{G,\beta}$, and
    \item the transition between non-clustering and clustering of the zeros of the partition function $Z_{G}(\beta)$ in a neighborhood of $\beta$.
\end{itemize}
A striking body of work suggests that all four apparently different transitions occur at the same point. This remarkable (and perhaps surprising) coincidence has since been made mathematically rigorous in the worst-case setting, where the input is an arbitrary graph $G$ of maximum degree $d$, and the relevant phase transition point is known as the \emph{uniqueness threshold} $\betaunique(d) \defeq 2\arctanh\wrapp{\frac{1}{d-1}} = \Theta\wrapp{\frac{1}{d-1}}$; see \cref{subsec:prior-work} for a more precise discussion. A longstanding conjecture in the field posits that an analogous coincidence holds in the average-case setting, with $G$ drawn uniformly at random from the set of all $d$-regular graphs. In this context, the relevant phase transition point is known as the \emph{Kesten--Stigum threshold}
\begin{align}\label{eq:betaKS}
    \betaKS(d) \defeq 2\arctanh\wrapp{\frac{1}{\sqrt{d-1}}} = \log\wrapp{\frac{\sqrt{d-1} + 1}{\sqrt{d-1} - 1}} = \wrapp{1 + o_{d}(1)} \cdot \frac{2}{\sqrt{d-2}}.
\end{align} 
This threshold originated from a statistical inference problem derived from the famous \emph{broadcast process} on the $(d-1)$-ary tree; see \cite[Section~1]{EKPS00} for a formal description. Hence, the Kesten--Stigum threshold is also sometimes referred to as the \emph{reconstruction threshold}, and delineates between the two regimes of $\beta$ in which certain long-range correlations are absent/present in the Gibbs measure on the $(d-1)$-ary tree. Recent work \cite{CLMS22, FL21} has also pinpointed $\betaKS(d)$ as the transition point delineating between the ``replica symmetric'' versus ``replica symmetry breaking'' phases for the Gibbs measure $\mu_{G,\beta}$ on the uniformly random $d$-regular graph, which may again be interpreted as a transition between absence/presence of a certain type of long-range correlation. Given this, the aforementioned conjecture then suggests that with high probability over a uniformly random $d$-regular graph, for all $\beta$ up to $\betaKS(d)$, the approximate counting task can be solved efficiently, local Markov chains mix rapidly, and there are no zeros of the partition function in a constant-radius neighborhood of $\beta$.

\subsection{Main Results}
In this paper, we make progress on the high-temperature (small $\beta$) side of the aforementioned conjecture by giving the first polynomial-time approximate counting algorithm for all $\beta$ up to $\betaKS(d)$. Moreover, we achieve this by showing that zeros of the partition function do not cluster near any $0 \leq \beta < \betaKS(d)$.

\begin{theorem}[Algorithms]\label{thm:alg}
For any $d \in \N_{\geq 3}$ and any $0 < \delta< 1$ and $0 \leq \beta \leq (1 - \delta) \cdot \betaKS(d)$, there exists $C = C(\delta,d) > 0$
such that the following holds: There is a deterministic algorithm running in time $(n/\varepsilon)^{C}$ which, given a $d$-regular $n$-vertex graph $G$ and $\varepsilon>0$, outputs a real number $\widehat{Z}_{G}(\beta)$ satisfying
\begin{align*}
    \Pr_{G}\wrapb{e^{-\varepsilon} \cdot Z_{G}(\beta) \leq \widehat{Z}_{G}(\beta) \leq e^{\varepsilon} \cdot Z_{G}(\beta)} \geq 1 -o_n(1).
\end{align*}
where the randomness is only over a uniformly random $d$-regular graph $G$.
\end{theorem}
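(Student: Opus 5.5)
The plan is to follow Barvinok's interpolation method, combined with the zero-freeness result announced in the abstract. I will assume the paper's main zero-free statement in the following form. For every $d\ge 3$ and $\delta\in(0,1)$ there is a width $\tau=\tau(\delta,d)>0$ such that, with probability $1-o_n(1)$ over a uniformly random $d$-regular $G$, $Z_G(\beta)\neq 0$ for all complex $\beta$ in the strip
\[
S_\tau \defeq \{\beta\in\C : -\tau\le \mathrm{Re}\,\beta\le (1-\delta/2)\betaKS(d),\ |\mathrm{Im}\,\beta|\le\tau\}.
\]
Given this, the algorithm is deterministic and works for every input graph. The randomness of $G$ enters only through the event that $G$ is zero-free on $S_\tau$. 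On that event we will prove that the output is correct.

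\textbf{Step 1 (polynomial in a disk variable).} First change variables so that the target $\beta$ becomes a point inside a disk on which the relevant polynomial has no zeros. Write $z=e^{\beta}$. Then $Z_G(\beta)=P_G(z)\defeq\sum_{k=0}^{|E|} c_k z^k$, where $c_k\in\N$ counts the configurations with exactly $k$ cut edges and $|E|=dn/2$. The image of $S_\tau$ under $\beta\mapsto e^\beta$ is an open neighbourhood $U$ of the segment $[e^{-\tau/2},\,e^{(1-\delta/2)\betaKS}]$, together with a neighbourhood of $z=1$. Following Barvinok, fix a polynomial $\phi$ of degree $D=D(\delta,d)$ with $\phi(0)=1$ and $\phi(1)=e^{\beta}$, chosen so that $\phi$ maps a disk $\{|w|\le 1+\rho\}$ into $U$; a standard explicit construction gives such a $\phi$ for any simply connected neighbourhood of the segment. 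Then $q(w)\defeq P_G(\phi(w))$ has degree at most $D|E|=O(n)$, satisfies $q(0)=P_G(1)=2^n$, and is zero-free on $|w|\le1+\rho$.

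\textbf{Step 2 (truncated Taylor expansion).} By the standard Barvinok lemma, zero-freeness on the larger disk implies that the degree-$m$ Taylor polynomial of $\log q$ at $0$, evaluated at $w=1$, approximates $\log q(1)=\log Z_G(\beta)$ to additive error $\varepsilon$ once $m=O_\rho(\log(n/\varepsilon))$. So it suffices to compute the first $m$ Taylor coefficients of $\log q$. By Newton's identities, these follow from the first $m$ coefficients of $q$, and those in turn follow from the first $m$ coefficients of $P_G$ expanded around $z=1$, using the known composition with $\phi$.

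\textbf{Step 3 (computing coefficients).} The coefficients of $P_G$ around $z=1$ are moments. Indeed,
\[
P_G(1+t)=\sum_\sigma (1+t)^{\mathcal H_G(\sigma)}=2^n\,\E\binom{\mathcal H_G(\sigma)}{j}\text{ summed against }t^j,
\]
with $\sigma$ uniform. Hence the $j$-th coefficient is $\sum_\sigma \binom{\mathcal H_G(\sigma)}{j}$, which counts pairs ($\sigma$, set of $j$ cut edges). This can be written as a sum over edge subsets $F$ with $|F|=j$ of $2^{n-|V(F)|}$ times the number of colourings of $V(F)$ that cut every edge of $F$. The latter is $2^{c(F)}$ if $F$ is bipartite and $0$ otherwise, where $c(F)$ is the number of components of $F$. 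For $j\le m=O(\log(n/\varepsilon))$, I would use the Patel--Regts framework: the quantity is a graph-polynomial coefficient that is multiplicative over connected components. One therefore enumerates connected subgraphs of size at most $m$ in a bounded-degree graph, which can be done in time $n\,(ed)^{O(m)}=(n/\varepsilon)^{O_d(1)}$. Computing the coefficients of $\log$ then needs only connected structures, via the cluster expansion. Alternatively, the crude count of all $\binom{|E|}{j}\le n^{m}$ subsets gives quasi-polynomial time, so the connected-subgraph enumeration is what makes the running time polynomial.

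\textbf{Main obstacle.} The real content is the zero-freeness input on the strip $S_\tau$, with high probability, which the paper establishes elsewhere; here it is only invoked. The steps above are routine. Within this proof, the delicate point is choosing $\phi$ and $\rho$ uniformly in $\beta\le(1-\delta)\betaKS$, so that $C$ depends only on $(\delta,d)$. This works because the zero-free region is a fixed neighbourhood of the whole interval, which yields one $\phi$-family with uniform parameters by compactness in $\beta$. One must also handle $\beta=0$ and small $\beta$, and the strip covers these by extending slightly to negative real parts.
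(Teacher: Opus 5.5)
Your proposal is correct and takes essentially the same approach as the paper. Your expansion of $P_G$ around $z=1$ is exactly the paper's bounded induced graph counting polynomial $q_G$ in the variable $e^\beta-1$, and your $\phi$ is $1+p$ for Barvinok's polynomial $p$ from \cite[Lemma~2.2.3]{Bar16book}; the truncation to $O(\log(n/\varepsilon))$ terms plus the Patel--Regts coefficient computation \cite[Theorem~3.2]{PR17} is what the paper uses, with the zero-free strip supplied by \cref{thm:zeros} applied with $\delta/2$.
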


Our algorithm is based on Barvinok's interpolation method \cite{Bar15, Bar16, Bar16book, Bar17perm}. 
More specifically, for a simple and explicit analytic function $\ell : \D \to \C$, we approximate $\log Z(\ell(z))$ up to additive $\pm\varepsilon$ error by its Taylor polynomial $\widehat{P}_{G}(z)$ of degree $O(\log(n/\varepsilon))$. 
To do this, we leverage \cite{PR17} and compute the $k$th coefficient of $\widehat{P}_{G}$ in $d^{O(k)}$-time for each $0 \leq k \leq O(\log(n/\varepsilon))$.
The key to guaranteeing correctness of our algorithm is the absence of zeros in a constant-width strip around the interval $[0,(1-\delta) \cdot \betaKS(d)]$ for any $\delta > 0$. 
This is our main technical result. 

\begin{theorem}[Zeros]\label{thm:zeros}
For any $d \in \N_{\geq 3}$, any $0 < \delta< 1$ and $0 \leq \beta \leq (1 - \delta) \cdot \betaKS(d)$, there exists $\eta = \eta(d,\delta) > 0$
such that 
with probability at least $1 -o_n(1)$ over the uniformly random $d$-regular graph $G$, the partition function $Z_{G}(\beta)$ has no zeros in the strip
\begin{align*}
    \wrapc{\beta \in \C : \dist\wrapp{\beta, [0,(1-\delta) \cdot \betaKS(d)]} < \eta}.
\end{align*}
\end{theorem}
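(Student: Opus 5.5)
The plan is a second-moment argument for the complex partition function, combined with small subgraph conditioning, and then a union bound over a finite net of the strip. The title says ``Disordered Systems II'', which suggests we follow the strategy of part I. First, for complex $\beta$ near $[0,(1-\delta)\betaKS(d)]$, we will compare $Z_G(\beta)$ to its expectation $\E Z_G(\beta)$ over the configuration model $\configurationModel$. By a complex-analytic version of the first moment computation, $\E Z_G(\beta) = \exp(n\Phi(\beta) + O(1))$. Here $\Phi$ is the Bethe free energy at the paramagnetic fixed point, which is analytic in a strip around the real interval. In particular $\E Z_G(\beta)\neq 0$ there.

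The key step is to show that the normalized random variable $Y(\beta)\defeq Z_G(\beta)/\E Z_G(\beta)$ is, whp, close to an explicit nonvanishing random product. This product comes from short cycle counts,
\begin{align*}
W(\beta) = \prod_{k\le K} (1+\delta_k(\beta))^{C_k}\, e^{-\lambda_k\delta_k(\beta)},\qquad \delta_k(\beta)=\wrapp{\tfrac{1-e^{\beta}}{1+e^{\beta}}}^{k},
\end{align*}
where $C_k$ is the number of $k$-cycles in $G$. For real $\beta<\betaKS$ this is the Janson small subgraph conditioning theorem. It needs $\E|Z|^2/|\E Z|^2 \to \exp(\sum_k \lambda_k|\delta_k|^2)$. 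For complex $\beta$ we instead compute $\E[Z(\beta)\overline{Z(\beta)}]$ as a two-replica sum over overlap profiles. We then run a Laplace-method analysis of a complex-weighted (Hermitian) second-moment functional.

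The main obstacle is showing that this functional is maximized uniquely at the trivial overlap (the paramagnetic, replica-symmetric point), with a nondegenerate Hessian. This must hold for $\beta$ off the real axis, uniformly in a strip of width $\eta$. For real $\beta<\betaKS(d)$ it is the known second-moment result underlying replica symmetry \cite{CLMS22,FL21}. Strictness at the trivial point is exactly $(d-1)\tanh^2(\beta/2)<1$. I would first try a perturbation/continuity argument. At real $\beta$ the real part of the exponent has a strict global maximum with margin depending on $\delta$. Since $|e^{\beta\mathcal H}|=e^{\Re\beta\,\mathcal H}$, the modulus of the complex second moment is bounded by the real one at $\Re\beta$, up to factors $e^{O(\eta n)}$ away from the maximizer. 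Meanwhile, near the maximizer a local Gaussian computation with complex covariance gives the precise constant. For this to work the $e^{O(\eta n)}$ loss must be beaten by the strict gap away from the trivial overlap. That gap is a constant times $n$ only outside a neighborhood, so the neighborhood must be handled by the quadratic expansion. This is where the constraint $\beta<(1-\delta)\betaKS$ enters.

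With the second moment in hand, we get $\E|Y(\beta)-W(\beta)|^2 = o(1)$ pointwise. To upgrade this to ``no zeros anywhere in the strip'', we use analyticity. Both $Y$ and $W$ are analytic in $\beta$, and the estimate holds uniformly on a slightly larger strip. By Cauchy estimates, $\sup|Y-W|$ over the strip is controlled by an $L^2$ average over a boundary contour, and this average is $o(1)$ in expectation. Meanwhile $|W|$ is bounded below on the strip whp: the $C_k$ converge to independent Poissons, and $|1+\delta_k|$ is bounded away from $0$ because $|\delta_k|<1$ near the interval. Truncating at $K$ large costs only $\epsilon$ in the variance. Then Rouché's theorem (or simply $|Y|\ge|W|-|Y-W|>0$) shows that $Z_G$ has no zeros in the strip whp. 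Finally, we transfer from the configuration model to uniform simple $d$-regular graphs by conditioning on $C_1=C_2=0$, which has probability bounded below.
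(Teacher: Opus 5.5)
There is a genuine gap: the second-moment input you rely on is false on part of the range. Your Laplace-method step assumes that for every real $\beta<\betaKS(d)$ the second-moment functional has its unique global maximum at the trivial overlap, and you attribute this to \cite{CLMS22,FL21}. The condition $(d-1)\tanh^2(\beta/2)<1$ only gives \emph{local} strict concavity at the paramagnetic point.

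Globally, $\E[Z_G(\beta)^2]/\E[Z_G(\beta)]^2$ stays bounded only up to $\betasecond(d)=\betaKS(d+1)<\betaKS(d)$. For $\betasecond(d)<\beta<\betaKS(d)$ it is exponentially large in $n$; this is what \cite{CLMS22} observed, and why they and \cite{FL21} work with a truncated second moment. So for small $\delta$ the interval $[0,(1-\delta)\betaKS]$ contains a regime where the following all fail:
\begin{itemize}
\item Janson's small subgraph conditioning does not apply to $Z_G$.
\item $\E|Y-W|^2$ is exponentially large rather than small.
\item Bounding the complex second moment by the real one at $\Re\beta$ gives nothing, since the real one is dominated by non-trivial overlaps.
\end{itemize}

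The missing idea is truncation. Restrict to balanced $\sigma$ and near-orthogonal pairs, and to an event $\mathcal{E}_{\beta,\varepsilon,c}$ on which the large-overlap part of $|Z_{G,<\varepsilon}(\beta)|^2$ is negligible. Your outline then lacks two further ingredients.

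First, the complement of this event must carry only an $e^{-cn}$ fraction of the first moment $\E[Z_{G,<\varepsilon}]$, not $o(1)$ as in \cite{CLMS22}. Moving off the real axis costs $e^{O(\eta n)}$, which an $o(1)$ bound cannot absorb. The paper proves the exponential bound through the planted model with Gaussian side information and a free-energy boosting argument.

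Second, $\allone\{G\in\mathcal{E}_{\beta,\varepsilon,c}\}Z_{G,<\varepsilon}(\beta)$ is not analytic in $\beta$. So your upgrade from pointwise $L^2$ bounds to uniform control on the strip, via Cauchy estimates and Rouch\'e, breaks down. The paper instead multiplies $Z_G$ by an analytic reweighting $A_G(\beta)$, essentially a truncated $1/W$ with bounded moments, and counts zeros with Jensen's formula. That only needs an upper bound on $\E\log|X_G(\beta)|$ at each point of a contour. For each fixed $\beta$ one can then split $\E|X_G(\beta)|$ into three pieces: the unbalanced part, the part off $\mathcal{E}_{\beta,\varepsilon,c}$, and an $L^2$ part on $\mathcal{E}_{\beta,\varepsilon,c}$.

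For $\beta<\betasecond(d)$ your route (small subgraph conditioning, Cauchy estimates, $|Y|\ge|W|-|Y-W|$) is a reasonable alternative to the paper's. It does not, however, reach $(1-\delta)\betaKS(d)$ for small $\delta$.
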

In~\cref{subsec:alg-proof} we give the details on how this theorem implies the algorithmic statement~\cref{thm:alg}.
To achieve these results, we adapt several of the core techniques developed previously by the authors in the context of mean-field spin glasses \cite{BHLLR26}.
A few remarks are in order.

\begin{remark}[Sampling]\label{rmk:nosampler}
Our approximate counting algorithm does not immediately imply an approximate sampling algorithm for the Gibbs measure $\mu_{G,\beta}$. This is because we only work with the antiferromagnetic Ising model \emph{without external field}, which is not self-reducible in the sense of \cite{JVV86}. We note that the existence of a sampling algorithm for $\mu_{G,\beta}$ with arbitrarily small total variation error \emph{does} yield an approximate counting algorithm in a blackbox fashion. Indeed, by drawing sufficiently many samples from $\mu_{G,\beta}$, one can well-approximate $(\log Z_{G})'(\beta) = \E_{\sigma \sim \mu_{G,\beta}}\wrapb{\mathcal{H}_{G}(\sigma)}$, and then integrate this over $\beta$.
\end{remark}
\begin{remark}[Certifiability]\label{rmk:certifiable}
A natural desideratum of algorithms operating on random input instances is \emph{certifiability}. In our context, this means there is an \emph{efficiently testable} property $\mathscr{P}$ of the input such that
\begin{itemize}
    \item $\mathscr{P}$ holds with high probability for a uniformly random $d$-regular graph, and
    \item the output $\widehat{Z}_{G}(\beta)$ of the algorithm \emph{must} satisfy the required approximation guarantee on the event $\mathscr{P}$.
\end{itemize}
Unfortunately, we do not know how to show that our algorithm is certifiable in this sense. 
In particular, we do not know how to certify the requisite zero-freeness of $Z_{G}(\beta)$ given $G$. We leave designing a certifiable approximate counting algorithm as an interesting open problem.
\end{remark}

\begin{remark}
Unlike \cite{BHLLR26}, where the zero-free region was a disk in the complex plane centered at $0$, our zero-free region is a thin rectangle around the real interval $[0,(1-\epsilon) \cdot \betaKS(d)]$.
This suffices just as well for the algorithm in \cref{thm:alg}.
We expect that $Z_{G}(\beta)$ is actually not zero-free on a disk of radius $(1-\epsilon) \cdot \betaKS(d)$ centered at $0$, as the \emph{ferromagnetic} regime, corresponding to negative $\beta$ in our formulation, has a genuinely different behavior. For instance, even when the input graph is random $d$-regular, the Gibbs measure becomes bimodal with clusters corresponding to configurations dominated by a single spin as soon as $-\beta$ exceeds the \emph{uniqueness threshold} $\betaunique(d) = 2\arctanh\wrapp{\frac{1}{d-1}} < \betaKS(d)$.
\end{remark}

\subsection{Acknowledgements}\label{subsec:ack}
This paper is a continuation of a line of work initiated during the \emph{Workshop on the Combinatorial, Algorithmic and Probabilistic aspects of Partition Functions}, organized by F.B. and G.R., at the Centrum Wiskunde \& Informatica (CWI) Amsterdam in 2025. The authors are grateful to CWI for its hospitality. K.L. thanks Ronen Eldan for providing access to GPT Pro. B.H. thanks Mark Sellke for access to GPT Pro.

\paragraph{AI Disclosure} The overarching vision and proof architecture were developed solely by the authors without AI input. Our starting point was a strategy conceived and executed by the authors for mean-field spin glasses in previous work which did not use AI whatsoever \cite{BHLLR26}. For the present paper, GPT 5.5 Pro was used, with extensive guidance from the authors, for only a few minor technical aspects of the proof, namely routine but cumbersome moment calculations via Laplace's method, standard error/concentration estimates, etc. The authors independently verified the correctness of all arguments in the paper, and take full responsibility for its contents. The authors initially attempted to have GPT 5.6 Codex help merge several handwritten/hand-typed notes containing various pieces of the proof into a single cohesive document.  The result was very far from satisfactory, so the entirety of the paper was written and checked manually.

\paragraph{Independent Work} Recently, Chen--Wang--Zhang independently proved \cref{thm:alg,thm:zeros} and extended them to the hardcore model and the $q$-state antiferromagnetic Potts models on random $d$-regular graphs. Their proof also uses the second moment method and small subgraph conditioning, applied to a suitable truncation of the partition function. However, they are able to avoid the use of Jensen's Formula and analytic reweighting by using the method of \emph{tightness}, a strategy first implemented (to the best of our knowledge) by \cite{BCG22} in the context of random matrix theory. We thank Sidhanth Mohanty for pointing out this reference.

\subsection{Prior Work}\label{subsec:prior-work}

\paragraph{Worst-Case Approximate Counting Complexity} A long line of work has established that there is a critical threshold
\begin{align*}
    \betaunique(d) = 2\arctanh\wrapp{\frac{1}{d-1}} = \log\wrapp{\frac{d}{d-2}},
\end{align*}
known as the \emph{uniqueness threshold}, such that the following hold:
\begin{itemize}
    \item If $\beta < \betaunique(d)$, the approximate counting problem can be solved in polynomial-time for any graph of maximum degree $d$ \cite{SST14}. If $\beta > \betaunique(d)$, then the approximate counting problem for arbitrary graphs of maximum degree $d$ is $\mathsf{NP}$-hard \cite{SS14}.
    \item If $\beta < \betaunique(d)$, then \emph{Glauber dynamics} mixes in $O(n\log n)$-steps for every $n$-vertex graph $G$ of maximum degree $d$ \cite{CLV20, CLV21a, CFYZ21a, AJKPV22}. If $\beta > \betaunique(d)$, then there is an infinite sequence of $n$-vertex graphs of maximum degree $d$ (in fact, random $d$-regular bipartite graphs) such that Glauber dynamics requires $\exp(\Omega(n))$ steps to mix (see e.g. \cite{SS14}). Moreover, very recently, polynomial mixing of the Glauber dynamics was established \emph{at} the uniqueness threshold~\cite{CCYZ25}.
    \item If $\beta < \betaunique(d)$, then $\mu_{G,\beta}$ exhibits a strong form of correlation decay known as \emph{strong spatial mixing} for any graph $G$ of maximum degree $d$. This was established in \cite{SST14}, building on seminal work of \cite{Wei06}. On the other hand, when $\beta > \betaunique(d)$, then it is a folklore result that on the infinite $(d-1)$-ary tree rooted at a vertex $r$, there is a choice of boundary conditions $\tau,\tau'$ (namely all $+$ and all $-$) on the set $L_{r}(\ell)$ of vertices at depth-$\ell$ away from $r$ such that $\abs{\Pr[\sigma_{r} = +1 \mid \sigma_{L_{r}(\ell)} = \tau] - \Pr[\sigma_{r} = +1 \mid \sigma_{L_{r}(\ell)} = \tau']}$ remains bounded below by a constant independent of the distance $\ell$.
    \item For every $\epsilon > 0$, there exists a $\xi = \xi(d,\epsilon) > 0$ such that $Z_{G}(\beta)$ has no zeros in the strip
    \begin{align*}
        \wrapc{\beta \in \C : \dist\wrapp{\beta, [0,(1-\epsilon) \cdot \betaunique(d)]} < \xi}
    \end{align*}
    for all graphs of maximum degree $d$ \cite{LSS19b}. (See~\cite{BarBarIsing,GGHPcomplexityIsing,PRSdiskIsing} for related results on the zeros of $Z_G(\beta)$ on bounded degree graphs $G$.)
    Moreover, \cite{Sly10} implies that for every $\beta \geq \betaunique(d)$, there exists a graph $G$ of maximum degree $d$ such that $Z_{G}(\beta)$ has a zero within distance $\epsilon$ of $\beta$ \footnote{To be precise, if there would be an open $U$ set containing $[0,\betaunique+\delta]$ for some $\delta>0$ that is zero-free for all $d$-regular graphs, then along any locally convergent graph sequence $G_n$ the free energy $\tfrac{1}{v(G_n)}\log Z_{G_n}(\beta)$ would also converge \cite{CF16}. In particular, if $G_n$ is a random regular or bipartite random regular graph sequence, then with probability $1$ they should converge to the same value, but by \cite{Sly10} there exists a $\Phi_{nb}<K_-<K_+<\Phi_{b}$, such that with high probability the free energy at $\betaunique+\delta$ for a bipartite random regular graph is at least $K_+$ and for a random regular is at most $K_-$.}. 
\end{itemize}
Together, these works solidify the uniqueness threshold as the key phase transition point in the worst-case setting. The purpose of this paper is to make progress on the average-case picture.

\paragraph{Free Energy Formulas} It is well-known that by differentiating the logarithmic partition function $\log Z_{G}(\beta)$, one can retrieve key observables like the expected energy of the Gibbs measure. When $G$ is random, standard techniques can be used to show that $\log Z_{G}(\beta)$ concentrates well around its expectation. Hence, there is enormous interest in obtaining accurate approximations to the \emph{quenched free energy}
\begin{align*}
    \beta \mapsto \lim_{n \to \infty} \frac{1}{n} \E_{G}\wrapb{\log Z_{G}(\beta)}.
\end{align*}
Here, the normalization by $1/n$ is natural, since $\log Z_{G}(\beta)$ has magnitude $O(n)$. Existence of this limit was established for all $d \geq 3$ and $\beta > 0$ in \cite[Theorem~4]{BGT13}. Jensen's Inequality tells us that this is always upper bounded by the \emph{annealed free energy}
\begin{align*}
    \lim_{n \to \infty} \frac{1}{n} \log \E_{G}\wrapb{Z_{G}(\beta)} = \log 2 + \frac{d}{2}\log \frac{1 + e^{-\beta}}{2}.
\end{align*}
In \cite{CLMS22}, it was proved that when $\beta < \betaKS(d)$, $Z_{G}(\beta)$ itself concentrates well around its mean, $\E\wrapb{\frac{\langle \sigma,\sigma' \rangle}{n}} \leq o(1)$ for two independent samples $\sigma,\sigma' \sim \mu_{G,\beta}$ with high probability over $G$, and ``quenched equals annealed''. On the other hand, when $\beta > \betaKS(d)$, the annealed free energy is a strict upper bound on the quenched free energy. Thus, a phase transition occurs at $\betaKS(d)$. This transition is known as the \emph{replica symmetry breaking threshold} in the mathematical physics literature. Soon after, \cite{FL21} characterized the fluctuations of $Z_{G}(\beta)$ for $\beta < \betaKS(d)$. These types of approximations yield estimates to $\log Z_{G}(\beta)$ in expectation up to $\pm o(n)$ additive error. In contrast, our goal is to obtain estimates to $\log Z_{G}(\beta)$ with high probability up to $\pm \eta$ additive error for any arbitrarily small $\eta > 0$ given as input.


\paragraph{Markov Chains} It is now well-known that for the Ising model $\mu(\sigma) \propto \exp\wrapp{\frac{1}{2}\sigma^{\top}J\sigma}$ with symmetric interaction matrix $J \in \R^{n \times n}$, if $\lambda_{\max}(J) - \lambda_{\min}(J) < 1$, then discrete-time heat-bath Glauber dynamics admits a $\Omega(1/n)$ spectral gap and hence, mixes in $O(n^{2})$-steps \cite{BB19, EKZ22}. The mixing time was subsequently improved to $O(n\log n)$ \cite{AJKPV22}. A direct application of this \emph{spectral condition} to the antiferromagnetic Ising model on the random $d$-regular graph $G$, in which $J = -\frac{\beta}{2} \cdot A_{G}$ and $\beta \cdot \mathcal{H}_{G}(\sigma) = \frac{1}{2}\sigma^{\top}J\sigma$ where $A_{G}$ is the adjacency matrix of $G$, only yields rapid mixing when $\beta \leq O(1/d)$ due to the presence of the trivial eigenvalue $d$. However, by suitably modifying the techniques in \cite{BB19, EKZ22} (which are based on \emph{stochastic localization} \cite{Eld13, CE22}), and leveraging the fact that the nontrivial spectrum of $A_{G}$ is confined to the interval $\wrapb{-2\sqrt{d-1} - o(1), 2\sqrt{d-1} + o(1)}$ with high probability, \cite{KLR22} gave a hybrid sampling algorithm that runs in polynomial-time with high probability over $G$ as long as $\beta < \frac{1}{2\sqrt{d-1}}$. Subsequently, it was proved that Glauber dynamics mixes in $O(n\log n)$-time with high probability as long as $\beta < \frac{1}{4\sqrt{d-1}}$ \cite{AKV24}. We also note that in the same range of inverse temperatures, \cite{BBD25} proved $O(n \cdot \mathsf{polylog}(n))$-mixing of \emph{Kawasaki dynamics} in both the ferromagnetic and antiferromagnetic settings when the Gibbs measure is restricted to a fixed-magnetization slice. Via standard reductions \cite{JVV86}, these sampling algorithms can all be converted into polynomial-time approximate counting algorithms. They have the advantage of certifiability (see \cref{rmk:certifiable}), but are unable to reach the true Kesten--Stigum threshold that we achieve.

\paragraph{Approximate Counting via Zero-Freeness} As we previously mentioned, our algorithm is based on Barvinok's interpolation method, which at a high-level operates by truncating the Taylor series of the logarithmic partition function. This approach to designing approximate counting algorithms was originally devised for the permanent of complex matrices \cite{Bar16, Bar17perm}, but has since found myriad applications to models originating in statistical and quantum many-body physics; we refer interested readers to \cite{Bar16book, PR22} for more in-depth surveys of the literature.

Our work draws inspiration from recent results establishing zero-freeness via a combination of Jensen's Formula and the second moment method. To the best of our knowledge, this was first done for permanents of random complex matrices in \cite{EM18} (see also \cite{KL26}). The authors then used these techniques to design the first quasipolynomial-time algorithm for estimating the partition function of the Sherrington--Kirkpatrick model up to the replica symmetry breaking threshold \cite{BHLLR26}.

\section{Technical Overview}
As previously mentioned, \cref{thm:alg} is an immediate consequence of \cref{thm:zeros} via an application of Barvinok's interpolation method \cite{Bar16book,PR17}; the details of the algorithm are provided in \cref{subsec:alg-proof} below. In the remainder of this section, we provide a high-level overview of the key components of the proof of \cref{thm:zeros}.

\subsection{Zeros and Jensen's Formula}
For convenience, denote the strip in which we will establish zero-freeness by
\begin{align*}
    S(\epsilon,\xi) &\defeq \wrapc{\beta \in \C : \dist\wrapp{\beta, [0, (1 - \epsilon) \cdot \betaKS(d)]} < \xi}.
\end{align*} 
Our goal will be to prove that for sufficiently small constant $\epsilon,\xi$,
\begin{align}\label{eq:expected-zeros-goal}
    \E_{G}\wrapb{\#\{\omega \in S(\epsilon,\xi) : Z_{G}(\omega) = 0\}} = o_{n}(1),
\end{align}
where zeros are counted with multiplicity.
Markov's Inequality then yields \cref{thm:zeros}. Following \cite{BHLLR26}, the first step towards proving \cref{eq:expected-zeros-goal} is to relate the left-hand side to the moments of the partition function $Z_{G}(\beta)$ evaluated at complex values of $\beta$. This is achieved by \emph{Jensen's Formula}.
\begin{theorem}[Jensen's Formula]\label{thm:jensen-zeros}
Let $\Omega\subset\C$ be an open set that contains the disk $\overline{\D(0,R)}$ for some $R>0$. Let $f : \Omega \to \C$ be an analytic function such that $f(0)\neq 0$. Then
\begin{align*}
    \sum_{\omega \in \D(0,R) : f(\omega) = 0} \log\wrapp{\frac{R}{\abs{\omega}}} = \E_{\theta}\wrapb{\log \abs{\frac{f(Re^{i\theta})}{f(0)}}},
\end{align*}
where the expectation is taken with respect to $\theta \sim \mathsf{Unif}[-\pi,\pi]$.
\end{theorem}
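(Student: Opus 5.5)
This is the classical Jensen formula; I would prove it by peeling off the zeros with Blaschke factors and then applying the mean value property to a harmonic function.

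\textbf{Step 1: setup.} Since $\overline{\D(0,R)}$ is compact, $f \not\equiv 0$ near it, and $f(0)\neq 0$, the function $f$ has only finitely many zeros in $\overline{\D(0,R)}$. Label them with multiplicity as $a_1,\dots,a_m \in \D(0,R)$ and $b_1,\dots,b_k$ with $\abs{b_j}=R$.

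\textbf{Step 2: the case with no zeros on the circle.} Suppose first that $k=0$. Define
\begin{align*}
    g(z) \defeq f(z)\prod_{j=1}^{m}\frac{R^{2}-\overline{a_j}z}{R(z-a_j)}.
\end{align*}
Each factor is analytic and nonvanishing on a neighborhood of $\overline{\D(0,R)}$ except for its simple pole at $a_j$, and that pole cancels the corresponding zero of $f$. So $g$ is analytic and zero-free on a slightly larger disk, and there $\log\abs{g}$ is harmonic. The mean value property then gives $\log\abs{g(0)}=\E_\theta[\log\abs{g(Re^{i\theta})}]$.

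Two computations finish this case. On $\abs{z}=R$, each factor has modulus $1$, because $\abs{R^2-\overline{a_j}z}=\abs{z}\,\abs{\overline{z}-\overline{a_j}}=R\abs{z-a_j}$. Hence $\log\abs{g}=\log\abs{f}$ on the circle. At the origin, $\abs{g(0)}=\abs{f(0)}\prod_j R/\abs{a_j}$. Substituting both into the mean value identity and subtracting $\log\abs{f(0)}$ gives the formula.

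\textbf{Step 3: zeros on the circle.} Now suppose $k>0$. Write $f(z)=h(z)\prod_{j}(z-b_j)$, where $h$ has no zeros on the circle. Step 2 applies to $h$. Since $\log(R/\abs{b_j})=0$, the left-hand side is unchanged. It then suffices to check
\begin{align*}
    \E_\theta\wrapb{\log\abs{Re^{i\theta}-b}} = \log R \quad \text{for } \abs{b}=R.
\end{align*}
After scaling, this reduces to $\int_{-\pi}^{\pi}\log\abs{1-e^{i\theta}}\,d\theta=0$. I would prove this by computing $\E_\theta[\log\abs{1-re^{i\theta}}]=0$ for $r<1$, using harmonicity of $\log\abs{1-rz}$ on the unit disk, and then letting $r\to1$. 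The limit is justified by dominated convergence, since the singularity is only logarithmic and hence integrable. The same integrability shows the right-hand side of the formula is finite. Adding the contribution $\log\abs{f(0)}=\log\abs{h(0)}+\sum_j\log\abs{b_j}$ then closes the argument.

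\textbf{Main obstacle.} The only delicate step is Step 3, the zeros on the boundary circle. It requires the limiting argument for the singular integral together with an integrable dominating bound. Everything else is routine bookkeeping.
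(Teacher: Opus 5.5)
Your argument is correct. It is the standard textbook proof: divide out the interior zeros with the Blaschke factors $\frac{R^2-\overline{a_j}z}{R(z-a_j)}$, apply the mean value property to the harmonic function $\log|g|$, and handle zeros on the circle via $\frac{1}{2\pi}\int_{-\pi}^{\pi}\log|1-e^{i\theta}|\,d\theta=0$. You also rightly read the left-hand side with multiplicity, which the displayed formula requires.

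The paper does not prove this statement at all. It quotes Jensen's formula as a classical fact and only uses it, through Corollary~\ref{cor:expectedzero-ub} and in the proof of Theorem~\ref{t:zero-free}, so there is no competing route to compare against.

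The one step you leave slightly vague is the domination in Step~3, which is easy to make explicit. Since $|1-re^{i\theta}|^2=(1-r)^2+r|1-e^{i\theta}|^2$, you have $\sqrt{r}\,|1-e^{i\theta}|\le|1-re^{i\theta}|\le 2$. Hence $\bigl|\log|1-re^{i\theta}|\bigr|\le\log 2+\bigl|\log|1-e^{i\theta}|\bigr|+\frac12\log\frac1r$ uniformly for $r\in[\frac12,1)$. This bound is integrable because $\log|1-e^{i\theta}|=\log|2\sin(\theta/2)|$ has only a logarithmic singularity, so dominated convergence applies as you claimed.
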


\begin{corollary}\label{cor:expectedzero-ub}
Let $\Omega$ be as in \cref{thm:jensen-zeros}, and $f_{G} : \Omega \to \C$ be a random $G$-measurable function, which is analytic with $f_{G}(0) \neq 0$ deterministically. Then for any $0<r<R$ we have
\begin{align*}
    \E_{G}\wrapb{\#\{\omega \in \D(0,r) : f_{G}(\omega) = 0\}}
    \le \frac{1}{2\log(R/r)} \E_{\theta} \log \E_{G} \wrapb{
        \abs{\frac{f_{G}(Re^{i\theta})}{f_{G}(0)}}^{2}
    }
\end{align*}
\end{corollary}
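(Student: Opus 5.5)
The plan is to lower-bound the left-hand side of \cref{thm:jensen-zeros} by the number of zeros in the smaller disk, take expectations, and then move the expectation inside the logarithm using Jensen's inequality for the concave function $\log$.

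\emph{Lower bound from Jensen's Formula.} Fix a realization of $G$. Apply \cref{thm:jensen-zeros} to $f_G$ on $\overline{\D(0,R)}$; this is allowed since $f_G(0)\neq 0$ deterministically. Every term $\log(R/\abs{\omega})$ on the left is nonnegative. Each zero $\omega\in\D(0,r)$ satisfies $\abs{\omega}<r$, so its term is at least $\log(R/r)$. Dropping the zeros in $\D(0,R)\setminus\D(0,r)$ therefore gives
\begin{align*}
    \log(R/r)\cdot \#\{\omega\in\D(0,r): f_G(\omega)=0\} \le \E_\theta\wrapb{\log\abs{\frac{f_G(Re^{i\theta})}{f_G(0)}}} = \frac12\,\E_\theta\wrapb{\log\abs{\frac{f_G(Re^{i\theta})}{f_G(0)}}^2}.
\end{align*}

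\emph{Taking expectations.} Take $\E_G$ of both sides and swap $\E_G$ with $\E_\theta$ by Fubini--Tonelli. For each fixed $\theta$, the concavity of $\log$ gives
\begin{align*}
    \E_G\log X\le\log\E_G X, \qquad X=\abs{f_G(Re^{i\theta})/f_G(0)}^2 .
\end{align*}
Dividing by $\log(R/r)>0$ then yields the claimed bound.

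\emph{Integrability.} The only step needing care is justifying Fubini and Jensen's inequality when $\log X$ may equal $-\infty$ on a null set of $\theta$ (where $f_G$ vanishes on the circle), or may fail to be integrable.
\begin{itemize}
    \item The positive part is handled by $\log^+ X\le X$. If $\E_G X=\infty$ on a set of positive $\theta$-measure, the right-hand side is $+\infty$ and there is nothing to prove, so we may assume $\E_G X<\infty$ where relevant.
    \item The negative part is harmless. Jensen's inequality $\E\log X\le\log\E X$ holds for nonnegative $X$ with values in $[-\infty,\infty)$ for $\log X$, because the inequality only goes in the favorable direction.
    \item Fubini applies to the positive parts by Tonelli. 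The negative parts can only decrease the left-hand side, so they do not affect the upper bound.
\end{itemize}
I expect this measure-theoretic bookkeeping to be the only, minor, obstacle.
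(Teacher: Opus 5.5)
Your proposal is correct and follows the argument the paper has in mind; the same chain appears inline in the proof of \cref{t:zero-free}: lower-bound the Jensen sum by $\log(R/r)$ times the number of zeros in $\D(0,r)$, write $\log\abs{\cdot}=\frac12\log\abs{\cdot}^2$, swap $\E_G$ with $\E_\theta$, and apply concavity of $\log$ pointwise in $\theta$. Your integrability bookkeeping goes beyond what the paper provides, but in the application $G$ ranges over finitely many matchings, so $\E_G$ is a finite sum and those issues do not arise.
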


A natural choice for the function $f_{G}$ is to let $f_{G}(z) = Z_{G}(\ell_{\epsilon,\xi}(z))$, where $\ell_{\epsilon,\xi} : \D \to S(\epsilon,\xi)$ is a suitable analytic map satisfying $\ell_{\epsilon,\xi}(0) = 0$. Such a $\ell_{\epsilon,\xi}$ exists by the Riemann Mapping Theorem, and an explicit construction was given in \cite{Bar16book}. However, what appears in the right-hand side then becomes the \emph{unnormalized} second moment of $2^{-n} Z_{G}(\beta)$, which unfortunately remains exponentially large.

A better choice is to normalize by the first moment and set
\begin{align*}
    f_{G}(z) = \frac{Z_{G}(\ell_{\epsilon,\xi}(z))}{\E_{G}\wrapb{Z_{G}(\ell_{\epsilon,\xi}(z))}}.
\end{align*}
Assuming one can additionally establish zero-freeness of the first moment $\E_{G}[Z_{G}(\beta)]$ in $S(\epsilon,\xi)$, $f_{G}$ is analytic on $\D$ as required. The right-hand side of \cref{cor:expectedzero-ub} can then be upper bounded by
\begin{align*}
    \frac{1}{2\epsilon} \sup_{\beta \in S(\epsilon,\xi)} \frac{\E_{G}\wrapb{\abs{Z_{G}(\beta)}^{2}} - \abs{\E_{G}[Z_{G}(\beta)]}^{2}}{\abs{\E_{G}[Z_{G}(\beta)]}^{2}}
\end{align*}
where we invoked the numerical inequality $\log(1+x) \leq x$. The numerator can be interpreted as the unnormalized variance of the partition function. In the replica-symmetric phase, the partition function itself concentrates well and one can hope that the variance is small relative to the squared expectation. However, typically, this normalized second moment is at best $O(1)$, not $1 + o_{n}(1)$. Moreover, for the antiferromagnetic Ising model on the random $d$-regular graph, boundedness of this normalized second moment only holds up to a strictly smaller threshold
\begin{align*}
    \betasecond(d) \defeq \log\wrapp{\frac{\sqrt{d} + 1}{\sqrt{d} - 1}} = \betaKS(d+1) < \betaKS(d),
\end{align*}
a phenomenon which was already observed in \cite{CLMS22}. These two issues are respectively remedied via two devices: reweighting, in the spirit of the \emph{small subgraph conditioning method}, and truncation. Both devices were crucial to obtain zero-freeness for mean-field spin glass models up to the replica symmetry breaking threshold \cite{BHLLR26}. We now describe each at a high-level.

\paragraph{Reweighting} Up to $\betasecond(d)$, the partition function $Z_{G}(\beta)$ genuinely has $O(1)$-multiplicative fluctuations. However, we can construct a random analytic function $A_{G}(\beta)$ such that the reweighted partition function
\begin{align*}
    X_{G}(\beta) \defeq \frac{Z_{G}(\beta) \cdot A_{G}(\beta)}{\E_{G}\wrapb{Z_{G}(\beta)}}
\end{align*}
has normalized variance $1 + o_{n}(1)$. Since multiplication by an analytic function can only introduce more zeros, zero-freeness for $X_{G}(\beta)$ in $S(\epsilon,\xi)$ implies zero-freeness for $Z_{G}(\beta)$ in $S(\epsilon,\xi)$. The idea is to then execute the strategy outlined above for the random function $X_{G}(\beta)$ in lieu of $Z_{G}(\beta)$.

How do we construct $A_{G}(\beta)$? In the case of real $\beta$, \cite{FL21} (building on prior work of \cite{CLMS22}) used the method of small subgraph conditioning to construct an explicit random variable $W(\beta)$ such that $\frac{Z_{G}(\beta)}{\E_{G}[Z_{G}(\beta)]} \overset{\mathsf{d}}{\to} W(\beta)$ as $n \to \infty$. In \cite{FL21}, this $W(\beta)$ is an infinite weighted sum of independent Poisson random variables. However, at its core, it is capturing a weighted sum of counts of cycles of each fixed length in the uniformly random $d$-regular graph $G$, and one can indeed write down an analogous random variable $W_{G}(\beta)$ based on this interpretation. Roughly speaking, we then distill $A_{G}(\beta)$ from $W_{G}(\beta)$ by considering $1/W_{G}(\beta)$ and applying suitable truncations (e.g. to the cycle lengths) to make it more analytically tractable. Our $A_{G}(\beta)$ bears a strong resemblance to the cluster expansion reweighting factor used in \cite{BHLLR26}.

\paragraph{Truncation} For $\betasecond(d) < \beta < \betaKS(d)$, the normalized second moment of $Z_{G}(\beta)$ is exponentially large in $n$. The source of this failure stems from the fact that in this regime,
\begin{align*}
    \Pr_{G}\wrapb{Z_{G}(\beta) > e^{cn} \cdot \E_{G}\wrapb{Z_{G}(\beta)}} \approx e^{-bn}
\end{align*}
for constants $b,c > 0$ satisfying $2c > b$. However, this is not a true barrier to concentration of $Z_{G}(\beta)$ around its expectation since the probability of such a large multiplicative deviation is still exponentially small in $n$. We remedy the situation by identifying a ``good'' high probability event $\mathcal{E}$ on the space of $d$-regular graphs and decomposing $Z_{G}(\beta)$ as $Z_{G}(\beta) \cdot \allone[G \in \mathcal{E}] + Z_{G}(\beta) \cdot \allone[G \notin \mathcal{E}]$ inside the right-hand side of Jensen's Formula. We handle the first term using the second moment method, and bound the second term just by using the fact that $\Pr[G \notin \mathcal{E}]$ is exponentially small in $n$. Following \cite{CLMS22, FL21}, we take $\mathcal{E}$ to be the set of all $G$ for which two independent samples $\sigma,\sigma'$ drawn from the Gibbs measure $\mu_{G}$ have overlap $R(\sigma,\sigma') \defeq \frac{\langle \sigma,\sigma' \rangle}{n}$ concentrated around $0$. For technical reasons, we must significantly sharpen the quantitative estimates given in \cite{CLMS22}.

\subsection{Preliminaries}
Let $\Sigma_n \defeq \{\pm1\}^{n}$. For $\tau,\sigma \in \Sigma_n$, their overlap is $R(\tau,\sigma) \defeq \frac{\langle \tau,\sigma \rangle}{n}$. For small $\varepsilon > 0$, let
\begin{align*}
    \Sigma_n(\varepsilon) \defeq \wrapc{\sigma \in \Sigma_n : \abs{R(\sigma,\allone)} < \varepsilon}.
\end{align*}

\paragraph{The Configuration Model} Throughout this paper, we work with the \emph{configuration model} of random $d$-regular graphs. Fix $d,n \in \N$ such that $dn$ is even, and write $N \defeq dn$; we view $d$ as a fixed constant (e.g. $d=3$), and $n$ as growing to $\infty$. We form $n$ ``clouds'', each uniquely labelled by a vertex $v$ in the eventual $n$-vertex (multi)graph $G$ that will be generated, and consisting of $d$ ``sockets'' (or ``half-edges'') labelled $(v,1),\dots,(v,d)$. We sample a uniformly random perfect matching $M$ on the combined set of $dn$ sockets, and contract each of the clouds to form $G$. This yields a distribution over $n$-vertex $d$-regular (multi)graphs.

Conditioned on the event that $G$ is simple, $G$ is uniformly distributed over $n$-vertex $d$-regular simple graphs. Moreover, this event occurs with probability $\Omega_{d}(1)$ and hence, events that occur with probability $1 - o_{n}(1)$ in the configuration model (e.g. zero-freeness and high-accuracy approximation of the partition function) also occur with probability $1 - o_{n}(1)$ under a uniformly random $d$-regular simple graph (i.e. we have \emph{contiguity}). Because of this and the simplification in calculations it affords, we often work directly with the random perfect matching $M$ throughout this paper, rather than directly with a truly uniformly random $d$-regular simple graph. For standard facts on the configuration model and random $d$-regular graphs, we refer the reader to e.g. \cite{Bol01}.

\paragraph{Cycles} For convenience, we call a given collection of pairwise disjoint pairs of sockets $\{(u_{t},i_{t}), (v_{t},j_{t})\}_{t=1}^{k}$ a $k$-cycle if they form a length-$k$ cycle in the resulting graph after contracting the clouds, i.e. $v_{t} = u_{t+1}$ for all $t$, and every cloud is visited exactly twice. We write $\mathscr{C}_{k}$ for the collection of all possible $k$-cycles. Note that
\begin{align}\label{eq:num-kcycle}
    \abs{\mathscr{C}_{k}} = \frac{(n)_{k} d^{k}(d-1)^{k}}{2k} = (1 \pm o_{n}(1)) \cdot \wrapp{dn}^{k}\lambda_{k},
\end{align}
where
\begin{align*}
    \lambda_{k} \defeq \frac{(d-1)^{k}}{2k},
\end{align*}
and for $x \in \R_{\geq0}$ and $k \in \N$,
\begin{align*}
    (x)_{k} \defeq x \cdot (x-1) \dotsb (x-k+1)
\end{align*}
is the \emph{falling factorial}. For instance, in this notation, we may write $\binom{n}{k} = \frac{(n)_{k}}{k!}$. The following lemma is well-known; a proof is provided in \cref{sec:technical} to keep the paper self-contained.
\begin{lemma}[Short Cycle Approximate Independence]\label{lem:shortcyc-approxindep}
Let $\Gamma$ be a collection of cycles, and let $m(\Gamma)$ denote the number of distinct edges appearing in $\Gamma$. Then
\begin{align*}
    \Pr[\Gamma \subseteq M] &= \prod_{j=0}^{m(\Gamma) - 1} \frac{1}{dn - (2j+1)}.
\end{align*}
Furthermore, if no edge appears more than once in $\Gamma$, we have
\begin{align*}
    \Pr[\Gamma \subseteq M] = \exp\wrapp{\pm O\wrapp{\frac{m^{2}}{n}}} \cdot \prod_{C \in \Gamma} \Pr[C \subseteq M] = \exp\wrapp{\pm O\wrapp{\frac{m^{2}}{n}}} \cdot \wrapp{\frac{1}{dn}}^{\abs{\Gamma}}.
\end{align*}
\end{lemma}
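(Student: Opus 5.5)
The first identity is a direct count over perfect matchings of the $N = dn$ sockets. Write $m = m(\Gamma)$ for the number of distinct socket pairs (edges) in $\Gamma$. Since each cycle is a collection of pairwise disjoint socket pairs, I will assume the union of all pairs in $\Gamma$ is consistent, meaning any two distinct pairs are disjoint as sets of sockets. If not, the probability is $0$, and the statement should be read with this implicit convention.

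Fixing $m$ disjoint pairs, the number of perfect matchings of $N$ sockets containing them equals the number of perfect matchings of the remaining $N - 2m$ sockets, which is $(N-2m-1)!!$. The total number of perfect matchings is $(N-1)!!$. Hence
\begin{align*}
    \Pr[\Gamma \subseteq M] = \frac{(N-2m-1)!!}{(N-1)!!} = \prod_{j=0}^{m-1} \frac{1}{N-(2j+1)},
\end{align*}
which is the first formula.

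For the second statement, suppose no edge is repeated, so $m = \sum_{C \in \Gamma} \abs{C}$. The first formula applied to a single cycle of length $k$ gives $\Pr[C \subseteq M] = \prod_{j=0}^{k-1}(N-2j-1)^{-1}$. I will compare everything to $N^{-m}$, using $\log\frac{N}{N-2j-1} = O(j/N)$ for $j \le m \le N/4$:
\begin{align*}
    \prod_{j=0}^{m-1}\frac{N}{N-(2j+1)} = \exp\Bigl(O\Bigl(\sum_{j<m}\frac{j+1}{N}\Bigr)\Bigr) = \exp\wrapp{O\wrapp{m^2/n}}.
\end{align*}
The same estimate holds for each cycle with $k^2$ in place of $m^2$, and $\sum_C \abs{C}^2 \le m^2$. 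This gives both claimed forms:
\begin{align*}
    \Pr[\Gamma \subseteq M] = e^{\pm O(m^2/n)} N^{-m} = e^{\pm O(m^2/n)} \prod_{C\in\Gamma}\Pr[C\subseteq M].
\end{align*}

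The exponent in the final expression should be read as the total edge count, $N^{-m} = \prod_{C \in \Gamma} (dn)^{-\abs{C}}$. The paper's notation $(dn)^{-\abs{\Gamma}}$ must mean exactly this, with $\abs{\Gamma}$ denoting the number of edges.

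For $m > N/4$, the bound $e^{\pm O(m^2/n)}$ is trivial in the regime of interest: both sides are at most $1$, and the error factor $e^{O(m^2/n)}$ is enormous. If a genuine two-sided estimate is needed there, I would restrict to $m \le N/4$, which covers all applications to short cycles.

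There is no real obstacle in this argument. The only points requiring care are the disjointness convention for repeated edges and the interpretation of $\abs{\Gamma}$ as the number of edges.
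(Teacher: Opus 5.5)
Your proof is correct and takes essentially the same route as the paper. The first identity is the standard count $(N-2m-1)!!/(N-1)!!$, which the paper calls clear; for the second, the paper sandwiches both $\Pr[\Gamma\subseteq M]$ and $\prod_{C\in\Gamma}\Pr[C\subseteq M]$ between $(dn)^{-|\Gamma|}$ and $(dn-2m)^{-|\Gamma|}$, which is your comparison to $N^{-m}$ in slightly different form, and your reading of $|\Gamma|$ as the total edge count and your restriction to $m\le N/4$ are both consistent with, and implicitly required by, the paper's argument.
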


\paragraph{Partition Functions}
Given that we will often work directly with the perfect matching $M$ on $dn$ sockets, for $\sigma \in \{\pm1\}^{n}$, we abuse notation and also write $\sigma$ for its extension to a configuration on sockets in the obvious way: $\sigma(v,j) = \sigma(v)$ for all $v \in V, j \in [d]$. For $F \subseteq \binom{V \times [d]}{2}$ (e.g. $F = M$), define
\begin{align*}
    H_{F}(\sigma) &\defeq \sum_{\{(u,i),(v,j)\} \in F} \frac{1 - \sigma_{u}\sigma_{v}}{2} \\
    Z_{F}(\beta) &\defeq \sum_{\sigma \in \{\pm1\}^{V}} \exp\wrapp{\beta H_{F}(\sigma)}.
\end{align*}
It will also be convenient to define
\begin{align*}
    \widetilde{Z}_{F}(\beta) \defeq 2^{-\abs{V(F)}}\wrapp{\frac{1 + e^{\beta}}{2}}^{-|F|}Z_{F}(\beta),
\end{align*}
where $V(F)$ denotes the number of clouds $F$ intersects. Finally, for $\beta \in \C$, define
\begin{align*}
     \delta_{k} = \delta_{k}(\beta) \defeq \wrapp{\frac{e^{-\beta} - 1}{e^{-\beta} + 1}}^{k} \in \C.
\end{align*}
The following lemma computes a closed-form formula for the partition function of a cycle. To keep the paper self-contained, a proof is provided in \cref{sec:technical}.
\begin{lemma}\label{lem:pfn-disj-cycles}
For a $k$-cycle, its partition function is given by
\begin{align*}
    Z_{k}(\beta) \defeq 2^{k}\wrapp{\frac{1 + e^{\beta}}{2}}^{k} \cdot (1 + \delta_{k}(\beta)).
\end{align*}
In particular, if $F$ is a union of $m$ vertex-disjoint cycles of lengths $k_{1},\dots,k_{m}$, then we have
\begin{align*}
    \widetilde{Z}_{F}(\beta) = \prod_{i=1}^{m} (1 + \delta_{k_{i}}(\beta)).
\end{align*}
\end{lemma}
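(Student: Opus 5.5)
We compute the cycle partition function with the transfer-matrix method. Label the vertices of the $k$-cycle $v_1,\dots,v_k$ cyclically, so that $v_{k+1}=v_1$. Each edge contributes the weight $w(\sigma_u,\sigma_v)=\exp\wrapp{\beta\frac{1-\sigma_u\sigma_v}{2}}$, which is $1$ if the spins agree and $e^{\beta}$ if they disagree. Hence
\begin{align*}
    Z_k(\beta)=\sum_{\sigma}\prod_{t=1}^{k} w(\sigma_{v_t},\sigma_{v_{t+1}})=\operatorname{tr}(T^k),
    \qquad T=\begin{pmatrix}1 & e^{\beta}\\ e^{\beta} & 1\end{pmatrix}.
\end{align*}
The matrix $T$ is symmetric with eigenvectors $(1,1)$ and $(1,-1)$. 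The corresponding eigenvalues are $1+e^{\beta}$ and $1-e^{\beta}$. Therefore
\begin{align*}
    Z_k(\beta)=(1+e^\beta)^k+(1-e^\beta)^k=(1+e^\beta)^k\wrapp{1+\wrapp{\tfrac{1-e^\beta}{1+e^\beta}}^k}.
\end{align*}
Multiplying the numerator and denominator of $\frac{1-e^\beta}{1+e^\beta}$ by $e^{-\beta}$ gives $\frac{e^{-\beta}-1}{e^{-\beta}+1}$. Its $k$th power is exactly $\delta_k(\beta)$. Since $(1+e^\beta)^k=2^k\wrapp{\frac{1+e^\beta}{2}}^k$, the first formula follows. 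These are polynomial identities in $e^{\beta}$, so they hold for all complex $\beta$. Where $1+e^\beta=0$, one should read $(1+e^\beta)^k\delta_k$ as $(1-e^\beta)^k$, which is consistent.

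For the small cases $k=1,2$, interpret the cycles in the configuration-model sense: a loop, or a pair of parallel edges. The trace formula still applies. For $k=1$ a loop is never cut, giving $Z=2$ and $\operatorname{tr}T=2$, which agrees with the formula. The case $k=2$ is likewise immediate.

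For the product statement, let $F$ be a union of vertex-disjoint cycles $C_1,\dots,C_m$. Then $H_F(\sigma)=\sum_i H_{C_i}(\sigma|_{V(C_i)})$, and the configuration on the clouds of $F$ factors across the cycles. We must also handle the clouds in $V$ that $F$ does not meet. Here I read $Z_F$ as summing only over spins on $V(F)$, consistent with the normalization $2^{-|V(F)|}$; if instead the sum runs over all of $V$, each untouched vertex contributes an extra factor of $2$ that the normalization is meant to absorb. With this reading, $Z_F=\prod_i Z_{k_i}(\beta)$. In addition, $|V(F)|=\sum_i k_i$ and $|F|=\sum_i k_i$, because each $k$-cycle has $k$ vertices and $k$ edges. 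Dividing by $2^{|V(F)|}\wrapp{\frac{1+e^\beta}{2}}^{|F|}$ therefore cancels the prefactor of each $Z_{k_i}$ exactly, leaving $\prod_i(1+\delta_{k_i}(\beta))$.

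There is no real obstacle. The only care points are the convention on $V(F)$ and the degenerate cycle lengths $k=1,2$.
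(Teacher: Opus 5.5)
Your proof is correct, but it follows a different route from the paper's. You write $Z_k(\beta)=\mathrm{tr}(T^k)$ for the $2\times 2$ transfer matrix and diagonalize it, which gives $(1+e^\beta)^k+(1-e^\beta)^k$ in one step. The paper instead uses the high-temperature (even-subgraph) expansion. It writes each edge factor as $e^{\beta/2}\cosh(\beta/2)\,(1-\sigma_u\sigma_v\tanh(\beta/2))$ and expands the product over edges into a sum over edge subsets $S$. Summing over spins then kills every $S$ with an odd-degree vertex. On a cycle only $S=\emptyset$ and $S=E$ survive, which produces $1+(-\tanh(\beta/2))^k=1+\delta_k(\beta)$.

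Your argument is shorter and uses only the one-dimensional structure of a cycle. It is also visibly a polynomial identity in $e^\beta$, so it avoids the poles of $\tanh(\beta/2)$ at $e^\beta=-1$. The paper's expansion, on the other hand, applies to any multigraph, writing the normalized partition function as a sum over even subgraphs with weight $(-\tanh(\beta/2))^{|S|}$. It identifies $\delta_k$ as the weight of the single nontrivial even subgraph of a cycle, which is the viewpoint behind the cycle-based reweighting elsewhere in the paper.

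You also go beyond the written proof by checking the loop and double-edge cases $k=1,2$ and carrying out the product statement, which the paper leaves implicit. Your reading that $Z_F$ sums only over spins on $V(F)$ is the one the paper uses later: it identifies $\sum_{\sigma^F,{\sigma'}^F}\exp(\beta H_F(\sigma^F)+\bar\beta H_F({\sigma'}^F))$ with $|Z_F(\beta)|^2$. Under the all-of-$V$ reading, however, the factor $2^{-|V(F)|}$ would \emph{not} absorb the untouched vertices; the identity would be off by $2^{n-|V(F)|}$. So your reading is forced, not merely convenient.
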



\paragraph{Miscellaneous}
\begin{lemma}[Gaussian Correction Series]\label{lem:gaussian-correction-series}
For every $0 < \delta < 1$ and every $\beta$ in a sufficiently small constant (depending on $d,\delta$) neighborhood of $[0,(1 - \delta)\beta_{\mathsf{KS}}]$, we have
\begin{align*}
    \sqrt{\frac{1}{1 - (d - 1) \cdot \abs{\delta_{2}}}} = \exp\wrapp{\sum_{k=1}^{\infty} \frac{\wrapp{(d-1) \cdot \abs{\delta_{2}}}^{k}}{2k}}.
\end{align*}
\end{lemma}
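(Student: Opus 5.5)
The identity is the power series $-\tfrac12\log(1-x)=\sum_{k\ge1}x^k/(2k)$ with $x=(d-1)\abs{\delta_2}$, so the only real content is showing $0\le x<1$ throughout the neighborhood. For $0\le x<1$ the series converges absolutely and equals $-\tfrac12\log(1-x)$. Exponentiating gives $(1-x)^{-1/2}$, which is the left-hand side with the positive real square root. Since $x$ is a nonnegative real, no branch issues arise.

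To bound $x$, write $t(\beta)\defeq \frac{e^{-\beta}-1}{e^{-\beta}+1}=-\tanh(\beta/2)$, so $\abs{\delta_2}=\abs{t(\beta)}^2$.

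First consider real $\beta\in[0,(1-\delta)\betaKS]$. Here $\abs{t}=\tanh(\beta/2)$ is increasing in $\beta$. By \cref{eq:betaKS}, $\tanh(\betaKS/2)=1/\sqrt{d-1}$. Hence
\begin{align*}
    (d-1)\abs{\delta_2}\le (d-1)\tanh^2\wrapp{(1-\delta)\betaKS/2}<(d-1)\cdot\frac{1}{d-1}=1,
\end{align*}
with strict inequality because $\tanh$ is strictly increasing. Let $1-c(d,\delta)<1$ denote the maximum of this quantity over the compact interval.

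Next extend to complex $\beta$ in the neighborhood. The map $\beta\mapsto t(\beta)$ is analytic on a neighborhood of the real axis; its poles lie at $\beta=i\pi(2m+1)$. It is therefore uniformly continuous on a compact neighborhood of the interval. We choose $\eta=\eta(d,\delta)>0$ so small that $(d-1)\abs{t(\beta)}^2<1-c/2$ whenever $\dist(\beta,[0,(1-\delta)\betaKS])<\eta$. For instance, a Lipschitz bound on $\tanh(\beta/2)$ in the strip $\abs{\Im\beta}\le1$ makes $\eta$ explicit.

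With $x\le 1-c/2$ uniformly, the series converges uniformly on the neighborhood and the identity holds there. There is no real obstacle; the only step needing care is this uniform continuity/choice of the neighborhood width.
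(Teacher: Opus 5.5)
Your proof is correct and follows the same route as the paper, which simply cites the definition of $\beta_{\mathsf{KS}}$ and the Taylor series $\log(1-x)=-\sum_{k\ge1}x^k/k$. You additionally make explicit why $x=(d-1)|\delta_2|=(d-1)|\tanh(\beta/2)|^2$ stays uniformly below $1$ on a small complex neighborhood, via strict monotonicity of $\tanh$ on the real interval and uniform continuity; the paper leaves this implicit.
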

\begin{proof}
This is immediate from the definition of $\beta_{\mathsf{KS}}$ and the Taylor series $\log(1-x) = -\sum_{k=1}^{\infty} \frac{x^{k}}{k}$.
\end{proof}

\subsection{Proof of \cref{thm:alg}}\label{subsec:alg-proof}
Let 
us fix $\beta \in [0,(1-\delta)\betaKS]$ and $\eta=\eta(d,\delta)>0$ as guaranteed by \cref{thm:zeros}. 
Denote the $\eta$-neighborhood of $[0,(1-\delta)\betaKS]$ by $U$. 

Below we show how to run the algorithm from~\cite{PR17} on input of $G$ and $\varepsilon>0$.
By \cref{thm:zeros}, we know that with probability at least $1-o_n(1)$, $Z_G$ is non-vanishing on $U$. 
Thus, it remains to show that in in this case the output of the algorithm gives us the desired approximation of $Z_G(\beta)$.

Thus, for the rest of the proof, we assume that $Z_G$ is non-vanishing on $U$. 
Let us introduce the variable $z=e^\beta-1$. Then

\[
    q_G(z)=\frac{1}{2^{|V|}}\sum_{\sigma\in\{\pm1\}^V}(1+z)^{\mathcal{H}_G(\sigma)}
\]
is a graph polynomial, with $q_G(e^\beta-1)=\tfrac{1}{2^{|V|}}Z_G(\beta)$. Note that the coefficient of $z^k$ can be computed as a linear combination of counts of induced subgraphs on at most $2k$ vertices; that is, $q_G$ is a bounded induced subgraph counting polynomial in the sense of \cite{PR17}.

Now let us fix a disk $D$ of radius larger than $1$ around $0$ and a polynomial $p(z)$ such that $p(0)=0$, $p(1)=\exp(\beta)-1$, and $p$ maps $D$ into $U'=\exp(U)-1$ (see \cite[Lemma~2.2.3]{Bar16book}). Since $q_G$ is non-vanishing on $U'$, it follows that $q_G(p(z))$ does not vanish on $D$. Hence $\log(q_G(p(z)))$ is analytic on $D$ and has a convergent Taylor series around $0$. To approximate
\[
    q_G(p(1))=\tfrac{1}{2^{|V|}}Z_G(\beta)
\]
up to a multiplicative factor $e^\varepsilon$, it suffices to truncate the Taylor series of $\log(q_G(p(z)))$ after $O(\log(n/\varepsilon))$ terms (see \cite[Lemma~2.2.1]{Bar16book}).

The first $m$ Taylor coefficients of $\log(q_G(p(z)))$ can be computed from the first $m$ coefficients of $q_G(p(z))$. Since $p(0)=0$, these in turn depend only on the first $m$ coefficients of $q_G$. The coefficients of $\log(q_G(p(z)))$ can then be recovered using Newton identities (see \cite[Section~2]{PR17}). By \cite[Theorem~3.2]{PR17}, the first $O(\log(n/\varepsilon))$ coefficients of $q_G$ can be computed in time polynomial in $n$ and $1/\varepsilon$, since $q_G$ is a bounded induced subgraph counting polynomial. 
This gives the desired approximation in time polynomial in $n$ and $1/\varepsilon$.

\section{Reduction from Zero Counting to Moment Bounds}\label{sec:reduction-2ndmoment}

We will set out to prove our main technical result, \cref{thm:zeros}. 

For $\delta, \eta_0 \ge 0$, define
\begin{equation}\label{e:rect}
  \rectangularStrip(\delta,\eta_0)
  = \left\{
    \beta = E + i \eta \in \C :
    E \in [\beta_\lowerBoundTag, \betaKS - \delta], \eta \in [-\eta_0,\eta_0]
  \right\}\,,
\end{equation}
here $\beta_\lowerBoundTag\defeq=-d^{-10}$ is fixed throughout.

\begin{theorem}\label{t:zero-free}
  For every $d\in \mathbb{N}_{\geq 3}$ and  $\delta > 0$, there exists $\eta_0 = \eta_0(d,\delta) > 0$ such that for $G \sim \configurationModel(n,d)$,
  \[
    \Pr\left\{
      Z_{G}(\beta)\,\,\text{is zero-free on}\,\,\rectangularStrip(\delta,\eta_0)
    \right\} = 1 - o_n(1)\,.
  \]
\end{theorem}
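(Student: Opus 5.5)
We plan to reduce \cref{t:zero-free} to three moment-type estimates on a reweighted, truncated partition function, and then count zeros with \cref{cor:expectedzero-ub}.

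\emph{Step 1: the normalized function.} Cover $\rectangularStrip(\delta,\eta_0)$ by finitely many disks $\D(\beta_j,r)$ with centers $\beta_j$ real and in $[\beta_\lowerBoundTag,\betaKS-\delta]$. The number of disks depends only on $d,\delta$. Choose $R>r$ with $\D(\beta_j,R)$ still inside a slightly larger rectangle $\rectangularStrip(\delta/2,2\eta_0)$. Using a small negative $\beta_\lowerBoundTag$ lets the leftmost disk contain $0$. By Markov's inequality and a union bound, it suffices to show that the expected number of zeros in each small disk is $o_n(1)$.

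On each disk we apply \cref{cor:expectedzero-ub} to
\[
  f_G(\beta)=\frac{Z_G(\beta)\,A_G(\beta)\,\allone[G\in\mathcal E]}{\E_G[Z_G(\beta)]}.
\]
Since $f_G$ vanishes identically off $\mathcal E$, it is cleaner to bound zeros of $Z_GA_G$ on the event $\mathcal E$ and separately use $\Pr[\mathcal E^c]=o(1)$.

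We take the reweighting factor to be
\[
  A_G(\beta)=\prod_{k\le L}(1+\delta_k)^{-C_k}\,e^{\lambda_k\delta_k},
\]
where $C_k$ is the number of $k$-cycles in $G$ and $L$ is a slowly growing truncation. This is analytic, and it is nonvanishing as long as $|\delta_1|<1$, which holds on a thin strip around the real axis. Hence its zeros cannot hide zeros of $Z_G$.

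The first moment $\E[Z_G(\beta)]$ is explicit: it is $2^n\bigl((1+e^{\beta})/2\bigr)^{dn/2}$ times a Gaussian-type prefactor. It is zero-free for small $\eta_0$.

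\emph{Step 2: the moment estimates.} We need three bounds, uniformly for $\beta$ in the larger rectangle:
\begin{enumerate}
  \item[(a)] $\E\bigl[Z_GA_G\allone_{\mathcal E}\bigr]=(1+o(1))\,\E[Z_G]$;
  \item[(b)] $\E\bigl[|Z_GA_G|^2\allone_{\mathcal E}\bigr]\le(1+o(1))\,|\E Z_G|^2$;
  \item[(c)] $\Pr[\mathcal E^c]=o(1)$, together with the normalization at the disk center being $1+o(1)$ in probability.
\end{enumerate}

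For (a) and (b) we expand $|Z_G|^2=\sum_{\sigma,\tau}\dots$ and group pairs by overlap. The planted model (fix $\sigma,\tau$, then sample $M$) reduces the computation to a Laplace-method computation over edge-type counts. The overlap-$0$ neighborhood contributes $|\E Z|^2$ times a Gaussian factor. We expect this factor to equal
\[
  \prod_k\bigl(1-(d-1)|\delta_2|\bigr)^{-1/2}\text{-type corrections}
  \;=\;\exp\Bigl(\sum_k\lambda_k|\delta_k|^2\Bigr),
\]
using \cref{lem:gaussian-correction-series}. The cycle-count shifts in the planted model (Poisson with mean $\lambda_k(1+\delta_k)$, via \cref{lem:shortcyc-approxindep} and \cref{lem:pfn-disj-cycles}) then cancel this factor exactly against $|A_G|^2$, up to the tail $\sum_{k>L}$, which tends to $0$ because $(d-1)|\delta_2|<1$ below $\betaKS$.

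Overlaps away from $0$ are excluded by $\mathcal E$. We define $\mathcal E$ as the event that the Gibbs measure at real $\Re\beta$ has overlap concentrated near $0$ with exponentially small mass elsewhere. We then transfer to complex $\beta$ via $|e^{\beta H}|=e^{(\Re\beta)H}$: the indicator of $\mathcal E$ multiplies each term, and we bound $|Z_G|^2$ restricted to large overlaps by the real-parameter quantity.

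\emph{Step 3: the main obstacle.} We expect the main obstacle to be (c), together with making the truncation compatible with the second moment for $\betasecond<\beta<\betaKS$. Showing that the overlap concentrates for real $\beta<\betaKS$ with quantitatively strong, exponentially small bounds requires sharpening \cite{CLMS22}. We would first try their planted-versus-null contiguity argument combined with a first-moment bound on pairs with $|R|>\varepsilon$ in the planted model. That first-moment bound is valid up to $\betaKS$ by the Kesten--Stigum local stability.

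Once (a)--(c) hold, \cref{cor:expectedzero-ub} bounds the expected number of zeros in $\D(\beta_j,r)$ by $\frac{1}{2\log(R/r)}\log(1+o(1))=o(1)$. Summing over the finitely many disks gives \cref{t:zero-free}.
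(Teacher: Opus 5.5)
Your architecture matches the paper's: Jensen's formula, cycle-count reweighting, overlap truncation, and reduction to the real part via $|e^{\beta H}|=e^{EH}$ for $\beta=E+i\eta$. But step (c) has a genuine gap, both in what it must assert and in how you propose to prove it.

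\emph{What (c) must assert.} A bound $\Pr[\mathcal E^c]=o(1)$ does not suffice once $\eta\neq0$. Write $|Z_{G,<\varepsilon}(\beta)|^2=Z^{(2)}_{G,<\varepsilon}(\beta)+Z^{(2)}_{G,\ge\varepsilon}(\beta)$. For non-real $\beta$ the near-orthogonal piece is complex-valued, so in (b) you cannot drop the indicator and reuse the unrestricted computation of $\E[Z^{(2)}_{G,<\varepsilon}(\beta)|A_G(\beta)|^2]$. You must bound $\E[\allone\{G\notin\mathcal E\}\,Z^{(2)}_{G,<\varepsilon}(\beta)\,|A_G(\beta)|^2]$. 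The only available handle is $|Z^{(2)}_{G,<\varepsilon}(\beta)|\le Z^{(2)}_{G,<\varepsilon}(E)$, while $|\E Z_G(\beta)|^2$ is smaller than $\E[Z_G(E)]^2$ by a factor $e^{-\Theta(\eta^2)n}$. What is needed is therefore a mass-weighted, exponential estimate at real $E$:
$\E[\allone\{G\notin\mathcal E\}Z^{(2)}_{G,<\varepsilon}(E)]\le e^{-cn}\E[Z_{G,<\varepsilon}(E)]^2$,
together with its one-replica analogue for the first-moment terms, with $c$ fixed before $\eta_0$ is chosen. This is \cref{t:truncation}, \cref{c:truncation} and \cref{c:truncation-complex}. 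The two-replica estimate does not follow from the one-replica one alone; the paper derives it from a two-replica planted model plus a three-replica moment bound (\cref{l:basic-mt}).

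\emph{How you propose to prove it.} Your route fails precisely in the window $\betasecond<\beta<\betaKS$ that the theorem is about. A first-moment count of $\sigma$ with $|R(\sigma,\sigma^*)|\ge\varepsilon$ in the planted model equals $\E[Z^{(2)}_{G,\ge\varepsilon}]/\E[Z_G]$. Comparing with the typical planted partition function turns it into $\E[Z^{(2)}_{G,\ge\varepsilon}]/\E[Z_G]^2$. This is the large-overlap part of the annealed second moment, and it is exponentially large above $\betasecond$; that is exactly why the untruncated second moment fails there. The Kesten--Stigum condition only makes the Hessian at overlap $0$ negative definite. That controls the near-orthogonal region (\cref{l:basic-mt}), not large overlaps. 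The broadcast argument of \cite{CLMS22} does reach $\betaKS$, but gives only $o(1)$, because neighborhoods are tree-like only to logarithmic depth.

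The missing idea is the upgrade from $o(1)$ to $e^{-cn}$. The paper does it in three steps:
\begin{enumerate}
\item By Nishimori's property, the question becomes the planted posterior's overlap with $\sigma^*$.
\item The broadcast recursion shows this overlap stays small even given a weak Gaussian side channel $\sqrt t\,\sigma^*+g$ (\cref{l:planted-augmented}).
\item An interpolation in $t$, via \cref{l:fe-deriv} and concentration of restricted free energies, shows that posterior mass $e^{-\upsilon n}$ on $|R(\sigma,\sigma^*)|\ge\varepsilon$ would force mass $1-o(1)$ on $|R|\ge\varepsilon/10$ at some $t\le t_{\mathsf{ub}}(\upsilon)\to0$ (\cref{p:planted-boosting}). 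This contradicts step 2.
\end{enumerate}

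\emph{Smaller issues.}
\begin{itemize}
\item Your $\mathcal E$ depends on $\Re\beta$, so $Z_GA_G\allone[G\in\mathcal E]$ is not analytic. You need either a single event (e.g.\ an intersection over a finite net of real parts), or, as in the paper, to split along $\mathcal E_{\beta,\varepsilon,c}$ only inside $\E\log|X_G(\beta)|\le\log\E|X_G(\beta)|$ at each fixed $\beta$.
\item With disks centered at real $\beta_j\neq0$, \cref{cor:expectedzero-ub} involves the random value $f_G(\beta_j)$. The paper sidesteps this by composing with Barvinok's map $\ell$, $\ell(0)=0$, so that $X_G(\ell(0))=1$ deterministically.
\item Your untruncated $A_G$ requires conditional second moments given arbitrarily large cycle collections. \Cref{prop:2nd-moment} only covers $|F|=O(\log n)$, which is why the paper restricts to vertex-disjoint collections of total length at most $m$.
\end{itemize}
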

Note that $G \sim \configurationModel(n,d)$ is simple with uniformly positive probability, and conditional on being simple is a uniformly random $d$-regular graph.
Thus the conclusion of ~\cref{t:zero-free} also holds if $G$ is a random $d$-regular graph and therefore ~\cref{t:zero-free} directly implies our main technical result \cref{thm:zeros}.

\paragraph{Asymptotic conventions}

Throughout this paper we treat $d$ as fixed once and for all.
We then set several parameters in the following order.
\begin{itemize}
  \item $\delta > 0$ is a small constant.
  \item $\varepsilon_0 = \varepsilon_0(d,\delta) \in (0,1)$ is set in ~\cref{l:basic-mt} below.
  \item $\varepsilon \in (0, \varepsilon_0)$ is small depending on $d,\delta$. 
  \item $c > 0$ is small depending on $d,\delta,\varepsilon$.
  \item $\eta_0 > 0$ is small depending on $d,\delta,\varepsilon$ and sometimes $c$.
\end{itemize}
We will sometimes use the notation $c = c(d,\delta,\varepsilon)$ to make the dependence between parameters explicit.
We will send $n\to\infty$ with all other parameters held fixed, and all assertions in this note implicitly assume $n$ is sufficiently large. 

\paragraph{Section outline}

We will prove ~\cref{t:zero-free} through two key inputs, outlined in the following subsections.
\begin{itemize}
  \item In \cref{ss:truncation}, we introduce a ``balanced'' version $Z_{G,< \varepsilon}(\beta)$ of $Z_{G}(\beta)$, where we only sum over $\sigma$ with similar numbers of $+$ and $-$ spins.
  For $\beta$ in our regime of interest, the dominant contribution to $\E[Z_{G}(\beta)]$ will come from such $\sigma$; see ~\cref{l:1mt-balanced}~\cref{i:1mt-balanced-deviation} and~\cref{c:1mt-balanced-complex}. 
  We then define an event ${\mathcal{E}}_{\beta,\varepsilon,c}$ on which the non-orthogonal contribution to the second moment $\E [|Z_{G,< \varepsilon}(\beta)|^2]$ is small. The main input from this subsection is ~\cref{t:truncation}, which states that for \textbf{real} $\beta \in [2\beta_\lowerBoundTag, \betaKS - \delta]$, 
  \[
    \E [\allone\{G\not\in {\mathcal{E}}_{\beta,\varepsilon,c}\} Z_{G,< \varepsilon}(\beta)] \le e^{-cn} \E[Z_{G,< \varepsilon}(\beta)]\,, 
  \]
  i.e. the complement of ${\mathcal{E}}_{\beta,\varepsilon,c}$ accounts for an exponentially small fraction of the first moment.
  \item In \cref{ss:moments} we state ~\cref{l:1mt} and ~\cref{t:reweight}, our main estimates on the first and (truncated and reweighted) second moments.
  Note that in contrast to the mixed $p$-spin case, \textbf{even the first moment $\E[Z_{G,< \varepsilon}(\beta)]$ is nontrivial to estimate}, as its evaluation at complex $\beta$ involves a combinatorial sum whose cancellations need to be controlled precisely.
  \item In \cref{ss:zero-free} we complete the proof of ~\cref{t:zero-free} using the above inputs, via the Jensen's formula strategy.
\end{itemize}
The inputs from \cref{ss:truncation}--\cref{ss:moments} are proved in the remainder of this paper.

\subsection{Input 1: second moments dominated by near-orthogonal contribution}
\label{ss:truncation}

For $\sigma^1,\sigma^2 \in \Sigma_n$ define the overlap $R(\sigma^1,\sigma^2) = \langle \sigma^1,\sigma^2\rangle / n$.
Let $\allone = (1,\ldots,1) \in \R^n$, and for $\varepsilon > 0$ let
\begin{align}
  \label{e:Sigma-N-eps}
  \Sigma_n(\varepsilon) &= \left\{
    \sigma \in \Sigma_n : |R(\sigma,\allone)| < \varepsilon
  \right\}\,, & 
  \Sigma_n(\varepsilon)^c &= \left\{
    \sigma \in \Sigma_n : |R(\sigma,\allone)| \ge \varepsilon
  \right\}\,.
\end{align}
Then, let
\begin{align*}
  Z_{G,< \varepsilon}(\beta)
  &= \sum_{\sigma \in \Sigma_n(\varepsilon)} \exp(\beta \isingHamiltonian_{G}(\sigma))\,, & 
  Z_{G,\ge \varepsilon}(\beta)
  &= \sum_{\sigma \in \Sigma_n(\varepsilon)^c} \exp(\beta \isingHamiltonian_{G}(\sigma))
\end{align*}
be the contributions to $Z_{G}(\beta)$ from $\Sigma_n(\varepsilon)$ and $\Sigma_n(\varepsilon)^c$.
The following simple estimate shows that the set $\Sigma_n(\varepsilon)$ of ``balanced'' $\sigma$ dominates $\E [Z_{G}]$.
\begin{lemma}[Proved in Appendix~\cref{a:basic-mt}]
  \label{l:1mt-balanced}
  For any $\varepsilon > 0$, there exists $c = c(d,\varepsilon) > 0$ such that for all (real) $\beta \in [2\beta_\lowerBoundTag, +\infty)$, the following holds.
  \begin{enumerate}[label=(\alph*)]
    \item \label{i:1mt-balanced-deviation}
    We have 
    \[
      \E[Z_{G,\ge \varepsilon}(\beta)] \le e^{-cn} \E[Z_{G,< \varepsilon}(\beta)]\,.
    \]
    \item \label{i:1mt-balanced-subgaussian}
    If we sample $\sigma \in \Sigma_n(\varepsilon)$ with probability $p(\sigma) \propto \E[\exp(\beta \isingHamiltonian_{G}(\sigma))]$, then for all $s \in (0,\varepsilon]$ independent of $n$,
    \[
      \Pr_\sigma(|R(\sigma,\allone)| \ge s) \le 2\exp\left(-cns^2\right)\,.
    \]
  \end{enumerate}
\end{lemma}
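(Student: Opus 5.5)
We will compute $\E[\exp(\beta \isingHamiltonian_{G}(\sigma))]$ exactly as a function of the magnetization of $\sigma$ via the configuration model. The resulting exponential rate function will be strictly maximized at zero magnetization, uniformly in $\beta \ge 2\beta_\lowerBoundTag$, with a quadratic lower bound on its deficit. Both parts of \cref{l:1mt-balanced} then follow by comparing sums over magnetization levels.

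\textbf{First moment per magnetization.} Fix $\sigma$ with $n_+ = n(1+m)/2$ plus-vertices, hence $dn_+$ plus-sockets and $dn_-$ minus-sockets. The law of $M$ depends on $\sigma$ only through $m$. The number of cut edges $K$ satisfies $K \equiv dn_+ \pmod 2$. The number of perfect matchings with exactly $K$ cross pairs is
\begin{align*}
\binom{dn_+}{K}\binom{dn_-}{K}K!\,(dn_+-K-1)!!\,(dn_- - K-1)!!.
\end{align*}
Hence
\begin{align*}
\E[e^{\beta \isingHamiltonian_{G}(\sigma)}] = \frac{1}{(dn-1)!!}\sum_K \binom{dn_+}{K}\binom{dn_-}{K}K!\,(dn_+-K-1)!!\,(dn_--K-1)!!\,e^{\beta K}.
\end{align*}
Multiplying by $\binom{n}{n_+}$ and applying Stirling with Laplace's method over $\kappa = K/(dn)$ gives
\begin{align*}
\sum_{\sigma : R(\sigma,\allone)=m}\E[e^{\beta \isingHamiltonian_{G}(\sigma)}] = \exp\bigl(n\Phi_\beta(m) + O(\log n)\bigr),
\end{align*}
where
\begin{align*}
\Phi_\beta(m) = H\!\left(\tfrac{1+m}{2}\right) + d\,\sup_{\kappa}\psi_\beta(m,\kappa).
\end{align*}
Here $\psi_\beta$ is explicit, and the polynomial prefactors are uniform in $m$. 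The inner optimization is the same as the one solved in \cite{CLMS22}: it reduces to a quadratic equation for $\kappa$. At $m=0$ one recovers $\Phi_\beta(0) = \log 2 + \tfrac d2\log\tfrac{1+e^{\beta}}{2}$.

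\textbf{Strict, quadratic maximum at $m=0$.} We want
\begin{align*}
\Phi_\beta(m) \le \Phi_\beta(0) - c_1 m^2 \quad \text{for all } |m|\le 1,
\end{align*}
uniformly for $\beta \in [2\beta_\lowerBoundTag,\infty)$. The intuition is as follows. For $\beta \ge 0$ (antiferromagnetic), imbalance both reduces the entropy and reduces the number of available cross pairs, so both terms prefer $m=0$.

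We will make this precise by an envelope argument. Differentiating in $m$ at the optimizer $\kappa^*(m)$, and using the symmetry $m \mapsto -m$, one computes
\begin{align*}
\Phi_\beta''(0) = -1 + (d-1)\tanh(\beta/2)\cdot(\text{positive factor}).
\end{align*}
This is strictly negative on the stated range, with uniform room, because $\beta_\lowerBoundTag = -d^{-10}$ puts the ferromagnetic side far inside uniqueness.

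Away from $0$, I expect monotonicity of $\Phi_\beta$ in $|m|$ for $\beta \ge 2\beta_\lowerBoundTag$. For $\beta \ge 0$ this should follow from a convexity/coupling argument. For $\beta \in [2\beta_\lowerBoundTag, 0]$ we instead treat $\Phi_\beta$ as a small perturbation of $\Phi_0(m) = H((1+m)/2)$.

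Uniformity as $\beta \to \infty$ needs separate care, since the local coefficient of $m^2$ might degenerate. For large $\beta$ we compare against $\beta = \infty$. There the dominant term is $e^{\beta K_{\max}}$ with $K_{\max} = d\min(n_+,n_-)$, which is linear in $|m|$ with negative slope $-\tfrac d2|m|\beta$. This gives an even stronger bound, and compactness covers intermediate $\beta$.

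\textbf{Conclusion and main obstacle.} Given the quadratic bound, the two parts follow by summation over magnetization levels.
\begin{itemize}
\item Part (a): sum over $|m|\ge\varepsilon$ (at most $n+1$ levels) against the single term $m=0$ in $Z_{G,<\varepsilon}$. This gives a factor $e^{-c_1\varepsilon^2 n + O(\log n)}$.
\item Part (b): the same comparison restricted to $|m|\ge s$ gives $e^{-c_1 s^2 n + O(\log n)}$. The $O(\log n)$ is absorbed by halving the constant, which is valid because $s$ is a fixed constant independent of $n$ and $n$ is large.
\end{itemize}

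The main obstacle is the uniform global inequality $\Phi_\beta(m) \le \Phi_\beta(0) - c_1 m^2$ over the unbounded $\beta$-range, not just locally. I would first try to prove it by rewriting $\sup_\kappa\psi_\beta$ through the dual formula $\E e^{\beta K} \approx$ the Bethe formula on an edge with marginals $(1+m)/2$. I would then use the concavity of that expression in the marginals, which holds for the antiferromagnetic edge interaction.
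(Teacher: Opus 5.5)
Your overall architecture matches the paper's. You do the exact configuration-model count, use Stirling and Laplace to reduce to a rate function of the magnetization, and then sum over magnetization levels. That final summation step for (a) and (b) is fine. The gap is the one you flag yourself: the global bound $\Phi_\beta(m)\le\Phi_\beta(0)-c_1m^2$, uniform over $\beta\in[2\beta_{\mathsf{lb}},\infty)$, is only conjectured. You write ``I expect monotonicity'', ``should follow from a convexity/coupling argument'', ``I would first try\ldots''. That bound is the entire content of the lemma, and the pieces you do sketch would not close it.

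\begin{itemize}
\item Your second-derivative formula has the wrong sign; as written it would be positive for large $\beta$. With $\alpha=(1+m)/2$ one gets $\Phi_\beta''(0)=-1-\tfrac d2(e^\beta-1)$.
\item So the curvature at $0$ does not degenerate as $\beta\to\infty$; it blows up. The actual difficulty with your local-plus-$(\beta=\infty)$-plus-compactness scheme is different: the Taylor expansion at $m=0$ has no $\beta$-uniform remainder, and in the limit the rate function develops a kink $-\tfrac d2\beta|m|$. Patching the two regimes therefore needs an argument you have not given.
\item ``Compactness covers intermediate $\beta$'' presupposes the strict global inequality at each fixed $\beta$, which is exactly what is missing.
\end{itemize}

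The idea the paper uses is to differentiate in $\beta$ rather than study $\Phi_\beta$ at fixed $\beta$. Write $\Phi^\beta(\alpha)=\mathcal H(\alpha)+d\Psi^\beta(\alpha)$, where $\Psi^\beta(\alpha)$ is the inner supremum over the cut-edge fraction $\gamma$, attained at $\gamma^\beta(\alpha)$. At $\beta=0$ one has $\Psi^0\equiv 0$. The envelope theorem gives $\partial_\beta\Psi^\beta(\alpha)=\gamma^\beta(\alpha)$, so
\[
\partial_\alpha\Psi^\beta(\alpha)=\int_0^\beta\partial_\alpha\gamma^{b}(\alpha)\,db .
\]
Implicit differentiation of the stationarity equation $\log(\alpha-\gamma)+\log(1-\alpha-\gamma)-2\log\gamma+2\beta=0$ gives
\[
\partial_\alpha\gamma^\beta(\alpha)=-\frac{2\alpha-1}{1-2\gamma^\beta(\alpha)\bigl(1-e^{-2\beta}\bigr)} .
\]
The denominator is positive for every $\beta\ge0$ because $\gamma\le\tfrac12$, and it is at least $\tfrac12$ for $\beta\in[2\beta_{\mathsf{lb}},0)$. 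Hence on $[\tfrac12,1]$:
\begin{itemize}
\item $\partial_\alpha\Psi^\beta(\alpha)\le 0$ for all $\beta\ge0$;
\item $\partial_\alpha\Psi^\beta(\alpha)\le 2|\beta|(2\alpha-1)$ for $\beta\in[2\beta_{\mathsf{lb}},0)$, and $2d|\beta|\le 4d^{-9}\le 1$ on this range.
\end{itemize}
Combined with $\mathcal H'(\alpha)=\log\frac{1-\alpha}{\alpha}\le-2(2\alpha-1)$, this yields $\partial_\alpha\Phi^\beta(\alpha)\le-(2\alpha-1)$. Integrating gives $\Phi_\beta(m)\le\Phi_\beta(0)-m^2/4$, globally in $m$ and uniformly over the whole unbounded $\beta$-range, with no separate large-$\beta$ analysis. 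This is precisely the rigorous form of your heuristic that imbalance reduces the tilted number of cross pairs, and it is the step your proposal is missing.
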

Let
\begin{align*}
  Z_{G,< \varepsilon}^{(2)}(\beta)
  &= \sum_{\sigma^1,\sigma^2 \in \Sigma_n(\varepsilon)} 
  \allone\{|R(\sigma^1,\sigma^2)| < \varepsilon\} 
  \exp\left(\beta \isingHamiltonian_{G}(\sigma^1) + \bar \beta \isingHamiltonian_{G}(\sigma^2)\right)\,, \\
  Z_{G,\ge \varepsilon}^{(2)}(\beta)
  &= \sum_{\sigma^1,\sigma^2 \in \Sigma_n(\varepsilon)} 
  \allone\{|R(\sigma^1,\sigma^2)| \ge \varepsilon\} 
  \exp\left(\beta \isingHamiltonian_{G}(\sigma^1) + \bar \beta \isingHamiltonian_{G}(\sigma^2)\right)\,.
\end{align*}
be the contributions to $|Z_{G,< \varepsilon}(\beta)|^2$ from $(\sigma^1,\sigma^2)$ with $|R(\sigma^1,\sigma^2)| < \varepsilon$ and $|R(\sigma^1,\sigma^2)| \ge \varepsilon$.
For $\beta \in \C$, $\varepsilon,c > 0$, define the event
\begin{equation}\label{e:overlap-event}
  {\mathcal{E}}_{\beta,\varepsilon,c} = \left\{
    G : |Z_{G,\ge \varepsilon}^{(2)}(\beta)| 
    \le e^{-cn} \left|\E [Z_{G,< \varepsilon}(\beta)]\right|^2
  \right\}\,.
\end{equation}
Recall that here and throughout, the function $\varepsilon_0(d,\delta)$ is given by ~\cref{l:basic-mt} below.
The main input of this subsection is the following.
\begin{theorem}[Proved in \cref{s:truncation}]\label{t:truncation}
  Let $\delta > 0$ and $\varepsilon_0 = \varepsilon_0(d,\delta)$.
  For all $\varepsilon \in (0, \varepsilon_0)$, there exists $c = c(d,\delta,\varepsilon) > 0$ such that for all (real) $\beta \in [2\beta_\lowerBoundTag, \betaKS - \delta]$,
  \begin{equation}\label{e:truncation}
    \E\left[
      \allone\{G\not\in {\mathcal{E}}_{\beta,\varepsilon,c}\}
      Z_{G,< \varepsilon}(\beta)
    \right]
    \le
    e^{-cn} \E[Z_{G,< \varepsilon}(\beta)]\,,
  \end{equation}
  where ${\mathcal{E}}_{\beta,\varepsilon,c}$ is defined in \eqref{e:overlap-event} and the expectations are over $G \sim \configurationModel(n,d)$.
\end{theorem}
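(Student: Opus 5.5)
Since $\beta$ is real, $Z^{(2)}_{G,\ge\varepsilon}(\beta)\ge 0$ and every quantity here is nonnegative. The plan is the planted-model argument of \cite{CLMS22}, made quantitative.

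Write $\mathcal{E}^c=\mathcal{E}^c_{\beta,\varepsilon,c}$ and $Z=Z_{G,<\varepsilon}(\beta)$. Then
\[
\E[\allone\{\mathcal{E}^c\}Z]=\E[Z]\cdot \Pr_{\plantedTag}[G\in\mathcal{E}^c],
\]
where the planted model draws $\sigma\in\Sigma_n(\varepsilon)$ with probability $\propto \E[e^{\beta\isingHamiltonian_G(\sigma)}]$ and then draws $G$ with probability $\propto e^{\beta\isingHamiltonian_G(\sigma)}$ (so the law of $G$ is $Z\,\mathrm{d}\Pr/\E Z$). It therefore suffices to show $\Pr_{\plantedTag}[\mathcal{E}^c]\le e^{-cn}$. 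By Markov's inequality this reduces to
\[
\E_{\plantedTag}\bigl[Z^{(2)}_{G,\ge\varepsilon}(\beta)\bigr]\le e^{-3cn}\,(\E Z)^2 .
\]
Since the target is exponentially small, subexponential losses in the Markov step are harmless.

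\textbf{Step 1: reduce to a rate function.} Unfold the planted expectation as
\[
\frac{1}{\E Z}\sum_{\sigma,\sigma^1,\sigma^2}\E\bigl[e^{\beta(\isingHamiltonian(\sigma)+\isingHamiltonian(\sigma^1)+\isingHamiltonian(\sigma^2))}\bigr],
\]
where $\sigma,\sigma^1,\sigma^2\in\Sigma_n(\varepsilon)$ and $|R(\sigma^1,\sigma^2)|\ge\varepsilon$. This is a third moment restricted to a region of the overlap space. In the configuration model, $\E\exp(\beta\sum\cdots)$ over three configurations depends only on the $2^3$ type proportions $\typeProportions$ of vertices. It equals a matching count which, by the standard Stirling/Laplace computation, is $\exp(n\Phi_3(\typeProportions)+o(n))$. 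Here $\Phi_3$ is the entropy of $\typeProportions$ plus $\frac d2$ times the optimized edge-type free energy, i.e.\ an entropy of a symmetric $8\times8$ edge-type matrix with marginals $\typeProportions$ weighted by $e^{\beta\cdot\text{cut}}$. Since $(\E Z)^2=\exp(2n\Phi_1(0)+o(n))$, I must show that $\sup\Phi_3<2\Phi_1(0)$ over the region where $\sigma,\sigma^1,\sigma^2$ are nearly balanced and $|R(\sigma^1,\sigma^2)|\ge\varepsilon$. Equivalently, under the planted measure the conditional annealed free energy of a pair with overlap $\ge\varepsilon$ lies strictly below that of orthogonal pairs.

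\textbf{Step 2: analyse the maximizer.} By symmetry the unconstrained maximizer is the point where all pairwise overlaps vanish, with value $2\Phi_1(0)$, and the Hessian there is what ties the argument to Kesten--Stigum. On the pairwise overlap of the two replicas, the planted model changes the effective quadratic form. In the null model the second moment yields curvature degenerate at $\betasecond$, i.e.\ $d\,\delta_1^2$ against $1$. In the planted model, where one spin $\sigma$ is a tree-like "teacher", the relevant linearisation is the broadcast-channel eigenvalue $(d-1)\delta_1^2<1$, which holds exactly when $\beta<\betaKS$. So for real $\beta\le\betaKS-\delta$ the Hessian is negative definite with a margin depending on $\delta$. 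Since $\varepsilon_0(d,\delta)$ is chosen small, a Taylor expansion gives $\Phi_3\le 2\Phi_1(0)-c'\varepsilon^2$ on the shell $\varepsilon\le|R(\sigma^1,\sigma^2)|\le\varepsilon_0$; \cref{l:1mt-balanced}\cref{i:1mt-balanced-subgaussian} controls the magnetization directions.

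\textbf{Step 3 (main obstacle): large overlaps.} For $|R(\sigma^1,\sigma^2)|\ge\varepsilon_0$ the local argument gives nothing, and the naive global bound fails because the null second moment already blows up above $\betasecond$. I would first try to avoid a global analysis of $\Phi_3$ by a planting/contiguity argument. The point is that the planted model at $\beta<\betaKS$ locally looks like the broadcast process on the $(d-1)$-ary tree, which is non-reconstructible at and below KS. I would combine this with an interpolation or convexity bound in $\beta$, using that at $\beta=0$ the claim is trivial and that the large-overlap rate can be controlled by a concavity argument in the overlap parameter. If that fails, the fallback is a direct, possibly computer-assisted, analysis of the finite-dimensional optimization of $\Phi_3$, reducing via symmetry to one overlap parameter plus the edge statistics. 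Finally, uniformity in $\beta$ over the compact interval $[2\beta_\lowerBoundTag,\betaKS-\delta]$ follows by continuity of the rates, which gives a single $c(d,\delta,\varepsilon)$.
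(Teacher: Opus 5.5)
The planted reformulation
\[
\E[\allone\{G\notin\mathcal{E}_{\beta,\varepsilon,c}\}Z_{G,<\varepsilon}(\beta)]=\E[Z_{G,<\varepsilon}(\beta)]\cdot\Pr^{\mathsf{pl}}_\beta(G\notin\mathcal{E}_{\beta,\varepsilon,c})
\]
is correct, and it is also the paper's starting point. The step that fails is the next one. Applying Markov's inequality to $Z^{(2)}_{G,\ge\varepsilon}$ under the planted law reduces the theorem to an annealed bound that is \emph{false} in part of the required range. So Step 3 cannot be finished by any analysis of $\Phi_3$, computer-assisted or not.

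To see this, recall that $\beta$ is real, so all terms are nonnegative, and keep only the diagonal terms $\sigma=\sigma^1=\sigma^2$:
\[
\E^{\mathsf{pl}}_\beta\bigl[Z^{(2)}_{G,\ge\varepsilon}(\beta)\bigr]=\frac{\E[Z_{G,<\varepsilon}(\beta)\,Z^{(2)}_{G,\ge\varepsilon}(\beta)]}{\E[Z_{G,<\varepsilon}(\beta)]}\ge\frac{\E[Z_{G,<\varepsilon}(3\beta)]}{\E[Z_{G,<\varepsilon}(\beta)]}.
\]
By \cref{l:1mt-exp-rate}, the exponential rate of $\E[Z_{G,<\varepsilon}(3\beta)]/\E[Z_{G,<\varepsilon}(\beta)]^3$ is
\[
-2\log 2-\frac d2\log\frac{(1+e^\beta)^3}{4(1+e^{3\beta})}.
\]
Take $d=10$, where $e^{\betaKS}=2$. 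At $\beta=\betaKS$ this rate equals $\log\frac{1024}{972}>0$, so by continuity it stays positive on $[\betaKS-\delta,\betaKS]$ for small $\delta$. The same computation gives a positive rate at $\betaKS$ for every $d\ge5$. Since the theorem must hold for every $\delta>0$, your target $\E^{\mathsf{pl}}_\beta[Z^{(2)}_{G,\ge\varepsilon}]\le e^{-3cn}(\E Z_{G,<\varepsilon})^2$ is simply wrong there.

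What goes wrong is that the annealed planted expectation is carried by exponentially rare planted graphs. On those graphs the planted configuration sits in a heavy cluster, even though the planted \emph{probability} of $\mathcal{E}^c$ is exponentially small.

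Your Step 2 heuristic is also off. By \cref{l:kmt-concavity}, the Hessian of the annealed $k$-replica rate at the symmetric point degenerates exactly at $\betaKS$ for every $k$, not at $\beta_{\mathsf{2nd}}$. All the trouble is non-local.

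The missing ingredient is a quenched overlap bound in the planted model. The paper proves
\[
\E^{\mathsf{pl}}_\beta\,\mu_{G,\beta,<\varepsilon}(|R(\sigma,\sigma^*)|\ge\varepsilon)\le e^{-cn}
\]
(\cref{p:truncation-planted}). It does so by combining tree non-reconstruction with small Gaussian side information (\cref{l:planted-augmented}) and a dynamical boosting argument on restricted free energies (\cref{p:planted-boosting}, which uses the concentration in \cref{l:planted-fe-conc}). Nishimori's property then gives $\Pr^{\mathsf{pl}}_\beta(G\notin\mathcal{O}_{\beta,7c})\le e^{-7cn}$.

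To pass from $\mathcal{O}_{\beta,7c}$ to $\mathcal{E}_{\beta,\varepsilon,c}$, the paper only needs $Z_{G,<\varepsilon}\le e^{3cn}\E Z_{G,<\varepsilon}$ with high planted probability on $\mathcal{O}_{\beta,7c}$. This follows from Markov's inequality, together with the bound $\allone\{\mathcal{O}_{\beta,7c}\}Z_{G,<\varepsilon}^2\le(1-e^{-7cn})^{-1}Z^{(2)}_{G,<\varepsilon}$ and the near-orthogonal annealed estimate \cref{l:basic-mt}(a). In other words, annealed moments are only ever used on near-orthogonal regions, where they are accurate.
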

\begin{remark}
  In ~\cref{t:truncation} and Corollary~\cref{c:truncation} below, $\beta$ is real and several definitions above simplify: $Z_{G,< \varepsilon}^{(2)}(\beta)$ and $Z_{G,\ge \varepsilon}^{(2)}(\beta)$ can be equivalently defined without the complex conjugates, and ${\mathcal{E}}_{\beta,\varepsilon,c}$ can be equivalently defined without the absolute values.
  We make these definitions in this way in anticipation of the proof of ~\cref{t:zero-free} in \cref{ss:zero-free} below, where we will allow $\beta$ to be complex.
\end{remark}
\begin{remark}
  \cite[Lemma 4.7]{CLMS22} proves a weaker version of ~\cref{t:truncation} where the $e^{-cn}$ terms in both \eqref{e:overlap-event} and \eqref{e:truncation} are replaced by $o_n(1)$ (and $Z_{G,<\varepsilon}$ replaced by $Z_{G}$).
  In our application it will be important that $G \not\in {\mathcal{E}}_{\beta,\varepsilon,c}$ contributes only a $e^{-cn}$ fraction of the first moment. 
  This is crucial for Corollary~\cref{c:truncation-complex} below, where it implies a suitable extension of ~\cref{t:truncation} (and Corollary~\cref{c:truncation}) for $\beta$ in a constant-width complex strip around $[2\beta_\lowerBoundTag, \betaKS - \delta]$.
\end{remark}
We will also need the following variant of ~\cref{t:truncation} for the near-orthogonal second moment.
\begin{corollary}[Proved in \cref{s:truncation}]\label{c:truncation}
  Let $\delta > 0$ and $\varepsilon_0 = \varepsilon_0(d,\delta)$.
  For all $\varepsilon \in (0, \varepsilon_0)$, there exists $c = c(d,\delta,\varepsilon) > 0$ such that for all (real) $\beta \in [2\beta_\lowerBoundTag, \betaKS - \delta]$,
  \[
    \E\left[
      \allone\{G\not\in {\mathcal{E}}_{\beta,\varepsilon,c}\}
      Z_{G,< \varepsilon}^{(2)}(\beta)
    \right] \le e^{-cn} \E[Z_{G,< \varepsilon}(\beta)]^2\,.
  \]
\end{corollary}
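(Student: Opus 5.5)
Since $\beta$ is real here, every quantity involved is nonnegative: $Z_{G,<\varepsilon}(\beta)>0$, and $Z^{(2)}_{G,<\varepsilon}(\beta)\ge 0$ with $Z^{(2)}_{G,<\varepsilon}(\beta)+Z^{(2)}_{G,\ge\varepsilon}(\beta)=Z_{G,<\varepsilon}(\beta)^2$. The plan is therefore to deduce the corollary from \cref{t:truncation} by a crude pointwise bound, with no new second moment computation. We drop the near-orthogonal restriction and use $Z^{(2)}_{G,<\varepsilon}(\beta)\le Z_{G,<\varepsilon}(\beta)^2$. The task then reduces to bounding $\E[\allone\{G\notin\mathcal{E}\}Z_{G,<\varepsilon}(\beta)^2]$. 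The difficulty is that \cref{t:truncation} controls only the \emph{first} power of $Z_{G,<\varepsilon}$ on the bad event.

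To handle this, split the bad event according to the size of $Z_{G,<\varepsilon}(\beta)$ relative to $\E[Z_{G,<\varepsilon}(\beta)]$. Fix a threshold $K=e^{c'n}$ with $c'$ small compared to the constant $c$ from \cref{t:truncation}. On $\{Z_{G,<\varepsilon}\le K\,\E Z_{G,<\varepsilon}\}$ we bound
\begin{align*}
  \E[\allone\{G\notin\mathcal{E}\}Z_{G,<\varepsilon}^2\,\allone\{Z_{G,<\varepsilon}\le K\E Z_{G,<\varepsilon}\}]
  \le K\,\E Z_{G,<\varepsilon}\cdot e^{-cn}\E Z_{G,<\varepsilon},
\end{align*}
which is at most $e^{-(c-c')n}(\E Z_{G,<\varepsilon})^2$. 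On the complementary event, where $Z_{G,<\varepsilon}>K\,\E Z_{G,<\varepsilon}$, we do not try to use \cref{t:truncation} at all.

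The better route on that event is to avoid squaring $Z_{G,<\varepsilon}$ altogether. We write
\begin{align*}
  Z^{(2)}_{G,<\varepsilon}=\sum_{\sigma^1}e^{\beta\mathcal H_G(\sigma^1)}\sum_{\sigma^2:|R|<\varepsilon}e^{\beta\mathcal H_G(\sigma^2)}
\end{align*}
and compare with the planted model. The quantity $\E[\allone\{G\notin\mathcal E\}Z^{(2)}_{G,<\varepsilon}]/(\E Z_{G,<\varepsilon})^2$ is the probability of $\{G\notin\mathcal E\}$ under the ``doubly planted'' measure. In this measure, a pair $(\sigma^1,\sigma^2)$ is drawn with weight $\E[e^{\beta\mathcal H(\sigma^1)+\beta\mathcal H(\sigma^2)}]$ restricted to $|R|<\varepsilon$, and $G$ is then drawn given the pair. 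For near-orthogonal balanced pairs, the second moment computation from \cite{CLMS22} shows that this weight is within a subexponential factor of the product of first moments, because the relevant exponent is maximized at overlap $0$ with value twice the annealed free energy. In the regime $\beta<\betaKS$ the exponent has a nondegenerate (though possibly non-global) local maximum at $0$. Within the window $|R|<\varepsilon$ with $\varepsilon<\varepsilon_0$, it is exactly $0$ at $R=0$ and negative elsewhere. This is how I expect $\varepsilon_0$ from \cref{l:basic-mt} to be used. It follows that $\sum_{|R|<\varepsilon}\E[e^{\beta\mathcal H(\sigma^1)+\beta\mathcal H(\sigma^2)}]\le e^{o(n)}(\E Z_{G,<\varepsilon})^2$.

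The remaining step is to show that, under the doubly planted measure, the event $G\notin\mathcal E$ has probability $e^{-cn}$ for a constant $c$. The argument mirrors the proof of \cref{t:truncation}, which presumably proceeds through the singly planted model, with $\sigma^1$ replaced by the pair. Its marginal on $G$ given $\sigma^1$ is close to the singly planted model, since conditioning on a near-orthogonal $\sigma^2$ perturbs the edge statistics only by $O(\varepsilon)$ in the exponent. The statement that the planted graph has exponentially small non-orthogonal mass then transfers, after shrinking $c$. I expect this transfer to be the main obstacle. To make it uniform, I would first apply Cauchy--Schwarz or Hölder: bound the doubly planted probability by the singly planted one raised to a power, times the exponential-moment ratio $\E[e^{\beta\mathcal H(\sigma^2)}\mid\sigma^1\text{-planted}]/\E[e^{\beta\mathcal H(\sigma^2)}]\le e^{O(\varepsilon^2)n}$. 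The estimate then closes by choosing $\varepsilon$ small relative to $c$.
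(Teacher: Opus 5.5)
There is a genuine gap, and it sits at the step you flagged as the main obstacle.

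First, the split at $K=e^{c'n}$ does no work. On $\{Z_{G,<\varepsilon}(\beta)>K\,\E[Z_{G,<\varepsilon}(\beta)]\}$ you go back to the full quantity $\E[\allone\{G\notin\mathcal{E}_{\beta,\varepsilon,c}\}Z^{(2)}_{G,<\varepsilon}(\beta)]$. So everything rests on showing $\Pr^{\plantedTag,2}_{\beta}(G\notin\mathcal{E}_{\beta,\varepsilon,c})\le e^{-cn}$. That reformulation, and the use of \cref{l:basic-mt}~(a), are fine and match the paper.

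Your route to this bound, a transfer from the one-replica statement \cref{t:truncation}, fails for three reasons.
\begin{itemize}
\item \emph{Fixed $\sigma^2$.} The law of $G$ given $(\sigma^1,\sigma^2)$ is the $\sigma^1$-planted law tilted by $e^{\beta\isingHamiltonian_G(\sigma^2)}$. This tilt has relative entropy $D$ of order $\beta^2 n$ regardless of the overlap. The H\"older price $\E[e^{p\beta\isingHamiltonian_G(\sigma^2)}\mid\sigma^1]/\E[e^{\beta\isingHamiltonian_G(\sigma^2)}\mid\sigma^1]^{p}$ is therefore at least $e^{(p-1)D}$. For $\beta$ of order one this swamps the gain $e^{-c(1-1/p)n}$; the first-moment ratio you wrote is not what H\"older charges.
\item \emph{After summing over the integrated replica.} The density of the $({\sigma'}^*,G)$-marginal of $\nu^{\plantedTag,2}_\beta$ with respect to $\nu^{\plantedTag}_\beta$ is $X={\widetilde Z}^{(2)}_{G,<\varepsilon}(\beta,{\sigma'}^*)/\E^{\plantedTag}_\beta[{\widetilde Z}^{(2)}_{G,<\varepsilon}(\beta,{\sigma'}^*)]$. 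The price becomes $\E^{\plantedTag}_\beta[X^p]$ for some $p>1$, and nothing in your plan controls it. Already $\E^{\plantedTag}_\beta[X^2]$ is a third moment in which the two integrated replicas have unconstrained mutual overlap. Place them at the maximizer of the second-moment exponent, with the planted replica independent of both. This shows $\E^{\plantedTag}_\beta[X^2]$ is exponentially large whenever the overlap-unrestricted second moment is, i.e.\ beyond $\betasecond$, which is exactly the regime the truncation is for.
\item \emph{Circularity.} ``Choosing $\varepsilon$ small relative to $c$'' does not work. The corollary is claimed for every $\varepsilon\in(0,\varepsilon_0)$, and $c=c(d,\delta,\varepsilon)$ from \cref{t:truncation} shrinks with $\varepsilon$; in the paper's construction it is far below $\varepsilon^2$. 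An $e^{O(\varepsilon^2)n}$ loss cannot be absorbed.
\end{itemize}

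The missing idea is to restrict to an event on which $X^2$ is tame, and to control its complement by a genuinely two-replica estimate. The paper splits along $\mathcal{O}^{(2)}_{\beta,2c}$: the event that two independent samples from the Gibbs measure restricted to the band $|R(\cdot,{\sigma'}^*)|<\varepsilon$ are near-orthogonal, except with probability $e^{-2cn}$.
\begin{itemize}
\item On this event, $({\widetilde Z}^{(2)}_{G,<\varepsilon})^2\le(1-e^{-2cn})^{-1}{\widetilde Z}^{(2,2)}_{G,<\varepsilon}$. By \cref{l:plant-one-of-many} and \cref{l:basic-mt}~(c), the restricted second moment of the likelihood ratio is bounded through the fully overlap-restricted third moment, $\E[Z^{(3)}_{G,<\varepsilon}(\beta)]\le e^{o(n)}\E[Z_{G,<\varepsilon}(\beta)]^3$. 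Cauchy--Schwarz against $\Pr^{\plantedTag}_\beta(G\notin\mathcal{E}_{\beta,\varepsilon,c})\le e^{-4cn}$, supplied by \cref{t:truncation}, then closes this part.
\item The complement, $\Pr^{\plantedTag,2}_\beta\bigl(({\sigma'}^*,G)\notin\mathcal{O}^{(2)}_{\beta,2c}\bigr)\le e^{-2cn}$, is \cref{p:truncation}~(b). It is not obtained by transfer. The paper reruns the entire planted argument for $\nu^{\plantedTag,2}_\beta$: Nishimori, the two-replica augmented model, broadcast with ${\sigma'}$ revealed, and free-energy boosting.
\end{itemize}
Neither the restriction to $\mathcal{O}^{(2)}$ with the restricted third-moment bound, nor the two-replica overlap estimate, appears in your proposal.
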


\subsection{Input 2: moment estimates of complex partition function}
\label{ss:moments}

We next introduce the estimates on the first and (truncated and reweighted) second moments we will need, which will hold for complex $\beta$ in the following rectangular strip in the complex plane.
For $\delta, \eta_0 \ge 0$, define
\begin{equation}\label{e:rect-plus}
  \rectangularStrip_+(\delta,\eta_0)
  = \left\{
    \beta = E + i \eta \in \C :
    E \in [2\beta_\lowerBoundTag, \betaKS - \delta], \eta \in [-\eta_0,\eta_0]
  \right\}\,.
\end{equation}
This differs from $\rectangularStrip(\delta,\eta_0)$ defined in \eqref{e:rect} only in that we replaced $\beta_\lowerBoundTag$ with $2\beta_\lowerBoundTag$.
\begin{lemma}[First moment estimate, proved in \cref{s:moments}]\label{l:1mt}
  Let $\delta > 0$ and $\varepsilon_0 = \varepsilon_0(d,\delta)$.
  For $\varepsilon \in (0, \varepsilon_0)$ there exists $\eta_0 = \eta_0(d,\delta,\varepsilon) > 0$ such that the following holds.
  For all $\beta = E + i\eta \in \rectangularStrip_+(\delta,\eta_0)$,
  \[
    \E \left[ Z_{G,< \varepsilon}(\beta) \right]
    = (1 + o_n(1)) \sqrt{\frac{1 + e^\beta}{d(e^\beta-1) + 2}} \cdot 2^n \left(\frac{1+e^\beta}{2}\right)^{dn/2}\,,
  \]
  where we use the branch of the square root that is complex analytic on the complement of the negative real line.
  In particular, for any $c>0$, there exists $\eta_0 = \eta_0(d,\delta,\varepsilon,c) > 0$ such that for all $\beta = E + i\eta \in \rectangularStrip_+(\delta,\eta_0)$,
  \[
    \left|\E \left[ Z_{G,< \varepsilon}(\beta) \right]\right|
    \ge e^{-cn} \E \left[ Z_{G,< \varepsilon}(E) \right]\,.
  \]
\end{lemma}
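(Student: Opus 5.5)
We compute $\E[Z_{G,<\varepsilon}(\beta)]$ exactly in the configuration model, as a sum over the magnetization, and then evaluate it by a saddle point (Laplace) method that is uniform for complex $\beta$ near the real segment. Fix $\sigma$ with $n_+ = n(1+m)/2$ plus spins. Then there are $dn_+$ plus sockets and $dn_-$ minus sockets. The number of cut edges $H_M(\sigma)$ equals the number $j$ of matching pairs joining a plus socket to a minus socket. Counting perfect matchings with exactly $j$ cross pairs gives
\begin{align*}
  \E\bigl[e^{\beta H_M(\sigma)}\bigr] = \sum_{j} \frac{\binom{dn_+}{j}\binom{dn_-}{j}\, j!\,(dn_+-j-1)!!\,(dn_--j-1)!!}{(dn-1)!!}\, e^{\beta j},
\end{align*}
where the sum runs over $j$ with $dn_\pm - j$ even. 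Multiplying by $\binom{n}{n_+}$ and summing over $|m|<\varepsilon$ gives an explicit double sum $\sum_{m,j} \binom{n}{n_+} p_n(m,j)\, e^{\beta j}$. Every summand is an analytic function of $\beta$.

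\textbf{Step 1 (real exponent).} Stirling's formula writes each term as $\exp\bigl(n\Phi_\beta(m,x) + O(\log n)\bigr)$ with $x = j/(dn)$. For real $E$ this is a standard computation. For $|m|<\varepsilon_0$ and $E \in [2\beta_\lowerBoundTag, \betaKS-\delta]$, $\Phi_E$ has a unique nondegenerate maximizer $(0, x^*(E))$, where $x^* = e^E/(1+e^E)\cdot\tfrac12$ is the annealed cut fraction. This is presumably the content of \cref{l:basic-mt}, which fixes $\varepsilon_0$. Gaussian summation gives the real-$\beta$ asymptotic, including the prefactor $\sqrt{(1+e^\beta)/(d(e^\beta-1)+2)}$. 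This prefactor comes from the Hessian determinant in $m$ combined with the $x$-Gaussian and the parity constraint. Its denominator stays positive because $d(e^E-1)+2>0$ on this range.

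\textbf{Step 2 (complex $\beta$: the main obstacle).} For $\beta = E+i\eta$ the weights $e^{i\eta j}$ oscillate, and a naive bound only controls the modulus by the real sum at $E$. We therefore need cancellation. First we sum over $j$ exactly for fixed $m$. The inner sum is a generating function in $y=e^\beta$, namely the coefficient-level count of matchings with weight $y$ per cross pair. It can be written as a contour integral, or equivalently via the Hermite/Gaussian representation $(2k-1)!! = \E[g^{2k}]$. This yields a representation of the form
\begin{align*}
  \E\bigl[e^{\beta H_M(\sigma)}\bigr] = \frac{1}{(dn-1)!!}\,\E_{g_+,g_-}\bigl[(\text{polynomial in } g_\pm,\, y)^{\cdot}\bigr].
\end{align*}
Its integrand is analytic in $y$, and a steepest-descent contour can be chosen depending analytically on $\beta$. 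We then deform contours to the complex saddle point $(0, x^*(\beta))$. This saddle is the analytic continuation of the real one and remains nondegenerate for small $\eta$, by continuity of the Hessian and the implicit function theorem. The modulus of the integrand on the deformed contour is maximized only near the saddle, uniformly for $\eta \le \eta_0$, because this is an open condition that holds at $\eta=0$. The $m$-sum is a real sum over $|m|<\varepsilon$ with complex weights $\exp(n\Phi_\beta(m))$. We handle it by the same Laplace method: $\operatorname{Re}\Phi_\beta$ is strictly maximized at $m=0$ uniformly on $|m|<\varepsilon$ for $\eta_0$ small, and the region near $0$ contributes the complex Gaussian integral. Truncation at $|m|=\varepsilon$ costs only $e^{-cn}$. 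Choosing $\eta_0$ small in terms of $\varepsilon$ and $\delta$ makes all error terms $o_n(1)$ relative to the main term, which gives the first display of \cref{l:1mt}.

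\textbf{Step 3 (lower bound).} The main term has modulus $2^n|(1+e^\beta)/2|^{dn/2}$ times a prefactor bounded away from $0$. Moreover, $\bigl|\log\bigl((1+e^{E+i\eta})/(1+e^E)\bigr)\bigr| = O(\eta)$. Combining this with the real asymptotic at $E$ gives $|\E Z_{G,<\varepsilon}(\beta)| \ge e^{-O(d\eta_0) n}\,\E Z_{G,<\varepsilon}(E)$. Taking $\eta_0 \le c/(Cd)$ then yields the second display.
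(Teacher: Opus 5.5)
Your proposal is correct and follows essentially the paper's route: write the matching count as the coefficient of $u^{dn_+}v^{dn_-}$ in $\exp(u^2/2+v^2/2+e^\beta uv)$, evaluate the resulting contour integral by steepest descent through the analytically continued saddle (uniform in small $\eta$ by continuity and compactness), then do a complex Laplace/Poisson sum over the magnetization, and deduce the lower bound from $|1+e^{E+i\eta}|\approx 1+e^{E}$. One small correction: on the torus through the saddle, the real part of the phase is maximized at two antipodal points $\vartheta^*$ and $\vartheta^*+(\pi,\pi)$, which contribute equally, so the claim that the integrand is maximized ``only near the saddle'' is false, and this second saddle is exactly where the parity factor you mention enters the complex computation.
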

Note that this implies $\beta \mapsto \E [Z_{G,< \varepsilon}(\beta)]$ is zero-free on $\rectangularStrip_+(\delta,\eta_0)$.
Together with Corollary~\cref{c:1mt-balanced-complex} below, this implies the same is true for $\beta \mapsto \E [Z_{G}(\beta)]$ (after possibly shrinking $\eta_0$).
\begin{theorem}[Reweighted second moment estimate, proved in \cref{s:moments}]\label{t:reweight}
  There exists a function $A_{G}(\beta)$ such that the following holds.
  \begin{enumerate}[label=(R\arabic*)]
    \item \label{i:AN-complex-analytic} For all realizations of $G$, $\beta \mapsto A_{G}(\beta)$ is complex analytic, with $A_{G}(0) = 1$.
  \end{enumerate}
  Moreover, let $\delta > 0$ and $\varepsilon_0 = \varepsilon_0(d,\delta)$.
  For all $\varepsilon \in (0, \varepsilon_0)$ there exists $\eta_0 = \eta_0(d,\delta,\varepsilon) > 0$ such that the following holds.
  \begin{enumerate}[label=(R\arabic*)]
    \setcounter{enumi}{1}
    \item \label{i:AN-moments-bounded} For all $K>0$, there exists $C = C(d,\delta,\varepsilon,K) > 0$ such that 
    \[
      \sup_{\beta \in \rectangularStrip_+(\delta,\eta_0)} 
      \E[|A_{G}(\beta)|^K] \le C\,.
    \]
    \item \label{i:AN-second-moment} We have
    \[
      \sup_{\beta \in \rectangularStrip_+(\delta,\eta_0)} \left|
        \frac{\E[Z_{G,< \varepsilon}^{(2)}(\beta) |A_{G}(\beta)|^2]}{|\E[Z_{G,< \varepsilon}(\beta)]|^2} - 1
      \right|
      = o_n(1)\,.
    \]
  \end{enumerate}
\end{theorem}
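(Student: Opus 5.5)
We will construct $A_G(\beta)$ by inverting the small subgraph conditioning limit. For a cutoff $L$ (eventually sent to infinity slowly, or fixed large depending on the target accuracy; we take $L = L_n \to \infty$ slowly), let $X_k = X_k(G)$ denote the number of $k$-cycles in $G$. We also let $\mu_k(\beta) = \lambda_k \delta_k(\beta)$, which will turn out to be the first-moment tilt of cycle counts under the planted model. We set
\begin{align*}
  A_G(\beta) \defeq \prod_{k=1}^{L} (1+\delta_k(\beta))^{-X_k}\, e^{\lambda_k \delta_k(\beta)}.
\end{align*}
Then (R1) is immediate: each factor is analytic near the strip, since $|\delta_1(\beta)|<1$ for $\beta$ near the positive reals, and $\delta_k(0)=0$ gives $A_G(0)=1$.

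For (R2), we use that $(X_1,\dots,X_L)$ has factorial moments controlled by \cref{lem:shortcyc-approxindep}, so that $\E\prod_k y_k^{X_k} \le (1+o(1))\prod_k e^{\lambda_k(y_k-1)}$ for $|y_k|$ bounded. Taking $y_k=|1+\delta_k|^{-K}$ gives the exponent $\sum_k \lambda_k(|1+\delta_k|^{-K}-1+K\,\mathrm{Re}\,\delta_k)$. This is $O(\sum_k \lambda_k|\delta_k|^2) = O(\sum_k ((d-1)|\delta_2|)^k/k)$, which is finite uniformly on the strip precisely because $(d-1)|\delta_2|<1$ below $\betaKS$ (\cref{lem:gaussian-correction-series}). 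The first-order terms cancel by design of the factor $e^{\lambda_k\delta_k}$. The constant is uniform in $L$.

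For (R3), we write $\E[Z^{(2)}_{G,<\varepsilon}|A_G|^2]$ as a sum over pairs $(\sigma^1,\sigma^2)$ with near-zero magnetizations and overlap, and handle it by a planted computation. Conditioned on the pair, the joint law of the cycle counts is approximately Poisson with means $\lambda_k(1+\delta_k)(1+\bar\delta_k)$, up to an overlap-dependent correction. We compute this using \cref{lem:pfn-disj-cycles} applied to cycles whose edge types are tilted by both replicas. The pair-overlap weight of a cycle is $(1+\delta_k)(1+\bar\delta_k)$ plus a correction of order $R^k$, which is negligible for $|R|<\varepsilon$ and bounded $k$. Multiplying by $|A_G|^2$ then contributes $\exp(\sum_{k\le L}\lambda_k(|1+\delta_k|^{-2}-1)(1+\delta_k)(1+\bar\delta_k)+2\mathrm{Re}\lambda_k\delta_k) = \exp(-\sum_{k\le L}\lambda_k|\delta_k|^2)$. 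Separately, a Laplace-method evaluation of the near-orthogonal second moment ratio gives $\exp(\sum_{k\ge1}\lambda_k|\delta_k|^2)(1+o(1))$. This is the complex analogue of the $\sqrt{1/(1-(d-1)|\delta_2|)}$-type Gaussian prefactors, identified via \cref{lem:gaussian-correction-series}, together with the $k=1,2$ corrections. The two exponentials cancel up to the tail $\sum_{k>L}$, which tends to zero as $L\to\infty$. Dividing by $|\E Z_{G,<\varepsilon}|^2$ from \cref{l:1mt} then yields ratio $1+o_n(1)$.

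The main obstacle is the Laplace-method evaluation of the complex truncated second moment. It is a combinatorial sum over edge-type counts with complex weights, in which one must show that the saddle at zero overlap dominates with a nondegenerate Hessian uniformly on $\rectangularStrip_+(\delta,\eta_0)$. The restriction $|R|<\varepsilon$ and the smallness of $\eta_0$ are what let us perturb from the real case, where the Hessian is positive definite below $\betaKS$ on the restricted domain. A second delicate point is justifying the joint planted Poisson approximation with $L\to\infty$, which requires $L$ growing slowly enough that the error $\exp(O(m^2/n))$ in \cref{lem:shortcyc-approxindep} stays negligible.
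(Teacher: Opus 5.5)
Your compound-Poisson bookkeeping matches the paper's heuristics. That includes the exponent $\sum_k\lambda_k(|1+\delta_k|^{-K}-1+K\,\mathrm{Re}\,\delta_k)$ in (R2), and in (R3) the cancellation of $\exp(-\sum_{k\le L}\lambda_k|\delta_k|^2)$ against $(1-(d-1)|\delta_2|)^{-1/2}=\exp(\sum_{k\ge 1}\lambda_k|\delta_k|^2)$. But the proof has a genuine gap, exactly where the paper needed a second truncation.

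You use the untruncated factor $A_L=e^{\sum_{k\le L}\lambda_k\delta_k}\prod_{k\le L}(1+\delta_k)^{-X_k}$. Both (R2) and (R3) then hinge on \emph{relative-error} asymptotics, uniform as $L\to\infty$, for two quantities: $\E\prod_{k\le L}y_k^{X_k}$, and its analogue weighted by the complex two-replica tilt $e^{\beta\mathcal H(\sigma^1)+\bar\beta\mathcal H(\sigma^2)}$. Beyond uniqueness one has $(d-1)|\delta_1|>1$, approaching $\sqrt{d-1}$ near $\betaKS$. So these quantities are of size $\exp(\pm e^{\Theta(L)})$, and the bounded answer appears only after cancellation against the prefactor, with $y_k-1$ of alternating sign when $\beta>0$.

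Making this rigorous means expanding $\prod_k(1+b_k)^{X_k}=\sum_\Gamma\mathbf 1[\Gamma\subseteq M]\prod_{C\in\Gamma}b_{|C|}$ over \emph{all} cycle collections. That includes overlapping collections and collections with $|\Gamma|$ polynomial in $n$. The available tools do not reach that far:
\begin{itemize}
\item The approximate factorization in \cref{lem:shortcyc-approxindep} holds only without repeated edges, and with error $e^{O(m^2/n)}$ in the number $m$ of edges of $\Gamma$.
\item Even the paper's complex conditional second-moment estimate, \cref{prop:2nd-moment}, holds only for $|F|\le C\log n$, with error $K^{|F|}/n$.
\end{itemize}
Nor can you bound the large-$\Gamma$ tail crudely by absolute values. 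For $\beta=E+i\eta$ one has $|\E Z_{G,<\varepsilon}(\beta)|^2\approx e^{-\Theta(\eta^2 n)}\E[Z_{G,<\varepsilon}(E)]^2$, so any such bound loses an exponential factor. Letting $L$ grow slowly does not help either. The size of $\Gamma$ is unrelated to the cycle-length cutoff, so your remark that slow $L$ keeps $e^{O(m^2/n)}$ negligible conflates the two parameters.

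The paper's fix is to truncate the reweighting itself, as in \cref{eq:reweight-factor}. The factor $A_{L,m}$ keeps only vertex-disjoint collections $\Gamma\in\mathscr{G}(\le m)$ of total length at most $m=\log N/(\log\log N)^2$. It is still analytic, and $A_{L,m}(0)=1$ since $b_k(0)=0$. Then $\E[Z^{(2)}_{M,<\varepsilon}|A_{L,m}|^2]$ becomes a finite double sum over $(\Gamma_1,\Gamma_2)$, and each term is evaluated by \cref{prop:2nd-moment}.

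That estimate is the ingredient your plan is missing. It is not just the unconditioned Laplace computation but the conditional decoupling
\[
\frac{\E[Z^{(2)}\mid F\subseteq M]}{|\E Z|^2}=(1-(d-1)|\delta_2|)^{-1/2}\,|\widetilde Z_F(\beta)|^2+\frac{K^{|F|}}{n},
\]
proved by redoing the contour/steepest-descent analysis with shifted type counts. This is where your heuristic cycle weights $|1+\delta_k|^2$ rigorously come from, and the error $K^{|F|}/n$ is what forces $|F|\le m$.

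The resulting sum is then compared term by term with the idealized compound-Poisson weights. Pairs $(\Gamma_1,\Gamma_2)$ with overlapping but non-identical cycles are bounded by cluster counting. The $|\bm c|>m$ part is controlled by the compound-Poisson tail, at cost $e^{-m}e^{e^{O(L)}}$. (R2) is handled the same way with $2K$ collections and no partition function.
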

In the proof of ~\cref{t:zero-free}, $A_{G}(\beta)$ will be our reweighting factor.
Property \cref{i:AN-complex-analytic} is of course needed to run Jensen's formula on the reweighted partition function.
Property \cref{i:AN-moments-bounded} says that $O(1)$-degree moments of $A_{G}(\beta)$ are $O(1)$; this can be expected because $A_{G}(\beta)$ should be an $O(1)$-scale random variable --- in the mixed $p$-spin case it was lognormal, and here it is the exponential of a compound Poisson.
Property \cref{i:AN-second-moment} says that reweighting makes the near-orthogonal contribution to the second moment match the first moment squared.

\subsection{Proof of zero-freeness}
\label{ss:zero-free}

We now derive ~\cref{t:zero-free}, our main zero-freeness result, using the above inputs.
We first derive the following extension of ~\cref{t:truncation} and Corollary~\cref{c:truncation} for $\beta \in \rectangularStrip_+(\delta,\eta_0)$.
\begin{corollary}\label{c:truncation-complex}
  Let $\delta > 0$ and $\varepsilon_0 = \varepsilon_0(d,\delta)$.
  For all $\varepsilon \in (0, \varepsilon_0)$ 
  there exist $c = c(d,\delta,\varepsilon) > 0$ and $\eta_0 = \eta_0(d,\delta,\varepsilon) > 0 $ such that for all $\beta \in \rectangularStrip_+(\delta,\eta_0)$,
  \[
    \E\left[
      \allone\{G\not\in {\mathcal{E}}_{\beta,\varepsilon,c}\}
      |Z_{G,< \varepsilon}(\beta)|
    \right]
    \le e^{-cn} \left|\E\left[
      Z_{G,< \varepsilon}(\beta)
    \right]\right|
  \]
  and 
  \[
    \E\left[
      \allone\{G\not\in {\mathcal{E}}_{\beta,\varepsilon,c}\}
      |Z_{G,< \varepsilon}^{(2)}(\beta)|
    \right] \le e^{-cn} \left|\E\left[
      Z_{G,< \varepsilon}(\beta)
    \right]\right|^2
  \]
\begin{proof}
  Let $c^{\cref{t:truncation}}(d,\delta,\varepsilon)$ and $c^{\cref{c:truncation}}(d,\delta,\varepsilon)$ be the functions given by ~\cref{t:truncation} and Corollary~\cref{c:truncation}, and $\eta_0^{\cref{l:1mt}}(d,\delta,\varepsilon,c)$ be the function given by ~\cref{l:1mt}.
  Given $\varepsilon \in (0, \varepsilon_0)$ we set
  \begin{align*}
    c &= \frac12 \min\left(c^{\cref{t:truncation}}(d,\delta,\varepsilon),c^{\cref{c:truncation}}(d,\delta,\varepsilon)\right)\,, & 
    \eta_0 = \eta_0^{\cref{l:1mt}}(d,\delta,\varepsilon,c/2)\,.
  \end{align*}
  Consider any $\beta = E + i \eta \in \rectangularStrip_+(\delta,\eta_0)$.
  For any realization of $G$, we have 
  \[
    |Z_{G,< \varepsilon}(\beta)| \le \sum_{\sigma \in \Sigma_n(\varepsilon)} |\exp(\beta \isingHamiltonian_{G}(\sigma))|
    = \sum_{\sigma \in \Sigma_n(\varepsilon)} \exp(E \isingHamiltonian_{G}(\sigma))
    = Z_{G,< \varepsilon}(E)\,,
  \]
  and similarly
  \begin{align*}
    |Z_{G,< \varepsilon}^{(2)}(\beta)| &\le Z_{G,< \varepsilon}^{(2)}(E)\,, & 
    |Z_{G,\ge \varepsilon}^{(2)}(\beta)| &\le Z_{G,\ge \varepsilon}^{(2)}(E)\,.
  \end{align*}
  Furthermore by ~\cref{l:1mt},
  \[
    \left|\E[Z_{G,< \varepsilon}(\beta)]\right|
    \ge e^{-cn/2} \E[Z_{G,< \varepsilon}(E)]\,.
  \]
  We next argue that ${\mathcal{E}}_{E,\varepsilon,2c} \subseteq {\mathcal{E}}_{\beta,\varepsilon,c}$, where these events are defined in \eqref{e:overlap-event}.
  Indeed, if $G \in {\mathcal{E}}_{E,\varepsilon,2c}$,
  \[
    Z_{G,\ge \varepsilon}^{(2)}(E) \le e^{-2cn} \E [Z_{G,< \varepsilon}(E)]^2\,.
  \]
  Then,
  \[
    |Z_{G,\ge \varepsilon}^{(2)}(\beta)|
    \le Z_{G,\ge \varepsilon}^{(2)}(E)
    \le e^{-2cn} \E [Z_{G,< \varepsilon}(E)]^2
    \le e^{-cn} |\E[Z_{G,< \varepsilon}(\beta)]|^2\,,
  \]
  so $G \in {\mathcal{E}}_{\beta,\varepsilon,c}$.
  Since $\beta \in \rectangularStrip_+(\delta,\eta_0)$, we have $E \in [2\beta_\lowerBoundTag, \betaKS - \delta]$.
  It follows that
  \begin{align*}
    \E\left[
      \allone\{G\not\in {\mathcal{E}}_{\beta,\varepsilon,c}\}
      |Z_{G,< \varepsilon}(\beta)|
    \right]
    &\le \E\left[
      \allone\{G\not\in {\mathcal{E}}_{E,\varepsilon,2c}\}
      Z_{G,< \varepsilon}(E)
    \right] \\
    &\!\stackrel{(*)}{\le} 
    e^{-2cn} \E[Z_{G,< \varepsilon}(E)]
    \le e^{-cn} |\E[Z_{G,< \varepsilon}(\beta)]|\,,
  \end{align*}
  where the step marked $(*)$ is by ~\cref{t:truncation}.
  Similarly,
  \begin{align*}
    \E\left[
      \allone\{G\not\in {\mathcal{E}}_{\beta,\varepsilon,c}\}
      |Z_{G,< \varepsilon}^{(2)}(\beta)|
    \right]
    &\le \E\left[
      \allone\{G\not\in {\mathcal{E}}_{E,\varepsilon,2c}\}
      Z_{G,< \varepsilon}^{(2)}(E)
    \right] \\
    &\!\stackrel{(*)}{\le} 
    e^{-2cn} \E[Z_{G,< \varepsilon}(E)]^2
    \le e^{-cn} |\E[Z_{G,< \varepsilon}(\beta)]|^2\,,
  \end{align*}
  where the step marked $(*)$ is by Corollary~\cref{c:truncation}.
\end{proof}
\end{corollary}
The following corollary similarly extends ~\cref{l:1mt-balanced}~\cref{i:1mt-balanced-deviation} to $\beta \in \rectangularStrip_+(\delta,\eta_0)$.
\begin{corollary}\label{c:1mt-balanced-complex}
  Let $\delta > 0$ and $\varepsilon_0 = \varepsilon_0(d,\delta)$.
  For all $\varepsilon \in (0, \varepsilon_0)$, there exists $c = c(d,\delta,\varepsilon) > 0$ and $\eta_0 = \eta_0(d,\delta,\varepsilon) > 0$ such that for all $\beta \in \rectangularStrip_+(\delta,\eta_0)$, 
  \[
    \E\left[ |Z_{G,\ge \varepsilon}(\beta)| \right]
    \le e^{-cn} \left|\E\left[ Z_{G,< \varepsilon}(\beta) \right]\right|\,.
  \]
\begin{proof}
  Let $c^{\cref{l:1mt-balanced}}(d,\varepsilon)$ be the function given by ~\cref{l:1mt-balanced} and $\eta_0^{\cref{l:1mt}}(d,\delta,\varepsilon,c)$ be the function given by ~\cref{l:1mt}.
  Consider any $\varepsilon \in (0, \varepsilon_0)$.
  We set
  \begin{align*}
    c &= \frac12 c^{\cref{l:1mt-balanced}}(d,\varepsilon)\,, &
    \eta_0 &= \eta_0^{\cref{l:1mt}}(d,\delta,\varepsilon,c)\,.
  \end{align*}
  Consider any $\beta = E + i\eta \in \rectangularStrip_+(\delta,\eta_0)$.
  For any realization of $G$, we have (similarly to above)
  \[
    |Z_{G,\ge \varepsilon}(\beta)| \le Z_{G,\ge \varepsilon}(E)\,.
  \]
  Furthermore, by ~\cref{l:1mt},
  \[
    \left|\E \left[Z_{G,< \varepsilon}(\beta)\right]\right| 
    \ge e^{-cn} 
    \E \left[Z_{G,< \varepsilon}(E)\right]\,.
  \]
  Then,
  \[
    \E\left[ |Z_{G,\ge \varepsilon}(\beta)| \right]
    \le \E\left[ Z_{G,\ge \varepsilon}(E) \right]
    \le e^{-2cn} \E\left[ Z_{G,< \varepsilon}(E) \right]
    \le e^{-cn} \left|\E \left[Z_{G,< \varepsilon}(\beta)\right]\right|\,. \qedhere
  \]
\end{proof}
\end{corollary}

\begin{proof}[Proof of ~\cref{t:zero-free}]
  Let $\varepsilon_0 = \varepsilon_0(d,\delta/2)$ be given by ~\cref{l:basic-mt}.
  Let $c^{\cref{c:truncation-complex}}(d,\delta,\varepsilon)$ and $c^{\cref{c:1mt-balanced-complex}}(d,\delta,\varepsilon)$ be the functions given by Corollaries~\cref{c:truncation-complex} and \cref{c:1mt-balanced-complex}.
  Let $\eta_0^{\cref{l:1mt}}(d,\delta,\varepsilon,c)$, $\eta_0^{\cref{t:reweight}}(d,\delta,\varepsilon)$, $\eta_0^{\cref{c:truncation-complex}}(d,\delta,\varepsilon)$, and $\eta_0^{\cref{c:1mt-balanced-complex}}(d,\delta,\varepsilon)$ be the functions $\eta_0$ from ~\cref{l:1mt}, ~\cref{t:reweight}, Corollary~\cref{c:truncation-complex}, and Corollary~\cref{c:1mt-balanced-complex}.
  Consider any $\varepsilon \in (0, \varepsilon_0)$.
  We set
  \begin{align}
    \nonumber
    c &= \min\left(
      c^{\cref{c:truncation-complex}}(d,\delta/2,\varepsilon), 
      c^{\cref{c:1mt-balanced-complex}}(d,\delta/2,\varepsilon)
    \right)\,, \\
    \label{e:zero-free-eta}
    \eta_0 &= \frac12 \min\left(
      \eta_0^{\cref{l:1mt}}(d,\delta/2,\varepsilon,c/2),
      \eta_0^{\cref{t:reweight}}(d,\delta/2,\varepsilon),
      \eta_0^{\cref{c:truncation-complex}}(d,\delta/2,\varepsilon),
      \eta_0^{\cref{c:1mt-balanced-complex}}(d,\delta/2,\varepsilon)
    \right)\,.
  \end{align}
  Note that
  \[
    {\mathsf{int}}(\rectangularStrip(\delta,\eta_0)) \subseteq \rectangularStrip_+(\delta/2,2\eta_0)\,.
  \]
  For $R>0$, let $\D(R) = \{z\in \C : |z| \le R\}$. 
  By \cite[Lemma 8.1]{Bar17perm} there exist constants $0 < r < R$ and a complex analytic 
  $\ell : \D(R) \rightarrow \C$ such that $\ell(0) = 0$ and
  \[
    \rectangularStrip(\delta,\eta_0)
    \subseteq \ell(\D(r))
    \subseteq \ell(\D(R))
    \subseteq \rectangularStrip_+(\delta/2,2\eta_0)\,.
  \]
  Define the reweighted partition function
  \[
    X_{G}(\beta) = \frac{Z_{G}(\beta) A_{G}(\beta)}{\E[Z_{G}(\beta)]}\,.
  \]
  Note that this is complex analytic on $\rectangularStrip_+(\delta/2,2\eta_0)$, as $A_{G}$ is complex analytic by property \cref{i:AN-complex-analytic}, and ~\cref{l:1mt} and Corollary~\cref{c:1mt-balanced-complex} together ensure the denominator is zero-free in this domain.
  Define
  \[
    f_{G}(z) = X_{G}(\ell(z))\,,
  \]
  which is complex analytic on $\D(R) \subseteq \ell^{-1}(\rectangularStrip_+(\delta/2,2\eta_0))$.
  Finally for $S \subseteq  \D(R)$ and complex analytic $f : \D(R) \rightarrow \C$ let
  \[
    \zeroMultiset(f,S) = \left\{
      \omega \in S : f(\omega) = 0
    \right\}
  \]
  where we count multiplicity.
  We obtain the following chain of inequalities where the step marked $(*)$ is by Jensen's formula:
  \begin{align*}
    |\zeroMultiset(Z_{G}(\cdot), \rectangularStrip(\delta,\eta_0))|
    &\le |\zeroMultiset(X_{G}(\cdot), \ell(\D(r)))|
    \le |\zeroMultiset(f_{G}(\cdot), \D(r))| \\
    &\le \left(\log\frac{R}{r}\right)^{-1} \sum_{\omega \in \zeroMultiset(f_{G}(\cdot), \D(r))} \log \left(\frac{R}{|\omega|}\right) \\
    &\le \left(\log\frac{R}{r}\right)^{-1} \sum_{\omega \in \zeroMultiset(f_{G}(\cdot), \D(R))} \log \left(\frac{R}{|\omega|}\right) \\
    &\!\stackrel{(*)}{=} \left(\log\frac{R}{r}\right)^{-1} \E_{\theta \sim \unif([0,2\pi])} \log \left|\frac{f_{G}(Re^{i\theta})}{f_{G}(0)}\right|\,.
  \end{align*}
  Note that $f_{G}(Re^{i\theta}) = X_{G}(\ell(Re^{i\theta}))$ by definition, while
  \[
    f_{G}(0) = X_{G}(0) = \frac{Z_{G}(0) A_{G}(0)}{\E[Z_{G}(0)]} = 1\,,
  \]
  using that $\ell(0) = 0$, $Z_{G}(0) = 2^n$ deterministically, and $A_{G}(0) = 1$ by property \cref{i:AN-complex-analytic}.
  Thus
  \begin{equation}\label{e:zero-free-step1}
    |\zeroMultiset(Z_{G}(\cdot), \rectangularStrip(\delta,\eta_0))|
    \le \left(\log\frac{R}{r}\right)^{-1} \E_{\theta \sim \unif([0,2\pi])} \log |X_{G}(\ell(Re^{i\theta}))|\,.
  \end{equation}
  We will now argue that
  \begin{equation}\label{e:zero-free-goal}
    \sup_{\beta \in \rectangularStrip_+(\delta/2,2\eta_0)}
    \E_{G\sim \configurationModel(n,d)} \log |X_{G}(\beta)|
    \le o_n(1)\,.
  \end{equation}
  Together with \eqref{e:zero-free-step1} and the fact that $\ell(\D(R)) \subseteq \rectangularStrip_+(\delta/2,2\eta_0)$, this implies
  \begin{align*}
    \E_{G\sim \configurationModel(n,d)} [|\zeroMultiset(Z_{G}(\cdot), \rectangularStrip(\delta,\eta_0))|]
    &\le \left(\log\frac{R}{r}\right)^{-1} \E_{G\sim \configurationModel(n,d)} \E_{\theta \sim \unif([0,2\pi])} \log |X_{G}(\ell(Re^{i\theta}))| \\
    &\le \left(\log\frac{R}{r}\right)^{-1} \sup_{\beta \in \rectangularStrip_+(\delta/2,2\eta_0)} \E_{G\sim \configurationModel(n,d)} \log |X_{G}(\beta)|
    = o_n(1)\,,
  \end{align*}
  which implies the desired conclusion by Markov's inequality.
  We now turn to the proof of \eqref{e:zero-free-goal}; all expectations that follow will be over $G \sim \configurationModel(n,d)$.
  Consider any $\beta \in \rectangularStrip_+(\delta/2,2\eta_0)$, and let ${\mathcal{E}}_{\beta,\varepsilon,c}$ be defined in \eqref{e:overlap-event}.
  Define
  \[
    {\widetilde X}_{G}(\beta) = \frac{Z_{G,< \varepsilon}(\beta) A_{G}(\beta)}{\E[Z_{G,< \varepsilon}(\beta)]}\,.
  \]
  By Jensen's inequality,
  \begin{align}
    \nonumber
    \E \log |X_{G}(\beta)|
    &\le \log \E [|X_{G}(\beta)|]
    \le \log \left(
        \E [|X_{G}(\beta) - {\widetilde X}_{G}(\beta)|] 
        + \E [|{\widetilde X}_{G}(\beta)|] 
    \right) \\
    \nonumber
    &\le \log \Big(
      \E [|X_{G}(\beta) - {\widetilde X}_{G}(\beta)|] 
      + \E [\allone\{G \not\in {\mathcal{E}}_{\beta,\varepsilon,c}\} |{\widetilde X}_{G}(\beta)|] \\
      \nonumber
      &\qquad + \E [\allone\{G \in {\mathcal{E}}_{\beta,\varepsilon,c}\} |{\widetilde X}_{G}(\beta)|]
    \Big) \\
    \nonumber
    &\le \log \Big(
      \E [|X_{G}(\beta) - {\widetilde X}_{G}(\beta)|] 
      + \E [\allone\{G \not\in {\mathcal{E}}_{\beta,\varepsilon,c}\} |{\widetilde X}_{G}(\beta)|] \\
      \label{e:zero-free-jensen}
      &\qquad + \E [\allone\{G \in {\mathcal{E}}_{\beta,\varepsilon,c}\} |{\widetilde X}_{G}(\beta)|^2]^{1/2}
    \Big)\,.
  \end{align}
  We will bound the moments appearing in \eqref{e:zero-free-jensen}.
  Let $\beta = E + i\eta$.
  Note that deterministically,
  \[
    \max\left(
      |Z_{G,< \varepsilon}(\beta)|,
      |Z_{G,\ge \varepsilon}(\beta)|,
      |Z_{G}(\beta)|
    \right)
    \le 2^n e^{\max(E,0) dn/2} 
    \le 2^n e^{\betaKS dn/2} 
    \equiv Z_{n,\upperBoundTag}\,,
  \]
  and (by Corollary~\cref{c:1mt-balanced-complex} and ~\cref{l:1mt})

  \begin{align*}
    \min\left(
      |\E[Z_{G}(\beta)]|,
      |\E[Z_{G,< \varepsilon}(\beta)]|
    \right)
    &\ge (1-e^{-cn}) |\E[Z_{G,< \varepsilon}(\beta)]| \\
    &\ge (1-e^{-cn}) e^{-cn/2} \E[Z_{G,< \varepsilon}(E)] \\
    &\ge (1-e^{-cn})^2 e^{-cn/2} \E[Z_{G}(E)] \\
    &\ge e^{-cn} \cdot 2^n e^{2\beta_\lowerBoundTag \cdot dn / 2}
    \equiv Z_{n,\lowerBoundTag}\,.
  \end{align*}
  Recall that $\beta \in \rectangularStrip_+(\delta/2,2\eta_0)$.
  By ~\cref{l:1mt},
  \begin{equation}\label{e:abs-inside-outside-comparison}
    \E[|Z_{G,< \varepsilon}(\beta)|]
    \le \E[Z_{G,< \varepsilon}(E)]
    \le e^{cn/2} |\E[Z_{G,< \varepsilon}(\beta)]|\,.
  \end{equation}
  Then, for a small constant $\iota > 0$ we will set below, we obtain the following chain of inequalities, where the step marked $(*)$ is by Corollary~\cref{c:1mt-balanced-complex}, the step marked $(**)$ is by H\"older's inequality, and the step marked $(***)$ is by Corollary~\cref{c:1mt-balanced-complex} and \eqref{e:abs-inside-outside-comparison}:
  \begin{align*}
    &\E [|X_{G}(\beta) - {\widetilde X}_{G}(\beta)|] \\
    &\le \E\left[
      \left|
        \frac{Z_{G}(\beta) A_{G}(\beta)}{\E[Z_{G}(\beta)]}
        - \frac{Z_{G,< \varepsilon}(\beta) A_{G}(\beta)}{\E[Z_{G}(\beta)]}
      \right|
    \right] \\
    &\qquad\qquad + 
    \E \left[
      \left|
        \frac{Z_{G,< \varepsilon}(\beta) A_{G}(\beta)}{\E[Z_{G}(\beta)]}
        - \frac{Z_{G,< \varepsilon}(\beta) A_{G}(\beta)}{\E[Z_{G,< \varepsilon}(\beta)]}
      \right|
    \right] \\
    &\stackrel{(*)}{\le} 
    \frac{\E[|Z_{G,\ge \varepsilon}(\beta)| |A_{G}(\beta)|]}{|\E[Z_{G}(\beta)]|}
    + \frac{e^{-cn}}{1-e^{-cn}}  
    \frac{\E[|Z_{G,< \varepsilon}(\beta)| |A_{G}(\beta)|]}{|\E[Z_{G,< \varepsilon}(\beta)]|} \\
    &\le
    \frac{(Z_{n,\upperBoundTag})^\iota \E[|Z_{G,\ge \varepsilon}(\beta)|^{1-\iota} |A_{G}(\beta)|]}{(Z_{n,\lowerBoundTag})^\iota |\E[Z_{G}(\beta)]|^{1-\iota}}
    + 2e^{-cn}
    \frac{(Z_{n,\upperBoundTag})^\iota \E[|Z_{G,< \varepsilon}(\beta)|^{1-\iota} |A_{G}(\beta)|]}{(Z_{n,\lowerBoundTag})^\iota |\E[Z_{G,< \varepsilon}(\beta)]|^{1-\iota}} \\
    &\stackrel{(**)}{\le}
    \frac{(Z_{n,\upperBoundTag})^\iota \E[|Z_{G,\ge \varepsilon}(\beta)|]^{1-\iota} \E[|A_{G}(\beta)|^{1/\iota}]^\iota}{(Z_{n,\lowerBoundTag})^\iota |\E[Z_{G}(\beta)]|^{1-\iota}} \\
    &\qquad\qquad + 2e^{-cn}
    \frac{(Z_{n,\upperBoundTag})^\iota \E[|Z_{G,< \varepsilon}(\beta)|]^{1-\iota} \E[|A_{G}(\beta)|^{1/\iota}]^{\iota}}{(Z_{n,\lowerBoundTag})^\iota |\E[Z_{G,< \varepsilon}(\beta)]|^{1-\iota}} \\
    &\stackrel{(***)}{\le}
    \left(\frac{e^{-cn}}{1-e^{-cn}}\right)^{1-\iota} \frac{(Z_{n,\upperBoundTag})^\iota \E[|A_{G}(\beta)|^{1/\iota}]^\iota}{(Z_{n,\lowerBoundTag})^\iota}
    + 2e^{-cn}
    \frac{(Z_{n,\upperBoundTag})^\iota \E[|A_{G}(\beta)|^{1/\iota}]^{\iota}}{(Z_{n,\lowerBoundTag})^\iota} \cdot e^{c(1-\iota) n / 2} \\
    &\le 2(e^{-c(1-\iota) n} + e^{-c(1+\iota) n / 2}) \cdot \left(
      \frac{Z_{n,\upperBoundTag}}{Z_{n,\lowerBoundTag}} \cdot \E[|A_{G}(\beta)|^{1/\iota}]
    \right)^{\iota}
  \end{align*}
  By property \cref{i:AN-moments-bounded}, for $C = C(d,\delta/2,\varepsilon,1/\iota)$ defined therein,
  \begin{equation}\label{e:AN-1-iota}
    \E[|A_{G}(\beta)|^{1/\iota}] \le C\,.
  \end{equation}
  Thus, if we set $\iota$ to be a suitably small constant,
  \begin{equation}\label{e:zero-free-unbalanced}
    \E [|X_{G}(\beta) - {\widetilde X}_{G}(\beta)|]
    \le 2(e^{-c(1-\iota) n} + e^{-c(1+\iota) n / 2}) \cdot
    \left(
      \frac{CZ_{n,\upperBoundTag}}{Z_{n,\lowerBoundTag}}
    \right)^\iota
    \le e^{-cn/3}\,.
  \end{equation}
  We similarly obtain the following chain of inequalities, where the step marked $(*)$ is by H\"older's inequality and the step marked $(**)$ is by Corollary~\cref{c:truncation-complex}:
  \begin{align*}
    \E [\allone\{G \not\in {\mathcal{E}}_{\beta,\varepsilon,c}\} |{\widetilde X}_{G}(\beta)|]
    &= \frac{
      \E[\allone\{G \not\in {\mathcal{E}}_{\beta,\varepsilon,c}\}
      |Z_{G,< \varepsilon}(\beta)| |A_{G}(\beta)|]
    }{|\E[Z_{G,< \varepsilon}(\beta)]|} \\
    &\le \frac{
      (Z_{n,\upperBoundTag})^\iota \cdot \E[
        \allone\{G \not\in {\mathcal{E}}_{\beta,\varepsilon,c}\}
        |Z_{G,< \varepsilon}(\beta)|^{1-\iota} |A_{G}(\beta)|
      ]
    }{
      (Z_{n,\lowerBoundTag})^\iota
      |\E[Z_{G,< \varepsilon}(\beta)]|^{1-\iota}
    } \\
    &\!\stackrel{(*)}{\le} \frac{
      (Z_{n,\upperBoundTag})^\iota
      \cdot \E[
        \allone\{G \not\in {\mathcal{E}}_{\beta,\varepsilon,c}\}
        |Z_{G,< \varepsilon}(\beta)|
      ]^{1-\iota}
      \cdot \E[
        |A_{G}(\beta)|^{1/\iota}
      ]^\iota
    }{
      (Z_{n,\lowerBoundTag})^\iota
      |\E[Z_{G,< \varepsilon}(\beta)]|^{1-\iota}
    } \\
    &\!\!\stackrel{(**)}{\le} \frac{
      (Z_{n,\upperBoundTag})^\iota 
      \cdot e^{-c(1-\iota) n}
      \cdot \E[
        |A_{G}(\beta)|^{1/\iota}
      ]^\iota
    }{
      (Z_{n,\lowerBoundTag})^\iota
    } \\
    &= e^{-c(1-\iota) n} \cdot \left(
      \frac{Z_{n,\upperBoundTag}}{Z_{n,\lowerBoundTag}}
      \cdot \E[|A_{G}(\beta)|^{1/\iota}]
    \right)^\iota\,.
  \end{align*}
  By the bound \eqref{e:AN-1-iota}, if we set $\iota$ to be a suitably small constant,
  \begin{equation}\label{e:zero-free-1mt}
    \E [\allone\{G \not\in {\mathcal{E}}_{\beta,\varepsilon,c}\} |{\widetilde X}_{G}(\beta)|]
    \le e^{-c(1-\iota) n} \cdot \left(
      \frac{CZ_{n,\upperBoundTag}}{Z_{n,\lowerBoundTag}}
    \right)^\iota
    \le e^{-cn/2}\,.
  \end{equation}
  Turning to the remaining term in \eqref{e:zero-free-jensen}, we write
  \[
    \E [\allone\{G \in {\mathcal{E}}_{\beta,\varepsilon,c}\} |{\widetilde X}_{G}(\beta)|^2]
    = \frac{\E [
      \allone\{G \in {\mathcal{E}}_{\beta,\varepsilon,c}\}
      |Z_{G,< \varepsilon}(\beta)|^2
      |A_{G}(\beta)|^2]
    }{|\E[Z_{G,< \varepsilon}(\beta)]|^2}
    = \frac{{\textup{(i)}} + {\textup{(ii)}} + {\textup{(iii)}}}{|\E[Z_{G,< \varepsilon}(\beta)]|^2}\,,
  \]
  where
  \begin{align*}
    {\textup{(i)}} &= \E \left[
      \allone\{G \in {\mathcal{E}}_{\beta,\varepsilon,c}\} 
      \left(|Z_{G,< \varepsilon}(\beta)|^2 - Z_{G,< \varepsilon}^{(2)}(\beta)\right)
      |A_{G}(\beta)|^2
    \right]\,, \\
    {\textup{(ii)}} &= -\E \left[
      \allone\{G \not\in {\mathcal{E}}_{\beta,\varepsilon,c}\} 
      Z_{G,< \varepsilon}^{(2)}(\beta)
      |A_{G}(\beta)|^2
    \right]\,, \\
    {\textup{(iii)}} &= \E \left[
      Z_{G,< \varepsilon}^{(2)}(\beta)
      |A_{G}(\beta)|^2
    \right]\,.
  \end{align*}
  By definition of the event ${\mathcal{E}}_{\beta,\varepsilon,c}$,
  \begin{align*}
    \frac{|{\textup{(i)}}|}{|\E[Z_{G,< \varepsilon}(\beta)]|^2}
    &\le \frac{
      \E [
        \allone\{G \in {\mathcal{E}}_{\beta,\varepsilon,c}\} 
        |Z_{G,\ge \varepsilon}^{(2)}(\beta)|
        |A_{G}(\beta)|^2
      ]
    }{
      |\E[Z_{G,< \varepsilon}(\beta)]|^2
    } \\
    &\le e^{-cn} \cdot \E[|A_{G}(\beta)|^2]
    \le e^{-cn} \cdot C(d,\delta/2,\varepsilon,2)
    \le e^{-cn/2}\,.
  \end{align*}
  We have the deterministic bound
  \[
    |Z_{G,< \varepsilon}^{(2)}(\beta)| \le (Z_{n,\upperBoundTag})^2\,.
  \]
  So, similarly to above, we obtain the following chain of inequalities, where the step $(*)$ is by H\"older's inequality and the step $(**)$ is by Corollary~\cref{c:truncation-complex}:
  \begin{align*}
    \frac{|{\textup{(ii)}}|}{|\E[Z_{G,< \varepsilon}(\beta)]|^2}
    &\le \frac{
      \E [
        \allone\{G \not\in {\mathcal{E}}_{\beta,\varepsilon,c}\} 
        |Z_{G,< \varepsilon}^{(2)}(\beta)|
        |A_{G}(\beta)|^2
      ]
    }{|\E[Z_{G,< \varepsilon}(\beta)]|^2} \\
    &\le \frac{
      (Z_{n,\upperBoundTag})^{2\iota} \cdot \E [
        \allone\{G \not\in {\mathcal{E}}_{\beta,\varepsilon,c}\} 
        |Z_{G,< \varepsilon}^{(2)}(\beta)|^{1-\iota}
        |A_{G}(\beta)|^2
      ]
    }{
      (Z_{n,\lowerBoundTag})^{2\iota} |\E[Z_{G,< \varepsilon}(\beta)]|^{2(1-\iota)}
    } \\
    &\!\stackrel{(*)}{\le} \frac{
      (Z_{n,\upperBoundTag})^{2\iota} \cdot \E [
        \allone\{G \not\in {\mathcal{E}}_{\beta,\varepsilon,c}\} 
        |Z_{G,< \varepsilon}^{(2)}(\beta)|
      ]^{1-\iota} \cdot \E[
        |A_{G}(\beta)|^{2/\iota}
      ]^\iota
    }{
      (Z_{n,\lowerBoundTag})^{2\iota} |\E[Z_{G,< \varepsilon}(\beta)]|^{2(1-\iota)}
    } \\
    &\!\!\stackrel{(**)}{\le} \frac{
      (Z_{n,\upperBoundTag})^{2\iota} \cdot e^{-c(1-\iota)n} \cdot \E[
        |A_{G}(\beta)|^{2/\iota}
      ]^\iota
    }{
      (Z_{n,\lowerBoundTag})^{2\iota}
    } \\
    &\le e^{-c(1-\iota)n} \left(
      \frac{(Z_{n,\upperBoundTag})^2}{(Z_{n,\lowerBoundTag})^2}
      \cdot \E[|A_{G}(\beta)|^{2/\iota}]
    \right)^\iota\,.
  \end{align*}
  By property \cref{i:AN-moments-bounded}, $\E[|A_{G}(\beta)|^{2/\iota}] \le C(d,\delta/2,\varepsilon,2/\iota)$.
  Setting $\iota$ suitably small yields
  \[
    \frac{|{\textup{(ii)}}|}{|\E[Z_{G,< \varepsilon}(\beta)]|^2} \le e^{-cn/2}\,.
  \]
  Finally property \cref{i:AN-second-moment} yields
  \[
    \left|\frac{{\textup{(iii)}}}{|\E[Z_{G,< \varepsilon}(\beta)]|^2} - 1\right| = o_n(1)\,.
  \]
  Combining the above yields (for an error $o_n(1)$ uniform in $\beta \in \rectangularStrip_+(\delta/2,2\eta_0)$)
  \[
    \E [\allone\{G \in {\mathcal{E}}_{\beta,\varepsilon,c}\} |{\widetilde X}_{G}(\beta)|^2] \le 1 + o_n(1)\,.
  \]
  Together with \eqref{e:zero-free-jensen}, \eqref{e:zero-free-unbalanced}, and \eqref{e:zero-free-1mt} this implies \eqref{e:zero-free-goal}.
\end{proof}

\subsection{Basic moment estimates of real partition function}

In addition to ~\cref{l:1mt-balanced}, we additionally record some simple estimates on the expectations of replicated versions of $Z_{G,< \varepsilon}(\beta)$.
These estimates are simpler than ~\cref{l:1mt} and ~\cref{t:reweight} in two ways: they hold for \textbf{real} $\beta \in [2\beta_\lowerBoundTag, \betaKS - \delta]$ and will be accurate to leading \textbf{exponential} order, whereas ~\cref{l:1mt} and ~\cref{t:reweight} hold for \textbf{complex} $\beta \in \rectangularStrip_+(\delta,\eta_0)$ and are accurate up to a multiplicative $1+o_n(1)$ error.
The proofs are routine applications of Laplace's method and deferred to Appendix~\cref{a:basic-mt}.

Recall the set $\Sigma_n(\varepsilon)$ defined in \eqref{e:Sigma-N-eps}.
For $\beta \in \R$, define
\begin{align}\nonumber
  Z_{G,< \varepsilon}^{(3)}(\beta) &= 
  \sum_{\sigma^1,\sigma^2,\sigma^3 \in \Sigma_n(\varepsilon)} 
  \allone\{\max(|R(\sigma^1,\sigma^2)|, |R(\sigma^1,\sigma^3)|, |R(\sigma^2,\sigma^3)|) < \varepsilon\} \\
  \label{e:Z3}
  &\qquad\qquad\qquad \exp\left(\beta \isingHamiltonian_{G}(\sigma^1) + \beta \isingHamiltonian_{G}(\sigma^2) + \beta \isingHamiltonian_{G}(\sigma^3)\right)\,.
\end{align}
\begin{lemma}\label{l:basic-mt}
  There exists $\varepsilon_0 = \varepsilon_0(d,\delta) \in (0,1)$ such that for all $\varepsilon \in (0, \varepsilon_0)$ and $\upsilon > 0$, the following holds for all real $\beta \in [2\beta_\lowerBoundTag, \betaKS - \delta]$.
  \begin{enumerate}[label=(\alph*)]
    \item \label{i:basic-mt-annealed-2} We have
    \[
      \E[Z_{G,< \varepsilon}^{(2)}(\beta)] \le e^{\upsilon n} \E[Z_{G,< \varepsilon}(\beta)]^2\,.
    \]
    \item \label{i:basic-mt-annealed-2-subgaussian} Let
    \begin{equation}\label{e:SigmaN2}
      \Sigma_{n,2}(\varepsilon) = \{(\sigma,{\sigma'}) \in \Sigma_n(\varepsilon)^2 : |R(\sigma,{\sigma'})| < \varepsilon \}\,.
    \end{equation}
    There exists $c = c(d,\delta,\varepsilon) > 0$ such that if we sample $(\sigma,{\sigma'}) \in \Sigma_{n,2}(\varepsilon)$ with probability 
    \[
      p(\sigma,{\sigma'}) \propto \E[\exp(\beta \isingHamiltonian_{G}(\sigma) + \beta \isingHamiltonian_{G}({\sigma'}))]\,,
    \]
    then for all $s\in (0,\varepsilon]$ independent of $n$,
    \[
      \Pr_{\sigma,{\sigma'}}(\max(|R(\sigma,\allone)|, |R({\sigma'},\allone)|, |R(\sigma,{\sigma'})|) \ge s) \le 6\exp\left(-ns^2/(2c)\right)\,.
    \]
    \item \label{i:basic-mt-annealed-3} We have
    \[
      \E[Z_{G,< \varepsilon}^{(3)}(\beta)] \le e^{\upsilon n} \E[Z_{G,< \varepsilon}(\beta)]^3\,.
    \]
  \end{enumerate}
\end{lemma}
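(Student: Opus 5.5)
We will prove all three parts by the configuration-model first/second/third moment computation followed by Laplace's method.

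\emph{Reduction to finite-dimensional variational problems.} For $k\in\{1,2,3\}$ replicas $\sigma^1,\dots,\sigma^k$, the expectation $\E[\exp(\beta\sum_a \isingHamiltonian_G(\sigma^a))]$ depends only on the empirical type vector: the fractions $\alpha_\tau$ of vertices with spin-tuple $\tau\in\{\pm1\}^k$. Given these, the $dn$ sockets split into $2^k$ classes of sizes $d n\alpha_\tau$. A uniform perfect matching is then summarized by the edge-type matrix $(m_{\tau\tau'})$. Counting matchings with a given matrix gives an explicit product of factorials. Stirling's formula then yields
\[
\E[Z^{(k)}]=\exp\bigl(n\max_{\alpha,m}\Phi_k(\alpha,m)+o(n)\bigr),
\]
where the maximum is over admissible $(\alpha,m)$ satisfying the overlap constraints. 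The maximization over $m$ for fixed $\alpha$ is an entropy-plus-linear problem whose solution is explicit: $m$ is a product-type measure tilted by $\prod_a w(\tau_a\tau'_a)$ with $w(\pm1)=1,e^\beta$. This leaves a function $\phi_k(\alpha)$ of the type proportions only. For $k=1$, $\phi_1$ is maximized uniquely at $\alpha=(1/2,1/2)$ with value $\log 2+\frac d2\log\frac{1+e^\beta}{2}$. This is the content of Lemma~\cref{l:1mt-balanced}, which we take as given.

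\emph{Local analysis at the balanced point.} For (a) and (c) we must show that on the region where all magnetizations and pairwise overlaps are below $\varepsilon$, $\phi_k(\alpha)\le k\,\phi_1(1/2)+\upsilon$. It suffices to show that $\alpha^\ast=$ uniform is a strict local maximizer of $\phi_k$ with nondegenerate Hessian, and then take $\varepsilon_0$ small enough that the $\varepsilon$-neighborhood sits inside the basin. By symmetry $\nabla\phi_k(\alpha^\ast)=0$. Parametrize $\alpha$ by the Fourier coefficients $\hat\alpha(S)=\E_\alpha[\prod_{a\in S}\tau_a]$ for $S\ne\emptyset$. The Hessian then diagonalizes over $S$: the vertex entropy contributes $-1$ per coordinate, and the edge term contributes $(d-1)\,\theta^{|S|}$-type factors. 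Here $\theta=\tanh(\beta/2)$, up to sign conventions matching $\delta_1$.

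The key comparison is the coordinate $|S|=2$ (the overlaps) against $|S|=1$. The criterion for negative definiteness on the overlap coordinates is $(d-1)\theta^2<1$, i.e. $\beta<\betaKS$. With $\beta\le\betaKS-\delta$ this gives a uniform spectral gap $\gamma(d,\delta)>0$ in the Hessian over the compact range $[2\beta_\lowerBoundTag,\betaKS-\delta]$. The degree-$1$ and degree-$3$ coordinates need only $(d-1)|\theta|<1$-type conditions, which are weaker. Then Taylor's theorem with a uniform bound on third derivatives gives
\[
\phi_k(\alpha)\le k\phi_1(1/2)-\tfrac{\gamma}{2}\|\alpha-\alpha^\ast\|^2
\]
on a neighborhood of radius $\varepsilon_0$. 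Since the magnitudes $\hat\alpha(S)$, $|S|\le 2$, control the relevant neighborhood, the triple-overlap coordinate $\hat\alpha(\{1,2,3\})$ is also free but negatively curved. This yields (a) and (c), even with $\upsilon=o(1)$; the factor $e^{\upsilon n}$ absorbs the polynomial prefactors and $o(n)$ terms.

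\emph{Subgaussian tail (b).} The same quadratic bound gives, for the measure $p$ on $\Sigma_{n,2}(\varepsilon)$, that the mass of configurations with some coordinate $\ge s$ is at most $\exp(-n\gamma s^2/2+o(n))$ relative to the total. The total is itself $\ge \exp(2n\phi_1(1/2)-O(\log n))$, by a local lower bound at $\alpha^\ast$ using polynomially many types. To obtain a clean non-asymptotic bound $6e^{-ns^2/(2c)}$ uniformly for $s$ fixed, we union-bound over the three quantities and absorb the polynomial type count using $s\ge$ any fixed constant with $n$ large.

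The main obstacle is the Hessian computation: verifying exactly that the quadratic form in the overlap direction has coefficient $-(1-(d-1)\delta_1^2)$ up to positive factors, uniformly in $\beta$, after optimizing out the edge matrix. I would first do it via the equivalent belief-propagation/tree linearization (the factor $(d-1)\theta^2$ is the KS branching factor). If that becomes messy, I would compute it directly with the explicit Bethe free energy.
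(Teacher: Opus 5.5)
Your local analysis is sound for (a) and (b): for $k=2$ the truncation constrains all three nontrivial Fourier coefficients, so the summation region is a small ball around $\alpha^\ast$. For (c) it breaks down. $Z^{(3)}_{G,<\varepsilon}$ constrains only the three magnetizations and the three pairwise overlaps. The triple overlap $\hat\alpha(\{1,2,3\})=\frac1n\sum_i\sigma^1_i\sigma^2_i\sigma^3_i$ is unconstrained.

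For example, take $\sigma^3=\sigma^1\odot\sigma^2$. Its triple overlap is $1$, while its magnetization and its overlaps with $\sigma^1,\sigma^2$ equal $R(\sigma^1,\sigma^2)$, $R(\sigma^2,\allone)$ and $R(\sigma^1,\allone)$, all below $\varepsilon$. So the relevant types fill an $\varepsilon$-thickening of the whole segment $\{\alpha^\dagger(t):t\in[-1,1]\}$, on which magnetizations and pairwise overlaps are $0$ and the triple overlap is $t$. This segment runs out to the boundary of the simplex. Negative curvature at $t=0$ plus Taylor's theorem on an $\varepsilon_0$-ball says nothing about $|t|$ of order one, and shrinking $\varepsilon_0$ does not help.

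The missing ingredient is a global statement: $t\mapsto\Phi_3(\alpha^\dagger(t))$ is uniquely maximized at $t=0$. The paper proves this in \cref{l:kmt-k-3}. On this symmetric family the stationarity equations reduce to two unknowns $x_\pm$ satisfying $Ax_\pm^2+Bx_+x_-=\alpha_\pm$, with $A=1+3e^{2\beta}$ and $B=e^{\beta}(3+e^{2\beta})$. This system is solved in closed form, and the $t$-derivative $\tfrac12\bigl[(d-1)\log\tfrac{1+t}{1-t}-d\log\tfrac{x_+}{x_-}\bigr]$ is shown to be negative on $(0,1)$. (The same closed form also lets one check concavity in $t$ along the segment.) The paper then combines two facts. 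Uniform continuity of $\Phi_3$ on the compact $\mathcal P_3$ controls the part of the thickening away from $\alpha^\ast$, and strong concavity controls the part near $\alpha^\ast$.

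There is also a concrete error in your Hessian step. You say the odd-degree modes need only ``$(d-1)|\theta|<1$-type conditions, which are weaker''. That condition is actually stronger than $(d-1)\theta^2<1$, and it fails near $\betaKS$, where $(d-1)\tanh(\beta/2)\approx\sqrt{d-1}>1$.

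At $\alpha^\ast$, the Hessian restricted to degree-$j$ modes is, up to the positive factor $2^k$, equal to $-\frac{1-(d-1)a_j}{1+a_j}$ with $a_j=(-\tanh(\beta/2))^j$. The paper obtains this via Lagrange duality: $\nabla^2\Phi_k=(d-1)D^{-1}-dQ^{-1}$ with $Q=D+\Gamma$, and $\Gamma(\alpha^\ast)=2^{-k}\Xi^{\otimes k}$ tensorizes. Odd modes are harmless for $\beta>0$ because of the sign, $a_j<0$. For $\beta\in[2\beta_{\lowerBoundTag},0)$ they are harmless only because $|\beta|\le 2d^{-10}$. Only the even modes, worst at $j=2$, use $\beta<\betaKS$. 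This computation also settles what you flagged as the main obstacle, without needing BP linearization.
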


\section{Estimate of truncation error}
\label{s:truncation}

In this section we prove ~\cref{t:truncation} and Corollary~\cref{c:truncation}.
We therefore assume the setting of these results and take $d,\delta$, $\varepsilon_0 = \varepsilon_0(d,\delta) > 0$ (given by ~\cref{l:basic-mt}), and $\varepsilon \in (0, \varepsilon_0)$ as fixed.
We will in general suppress dependence on these parameters.
Throughout this section $\beta \in [2\beta_\lowerBoundTag, \betaKS - \delta]$ will be real.

Most of the section will be devoted to ~\cref{p:truncation} below, which is a variant of ~\cref{t:truncation} and Corollary~\cref{c:truncation} that we will formulate next.
Recall $\Sigma_n(\varepsilon)$ defined in \eqref{e:Sigma-N-eps}.
For $\beta \in \R$, define the Gibbs measure $\mu_{G,\beta,<\varepsilon}$ as the probability measure on $\Sigma_n(\varepsilon)$ with
\begin{equation}\label{e:gibbs-msr}
  \mu_{G,\beta,<\varepsilon}(\sigma) = \frac{1}{Z_{G,< \varepsilon}(\beta)} \exp(\beta \isingHamiltonian_{G}(\sigma))\,.
\end{equation}
For $A \subseteq (\Sigma_n(\varepsilon))^k$ a predicate in $k$ replicas $\sigma^1,\ldots,\sigma^k \in \Sigma_n(\varepsilon)$, define
\[
  \mu_{G,\beta,<\varepsilon}(A) = \Pr\left(
    (\sigma^1,\ldots,\sigma^k) \in A : \sigma^i \stackrel{IID}{\sim} \mu_{G,\beta,<\varepsilon}
  \right)\,.
\]
This is a random variable depending on $G$.
When $A$ is a predicate in one replica $\sigma^1$, we will sometimes write $\sigma$ for $\sigma^1$.

Similarly, for ${\sigma'} \in \Sigma_n(\varepsilon)$ and $\varepsilon > 0$, define the probability measure on $\Sigma_n(\varepsilon)$
\begin{equation}\label{e:gibbs-msr-band}
  \mu_{G,\beta,<\varepsilon}^{2,{\sigma'}}(\sigma) 
  = \frac{
    \allone\{|R(\sigma,{\sigma'})| < \varepsilon\} \exp(\beta \isingHamiltonian_{G}(\sigma))
  }{
    {\widetilde Z}^{(2)}_{G,< \varepsilon}(\beta,{\sigma'})
  }\,,
\end{equation}
where
\begin{equation}\label{e:tZ2}
  {\widetilde Z}^{(2)}_{G,< \varepsilon}(\beta,{\sigma'}) 
  = \sum_{\sigma \in \Sigma_n(\varepsilon)} \allone\{|R(\sigma,{\sigma'})| < \varepsilon\} \exp(\beta \isingHamiltonian_{G}(\sigma))\,.
\end{equation}
Thus $\mu_{G,\beta,<\varepsilon}^{2,{\sigma'}}$ is the measure $\mu_{G,\beta,<\varepsilon}$ conditioned on the orthogonal band $\{\sigma \in \Sigma_n(\varepsilon) : |R(\sigma,{\sigma'})| < \varepsilon\}$.
We define the Gibbs probability $\mu_{G,\beta,<\varepsilon}^{2,{\sigma'}}(A)$ analogously to above.
\begin{proposition}\label{p:truncation}
  There exists $c = c(d,\delta,\varepsilon) > 0$ such that the following holds for all $\beta \in [2\beta_\lowerBoundTag, \betaKS - \delta]$.
  \begin{enumerate}[label=(\alph*)]
    \item \label{i:truncation-one-replica}
    Define the event
    \begin{equation}\label{e:overlap-event-O}
      {\mathcal{O}}_{\beta,c} = \left\{
        G : \mu_{G,\beta,<\varepsilon}(|R(\sigma^1,\sigma^2)| \ge \varepsilon) \le e^{-cn}
      \right\}\,.
    \end{equation}
    Then,
    \begin{equation}\label{e:truncation-O}
      \E\left[
        \allone\{G\not\in {\mathcal{O}}_{\beta,c}\}
        Z_{G,< \varepsilon}(\beta)
      \right]
      \le
      e^{-cn} \E[Z_{G,< \varepsilon}(\beta)]\,.
    \end{equation}
    \item \label{i:truncation-two-replica}
    Define
    \begin{equation}\label{e:overlap-event-O-two-replica}
      {\mathcal{O}}^{(2)}_{\beta,c} = \left\{
        ({\sigma'},G) : \mu_{G,\beta,<\varepsilon}^{2,{\sigma'}}(|R(\sigma^1,\sigma^2)| \ge \varepsilon) \le e^{-cn}
      \right\}\,.
    \end{equation}
    Then,
    \begin{align}\nonumber
      \sum_{\sigma,{\sigma'} \in \Sigma_n(\varepsilon)}
      \allone\{|R(\sigma,{\sigma'})| < \varepsilon\}
      \E\Big[
        \allone\{({\sigma'},G) \not\in {\mathcal{O}}^{(2)}_{\beta,c}\}
        &\exp\left(\beta \isingHamiltonian_{G}(\sigma) + \beta \isingHamiltonian_{G}({\sigma'})\right)
      \Big] \\
      \label{e:truncation-O-two-replica}
      &\le
      e^{-cn} \E[Z_{G,< \varepsilon}^{(2)}(\beta)]\,.
    \end{align}
  \end{enumerate}
\end{proposition}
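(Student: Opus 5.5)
The plan is to follow the planted-model strategy of \cite{CLMS22}, sharpened so that every failure probability is exponentially small. The left-hand side of \eqref{e:truncation-O} is a size-biased probability. Define the planted model: sample $\sigma \in \Sigma_n(\varepsilon)$ with probability $p(\sigma)\propto \E[\exp(\beta\isingHamiltonian_G(\sigma))]$, then sample $G$ with density proportional to $\exp(\beta \isingHamiltonian_G(\sigma))$. Then
\[
  \frac{\E[\allone\{G\notin{\mathcal{O}}_{\beta,c}\}Z_{G,<\varepsilon}(\beta)]}{\E[Z_{G,<\varepsilon}(\beta)]} = \Pr_{\plantedTag}(G\notin {\mathcal{O}}_{\beta,c}).
\]
So part (a) reduces to showing $\Pr_{\plantedTag}(\mu_{G,\beta,<\varepsilon}(|R(\sigma^1,\sigma^2)|\ge\varepsilon) > e^{-cn}) \le e^{-cn}$. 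Part (b) has the same structure. There we plant a pair $(\sigma,\sigma')\in\Sigma_{n,2}(\varepsilon)$ with weight $\E[\exp(\beta\isingHamiltonian_G(\sigma)+\beta\isingHamiltonian_G(\sigma'))]$, and must show that the conditioned measure $\mu^{2,\sigma'}_{G,\beta,<\varepsilon}$ has exponentially small non-orthogonal overlap mass except on an event of planted probability $e^{-cn}$.

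For (a), I would use Markov's inequality on the quantity
\[
  Y \defeq \mu_{G,\beta,<\varepsilon}(|R(\sigma^1,\sigma^2)|\ge\varepsilon) = \frac{Z^{(2)}_{G,\ge\varepsilon}(\beta)}{Z_{G,<\varepsilon}(\beta)^2}.
\]
This requires a lower bound on $Z_{G,<\varepsilon}$ in the planted model, which I get in two steps.

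First, I show the planted $G$ satisfies $Z_{G,<\varepsilon}(\beta)\ge e^{-\upsilon n}\E[Z_{G,<\varepsilon}(\beta)]$ except with probability $e^{-c'n}$. For this I would use concentration of $\log Z_{G,<\varepsilon}$ under the planted law. A switching-type bounded-differences inequality for the configuration model changes $\log Z$ by $O(\beta)$ per switch, which gives exponential concentration around its planted mean. That mean is at least $\log \E Z_{G,<\varepsilon} - o(n)$ by Jensen's inequality together with the quiet-planting comparison. Laplace's method shows the planted and null models are contiguous at exponential scale below $\betaKS$.

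Second, on this event,
\[
  \E_{\plantedTag}[Y\,\allone\{\cdot\}] \le e^{2\upsilon n}\,\frac{\E_{\plantedTag}[Z^{(2)}_{G,\ge\varepsilon}]}{\E[Z_{G,<\varepsilon}]^2}.
\]
Unfolding the planted measure turns the numerator into a three-replica annealed sum with $\sigma$ planted and $(\sigma^1,\sigma^2)$ having large overlap. I bound it by $e^{-c''n}$ times $\E[Z_{G,<\varepsilon}]^3$, using Laplace's method on the three-replica overlap functional in the manner of \cref{l:basic-mt}(c).

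The crux, and the expected main obstacle, is that this three-replica functional restricted to $|R(\sigma^1,\sigma^2)|\ge\varepsilon$ is strictly below the value at zero overlap. This only holds up to $\betaKS$ in the planted, not annealed, sense; the annealed two-replica bound fails beyond $\betasecond$. Following \cite{CLMS22}, I would instead use the planted structure more cleverly: in the planted model the pair $(\sigma^1,\sigma^2)$ near the planted $\sigma$ is governed by the broadcast process, and non-reconstruction below $\betaKS$ (Kesten--Stigum) forces the free-energy functional restricted to overlap $\ge\varepsilon$ (with $\sigma$ and with each other) to be strictly smaller. The exponent is a local maximum problem near the uncorrelated point, whose Hessian is controlled by $(d-1)\tanh^2(\beta/2)<1$. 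So for $\varepsilon<\varepsilon_0$ we obtain a strict negative gap $-c(\varepsilon)$. Globally, for large overlaps, I would cite the CLMS22 analysis that the planted free energy has no other maximizer. One then chooses $\upsilon$ small compared with this gap, and $c$ small compared with everything.

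Part (b) repeats the argument with two planted replicas. This uses the band-restricted partition function $\widetilde Z^{(2)}_{G,<\varepsilon}(\beta,\sigma')$, and the same overlap-gap estimate with one extra fixed coordinate. The subgaussian bound of \cref{l:basic-mt}(b) handles the planted pair's own overlaps.
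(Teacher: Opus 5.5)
Your reduction of (a) to a planted-model probability is correct and matches the paper's first step. The mechanism you propose for making that probability exponentially small, however, does not work.

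After lower-bounding $Z_{G,<\varepsilon}$, everything rests on $\E^{\plantedTag}[Z^{(2)}_{G,\ge\varepsilon}(\beta)]\le e^{-c''n}\E[Z_{G,<\varepsilon}(\beta)]^2$. This is a purely annealed quantity. It equals $\E[Z_{G,<\varepsilon}Z^{(2)}_{G,\ge\varepsilon}]/\E[Z_{G,<\varepsilon}]$, an explicit three-replica configuration-model sum. No property of \emph{typical} planted graphs, such as broadcast behaviour or non-reconstruction, can make it smaller.

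Worse, restrict that sum to types in which the planted replica is independent of $(\sigma^1,\sigma^2)$, with product edge profiles. There the three-replica exponent of \cref{l:kmt-exp-rate} is exactly the sum of the one- and two-replica exponents, so $\E^{\plantedTag}[Z^{(2)}_{G,\ge\varepsilon}]\ge e^{-o(n)}\E[Z^{(2)}_{G,\ge\varepsilon}]$. Your estimate is therefore at least as strong as the annealed large-overlap bound $\E[Z^{(2)}_{G,\ge\varepsilon}]\le e^{-cn}\E[Z_{G,<\varepsilon}]^2$. You concede this bound fails beyond $\betasecond$. In any case it would need a global analysis of the two- and three-replica variational problems over all overlaps, which the paper never does: \cref{l:basic-mt} only treats the region where every overlap is below $\varepsilon$.

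Your patch does not fix this. Non-reconstruction as in \cite{CLMS22} gives only $\E^{\plantedTag}_{\beta}\mu_{G,\beta,<\varepsilon}(|R(\sigma,\sigma^*)|\ge\varepsilon)=o_n(1)$. That is a typical-graph statement, not a free-energy gap at exponential scale. Likewise, the local condition $(d-1)\tanh^2(\beta/2)<1$ concerns the annealed functional near the uncorrelated point, not the quenched planted free energy at large overlaps.

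The missing idea is a way to upgrade an $o_n(1)$ statement to an $e^{-cn}$ one. The paper reduces (a), via Nishimori's property and Markov, to $\E^{\plantedTag}_{\beta}\mu_{G,\beta,<\varepsilon}(|R(\sigma,\sigma^*)|\ge\varepsilon)\le e^{-cn}$ (\cref{p:truncation-planted}), and proves this by contradiction using side information $\boldsymbol{v}=\sqrt{t}\sigma^*+\boldsymbol{g}$:
\begin{itemize}
\item The broadcast argument shows the averaged posterior overlap mass stays at most $\varepsilon_1$ for all small $t$ (\cref{l:planted-augmented}).
\item The Gaussian interpolation identity of \cref{l:fe-deriv} is combined with exponential-scale concentration of restricted free energies. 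That concentration needs Wasserstein stability of the planted graph law in $\sigma^*$.
\item Together these show that overlap mass $\ge e^{-\upsilon n}$ at $t=0$ forces overlap mass above $\varepsilon/10$ to be $1-o_n(1)$ at some $t\le t_{\upperBoundTag}(\upsilon)\to 0$ (\cref{p:planted-boosting}), contradicting the first point.
\end{itemize}
Part (b) is handled the same way with two-replica versions.

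A smaller point: your lower bound on $\E^{\plantedTag}\log Z_{G,<\varepsilon}$ does not follow from Jensen, which goes the wrong way. The correct reason is that $\E^{\plantedTag}\log\bigl(Z_{G,<\varepsilon}/\E Z_{G,<\varepsilon}\bigr)$ is a Kullback--Leibler divergence and hence nonnegative.
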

We will derive ~\cref{t:truncation} and Corollary~\cref{c:truncation} from ~\cref{p:truncation} in \cref{ss:truncation-completion}.
\begin{remark}
  Parts \cref{i:truncation-one-replica} and \cref{i:truncation-two-replica} of ~\cref{p:truncation} will be respectively used to prove ~\cref{t:truncation} and Corollary~\cref{c:truncation}.
  The proofs of these two parts are similar and will be carried out ``in parallel'' in \cref{ss:truncation-overview}--\cref{ss:planted-boosting-rigorous}.
  No intuition is lost by focusing on part \cref{i:truncation-one-replica}, which is somewhat simpler.
\end{remark}

\subsection{Reduction to planted model and section overview}
\label{ss:truncation-overview}

The first step of the proof of ~\cref{p:truncation}~\cref{i:truncation-one-replica} follows the strategy of \cite{CLMS22}, where we reduce to a statement about the following planted model.
\begin{definition}[Planted model]\label{d:planted-model}
  Let $\nu^\plantedTag_{\beta}$ be the probability measure on $(\sigma^*,G)$, where $\sigma^* \in \Sigma_n(\varepsilon)$ and $G$ is a $d$-regular (multi-)graph, with
  \[
    \nu^\plantedTag_{\beta}(\sigma^*,G)
    \propto
    \exp(\beta \isingHamiltonian_{G}(\sigma^*))\,.
  \]
  Similarly to the configuration model, we identify each possible $G$ with a perfect matching on a set of $dn$ labeled half-edges, which contains $d$ half-edges attached to each of $n$ labeled vertices.
  We will use $\Pr^\plantedTag_{\beta}$, $\E^\plantedTag_{\beta}$ to denote probability and expectation with respect to $(\sigma^*,G) \sim \nu^\plantedTag_{\beta}$.
\end{definition}
\begin{remark}\label{r:planted-interpretation}
  The planted model $\nu^\plantedTag_{\beta}$ can be thought of as the following Bayesian inference problem.
  We first sample a signal $\sigma^* \in \Sigma_n(\varepsilon)$ from
  \begin{equation}\label{e:planted-model-x-dist}
    \nu^\plantedTag_{\beta}(\sigma^*) \propto \E_{G\sim \configurationModel(n,d)} \exp(\beta \isingHamiltonian_{G}(\sigma^*))\,.
  \end{equation}
  This is exactly the distribution described in ~\cref{l:1mt-balanced}~\cref{i:1mt-balanced-subgaussian}.
  Then we draw $G$ from a (disassortative if $\beta > 0$, assortative if $\beta < 0$) stochastic block model with latent communities $\sigma^*$, i.e.
  \begin{equation}\label{e:planted-model-G-dist-given-x}
    \nu^\plantedTag_{\beta}(G | \sigma^*) \propto \exp(\beta \isingHamiltonian_{G}(\sigma^*))\,.
  \end{equation}
  The question of whether we can (weakly) recover $\sigma^*$ given $G$ will play an important role in the analysis of this model.
  In particular, note that $\mu_{G,\beta,<\varepsilon}$ defined in \eqref{e:gibbs-msr} is the posterior of $\sigma^*$ given $G$.
\end{remark}
\begin{remark}\label{r:non-uniform}
  The distribution \eqref{e:planted-model-x-dist} is non-uniform, as the right-hand side expectation depends on the number of $+$ coordinates of $\sigma^*$.
  However, like the uniform distribution, it will concentrate on $\sigma^*$ with about equally many $+$ and $-$ coordinates, as can be seen from \eqref{e:planted-model-x-dist} and ~\cref{l:1mt-balanced}.
\end{remark}
For the proof of ~\cref{p:truncation}~\cref{i:truncation-two-replica}, we reduce to a statement about a slightly more complex planted model, as follows.
\begin{definition}[Two-replica planted model]\label{d:two-replica-planted-model}
  Let $\nu^{\plantedTag,2}_{\beta}$ be the probability measure on $(\sigma^*,{\sigma'}^*,G)$, where $\sigma^*,{\sigma'}^* \in \Sigma_n(\varepsilon)$ and $G$ is a $d$-regular (multi-)graph, with
  \[
    \nu^{\plantedTag,2}_{\beta}(\sigma^*,{\sigma'}^*,G)
    \propto
    \allone\{|R(\sigma^*,{\sigma'}^*)| < \varepsilon\}
    \exp(\beta \isingHamiltonian_{G}(\sigma^*) + \beta \isingHamiltonian_{G}({\sigma'}^*))\,.
  \]
  We will use $\Pr^{\plantedTag,2}_{\beta}$, $\E^{\plantedTag,2}_{\beta}$ to denote probability and expectation with respect to $(\sigma^*,{\sigma'}^*,G) \sim \nu^{\plantedTag,2}_{\beta}$.
\end{definition}
\begin{remark}
  Similarly to Remark~\cref{r:planted-interpretation}, we will think of ~\cref{d:two-replica-planted-model} as a Bayesian inference problem where $\sigma^*$ is the ``signal'' and $({\sigma'}^*,G)$ is the ``observation.''
  Note that $(\sigma^*,{\sigma'}^*)$ follows the law described in ~\cref{l:basic-mt}~\cref{i:basic-mt-annealed-2-subgaussian}, and $\mu_{G,\beta,<\varepsilon}^{2,{\sigma'}^*}$ defined in \eqref{e:gibbs-msr-band} is the posterior of $\sigma^*$ given $({\sigma'}^*,G)$.
\end{remark}
We will reduce ~\cref{p:truncation} to the following.
\begin{proposition}\label{p:truncation-planted}
  There exists $c = c(d,\delta,\varepsilon) > 0$ such that for all $\beta \in [2\beta_\lowerBoundTag, \betaKS - \delta]$, the following holds.
  \begin{enumerate}[label=(\alph*)]
    \item \label{i:truncation-planted-one-replica} We have
    \[
      \E^\plantedTag_{\beta} \mu_{G,\beta,<\varepsilon}(|R(\sigma,\sigma^*)| \ge \varepsilon) \le e^{-cn}\,.
    \]
    \item \label{i:truncation-planted-two-replica} We have
    \[
      \E^{\plantedTag,2}_{\beta} \mu_{G,\beta,<\varepsilon}^{2,{\sigma'}^*}(|R(\sigma,\sigma^*)| \ge \varepsilon) \le e^{-cn}\,.
    \]
  \end{enumerate}
\end{proposition}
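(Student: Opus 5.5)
The plan is to reduce both parts to a statement about the quenched \emph{overlap-restricted} free energy in the planted model. A polynomial (in fact $o_n(1)$) version of the bound follows from non-reconstruction below $\betaKS$, and I would then \emph{boost} it to $e^{-cn}$ by concentration. I treat part \cref{i:truncation-planted-one-replica}; part \cref{i:truncation-planted-two-replica} is identical after conditioning on ${\sigma'}^*$ and replacing $\mu_{G,\beta,<\varepsilon}$ by the band measure $\mu^{2,{\sigma'}^*}_{G,\beta,<\varepsilon}$.

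For $q\in[-1,1]$ and a window $\rho>0$, define
\[
  Z_{G,q}(\sigma^*)=\sum_{\sigma\in\Sigma_n(\varepsilon)}\allone\{|R(\sigma,\sigma^*)-q|<\rho\}\exp(\beta\isingHamiltonian_G(\sigma)),
\]
so that $\mu_{G,\beta,<\varepsilon}(|R(\sigma,\sigma^*)|\ge\varepsilon)$ is a sum of $O(1/\rho)$ ratios $Z_{G,q}/Z_{G,<\varepsilon}$ over grid points $|q|\ge\varepsilon-\rho$.

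\textbf{Step 1: concentration.} Under $\nu^\plantedTag_\beta$, condition on $\sigma^*$ and expose the matching $G$ one socket pair at a time (a martingale over switchings). Both $\log Z_{G,q}$ and $\log Z_{G,<\varepsilon}$ change by $O(|\beta|)$ under a single switching. Azuma's inequality therefore gives tails $\exp(-t^2/(C n))$ around their planted means. The planted measure of $G$ given $\sigma^*$ is a reweighted configuration model; I would handle it via the block-model description in Remark~\cref{r:planted-interpretation} (a matching with prescribed numbers of $++$, $+-$ and $--$ edges, which is uniform given those counts). Averaging over $\sigma^*$ uses \cref{l:1mt-balanced}\cref{i:1mt-balanced-subgaussian}.

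\textbf{Step 2: a strict quenched gap.} Set $\phi(q)=\frac1n\E^\plantedTag_\beta\log Z_{G,q}$ and $\phi_0=\frac1n\E^\plantedTag_\beta\log Z_{G,<\varepsilon}$. The key claim is that there is $\gamma=\gamma(d,\delta,\varepsilon)>0$ with
\[
  \phi(q)\le\phi_0-\gamma \qquad \text{for all } |q|\ge\varepsilon/2.
\]
Given this, Step 1 and a union bound over the grid give $Z_{G,q}\le e^{-\gamma n/2}Z_{G,<\varepsilon}$ for all such $q$, outside an event of planted probability $e^{-c n}$. Since $\mu\le1$ always, this yields the claimed bound.

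\textbf{Step 3: proving the gap (main obstacle).} The input I would use first is the non-reconstruction statement underlying \cite{CLMS22}: for $\beta<\betaKS$, $\E^\plantedTag_\beta\,\mu(|R(\sigma,\sigma^*)|\ge s)=o_n(1)$ for each fixed $s>0$. Combined with Step 1, this already forces $\phi(q)\le\phi_0+o(1)$. If $\phi(q)\ge\phi_0$ at some $|q|\ge\varepsilon/2$, concentration would put constant posterior mass there, contradicting $o_n(1)$. This only gives $\le$, not a strict gap.

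To get strictness, I would exploit the Nishimori identity and the Bayes-optimal structure of the planted model. I would write $\phi$ as a function of the magnetization of $\sigma$ relative to $\sigma^*$, and compare with an interpolated model carrying an extra side observation of strength $t$ (a BEC/BSC channel on each coordinate of $\sigma^*$). In that model, $\frac{d}{dt}$ of the free energy equals the posterior overlap. Below KS the tree recursion for the overlap is a strict contraction with factor $(d-1)\tanh^2(\beta/2)\le 1-\delta'$. So the overlap fixed point at zero side information is $0$ and is linearly stable, and the replica-symmetric (Bethe) formula for $\phi$ is strictly concave in $q$ near its maximum at $0$. Far from $0$, I would use the annealed bound \cref{l:basic-mt} together with the first-moment restriction \cref{l:1mt-balanced}\cref{i:1mt-balanced-deviation} to rule out overlaps near $\pm1$.

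The delicate part is making the stability argument yield a uniform constant $\gamma$ on all of $|q|\in[\varepsilon/2,1]$, rather than just an infinitesimal statement. This is where I expect most of the work, presumably via an interpolation or cavity upper bound on $\phi(q)$ whose RS functional has its unique maximizer at $q=0$ when $\beta\le\betaKS-\delta$, with $\varepsilon<\varepsilon_0(d,\delta)$ small enough that the quadratic approximation of that functional is valid.
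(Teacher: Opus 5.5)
\textbf{Gap: Steps 2--3 do not prove the key claim.} Given the exponential-rate concentration of Step 1, your strict quenched gap $\phi(q)\le\phi_0-\gamma$ is essentially equivalent to the proposition itself. If the gap at some $|q|\ge\varepsilon$ were only $\gamma$, concentration would give posterior mass $\gtrsim e^{-3\gamma n}$ there. So the reduction moves the entire difficulty into Step 3, and Step 3 does not resolve it.

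The non-reconstruction input from \cite{CLMS22} cannot supply the gap. It only gives $o_n(1)$ posterior mass. Concentration controls $\log Z_{G,q}-\log Z_{G,<\varepsilon}$ only up to $o(n)$ (Azuma windows are of size $\sqrt n$), so $\phi(q)=\phi_0$ is compatible with posterior mass $e^{-\Theta(\sqrt n)}=o_n(1)$. Hence your claim that $\phi(q)\ge\phi_0$ would force constant posterior mass is false. The bound $\phi(q)\le\phi_0$ you extract from it is trivial anyway, since $Z_{G,q}\le Z_{G,<\varepsilon}$.

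None of your suggested routes to strictness is available:
\begin{itemize}
\item There is no rigorous replica-symmetric/Bethe formula, nor an interpolation or cavity upper bound, for overlap-restricted quenched free energies of the planted model on sparse random regular graphs.
\item Linear stability of the tree recursion at $q=0$ is an infinitesimal statement. It gives no uniform $\gamma$ on $\varepsilon/2\le|q|\le 0.9$.
\item Annealed bounds cannot cover intermediate overlaps. By Jensen, $\phi(q)\le\frac1n\log\E^\plantedTag_\beta Z_{G,q}$, which is the second-moment exponent at overlap $q$ minus the first-moment exponent. The paper truncates to near-orthogonal pairs precisely because the second moment is exponentially large on $(\betasecond,\betaKS)$ while its near-orthogonal part is not (\cref{l:basic-mt}).
\end{itemize}
Step 1 also skips a real point. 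Averaging over $\sigma^*$ needs the conditional mean of the restricted free energy to be $O(1/n)$-Lipschitz under single spin flips of $\sigma^*$. The paper proves this by a Wasserstein coupling of the planted block model (\cref{p:fe-conc-conditional-mean-lipschitz}).

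\textbf{Missing idea: no strict gap is proved directly.} The paper argues by contradiction, using Gaussian side information $\boldsymbol{v}=\sqrt t\,\sigma^*+\boldsymbol{g}$, in two steps.
\begin{enumerate}
\item Non-reconstruction is robust to small $t$ (\cref{l:planted-augmented}). This is proved on the tree via the \cite{BRZ95} $\sinh^2$-recursion, which still contracts when $e^{4t}(d-1)\tanh^2(\beta/2)<1$.
\item Boosting (\cref{p:planted-boosting}). Suppose $\E^\plantedTag_\beta\mu_{G,\beta,<\varepsilon}(|R(\sigma,\sigma^*)|\ge\varepsilon)\ge e^{-\upsilon n}$. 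Then at some $t\le t_{\upperBoundTag}(\upsilon)$, where $t_{\upperBoundTag}(\upsilon)\to0$ as $\upsilon\downarrow0$, the augmented posterior puts mass $1-o_n(1)$ on $|R(\sigma,\sigma^*)|\ge\varepsilon/10$.
\end{enumerate}
The boosting step runs as follows. By the derivative formula of \cref{l:fe-deriv}, the free energy restricted to $[x,1]$ gains at least $x$ per unit time from the $R(\sigma,\sigma^*)$ term. Near overlap $0$ it gains at most $\varepsilon/10$. The replica-overlap term can cancel this advantage only if replicas conditioned on $[x,1]$ have overlap $\ge\frac32x$ with non-negligible probability. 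By Nishimori and concentration, that forces the free energy on $[\frac32x,1]$ to be nearly maximal too. Climbing the ladder $x_k=(3/2)^k\varepsilon$ finitely many times, the free energy away from overlap $0$ overtakes the one near $0$ within time $t_{\upperBoundTag}(\upsilon)$.

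Choosing $\upsilon$ with $t_{\upperBoundTag}(\upsilon)\le t_0$ contradicts the first step, which gives $c=\upsilon$. No RS formula or concavity statement is ever needed.
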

\begin{proof}[Proof of ~\cref{p:truncation}, assuming ~\cref{p:truncation-planted}]
  We will prove that ~\cref{p:truncation-planted}~\cref{i:truncation-planted-one-replica} implies ~\cref{p:truncation}~\cref{i:truncation-one-replica}.
  ~\cref{p:truncation-planted}~\cref{i:truncation-planted-two-replica} implies ~\cref{p:truncation}~\cref{i:truncation-two-replica} by an analogous argument.

  Let $c^{\cref{p:truncation-planted}}(d,\delta,\varepsilon)$ be the function from ~\cref{p:truncation-planted}.
  Let $c = c^{\cref{p:truncation-planted}}(d,\delta,\varepsilon)/2$.
  For any $\beta \in [2\beta_\lowerBoundTag, \betaKS - \delta]$, ~\cref{p:truncation-planted}~\cref{i:truncation-planted-one-replica} implies
  \[
    \E^\plantedTag_{\beta} \mu_{G,\beta,<\varepsilon}(|R(\sigma,\sigma^*)| \ge \varepsilon) \le e^{-2cn}\,.
  \]
  Since $\mu_{G,\beta,<\varepsilon}$ is also the posterior of $\sigma^*$ given $G$, $\sigma^*$ is exchangeable with an independent Gibbs sample $\sigma^2$ in the averaged distribution $\E^\plantedTag_{\beta} \mu_{G,\beta,<\varepsilon}$ (this is known as Nishimori's property).
  Thus,
  \begin{equation}\label{e:truncation-reduction}
    \E^\plantedTag_{\beta} \mu_{G,\beta,<\varepsilon}(|R(\sigma^1,\sigma^2)| \ge \varepsilon) \le e^{-2cn}\,.
  \end{equation}
  So,
  \begin{align*}
    \frac{
      \E[\allone\{G \not\in {\mathcal{O}}_{\beta,c}\} Z_{G,< \varepsilon}(\beta)]
    }{
      \E[Z_{G,< \varepsilon}(\beta)]
    }
    &= \frac{
      \sum_{\sigma^* \in \Sigma_n(\varepsilon)}
      \E_{G\sim \configurationModel(n,d)}[
        \allone\{G \not\in {\mathcal{O}}_{\beta,c}\}
        \exp(\beta \isingHamiltonian_{G}(\sigma^*))
      ]
    }{
      \sum_{\sigma^* \in \Sigma_n(\varepsilon)}
      \E_{G\sim \configurationModel(n,d)}[
        \exp(\beta \isingHamiltonian_{G}(\sigma^*))
      ]
    } \\
    &= \Pr^\plantedTag_{\beta}(G \not\in {\mathcal{O}}_{\beta,c})
    = \Pr^\plantedTag_{\beta}\left(
      \mu_{G,\beta,<\varepsilon}(|R(\sigma^1,\sigma^2)| \ge \varepsilon) > e^{-cn}
    \right)\,.
  \end{align*}
  This is bounded by $e^{-cn}$ by \eqref{e:truncation-reduction} and Markov's inequality.
\end{proof}
We now outline the proof of ~\cref{p:truncation-planted}, deferring formal proofs to \cref{ss:planted-boosting-concentration}--\cref{ss:planted-boosting-rigorous}.

\cite[Subsection 4.3]{CLMS22} proves a weaker form of ~\cref{p:truncation-planted}~\cref{i:truncation-planted-one-replica}, showing a bound of $o_n(1)$ in place of $e^{-cn}$.
Their proof is based on the following reduction to a broadcast problem.
The task of recovering a given spin $\sigma^*_v$ from $G$ can only be easier if we additionally reveal the spins $(\sigma^*_w : w \in G \setminus B_R(v))$, where $B_R(v)$ is the radius $R$ neighborhood of $v$ in $G$.
For a large constant $R$, the neighborhood $B_R(v)$ is typically a tree, and the resulting inference problem amounts to a broadcast process on a tree.
In this formulation, $\beta$ plays the role of the signal strength, and reconstruction is possible precisely above $\betaKS$.
Consequently, for $\beta < \betaKS$, a typical spin $\sigma^*_v$ is not recoverable, and the averaged posterior $\E^\plantedTag_{\beta} \mu_{G,\beta,<\varepsilon}$ concentrates on configurations $\sigma$ orthogonal to $\sigma^*$.

By optimizing this proof, one can improve the $o_n(1)$ bound to an explicit inverse polynomial in $n$.
However, it seems difficult to show a bound of $e^{-cn}$ by directly adapting this method, since neighborhoods of $G$ are typically treelike only up to a small logarithmic depth.

The first ingredient in the proof of ~\cref{p:truncation-planted}~\cref{i:truncation-planted-one-replica} is to adapt the method of \cite{CLMS22} in a different direction.
Specifically, we show (see ~\cref{l:planted-augmented} below) that even given $G$ \textbf{and a small constant amount of side information about $\sigma^*$}, it remains impossible to recover $\sigma^*$ with probability more than $o_n(1)$.
Formally, we define the following augmented planted model.
\begin{definition}[Augmented planted model]\label{d:augmented-planted-model}
  For $t\ge 0$, let $\nu^\plantedTag_{\beta,t}$ be the probability measure on $(\sigma^*,G,{\boldsymbol{v}})$, where $\sigma^* \in \Sigma_n(\varepsilon)$, $G$ is a $d$-regular (multi-)graph, and ${\boldsymbol{v}} \in \R^n$, with
  \begin{equation}\label{e:augmented-planted-model}
    \nu^\plantedTag_{\beta,t}(\sigma^*,G,\differential{\boldsymbol{v}})
    \propto
    \exp(\isingHamiltonian_{G,\beta,t}(\sigma^*,{\boldsymbol{v}})) \differential{\boldsymbol{v}}
    \equiv 
    \exp\left(
      \beta \isingHamiltonian_{G}(\sigma^*)
      - \frac12 \|{\boldsymbol{v}} - \sqrt{t} \sigma^*\|^2
    \right) \differential{\boldsymbol{v}}\,.
  \end{equation}
  We will use $\Pr^\plantedTag_{\beta,t}$, $\E^\plantedTag_{\beta,t}$ to denote probability and expectation with respect to $(\sigma^*,G,{\boldsymbol{v}}) \sim \nu^\plantedTag_{\beta,t}$.
\end{definition}
\begin{remark}\label{r:augmented-planted-interpretation}
  As $\int_{\R^n} \exp(-\frac12 \|{\boldsymbol{v}} - \sqrt{t}\sigma^*\|^2) \differential {\boldsymbol{v}}$ does not depend on $(\sigma^*,G)$, the marginal law of $(\sigma^*,G)$ under $\nu^\plantedTag_{\beta,t}$ is $\nu^\plantedTag_{\beta}$; contrast this with the non-uniformity of $\sigma^*$ discussed in Remark~\cref{r:non-uniform}.
  Consequently, $\nu^\plantedTag_{\beta,t}$ can be thought of as an extension of the Bayesian inference problem described in Remark~\cref{r:planted-interpretation}: we draw $(\sigma^*,G) \sim \nu^\plantedTag_{\beta}$, and then ${\boldsymbol{v}}$ conditionally independent of $G$ from
  \[
    \nu^\plantedTag_{\beta,t}({\boldsymbol{v}} | \sigma^*) \propto \exp\left(
      - \frac12 \|{\boldsymbol{v}} - \sqrt{t} \sigma^*\|^2
    \right)\,.
  \]
  Equivalently, ${\boldsymbol{v}}$ can be generated by ${\boldsymbol{v}} = \sqrt{t} \sigma^* + {\boldsymbol{g}}$ for an independent ${\boldsymbol{g}} \sim {\mathcal{N}}(0,I_n)$.
  We write
  \[
    \mu_{G,\beta,t,<\varepsilon}^{{\boldsymbol{v}}}(\sigma) 
    \propto
    \exp(\isingHamiltonian_{G,\beta,t}(\sigma,{\boldsymbol{v}}))\,, \qquad\qquad
    \sigma \in \Sigma_n(\varepsilon)
  \]
  for the posterior of $\sigma^*$ given $(G,{\boldsymbol{v}})$.
  Note that $\mu_{G,\beta,0,<\varepsilon}^{{\boldsymbol{v}}} = \mu_{G,\beta,<\varepsilon}$.
\end{remark}
For ~\cref{p:truncation-planted}~\cref{i:truncation-planted-two-replica}, we define an analogous augmented model.
\begin{definition}[Two-replica augmented planted model]\label{d:two-replica-augmented-planted-model}
  For $t\ge 0$, let $\nu^{\plantedTag,2}_{\beta,t}$ be the probability measure on $(\sigma^*,{\sigma'}^*,G,{\boldsymbol{v}})$, where $\sigma^*,{\sigma'}^* \in \Sigma_n(\varepsilon)$, $G$ is a $d$-regular (multi-)graph, and ${\boldsymbol{v}} \in \R^n$, with
  \begin{align*}
    \nu^{\plantedTag,2}_{\beta,t}(\sigma^*,{\sigma'}^*,G,\differential{\boldsymbol{v}})
    &\propto
    \allone\{|R(\sigma^*,{\sigma'}^*)| < \varepsilon\}
    \exp\left(
      \beta \isingHamiltonian_{G}(\sigma^*)
      + \beta \isingHamiltonian_{G}({\sigma'}^*)
      - \frac12 \|{\boldsymbol{v}} - \sqrt{t} \sigma^*\|^2
    \right) \differential{\boldsymbol{v}} \\
    &= \allone\{|R(\sigma^*,{\sigma'}^*)| < \varepsilon\}
    \exp\left(
      \isingHamiltonian_{G,\beta,t}(\sigma^*,{\boldsymbol{v}})
      + \beta \isingHamiltonian_{G}({\sigma'}^*)
    \right) \differential{\boldsymbol{v}} 
  \end{align*}
  We will use $\Pr^{\plantedTag,2}_{\beta,t}$, $\E^{\plantedTag,2}_{\beta,t}$ to denote probability and expectation with respect to $(\sigma^*,{\sigma'}^*,G,{\boldsymbol{v}}) \sim \nu^{\plantedTag,2}_{\beta,t}$.
  We also write
  \[
    \mu_{G,\beta,t,<\varepsilon}^{2,{\sigma'}^*,{\boldsymbol{v}}}(\sigma) 
    \propto
    \allone\{|R(\sigma,{\sigma'}^*)| < \varepsilon\}
    \exp(\isingHamiltonian_{G,\beta,t}(\sigma,{\boldsymbol{v}}))\,,
    \qquad\qquad
    \sigma \in \Sigma_n(\varepsilon)
  \]
  for the posterior of $\sigma^*$ given $({\sigma'}^*,G,{\boldsymbol{v}})$.
\end{definition}
We will prove the following lemma in Appendix~\cref{s:planted-augmented} by a direct adaptation of the method of \cite{CLMS22}.
\begin{lemma}\label{l:planted-augmented}
  For all $\varepsilon_1 > 0$, there exists $t_0 = t_0(d,\delta,\varepsilon,\varepsilon_1) > 0$ such that the following holds.
  \begin{enumerate}[label=(\alph*)]
    \item \label{i:planted-augmented-one-replica} We have
    \[
      \sup_{\beta \in [2\beta_\lowerBoundTag, \betaKS - \delta]}
      \sup_{t\in [0,t_0]}
      \E^\plantedTag_{\beta,t} \mu_{G,\beta,t,<\varepsilon}^{{\boldsymbol{v}}}(|R(\sigma,\sigma^*)| \ge \varepsilon_1) \le \varepsilon_1\,.
    \]
    \item \label{i:planted-augmented-two-replica} We have
    \[
      \sup_{\beta \in [2\beta_\lowerBoundTag, \betaKS - \delta]}
      \sup_{t\in [0,t_0]}
      \E^{\plantedTag,2}_{\beta,t} \mu_{G,\beta,t,<\varepsilon}^{2,{\sigma'}^*,{\boldsymbol{v}}}(|R(\sigma,\sigma^*)| \ge \varepsilon_1) \le \varepsilon_1\,.
    \]
  \end{enumerate}
\end{lemma}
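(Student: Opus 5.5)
We follow the strategy of \cite{CLMS22}: reduce to non-reconstruction for a broadcast process on the $(d-1)$-ary tree, now with Gaussian side information at every vertex. The plan is written for part \cref{i:planted-augmented-one-replica}; part \cref{i:planted-augmented-two-replica} runs the same way with $(\sigma'^*,G,\boldsymbol v)$ as the observation.

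\textbf{Step 1 (reduction to a second moment of the posterior mean).} By the Nishimori property, $\sigma^*$ and a posterior sample $\sigma$ are exchangeable. Hence
\[
  \E^\plantedTag_{\beta,t}\mu^{\boldsymbol v}_{G,\beta,t,<\varepsilon}\bigl(R(\sigma,\sigma^*)^2\bigr)=\frac1{n^2}\sum_{u,w}\E\bigl[\langle\sigma_u\sigma_w\rangle^2\bigr].
\]
By Chebyshev's inequality it suffices to show that this quantity is $o_n(1)$ plus a quantity that is small uniformly in $\beta$ and $t\le t_0$. The diagonal terms contribute $O(1/n)$.

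\textbf{Step 2 (pair correlations from single-site marginals).} For the off-diagonal terms, follow \cite{CLMS22}. Since the posterior is averaged over typical configurations with balanced magnetization and vanishing overlaps, one can bound $\frac1{n^2}\sum_{u\ne w}\E\langle\sigma_u\sigma_w\rangle^2$ by quantities governed by $\E\bigl[\langle\sigma_v\rangle^2\bigr]$ for a uniformly random vertex $v$, up to $o(1)$ errors. Concretely, the conditional pair correlation is controlled by conditioning on the spins outside a ball $B_R(v)$. Alternatively one can bound $\E\bigl[\langle \sigma_v\sigma_w\rangle^2\bigr]\le \E\bigl[\E[\sigma^*_v\mid G,\boldsymbol v,\sigma^*_{B_R(v)^c}]^2\bigr]$ (revealing more information only increases the squared conditional mean) together with a bound on the remaining correlation.

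\textbf{Step 3 (local tree computation).} Reveal $\sigma^*$ outside $B_R(v)$. Conditioning on more information can only increase $\E[\E[\sigma_v^*\mid\cdot]^2]$, and by \cref{l:1mt-balanced} the balance constraint of $\Sigma_n(\varepsilon)$ becomes irrelevant after this conditioning. With probability $1-O((d-1)^{2R}/n)$ the ball $B_R(v)$ is a tree, so the problem becomes reconstruction of the root of the $d$-regular tree of depth $R$. The root is observed through a broadcast channel with flip probability $p=\frac{e^\beta}{1+e^\beta}$, since the planted SBM law of \eqref{e:planted-model-G-dist-given-x} is locally this broadcast. In addition, each internal vertex is observed through an independent Gaussian channel of SNR $t$.

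\textbf{Step 4 (main obstacle: robust non-reconstruction with side information).} We must show that
\[
  \limsup_{R\to\infty}\sup_{\beta\le\betaKS-\delta}\E\bigl[\E[\sigma_\rho\mid\text{leaves},\boldsymbol v_{\text{tree}}]^2\bigr]\le h(t),
\]
with $h(t)\to 0$ as $t\to0$. For the Ising channel, the Kesten--Stigum bound is tight: with $\theta=\tanh(\beta/2)$ and $(d-1)\theta^2\le 1-\delta'$, the magnetization recursion contracts. The plan is the standard linearization of the belief-propagation recursion for $x_u=\E[\sigma_u\mid\text{subtree}]^2$, namely
\[
  \E x_\rho\le \E\bigl[\text{local }t\text{-term}\bigr]+(d-1)\theta^2\,\E x_{\text{child}}+O\bigl((\E x)^2\bigr),
\]
using the concavity and contraction estimates for the Ising channel of \cite{EKPS00}. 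Gaussian side information at SNR $t$ contributes a term $O(t)$ at each level. Summing the geometric series gives
\[
  \E x_\rho\lesssim \frac{t}{\delta'}+(1-\delta')^R.
\]
The difficulty is making this contraction uniform, and valid for the full quadratic recursion rather than only near $0$. For that we would try the Sly/Mossel-type bound $\E x\le$ mutual information, together with the fact that the $t$-channel adds a bounded amount of information per vertex. Choosing $R$ large and then $t_0$ small (both depending on $\varepsilon_1,\delta$), and sending $n\to\infty$, completes the argument.
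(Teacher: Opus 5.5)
Your reduction in Steps 1--3 is sound. The second (``alternatively'') version of Step 2 is in fact leaner than the paper's. Condition on $\mathcal F=(G,\boldsymbol v,\sigma^*_{B_R(v)^c})$. For $w\notin B_R(v)$, $\sigma^*_w$ is determined by $\mathcal F$, so
\[
  \E\langle\sigma_v\sigma_w\rangle^2\le\E\bigl[\E[\sigma^*_v\mid\mathcal F]^2\bigr]
\]
holds exactly, with no ``remaining correlation'' to control. Pairs with $w\in B_R(v)$ form an $O((d-1)^R/n)$ fraction and can be bounded trivially. The paper instead conditions on the complement of two balls, bounds $|\mu(\sigma_u\sigma_w)|$ via the tower property and Nishimori, and ends with $(\E|\E[\sigma_o\mid\cdot]|)^2$ on two independent trees. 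Your version needs only one tree.

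Your first suggestion in Step 2, via $\E\langle\sigma_v\rangle^2$, cannot work. At $t=0$ the posterior is invariant under $\sigma\mapsto-\sigma$, so $\langle\sigma_v\rangle\equiv0$. Also, tree-likeness of $B_R(v)$ and the broadcast local limit must be proved under the planted law, not the configuration model. The paper does this in \cref{l:broadcast-tv-error} by a BFS coupling after concentrating the type counts.

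The genuine gap is Step 4. It is the entire analytic content of the lemma (the paper's \cref{l:broadcast}), and you name its obstacle without overcoming it.

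\emph{Why the linearization fails.} The leaf spins are revealed, so $x=1$ (the field is $\pm\infty$) at depth $R$. You are never in the small-$x$ regime where the $O(x^2)$ remainder is negligible.

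\emph{Why continuity in $t$ fails.} The Gaussian observations sit at exponentially many vertices, so ``a bounded amount of information per vertex'' is not enough.

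\emph{What is needed} is a contraction inequality with no smallness assumption. The paper uses the one from \cite{BRZ95}. Let $h_u$ be the half log-likelihood ratio of $\sigma_u$ given its subtree data. Then $h_u=\sqrt t\,v_u+\tilde h_u$ with $\tilde h_u=-\sum_{w\in S(u)}\arctanh(\theta\tanh h_w)$ and $\theta=\tanh(\beta/2)$, and under the broadcast law
\[
  \E\sinh^2(\tilde h_u)\le(d-1)\theta^2\,\E\sinh^2(h_w).
\]
\begin{itemize}
\item Averaging over the independent $Z_u$ in $v_u=\sqrt t\sigma_u+Z_u$ gives $x_r\le e^{4t}(d-1)\theta^2x_{r+1}+(e^{4t}-1)$ for $x_r=\E\sinh^2(h_u)$, $u\in V_r$.
\item The infinite leaf fields are avoided by starting at depth $R-1$. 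There $|\tilde h_u|\le(d-1)|\beta|/2$, so $x_{R-1}$ is bounded.
\item Iterating with $e^{4t}(d-1)\theta^2\le1-\iota$ gives $x_0\lesssim(e^{4t}-1)/\iota+(1-\iota)^R$.
\item Finally, $\E\bigl[\E[\sigma_o\mid\cdot]^2\bigr]=\E\tanh^2(h_o)\le x_0$.
\end{itemize}

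Your mutual-information idea could also close the argument. It would need a per-edge strong data-processing contraction (Evans--Schulman type), under which the observation at a vertex of depth $r$ is discounted by $\theta^{2r}$. The total is then a geometric series $\sum_r d(d-1)^{r-1}\theta^{2r}$ times a quantity that vanishes as $t\to0$. That path-contraction ingredient is exactly what your sketch lacks.
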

The second ingredient in the proof of ~\cref{p:truncation-planted} is ~\cref{p:planted-boosting} below.
This states that if $\E^\plantedTag_{\beta} \mu_{G,\beta,<\varepsilon}$ places more than $e^{-cn}$ probability on overlaps away from $0$, for a small $c > 0$, then with a small amount of side information $t$, $\E^\plantedTag_{\beta,t} \mu_{G,\beta,t,<\varepsilon}^{{\boldsymbol{v}}}$ concentrates on overlaps away from $0$ (and similarly for $\E^{\plantedTag,2}_{\beta} \mu_{G,\beta,<\varepsilon}^{2,{\sigma'}^*}$ and $\E^{\plantedTag,2}_{\beta,t} \mu_{G,\beta,t,<\varepsilon}^{2,{\sigma'}^*,{\boldsymbol{v}}}$).
\begin{proposition}\label{p:planted-boosting}
  For any $\upsilon > 0$, there exists $t_\upperBoundTag = t_\upperBoundTag(d,\varepsilon,\upsilon) > 0$ such that the following holds for all $\beta \in [2\beta_\lowerBoundTag, \betaKS - \delta]$.
  \begin{enumerate}[label=(\alph*)]
    \item\label{i:planted-boosting-one-replica} If 
    \begin{equation}\label{e:planted-boosting-hypothesis}
      \E^\plantedTag_{\beta} \mu_{G,\beta,<\varepsilon}(|R(\sigma,\sigma^*)| \ge \varepsilon) \ge e^{-\upsilon n}\,,
    \end{equation}
    then there exists $t \in [0, t_\upperBoundTag]$ such that
    \begin{equation}\label{e:planted-boosting-conclusion}
      \E^\plantedTag_{\beta,t} \mu_{G,\beta,t,<\varepsilon}^{{\boldsymbol{v}}}(|R(\sigma,\sigma^*)| \ge \varepsilon/10) = 1 - o_n(1)\,.
    \end{equation}
    \item\label{i:planted-boosting-two-replica} If
    \begin{equation}\label{e:planted-boosting-hypothesis-two-replica}
      \E^{\plantedTag,2}_{\beta} \mu_{G,\beta,<\varepsilon}^{2,{\sigma'}^*}(|R(\sigma,\sigma^*)| \ge \varepsilon) \ge e^{-\upsilon n}\,,
    \end{equation}
    then there exists $t \in [0, t_\upperBoundTag]$ such that
    \begin{equation}\label{e:planted-boosting-conclusion-two-replica}
      \E^{\plantedTag,2}_{\beta,t} \mu_{G,\beta,t,<\varepsilon}^{2,{\sigma'}^*,{\boldsymbol{v}}}(|R(\sigma,\sigma^*)| \ge \varepsilon/10) = 1 - o_n(1)\,.
    \end{equation}
  \end{enumerate}
  Finally, the function $t_\upperBoundTag$ satisfies, for any fixed $d,\varepsilon$,
  \[
    \lim_{\upsilon\downarrow 0} t_\upperBoundTag(d,\varepsilon,\upsilon) = 0\,.
  \]
\end{proposition}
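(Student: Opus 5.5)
The plan is an information-theoretic \emph{boosting} argument. It combines the convexity of the free energy of the Gaussian side channel with the Nishimori identity, and it is carried out for part \cref{i:planted-boosting-one-replica}; part \cref{i:planted-boosting-two-replica} is identical with $\mu_{G,\beta,<\varepsilon}$ replaced by $\mu^{2,{\sigma'}^*}_{G,\beta,<\varepsilon}$ and ${\sigma'}^*$ treated as part of the observation. For $t\ge 0$ consider the normalized partition function
\[
  Z_t(G,{\boldsymbol{v}}) = \sum_{\sigma\in\Sigma_n(\varepsilon)} \exp\Big(\beta\isingHamiltonian_G(\sigma) + \sqrt t\langle {\boldsymbol{v}},\sigma\rangle - \tfrac{tn}{2}\Big),
  \qquad
  \phi_n(t) = \tfrac1n \E^\plantedTag_{\beta,t}\log \frac{Z_t}{Z_0},
\]
with ${\boldsymbol{v}} = \sqrt t\sigma^*+{\boldsymbol{g}}$. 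Gaussian integration by parts together with the Nishimori property (the posterior is $\mu^{{\boldsymbol{v}}}_{G,\beta,t,<\varepsilon}$ and $\sigma^*$ is exchangeable with a posterior sample) gives
\[
  \phi_n'(t) = \tfrac12\E^\plantedTag_{\beta,t}\langle R(\sigma,\sigma^*)\rangle,
\]
and $\phi_n$ is convex in $t$, being a mutual-information type quantity.

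Next I would lower bound $\phi_n(t)$ from the hypothesis. First, the spin-flip symmetry $\sigma\mapsto-\sigma$ preserves $\isingHamiltonian_G$ and $\Sigma_n(\varepsilon)$, so \eqref{e:planted-boosting-hypothesis} yields mass at least $\tfrac12 e^{-\upsilon n}$ on $A=\{R(\sigma,\sigma^*)\ge \varepsilon\}$. Restricting the sum defining $Z_t$ to $A$ and applying Jensen's inequality over $\mu_{G,\beta,<\varepsilon}(\cdot\mid A)$ gives
\[
  \tfrac1n\log\tfrac{Z_t}{Z_0} \;\ge\; \tfrac1n\log\mu_{G,\beta,<\varepsilon}(A) + t\varepsilon - \tfrac t2 + \tfrac{\sqrt t}{n}\langle {\boldsymbol{g}},\langle\sigma\rangle_A\rangle .
\]
The Gaussian term is $O(\sqrt{t/n})$ in expectation. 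For comparison, the ``uninformative'' value is $\phi_n^{0}(t) = -\tfrac t2+o(1)$, which is exactly what one gets if the posterior overlap stays $\approx 0$, since then $\phi_n'\approx 0$ after centering. Hence if $\tfrac1n\log\mu(A)\ge -2\upsilon$ with nonnegligible probability, then for $t_\upperBoundTag = 10\upsilon/\varepsilon$ we get
\[
  \phi_n(t_\upperBoundTag)-\phi_n(0) \ge t_\upperBoundTag\varepsilon/2 .
\]
By convexity this forces $\phi_n'(t)\ge \varepsilon/4$, i.e.\ $\E\langle R\rangle\ge\varepsilon/2$, on a set of $t\in[0,t_\upperBoundTag]$ of positive measure. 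Clearly $t_\upperBoundTag\to0$ as $\upsilon\to0$, as the statement requires.

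To pass from an averaged overlap $\ge\varepsilon/2$ to posterior mass $1-o_n(1)$ on $\{|R|\ge\varepsilon/10\}$, I would use overlap concentration for Gaussian channels. Concentration of $\tfrac1n\log Z_t$ holds with $O(n^{-1/2})$ fluctuations, by bounded differences in the matching and Gaussian Lipschitz concentration in ${\boldsymbol{g}}$. Combined with convexity of $\phi_n$, this shows that for almost every $t$ in the interval the quantity $\tfrac1n\langle\sigma,{\boldsymbol{v}}\rangle/\sqrt t$, and hence $R(\sigma,\sigma^*)$ via Nishimori, concentrates under $\E\mu^{{\boldsymbol{v}}}_{G,\beta,t,<\varepsilon}$ around its mean (the standard Panchenko/Barbier--Macris argument). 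Choosing such a $t$ with $\phi_n'(t)\ge\varepsilon/4$ then yields \eqref{e:planted-boosting-conclusion}. A cleaner alternative is a dichotomy: the subgaussian tails of \cref{l:1mt-balanced}~\cref{i:1mt-balanced-subgaussian} combined with concentration rule out a mixture of overlaps near $0$ and $\ge\varepsilon/10$ with non-vanishing weights.

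The main obstacle is the first step of the lower bound. The hypothesis only controls the \emph{expectation} $\E\mu(A)\ge e^{-\upsilon n}$, whereas the Jensen bound needs $\tfrac1n\log\mu(A)\gtrsim-\upsilon$ on a non-negligible event. I would first try the decomposition $\E\mu(A)\le e^{-2\upsilon n}+\Pr(\mu(A)\ge e^{-2\upsilon n})$, which gives probability at least $\tfrac12 e^{-\upsilon n}$ for this event, and then upgrade it using concentration of $\tfrac1n\log Z_{t}(A)$ and $\tfrac1n\log Z_t$. These are Lipschitz under switchings of the matching, so they concentrate at scale $O(\sqrt{\upsilon})$ with probability $1-e^{-\Omega(\upsilon n)}$ after tuning constants. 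That should make $\tfrac1n\log\mu(A)\ge -C\sqrt\upsilon$ typical, at the cost of replacing $\upsilon$ by $C\sqrt\upsilon$ in the choice of $t_\upperBoundTag$, which still tends to $0$.
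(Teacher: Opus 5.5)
\textbf{There is a genuine gap in your growth lower bound.} This, and not the expectation-to-typical conversion, is the central difficulty of the proposition.

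In your own normalization, $\phi_n(0)=0$ and $\phi_n'(t)=\tfrac12\E\langle R(\sigma,\sigma^*)\rangle=\tfrac1{2n}\E\|\langle\sigma\rangle\|^2\ge 0$. Equivalently, $\phi_n(t)=\tfrac t2-\tfrac1n I(\sigma^*;{\boldsymbol{v}}\mid G)$. So the uninformative value is $\phi_n\approx 0$, not $-t/2$. Your Jensen step only gives
\[
  \phi_n(t)\ \ge\ \tfrac1n\E\log\mu_{G,\beta,<\varepsilon}(A)-t\bigl(\tfrac12-\varepsilon\bigr)-O\bigl(\sqrt{t/n}\bigr).
\]
For $\varepsilon<\tfrac12$ this is weaker than the trivial bound $\phi_n(t)\ge 0$. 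The inequality $\phi_n(t_\upperBoundTag)-\phi_n(0)\ge t_\upperBoundTag\varepsilon/2$ came from comparing with $-t/2$ rather than with $\phi_n(0)=0$.

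The $-t/2$ is an artifact of Jensen discarding the Gaussian-fluctuation gain $\frac1{2n\sqrt t}\E\langle\langle{\boldsymbol{g}},\sigma\rangle\rangle_A=\tfrac12\bigl(1-\E\langle R(\sigma^1,\sigma^2)\rangle_A\bigr)$. Computed exactly, by Gaussian integration by parts with $(\sigma^*,G)$ fixed,
\[
  \frac{d}{dt}\,\frac1n\E\log Z_t(A)=\E\langle R(\sigma,\sigma^*)\rangle_A-\tfrac12\E\langle R(\sigma^1,\sigma^2)\rangle_A ,
\]
where $\langle\cdot\rangle_A$ is the time-$t$ posterior conditioned on $A$. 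If this conditioned posterior has large self-overlap (e.g.\ it is condensed), the rate can be as small as $\E\langle R(\sigma,\sigma^*)\rangle_A-\tfrac12$, possibly negative. In that case a window of mass $e^{-\upsilon n}$ at overlap $\ge\varepsilon$ need not overtake the bulk. No choice of $t_\upperBoundTag$ repairs this; you need an argument that controls the self-overlap term.

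Supplying that control is the heart of the paper's proof. It compares $\E F(I_*)$ with $\E F(I_k)$, where $I_*=(-\varepsilon/10,\varepsilon/10)$, $I_k=[x_k,1]$ and $x_k=(3/2)^k\varepsilon$. It uses \cref{l:fe-deriv}, in which $R(\sigma,\sigma^*)$ enters with coefficient $2$ and $R(\sigma^1,\sigma^2)$ with coefficient $1$. There are two cases:
\begin{itemize}
\item If the conditional self-overlap on $I_k$ is essentially below $\tfrac32 x_k$, the paper's $\phi_k$ decreases at rate $\varepsilon/10$.
\item Otherwise, Nishimori gives $\E\mu(R(\sigma^1,\sigma^2)\in I_{k+1})=\E\mu(R(\sigma,\sigma^*)\in I_{k+1})$. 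Combined with free-energy concentration, this shows the higher window $I_{k+1}$ is nearly as heavy as $I_k$, so one moves up the ladder (\cref{l:boosting-core-prep}, \cref{p:boosting-core}).
\end{itemize}
After at most $k_\upperBoundTag$ climbs, $\E F([-1,1])\ge\E F(I_*)+\upsilon$. Concentration then turns this linear gap directly into posterior mass $1-o_n(1)$ off $I_*$, so the overlap-concentration step you planned is not needed. The blow-up $\upsilon\mapsto C\sqrt{\upsilon}$ at each climb is why $t_\upperBoundTag$ decays like $\upsilon^{(1/2)^{k_\upperBoundTag}}$ rather than like $10\upsilon/\varepsilon$.

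Your handling of what you call the main obstacle is essentially \cref{l:boosting-init}, with one omission. Bounded differences in the matching only concentrate $\frac1n\log Z_t(A)$ around its conditional mean given $\sigma^*$. The paper additionally needs the Lipschitz dependence on $\sigma^*$ (\cref{p:fe-conc-conditional-mean-lipschitz}) and the subgaussian balance of $\sigma^*$ from \cref{l:1mt-balanced}.
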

~\cref{p:planted-boosting} is the main technical contribution of this section.
We provide an overview of its proof in \cref{ss:planted-boosting-overview}, followed by a rigorous proof in \cref{ss:planted-boosting-concentration}--\cref{ss:planted-boosting-rigorous}.
From the above, ~\cref{p:truncation-planted} follows easily.
\begin{proof}[Proof of ~\cref{p:truncation-planted}, assuming ~\cref{l:planted-augmented} and ~\cref{p:planted-boosting}]
  We will prove that ~\cref{l:planted-augmented}~\cref{i:planted-augmented-one-replica} and ~\cref{p:planted-boosting}~\cref{i:planted-boosting-one-replica} imply ~\cref{p:truncation-planted}~\cref{i:truncation-planted-one-replica}.
  ~\cref{l:planted-augmented}~\cref{i:planted-augmented-two-replica} and ~\cref{p:planted-boosting}~\cref{i:planted-boosting-two-replica} imply ~\cref{p:truncation-planted}~\cref{i:truncation-planted-two-replica} by an analogous argument.

  Let $t_0 = t_0(d,\delta,\varepsilon,\varepsilon/10)$ be given by ~\cref{l:planted-augmented}. 
  Let $t_\upperBoundTag$ be the function given by ~\cref{p:planted-boosting}, and let $\upsilon > 0$ be small enough that $t_\upperBoundTag(d,\varepsilon,\upsilon) \le t_0$.
  We will show the conclusion holds with $c = \upsilon$.
  Suppose for contradiction that there exists $\beta \in [2\beta_\lowerBoundTag, \betaKS - \delta]$ such that
  \[
    \E^\plantedTag_{\beta} \mu_{G,\beta,<\varepsilon} \left(|R(\sigma,\sigma^*)| \ge \varepsilon\right) > e^{-\upsilon n}\,.
  \]
  Applying ~\cref{p:planted-boosting}~\cref{i:planted-boosting-one-replica} with this $\upsilon$ yields that for some $t \le t_\upperBoundTag(d,\varepsilon,\upsilon) \le t_0$, we have
  \[
    \E^\plantedTag_{\beta,t} \mu_{G,\beta,t,<\varepsilon}^{{\boldsymbol{v}}}(|R(\sigma,\sigma^*)| \ge \varepsilon/10) = 1 - o_n(1)\,.
  \]
  However, ~\cref{l:planted-augmented}~\cref{i:planted-augmented-one-replica} implies that
  \[
    \E^\plantedTag_{\beta,t} \mu_{G,\beta,t,<\varepsilon}^{{\boldsymbol{v}}}(|R(\sigma,\sigma^*)| \ge \varepsilon/10) \le \varepsilon/10\,.
  \]
  Since ~\cref{l:basic-mt} provides $\varepsilon_0 < 1$, this bound is $\le 1/10$, which contradicts the previous display.
\end{proof}

\subsection{Strong overlap bound from side information: overview}
\label{ss:planted-boosting-overview}

In this subsection we outline the main ideas of the proof of ~\cref{p:planted-boosting}.
To simplify the presentation we focus on part \cref{i:planted-boosting-one-replica}.
For an interval $I \subseteq [-1,1]$, define
\[
  \Sigma_n(\sigma^*,I) = \left\{
    \sigma \in \Sigma_n(\varepsilon) : R(\sigma,\sigma^*) \in I
  \right\}\,,
\]
and the restricted partition function and free energy
\begin{align}\label{e:restricted-Z-and-F-to-I}
  Z_{G,\beta,t}^{\sigma^*,{\boldsymbol{v}}}(I)
  &= \sum_{\sigma \in \Sigma_n(\sigma^*,I)}
  \exp(\isingHamiltonian_{G,\beta,t}(\sigma,{\boldsymbol{v}}))\,, &
  F_{n,\beta,t}^{\sigma^*,G,{\boldsymbol{v}}}(I)
  &= \frac{1}{n} \log Z_{G,\beta,t}^{\sigma^*,{\boldsymbol{v}}}(I)\,,
\end{align}
where $\isingHamiltonian_{G,\beta,t}$ is defined in \eqref{e:augmented-planted-model}.
All the intervals $I$ we consider will include a feasible value of $R(\sigma,\sigma^*)$, so that these are well-defined.
Roughly, the hypothesis \eqref{e:planted-boosting-hypothesis} of ~\cref{p:planted-boosting}~\cref{i:planted-boosting-one-replica} states that the planted model with parameters $(\beta,0)$ puts nearly as much free energy outside $(-\varepsilon,\varepsilon)$ as within it:
\begin{equation}\label{e:fe-time-zero}
  \E^\plantedTag_{\beta,0} F_{n,\beta,0}^{\sigma^*,G,{\boldsymbol{v}}}([-1,1] \setminus (-\varepsilon,\varepsilon))
  \ge \E^\plantedTag_{\beta,0} F_{n,\beta,0}^{\sigma^*,G,{\boldsymbol{v}}}((-\varepsilon,\varepsilon)) - \upsilon\,.
\end{equation}
While this is not exactly equivalent to \eqref{e:planted-boosting-hypothesis}, which is formulated in terms of the Gibbs measure, we will see that it is equivalent up to some adjustment in $\upsilon$ because the free energies $F_{n,\beta,t}^{\sigma^*,G,{\boldsymbol{v}}}(I)$ concentrate well (see Lemmas~\cref{l:planted-fe-conc} and \cref{l:boosting-init}).
To show the conclusion \eqref{e:planted-boosting-conclusion} of ~\cref{p:planted-boosting}~\cref{i:planted-boosting-one-replica}, it similarly suffices to show that in the planted model with parameters $(\beta,t)$, the free energy restricted to $(-\varepsilon/10,\varepsilon/10)$ is smaller than the overall free energy:
\begin{equation}\label{e:fe-overtake}
  \E^\plantedTag_{\beta,t} F_{n,\beta,t}^{\sigma^*,G,{\boldsymbol{v}}}([-1,1])
  \ge \E^\plantedTag_{\beta,t} F_{n,\beta,t}^{\sigma^*,G,{\boldsymbol{v}}}((-\varepsilon/10,\varepsilon/10)) + \upsilon\,.
\end{equation}
In words, we will show that adding a small amount of side information causes the free energy on overlaps in $[\varepsilon,1]$ to grow faster than in $(-\varepsilon/10,\varepsilon/10)$.
Formally, let us define the Gibbs measure associated to $Z_{G,\beta,t}^{\sigma^*,{\boldsymbol{v}}}(I)$ by
\[
  \mu_{G,\beta,t,<\varepsilon}^{{\boldsymbol{v}}}(\sigma \,|\, I)
  = \frac{1}{Z_{G,\beta,t}^{\sigma^*,{\boldsymbol{v}}}(I)} \allone\{\sigma \in \Sigma_n(\sigma^*,I)\}
  \exp(\isingHamiltonian_{G,\beta,t}(\sigma,{\boldsymbol{v}}))\,.
\]
As the notation suggests, $\sigma \sim \mu_{G,\beta,t,<\varepsilon}^{{\boldsymbol{v}}}(\cdot \,|\, I)$ is distributed as $\sigma \sim \mu_{G,\beta,t,<\varepsilon}^{{\boldsymbol{v}}}$ conditioned on $\sigma \in \Sigma_n(\sigma^*,I)$.
By slight abuse of notation, for a measurable function $f$ of $k$ replicas $\sigma^1,\ldots,\sigma^k$, we write 
\begin{equation}\label{e:gibbs-avg-of-function}
  \mu_{G,\beta,t,<\varepsilon}^{{\boldsymbol{v}}}(f \,|\, I) = \E\left(
    f(\sigma^1,\ldots,\sigma^k) : \sigma^i \stackrel{IID}{\sim} \mu_{G,\beta,t,<\varepsilon}^{{\boldsymbol{v}}}(\cdot \,|\, I)
  \right)\,,
\end{equation}
and write $\sigma$ for $\sigma^1$ if $k=1$.
The following lemma is proved by a standard gaussian integration by parts calculation, and controls the derivatives of the restricted free energies.
We provide its proof at the end of this subsection.
\begin{lemma}\label{l:fe-deriv}
  For any $\beta \in \R$, $t > 0$, and interval $I \subseteq [-1,1]$, we have
  \[
    \frac{\differential}{\differential t} \E^\plantedTag_{\beta,t} F_{n,\beta,t}^{\sigma^*,G,{\boldsymbol{v}}}(I)
    = \frac12 \left(
      -1 + 2\E^\plantedTag_{\beta,t} \mu_{G,\beta,t,<\varepsilon}^{{\boldsymbol{v}}}(R(\sigma,\sigma^*) \,|\, I)
      - \E^\plantedTag_{\beta,t} \mu_{G,\beta,t,<\varepsilon}^{{\boldsymbol{v}}}(R(\sigma^1,\sigma^2) \,|\, I)
    \right)\,.
  \]
\end{lemma}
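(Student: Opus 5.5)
The approach is Gaussian integration by parts with ${\boldsymbol{v}}$ written as ${\boldsymbol{v}} = \sqrt{t}\sigma^* + {\boldsymbol{g}}$, followed by the Nishimori property to eliminate $\sigma^*$-dependent Gibbs averages where needed. Since the marginal law of $(\sigma^*,G)$ under $\nu^\plantedTag_{\beta,t}$ is $\nu^\plantedTag_{\beta}$, independent of $t$ (Remark~\cref{r:augmented-planted-interpretation}), and ${\boldsymbol{g}}\sim{\mathcal{N}}(0,I_n)$ is independent of $(\sigma^*,G)$, the expectation $\E^\plantedTag_{\beta,t}$ becomes an expectation over a $t$-independent law. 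All $t$-dependence then sits inside the integrand
\[
  \isingHamiltonian_{G,\beta,t}(\sigma,{\boldsymbol{v}}) = \beta\isingHamiltonian_{G}(\sigma) - \tfrac12\|{\boldsymbol{g}} + \sqrt t(\sigma^*-\sigma)\|^2 .
\]
Expanding the square gives $-\tfrac12\|{\boldsymbol{g}}\|^2 - \sqrt t\langle {\boldsymbol{g}},\sigma^*-\sigma\rangle - \tfrac t2\|\sigma^*-\sigma\|^2$. Using $\|\sigma^*-\sigma\|^2 = 2n - 2\langle\sigma,\sigma^*\rangle$, the exponent is, up to the $\sigma$-independent terms $-\tfrac12\|{\boldsymbol{g}}\|^2 - \sqrt t\langle{\boldsymbol{g}},\sigma^*\rangle - tn$,
\[
  \beta\isingHamiltonian_G(\sigma) + \sqrt t\langle{\boldsymbol{g}},\sigma\rangle + t\langle\sigma,\sigma^*\rangle .
\]

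Next I differentiate $\frac1n\log Z$ in $t$ inside the expectation. This is justified because $\Sigma_n$ is finite and the Gaussian moments make dominated convergence trivial. The constant terms contribute $\frac1n\E[-\frac{1}{2\sqrt t}\langle{\boldsymbol{g}},\sigma^*\rangle - n] = -1$, since ${\boldsymbol{g}}$ is independent of $\sigma^*$ with mean zero. The $\sigma$-dependent part contributes
\[
  \frac1n \E\,\mu(\tfrac{1}{2\sqrt t}\langle{\boldsymbol{g}},\sigma\rangle + \langle\sigma,\sigma^*\rangle \,|\, I) .
\]
The overlap term gives $\E\mu(R(\sigma,\sigma^*)\,|\,I)$ directly. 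For the ${\boldsymbol{g}}$ term I apply Gaussian integration by parts coordinatewise:
\[
  \E[g_i\,\mu(\sigma_i\,|\,I)] = \E[\partial_{g_i}\mu(\sigma_i\,|\,I)] = \sqrt t\,\E[\mu(\sigma_i^2\,|\,I) - \mu(\sigma_i\,|\,I)^2] = \sqrt t\,(1 - \E\mu(\sigma^1_i\sigma^2_i\,|\,I)) .
\]
The conditioning on $I$ is harmless here: the restriction is to a $\sigma$-set depending only on $\sigma^*$, not on ${\boldsymbol{g}}$, so the derivative of the restricted Gibbs average is still the restricted covariance. Summing over $i$ and dividing by $2\sqrt t\, n$ gives $\tfrac12(1 - \E\mu(R(\sigma^1,\sigma^2)\,|\,I))$.

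Collecting the terms, the derivative equals
\[
  -1 + \tfrac12 + \E\mu(R(\sigma,\sigma^*)\,|\,I) - \tfrac12\E\mu(R(\sigma^1,\sigma^2)\,|\,I) ,
\]
which is exactly the claimed expression. I do not need Nishimori, since the formula keeps $R(\sigma,\sigma^*)$ explicitly. The only point requiring care, and the main (mild) obstacle, is the constant bookkeeping. One must confirm that the normalization in the definition of $F$ keeps the $-\tfrac12\|{\boldsymbol{g}}+\sqrt t\sigma^*\|^2$ term, so that its $t$-derivative yields the $-1$. One must also check that the mean-zero cross term $\langle{\boldsymbol{g}},\sigma^*\rangle$ vanishes in expectation rather than via integration by parts against the Gibbs measure.
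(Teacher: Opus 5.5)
Your proposal is correct and follows essentially the same route as the paper: substitute ${\boldsymbol{v}}=\sqrt t\,\sigma^*+{\boldsymbol{g}}$, expand the square to isolate $t\langle\sigma,\sigma^*\rangle$ and the $\sqrt t\,\langle\cdot,{\boldsymbol{g}}\rangle$ term, differentiate in $t$, and evaluate the ${\boldsymbol{g}}$-term by Gaussian integration by parts, with no Nishimori step needed. The only cosmetic difference is bookkeeping. The paper keeps $\sqrt t\,\langle\sigma-\sigma^*,{\boldsymbol{g}}\rangle$ inside the logarithm and integrates by parts against $\sigma-\sigma^*$. You instead pull $-\sqrt t\,\langle{\boldsymbol{g}},\sigma^*\rangle$ out as a mean-zero constant and integrate by parts against $\sigma$ alone; both routes give the same $-1+\tfrac12$ accounting.
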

When $I = (-\varepsilon/10,\varepsilon/10)$, this derivative can be upper bounded by
\[
  \frac{\differential}{\differential t} \E^\plantedTag_{\beta,t} F_{n,\beta,t}^{\sigma^*,G,{\boldsymbol{v}}}((-\varepsilon/10,\varepsilon/10))
  \le \frac12 (-1 + \varepsilon / 5)\,,
\]
as clearly
\[
  \mu_{G,\beta,t,<\varepsilon}^{{\boldsymbol{v}}}(R(\sigma,\sigma^*) \,|\, (-\varepsilon/10,\varepsilon/10)) \le \varepsilon/10\,,
\]
while the term involving $R(\sigma^1,\sigma^2)$ is nonnegative (see Fact~\cref{f:replica-overlap-nonnegative}).
For $I \subseteq [\varepsilon,1]$, the term
\[
  2\mu_{G,\beta,t,<\varepsilon}^{{\boldsymbol{v}}}(R(\sigma,\sigma^*) \,|\, I)
  \ge 2\min(I)
\]
from ~\cref{l:fe-deriv} provides a significant positive contribution.
So, $\E^\plantedTag_{\beta,t} F_{n,\beta,t}^{\sigma^*,G,{\boldsymbol{v}}}(I)$ will grow faster than $\E^\plantedTag_{\beta,t} F_{n,\beta,t}^{\sigma^*,G,{\boldsymbol{v}}}((-\varepsilon/10,\varepsilon/10))$, \textbf{unless} the term involving $R(\sigma^1,\sigma^2)$ in ~\cref{l:fe-deriv} contributes enough to cancel the contribution from the $R(\sigma,\sigma^*)$ term.
For this to occur we must have
\[
  \E^\plantedTag_{\beta,t} \mu_{G,\beta,t,<\varepsilon}^{{\boldsymbol{v}}}\left(
    R(\sigma^1,\sigma^2) \ge \frac32 \min(I) \,\Big|\, I
  \right) = \Omega(1)\,.
\]
(The number $\frac32$ is arbitrary here, and any constant in $(1,2)$ will suffice --- the main point is that the $R(\sigma,\sigma^*)$ term in ~\cref{l:fe-deriv} has a coefficient of $2$ while the $R(\sigma^1,\sigma^2)$ term doesn't, so the $R(\sigma^1,\sigma^2)$ term needs to be bigger by a constant factor to cancel the $R(\sigma,\sigma^*)$ term.) However,
\begin{equation}
  \label{e:planted-boosting-conditional-probability}
  \E^\plantedTag_{\beta,t} \mu_{G,\beta,t,<\varepsilon}^{{\boldsymbol{v}}}\left(
    R(\sigma^1,\sigma^2) \ge \frac32 \min(I) \,\Big|\, I
  \right)
  \le \E^\plantedTag_{\beta,t} \frac{
    \mu_{G,\beta,t,<\varepsilon}^{{\boldsymbol{v}}}(R(\sigma^1,\sigma^2) \ge \frac32 \min(I))
  }{
    \mu_{G,\beta,t,<\varepsilon}^{{\boldsymbol{v}}}(R(\sigma,\sigma^*) \in I)^2
  }\,,
\end{equation}
and the numerator can be controlled in expectation by Nishimori's property:
\begin{equation}\label{e:planted-boosting-nishimori}
  \E^\plantedTag_{\beta,t}
  \mu_{G,\beta,t,<\varepsilon}^{{\boldsymbol{v}}}\left(
    R(\sigma^1,\sigma^2) \ge \frac32 \min(I)
  \right)
  = \E^\plantedTag_{\beta,t}
  \mu_{G,\beta,t,<\varepsilon}^{{\boldsymbol{v}}}\left(
    R(\sigma,\sigma^*) \ge \frac32 \min(I)
  \right)\,.
\end{equation}
By translating from Gibbs probabilities to free energies, the following claim can be made precise.
\begin{claim}[Informal, see ~\cref{p:boosting-core}]\label{cl:boosting-strategy}
  Suppose that for $t > 0$, small $\upsilon' > 0$, and $I$ of the form $I = [x,1]$ where $x\ge \varepsilon$,
  \begin{equation}\label{e:fe-nearly-overtake}
    \E^\plantedTag_{\beta,t} F_{n,\beta,t}^{\sigma^*,G,{\boldsymbol{v}}}((-\varepsilon/10,\varepsilon/10))
    - \E^\plantedTag_{\beta,t} F_{n,\beta,t}^{\sigma^*,G,{\boldsymbol{v}}}(I)
    \le \upsilon'\,,
  \end{equation}
  and that \eqref{e:fe-overtake} does not already hold for this $t$.
  Then, either:
  \begin{enumerate}[label=(\roman*)]
    \item \label{i:informal-fe-make-progress} We have
    \[
      \frac{\differential}{\differential t} \left(
        \E^\plantedTag_{\beta,t} F_{n,\beta,t}^{\sigma^*,G,{\boldsymbol{v}}}((-\varepsilon/10,\varepsilon/10))
        - \E^\plantedTag_{\beta,t} F_{n,\beta,t}^{\sigma^*,G,{\boldsymbol{v}}}(I)
      \right) \le -\varepsilon/10\,,
    \]
    in which case we make progress towards \eqref{e:fe-overtake}, or
    \item \label{i:informal-fe-increment-k} \eqref{e:fe-nearly-overtake} holds with $I' = [\frac32 x, 1]$ in place of $I$ and some $\upsilon''$ in place of $\upsilon'$, which satisfies $\upsilon'' \downarrow 0$ as $\upsilon' \downarrow 0$.
  \end{enumerate}
\end{claim}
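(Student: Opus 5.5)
The plan is to run the case split suggested in the overview. Write $D(t) = \E^\plantedTag_{\beta,t} F_{n,\beta,t}^{\sigma^*,G,{\boldsymbol{v}}}((-\varepsilon/10,\varepsilon/10)) - \E^\plantedTag_{\beta,t} F_{n,\beta,t}^{\sigma^*,G,{\boldsymbol{v}}}(I)$ with $I=[x,1]$ and $x \ge \varepsilon$, and fix a threshold $p = \varepsilon/10$. Apply \cref{l:fe-deriv} to both intervals. For the small interval, the Gibbs average of $R(\sigma,\sigma^*)$ is at most $\varepsilon/10$, and the replica term is nonnegative by Fact~\cref{f:replica-overlap-nonnegative}. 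For $I$, $\mu(R(\sigma,\sigma^*)\,|\,I) \ge x$. Together these give
\[
  D'(t) \le \tfrac12\Big(\tfrac{\varepsilon}{5} - 2x + \E^\plantedTag_{\beta,t}\mu_{G,\beta,t,<\varepsilon}^{{\boldsymbol{v}}}(R(\sigma^1,\sigma^2)\,|\,I)\Big).
\]
Since $R(\sigma^1,\sigma^2)\le 1$, we have $\E\mu(R(\sigma^1,\sigma^2)\,|\,I) \le \tfrac32 x + q$, where $q \defeq \E^\plantedTag_{\beta,t}\mu(R(\sigma^1,\sigma^2)\ge \tfrac32 x\,|\,I)$.

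\emph{Case $q \le p$.} Then $D'(t) \le \tfrac12(\tfrac{\varepsilon}{5} - \tfrac{x}{2} + p) \le \tfrac12(\tfrac{\varepsilon}{5}-\tfrac{\varepsilon}{2}+\tfrac{\varepsilon}{10}) = -\varepsilon/10$. This is alternative (i).

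\emph{Case $q > p$.} Here we aim for alternative (ii).
\begin{enumerate}
  \item Lower-bound the Gibbs mass of $I$. Because \eqref{e:fe-overtake} fails, the total expected free energy is at most the small-interval one plus $\upsilon$. Combined with \eqref{e:fe-nearly-overtake}, $\E F(I) \ge \E F([-1,1]) - \upsilon - \upsilon'$.
  \item Pass from expectations to a high-probability statement. Invoke concentration of restricted free energies (\cref{l:planted-fe-conc}): deviations of size $s$ have probability at most $e^{-c n s^2}$. So outside an event $\mathcal{B}$ with $\Pr(\mathcal{B})\le e^{-cns^2}$, we have $\mu^{{\boldsymbol{v}}}_{G,\beta,t,<\varepsilon}(R(\sigma,\sigma^*)\in I) \ge e^{-n(\upsilon+\upsilon'+2s)}$.
  \item Bound $q$ through the ratio \eqref{e:planted-boosting-conditional-probability}. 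On $\mathcal{B}$ the conditional probability is at most $1$, contributing at most $e^{-cns^2} \le p/2$. On $\mathcal{B}^c$ the denominator is at least $e^{-2n(\upsilon+\upsilon'+2s)}$. Hence
  \[
    \E^\plantedTag_{\beta,t}\mu(R(\sigma^1,\sigma^2)\ge \tfrac32 x) \ge \tfrac p2 e^{-2n(\upsilon+\upsilon'+2s)}.
  \]
  \item Apply Nishimori \eqref{e:planted-boosting-nishimori} to replace $R(\sigma^1,\sigma^2)$ by $R(\sigma,\sigma^*)$. This gives an expected Gibbs mass of $I'=[\tfrac32 x,1]$ of at least $e^{-n(2\upsilon+2\upsilon'+5s)}$. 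If $\tfrac32 x>1$, this mass is zero, a contradiction, so the case cannot occur.
  \item Convert back to free energies. Suppose $\E F(I') < \E F([-1,1]) - a$ with $a = 2\upsilon+2\upsilon'+6s$. Then concentration of both free energies makes $\mu(I') \le e^{-n(a-s)}$ outside an event of probability $e^{-cns^2}$, contradicting the previous step once $s$ is fixed small and $n$ is large. Therefore $\E F(\text{small}) - \E F(I') \le \E F([-1,1]) - \E F(I') \le a$. This is \eqref{e:fe-nearly-overtake} for $I'$ with $\upsilon'' = 2\upsilon+2\upsilon'+6s$.
\end{enumerate}
Choosing $s$ proportional to $\upsilon'$ (and $\upsilon\le\upsilon'$) gives $\upsilon''\to 0$ as $\upsilon'\to0$.

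The main obstacle is the concentration input. We need exponential-in-$n$ tail bounds, uniform in $t\in[0,t_\upperBoundTag]$ and $\beta$, for restricted free energies of the planted model with Gaussian side information, including intervals of Gibbs mass $e^{-\Theta(n)}$. I would prove this via a bounded-differences argument over switchings of the matching (each switch changes $\isingHamiltonian$ by $O(1)$), together with Gaussian concentration in ${\boldsymbol{g}}$. The Gaussian part must be truncated to $\|{\boldsymbol{g}}\|\le 2\sqrt n$ to keep Lipschitz constants $O(\sqrt{t/n})$ at the per-$n$ scale. The remaining bookkeeping is converting the failure of \eqref{e:fe-overtake} into the mass lower bound of step 1, and making all $o_n(1)$ errors uniform.
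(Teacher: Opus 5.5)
Your Steps 1--4 match the paper's argument (\cref{l:boosting-core-prep} and \cref{p:boosting-core} run the same dichotomy in contrapositive form), but Step 5 fails as written. Under the supposition $\E F(I') < \E F([-1,1]) - a$, concentration at deviation $s$ only gives
\[
  \E^\plantedTag_{\beta,t}\mu^{\boldsymbol{v}}_{G,\beta,t,<\varepsilon}(R(\sigma,\sigma^*)\in I') \le e^{-cns^2} + e^{-n(a-s)}\,.
\]
This contradicts Step 4's lower bound $e^{-nr}$, with $r = 2\upsilon+2\upsilon'+5s$, only if $e^{-cns^2} \ll e^{-nr}$, i.e.\ $cs^2 > r \ge 5s$. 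For small $s$ that is false. Taking $n$ large does not help, because both quantities are exponential in $n$: a tail bound at deviation $s$ has failure probability $e^{-\Theta(s^2)n}$, which is exponentially larger than the $e^{-\Theta(s)n}$-size Gibbs mass you are trying to detect.

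The fix is to decouple the two deviation levels. In Steps 2--3 a small fixed $s$ is fine, since there the failure probability only has to beat the constant $p/2$. In Step 5 you must take a deviation $s'$ with $c s'^2 > r$, i.e.\ $s' \asymp \sqrt{r/c} = O((1+t)\sqrt{r})$. Then $\upsilon'' \approx r + s'$ is of order $\sqrt{\upsilon+\upsilon'+s}$, not linear in $\upsilon'$ as you claim. It still tends to $0$, so the claim survives, but the square-root loss is unavoidable. This is exactly the paper's event $\mathcal{E}_2$ at deviation $L(t,2(\phi_k(t)+3\upsilon))$, which yields $\upsilon'' = \widetilde L(t,2(\phi_k(t)+3\upsilon))$. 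It is also the source of the $\upsilon^{(1/2)^{k_\upperBoundTag}}$ scaling of $t_\upperBoundTag$ noted in the paper's remark.

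Two smaller points. First, \cref{l:planted-fe-conc} is stated only for intervals in $\mathcal{I}$, so $I'$ should be $[\min(\tfrac32 x,0.9),1]$, which still contains $\{R\ge \tfrac32 x\}$; this is the paper's choice $x_k=\min((3/2)^k\varepsilon,0.9)$. Your remark that the case $\tfrac32 x>1$ is vacuous does not cover $\tfrac32 x\in(0.9,1]$.

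Second, your plan for the concentration input is incomplete. Switchings of the matching and Gaussian concentration in $\boldsymbol{g}$ control fluctuations only conditionally on $\sigma^*$. Under $\nu^\plantedTag_{\beta,t}$ the signal $\sigma^*$ is itself random, and it enters through the constraint set $\Sigma_n(\sigma^*,I)$, the side-information term and the edge law. The paper closes this with the Lipschitz bound on the conditional mean in \cref{p:fe-conc-conditional-mean-lipschitz}, which rests on the coupling of \cref{l:wass-stability-of-sbm}, combined with the subgaussian bound of \cref{l:1mt-balanced}. Truncating $\boldsymbol{g}$ is unnecessary: for differences of restricted free energies the common $\|\boldsymbol{g}\|^2$ term cancels, and the $\boldsymbol{g}$-gradient is deterministically at most $2\sqrt{t/n}$.
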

Indeed, if \eqref{e:fe-nearly-overtake} holds and \eqref{e:fe-overtake} does not hold, then
\[
  \E^\plantedTag_{\beta,t} F_{n,\beta,t}^{\sigma^*,G,{\boldsymbol{v}}}(I)
  \ge \E^\plantedTag_{\beta,t} F_{n,\beta,t}^{\sigma^*,G,{\boldsymbol{v}}}([-1,1]) - \upsilon - \upsilon'\,.
\]
By free energy concentration, this implies that with high probability, the denominator of \eqref{e:planted-boosting-conditional-probability} is not too small.
If the upper bound \eqref{e:planted-boosting-conditional-probability} is $o_n(1)$, the above arguments show that outcome \cref{i:informal-fe-make-progress} holds.
Otherwise, the numerator of \eqref{e:planted-boosting-conditional-probability} must also be not too small.
In light of \eqref{e:planted-boosting-nishimori} and free energy concentration, this implies outcome \cref{i:informal-fe-increment-k}.
Due to the various translations between Gibbs probabilities and free energies, the $\upsilon''$ we obtain will be significantly larger than $\upsilon'$, but this will be immaterial to the argument below.

Claim~\cref{cl:boosting-strategy} leads to the following dynamical argument.
For $0\le k \le k_\upperBoundTag \equiv \lfloor (\log \frac{1}{\varepsilon}) / (\log \frac32) \rfloor$, let
\begin{align}\label{e:xk}
  x_k &= \min((3/2)^k \varepsilon, 0.9)\,, &
  I_k = [x_k, 1]\,.
\end{align}
At each time $t\ge 0$, there is an ``active'' $k = k(t)$, which is monotonically increasing with initialization $k(0) = 0$.
Let
\[
  \phi(t) = 
  \E^\plantedTag_{\beta,t} F_{n,\beta,t}^{\sigma^*,G,{\boldsymbol{v}}}((-\varepsilon/10,\varepsilon/10))
  - \E^\plantedTag_{\beta,t} F_{n,\beta,t}^{\sigma^*,G,{\boldsymbol{v}}}(I_{k(t)})\,.
\]
At $t=0$, the hypothesis \eqref{e:fe-time-zero} (and symmetry of $\isingHamiltonian_{G}(\sigma)$ under negation of $\sigma$) implies $\phi(0) \le \upsilon$.
At any $t$ such that \eqref{e:planted-boosting-conditional-probability} is $o_n(1)$, Claim~\cref{cl:boosting-strategy} shows that (either \eqref{e:fe-overtake} holds, in which case we are done, or) $\phi'(t) \le -\varepsilon/10$.
If we ever reach a $t$ such that \eqref{e:planted-boosting-conditional-probability} is not $o_n(1)$, we increment $k(t)$.
This can cause $\phi(t)$ to increase discontinuously, by an amount controlled by \cref{i:informal-fe-increment-k} above.

Aside from at most $k_\upperBoundTag$ such discontinuities, $\phi$ is decreasing with derivative $\le -\varepsilon/10$.
We can therefore ensure $\phi(t) \le -\upsilon$ at some $t \in [0,t_\upperBoundTag]$, for suitable $t_\upperBoundTag$. 
This implies \eqref{e:fe-overtake}.
\begin{remark}
  In \eqref{e:xk}, we have $x_k = (3/2)^k \varepsilon$ for all $k \le k_{\upperBoundTag} - 1$; the minimum with $0.9$ is relevant only for $x_{k_\upperBoundTag}$. 
  The minimum with $0.9$ is for technical convenience, as our proof of the concentration of $F_{n,\beta,t}^{\sigma^*,G,{\boldsymbol{v}}}(I)$ (~\cref{l:planted-fe-conc}, see also ~\cref{d:cI}) requires the interval $I$ to not have an endpoint in $\pm (0.9,1)$.
  Note that we always have $[x_{k_\upperBoundTag},1] \supseteq [\frac32 x_{k_\upperBoundTag - 1}, 1]$, so that \cref{i:informal-fe-increment-k} still controls the discontinuous increase of $\phi(t)$ in this step.
\end{remark}
\begin{remark}
  Due to the blowup in $\upsilon'$ each time \cref{i:informal-fe-increment-k} is invoked, the $t_\upperBoundTag$ we ultimately obtain tends to $0$ very slowly as $\upsilon \downarrow 0$.
  Tracing through the proof below shows that it scales as
  \[
    t_\upperBoundTag \approx \upsilon^{(1/2)^{k_\upperBoundTag}} \approx \upsilon^{\varepsilon^{(\log 2) / \log (3/2)}}\,.
  \]
\end{remark}
We conclude this subsection with the proof of the key free energy derivative formula.
\begin{proof}[Proof of ~\cref{l:fe-deriv}]
  Recall from Remark~\cref{r:augmented-planted-interpretation} that $(\sigma^*,G,{\boldsymbol{v}}) \sim \nu^\plantedTag_{\beta,t}$ can be sampled by $(\sigma^*,G) \sim \nu^\plantedTag_{\beta}$ and ${\boldsymbol{v}} = \sqrt{t}\sigma^* + {\boldsymbol{g}}$ for an independent ${\boldsymbol{g}} \sim {\mathcal{N}}(0,I_n)$.
  Thus
  \begin{align*}
    &\E^\plantedTag_{\beta,t} F_{n,\beta,t}^{\sigma^*,G,{\boldsymbol{v}}}(I)
    = \E^\plantedTag_{\beta} \E_{\boldsymbol{g}} F_{n,\beta,t}^{\sigma^*,G,\sqrt{t}\sigma^*+{\boldsymbol{g}}}(I) \\
    &= \frac{1}{n} \E^\plantedTag_{\beta} \E_{\boldsymbol{g}} 
    \log \sum_{\sigma \in \Sigma_n(\sigma^*,I)} \exp\left(
      \beta \isingHamiltonian_{G}(\sigma) 
      - \frac12 \|\sqrt{t} \sigma^* + {\boldsymbol{g}} - \sqrt{t} \sigma\|^2
    \right) \\
    &= \frac{1}{n} \E^\plantedTag_{\beta} \E_{\boldsymbol{g}} 
    \log \sum_{\sigma \in \Sigma_n(\sigma^*,I)} \exp\left(
      \beta \isingHamiltonian_{G}(\sigma) 
      - \frac{t}{2} \|\sigma^* - \sigma\|^2
      - \sqrt{t} \langle \sigma^* - \sigma, {\boldsymbol{g}}\rangle
      - \frac12 \|{\boldsymbol{g}}\|^2
    \right) \\
    &= - t - \frac12 + \frac{1}{n} \E^\plantedTag_{\beta} \E_{\boldsymbol{g}} 
    \log \sum_{\sigma \in \Sigma_n(\sigma^*,I)} \exp\left(
      \beta \isingHamiltonian_{G}(\sigma) 
      + t \langle \sigma,\sigma^*\rangle
      + \sqrt{t} \langle \sigma - \sigma^*, {\boldsymbol{g}}\rangle
    \right)\,,
  \end{align*}
  where the last step uses that $\|\sigma\|^2 = \|\sigma^*\|^2 = n$ deterministically and $\E[\|{\boldsymbol{g}}\|^2] = n$.
  Hence,
  \[
    \frac{\differential}{\differential t} \E^\plantedTag_{\beta,t} F_{n,\beta,t}^{\sigma^*,G,{\boldsymbol{v}}}(I)
    = -1 + {\textup{(i)}} + {\textup{(ii)}}\,,
  \]
  where
  \begin{align*}
    {\textup{(i)}} &= \frac{1}{n} \E^\plantedTag_{\beta} \E_{\boldsymbol{g}} \frac{
      \sum_{\sigma \in \Sigma_n(\sigma^*,I)}
      \langle \sigma,\sigma^*\rangle
      \exp(
        \beta \isingHamiltonian_{G}(\sigma) 
        + t \langle \sigma,\sigma^*\rangle
        + \sqrt{t} \langle \sigma - \sigma^*, {\boldsymbol{g}}\rangle
      )
    }{
      \sum_{\sigma \in \Sigma_n(\sigma^*,I)}
      \exp(
        \beta \isingHamiltonian_{G}(\sigma) 
        + t \langle \sigma,\sigma^*\rangle
        + \sqrt{t} \langle \sigma - \sigma^*, {\boldsymbol{g}}\rangle
      )
    }\,, \\
    {\textup{(ii)}} &= \frac{1}{n} \cdot \frac{1}{2\sqrt{t}} \E^\plantedTag_{\beta} \E_{\boldsymbol{g}} \frac{
      \sum_{\sigma \in \Sigma_n(\sigma^*,I)}
      \langle \sigma - \sigma^*, {\boldsymbol{g}}\rangle
      \exp(
        \beta \isingHamiltonian_{G}(\sigma) 
        + t \langle \sigma,\sigma^*\rangle
        + \sqrt{t} \langle \sigma - \sigma^*, {\boldsymbol{g}}\rangle
      )
    }{
      \sum_{\sigma \in \Sigma_n(\sigma^*,I)}
      \exp(
        \beta \isingHamiltonian_{G}(\sigma) 
        + t \langle \sigma,\sigma^*\rangle
        + \sqrt{t} \langle \sigma - \sigma^*, {\boldsymbol{g}}\rangle
      )
    }\,.
  \end{align*}
  Inspecting the relevant definitions reveals
  \begin{align*}
     {\textup{(i)}} &= \E^\plantedTag_{\beta} \E_{\boldsymbol{g}} \frac{
      \sum_{\sigma \in \Sigma_n(\sigma^*,I)}
      R(\sigma,\sigma^*)
      \exp(
        \beta \isingHamiltonian_{G}(\sigma) 
        - \frac12 \|\sqrt{t} (\sigma^* - \sigma) + {\boldsymbol{g}}\|^2
      )
    }{
      \sum_{\sigma \in \Sigma_n(\sigma^*,I)}
      \exp(
        \beta \isingHamiltonian_{G}(\sigma) 
        - \frac12 \|\sqrt{t} (\sigma^* - \sigma) + {\boldsymbol{g}}\|^2
      )
    } \\
    &= \E^\plantedTag_{\beta} \E_{\boldsymbol{g}} \mu_{G,\beta,t,<\varepsilon}^{\sqrt{t}\sigma^* + {\boldsymbol{g}}}(R(\sigma,\sigma^*) \,|\, I)
    = \E^\plantedTag_{\beta,t} \mu_{G,\beta,t,<\varepsilon}^{{\boldsymbol{v}}}(R(\sigma,\sigma^*) \,|\, I)\,.
  \end{align*}
  We can evaluate ${\textup{(ii)}}$ by a routine gaussian integration by parts calculation, see e.g. \cite[Lemma 1.4]{Pan13b}.
  Abbreviating
  \begin{align*}
    \widetilde{\isingHamiltonian}_{G,\beta,t}(\sigma;\sigma^*,{\boldsymbol{g}}) 
    &= \beta \isingHamiltonian_{G}(\sigma) - \frac12 \|\sqrt{t} \sigma^* + {\boldsymbol{g}} - \sqrt{t} \sigma\|^2
    + tn + \frac12 \|{\boldsymbol{g}}\|^2 \\
    &= \beta \isingHamiltonian_{G}(\sigma) + t \langle \sigma,\sigma^*\rangle + \sqrt{t} \langle \sigma - \sigma^*, {\boldsymbol{g}}\rangle\,,
  \end{align*}
  we have
  \begin{align*}
    {\textup{(ii)}} &= \frac{1}{2n} \E^\plantedTag_{\beta} \E_{\boldsymbol{g}}\frac{
      \sum_{\sigma \in \Sigma_n(\sigma^*,I)}
      \langle \sigma - \sigma^*, \sigma - \sigma^*\rangle
      \exp(\widetilde{\isingHamiltonian}_{G,\beta,t}(\sigma;\sigma^*,{\boldsymbol{g}}))
    }{
      \sum_{\sigma \in \Sigma_n(\sigma^*,I)}
      \exp(\widetilde{\isingHamiltonian}_{G,\beta,t}(\sigma;\sigma^*,{\boldsymbol{g}}))
    } \\
    &\quad - \frac{1}{2n} \E^\plantedTag_{\beta} \E_{\boldsymbol{g}} \frac{
      \sum_{\sigma^1,\sigma^2 \in \Sigma_n(\sigma^*,I)}
      \langle \sigma^1 - \sigma^*, \sigma^2 - \sigma^*\rangle
      \exp(\widetilde{\isingHamiltonian}_{G,\beta,t}(\sigma^1;\sigma^*,{\boldsymbol{g}}) + \widetilde{\isingHamiltonian}_{G,\beta,t}(\sigma^2;\sigma^*,{\boldsymbol{g}}))
    }{
      \sum_{\sigma^1,\sigma^2 \in \Sigma_n(\sigma^*,I)}
      \exp(\widetilde{\isingHamiltonian}_{G,\beta,t}(\sigma^1;\sigma^*,{\boldsymbol{g}}) + \widetilde{\isingHamiltonian}_{G,\beta,t}(\sigma^2;\sigma^*,{\boldsymbol{g}}))
    } \\
    &= \frac12 \E^\plantedTag_{\beta} \E_{\boldsymbol{g}} \mu_{G,\beta,t,<\varepsilon}^{\sqrt{t}\sigma^* + {\boldsymbol{g}}}\left(
      R(\sigma - \sigma^*, \sigma - \sigma^*)
      - R(\sigma^1 - \sigma^*, \sigma^2 - \sigma^*) \,|\, I
    \right) \\
    &= \frac12 \left(
      1 - \E^\plantedTag_{\beta} \E_{\boldsymbol{g}} 
      \mu_{G,\beta,t,<\varepsilon}^{\sqrt{t}\sigma^* + {\boldsymbol{g}}}(R(\sigma^1,\sigma^2) \,|\, I)
    \right) = \frac12 \left(
      1 - \E^\plantedTag_{\beta,t} 
      \mu_{G,\beta,t,<\varepsilon}^{{\boldsymbol{v}}}(R(\sigma^1,\sigma^2) \,|\, I)
    \right)\,.
  \end{align*}
  Combining the above identities yields the result.
\end{proof}

\subsection{Concentration of restricted partition functions}
\label{ss:planted-boosting-concentration}

Recall $Z_{G,\beta,t}^{\sigma^*,{\boldsymbol{v}}}(I)$ and $F_{n,\beta,t}^{\sigma^*,G,{\boldsymbol{v}}}(I)$ defined in \eqref{e:restricted-Z-and-F-to-I}.
Analogously define the two-replica versions
\begin{align}\label{e:restricted-Z-and-F-to-I-two-replica}
  Z_{G,\beta,t}^{2,\sigma^*,{\sigma'}^*,{\boldsymbol{v}}}(I)
  &= \sum_{\sigma \in \Sigma_n(\sigma^*,I)}
  \allone\{|R(\sigma,{\sigma'}^*)| < \varepsilon\}
  \exp(\isingHamiltonian_{G,\beta,t}(\sigma,{\boldsymbol{v}}))\,, \\
  F_{n,\beta,t}^{2,\sigma^*,{\sigma'}^*,G,{\boldsymbol{v}}}(I)
  &= \frac{1}{n} \log Z_{G,\beta,t}^{2,\sigma^*,{\sigma'}^*,{\boldsymbol{v}}}(I)\,.
\end{align}
Similarly to \eqref{e:gibbs-avg-of-function}, for a measurable function $f$ of $k$ replicas $\sigma^1,\ldots,\sigma^k$, write
\begin{equation}\label{e:gibs-avg-of-function-two-replica}
  \mu_{G,\beta,t,<\varepsilon}^{2,{\sigma'}^*,{\boldsymbol{v}}}(f \,|\, I)
  =
  \E\left(
    f(\sigma^1,\ldots,\sigma^k) : \sigma^i \stackrel{IID}{\sim}
    \mu_{G,\beta,t,<\varepsilon}^{2,{\sigma'}^*,{\boldsymbol{v}}}(\cdot \,|\, I)
  \right)\,,
\end{equation}
where $\mu_{G,\beta,t,<\varepsilon}^{2,{\sigma'}^*,{\boldsymbol{v}}}(\cdot \,|\, I)$ is the Gibbs measure corresponding to $Z_{G,\beta,t}^{2,\sigma^*,{\sigma'}^*,{\boldsymbol{v}}}(I)$.
\textbf{In this subsection, we prove ~\cref{l:planted-fe-conc} below, which gives concentration of these restricted free energies.}
These are an important input to the dynamical proof of ~\cref{p:planted-boosting}, carried out in \cref{ss:planted-boosting-rigorous}.
\begin{definition}\label{d:cI}
  Let ${\mathcal{I}}$ be the set of intervals $I \subseteq [-1,1]$ of the form $(a,b)$ or $[a,b]$, where $a<b$, such that $I$ either either contains or is disjoint with each of $(0.9,1]$ and $[-1,-0.9)$.
\end{definition}
All the intervals $I$ we will consider in our formal argument will be of this form.
We will always treat $I$ as fixed as $n\to\infty$ so that for sufficiently large $n$, $I$ contains a feasible value of $R(\sigma,\sigma^*)$ and the above restricted free energies are well-defined.
\begin{lemma}\label{l:planted-fe-conc}
  There exists $C = C(d,\delta,\varepsilon) > 0$ such that the following holds.
  For any $t \ge 0$, $\upsilon > 0$, let
  \[
    L = L(d,\delta,\varepsilon,t,\upsilon) = C (1+t) \upsilon^{1/2}\,.
  \]
  Then, for all $\beta \in [2\beta_\lowerBoundTag, \betaKS - \delta]$ and intervals $I,I' \in {\mathcal{I}}$:
  \begin{enumerate}[label=(\alph*)]
    \item\label{i:planted-fe-conc-one-replica} We have
    \[
      \Pr^\plantedTag_{\beta,t} \left(
        |
          (F_{n,\beta,t}^{\sigma^*,G,{\boldsymbol{v}}}(I) - F_{n,\beta,t}^{\sigma^*,G,{\boldsymbol{v}}}(I')) 
          - \E^\plantedTag_{\beta,t} (F_{n,\beta,t}^{\sigma^*,G,{\boldsymbol{v}}}(I) - F_{n,\beta,t}^{\sigma^*,G,{\boldsymbol{v}}}(I'))
        | \le L
      \right) \ge 1-e^{-\upsilon n}\,.
    \]
    \item\label{i:planted-fe-conc-two-replica} We have
    \begin{align*}
      \Pr^{\plantedTag,2}_{\beta,t} \Big(
        |
          (F_{n,\beta,t}^{2,\sigma^*,{\sigma'}^*,G,{\boldsymbol{v}}}(I) &- F_{n,\beta,t}^{2,\sigma^*,{\sigma'}^*,G,{\boldsymbol{v}}}(I')) \\
          &- \E^{\plantedTag,2}_{\beta,t} (F_{n,\beta,t}^{2,\sigma^*,{\sigma'}^*,G,{\boldsymbol{v}}}(I) - F_{n,\beta,t}^{2,\sigma^*,{\sigma'}^*,G,{\boldsymbol{v}}}(I'))
        | \le L
      \Big) \ge 1-e^{-\upsilon n}\,.
    \end{align*}
  \end{enumerate}
\end{lemma}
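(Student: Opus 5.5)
We will prove \cref{l:planted-fe-conc} by a bounded-differences (McDiarmid/Azuma) argument applied separately to the two sources of randomness in the planted model: the Gaussian noise ${\boldsymbol{g}}$ (with ${\boldsymbol{v}} = \sqrt{t}\sigma^* + {\boldsymbol{g}}$) and the pair $(\sigma^*,G)$. Throughout we work with the difference $D = F_{n,\beta,t}(I) - F_{n,\beta,t}(I')$, since the individual free energies carry a $\sigma^*$-dependent normalization that cancels in $D$.

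\textbf{Step 1: Gaussian part.} Condition on $(\sigma^*,G)$. As in the proof of \cref{l:fe-deriv}, we rewrite
\[
  nF(I) = \log \sum_{\sigma \in \Sigma_n(\sigma^*,I)} \exp\bigl(\beta\isingHamiltonian_G(\sigma) + t\langle\sigma,\sigma^*\rangle + \sqrt t\langle\sigma-\sigma^*,{\boldsymbol{g}}\rangle\bigr) + (\text{terms common to } I, I')\,.
\]
Viewed as a function of ${\boldsymbol{g}}$, the log-sum is $\sqrt t\,\|\sigma-\sigma^*\|\le 2\sqrt{tn}$-Lipschitz. Hence $nD$ is $4\sqrt{tn}$-Lipschitz in ${\boldsymbol{g}}$, and Gaussian concentration gives
\[
  \Pr\bigl(|D-\E_{\boldsymbol g}D|\ge s\bigr)\le 2e^{-ns^2/(32t)}\,.
\]
Choosing $s\asymp\sqrt{t\upsilon}$ makes this failure probability at most $e^{-2\upsilon n}$.

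\textbf{Step 2: graph and signal part.} It remains to show that $h(\sigma^*,G)=\E_{\boldsymbol g}D$ concentrates under $\nu^\plantedTag_\beta$. We will realize this law as follows. First draw $\sigma^*$ from \eqref{e:planted-model-x-dist}. Conditional on $\sigma^*$, $G$ is a configuration model in which each of the $N/2$ matching edges is weighted by $e^{\beta}$ if it is bichromatic. Since all the half-edge counts per color are determined by $\sigma^*$, $G$ given $\sigma^*$ is (up to tilting by the number $m$ of bichromatic edges) a uniform matching. We reveal it by an edge-exposure (switching) martingale. Swapping two matching pairs changes $\isingHamiltonian_G(\sigma)$ by at most $2$ for every $\sigma$, so $nh$ changes by at most $2|\beta|+O(1)$ per exposure. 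Azuma over $dn/2$ steps then gives fluctuations $O(\sqrt{n}\cdot\sqrt{\upsilon n})$ with probability $1-e^{-2\upsilon n}$.

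The tilt by $e^{\beta m}$ is the delicate point. To handle it, we first condition on $m$ and on $\sigma^*$, which makes the matching uniform among matchings with prescribed bichromatic count. We then use the switching martingale on that space, which has the same bounded differences. Finally we average over $m$.

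\textbf{Step 3: dependence on $\sigma^*$.} We claim that $h$ depends on $\sigma^*$ in a Lipschitz way once the graph is coupled, so that flipping one coordinate of $\sigma^*$ changes $nh$ by $O(1+t)$. Here the $t\langle\sigma,\sigma^*\rangle$ term contributes $2t$, and the window restriction $R(\sigma,\sigma^*)\in I$ shifts by $2/n$, which changes the set $\Sigma_n(\sigma^*,I)$. The latter is the main obstacle. A single boundary shift could in principle change $\log Z(I)$ by a lot, and this is exactly why $\mathcal I$ excludes endpoints in $\pm(0.9,1)$. We will handle it by comparing $Z(I)$ with $Z(I^{\pm 2/n})$ via the map that flips one coordinate of $\sigma$. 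This map is at most $n$-to-one and changes the weight by a factor of at most $e^{O(|\beta|d+t+\sqrt t|g_i|)}$. Summing over the level sets of the overlap, the counting factors are polynomial in $n$ when the endpoint is away from $\pm1$, which gives an $O(\log n)$ and hence negligible change.

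With these per-step bounds, we apply McDiarmid to $\sigma^*$, which has subgaussian concentration by \cref{l:1mt-balanced}. This requires representing $\sigma^*$ through an exposure of coordinates whose conditional laws are tilted but uniformly bounded. Combining Steps 1–3 gives deviation $C(1+t)\upsilon^{1/2}$ with probability $1-e^{-\upsilon n}$. The two-replica version is identical: we additionally condition on ${\sigma'}^*$, which enters only through the indicator $|R(\sigma,{\sigma'}^*)|<\varepsilon$, and that indicator is handled by the same boundary-shift comparison.
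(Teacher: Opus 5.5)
Your decomposition is the same as the paper's: Gaussian concentration given $(\sigma^*,G)$, concentration of $G$ given $\sigma^*$, Lipschitz dependence of the conditional mean on $\sigma^*$, and finally concentration of $\sigma^*$. The problem is that Step 3 does not deliver the estimate the lemma needs.

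\textbf{The boundary comparison loses a factor of $n$.} Since $L=C(1+t)\upsilon^{1/2}$ with $C$ independent of $n$, flipping one coordinate of $\sigma^*$ must change $n\,\E[h\mid\sigma^*]$ by $O(1+t)$ uniformly in $n$. Your boundary comparison gives $O(1+t+\log n)$, and you call the $\log n$ negligible. It is not. Per-coordinate bounded differences $c/n$ with $c\asymp\log n$ give, at failure probability $e^{-\upsilon n}$, deviations of order $(1+t+\log n)\upsilon^{1/2}$. The same holds if you use the one-dimensional Lipschitz argument in the magnetization. This diverges as $n\to\infty$ with $\upsilon$ fixed, which is exactly how the lemma is used: \cref{l:boosting-core-prep} chooses $\iota$ with $L(t,\iota)\le\upsilon$, and \cref{l:boosting-init} needs $L(0,2\upsilon)\to0$ as $\upsilon\downarrow0$.

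The loss comes from using only that the single-flip map is at most $n$-to-one. What you need is the reverse count: every configuration in the boundary layer has $\Omega(n)$ admissible Hamming neighbours in the interior. In the paper (\cref{f:hamming-nbrs}) this is at least $n/50$, and this is where $|R(\sigma^*,\allone)|<\varepsilon$ and $I\in\mathcal{I}$ are used. Double counting then bounds the boundary mass by $\frac{50}{n}e^{d|\beta|+2t}\sum_i e^{2\sqrt t|g_i|}$ times the interior mass. The factor $\frac1n\sum_i$ is an average rather than a sum, so after $\E_{\boldsymbol g}$ and Jensen the boundary costs at most $\log(1+100e^{d|\beta|+4t})=O(1+t)$.

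The two-replica case is also not ``identical''. With three overlap constraints simultaneously near their limits, a single flip can be forced to violate one of them. The paper exhibits boundary configurations with no admissible Hamming-distance-$1$ neighbour at all (Remark before the proof of \cref{f:hamming-nbrs-two-replica}). It therefore uses Hamming-distance-$3$ moves, which come with at least $n^3/10^6$ admissible neighbours (\cref{f:hamming-nbrs-two-replica}).

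\textbf{The graph coupling is asserted, not constructed.} The phrase ``once the graph is coupled'' hides the other main ingredient. Under $\nu^{\plantedTag}_\beta(\cdot\mid\sigma^*)$ the whole matching is tilted by $e^{\beta\isingHamiltonian_G(\sigma^*)}$, so flipping one spin changes the law of the entire graph. Holding $G$ fixed is not a coupling. The likelihood ratio between the two conditional laws is bounded by $e^{2d|\beta|}$, but combined with your Step 2 it only yields an $O(\sqrt n)$ change in $n\,\E[h\mid\sigma^*]$. You need a coupling with $\E|E(G)\triangle E(\widetilde G)|=O(d)$. The paper builds it in \cref{l:wass-stability-of-sbm} in two stages:
\begin{itemize}
\item It first couples the edge-type count matrices, matching one half-edge at a time. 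The partner-type probabilities under the two laws have ratio in $[e^{-8},e^8]$, which gives $\E\|L-\widetilde L\|_1\le 4e^8d$.
\item It then couples the matchings themselves by switchings.
\end{itemize}

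\textbf{Two smaller points.} First, in Step 2, conditioning on the bichromatic count $m$ only gives concentration around the random centre $\E[h\mid\sigma^*,m]$. ``Averaging over $m$'' therefore also requires $m\mapsto n\,\E[h\mid\sigma^*,m]$ to be $O(1)$-Lipschitz and $m$ to concentrate at scale $\upsilon^{1/2}n$. Second, the McDiarmid-over-coordinates step for $\sigma^*$ is unnecessary. By permutation invariance, $\E[h\mid\sigma^*]$ depends only on $R(\sigma^*,\allone)$ (respectively on the three overlaps of $(\sigma^*,{\sigma'}^*)$). Once the per-flip bound is $O(1+t)$, the subgaussian tails in part (b) of \cref{l:1mt-balanced} and of \cref{l:basic-mt} finish the proof directly.
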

As a first step, we recall from Remark~\cref{r:augmented-planted-interpretation} that for $(\sigma^*,G,{\boldsymbol{v}}) \sim \nu^\plantedTag_{\beta,t}$, ${\boldsymbol{v}}$ can be generated by 
\[
  {\boldsymbol{v}} = \sqrt{t} \sigma^* + {\boldsymbol{g}}
\]
for ${\boldsymbol{g}} \sim {\mathcal{N}}(0,{\boldsymbol{I}}_n)$ independent of $(\sigma^*,G)$.
Throughout this subsection we write ${\boldsymbol{g}}$ in this form.
Define
\begin{align*}
  {\widetilde F}_{n,\beta,t}^{\sigma^*,G,{\boldsymbol{v}}}(I)
  &= \frac1n \log \sum_{\sigma \in \Sigma_n(\sigma^*,I)}
  \exp\left(
    \beta \isingHamiltonian_{G}(\sigma) 
    + t \langle \sigma,\sigma^*\rangle 
    + \sqrt{t} \langle \sigma - \sigma^*, {\boldsymbol{g}}\rangle
  \right)\,, \\ 
  {\widetilde F}_{n,\beta,t}^{2,\sigma^*,{\sigma'}^*,G,{\boldsymbol{v}}}(I)
  &= \frac1n \log \sum_{\sigma \in \Sigma_n(\sigma^*,I)}
  \allone\{|R(\sigma,{\sigma'}^*)| < \varepsilon\}
  \exp\left(
    \beta \isingHamiltonian_{G}(\sigma) 
    + t \langle \sigma,\sigma^*\rangle 
    + \sqrt{t} \langle \sigma - \sigma^*, {\boldsymbol{g}}\rangle
  \right)\,.
\end{align*}
A direct computation shows
\begin{equation}\label{e:F-to-tF}
  \begin{aligned}
  F_{n,\beta,t}^{\sigma^*,G,{\boldsymbol{v}}}(I) - F_{n,\beta,t}^{\sigma^*,G,{\boldsymbol{v}}}(I') 
  &= {\widetilde F}_{n,\beta,t}^{\sigma^*,G,{\boldsymbol{v}}}(I) - {\widetilde F}_{n,\beta,t}^{\sigma^*,G,{\boldsymbol{v}}}(I')\,, \\
  F_{n,\beta,t}^{2,\sigma^*,{\sigma'}^*,G,{\boldsymbol{v}}}(I) - F_{n,\beta,t}^{2,\sigma^*,{\sigma'}^*,G,{\boldsymbol{v}}}(I')
  &= {\widetilde F}_{n,\beta,t}^{2,\sigma^*,{\sigma'}^*,G,{\boldsymbol{v}}}(I) - {\widetilde F}_{n,\beta,t}^{2,\sigma^*,{\sigma'}^*,G,{\boldsymbol{v}}}(I')\,.
  \end{aligned}
\end{equation}
We will show concentration of ${\widetilde F}_{n,\beta,t}^{\sigma^*,G,{\boldsymbol{v}}}(I)$ and ${\widetilde F}_{n,\beta,t}^{2,\sigma^*,{\sigma'}^*,G,{\boldsymbol{v}}}(I)$ individually.
Define
\begin{align*}
  {\widehat F}_{n,\beta,t}(\sigma^*,I) &= \E^\plantedTag_{\beta,t}[{\widetilde F}_{n,\beta,t}^{\sigma^*,G,{\boldsymbol{v}}}(I) | \sigma^*]\,, & 
  {\widehat F}_{n,\beta,t}^2(\sigma^*,{\sigma'}^*,I) &= \E^{\plantedTag,2}_{\beta,t} [{\widetilde F}_{n,\beta,t}^{2,\sigma^*,{\sigma'}^*,G,{\boldsymbol{v}}}(I) | \sigma^*, {\sigma'}^*]\,.
\end{align*}
\begin{lemma}\label{l:fe-conc-around-conditional-mean}
  There exists $C = C(d) > 0$ such that for any $\beta \in [2\beta_\lowerBoundTag, \betaKS - \delta]$, $t\ge 0$, $\upsilon > 0$, and interval $I \in {\mathcal{I}}$, we have
  \begin{align}\label{e:fe-conc-around-conditional-mean-one-replica}
    \Pr^\plantedTag_{\beta,t}\left(
      |{\widetilde F}_{n,\beta,t}^{\sigma^*,G,{\boldsymbol{v}}}(I) - {\widehat F}_{n,\beta,t}(\sigma^*,I)| \le C(1+t)^{1/2}\upsilon^{1/2}
    \right) \ge 1 - e^{-\upsilon n}\,, \\
    \label{e:fe-conc-around-conditional-mean-two-replica}
    \Pr^{\plantedTag,2}_{\beta,t}\left(
      |{\widetilde F}_{n,\beta,t}^{2,\sigma^*,{\sigma'}^*,G,{\boldsymbol{v}}}(I) - {\widehat F}_{n,\beta,t}^2(\sigma^*,{\sigma'}^*,I)| \le C(1+t)^{1/2}\upsilon^{1/2}
    \right) \ge 1 - e^{-\upsilon n}\,.
  \end{align}
\begin{proof}
  We only prove the first conclusion, as the second is analogous.
  Fix any $\sigma^* \in \Sigma_n(\varepsilon)$ and consider the function
  \[
    f(G,{\boldsymbol{g}}) = \frac1n \log \sum_{\sigma \in \Sigma_n(\sigma^*,I)}
    \exp\left(
      \beta \isingHamiltonian_{G}(\sigma) 
      + t \langle \sigma,\sigma^*\rangle 
      + \sqrt{t} \langle \sigma - \sigma^*, {\boldsymbol{g}}\rangle
    \right)\,.
  \]
  Note that if we change $G$ to $G'$ by swapping a pair of half-edges $(e_1,f_1), (e_2,f_2)$ to $(e_1,f_2), (e_2,f_1)$, then $|\isingHamiltonian_{G}(\sigma) - \isingHamiltonian_{G'}(\sigma)| \le 2$ for all $\sigma \in \Sigma_n(\sigma^*,I)$.
  Hence
  \[
    |f(G,{\boldsymbol{g}}) - f(G',{\boldsymbol{g}})| \le 2|\beta| / n\,.
  \]
  Also,
  \[
    \|\nabla_{\boldsymbol{g}} f(G,{\boldsymbol{g}})\|
    = \frac{\sqrt{t}}{n} \|\mu_{G,\beta,t,<\varepsilon}^{{\boldsymbol{v}}}(\sigma - \sigma^* \,|\, I) \|
    \le 2\sqrt{\frac{t}{n}}\,.
  \]
  By the Azuma--Hoeffding inequality (applied to $G$ sampled from the law \eqref{e:planted-model-G-dist-given-x} conditional on $\sigma^*$) and gaussian concentration of Lipschitz functions (applied to ${\boldsymbol{g}}$, which is independent of $G$), $f(G,{\boldsymbol{g}})$ is subgaussian with variance proxy $O(d\beta^2 + t)/n$.
  The conclusion follows.
\end{proof}
\end{lemma}
It remains to establish concentration of the conditional means ${\widehat F}_{n,\beta,t}(\sigma^*,I)$ and ${\widehat F}_{n,\beta,t}^2(\sigma^*,{\sigma'}^*,I)$.
We achieve this through the following Lipschitz estimate, whose proof we defer to Appendix~\cref{a:restricted-fe-lipschitz}.
\begin{proposition}\label{p:fe-conc-conditional-mean-lipschitz}
  There exists $C = C(d) > 0$ such that for any $\beta \in [2\beta_\lowerBoundTag, \betaKS - \delta]$, $t\ge 0$, and interval $I \in {\mathcal{I}}$, the following holds.
  \begin{enumerate}[label=(\alph*)]
    \item \label{i:fe-conc-conditional-mean-lipschitz-one-replica} If $\sigma^*,\widetilde{\sigma}^* \in \Sigma_n(\varepsilon)$ have Hamming distance $1$, then
    \[
      |{\widehat F}_{n,\beta,t}(\sigma^*,I) - {\widehat F}_{n,\beta,t}(\widetilde{\sigma}^*,I)| \le C(1+t) / n\,.
    \]
    \item \label{i:fe-conc-conditional-mean-lipschitz-two-replica} If $(\sigma^*,{\sigma'}^*), (\widetilde{\sigma}^*,\widetilde{\sigma'}^*) \in \Sigma_{n,2}(\varepsilon)$ (recall \eqref{e:SigmaN2}) have Hamming distance $1$, then
    \[
      |{\widehat F}_{n,\beta,t}^2(\sigma^*,{\sigma'}^*,I) - {\widehat F}_{n,\beta,t}^2(\widetilde{\sigma}^*,\widetilde{\sigma'}^*,I)| \le C(1+t) / n\,.
    \]
  \end{enumerate}
\end{proposition}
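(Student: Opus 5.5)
Fix $\sigma^*,\widetilde\sigma^*$ differing only at vertex $u$. The plan is to build a coupling of the two conditional laws of $(G,{\boldsymbol{g}})$, namely $\nu^\plantedTag_{\beta}(G\mid\sigma^*)\propto\exp(\beta\isingHamiltonian_G(\sigma^*))$ and the same with $\widetilde\sigma^*$, together with a common ${\boldsymbol{g}}$. Under this coupling the two quantities ${\widetilde F}$ should differ by $O((1+t)/n)$ almost surely, or at least in expectation.

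\textbf{Deterministic effect of the change at fixed $(G,{\boldsymbol{g}})$.} Replacing $\sigma^*$ by $\widetilde\sigma^*$ changes three things.
\begin{itemize}
\item The term $t\langle\sigma,\sigma^*\rangle$ changes by at most $2t$.
\item The term $\sqrt t\langle\sigma-\sigma^*,{\boldsymbol{g}}\rangle$ changes by $2\sqrt t|g_u|$. Its expectation is $O(\sqrt t)\le O(1+t)$.
\item The summation range $\Sigma_n(\sigma^*,I)$ changes, since $R(\sigma,\sigma^*)$ shifts by $\pm 2/n$ for every $\sigma$.
\end{itemize}
The third item is the delicate one. I would handle it by the bijection $\sigma\mapsto\sigma^{(u)}$ that flips coordinate $u$. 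This map preserves $R(\sigma,\sigma^*)$ exactly when $\sigma^*$ is replaced by $\widetilde\sigma^*$. It changes $\isingHamiltonian_G$ by at most $d$, changes $|R(\sigma,\allone)|$ by $2/n$, and changes the inner products by $O(t+\sqrt t|g_u|)$. The ratio of the two restricted sums is therefore within $e^{O(d|\beta|+t+\sqrt t|g_u|)}$, up to boundary effects from the constraint $\Sigma_n(\varepsilon)$, i.e.\ $|R(\sigma,\allone)|<\varepsilon$. These boundary effects are comparisons between $\Sigma_n(\varepsilon)$ and $\Sigma_n(\varepsilon\pm 2/n)$. Because $\beta\isingHamiltonian_G$ is symmetric and local, I would control them by a second single-vertex flip that moves the magnetization back into range. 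For instance, map $\sigma$ to the configuration obtained by additionally flipping the spin at some vertex of opposite sign. Each image has at most $O(n)$ preimages and the target set has size $\Theta(n)$, so the counting loss is $O(1)$ multiplicatively, hence $O(1/n)$ in ${\widetilde F}$.

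\textbf{Coupling of the graph laws.} The tilt $\beta\isingHamiltonian_G(\sigma^*)$ and the tilt $\beta\isingHamiltonian_G(\widetilde\sigma^*)$ differ only on the $d$ edges at $u$. I would write the planted configuration law as: choose the numbers of $\pm/\pm$ edge types, then a uniform matching consistent with them. Moving $u$ from one class to the other shifts these counts by $O(d)$. A switching coupling then rewires $O(d)$ pairs of half-edges. Each switch changes ${\widetilde F}$ by at most $2|\beta|/n$, as in the proof of \cref{l:fe-conc-around-conditional-mean}. This requires the laws of the edge-type counts under $\sigma^*$ and under $\widetilde\sigma^*$ to be close in a suitable transport sense. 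I would get this from local-CLT/Laplace estimates on the configuration-model counts, as in \cref{l:1mt-balanced}. Those estimates give that the count distributions have bounded ratio under $O(d)$ shifts, which suffices for a coupling with $O(d)$ expected switches.

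\textbf{Main obstacle.} The hardest step is this coupling of the tilted graph laws with $O(1)$ expected switches, uniformly in $\beta\in[2\beta_\lowerBoundTag,\betaKS-\delta]$ and in $\sigma^*\in\Sigma_n(\varepsilon)$ (including near the edge of $\Sigma_n(\varepsilon)$). The interval constraint $I\in\mathcal I$ and the side information are comparatively routine. The restriction that endpoints avoid $\pm(0.9,1)$ presumably ensures that the restricted sets are nonempty and not tiny, so that the $O(n)$-to-one counting losses stay $O(1)$. Summing the contributions gives $C(1+t)/n$. The two-replica case is analogous: apply the same flip to whichever replica changed, and additionally control the constraint $|R(\sigma,{\sigma'}^*)|<\varepsilon$ by the same boundary-shift argument.
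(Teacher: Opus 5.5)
Your architecture matches the paper's. You shrink the magnetization window to $\Sigma_n(\sigma^*,\varepsilon-\frac2n,I)$ by Hamming-neighbour counting, map it injectively by flipping the changed coordinate, and couple the tilted graph laws through edge-type counts followed by a matching coupling. However, the step you flag as the main obstacle is not resolved by the mechanism you propose.

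A bounded likelihood ratio between the two count laws does not give a coupling with $O(d)$ expected $\ell^1$ discrepancy. It only lets the counts agree (modulo the deterministic shift) with probability bounded below. On the complementary event they may differ on their natural fluctuation scale $\sqrt n$, so you only get $O(\sqrt n)$. Moreover, the estimates of \cref{l:1mt-balanced} hold only up to $e^{o(n)}$ factors, so they do not even give a bounded ratio. A global comparison would need a quantitative ratio $1+O(d\|L-\bar L\|/n)$ around a common centring, plus $\sqrt n$-scale concentration of $L$, uniformly in the type vector and in $\beta$.

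The paper avoids this with a sequential coupling (\cref{l:adjacent-row-sum-count-coupling}). Expose the partner type of one half-edge of the type that differs. The two one-step laws have ratio in $[e^{-8},e^{8}]$, since $|\beta|<2$ and the Hamming distances from the two differing types to any $\tau$ differ by at most $k\le 2$. So a maximal coupling agrees with probability at least $e^{-8}$, after which the residual laws coincide exactly. On disagreement the residual row-sum vectors are again adjacent. This gives $D_N\le 4+(1-e^{-8})D_{N-2}$, hence $D_N\le 4e^{8}$ uniformly in $N$, and $d$ iterations give $\E\|L-\widetilde L\|_1\le 4e^{8}d$. The bounded ratio has to be used one exposure step at a time, so that mismatches occur only a geometric number of times.

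The two-replica case is also not analogous with a single extra flip. The constraints $|R(\sigma,\allone)|<\varepsilon$, $|R(\sigma,{\sigma'}^*)|<\varepsilon$ and $R(\sigma,\sigma^*)\in I$ can be tight simultaneously. One flip at $i$ moves all three overlaps by $\pm\frac2n$, with signs dictated by $(\sigma_i,\sigma^*_i,{\sigma'}^*_i)$, and the class with the required sign pattern can be empty. The remark after \cref{f:hamming-nbrs-two-replica} exhibits a boundary configuration with no admissible neighbour at Hamming distance $1$.

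The paper instead uses Hamming-distance-$3$ moves, taking one coordinate from each of three pairwise classes, each of which pushes two of the overlaps in the desired direction. These classes have linear size because the pairwise overlaps are small and $|R(\sigma,\sigma^*)|\le 0.91$ whenever $R(\sigma,\sigma^*)$ is near an endpoint of $I$ other than $\pm1$. This is where $I\in\mathcal I$ enters, and it yields at least $n^3/10^6$ admissible neighbours. When it is ${\sigma'}^*$ that changes, do not flip $\sigma_u$, since that would disturb $R(\sigma,\sigma^*)\in I$; the identity map on the shrunk window already works.

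Finally, in the boundary step the Gaussian term changes by $2\sqrt t\,g_i$ at an arbitrary site $i$, which is not uniformly $O(1)$; the worst site gives order $\sqrt{t\log n}$. So the counting loss is not $O(1)$ pointwise. You must average: bound the boundary sum by $\frac{C}{n}e^{O(d|\beta|+t)}\sum_i e^{2\sqrt t|g_i|}$ times the interior sum, then apply Jensen's inequality to the logarithm.
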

\noindent From this, ~\cref{l:planted-fe-conc} now follows easily.
\begin{proof}[Proof of ~\cref{l:planted-fe-conc}]
  Note that for fixed $I$, ${\widehat F}_{n,\beta,t}(\sigma^*,I)$ depends on $\sigma^*$ only through $R(\sigma^*,\allone)$, and ${\widehat F}_{n,\beta,t}^2(\sigma^*,{\sigma'}^*,I)$ depends on $\sigma^*,{\sigma'}^*$ only through $R(\sigma^*,\allone)$, $R({\sigma'}^*,\allone)$, $R(\sigma^*,{\sigma'}^*)$.
  By ~\cref{p:fe-conc-conditional-mean-lipschitz}, the functions
  \begin{align*}
    R(\sigma^*,\allone) &\mapsto {\widehat F}_{n,\beta,t}(\sigma^*,I)\,, &
    (R(\sigma^*,\allone), R({\sigma'}^*,\allone), R(\sigma^*,{\sigma'}^*)) &\mapsto {\widehat F}_{n,\beta,t}^2(\sigma^*,{\sigma'}^*,I)
  \end{align*}
  are $C(1+t)$-Lipschitz, after possibly adjusting $C = C(d)$.
  By \eqref{e:fe-conc-around-conditional-mean-one-replica} from ~\cref{l:fe-conc-around-conditional-mean} and a union bound, with probability at least $1-2e^{-\upsilon n}$ under $\Pr^\plantedTag_{\beta,t}$,
  \[
    \left|
      ({\widetilde F}_{n,\beta,t}^{\sigma^*,G,{\boldsymbol{v}}}(I) - {\widetilde F}_{n,\beta,t}^{\sigma^*,G,{\boldsymbol{v}}}(I'))
      -({\widehat F}_{n,\beta,t}(\sigma^*,I)-{\widehat F}_{n,\beta,t}(\sigma^*,I'))
    \right|
    \le C(1+t)^{1/2}\upsilon^{1/2}\,.
  \]
  It remains to control the fluctuation of the conditional mean
  \[
    \Delta(\sigma^*) \equiv {\widehat F}_{n,\beta,t}(\sigma^*,I)-{\widehat F}_{n,\beta,t}(\sigma^*,I')\,.
  \]
  The above Lipschitz estimate shows that $\Delta(\sigma^*)$ is a $C(1+t)$-Lipschitz function of $R(\sigma^*,\allone)$.
  As noted in Remarks~\cref{r:planted-interpretation} and \cref{r:augmented-planted-interpretation}, the marginal laws of $\sigma^*$ under $\Pr^\plantedTag_{\beta}$ and $\Pr^\plantedTag_{\beta,t}$ coincide, and further equal the distribution described in ~\cref{l:1mt-balanced}~\cref{i:1mt-balanced-subgaussian}.
  Consequently, after increasing $C = C(d,\varepsilon)$, ~\cref{l:1mt-balanced}~\cref{i:1mt-balanced-subgaussian} implies
  \[
    \Pr^\plantedTag_{\beta,t}\left(
      |\Delta(\sigma^*)-\E^\plantedTag_{\beta,t}\Delta(\sigma^*)|
      \le C(1+t)\upsilon^{1/2}
    \right)
    \ge 1-e^{-\upsilon n}\,.
  \]
  Adjusting constants gives the desired concentration for
  ${\widetilde F}_{n,\beta,t}^{\sigma^*,G,{\boldsymbol{v}}}(I)-{\widetilde F}_{n,\beta,t}^{\sigma^*,G,{\boldsymbol{v}}}(I')$.
  Finally, \eqref{e:F-to-tF} converts this statement to the same statement for
  $F_{n,\beta,t}^{\sigma^*,G,{\boldsymbol{v}}}(I)-F_{n,\beta,t}^{\sigma^*,G,{\boldsymbol{v}}}(I')$.

  For part \cref{i:planted-fe-conc-two-replica}, apply \eqref{e:fe-conc-around-conditional-mean-two-replica} from ~\cref{l:fe-conc-around-conditional-mean} and repeat the same argument with
  \[
    \Delta_2(\sigma^*,{\sigma'}^*) \equiv
    {\widehat F}_{n,\beta,t}^2(\sigma^*,{\sigma'}^*,I)-{\widehat F}_{n,\beta,t}^2(\sigma^*,{\sigma'}^*,I').
  \]
  ~\cref{p:fe-conc-conditional-mean-lipschitz}~\cref{i:fe-conc-conditional-mean-lipschitz-two-replica} ensures $\Delta_2$ is a $C(1+t)$-Lipschitz function of
  $(R(\sigma^*,\allone),R({\sigma'}^*,\allone),R(\sigma^*,{\sigma'}^*))$, where $C = C(d)$. 
  Then, ~\cref{l:basic-mt}~\cref{i:basic-mt-annealed-2-subgaussian} gives the required concentration of this vector under $\Pr^{\plantedTag,2}_{\beta,t}$.
  Note that the $c$ given by ~\cref{l:basic-mt}~\cref{i:basic-mt-annealed-2-subgaussian} depends on $d,\delta,\varepsilon$, so that in the resulting concentration bound
  \[
    \Pr^{\plantedTag,2}_{\beta,t}\left(
      |\Delta_2(\sigma^*,{\sigma'}^*)-\E^{\plantedTag,2}_{\beta,t}\Delta_2(\sigma^*,{\sigma'}^*)|
      \le C(1+t)\upsilon^{1/2}
    \right)
    \ge 1-e^{-\upsilon n}\,,
  \]
  the parameter $C$ now depends on $d,\delta,\varepsilon$.
  Using again \eqref{e:F-to-tF} completes the proof.
\end{proof}

\subsection{Strong overlap bound from side information: rigorous argument}
\label{ss:planted-boosting-rigorous}

In this subsection we prove ~\cref{p:planted-boosting}.
We will first prove part \cref{i:planted-boosting-one-replica}, and then explain the changes needed for part \cref{i:planted-boosting-two-replica}.

The $R(\sigma^1,\sigma^2)$ term in ~\cref{l:fe-deriv} is nonnegative thanks to the following general fact.
\begin{fact}\label{f:replica-overlap-nonnegative}
  For all $\beta\in \R$, $t \ge 0$, all realizations of $G,{\boldsymbol{v}}$, and interval $I \in {\mathcal{I}}$,
  \[
    \mu_{G,\beta,t,<\varepsilon}^{{\boldsymbol{v}}}(R(\sigma^1,\sigma^2) \,|\, I) \ge 0
  \]
\begin{proof}
  Write $\bar \sigma = \mu_{G,\beta,t,<\varepsilon}^{{\boldsymbol{v}}}(\sigma \,|\, I)$ for the mean of $\mu_{G,\beta,t,<\varepsilon}^{{\boldsymbol{v}}}(\cdot \,|\, I)$.
  Then,
  \[
    \mu_{G,\beta,t,<\varepsilon}^{{\boldsymbol{v}}}(R(\sigma^1,\sigma^2) \,|\, I)
    = \frac1n \mu_{G,\beta,t,<\varepsilon}^{{\boldsymbol{v}}}(\langle \bar \sigma + (\sigma^1 - \bar \sigma),\bar \sigma + (\sigma^2 - \bar \sigma)\rangle \,|\, I)
    = \|\bar \sigma\|^2 / n \ge 0\,. \qedhere
  \]
\end{proof}
\end{fact}
We now assume the setting of ~\cref{p:planted-boosting}.
Let $\beta \in [2\beta_\lowerBoundTag, \betaKS - \delta]$, $\varepsilon > 0$, and $\upsilon>0$.
Let 
\[
  I_* = (-\varepsilon/10,\varepsilon/10)\,.
\]
Recall that for $0\le k \le k_\upperBoundTag \equiv \lfloor (\log \frac{1}{\varepsilon}) / (\log \frac32) \rfloor$, we defined $x_k = \min((3/2)^k \varepsilon, 0.9)$ and $I_k = [x_k,1]$.
Let
\begin{equation}\label{e:phik}
  \phi_k(t)
  = \E^\plantedTag_{\beta,t} F_{n,\beta,t}^{\sigma^*,G,{\boldsymbol{v}}}(I_*)
  - \E^\plantedTag_{\beta,t} F_{n,\beta,t}^{\sigma^*,G,{\boldsymbol{v}}}(I_k)\,.
\end{equation}
Let $L$ be the function from ~\cref{l:planted-fe-conc}; we will from now on suppress its dependence on $d,\delta,\varepsilon$, writing $L(t,\upsilon)$ for $L(d,\delta,\varepsilon,t,\upsilon)$.
Further define
\[
  {\widetilde L} (t,\upsilon) = \upsilon + L(t,\upsilon)\,.
\]
\begin{lemma}\label{l:boosting-init}
  If the hypothesis \eqref{e:planted-boosting-hypothesis} of ~\cref{p:planted-boosting}~\cref{i:planted-boosting-one-replica} holds, then $\phi_0(0) \le {\widetilde L}(0,2\upsilon)$.
\begin{proof}
  Assume for contradiction that $\phi_0(0) > {\widetilde L}(0,2\upsilon)$.
  Recall that $I_0 = [\varepsilon,1]$.
  By symmetry of $\mu_{G,\beta,<\varepsilon}$ over negation of $\sigma$,
  \[
    \E^\plantedTag_{\beta} \mu_{G,\beta,<\varepsilon}(|R(\sigma,\sigma^*)| \ge \varepsilon)
    = 2 \E^\plantedTag_{\beta} \mu_{G,\beta,<\varepsilon}(R(\sigma,\sigma^*) \ge \varepsilon)
    = 2 \E^\plantedTag_{\beta} \mu_{G,\beta,<\varepsilon}(R(\sigma,\sigma^*) \in I_0)\,.
  \]
  Then,
  \begin{align*}
    \E^\plantedTag_{\beta} \mu_{G,\beta,<\varepsilon}(R(\sigma,\sigma^*) \in I_0)
    &= \E^\plantedTag_{\beta,0} \left[\exp\left(
      n(F_{n,\beta,0}^{\sigma^*,G,{\boldsymbol{v}}}(I_0) - F_{n,\beta,0}^{\sigma^*,G,{\boldsymbol{v}}}([-1,1]) )
    \right)\right] \\
    &\le \E^\plantedTag_{\beta,0} \left[\exp\left(
      n(F_{n,\beta,0}^{\sigma^*,G,{\boldsymbol{v}}}(I_0) - F_{n,\beta,0}^{\sigma^*,G,{\boldsymbol{v}}}(I_*) )
    \right) \wedge 1 \right]\,.
  \end{align*}
  Define the event
  \[
    {\mathcal{E}} = \left\{
      |(F_{n,\beta,0}^{\sigma^*,G,{\boldsymbol{v}}}(I_0) - F_{n,\beta,0}^{\sigma^*,G,{\boldsymbol{v}}}(I_*))
      - \E^\plantedTag_{\beta,0}(F_{n,\beta,0}^{\sigma^*,G,{\boldsymbol{v}}}(I_0) - F_{n,\beta,0}^{\sigma^*,G,{\boldsymbol{v}}}(I_*))| \le L(0,2\upsilon)
    \right\}\,.
  \]
  By ~\cref{l:planted-fe-conc}~\cref{i:planted-fe-conc-one-replica}, we have $\Pr^\plantedTag_{\beta,0}({\mathcal{E}}) \ge 1-e^{-2\upsilon n}$.
  So,
  \begin{align*}
    \E^\plantedTag_{\beta} \mu_{G,\beta,<\varepsilon}(R(\sigma,\sigma^*) \in I_0)
    &\le e^{-2\upsilon n} + \E^\plantedTag_{\beta,0} \left[\allone\{{\mathcal{E}}\}\exp\left(
      n(F_{n,\beta,0}^{\sigma^*,G,{\boldsymbol{v}}}(I_0) - F_{n,\beta,0}^{\sigma^*,G,{\boldsymbol{v}}}(I_*) )
    \right) \right] \\
    &\le e^{-2\upsilon n} + \exp\left(
      n \E^\plantedTag_{\beta,0} (F_{n,\beta,0}^{\sigma^*,G,{\boldsymbol{v}}}(I_0) - F_{n,\beta,0}^{\sigma^*,G,{\boldsymbol{v}}}(I_*) + L(0,2\upsilon)
    \right) \\
    &\le e^{-2\upsilon n} + \exp\left(
      n (-\phi_0(0) + L(0,2\upsilon))
    \right)
    \le 2e^{-2\upsilon n}\,,
  \end{align*}
  where the last step uses the assumption $\phi_0(0) > {\widetilde L}(0,2\upsilon) = 2\upsilon + L(0,2\upsilon)$.
  Thus
  \[
    \E^\plantedTag_{\beta} \mu_{G,\beta,<\varepsilon}(|R(\sigma,\sigma^*)| \ge \varepsilon) \le 4e^{-2\upsilon n}\,,
  \]
  contradicting the hypothesis \eqref{e:planted-boosting-hypothesis} that
  \[
    \E^\plantedTag_{\beta} \mu_{G,\beta,<\varepsilon}(|R(\sigma,\sigma^*)| \ge \varepsilon)
    \ge e^{-\upsilon n}\,. \qedhere
  \]
\end{proof}
\end{lemma}
The following proposition formalizes Claim~\cref{cl:boosting-strategy} and is the core of our dynamical argument for ~\cref{p:planted-boosting}~\cref{i:planted-boosting-one-replica}.
\begin{proposition}\label{p:boosting-core}
  Let $t>0$ and suppose that
  \begin{equation}\label{e:fe-not-yet-overtake}
    \E^\plantedTag_{\beta,t} F_{n,\beta,t}^{\sigma^*,G,{\boldsymbol{v}}}([-1,1])
    < \E^\plantedTag_{\beta,t} F_{n,\beta,t}^{\sigma^*,G,{\boldsymbol{v}}}(I_*) + \upsilon\,.
  \end{equation}
  Then, for each $0\le k\le k_\upperBoundTag - 1$, either:
  \begin{enumerate}[label=(\roman*)]
    \item \label{i:formal-fe-make-progress} $\phi'_k(t) \le -\varepsilon/10$, or
    \item \label{i:formal-fe-increment-k} $\phi_{k+1}(t) \le {\widetilde L}(t, 2(\phi_k(t) + 3\upsilon))$.
  \end{enumerate}
  If $k = k_\upperBoundTag$, then outcome \cref{i:formal-fe-make-progress} holds.
\end{proposition}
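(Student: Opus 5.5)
We apply \cref{l:fe-deriv} to both $I_*$ and $I_k$ and compare. For $I_*$, \cref{l:fe-deriv} together with Fact~\cref{f:replica-overlap-nonnegative} gives
\[
  \frac{\differential}{\differential t}\E^\plantedTag_{\beta,t} F_{n,\beta,t}^{\sigma^*,G,{\boldsymbol{v}}}(I_*) \le \tfrac12(-1+\varepsilon/5).
\]
For $I_k=[x_k,1]$, \cref{l:fe-deriv} gives
\[
  \frac{\differential}{\differential t}\E^\plantedTag_{\beta,t} F_{n,\beta,t}^{\sigma^*,G,{\boldsymbol{v}}}(I_k) \ge \tfrac12\bigl(-1+2x_k-\E^\plantedTag_{\beta,t}\mu_{G,\beta,t,<\varepsilon}^{{\boldsymbol{v}}}(R(\sigma^1,\sigma^2)\,|\,I_k)\bigr).
\]
We split the replica overlap according to whether $R(\sigma^1,\sigma^2)\ge \tfrac32 x_k$. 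Using $R\le 1$, this gives
\[
  \mu(R(\sigma^1,\sigma^2)\,|\,I_k)\le \tfrac32 x_k + p_k, \qquad p_k\defeq \mu\bigl(R(\sigma^1,\sigma^2)\ge \tfrac32 x_k\,\big|\,I_k\bigr).
\]
Hence
\[
  \phi_k'(t)\le \tfrac12\bigl(\varepsilon/5 - x_k/2 + \E p_k\bigr).
\]
Since $x_k\ge\varepsilon$, it suffices to have $\E^\plantedTag_{\beta,t} p_k\le \varepsilon/10$, because then $\phi'_k\le -\varepsilon/10$ after adjusting constants. (If needed, one can use $\tfrac54$ instead of $\tfrac32$ to leave more slack; the constants are loose.) So either outcome \cref{i:formal-fe-make-progress} holds, or $\E p_k>\varepsilon/10$. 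It remains to show that the latter forces outcome \cref{i:formal-fe-increment-k}.

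Assume $\E p_k>\varepsilon/10$. We bound $p_k$ as in \eqref{e:planted-boosting-conditional-probability}:
\[
  p_k\le \frac{\mu\bigl(R(\sigma^1,\sigma^2)\ge\tfrac32x_k\bigr)}{\mu(R(\sigma,\sigma^*)\in I_k)^2}.
\]
The denominator equals $\exp\bigl(2n(F(I_k)-F([-1,1]))\bigr)$. By \eqref{e:fe-not-yet-overtake} and the definition of $\phi_k$,
\[
  \E F(I_k)\ge \E F([-1,1])-\upsilon-\phi_k(t).
\]
Applying \cref{l:planted-fe-conc} with parameter $\upsilon_1=2(\phi_k+3\upsilon)$ (to the pair $I_k,[-1,1]$, both in $\mathcal I$), on an event of probability $\ge1-e^{-\upsilon_1 n}$ the denominator is at least $e^{-2n(\phi_k+\upsilon+L)}$. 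Off this event we bound $p_k\le1$. Therefore
\[
  \varepsilon/10<\E p_k\le e^{-\upsilon_1 n}+e^{2n(\phi_k+\upsilon+L)}\,\E\mu\bigl(R(\sigma^1,\sigma^2)\ge\tfrac32x_k\bigr).
\]
By Nishimori \eqref{e:planted-boosting-nishimori}, the last expectation equals
\[
  \E\mu\bigl(R(\sigma,\sigma^*)\ge\tfrac32x_k\bigr)\le \E\exp\bigl(n(F(I_{k+1})-F(I_*))\bigr)\wedge1,
\]
using $[\tfrac32x_k,1]\subseteq I_{k+1}$ (the case $k+1=k_\upperBoundTag$ is exactly the remark after \eqref{e:xk}) and $F([-1,1])\ge F(I_*)$. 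Applying concentration again to the pair $(I_{k+1},I_*)$ with the same $\upsilon_1$, this is at most
\[
  e^{-\upsilon_1 n}+e^{-n(\phi_{k+1}-L)}.
\]
Combining, we get
\[
  \varepsilon/20\le e^{2n(\phi_k+\upsilon+L)}\bigl(e^{-\upsilon_1 n}+e^{-n(\phi_{k+1}-L)}\bigr).
\]
With $\upsilon_1=2(\phi_k+3\upsilon)$, the first product is $e^{n(2L-4\upsilon)}$, which is $o(1)$ provided $L$ absorbs suitably. If this is delicate, I would enlarge the concentration parameter so that $\upsilon_1 > 2(\phi_k+\upsilon+L)$. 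This yields
\[
  \phi_{k+1}\le 2\phi_k+2\upsilon+3L+o(1),
\]
which is of the form $\widetilde L(t,2(\phi_k+3\upsilon))=2(\phi_k+3\upsilon)+L(t,2(\phi_k+3\upsilon))$ up to constant bookkeeping in $L$. This is outcome \cref{i:formal-fe-increment-k}.

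For $k=k_\upperBoundTag$ we have $x_k=0.9$, so $\tfrac32x_k>1$ and $p_k=0$. The derivative bound then gives $\phi'_k\le\tfrac12(\varepsilon/5-0.45)\le-\varepsilon/10$ directly.

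The main obstacle is the exponent bookkeeping in the second paragraph. The concentration error $L$ appears with a factor $2$ from squaring the denominator, and the failure probability $e^{-\upsilon_1 n}$ must beat $e^{2n(\phi_k+\upsilon+L)}$. I would first try exactly the choice $\upsilon_1=2(\phi_k+3\upsilon)$ suggested by the statement, and check that $L=C(1+t)\upsilon_1^{1/2}$ is dominated when needed. If it is not, the fallback is to absorb the discrepancy into $\widetilde L$ by a constant-factor change of its definition, which the downstream dynamical argument tolerates.
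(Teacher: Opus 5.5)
Your derivative estimate $\phi_k'(t)\le\frac12\bigl(\varepsilon/5-x_k/2+\E p_k\bigr)$ is fine. So are the reduction to showing that $\E p_k>\varepsilon/10$ forces outcome (ii), the ratio bound on $p_k$, and the Nishimori step. (Minor point: $x_{k_{\upperBoundTag}}$ need not equal $0.9$, but it exceeds $2/3$, which is all you use.)

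The gap is the exponent bookkeeping you flag yourself, and neither of your fallbacks repairs it. You use the same level $\upsilon_1=2(\phi_k+3\upsilon)$ for the denominator event. So the window $L=L(t,\upsilon_1)=C(1+t)\upsilon_1^{1/2}$ enters the prefactor $e^{2n(\phi_k+\upsilon+L)}$, and the numerator's failure event then contributes $e^{n(2L-4\upsilon)}$. Under \eqref{e:fe-not-yet-overtake} one has $\phi_k>-\upsilon$, so $\upsilon_1>4\upsilon$ and $L>2C(1+t)\upsilon^{1/2}$. This exceeds $2\upsilon$ for all small $\upsilon$, so the term is exponentially large and your inequality $\varepsilon/20\le\cdots$ says nothing about $\phi_{k+1}$.

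Raising $\upsilon_1$ until $\upsilon_1>2(\phi_k+\upsilon+L(t,\upsilon_1))$ forces $\upsilon_1^{1/2}>2C(1+t)$. Your bound on $\phi_{k+1}$ then acquires a term of order $C^2(1+t)^2$ that does not vanish as $\phi_k,\upsilon\downarrow 0$. That is not outcome (ii), and no constant-factor redefinition of $\widetilde L$ makes it one. It would also destroy $\lim_{\upsilon\downarrow0}t_{\upperBoundTag}=0$, which the proof of \cref{p:truncation-planted} relies on.

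The missing idea is that the two concentration events play different roles and need different levels. The denominator event enters only additively, since $p_k\le 1$. Its level can therefore be an arbitrarily small $\iota=\iota(d,t,\upsilon)$ with $L(t,\iota)\le\upsilon$. Off an event of probability $e^{-\iota n}$ you then get $\mu(R(\sigma,\sigma^*)\in I_k)\ge e^{-n(\phi_k+2\upsilon)}$, with no $\phi_k$-dependent window in the prefactor.

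Only the numerator event needs a failure exponent that beats the prefactor $e^{2n(\phi_k+2\upsilon)}$. The level $2(\phi_k+3\upsilon)$ does so with margin $e^{-2\upsilon n}$, and its window $L(t,2(\phi_k+3\upsilon))$ is exactly the one appearing in $\widetilde L$. With this change your contrapositive gives $\varepsilon/10<e^{-\iota n}+e^{-2\upsilon n}+e^{n(2\phi_k+4\upsilon+L(t,2(\phi_k+3\upsilon))-\phi_{k+1})}$. Hence $\phi_{k+1}\le \widetilde L(t,2(\phi_k+3\upsilon))$ for large $n$, which is (ii) exactly.

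This is the paper's argument. It is packaged in \cref{l:boosting-core-prep} as: if (ii) fails then $\E p_k\le e^{-\iota n}+2e^{-2\upsilon n}$, after which your derivative bound gives (i).
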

The proof of this proposition appears below, after the following preparatory lemma.
\begin{lemma}\label{l:boosting-core-prep}
  Let $t>0$ and $0\le k\le k_\upperBoundTag - 1$.
  If \eqref{e:fe-not-yet-overtake} holds and outcome \cref{i:formal-fe-increment-k} of ~\cref{p:boosting-core} does not hold, then
  \begin{equation}\label{e:boosting-core-prep}
    \E^\plantedTag_{\beta,t} \mu_{G,\beta,t,<\varepsilon}^{{\boldsymbol{v}}}\left(R(\sigma^1,\sigma^2) \ge \frac32 x_k \,|\, I_k\right) \le e^{-\iota n}
  \end{equation}
  for some $\iota = \iota(d,t,\upsilon) > 0$.
\begin{proof}
  Recall that $x_{k+1} \le \frac32 x_k$ for all $0\le k\le k_{\upperBoundTag} - 1$.
  We have
  \begin{align}
    \nonumber
    \E^\plantedTag_{\beta,t} \mu_{G,\beta,t,<\varepsilon}^{{\boldsymbol{v}}}\left(R(\sigma^1,\sigma^2) \ge \frac32 x_k \,|\, I_k\right) 
    &\le \E^\plantedTag_{\beta,t} \mu_{G,\beta,t,<\varepsilon}^{{\boldsymbol{v}}}\left(R(\sigma^1,\sigma^2) \ge x_{k+1} \,|\, I_k\right) \\
    \label{e:boosting-core-prep-step1}
    &\le \E^\plantedTag_{\beta,t} \left[
      \frac{
        \mu_{G,\beta,t,<\varepsilon}^{{\boldsymbol{v}}}\left(R(\sigma^1,\sigma^2) \in I_{k+1}\right)
      }{
        \mu_{G,\beta,t,<\varepsilon}^{{\boldsymbol{v}}}\left(R(\sigma,\sigma^*) \in I_k\right)^2
      } \wedge 1
    \right]\,.
  \end{align}
  Define the event
  \[
    {\mathcal{E}}_1 = \left\{
      |(F_{n,\beta,t}^{\sigma^*,G,{\boldsymbol{v}}}(I_k) - F_{n,\beta,t}^{\sigma^*,G,{\boldsymbol{v}}}([-1,1]))
      - \E^\plantedTag_{\beta,t}(F_{n,\beta,t}^{\sigma^*,G,{\boldsymbol{v}}}(I_k) - F_{n,\beta,t}^{\sigma^*,G,{\boldsymbol{v}}}([-1,1]))|
      \le \upsilon
    \right\}\,.
  \]
  If we set $\iota$ small enough that $L(t,\iota) \le \upsilon$, ~\cref{l:planted-fe-conc}~\cref{i:planted-fe-conc-one-replica} yields $\Pr^\plantedTag_{\beta,t}({\mathcal{E}}_1) \ge 1-e^{-\iota n}$.
  On event ${\mathcal{E}}_1$,
  \begin{align*}
    \mu_{G,\beta,t,<\varepsilon}^{{\boldsymbol{v}}}\left(R(\sigma,\sigma^*) \in I_k\right)
    &= \exp\left(
      n(F_{n,\beta,t}^{\sigma^*,G,{\boldsymbol{v}}}(I_k) - F_{n,\beta,t}^{\sigma^*,G,{\boldsymbol{v}}}([-1,1]) )
    \right) \\
    &\ge \exp\left(
      n\E^\plantedTag_{\beta,t} (F_{n,\beta,t}^{\sigma^*,G,{\boldsymbol{v}}}(I_k) - F_{n,\beta,t}^{\sigma^*,G,{\boldsymbol{v}}}([-1,1]) - \upsilon)
    \right) \\
    &\ge \exp\left(
      n\E^\plantedTag_{\beta,t} (F_{n,\beta,t}^{\sigma^*,G,{\boldsymbol{v}}}(I_k) - F_{n,\beta,t}^{\sigma^*,G,{\boldsymbol{v}}}(I_*) - 2\upsilon)
    \right)
    = \exp\left(
      -n(\phi_k(t) + 2\upsilon)
    \right)\,.
  \end{align*}
  Plugging into \eqref{e:boosting-core-prep-step1} shows
  \begin{align}
    \nonumber
    \E^\plantedTag_{\beta,t} \mu_{G,\beta,t,<\varepsilon}^{{\boldsymbol{v}}}\left(R(\sigma^1,\sigma^2) \ge \frac32 x_k \,|\, I_k\right)
    &\le e^{-\iota n} + \E^\plantedTag_{\beta,t} \left[
      \allone\{{\mathcal{E}}_1\}
      \frac{
        \mu_{G,\beta,t,<\varepsilon}^{{\boldsymbol{v}}}\left(R(\sigma^1,\sigma^2) \in I_{k+1}\right)
      }{
        \mu_{G,\beta,t,<\varepsilon}^{{\boldsymbol{v}}}\left(R(\sigma,\sigma^*) \in I_k\right)^2
      }
    \right] \\
    \label{e:boosting-core-prep-step2}
    &\le e^{-\iota n} + e^{2(\phi_k(t) + 2\upsilon)n} 
    \E^\plantedTag_{\beta,t} \mu_{G,\beta,t,<\varepsilon}^{{\boldsymbol{v}}}\left(R(\sigma^1,\sigma^2) \in I_{k+1}\right)\,.
  \end{align}
  The remaining expectation can be bounded by Nishimori's property and calculations similar to ~\cref{l:boosting-init}:
  \begin{align*}
    \E^\plantedTag_{\beta,t} \mu_{G,\beta,t,<\varepsilon}^{{\boldsymbol{v}}}\left(R(\sigma^1,\sigma^2) \in I_{k+1}\right)
    &= \E^\plantedTag_{\beta,t} \mu_{G,\beta,t,<\varepsilon}^{{\boldsymbol{v}}}\left(R(\sigma,\sigma^*) \in I_{k+1}\right) \\
    &= \E^\plantedTag_{\beta,t} \left[\exp\left(
      n(F_{n,\beta,t}^{\sigma^*,G,{\boldsymbol{v}}}(I_{k+1}) - F_{n,\beta,t}^{\sigma^*,G,{\boldsymbol{v}}}([-1,1]) )
    \right)\right] \\
    &\le \E^\plantedTag_{\beta,t} \left[\exp\left(
      n(F_{n,\beta,t}^{\sigma^*,G,{\boldsymbol{v}}}(I_{k+1}) - F_{n,\beta,t}^{\sigma^*,G,{\boldsymbol{v}}}(I_*) )
    \right) \wedge 1 \right]\,.
  \end{align*}
  Define the event
  \[
    {\mathcal{E}}_2 = \left\{
      |(F_{n,\beta,t}^{\sigma^*,G,{\boldsymbol{v}}}(I_{k+1}) - F_{n,\beta,t}^{\sigma^*,G,{\boldsymbol{v}}}(I_*))
      - \E^\plantedTag_{\beta,t}(F_{n,\beta,t}^{\sigma^*,G,{\boldsymbol{v}}}(I_{k+1}) - F_{n,\beta,t}^{\sigma^*,G,{\boldsymbol{v}}}(I_*))| \le L(t,2(\phi_k(t) + 3\upsilon))
    \right\}\,.
  \]
  By definition of $L$, we have $\Pr^\plantedTag_{\beta,t}({\mathcal{E}}_2) \ge 1-e^{-2(\phi_k(t) + 3\upsilon) n}$.
  So,
  \begin{align*}
    &\E^\plantedTag_{\beta,t} \mu_{G,\beta,t,<\varepsilon}^{{\boldsymbol{v}}}\left(R(\sigma^1,\sigma^2) \in I_{k+1}\right) \\
    &\le e^{-2(\phi_k(t) + 3\upsilon) n} + \E^\plantedTag_{\beta,t} \left[\allone\{{\mathcal{E}}_2\}\exp\left(
      n(F_{n,\beta,t}^{\sigma^*,G,{\boldsymbol{v}}}(I_{k+1}) - F_{n,\beta,t}^{\sigma^*,G,{\boldsymbol{v}}}(I_*) )
    \right) \right] \\
    &\le e^{-2(\phi_k(t) + 3\upsilon) n} + \exp\left(
      n \E^\plantedTag_{\beta,t} (F_{n,\beta,t}^{\sigma^*,G,{\boldsymbol{v}}}(I_{k+1}) - F_{n,\beta,t}^{\sigma^*,G,{\boldsymbol{v}}}(I_*) + L(t,2(\phi_k(t) + 3\upsilon))
    \right) \\
    &\le e^{-2(\phi_k(t) + 3\upsilon) n} + \exp\left(
      n (-\phi_{k+1}(t) + L(t,2(\phi_k(t) + 3\upsilon)))
    \right)
    \le 2e^{-2(\phi_k(t) + 3\upsilon) n}\,,
  \end{align*}
  where the last step uses the assumption 
  \[
    \phi_{k+1}(t) > {\widetilde L}(t,2(\phi_k(t) + 3\upsilon)) = 2(\phi_k(t) + 3\upsilon) + L(t,2(\phi_k(t) + 3\upsilon))\,.
  \]
  Plugging into \eqref{e:boosting-core-prep-step2} shows
  \[
    \E^\plantedTag_{\beta,t} \mu_{G,\beta,t,<\varepsilon}^{{\boldsymbol{v}}}\left(R(\sigma^1,\sigma^2) \ge \frac32 x_k \,|\, I_k\right)
    \le e^{-\iota n} + e^{2(\phi_k(t) + 2\upsilon)n} \cdot 2e^{-2(\phi_k(t) + 3\upsilon) n}
    = e^{-\iota n} + 2e^{-2\upsilon n}\,.
  \]
  This implies the conclusion after adjusting $\iota$.
\end{proof}
\end{lemma}
\begin{proof}[Proof of ~\cref{p:boosting-core}]
  If $0\le k\le k_\upperBoundTag - 1$ and outcome \cref{i:formal-fe-increment-k} of this proposition holds, there is nothing to argue.
  If $0\le k\le k_\upperBoundTag - 1$ and outcome \cref{i:formal-fe-increment-k} does not hold, ~\cref{l:boosting-core-prep} implies that \eqref{e:boosting-core-prep} holds.
  If $k = k_\upperBoundTag$, we have $\frac32 x_k > 1$, so \eqref{e:boosting-core-prep} holds trivially.
  So, in all cases we may assume \eqref{e:boosting-core-prep} holds.
  We will show this implies outcome \cref{i:formal-fe-make-progress} of this proposition.
  By ~\cref{l:fe-deriv},
  \begin{align*}
    \phi'_k(t) &= \frac12 \Big(
      2\E^\plantedTag_{\beta,t} \mu_{G,\beta,t,<\varepsilon}^{{\boldsymbol{v}}}(R(\sigma,\sigma^*) \,|\, I_*)
      - 2\E^\plantedTag_{\beta,t} \mu_{G,\beta,t,<\varepsilon}^{{\boldsymbol{v}}}(R(\sigma,\sigma^*) \,|\, I_k) \\
      &\qquad - \E^\plantedTag_{\beta,t} \mu_{G,\beta,t,<\varepsilon}^{{\boldsymbol{v}}}(R(\sigma^1,\sigma^2) \,|\, I_*)
      + \E^\plantedTag_{\beta,t} \mu_{G,\beta,t,<\varepsilon}^{{\boldsymbol{v}}}(R(\sigma^1,\sigma^2) \,|\, I_k)
    \Big) \\
    &\le \frac12 \left(
      2 \cdot \frac{\varepsilon}{10}
      - 2 x_k
      + \E^\plantedTag_{\beta,t} \mu_{G,\beta,t,<\varepsilon}^{{\boldsymbol{v}}}(R(\sigma^1,\sigma^2) \,|\, I_k)
    \right)\,,
  \end{align*}
  where we used Fact~\cref{f:replica-overlap-nonnegative} to bound the negative term containing $R(\sigma^1,\sigma^2)$.
  This last expectation can be bounded by
  \[
    \E^\plantedTag_{\beta,t} \mu_{G,\beta,t,<\varepsilon}^{{\boldsymbol{v}}}(R(\sigma^1,\sigma^2) \,|\, I_k)
    \le \frac32 x_k + \E^\plantedTag_{\beta,t} \mu_{G,\beta,t,<\varepsilon}^{{\boldsymbol{v}}}\left(R(\sigma^1,\sigma^2) \ge \frac32 x_k \,|\, I_k\right)
    \le \frac32 x_k + e^{-\iota n}\,,
  \]
  for $\iota$ given by ~\cref{l:boosting-core-prep}.
  This implies (recalling $x_k \ge x_0 = \varepsilon$)
  \[
    \phi'_k(t) \le \frac12 \left( 2\cdot \frac{\varepsilon}{10} - \frac12 x_k + e^{-\iota n} \right)
    \le \frac12 \left( \frac{\varepsilon}{5} - \frac{\varepsilon}{2} + e^{-\iota n} \right) \le -\frac{\varepsilon}{10}\,,
  \]
  so outcome \cref{i:formal-fe-make-progress} holds.
\end{proof}
We now define the main dynamical process in the proof of ~\cref{p:planted-boosting}~\cref{i:planted-boosting-one-replica}.
\begin{definition}\label{d:dynamical-process}
  Let $\tau \ge 0$ denote the minimum $t \ge 0$ such that 
  \begin{equation}\label{e:formal-fe-overtake}
    \E^\plantedTag_{\beta,t} F_{n,\beta,t}^{\sigma^*,G,{\boldsymbol{v}}}([-1,1])
    \ge \E^\plantedTag_{\beta,t} F_{n,\beta,t}^{\sigma^*,G,{\boldsymbol{v}}}(I_*) + \upsilon\,,
  \end{equation}
  i.e. \eqref{e:fe-not-yet-overtake} does \textbf{not} hold.
  We set $\tau = \infty$ if no such $t$ exists.
  Note that both sides of \eqref{e:formal-fe-overtake} are continuous in $t$, so the set of $t$ satisfying it has a well-defined minimum if it is nonempty.

  We next define an increasing sequence $0 = \tau_0 \le \cdots \le \tau_{k_\upperBoundTag+1} = \tau$, as follows.
  For $0\le k \le k_\upperBoundTag-1$, define $\tau_{k+1}$ recursively as the smallest $t \in [\tau_k, \tau]$ such that
  \begin{equation}\label{e:formal-fe-increment-k}
    \phi_{k+1}(t) \le {\widetilde L}(t, 2(\phi_k(t) + 3\upsilon))\,.
  \end{equation}
  i.e. outcome \cref{i:formal-fe-increment-k} in ~\cref{p:boosting-core} holds.
  (It is possible that $\tau_{k+1} = \tau_k$.)
  We set $\tau_{k+1} = \tau$ (and $\tau_{k'} = \tau$ for all $k' > k+1$) if no such $t$ exists.
  Similarly to above, the set of $t \in [\tau_k, \tau]$ satisfying \eqref{e:formal-fe-increment-k} has a minimum if it is nonempty.
  The intervals $[\tau_k,\tau_{k+1})$ for $0\le k\le k_\upperBoundTag$ partition $[0,\tau)$.
\end{definition}
The following lemma upper bounds the $\tau_k$ and $\tau$.
\begin{lemma}\label{l:boosting-recursive-ubs}
  Recursively define the (increasing) sequences $(t_k)_{0\le k\le k_\upperBoundTag + 1}$ and $(b_k)_{0\le k\le k_\upperBoundTag}$ by $t_0 = 0$, $b_0 = {\widetilde L}(0, 2\upsilon)$ and for $k\ge 1$,
  \begin{align*}
    t_k &= t_{k-1} + \frac{10}{\varepsilon} (b_{k-1} + \upsilon)\,, & 
    b_k &= {\widetilde L}(t_k, 2(b_{k-1} + 3\upsilon))\,.
  \end{align*}
  Let $k_* \in \{0,\ldots,k_\upperBoundTag+1\}$ be the smallest number for which $\tau_{k_*} = \tau$.
  Then, for all $1\le k\le k_*$ we have $\tau_k \le t_k$, and for all $0\le k\le k_*-1$ we have $\phi_k(\tau_k) \le b_k$.
  In particular $\tau = \tau_{k_*} \le t_{k_\upperBoundTag+1}$.
\begin{proof}
  We proceed by induction on $k$.
  The base case $k=0$ holds because $\tau_0 = 0 = t_0$ by definition, and ~\cref{l:boosting-init} ensures 
  \[
    \phi_0(\tau_0 = 0) \le {\widetilde L}(0,2\upsilon) = b_0\,.
  \]
  Now, assume for induction that we have proven the lemma up to $k \le k_* - 1$.

  Suppose further that $k \le k_{\upperBoundTag} - 1$.
  For $t\in [\tau_k, \tau_{k+1})$, \eqref{e:formal-fe-increment-k} does not hold by definition of $\tau_{k+1}$, and \eqref{e:formal-fe-overtake} does not hold because $t < \tau_{k+1} \le \tau_{k_*} = \tau$.
  It follows that the hypotheses of ~\cref{p:boosting-core} hold and its outcome \cref{i:formal-fe-increment-k} does not hold.
  So, its outcome \cref{i:formal-fe-make-progress} holds, i.e. $\phi'_k(t) \le -\varepsilon/10$.

  Otherwise, if $k \ge k_{\upperBoundTag}$, then the bounds $k \le k_* - 1$ and $k_* \le k_\upperBoundTag + 1$ imply $k = k_{\upperBoundTag}$.
  In this case, ~\cref{p:boosting-core} directly implies outcome \cref{i:formal-fe-make-progress} holds, i.e. $\phi'_k(t) \le -\varepsilon/10$.

  So, in all cases $\phi'_k(t) \le -\varepsilon/10$.
  Then, for all $t\in [\tau_k, \tau_{k+1})$,
  \begin{equation}\label{e:boosting-recursion-b}
    \phi_k(t)
    \le \phi_k(\tau_k) - \frac{\varepsilon}{10} (t - \tau_k)
    \le b_k - \frac{\varepsilon}{10} (t - \tau_k)
  \end{equation}
  by the inductive hypothesis.
  Furthermore,
  \begin{align*}
    \phi_k(t)
    &= \E^\plantedTag_{\beta,t} F_{n,\beta,t}^{\sigma^*,G,{\boldsymbol{v}}}(I_*)
    - \E^\plantedTag_{\beta,t} F_{n,\beta,t}^{\sigma^*,G,{\boldsymbol{v}}}(I_k) \\
    &\ge \E^\plantedTag_{\beta,t} F_{n,\beta,t}^{\sigma^*,G,{\boldsymbol{v}}}(I_*)
    - \E^\plantedTag_{\beta,t} F_{n,\beta,t}^{\sigma^*,G,{\boldsymbol{v}}}([-1,1])
    \ge -\upsilon
  \end{align*}
  because \eqref{e:formal-fe-overtake} does not hold.
  Together with \eqref{e:boosting-recursion-b} and the inductive hypothesis this implies
  \[
    t
    \le \tau_k + \frac{10}{\varepsilon} (b_k + \upsilon)
    \le t_k + \frac{10}{\varepsilon} (b_k + \upsilon)
    = t_{k+1}\,.
  \]
  Since this holds for all $t\in [\tau_k, \tau_{k+1})$, we conclude $\tau_{k+1} \le t_{k+1}$.
  Finally, if $k+1 \le k_* - 1$, we have
  \begin{align*}
    \phi_{k+1}(\tau_{k+1})
    &\le {\widetilde L}(\tau_{k+1}, 2(\phi_k(\tau_{k+1}) + 3\upsilon)) \\
    &\le {\widetilde L}(\tau_{k+1}, 2(\phi_k(\tau_k) + 3\upsilon))
    \le {\widetilde L}(t_{k+1}, 2(b_k + 3\upsilon))
    = b_{k+1}\,,
  \end{align*}
  where the three inequalities are by \eqref{e:formal-fe-increment-k}, decreasingness of $\phi_k$ on $[\tau_k,\tau_{k+1}]$ (which we just showed), and increasingness of ${\widetilde L}$ in both inputs.
  This completes the induction.
\end{proof}
\end{lemma}

\begin{proof}[Proof of ~\cref{p:planted-boosting}~\cref{i:planted-boosting-one-replica}]
  Set $t_\upperBoundTag = t_\upperBoundTag(d,\varepsilon,\upsilon) = t_{k_\upperBoundTag+1}$.
  Inspecting the recursions in the statement of ~\cref{l:boosting-recursive-ubs} reveals that for fixed $d,\varepsilon$, we have $\lim_{\upsilon \downarrow 0} t_\upperBoundTag(d,\varepsilon,\upsilon) = 0$ as desired.

  By ~\cref{l:boosting-recursive-ubs}, there exists $\tau \le t_\upperBoundTag$ such that \eqref{e:formal-fe-overtake} holds.
  Let $t = \tau$, and consider the event
  \[
    {\mathcal{E}} = \left\{
      |(F_{n,\beta,t}^{\sigma^*,G,{\boldsymbol{v}}}(I_*) - F_{n,\beta,t}^{\sigma^*,G,{\boldsymbol{v}}}([-1,1]))
      - \E^\plantedTag_{\beta,t}(F_{n,\beta,t}^{\sigma^*,G,{\boldsymbol{v}}}(I_*) - F_{n,\beta,t}^{\sigma^*,G,{\boldsymbol{v}}}([-1,1]))| \le \upsilon/2
    \right\}\,.
  \]
  By ~\cref{l:planted-fe-conc}~\cref{i:planted-fe-conc-one-replica}, $\Pr^\plantedTag_{\beta,t}({\mathcal{E}}) = 1-o_n(1)$ (and in fact this probability is $1-e^{-\iota n}$ for some $\iota > 0$ chosen so that $L(t,\iota) \le \upsilon/2$).
  Then, recalling $I_* = (-\varepsilon/10,\varepsilon/10)$,
  \begin{align*}
    \E^\plantedTag_{\beta,t} \mu_{G,\beta,t,<\varepsilon}^{{\boldsymbol{v}}}(|R(\sigma,\sigma^*)| < \varepsilon/10)
    &= o_n(1) + \E^\plantedTag_{\beta,t} \left[
      \allone\{{\mathcal{E}}\} \mu_{G,\beta,t,<\varepsilon}^{{\boldsymbol{v}}}(R(\sigma,\sigma^*) \in I_*)
    \right] \\
    &= o_n(1) + \E^\plantedTag_{\beta,t} \left[
      \allone\{{\mathcal{E}}\} \exp\left(
        n(F_{n,\beta,t}^{\sigma^*,G,{\boldsymbol{v}}}(I_*) - F_{n,\beta,t}^{\sigma^*,G,{\boldsymbol{v}}}([-1,1]))
      \right)
    \right] \\
    &\le o_n(1) + \exp\left(
      n \E^\plantedTag_{\beta,t}(F_{n,\beta,t}^{\sigma^*,G,{\boldsymbol{v}}}(I_*) - F_{n,\beta,t}^{\sigma^*,G,{\boldsymbol{v}}}([-1,1]) + \upsilon/2)
    \right) \\
    &\le o_n(1) + e^{-\upsilon n /2}\,,
  \end{align*}
  where the last estimate uses \eqref{e:formal-fe-overtake}.
  This implies the desired estimate \eqref{e:planted-boosting-conclusion}.
\end{proof}
The proof of ~\cref{p:planted-boosting}~\cref{i:planted-boosting-two-replica} is essentially analogous, and argues through the below ``two-replica'' versions of the preceding lemmas and propositions.
We thus only give an outline of the proof.
Recall $Z_{G,\beta,t}^{2,\sigma^*,{\sigma'}^*,{\boldsymbol{v}}}(I)$ and $F_{n,\beta,t}^{2,\sigma^*,{\sigma'}^*,G,{\boldsymbol{v}}}(I)$ defined in \eqref{e:restricted-Z-and-F-to-I-two-replica}, and $\mu_{G,\beta,t,<\varepsilon}^{2,{\sigma'}^*,{\boldsymbol{v}}}(f \,|\, I)$ defined in \eqref{e:gibs-avg-of-function-two-replica}.
\begin{lemma}\label{l:fe-deriv-two-replica}
  For any $\beta \in \R$, $t > 0$, and interval $I \in {\mathcal{I}}$, we have
  \[
    \frac{\differential}{\differential t} \E^{\plantedTag,2}_{\beta,t} F_{n,\beta,t}^{2,\sigma^*,{\sigma'}^*,G,{\boldsymbol{v}}}(I)
    = \frac12 \left(
      -1 + 2\E^{\plantedTag,2}_{\beta,t} \mu_{G,\beta,t,<\varepsilon}^{2,{\sigma'}^*,{\boldsymbol{v}}}(R(\sigma,\sigma^*) \,|\, I)
      - \E^{\plantedTag,2}_{\beta,t} \mu_{G,\beta,t,<\varepsilon}^{2,{\sigma'}^*,{\boldsymbol{v}}}(R(\sigma^1,\sigma^2) \,|\, I)
    \right)\,.
  \]
\begin{proof}
  Analogous to ~\cref{l:fe-deriv}.
\end{proof}
\end{lemma}
\begin{fact}\label{f:replica-overlap-nonnegative-two-replica}
  For all $\beta\in \R$, $t \ge 0$, all realizations of ${\sigma'}^*,G,{\boldsymbol{v}}$, and interval $I \in {\mathcal{I}}$,
  \[
    \mu_{G,\beta,t,<\varepsilon}^{2,{\sigma'}^*,{\boldsymbol{v}}}(R(\sigma^1,\sigma^2) \,|\, I) \ge 0\,.
  \]
\begin{proof}
  Analogous to Fact~\cref{f:replica-overlap-nonnegative}.
\end{proof}
\end{fact}
Similarly to $\phi_k$ defined in \eqref{e:phik}, let
\[
  \psi_k(t)
  = \E^{\plantedTag,2}_{\beta,t} F_{n,\beta,t}^{2,\sigma^*,{\sigma'}^*,G,{\boldsymbol{v}}}(I_*)
  - \E^{\plantedTag,2}_{\beta,t} F_{n,\beta,t}^{2,\sigma^*,{\sigma'}^*,G,{\boldsymbol{v}}}(I_k)\,.
\]
\begin{lemma}\label{l:boosting-init-two-replica}
  If the hypothesis \eqref{e:planted-boosting-hypothesis-two-replica} of ~\cref{p:planted-boosting}~\cref{i:planted-boosting-two-replica} holds, then $\psi_0(0) \le {\widetilde L}(0,2\upsilon)$.
\begin{proof}
  Analogous to ~\cref{l:boosting-init}, using ~\cref{l:planted-fe-conc}~\cref{i:planted-fe-conc-two-replica} in place of ~\cref{l:planted-fe-conc}~\cref{i:planted-fe-conc-one-replica}.
\end{proof}
\end{lemma}
\begin{proposition}\label{p:boosting-core-two-replica}
  Let $t>0$ and suppose that
  \[
    \E^{\plantedTag,2}_{\beta,t} F_{n,\beta,t}^{2,\sigma^*,{\sigma'}^*,G,{\boldsymbol{v}}}([-1,1])
    < \E^{\plantedTag,2}_{\beta,t} F_{n,\beta,t}^{2,\sigma^*,{\sigma'}^*,G,{\boldsymbol{v}}}(I_*) + \upsilon\,.
  \]
  Then, for each $0\le k\le k_\upperBoundTag - 1$, either:
  \begin{enumerate}[label=(\roman*)]
    \item \label{i:formal-fe-make-progress-two-replica} $\psi'_k(t) \le -\varepsilon/10$, or
    \item \label{i:formal-fe-increment-k-two-replica} $\psi_{k+1}(t) \le {\widetilde L}(t, 2(\psi_k(t) + 3\upsilon))$.
  \end{enumerate}
  If $k = k_\upperBoundTag$, then outcome \cref{i:formal-fe-make-progress-two-replica} holds.
\begin{proof}
  Analogous to ~\cref{p:boosting-core}, using ~\cref{l:fe-deriv-two-replica}, Fact~\cref{f:replica-overlap-nonnegative-two-replica}, and ~\cref{l:planted-fe-conc}~\cref{i:planted-fe-conc-two-replica} in place of ~\cref{l:fe-deriv}, Fact~\cref{f:replica-overlap-nonnegative}, and ~\cref{l:planted-fe-conc}~\cref{i:planted-fe-conc-one-replica}.
\end{proof}
\end{proposition}
\begin{proof}[Proof of ~\cref{p:planted-boosting}~\cref{i:planted-boosting-two-replica}]
  This is analogous to the proof of part \cref{i:planted-boosting-one-replica} above, using ~\cref{l:planted-fe-conc}~\cref{i:planted-fe-conc-two-replica},\linebreak[4] ~\cref{l:boosting-init-two-replica} and ~\cref{p:boosting-core-two-replica} in place of ~\cref{l:planted-fe-conc}~\cref{i:planted-fe-conc-one-replica}, ~\cref{l:boosting-init} and ~\cref{p:boosting-core}.
  In particular, we can define the sequence $\tau_k$ like in ~\cref{d:dynamical-process}, but with $\E^{\plantedTag,2}_{\beta,t} F_{n,\beta,t}^{2,\sigma^*,{\sigma'}^*,G,{\boldsymbol{v}}}$ in place of $\E^\plantedTag_{\beta,t} F_{n,\beta,t}^{\sigma^*,G,{\boldsymbol{v}}}$ and $\psi_k$ in place of $\phi_k$.
  Then the bound from ~\cref{l:boosting-recursive-ubs} holds verbatim.
  The rest of the proof is analogous.
\end{proof}

\subsection{Completion of truncation estimates}
\label{ss:truncation-completion}

We now complete the proofs of ~\cref{t:truncation} and Corollary~\cref{c:truncation}.

\begin{proof}[Proof of ~\cref{t:truncation}]
  Throughout this proof, we will write $\E^\nullModelTag$ for expectation with respect to $G \sim \configurationModel(n,d)$, and $\Pr^\plantedTag_{\beta}$, $\E^\plantedTag_{\beta}$ for probability and expectation with respect to the planted distribution $\nu^\plantedTag_{\beta}$ from ~\cref{d:planted-model}.

  Recall that $\varepsilon_0 = \varepsilon_0(d,\delta)$ is given by ~\cref{l:basic-mt}.
  Let $c(d,\delta,\varepsilon)$ be the function given by ~\cref{p:truncation}, and let 
  \[
    c = c(d,\delta,\varepsilon) / 7\,.
  \]
  Consider any $\beta \in [2\beta_\lowerBoundTag, \betaKS - \delta]$.
  The conclusion \eqref{e:truncation-O} of ~\cref{p:truncation}~\cref{i:truncation-one-replica} is equivalent to
  \begin{equation}\label{e:truncation-input}
    \Pr^\plantedTag_{\beta}(G\not\in {\mathcal{O}}_{\beta,7c}) \le e^{-7cn}\,.
  \end{equation}
  Likewise, the desired conclusion \eqref{e:truncation} of the current theorem is equivalent to
  \begin{equation}\label{e:truncation-goal}
    \Pr^\plantedTag_{\beta}(G\not\in {\mathcal{E}}_{\beta,\varepsilon,c}) \le e^{-cn}\,.
  \end{equation}
  Recall for convenience that
  \begin{align*}
    {\mathcal{O}}_{\beta,7c} &= \left\{
      \mu_{G,\beta,<\varepsilon}(|R(\sigma^1,\sigma^2)| \ge \varepsilon) \le e^{-7cn}
    \right\}
    = \left\{
      Z_{G,\ge \varepsilon}^{(2)}(\beta) \le e^{-7cn} Z_{G,< \varepsilon}(\beta)^2
    \right\}\,, \\
    {\mathcal{E}}_{\beta,\varepsilon,c} &= \left\{
      Z_{G,\ge \varepsilon}^{(2)}(\beta) \le e^{-cn} \E^\nullModelTag[Z_{G,< \varepsilon}(\beta)]^2
    \right\}\,.
  \end{align*}
  It follows that
  \[
    {\mathcal{O}}_{\beta,7c} \cap \left\{
      Z_{G,< \varepsilon}(\beta) \le e^{3cn} \E^\nullModelTag[Z_{G,< \varepsilon}(\beta)]
    \right\} \subseteq {\mathcal{E}}_{\beta,\varepsilon,c}\,,
  \]
  and thus
  \begin{equation}\label{e:truncation-step1}
    \Pr^\plantedTag_{\beta}(G\not\in {\mathcal{E}}_{\beta,\varepsilon,c})
    \le \Pr^\plantedTag_{\beta}(G\not\in {\mathcal{O}}_{\beta,7c})
    + \Pr^\plantedTag_{\beta}\left(Z_{G,< \varepsilon}(\beta) > e^{3cn} \E^\nullModelTag[Z_{G,< \varepsilon}(\beta)]\right)\,.
  \end{equation}
  We now bound the last probability.
  Note that on the event ${\mathcal{O}}_{\beta,7c}$,
  \[
    Z_{G,< \varepsilon}^{(2)}(\beta)
    = Z_{G,< \varepsilon}(\beta)^2 - Z_{G,\ge \varepsilon}^{(2)}(\beta)
    \ge (1-e^{-7cn}) Z_{G,< \varepsilon}(\beta)^2\,.
  \]
  Thus, by ~\cref{l:basic-mt}~\cref{i:basic-mt-annealed-2},
  \[
    \E^\nullModelTag\left[
      \allone\{G \in {\mathcal{O}}_{\beta,7c}\} Z_{G,< \varepsilon}(\beta)^2
    \right]
    \le (1-e^{-7cn})^{-1} \E^\nullModelTag\left[Z_{G,< \varepsilon}^{(2)}(\beta)\right]
    \le e^{cn} \E^\nullModelTag[Z_{G,< \varepsilon}(\beta)]^2\,.
  \]
  This implies
  \[
    \E^\plantedTag_{\beta}\left[
      \allone\{G \in {\mathcal{O}}_{\beta,7c}\} Z_{G,< \varepsilon}(\beta)
    \right]
    \le e^{cn} \E^\nullModelTag[Z_{G,< \varepsilon}(\beta)]\,.
  \]
  By Markov's inequality,
  \begin{align*}
    &\Pr^\plantedTag_{\beta}\left(
      Z_{G,< \varepsilon}(\beta) > e^{3cn} \E^\nullModelTag[Z_{G,< \varepsilon}(\beta)]
    \right) \\
    &\le \Pr^\plantedTag_{\beta}(G \not\in {\mathcal{O}}_{\beta,7c})
    + \Pr^\plantedTag_{\beta}\left(
      \allone\{G\in {\mathcal{O}}_{\beta,7c}\} Z_{G,< \varepsilon}(\beta) 
      > e^{3cn} \E^\nullModelTag[Z_{G,< \varepsilon}(\beta)]
    \right) \\
    &\le \Pr^\plantedTag_{\beta}(G \not\in {\mathcal{O}}_{\beta,7c}) + e^{-2cn}\,.
  \end{align*}
  Plugging into \eqref{e:truncation-step1} shows
  \[
    \Pr^\plantedTag_{\beta}(G\not\in {\mathcal{E}}_{\beta,\varepsilon,c})
    \le 2\Pr^\plantedTag_{\beta}(G \not\in {\mathcal{O}}_{\beta,7c}) + e^{-2cn}\,.
  \]
  Together with \eqref{e:truncation-input} this implies \eqref{e:truncation-goal}.
\end{proof}
We turn to the proof of Corollary~\cref{c:truncation}. 
\begin{lemma}\label{l:plant-one-of-many}
  Let $\E^\plantedTag_{\beta}$ denote expectation with respect to $({\sigma'}^*,G) \sim \nu^\plantedTag_{\beta}$ from ~\cref{d:planted-model} (and not $(\sigma^*,G)$, which was the convention earlier in the paper.)
  \begin{enumerate}[label=(\alph*)]
    \item \label{i:plant-one-of-two} Recall ${\widetilde Z}^{(2)}_{G,< \varepsilon}(\beta,{\sigma'})$ defined in \eqref{e:tZ2}.
    Then,
    \[
      \E^\plantedTag_{\beta} [{\widetilde Z}^{(2)}_{G,< \varepsilon}(\beta,{\sigma'}^*)] 
      = \frac{\E[Z^{(2)}_{G,< \varepsilon}(\beta)]}{\E[Z_{G,< \varepsilon}(\beta)]}\,.
    \]
    \item \label{i:plant-one-of-three} Define
    \begin{align*}
      {\widetilde Z}^{(2,2)}_{G,< \varepsilon}(\beta,{\sigma'}^*) &= 
      \sum_{\sigma^1,\sigma^2 \in \Sigma_n(\varepsilon)}
      \allone\{\max(|R(\sigma^1,{\sigma'}^*)|, |R(\sigma^2,{\sigma'}^*)|, |R(\sigma^1,\sigma^2)|) < \varepsilon\} \\
      &\qquad\qquad\qquad \exp(\beta \isingHamiltonian_{G}(\sigma^1) + \beta \isingHamiltonian_{G}(\sigma^2))\,.
    \end{align*}
    Then, for $Z_{G,< \varepsilon}^{(3)}(\beta)$ defined in \eqref{e:Z3},
    \[
      \E^\plantedTag_{\beta}[{\widetilde Z}^{(2,2)}_{G,< \varepsilon}(\beta,{\sigma'}^*)] 
      = \frac{\E[Z_{G,< \varepsilon}^{(3)}(\beta)]}{\E[Z_{G,< \varepsilon}(\beta)]}\,.
    \]
  \end{enumerate}
\begin{proof}
  We write
  \[
    \frac{\E[Z_{G,< \varepsilon}^{(2)}(\beta)]}{\E[Z_{G,< \varepsilon}(\beta)]}
    = \frac{
      \E[\sum_{\sigma^2 \in \Sigma_n(\varepsilon)} \exp(\beta \isingHamiltonian_{G}(\sigma^2)) {\widetilde Z}^{(2)}_{G,< \varepsilon}(\beta,\sigma^2)]
    }{
      \E[\sum_{\sigma^2 \in \Sigma_n(\varepsilon)} \exp(\beta \isingHamiltonian_{G}(\sigma^2))]
    }\,.
  \]
  We identify $\sigma^2$ with ${\sigma'}^*$ and recognize this as an expectation over $\E^\plantedTag_{\beta}$.
  Similarly,
  \[
    \frac{\E[Z_{G,< \varepsilon}^{(3)}(\beta)]}{\E[Z_{G,< \varepsilon}(\beta)]}
    = \frac{
      \E[\sum_{\sigma^3 \in \Sigma_n(\varepsilon)} \exp(\beta \isingHamiltonian_{G}(\sigma^3)) {\widetilde Z}^{(2,2)}_{G,< \varepsilon}(\beta,\sigma^3)]
    }{
      \E[\sum_{\sigma^3 \in \Sigma_n(\varepsilon)} \exp(\beta \isingHamiltonian_{G}(\sigma^3))]
    }\,.
  \]
  We identify $\sigma^3$ with ${\sigma'}^*$ and recognize this as an expectation over $\E^\plantedTag_{\beta}$.
\end{proof}
\end{lemma}

\begin{proof}[Proof of Corollary~\cref{c:truncation}]
  Recall that $\varepsilon_0 = \varepsilon_0(d,\delta)$ is given by ~\cref{l:basic-mt}.
  Let $c^{\cref{t:truncation}}(d,\delta,\varepsilon)$, $c^{\cref{p:truncation}}(d,\delta,\varepsilon)$ be the functions given by ~\cref{t:truncation} and ~\cref{p:truncation}.
  Let
  \[
    c = \min\left(
      \frac14 c^{\cref{t:truncation}}(d,\delta,\varepsilon), 
      \frac12 c^{\cref{p:truncation}}(d,\delta,\varepsilon)
    \right)\,.
  \]
  Then,
  \begin{equation}\label{e:cor-truncation-step1}
    \frac{
      \E\left[
        \allone\{G\not\in {\mathcal{E}}_{\beta,\varepsilon,c}\}
        Z_{G,< \varepsilon}^{(2)}(\beta)
      \right]
    }{
      \E[Z_{G,< \varepsilon}^{(2)}(\beta)]
    }
    = \Pr^{\plantedTag,2}_{\beta}(G \not\in {\mathcal{E}}_{\beta,\varepsilon,c})
    \le {\textup{(i)}} + {\textup{(ii)}}\,,
  \end{equation}
  for
  \begin{align*}
    {\textup{(i)}} &= \Pr^{\plantedTag,2}_{\beta}(({\sigma'}^*,G) \not\in {\mathcal{O}}^{(2)}_{\beta,2c})\,, & 
    {\textup{(ii)}} &= \Pr^{\plantedTag,2}_{\beta}\left(
      ({\sigma'}^*,G) \in {\mathcal{O}}^{(2)}_{\beta,2c}
      \,\,\text{and}\,\,
      G \not\in {\mathcal{E}}_{\beta,\varepsilon,c}
    \right)
  \end{align*}
  The left-hand side of \eqref{e:truncation-O-two-replica} is precisely 
  \[
    \Pr^{\plantedTag,2}_{\beta}\left(({\sigma'}^*,G) \not\in {\mathcal{O}}^{(2)}_{\beta,c^{\cref{p:truncation}}(d,\delta,\varepsilon)}\right)
    \cdot \E[Z_{G,< \varepsilon}^{(2)}(\beta)]\,.
  \]
  So, ~\cref{p:truncation}~\cref{i:truncation-two-replica} implies ${\textup{(i)}} \le e^{-2cn}$.
  We can bound ${\textup{(ii)}}$ by the following comparison to $\nu^\plantedTag_{\beta}$ from ~\cref{d:planted-model}.
  Let $\Pr^\plantedTag_{\beta}$, $\E^\plantedTag_{\beta}$ denote probability and expectation with respect to $({\sigma'}^*,G) \sim \nu^\plantedTag_{\beta}$ (and not $(\sigma^*,G)$, which was the convention earlier in the paper).
  Observe that the likelihood ratio between the marginal law of $({\sigma'}^*,G) \sim \nu^{\plantedTag,2}_{\beta}$ and $({\sigma'}^*,G) \sim \nu^\plantedTag_{\beta}$ is precisely
  \[
    \frac{\differential \nu^{\plantedTag,2}_{\beta}}{\differential \nu^\plantedTag_{\beta}}({\sigma'}^*,G)
    = \frac{
      {\widetilde Z}^{(2)}_{G,< \varepsilon}(\beta,{\sigma'}^*)
    }{
      \E^\plantedTag_{\beta} [{\widetilde Z}^{(2)}_{G,< \varepsilon}(\beta,{\sigma'}^*)]
    }\,,
  \]
  where ${\widetilde Z}^{(2)}_{G,< \varepsilon}(\beta,{\sigma'}^*)$ is defined in \eqref{e:tZ2}.
  Thus, 
  \[
    {\textup{(ii)}}
    = \E^\plantedTag_{\beta}\left[
      \frac{
        {\widetilde Z}^{(2)}_{G,< \varepsilon}(\beta,{\sigma'}^*)
      }{
        \E^\plantedTag_{\beta} [{\widetilde Z}^{(2)}_{G,< \varepsilon}(\beta,{\sigma'}^*)]
      } \allone\left\{
        ({\sigma'}^*,G) \in {\mathcal{O}}^{(2)}_{\beta,2c}
        \,\,\text{and}\,\,
        G \not\in {\mathcal{E}}_{\beta,\varepsilon,c}
      \right\}
    \right]\,.
  \]
  By the Cauchy--Schwarz inequality, we further have
  \begin{equation}\label{e:cor-truncation-step2}
    {\textup{(ii)}} \le \frac{
      {\textup{(iii)}}^{1/2} \cdot {\textup{(iv)}}^{1/2}
    }{
      \E^\plantedTag_{\beta} [{\widetilde Z}^{(2)}_{G,< \varepsilon}(\beta,{\sigma'}^*)]
    }\,,
  \end{equation}
  for
  \begin{align*}
    {\textup{(iii)}} &= \E^\plantedTag_{\beta}\left[
      \allone\{({\sigma'}^*,G) \in {\mathcal{O}}^{(2)}_{\beta,2c}\} 
      ({\widetilde Z}^{(2)}_{G,< \varepsilon}(\beta,{\sigma'}^*))^2
    \right]\,, & 
    {\textup{(iv)}} &= \Pr^\plantedTag_{\beta}\left(G \not\in {\mathcal{E}}_{\beta,\varepsilon,c}\right)\,.
  \end{align*}
  Recall ${\widetilde Z}^{(2,2)}_{G,< \varepsilon}(\beta,{\sigma'}^*)$ from ~\cref{l:plant-one-of-many}.
  On the event ${\mathcal{O}}^{(2)}_{\beta,2c}$, we have
  \[
    {\widetilde Z}^{(2,2)}_{G,< \varepsilon}(\beta,{\sigma'}^*) \ge (1-e^{-2cn}) ({\widetilde Z}^{(2)}_{G,< \varepsilon}(\beta,{\sigma'}^*))^2\,.
  \]
  Thus, by Lemmas~\cref{l:plant-one-of-many}~\cref{i:plant-one-of-three} and \cref{l:basic-mt}~\cref{i:basic-mt-annealed-3},
  \begin{align*}
    {\textup{(iii)}}
    &\le (1-e^{-2cn})^{-1} \cdot
    \E^\plantedTag_{\beta}\left[
      {\widetilde Z}^{(2,2)}_{G,< \varepsilon}(\beta,{\sigma'}^*)
    \right] \\
    &= (1-e^{-2cn})^{-1} \cdot
    \frac{\E[Z_{G,< \varepsilon}^{(3)}(\beta)]}{\E[Z_{G,< \varepsilon}(\beta)]}
    \le e^{cn} \E[Z_{G,< \varepsilon}(\beta)]^2\,.
  \end{align*}
  ~\cref{t:truncation} implies
  \[
    {\textup{(iv)}} \le \Pr^\plantedTag_{\beta}\left(G \not\in {\mathcal{E}}_{\beta,4c}\right) \le e^{-4cn}\,.
  \]
  Using ~\cref{l:plant-one-of-many}~\cref{i:plant-one-of-two} to evaluate the denominator of \eqref{e:cor-truncation-step2}, we deduce
  \[
    {\textup{(ii)}} \le 
    {\textup{(iii)}}^{1/2} \cdot {\textup{(iv)}}^{1/2} \cdot \frac{\E[Z_{G,< \varepsilon}(\beta)]}{\E[Z^{(2)}_{G,< \varepsilon}(\beta)]}
    \le e^{-3cn/2} \cdot \frac{\E[Z_{G,< \varepsilon}(\beta)]^2}{\E[Z_{G,< \varepsilon}^{(2)}(\beta)]}\,.
  \]
  Plugging back into \eqref{e:cor-truncation-step1}, and multiplying through by $\E[Z_{G,< \varepsilon}^{(2)}(\beta)]$, yields
  \[
    \E\left[
      \allone\{G\not\in {\mathcal{E}}_{\beta,\varepsilon,c}\}
      Z_{G,< \varepsilon}^{(2)}(\beta)
    \right]
    \le e^{-2cn} \E[Z_{G,< \varepsilon}^{(2)}(\beta)]
    + e^{-3cn/2} \E[Z_{G,< \varepsilon}(\beta)]^2\,.
  \]
  Finally ~\cref{l:basic-mt}~\cref{i:basic-mt-annealed-2} implies
  \[
    \E[Z_{G,< \varepsilon}^{(2)}(\beta)] 
    \le e^{cn/2} \E[Z_{G,< \varepsilon}(\beta)]^2\,,
  \]
  which concludes the proof.
\end{proof}

\section{Moment estimates for complex \texorpdfstring{$\beta$}{beta}}
\label{s:moments}


In this section, we prove \cref{l:1mt,t:reweight}. We first define the reweighting factor $A_{G}(\beta)$ required in \cref{t:reweight}, after which the first and (reweighted) second moments can be treated in a conceptually similar manner. Throughout this section, we work directly with the uniformly random perfect matching $M$ on $dn$ sockets, rather than the random $d$-regular (multi)graph $G$ it contracts to.

\paragraph{The Reweighting Factor} We begin by motivating the reweighting factor; the impatient reader may skip directly to \cref{eq:reweight-factor} for the precise definition. Let $C_{k}$ denote the number of $k$-cycles in $M$. \cite{FL21} proved that
\begin{align*}
    \frac{Z_{M}(\beta)}{\E[Z_{M}(\beta)]} \overset{\mathsf{d}}{\to} \prod_{k=1}^{\infty} (1 + \delta_{k})^{\Lambda_{k}} \exp(-\lambda_{k}\delta_{k}) \qquad \text{as} \qquad n \to \infty,
\end{align*}
where $\Lambda_{k} \sim \mathsf{Poi}(\lambda_{k})$ are independent random variables asymptotically describing the joint distribution of $\{C_{k}\}_{k=1}^{\infty}$ for a uniformly random perfect matching $M$ on $V \times [d]$. This suggests the following ``ideal'' reweighting:
\begin{align*}
    A_{L}(\beta) \defeq \exp\wrapp{\sum_{k=1}^{L} \lambda_{k} \cdot \delta_{k}(\beta)} \cdot \prod_{k=1}^{L} (1 + \delta_{k}(\beta))^{-C_{k}}.
\end{align*}
For convenience, we write
\begin{align*}
    b_{k} \defeq \wrapp{1 + \delta_{k}}^{-1} - 1, \qquad \forall k \in \N.
\end{align*}
This is a complex-valued function of $\beta$ which may be negative or positive; indeed, for $\beta \geq 0$, the $b_{k}$ alternate in sign depending on $k$. Then, the ``ideal'' reweighting can be written as
\begin{align*}
    A_{L}(\beta) &= \exp\wrapp{\sum_{k=1}^{L} \lambda_{k}\delta_{k}} \cdot \prod_{k=1}^{L} (1 + b_{k})^{C_{k}} \\
    &= \exp\wrapp{\sum_{k=1}^{L} \lambda_{k}\delta_{k}} \cdot \prod_{k=1}^{L} \prod_{C \in \mathscr{C}_{k}} (1 + b_{k})^{\allone[C \subseteq M]} \\
    &= \exp\wrapp{\sum_{k=1}^{L} \lambda_{k}\delta_{k}} \cdot \prod_{k=1}^{L} \prod_{C \in \mathscr{C}_{k}} \wrapp{1 + b_{k} \cdot \allone[C \subseteq M]} \\
    &= \exp\wrapp{\sum_{k=1}^{L} \lambda_{k}\delta_{k}} \cdot \sum_{\Gamma} \allone[\Gamma \subseteq M] \prod_{C \in \Gamma} b_{\abs{C}},
\end{align*}
where $\Gamma$ denotes a collection of cycles having length at most $L$, and we define the event that all edges of $\Gamma$ occur in $M$ via
\begin{align*}
    \wrapc{\Gamma \subseteq M} \defeq \bigwedge_{C \in \Gamma} \wrapc{C \subseteq M}.
\end{align*}
We can think of $\Gamma$ as a graph since the union of edges of all $C \in \Gamma$ must still form a valid matching on the $dn$ sockets. Note, however, that a given socket pair can appear multiple times in $\Gamma$, but any given socket can have only at most one other socket matched to it.

For technical reasons related to our inability to adequately control the contribution of large $\Gamma$, we will need to impose an additional truncation. To do this, for a tuple $\bm{c} \in \N^{L}$, we write
\begin{align*}
    \abs{\bm{c}} \defeq \sum_{k=1}^{L} k \cdot \bm{c}_{k}.
\end{align*}
For a given $\Gamma$, we write $\bm{c}(\Gamma) \in \N^{L}$ for the vector where $\bm{c}_{k}(\Gamma)$ gives the number of $k$-cycles in $\Gamma$. We then write
\begin{align*}
    \abs{\Gamma} = \abs{\bm{c}(\Gamma)} = \sum_{C \in \Gamma} \abs{C}.
\end{align*}
We view $\abs{\Gamma}$ as the total number of edges in $\Gamma$ with multiplicity, i.e. if a given socket pair occurs in multiple $C \in \Gamma$, it will be counted multiple times in $\abs{\Gamma}$. We will stratify everything by $\bm{c}(\Gamma)$. For convenience, for $\bm{c} \in \N^{L}$, let
\begin{align*}
    \mathscr{A}(\bm{c}) \defeq \wrapc{\Gamma \subseteq \bigcup_{k=1}^{L} \mathscr{C}_{k} : \bm{c}(\Gamma) = \bm{c}}, \qquad \mathscr{A}(m) \defeq \bigcup_{\bm{c} \in \N^{L} : \abs{\bm{c}} = m} \mathscr{A}(\bm{c}), \qquad \mathscr{A} \defeq \bigcup_{m \in \N} \mathscr{A}(m).
\end{align*}
We also define ``good'' and ``bad'' $\Gamma$ via
\begin{align}\label{eq:good-bad-graphs}
\begin{split}
    \mathscr{G}(\bm{c}) &\defeq \wrapc{\Gamma \in \mathscr{A}(\bm{c}) : \Gamma \text{ consists of vertex-disjoint cycles}} \subseteq \mathscr{A}(\bm{c}) \\
    \mathscr{G}(m) &\defeq \bigcup_{\bm{c} \in \N^{L} : \abs{\bm{c}} = m} \mathscr{G}(\bm{c}) \subseteq \mathscr{A}(m), \qquad \mathscr{G} \defeq \bigcup_{m \in \N} \mathscr{G}(m) \\
    \mathscr{B}(\bm{c}) &\defeq \mathscr{A}(\bm{c}) \setminus \mathscr{G}(\bm{c}), \qquad \mathscr{B}(m) \defeq \mathscr{A}(m) \setminus \mathscr{G}(m), \qquad \mathscr{B} \defeq \bigcup_{m \in \N} \mathscr{B}(m).
\end{split}
\end{align}
Note that for any $\Gamma \in \mathscr{G}$, any cycle can appear at most once. With all of the above notation in hand, let us define the ``true'' reweighting we will use by
\begin{align}\label{eq:reweight-factor}
    A_{L,m}(\beta) \defeq \exp\wrapp{\sum_{k=1}^{L} \lambda_{k}\delta_{k}} \cdot \sum_{\Gamma \in \mathscr{G}(\leq m)} \allone[\Gamma \subseteq M] \prod_{C \in \Gamma} b_{\abs{C}},
\end{align}
for appropriately chosen $m$; eventually, we will see that $L,m\in\N$ are chosen to be sufficiently slow-growing functions of $n$.

\begin{lemma}[Restatement of \cref{i:AN-complex-analytic} from \cref{t:reweight}]\label{lem:reweighting-analytic}
For all $L,m \in \N$ and all possible outcomes of the perfect matching $M$, the functions $A_{L}(\beta)$ and $A_{L,m}(\beta)$ are analytic on the strip $\wrapc{\beta \in \C : \abs{\imag \beta} < \frac{\pi}{2}}$, and $A_{L}(0) = A_{L,m}(0) = 1$.
\end{lemma}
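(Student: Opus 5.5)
The plan is to reduce everything to analyticity of the single building block $\delta_{1}(\beta) = \frac{e^{-\beta}-1}{e^{-\beta}+1}$. Since $A_{L}$ and $A_{L,m}$ are built from finitely many analytic operations applied to it, analyticity will follow once two things are checked. First, $\delta_1$ has no poles in the strip. Second, the quantities $b_{k} = (1+\delta_{k})^{-1}-1$ have no poles there, that is, $1+\delta_{k}(\beta)\neq 0$.

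Pole-freeness of $\delta_1$ is immediate. Its poles occur where $e^{-\beta} = -1$, i.e.\ $\beta \in i\pi(2\Z+1)$, and all of these satisfy $\abs{\imag \beta} \geq \pi$, so they lie outside the strip. Hence $\delta_{k} = \delta_{1}^{k}$ is analytic on $\wrapc{\abs{\imag\beta}<\pi/2}$.

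The step needing an actual estimate is showing $1+\delta_{k}\neq 0$. For this I will show $\abs{\delta_{1}(\beta)}<1$ on the strip. Writing $\delta_{1} = -\tanh(\beta/2)$ and $w = e^{-\beta}$, we have $\real w = e^{-\real\beta}\cos(\imag\beta) > 0$ whenever $\abs{\imag\beta}<\pi/2$. Then $\abs{w-1}<\abs{w+1}$, because $w$ is strictly closer to $1$ than to $-1$. Consequently $\abs{\delta_{k}} = \abs{\delta_{1}}^{k} < 1$, so $1+\delta_{k}\neq 0$ and each $b_{k}$ is analytic on the strip.

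Next I check that $A_{L}$ and $A_{L,m}$ are finite combinations of these analytic pieces. For fixed $n$ and a fixed outcome of $M$, the prefactor $\exp(\sum_{k\le L}\lambda_k\delta_k)$ is analytic, since it is the exponential of a finite sum. In $A_{L}$ the remaining factor is $\prod_{k\le L}(1+b_{k})^{C_{k}}$, a finite product because the counts $C_{k}$ are finite nonnegative integers; equivalently it equals $\prod_k (1+\delta_k)^{-C_k}$. In $A_{L,m}$ the remaining factor is a finite sum over $\Gamma\in\mathscr{G}(\le m)$, and each term $\allone[\Gamma\subseteq M]\prod_{C\in\Gamma} b_{\abs{C}}$ is a constant times a finite product of analytic functions. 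Finite sums and products of analytic functions are analytic, which proves the analyticity claim.

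Finally, the normalization at $\beta=0$ follows from $\delta_{1}(0)=0$. This gives $\delta_{k}(0)=0$ and $b_{k}(0)=0$, so the exponential prefactor equals $1$. In $A_{L}$ each factor $(1+b_k)^{C_k}$ equals $1$. In $A_{L,m}$ only $\Gamma=\emptyset$ contributes a nonzero term, and that term is the empty product $1$ (as $\allone[\emptyset\subseteq M]=1$). I expect no real obstacle here; the only point needing care is the bound $\abs{\delta_1}<1$, which I have handled via $\real e^{-\beta}>0$.
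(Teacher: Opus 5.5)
Your proof is correct and follows essentially the same route as the paper. Both arguments check that $\delta_k$ has poles only on $\imag\beta\in\pi(2\Z+1)$ and that $1+\delta_k$ cannot vanish in the strip; the paper locates those zeros explicitly via the M\"obius map sending the unit circle to the imaginary axis, while you prove the equivalent strict bound $\abs{\delta_1}<1$ from $\Re e^{-\beta}>0$ (and you also verify $A_L(0)=A_{L,m}(0)=1$, which the paper's proof leaves implicit).
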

\begin{remark}
Note that this region contains any sufficiently small constant neighborhood of $[0,\beta_{\mathsf{KS}}]$.
\end{remark}
To minimize distractions, the proof of this lemma is deferred to \cref{sec:technical}. The following statements are the main results in this section.

\begin{theorem}[Reweighted Second Moment; Restatement of \cref{i:AN-second-moment} from \cref{t:reweight}]\label{thm:reweighted-2ndmoment}
Let $L = \frac{\log\log N}{\log\log\log N}$ and $m = \frac{\log N}{(\log\log N)^2}$. For all $0 < \eta < 1$, all $\beta$ within a sufficiently small constant (depending on $d,\eta$) neighborhood of $[0,(1-\eta) \cdot \beta_{\mathsf{KS}}]$, and all sufficiently small $\epsilon > 0$
\begin{align*}
    \frac{\E_{M}\wrapb{Z_{M,<\epsilon}^{(2)}(\beta) \cdot \abs{A_{L,m}(\beta)}^{2}}}{\abs{\E_{M}\wrapb{Z_{M,<\epsilon}(\beta)}}^{2}} = 1 \pm o_{n}(1),
\end{align*}
where the constants in $o_{n}(1)$ may depend on $d,\epsilon,\eta$.
\end{theorem}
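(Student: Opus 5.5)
Our plan is to expand $|A_{L,m}(\beta)|^2$ as a double sum over good cycle collections and compute each term by a planted-model (Laplace method) calculation. Writing $\bar b_k$ for $\overline{b_k(\beta)}$, we have
\begin{align*}
  \abs{A_{L,m}(\beta)}^2 = \abs{e^{\sum_{k\le L}\lambda_k\delta_k}}^2 \sum_{\Gamma_1,\Gamma_2 \in \mathscr{G}(\le m)} \allone[\Gamma_1\cup\Gamma_2 \subseteq M] \prod_{C\in\Gamma_1} b_{|C|} \prod_{C\in\Gamma_2}\bar b_{|C|}.
\end{align*}
Hence the numerator equals
\begin{align*}
  \sum_{(\sigma^1,\sigma^2)\in\Sigma_{n,2}(\epsilon)} \sum_{\Gamma_1,\Gamma_2} (\cdots)\, \E_M\wrapb{\allone[\Gamma_1\cup\Gamma_2\subseteq M]\, e^{\beta H_M(\sigma^1)+\bar\beta H_M(\sigma^2)}}.
\end{align*}
For a fixed pair $(\sigma^1,\sigma^2)$, we condition on the event $\Gamma_1\cup\Gamma_2\subseteq M$. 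The remaining matching is uniform on the leftover sockets, so the expectation factorizes as the product of three quantities: the probability from \cref{lem:shortcyc-approxindep}, the weight $e^{\beta H_F(\sigma^1)+\bar\beta H_F(\sigma^2)}$ of the forced edges $F=\Gamma_1\cup\Gamma_2$, and the two-replica first moment on the remaining sockets. Summing over $\sigma^1,\sigma^2$ with the overlap profile $(R(\sigma^1,\allone),R(\sigma^2,\allone),R(\sigma^1,\sigma^2))$ near $0$, we then see the average of the forced-edge weights over the local types near the saddle point. Since $|\Gamma|\le 2m=o(\log N)$, the extra edges shift the saddle by only $o(1)$.

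At the balanced, orthogonal saddle point, the four spin pairs $(\sigma^1_v,\sigma^2_v)$ are asymptotically i.i.d.\ uniform along each cycle. A transfer-matrix computation in the spirit of \cref{lem:pfn-disj-cycles} then shows that a $k$-cycle contributes a normalized factor $(1+\delta_k)(1+\bar\delta_k)$. More generally, $\Gamma_1\cup\Gamma_2$ decomposes into common cycles, cycles in $\Gamma_1$ only, and cycles in $\Gamma_2$ only. In the good, vertex-disjoint case each cycle contributes its own factor, so after combining with $(dn)^{-|\cdot|}$ and $|\mathscr{C}_k|\approx (dn)^k\lambda_k$, the sum over good pairs should resum to
\begin{align*}
  \prod_{k\le L}\exp\Bigl(\lambda_k\bigl[(1+b_k)(1+\bar b_k)(1+\delta_k)(1+\bar\delta_k) - 1 - b_k(1+\delta_k) - \bar b_k(1+\bar\delta_k)\bigr]\Bigr).
\end{align*}
This expression is $\prod_k \exp(-\lambda_k\delta_k-\lambda_k\bar\delta_k)$, because $(1+b_k)(1+\delta_k)=1$. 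It therefore cancels exactly against $\abs{e^{\sum\lambda_k\delta_k}}^2$.

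Separately, the unweighted truncated second moment $\E[Z^{(2)}_{M,<\epsilon}]$ is evaluated by the same Laplace method, with the Gaussian determinant producing the correction of \cref{lem:gaussian-correction-series}. The result is $\abs{\E Z_{<\epsilon}}^2\cdot\prod_{k\ge1}(1-\cdot)^{-1/2}$-type factors, which equal $\exp(\sum_k\lambda_k|\delta_k|^2)$ up to the $k=1,2$ terms. This is exactly the factor removed by the reweighting for $k\le L$. The tail $\sum_{k>L}\lambda_k|\delta_k|^2$ tends to $0$ because $(d-1)|\delta_1|^2<1$ on the strip, which is where $\beta<(1-\eta)\betaKS$ is used. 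The ratio then follows after dividing by $|\E Z_{<\epsilon}|^2$ as given by \cref{l:1mt}. We must also control the error from truncating at $|\Gamma|\le m$: the omitted tail of the Poisson-type expansion is bounded by $\sum_{j>m}(C\cdot \text{const})^j/j!$-type terms, which are $o(1)$ since $m\to\infty$.

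The main obstacle is the \emph{uniformity} of these expansions over complex $\beta$ together with the error control. The $1+o(1)$ corrections from the saddle shift, from \cref{lem:shortcyc-approxindep} (errors $e^{O(m^2/n)}$), and from the weight of overlapping cycles in $\Gamma_1\cup\Gamma_2$ (pairs sharing vertices but not identical, i.e.\ with $\mathscr{B}$-like unions) must be summed absolutely against $|b_k|$ and $|\delta_k|$ with no cancellation available. To do this we would bound the number of connected unions with excess $\ge1$ by $(dn)^{|\cdot|-1}(dL)^{O(\text{excess})}$, which gains a factor $n^{-1}$ per unit of excess. The choices $L=\log\log N/\log\log\log N$ and $m=\log N/(\log\log N)^2$ are made precisely so that $L^{O(m)}$, $e^{O(m^2/n)}$ and $\max_k|b_k|^{m}$-type quantities stay $N^{o(1)}$, and hence are killed by this $n^{-1}$.
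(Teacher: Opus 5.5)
Your architecture matches the paper's: expand $|A_{L,m}|^2$ as a double sum over $\Gamma_1,\Gamma_2\in\mathscr{G}(\le m)$, use the conditional second moment $\approx \sqrt{1/(1-(d-1)|\delta_2|)}\,|\widetilde Z_F(\beta)|^2$, resum the vertex-disjoint part as a compound Poisson, and kill non-disjoint unions and the $|\Gamma|>m$ tail with an $n^{-1}$ gain. However, the central cancellation, which is the whole point of the reweighting, is computed incorrectly, and your bookkeeping is internally inconsistent.

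\textbf{The cancellation step.} Since $(1+b_k)(1+\delta_k)=1$, we have $b_k(1+\delta_k)=-\delta_k$. Your displayed exponent therefore equals $+\lambda_k(\delta_k+\overline{\delta_k})$, not $-\lambda_k(\delta_k+\overline{\delta_k})$. The weights themselves are also wrong. Conditioning on $C\subseteq M$ forces that cycle in both replicas. So every cycle of $\Gamma_1\cup\Gamma_2$ carries the factor $|1+\delta_k|^2$, which you derived yourself via the transfer matrix. This holds whether the cycle lies in $\Gamma_1$ only, in $\Gamma_2$ only, or in both; membership affects only the $b$-weight. The good-pair sum is therefore
\[
\exp\Bigl(\sum_{k\le L}\lambda_k|1+\delta_k|^2\bigl(b_k+\overline{b_k}+|b_k|^2\bigr)\Bigr)=\exp\Bigl(-\sum_{k\le L}\lambda_k\bigl(2\re\delta_k+|\delta_k|^2\bigr)\Bigr).
\]
Multiplying by $|e^{\sum_k\lambda_k\delta_k}|^2$ leaves $\exp(-\sum_{k\le L}\lambda_k|\delta_k|^2)$, not $1$. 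This residual is exactly what cancels the $k\le L$ part of $1/\sqrt{1-(d-1)|\delta_2|}=\exp(\sum_{k\ge1}\lambda_k|\delta_k|^2)$, leaving $\exp(\sum_{k>L}\lambda_k|\delta_k|^2)=1+o(1)$.

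As written, your two claims cannot both hold. If the reweighting cancelled the prefactor exactly, the factor $1/\sqrt{1-(d-1)|\delta_2|}>1$ would survive and the ratio would not tend to $1$. Similarly, in the configuration model this series identity is exact for all $k\ge1$, and $A_{L,m}$ also reweights $k=1,2$ (loops and double edges). There is no ``$k=1,2$'' discrepancy, and if there were one the limit would again differ from $1$.

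\textbf{Two smaller gaps.}
\begin{itemize}
\item A saddle shift of size $o(1)$ is not enough. The absolute sum over pairs $(\Gamma_1,\Gamma_2)$ has size $\exp(e^{O(L)})$, so you need the conditional second moment to within additive $K^{m}/n$ relative to $|\E Z_{M,<\epsilon}|^2$, uniformly over $|F|\le 2m$. This is the paper's decoupling proposition. It is proved by a contour-integral saddle point, because for complex $\beta$ the summands oscillate and naive Laplace's method does not apply.
\item The $|\Gamma|>m$ tail is bounded by $e^{-m}\exp\bigl(3\sum_{k\le L}\lambda_k(Be)^k\bigr)$. Its Poisson mean grows like $e^{O(L)}$, so you need $m\gg e^{O(L)}$, not merely $m\to\infty$ with a constant rate. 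The stated choices of $L$ and $m$ do satisfy this.
\end{itemize}
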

\begin{theorem}[Bounded Moments; Restatement of \cref{i:AN-moments-bounded} from \cref{t:reweight}]\label{thm:bounded-moments}
Let $L = \frac{\log\log N}{\log\log\log N}$ and $m = \frac{\log N}{(\log\log N)^2}$. For all $0 < \eta < 1$ and all fixed $K > 0$ (independent of $n$), there exists a constant $C = C(K,d,\eta) > 0$ such that for all $\beta$ within a sufficiently small constant (depending on $d,\eta$) neighborhood of $[0,(1-\eta) \cdot \beta_{\mathsf{KS}}]$,
\begin{align*}
    \E\wrapb{\abs{A_{L,m}(\beta)}^{K}} \leq C.
\end{align*}
\end{theorem}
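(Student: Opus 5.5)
The plan is to bound $\E\abs{A_{L,m}(\beta)}^{K}$ by comparing $A_{L,m}$ with the ``ideal'' weight built from independent Poisson cycle counts. It suffices to treat even integers $K=2p$, since the other exponents then follow by Jensen/Hölder. For such $K$ we have $\abs{A_{L,m}}^{2p} = A_{L,m}^{p}\,\overline{A_{L,m}}^{p}$, and expanding the polynomial part of \cref{eq:reweight-factor} gives
\begin{align*}
\E\abs{A_{L,m}(\beta)}^{2p} = \abs{\exp\wrapp{\textstyle\sum_{k\le L}\lambda_k\delta_k}}^{2p}\sum_{\Gamma_1,\dots,\Gamma_{2p}\in\mathscr{G}(\le m)} \Pr\wrapb{\Gamma_1\cup\dots\cup\Gamma_{2p}\subseteq M}\prod_{j\le p}\prod_{C\in\Gamma_j}b_{\abs{C}}\prod_{j>p}\prod_{C\in\Gamma_j}\overline{b_{\abs{C}}} .
\end{align*}
The prefactor is harmless. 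Near the real interval, $\abs{\delta_k}\le \rho^{k}$ with $\rho=\abs{\delta_1}$, and $(d-1)\rho^{2}<1$ by \cref{lem:gaussian-correction-series}. Since $(d-1)\rho$ may still exceed $1$, $\sum_k\lambda_k\abs{\delta_k}$ need not be bounded. So we will not bound terms in absolute value separately; we will keep the exponential prefactor together with the cycle sum and let them cancel at the level of Poisson moments.

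\textbf{Step 1 (disjoint configurations).} Split the tuple $(\Gamma_1,\dots,\Gamma_{2p})$ according to the multigraph $U=\bigcup_j\Gamma_j$. Call the tuple \emph{simple} if every cycle appearing in $U$ is vertex-disjoint from every other distinct cycle of $U$; a given cycle may still occur in several $\Gamma_j$. For simple tuples, \cref{lem:shortcyc-approxindep} gives $\Pr[U\subseteq M]=(1+O(m^2p^2/n))\,(dn)^{-\#\text{distinct cycles}}$. Combined with \cref{eq:num-kcycle}, summing over simple tuples reproduces, up to $1+o(1)$, the same expansion for independent $\Lambda_k\sim\mathsf{Poi}(\lambda_k)$, namely
\begin{align*}
\E\prod_{k\le L}\abs{1+b_k}^{2p\Lambda_k} = \exp\wrapp{\textstyle\sum_{k\le L}\lambda_k\wrapp{\abs{1+b_k}^{2p}-1}},
\end{align*}
restricted to $\sum_k k\,c_k(\Gamma_j)\le m$. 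Multiplied by the prefactor, the $k$-th exponent is $\lambda_k\wrapp{\abs{1+b_k}^{2p}-1+2p\real\delta_k}$. Since $b_k=-\delta_k+O(\delta_k^2)$, this is $O_p(\lambda_k\abs{\delta_k}^2)=O_p\wrapp{((d-1)\rho^2)^k/k}$, which is summable. Hence the simple contribution is at most some $C(p,d,\eta)$. Two technical points remain. First, the truncation $\abs{\Gamma_j}\le m$ breaks the exact Poisson identity. This will be handled by bounding the truncated sum in absolute value by an untruncated one after grouping each distinct cycle's multiplicity pattern across the $2p$ copies. The sum over patterns for a single cycle is $\abs{1+b_k}^{2p}-1$ computed exactly, so the leftover error is controlled by the tail of cycles of total length above $m$. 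Second, $L$ grows, but the bound is uniform in $L$.

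\textbf{Step 2 (overlapping configurations; main obstacle).} Non-simple $U$ contain two distinct cycles sharing a vertex, so $U$ has excess at least one. I would enumerate such $U$ by their ``core'' components: each connected union of at most $2p$ overlapping cycles with total length at most $2pL$ costs a factor $O(1/n)$ relative to the disjoint count, while the number of ways to label it with cycles is at most $(2pL)^{O(pL)}$. Since $L=\log\log N/\log\log\log N$, we have $(2pL)^{O(pL)}=(\log N)^{o(1)}$, so these terms give $(\log N)^{o(1)}\,n^{-1}$ per core. Summing over the number of cores, with total length at most $2pm$ and $m=\log N/(\log\log N)^2$, should still yield $o(1)$, using $\abs{b_k}\le 2\rho^k$ and absolute values. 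The danger is the growth factor $((d-1)\rho)^{\abs{\Gamma}}$ in absolute value, which can be as large as $e^{O(m)}=N^{o(1)}$. The choice of $m$ is made exactly so that this is beaten by the $n^{-1}$ gain per excess edge. I expect this counting, together with the product of the prefactor and the absolute-value bound, to be the main obstacle. In particular, the combination of cancellation in Step 1 with absolute bounds here must be done by factoring out the disjoint part exactly and bounding only the overlapping cores absolutely.
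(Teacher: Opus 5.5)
Your overall route is the paper's. You reduce to even moments and expand $|A_{L,m}|^{2p}$ over $2p$-tuples of cycle collections. Tuples whose union consists of pairwise vertex-disjoint (possibly shared) cycles reproduce the Poisson moment $\exp\bigl(\sum_{k\le L}\lambda_k(|1+\delta_k|^{-2p}-1)\bigr)$. The prefactor then cancels the first-order term, leaving $O(\sum_k\lambda_k|\delta_k|^2)=O(1)$, and tuples with overlapping cycles are shown to be negligible. Step 2 as written, however, contains a false structural claim and an unnecessary complication.

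First, a core need not be a connected union of at most $2p$ cycles of total length at most $2pL$. Cycles inside one $\Gamma_j$ are disjoint, but clusters of $\bigcup_j\Gamma_j$ can chain across copies. Already for $p=1$, take disjoint $C_1,C_3\in\Gamma_1$ and $C_2\in\Gamma_2$ meeting both, and iterate. So a cluster may contain unboundedly many cycles and have total length of order $pm$, and your enumeration misses these configurations.

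The paper handles all bad clusters at once with a crude count. A bad cluster $\Gamma$ has $m(\Gamma)\ge|V(\Gamma)|+1$ distinct edges, so its embeddings carry total weight at most $d^{O(m)}/n$. There are at most $m^{O(m)}$ cluster shapes, which gives $\Theta^{\mathsf{bad}}\le e^{O(m\log m)}/n$. It is this count, not the $e^{O(m)}$ growth of the $b$-weights, that forces $m\log m=o(\log N)$. Two harmless slips: $(2pL)^{O(pL)}$ is $(\log N)^{O(p)}$, not $(\log N)^{o(1)}$; and for simple tuples the exponent of $(dn)^{-1}$ is the number of distinct edges, not distinct cycles.

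Second, you do not need to factor out the disjoint part exactly; cancellation is needed only in the main term. Every error term can be bounded in absolute value:
\begin{itemize}
\item the per-term relative errors on simple tuples;
\item the truncation tail, via the exponential-tilt bound on the indicator that some $|\Gamma_j|$ exceeds $m$;
\item all non-simple tuples.
\end{itemize}
The absolute-value Poisson sum is $\exp(\exp(O_{d,p}(L)))$, and the prefactor $\exp(2p\sum_{k\le L}\lambda_k\operatorname{Re}\delta_k)$ is of the same size. With $L=\log\log N/\log\log\log N$ this is $N^{o(1)}$, which is swamped by the $1/n$ and $e^{-m}$ gains. This is exactly how the paper closes the argument, and it removes the obstacle you anticipated in combining signed and absolute bounds.
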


The remainder of this section is organized as follows. In \cref{subsec:1st-moment-complex}, we prove \cref{l:1mt} using contour integration and the method of steepest descent. In \cref{subsec:2nd-moment-complex}, we prove \cref{thm:reweighted-2ndmoment} assuming a certain conditional second moment estimate stated as \cref{prop:2nd-moment}. The proof of the latter is deferred to \cref{s:deferred-complex-moment}, as it is conceptually identical to the proof of \cref{l:1mt}, but complicated by the presence of more variables. The proof (sketch) of \cref{thm:bounded-moments} is deferred to \cref{sec:reweight-bounded-moments}, as it is conceptually identical to the proof (contained in \cref{subsec:2nd-moment-complex}) that \cref{prop:2nd-moment} implies \cref{thm:reweighted-2ndmoment}.

\subsection{First moment: Proof of \texorpdfstring{\cref{l:1mt}}{First Moment}}\label{subsec:1st-moment-complex}
Let us recall that we work with the configuration model on $N=nd$ sockets and $n$ clouds. 

We first give an alternative expression for $ Z_{G,< \varepsilon}(\beta)$ in terms of a contour integral.
Let for $\alpha\in [0,1]$ and $N\in \mathbb{N}$,
\[
  \Phi_{z,\alpha}^{(N)}(x,y)
  := \frac{x^2}{2} + \frac{y^2}{2} + z\cdot xy
  - \left(\alpha + \frac{1}{N}\right) \log x
  - \left(1 - \alpha + \frac{1}{N}\right) \log y\,.
\]
\begin{lemma}\label{lem:1st moment contour representation}
For any $\varepsilon\in (0,1/2]$ we have   
\begin{equation}\label{e:1mt-restate}
  \E[Z_{M,< \varepsilon}(\beta)]
  = \sum_\alpha \binom{n}{\alpha n} 
  \cdot \frac{(\alpha N)!((1-\alpha) N)!}{(N-1)!! (N)^{N/2}} 
  \cdot \frac{1}{(2\pi i)^2}
  \oint \exp\left(N \cdot \Phi^{(N)}_{e^\beta, \alpha}(x,y)\right)\,\differential (x,y)\,,
\end{equation}
where the sum is over $\alpha \in \{0,1/n,\ldots,1\}$ such that $|\alpha-1/2|<\varepsilon$.
\end{lemma}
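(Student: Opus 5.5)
The plan is to compute $\E[\exp(\beta H_M(\sigma))]$ exactly for a fixed $\sigma$ by a generating-function identity for perfect matchings, extract the relevant coefficient by Cauchy's formula, and rescale the contour.

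First, decompose the first moment by magnetization. For $\sigma \in \Sigma_n(\varepsilon)$ with $\alpha n$ plus-vertices, the $N=dn$ sockets split into $a \defeq \alpha N$ plus-sockets and $b \defeq (1-\alpha)N$ minus-sockets. The law of $H_M(\sigma)$ under a uniform perfect matching $M$ depends only on $(a,b)$. The number of $\sigma$ with $\alpha n$ plus-spins is $\binom{n}{\alpha n}$, and $|R(\sigma,\allone)|<\varepsilon$ is exactly $|\alpha-1/2|<\varepsilon/2$ (up to the paper's normalization of the range of $\alpha$, which I would simply match to the stated summation range). Hence
\[
  \E[Z_{M,<\varepsilon}(\beta)] = \sum_\alpha \binom{n}{\alpha n}\, \E\bigl[z^{H_M(\sigma_\alpha)}\bigr], \qquad z = e^\beta,
\]
where $\sigma_\alpha$ is any fixed configuration with $\alpha n$ plus-spins and $H_M(\sigma_\alpha)$ is the number of cross (plus–minus) socket pairs in $M$.

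Second, count matchings with $j$ cross edges. There are
\[
  \binom{a}{j}\binom{b}{j}\, j!\,(a-j-1)!!\,(b-j-1)!!
\]
of them, with the convention that $(m-1)!!=0$ for odd $m$. Using $(m-1)!! = m!\,[x^m]e^{x^2/2}$ for all $m\ge 0$, which vanishes automatically for odd $m$, I will verify term by term in $j$ that
\[
  \sum_{M} z^{H_M(\sigma_\alpha)} = a!\,b!\,[x^a y^b]\exp\Bigl(\frac{x^2}{2}+\frac{y^2}{2}+zxy\Bigr).
\]
This follows by expanding $e^{zxy} = \sum_j (zxy)^j/j!$ and collecting: the $j$-th term contributes $a!\,b!\, z^j \bigl(j!\,(a-j)!\,(b-j)!\bigr)^{-1}(a-j-1)!!\,(b-j-1)!!$, which matches the count above. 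Dividing by the total number of matchings $(N-1)!!$ gives $\E[z^{H_M}]$.

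Third, apply Cauchy's coefficient formula on circles $|x|=r_1$, $|y|=r_2$; the integrand is entire apart from the poles at $0$:
\[
  [x^ay^b]F = \frac{1}{(2\pi i)^2}\oint\oint F(x,y)\,x^{-a-1}y^{-b-1}\,\differential x\,\differential y.
\]
Then substitute $x\mapsto\sqrt N x$, $y\mapsto\sqrt N y$. The exponential becomes $\exp\bigl(N(x^2/2+y^2/2+zxy)\bigr)$. The factor $x^{-a-1}y^{-b-1}\,\differential x\,\differential y$ picks up $N^{-(a+1)/2}N^{-(b+1)/2}\cdot N = N^{-N/2}$, using $a+b=N$. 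The remaining powers combine as
\[
  x^{-a-1}y^{-b-1} = \exp\bigl(-N(\alpha+1/N)\log x - N(1-\alpha+1/N)\log y\bigr),
\]
which together with the exponential is exactly $\exp\bigl(N\Phi^{(N)}_{e^\beta,\alpha}(x,y)\bigr)$, with $\log$ understood along the contour since the integrand is single-valued. Collecting the prefactor $\binom{n}{\alpha n}(\alpha N)!\,((1-\alpha)N)!/\bigl((N-1)!!\,N^{N/2}\bigr)$ gives \eqref{e:1mt-restate}.

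The only real care points are the parity convention, namely that odd $a-j$ is automatically killed by the double-factorial identity, and the bookkeeping of the powers of $N$. Neither should be an obstacle; the lemma is an exact combinatorial identity, valid for all complex $\beta$ since both sides are polynomials in $z$.
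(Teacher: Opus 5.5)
Your proof is correct and follows the paper's route: the exponential generating function for matchings, Cauchy's coefficient formula on a torus, and the rescaling $(u,v)=\sqrt N(x,y)$ with the $N^{-N/2}$ bookkeeping. The one difference is that you verify the generating-function identity directly from the count $\binom{a}{j}\binom{b}{j}j!\,(a-j-1)!!\,(b-j-1)!!$ of matchings with $j$ cross edges, where the paper cites it. Your remark on the range is also right, but you should state it rather than ``match'' it away. Since $R(\sigma,\allone)=2\alpha-1$, the identity holds exactly with the sum over $|\alpha-1/2|<\varepsilon/2$. The stated range $|\alpha-1/2|<\varepsilon$, which is also used for $\mathcal A_{n,\varepsilon}$ in the proof of \cref{l:1mt}, instead gives $\E[Z_{M,<2\varepsilon}(\beta)]$. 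This is a harmless factor-of-two slip, because the sum is dominated by $|\alpha-1/2|\le n^{-2/5}$ anyway.
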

\begin{proof}
By the exponential generating function for matchings (See e.g.~\cite[Chapter 5]{Stanleybookvol2}) we have 
\begin{equation}\label{e:1mt}
  \E[Z_{M}(\beta)]
  = \sum_{m=0}^n \binom{n}{m} \cdot \frac{(md)!((n-m)d)!}{(nd-1)!!} \cdot [u^{md} v^{(n-m)d}] \exp\left(\frac{u^2}{2} + \frac{v^2}{2} + e^\beta uv\right)\,.
\end{equation}
We can extract the coefficient from the power series using the residue theorem:
\[
  [u^{md} v^{(n-m)d}] \exp\left(\frac{u^2}{2} + \frac{v^2}{2} + e^\beta uv\right)
  = \frac{1}{(2\pi i)^2} \oint \frac{1}{u^{md+1} v^{(n-m)d+1}} \exp\left(\frac{u^2}{2} + \frac{v^2}{2} + e^\beta uv\right) \, \differential (u,v)
\]
Parameterizing $m = n\alpha$, $N=nd$, and $(u,v) = \sqrt{N} (x,y)$, the last display equals
\begin{align*}
  \frac{1}{(2\pi i)^2 (N)^{N/2}} 
  \oint \exp\left(N \cdot \Phi^{(N)}_{e^\beta, \alpha}(x,y)\right)\,\differential (x,y)\,.
\end{align*}
Restricting the summation to the relevant $\alpha$ gives the required expression.
\end{proof}

To evaluate the truncated first moment we will first evaluate the contour integral by steepest descend, after which we combine this with some calculations involving Stirling approximations to arrive at a proof of ~\cref{l:1mt}.

\subsubsection{The contour integral}

We start with gathering some definitions.
Consider positive real $z\ge 1$ and $\alpha = 1/2$.
  Then the stationarity condition of $\Phi^{(N)}_{z,\alpha}(x,y)$ yields
  \begin{align*}
    x + zy &= \left(\frac{1}{2} + \frac{1}{N} \right) \frac{1}{x} \\
    y + zx &= \left(\frac{1}{2} + \frac{1}{N} \right) \frac{1}{y}\,.
  \end{align*}
  By symmetry, this has solution 
  \[
    (x_z(\alpha = 1/2), y_z(\alpha = 1/2)) = (\lambda_\star,\lambda_\star) \,\,\text{or}\,\,(-\lambda_\star,-\lambda_\star)\,,
  \]
  where
  \[
    \lambda_\star = \lambda_\star(z)
    \equiv \sqrt{\frac{1}{1+z} \left(\frac{1}{2} + \frac{1}{N} \right)}\,.
  \]

\begin{proposition}\label{p:contour-integral} There exists $\varepsilon=\varepsilon(d)>0$ such that the following holds uniformly for 
$\alpha\in[1/2-\varepsilon,1/2+\varepsilon]$ and 
\[ 
    z=E+i\eta, \qquad E\in[1-\varepsilon,e^{\betaKS}+\varepsilon], \qquad |\eta|\leq\varepsilon. 
\] 
There is a neighborhood of $(E,1/2)$ on which there exists a unique stationary point 
$   (x_z(\alpha),y_z(\alpha)) $ of $\Phi^{(N)}_{z,\alpha}$. In particular, $(x_E(1/2),y_E(1/2))=(\lambda_\star(E),\lambda_star(E)).$

Moreover, 
\[ 
    \begin{aligned} 
        &\oint \exp\left(N\Phi^{(N)}_{z,\alpha}(x,y)\right) \,\differential(x,y) \\ &\qquad= -\left(1+O\left(\frac1n\right)\right) \frac{4\pi}{N} \frac{ \exp\left( N\Phi^{(N)}_{z,\alpha} \left(x_z(\alpha),y_z(\alpha)\right) \right) }{ \sqrt{ \det\left( -\nabla^2_{x,y}\Phi^{(N)}_{z,\alpha} \left(x_z(\alpha),y_z(\alpha)\right) \right) } }. 
    \end{aligned} 
\] 
The branch of the square root is chosen by continuous continuation from positive real $z$ and $\alpha=1/2$. \end{proposition}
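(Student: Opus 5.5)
We prove \cref{p:contour-integral} by a two-dimensional saddle point (steepest descent) analysis, treating the two complex variables as a product of two circle contours.

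\textbf{Step 1: existence and uniqueness of the stationary point.} The stationarity equations are
\begin{align*}
F(x,y;z,\alpha,1/N) \defeq \left(x+zy-\tfrac{\alpha+1/N}{x},\; y+zx-\tfrac{1-\alpha+1/N}{y}\right)=0.
\end{align*}
At $(z,\alpha,1/N)=(E,1/2,1/N)$ we have the explicit solution $(\lambda_\star,\lambda_\star)$. The Jacobian there is
\begin{align*}
\nabla^2\Phi=\begin{pmatrix}1+\lambda_\star^{-2}(\tfrac12+\tfrac1N) & z\\ z & 1+\lambda_\star^{-2}(\tfrac12+\tfrac1N)\end{pmatrix}=\begin{pmatrix}2+z&z\\z&2+z\end{pmatrix},
\end{align*}
using $\lambda_\star^{-2}(\tfrac12+\tfrac1N)=1+z$. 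Its determinant is $4(1+z)\neq0$ for $z$ near the positive reals. The holomorphic implicit function theorem, applied uniformly over the compact set $E\in[1-\varepsilon,e^{\betaKS}+\varepsilon]$ (compactness gives a uniform neighborhood), then produces a unique analytic stationary point $(x_z(\alpha),y_z(\alpha))$ near $(\lambda_\star,\lambda_\star)$ for $|\eta|,|\alpha-1/2|\le\varepsilon$.

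\textbf{Step 2: local saddle expansion.} We deform each contour to the circles $|x|=|x_E(1/2)|$ and $|y|=|y_E(1/2)|$, passing through the saddle. The parameter $\alpha$ is real and $z$ is close to $E$, so we first treat real $z=E$, $\alpha=1/2$ and then perturb. Writing $x=re^{i\theta}$, $y=re^{i\phi}$, the real part of $\Phi$ along the torus is
\begin{align*}
\tfrac{r^2}{2}(\cos2\theta+\cos2\phi)+Er^2\cos(\theta+\phi)-\text{const}.
\end{align*}
This is maximized exactly at $(\theta,\phi)=(0,0)$ and $(\pi,\pi)$, the two saddles $\pm(\lambda_\star,\lambda_\star)$. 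The maximum is strict and nondegenerate; the determinant of the angular Hessian is a positive multiple of $(1+E)\lambda_\star^4\cdot4>0$.

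Gaussian integration near each saddle contributes
\begin{align*}
\frac{2\pi}{N}\,e^{N\Phi}\det(-\nabla^2\Phi)^{-1/2}\cdot i^2,
\end{align*}
where the factor $i^2=-1$ comes from $dx\,dy=(ix)(iy)\,d\theta\,d\phi$ combined with the orientation. The two saddles give equal contributions: $\Phi$ is even in $(x,y)\mapsto(-x,-y)$ up to $\log(-1)$ terms, whose total is $e^{-i\pi N(\alpha+1/N)}e^{-i\pi N(1-\alpha+1/N)}=e^{-i\pi(N+2)}=1$ since $N$ is even. Together they yield the factor $4\pi/N$ and the overall sign $-$. The $1+O(1/n)$ error comes from the standard next-order term in Laplace's method, with the cubic and quartic Taylor coefficients bounded uniformly.

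\textbf{Step 3: away from the saddles, and the main obstacle.} Away from neighborhoods of the saddles, the strict maximum gives $\operatorname{Re}\Phi\le\operatorname{Re}\Phi(\text{saddle})-c$, so those parts contribute $e^{-cN}$. The main obstacle is keeping this strict global maximum under complex perturbation of $z$ and under $\alpha\neq1/2$.

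To handle it, we keep the contours as circles of radii $|x_z(\alpha)|$ and $|y_z(\alpha)|$, but rotate them to pass through the complex saddle. On these contours $\operatorname{Re}\Phi$ is a small $C^2$ perturbation, of size $O(\varepsilon)$, of the real-parameter function. A strict nondegenerate maximum is stable under such perturbations: the global maximum stays within $O(\sqrt\varepsilon)$ of the old saddle and remains unique, with the same constant $c/2$ gap elsewhere. Near the saddle, Cauchy's theorem lets us locally tilt the contour onto the steepest descent directions, so the Gaussian evaluation of Step 2 applies with the complex Hessian.

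Finally, the square root branch is fixed by continuity from the real case. This is well-defined because $\det(-\nabla^2\Phi)$ stays in a small neighborhood of the negative-real-free region traced out by its real-parameter values, hence never crosses the branch cut.
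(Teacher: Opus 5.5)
Your proposal is correct and takes essentially the same route as the paper. Both arguments use the holomorphic implicit function theorem for the saddle, integrate over the torus of radii $|x_z(\alpha)|,|y_z(\alpha)|$, identify the two antipodal strict maxima of the real part in the real case, transfer them to complex $(z,\alpha)$ by continuity and compactness, and combine two equal local Gaussian contributions via the $(x,y)\mapsto(-x,-y)$ symmetry.

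The only cosmetic difference is that the paper never tilts the contour. Since $\partial_\theta\tilde\Phi=ix\,\partial_x\Phi$ vanishes at the saddle, the real part on that torus is maximized exactly at the complex saddle, and the angular Hessian $A$ has negative definite real part there, so the complex Gaussian identity applies directly. As the paper notes, one should also restrict to $\alpha N\in\mathbb Z$ so that the integrand is single-valued on the torus.
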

\begin{proof}
\textbf{Existence of stationary point.}
For $\alpha\in [0,1]$ the stationary condition for $\Phi^{(N)}_{z,\alpha}$ can be written as
\begin{equation}\label{eq:stationary}
\left\{\begin{array}{ll} x^2 + zxy - \left(\alpha + \frac{1}{N} \right) &=0 \\
    y^2 + zxy - \left((1-\alpha) + \frac{1}{N} \right) &=0.
    \end{array}\right.
\end{equation}

We view this as the equation $F(x,y,z,\alpha)=0$ where $F:\C^4\to \C^2$ is the holomorphic function defined by the left-hand side of~\eqref{eq:stationary}.
The Jacobian of $F$ is given by
\begin{align*}
  J(F)=\left( \begin {array}{cccc} 2x+zy &zx& xy &-1
                                \\  zy&2y+zx& xy &1
                                \end{array}\right).
\end{align*}
The determinant of the matrix obtained from $J(F)$ by taking the first two columns is equal to $4xy+2x^2z+2y^2z$ and hence is non-zero for $\alpha=1/2$, $z\geq 1$ and $x=y=\pm \lambda_\star$.
It follows from the implicit function theorem there exists a open set $U$ containing $(1/2,z)$ and holomorphic functions $x$ and $y$ defined on $U$ such that on $U$, $F(x_z(\alpha),y_z(\alpha),z,\alpha)=0$ and $x_z(1/2)=y_z(1/2)=\lambda_\star$.
We may moreover assume that $U$ is chosen small enough so that both $x$ and $y$ are non-zero on $U$.
It thus follows that the required stationary point of $\Phi^{(N)}_{z,\alpha}$ exists.

\textbf{Computation of the integral.}
We now focus on the computation of the integral
  \begin{align*}
    \oint \exp\left(N \cdot \Phi^{(N)}_{z,\alpha}(x,y)\right)\,\differential (x,y).
    \end{align*}
Let us write
\[
    x_z(\alpha)=r_1e^{i\theta^*},\qquad y_z(\alpha)=r_2e^{i\theta^*}
\]
and set
\[
    \vartheta^*=(\theta^*,\phi^*)
\]
By Cauchy's theorem we may contour integrate the integral on the torus $\T=\{(x,y)\mid |x|=r_1, |y|=r_2\}$.
For the contour integral, we restrict to
\[
    \alpha\in n^{-1}\Z.
\]
Then $\alpha N$ and $(1-\alpha)N$ are integers, and $\exp(N\widetilde{\Phi}^{(N)}_{z,\alpha})$ is single-valued on the  torus.
When viewing $\Phi^{(N)}_{z,\alpha}$ as function of $(\theta,\phi)$ on $\T$ we will denote it as $\tilde\Phi^{(N)}_{z,\alpha}$.
We then have
\begin{align}
    &\oint
    \exp\left(
        N\Phi^{(N)}_{z,\alpha}(x,y)
    \right)
    \,\differential(x,y)
    \nonumber\\
    &\qquad=
    -r_1r_2
    \int_{[-\pi/2,3\pi/2]^2}
    \exp\left(
        N\tilde{\Phi}^{(N)}_{z,\alpha}(\theta,\phi)
    \right)
    e^{i(\theta+\phi)}
    \,\differential(\theta,\phi).
\label{eq:fm-complex-angular-integral}
\end{align}

It turns out that that it is natural to split the contour integral on the right-hand side of~\cref{eq:fm-complex-angular-integral} into six parts.
We shall see that the dominant contributions come from small neighboourhoods around the two stationary points.
We first look at the case $\alpha=1/2$ and $z=E\geq 1-\varepsilon$ (for $\varepsilon$ to be determined later) to obtain information about the Hessian of $\tilde \Phi^{(N)}_{z,\alpha}$.


First, the stationary points 
$(x_0,y_0)=\pm (\lambda_\star,\lambda_\star)$ correspond to $(0,0)$ and $(\pi,\pi)$ in the new coordinates. 
We focus on the point $(0,0)$.

Let us make a few observations. At $(0,0)$ the Hessian of $\tilde \Phi_{z,1/2}$ is given by
\begin{equation}
\label{eq:fm-real-angular-hessian}
    \nabla^2
    \tilde{\Phi}^{(N)}_{E,1/2}(0,0)
    =
    -(\lambda^\star(E))^2
    \begin{pmatrix}
        2+E & E\\
        E & 2+E
    \end{pmatrix}.
\end{equation}
Its eigenvalues are
\[
    -2(\lambda^\star(E))^2
    \qquad\text{and}\qquad
    -2(1+E)(\lambda^\star(E))^2
    =
    -2\left(
        \frac12+\frac1N
    \right).
\]
They are uniformly bounded away from zero when $E$ ranges over a fixed compact neighborhood of $[1,e^{\betaKS}]$.
Thus $(0,0)$ and $(\pi,\pi)$ are both local maxima of $\re(\tilde\Phi^{(N)}_{E,1/2})$, a direct computation gives that they are also the only global maxima on the torus $\T$.
It thus follows from \eqref{eq:fm-real-angular-hessian} and compactness, that there exists constants $\gamma,\delta,c>0$ such that, uniformly in $E$ and all sufficiently large $N$ we have
\begin{itemize}
    \item the eigenvalues of the real part of the Hessian of $\Tilde \Phi^{(N)}_{E,1/2}$ are bounded above by $-2c$ in the $\gamma$-neighborhoods of $(0,0)$ (and $(\pi,\pi)$ by symmetry);
    \item outside these two neighborhoods, 
    \[
    \re\tilde\Phi_{E,1/2}^{(N)}(\theta,\phi)\le \re\tilde\Phi_{E,1/2}^{(N)}(0,0)-2\delta.
    \]
\end{itemize}
We now perturb $(E,1/2)$ and extend these properties to the general setting.
To this end note that the higher order derivatives of $\tilde\Phi$ and the stationary point $(x_z(\alpha),y_z(\alpha)$ depend continuously on $(z,\alpha,1/N)$. 
Therefore, provided $\varepsilon$ is small enough, the proceeding uniform real-case estimates, continuity, symmetry and compactness imply that there exists a neighbourhood $U_1$ of $(0,0)$ containing $\vartheta^*$ such that the following holds:
\begin{enumerate}
    \item[P(1)] the eigenvalues of the real part of the $\tilde\Phi^{(N)}_{z,\alpha}$ are bounded above by $-c$
    on $U_1$ and on $U_2:=U_1+\pi(1,1)$;
    \item[P(2)] for all $(\theta,\phi)\in \T\setminus (U_1\cup U_2)$,
    \begin{equation}
    \label{eq:fm-perturbed-gap}
        \re\tilde{\Phi}^{(N)}_{z,\alpha}(\theta,\phi)
        \leq
        \re\tilde{\Phi}^{(N)}_{z,\alpha}(\vartheta^*)   - \delta;
    \end{equation}

    \item[P(3)] all derivatives $\tilde\Phi$ of order at most four are uniformly bounded on $U_1\cup U_2$.
\end{enumerate}

We now start computing the contributions to the contour integral~\cref{eq:fm-complex-angular-integral} from various parts of the torus, starting with the contribution of the stationary points.

\textbf{Contribution around the stationary point $\vartheta^*$} 
Let $z_0=x_z(\alpha)$ and $y_0=y_z(\alpha)$ and let
\[
    H=\nabla^2_{x,y}     \Phi^{(N)}_{z,\alpha}(x_0,y_0).
\]
Since $(x_0,y_0)$ is stationary, we have
\begin{equation} \label{eq:A relates to Hessian}
    A=\nabla^2
    \tilde{\Phi}^{(N)}_{z,\alpha}(\vartheta^*)=-DHD,
  \end{equation}
where $D=\diag(x_0,y_0)$. In particular, by observation P(1) we have
\begin{equation}
\label{eq:fm-negative-real-hessian}
    \re(t^{\matrixtranspose}At)\le -c\|t\|^2
\end{equation}
for every $t\in\R^2$.

Let $B_N=[-N^{-2/5},N^{-2/5}]^2\subseteq [-1,1]^2$. 
We want to understand the integral \eqref{eq:fm-complex-angular-integral} restricted to $\vartheta^*+B_N$. 
By Taylor's theorem we have
\begin{align}\label{eq:taylor for Phi}
\tilde{\Phi}^{(N)}_{z,\alpha}(\vartheta^*+s)=\tilde{\Phi}^{(N)}_{z,\alpha}(\vartheta^*)+\tfrac{1}{2} s^{\matrixtranspose}As+P_3(s)+\mathcal{E}(s)
\end{align}
Here $P_3$ is an odd homogeneous polynomials of degree three and the error term satisfies $|\mathcal{E}(s)|\le C\|s\|^4$ for some constant $C>0.$
By changing the variable $s=t/\sqrt{N}$ we obtain
\begin{align}
    &N\left[
        \tilde{\Phi}^{(N)}_{z,\alpha}
        \left(\vartheta^*+\frac{t}{\sqrt N} \right)
        -
        \tilde{\Phi}^{(N)}_{z,\alpha}(\vartheta^*)
    \right]
    +
    \frac{i(t_1+t_2)}{\sqrt N}
    \nonumber\\
    &\qquad=\frac12t^{\matrixtranspose}At
    +
    \frac1{\sqrt N}Q_3(t)
    +
    \tilde{\mathcal{E}}(t),
\label{eq:fm-rescaled-expansion}
\end{align}
where $Q_3(t):=P_3(t)+i(t_1+t_2)$,
and
\begin{equation}
\label{eq:fm-rescaled-remainder}
    |\tilde{\mathcal{E}}(t)|
    \leq
    Cn^{-1}\|t\|^4.
\end{equation}

Uniformly for $t\in [-N^{1/10},N^{1/10}]^2$ we have
\begin{equation}
\label{eq:fm-odd-cancellation}
    \exp\left(\frac{1}{\sqrt{N}}Q_3(t)+\tilde{\mathcal{E}}(t)\right)=1+\frac{1}{\sqrt{N}}Q_3(t)+O(
    n^{-2/5}).
\end{equation}
Since $Q_3$ is odd and $[-N^{1/10},N^{1/10}]^2$ is symmetric under $t\mapsto -t$ we have,
\[
    \int_{[-N^{1/10},N^{1/10}]^2}Q_3(t)\exp(\tfrac{1}{2}t^{\matrixtranspose}At) \differential t=0.
\]
This implies that 
\begin{align}
    \int_{B_N}\exp\left (\Tilde{\Phi}_{E,1/2}(\vartheta^*+s)-\Tilde{\Phi}_{E,1/2}(\vartheta^*)\right)\differential t&= \frac{1}{N}\int_{[-N^{1/10},N^{1/10}]^2}
    \exp\left(
        \frac12t^{\matrixtranspose}At
        +
        \frac1{\sqrt N}Q_3(t)
        +
        \tilde{\mathcal{E}}(t)
    \right)
    \,\differential t
    \nonumber\\
    &=
    \frac{1}{N}\int_{\R^2}
    \exp\left(
        \frac12t^{\matrixtranspose}At
    \right)
    \,\differential t
    +
    O\left(n^{-1}\right),
\label{eq:fm-local-gaussian-reduction}
\end{align}
where we use that $\exp(\tfrac{1}{2}t^T At)$ is bounded since $\re(A)$ is negative definite.

Since $A$ is complex symmetric and $\re(-A)$ is positive definite, the complex Gaussian identity gives
\begin{equation}
\label{eq:fm-complex-gaussian}
    \int_{\R^2}
    \exp\left(
        \frac12t^{\matrixtranspose}At
    \right)
    \,\differential t
    =
    \frac{2\pi}{
        \sqrt{\det(-A)}
    }.
\end{equation}
The square-root branch is obtained by continuous continuation from real negative-definite matrices. Identity \eqref{eq:fm-complex-gaussian} follows from the usual real Gaussian integral by analytic continuation; the estimate \eqref{eq:fm-negative-real-hessian} provides uniform integrable domination.

Combining \eqref{eq:fm-local-gaussian-reduction} and \eqref{eq:fm-complex-gaussian}, and recalling the Jacobian of the rescaling, gives
\begin{align}
    &\int_{\vartheta^*+B_N}
    \exp\left(
        N\tilde{\Phi}^{(N)}_{z,\alpha}(\theta,\phi)
    \right)
    e^{i(\theta+\phi)}
    \,\differential(\theta,\phi)
    \nonumber\\
    &\qquad=
    \left(
        1+O\left(\frac1n\right)
    \right)
    \frac{2\pi}{N}
    \frac{
        \exp\left(
            N\tilde{\Phi}^{(N)}_{z,\alpha}(\vartheta^*)
        \right)
        e^{i(\theta^*+\phi^*)}
    }{
        \sqrt{\det(-A)}
    }.
\label{eq:fm-one-angular-saddle}
\end{align}
The error in \eqref{eq:fm-one-angular-saddle} is relative because $\det(-A)$ is uniformly bounded above and bounded away from zero.

By~\cref{eq:A relates to Hessian} we have
\[
    \det(-A)=(x_0y_0)^2\det(-H) \Longleftrightarrow \sqrt{\det(-A)}= x_0y_0\sqrt{\det(-H)}.
\]
and
\[
    x_0y_0=r_1r_2e^{i(\theta^*+\phi^*)},
\]
thus the contribution of the neighborhood $B_N$ around $\vartheta^*$ to the contour integral~\cref{eq:fm-complex-angular-integral} is given by
\begin{equation}
\label{eq:fm-one-contour-saddle}
    -\left(
        1+O\left(\frac1n\right)
    \right)
    \frac{2\pi}{N}
    \frac{
        \exp\left(
            N\Phi^{(N)}_{z,\alpha}(x_0,y_0)
        \right)
    }{
        \sqrt{
            \det\left(
                -\nabla^2_{x,y}
                \Phi^{(N)}_{z,\alpha}(x_0,y_0)
            \right)
        }
    }.
\end{equation}

\noindent
\textbf{Contribution of the rest}
The other stationary point is at $\vartheta^*+(\pi,\pi)$, thus by symmetry the contribution around this point, i.e. $I'_N=\vartheta^*+(\pi,\pi)+B_N$is the same as \eqref{eq:fm-one-contour-saddle}.

Now, inside either of the two neighborhoods $U_1\cup U_2$ given by observation P(1), uniform strict concavity gives
\[ 
    \re\tilde\Phi_{z,\alpha}(\vartheta)\le \re\tilde\Phi_{z,\alpha}(\vartheta^*)-c\|\vartheta-\vartheta^*\|^2,
\]
Therefore, outside of $I_N\cup I'_N$ but inside $U_1\cup U_2$, we have
\[
    \re\tilde\Phi_{z,\alpha}(\vartheta)\le \re\tilde\Phi_{z,\alpha}(\vartheta^*)-cN^{-4/5},
\]
for some $c>0$, whereas outside $U_1\cup U_2$ we can replace $-cN^{-4/5}$ by $-\delta$ by P(2).
Together with  \eqref{eq:fm-perturbed-gap}, this yields
\begin{align}
    &\left|
        \int_{
            [-\pi/2,3\pi/2]^2
            \setminus
            \left(
                I_N
                \cup
                I'_N
            \right)
        }
        \exp\left(
            N\tilde{\Phi}^{(N)}_{z,\alpha}(\theta,\phi)
        \right)
        e^{i(\theta+\phi)}
        \,\differential(\theta,\phi)
    \right|
    \nonumber\\
    &\qquad\leq
    C
    \exp\left(
        N\re
        \tilde{\Phi}^{(N)}_{z,\alpha}(\vartheta^*)
    \right)
    e^{-cN^{1/5}}.
\label{eq:fm-tail-bound}
\end{align}

By compactness (and if necessary decreasing $\varepsilon$) $|x_z(\alpha)|,|y_z(\alpha)|$ are bounded away from $0$ and infinity as well as the determinant of the Hessian of $\Phi_{z,\alpha}$ at the stationary points. Therefore the magnitude of the sum of the two local terms~\cref{eq:fm-one-contour-saddle} is comparable to
\[
    \frac1N
    \exp\left(
        N\re
        \tilde{\Phi}^{(N)}_{z,\alpha}(\vartheta^*)
    \right).
\]
Thus \eqref{eq:fm-tail-bound} has relative size
\[
    O\left(
        Ne^{-cN^{1/5}}
    \right)
    =
    O\left(\frac1n\right).
\]
Adding the two identical saddle contributions and absorbing the remainder, we conclude that
\[
\begin{aligned}
    &\oint
    \exp\left(
        N\Phi^{(N)}_{z,\alpha}(x,y)
    \right)
    \,\differential(x,y)
    \\
    &\qquad=
    -\left(
        1+O\left(\frac1n\right)
    \right)
    \frac{4\pi}{N}
    \frac{
        \exp\left(
            N\Phi^{(N)}_{z,\alpha}
            \left(
                x_z(\alpha),y_z(\alpha)
            \right)
        \right)
    }{
        \sqrt{
            \det\left(
                -\nabla^2_{x,y}
                \Phi^{(N)}_{z,\alpha}
                \left(
                    x_z(\alpha),y_z(\alpha)
                \right)
            \right)
        }
    }.
\end{aligned}
\]
This proves the proposition.
\end{proof}

\subsubsection{Derivative calculations}
Here we calculate the first and second derivative of $\Phi^{(N)}_{z,\alpha}$ at the stationary point, when regarded as a function of $\alpha$.
\begin{lemma}
  \label{l:2deriv}
  Let $z$ be as in~\cref{p:contour-integral}.
  Then,
  \begin{align*}
    \frac{\differential}{\differential \alpha} \Phi^{(N)}_{z,\alpha}(x_z(\alpha),y_z(\alpha)) \Big|_{\alpha = 1/2} &= 0\,, \\
    \frac{\differential^2}{\differential^2 \alpha} \Phi^{(N)}_{z,\alpha}(x_z(\alpha),y_z(\alpha)) \Big|_{\alpha = 1/2} 
    &= - \lambda_\star^{-2} 
    = - (1+z) \left(\frac12 + \frac{1}{N}\right)^{-1}\,.
  \end{align*}
\begin{proof}
Throughout the proof we write $\Phi_{z,\alpha}$ instead of $\Phi_{z,\alpha}^{(N)}$.
  Let
  \[
    \partial_x\Phi_{z,\alpha}(x,y) = x + zy - \left(\alpha + \frac{1}{N}\right) \frac{1}{x}\,, \qquad 
    \partial_y\Phi_{z,\alpha}(x,y) = y + zx - \left(1 - \alpha + \frac{1}{N}\right) \frac{1}{y}
  \]
  denote the derivatives of $\Phi_{z,\alpha}(x,y)$ through the arguments $x,y$, and 
  \[
    \nabla \Phi_{z,\alpha} = (\partial_x \Phi_{z,\alpha}, \partial_y \Phi_{z,\alpha})^{\matrixtranspose}\,.
  \]
  Let
  \[
    \partial_\alpha \Phi_{z,\alpha}(x,y) = - \log x + \log y
  \]
  denote the derivative of $\Phi_{z,\alpha}(x,y)$ through the parameter $\alpha$.
  Then, by chain rule
  \begin{align*}
    \frac{\differential}{\differential \alpha} \Phi_{z,\alpha}(x_z(\alpha),y_z(\alpha))
    &= \partial_\alpha \Phi_{z,\alpha}(x_z(\alpha),y_z(\alpha)) + \nabla \Phi_{z,\alpha}(x_z(\alpha),y_z(\alpha)) \cdot (x'_z(\alpha),y'_z(\alpha)) \\
    &= \partial_\alpha \Phi_{z,\alpha}(x_z(\alpha),y_z(\alpha))\,,
  \end{align*}
  where the $\nabla \Phi_{z,\alpha}$ term vanishes because $\nabla \Phi_{z,\alpha}(x_z(\alpha),y_z(\alpha)) = 0$.
  Evaluating at $\alpha = \frac12$ yields
  \[
    \frac{\differential}{\differential \alpha} \Phi_{z,\alpha}(x_z(\alpha),y_z(\alpha)) \Big|_{\alpha = 1/2}
    = \partial_\alpha \Phi_{z,\alpha}(\lambda_\star,\lambda_\star)
    = 0\,.
  \]
  Define the second derivatives $\nabla^2$, $\partial_\alpha \nabla$, and $\partial_\alpha^2$ of $\Phi_{z,\alpha}$ similarly as above.
  Then,
  \begin{align*}
    \frac{\differential^2}{\differential^2 \alpha} \Phi_{z,\alpha}(x_z(\alpha),y_z(\alpha))
    &= \frac{\differential}{\differential \alpha} \left(\partial_\alpha \Phi_{z,\alpha}(x_z(\alpha),y_z(\alpha))\right) \\
    &= \partial_\alpha^2 \Phi_{z,\alpha}(x_z(\alpha),y_z(\alpha))
    + \partial_\alpha \nabla \Phi_{z,\alpha}(x_z(\alpha),y_z(\alpha)) \cdot (x'_z(\alpha),y'_z(\alpha))^{\matrixtranspose}\,.
  \end{align*}
  We can calculate $(x'_z(\alpha),y'_z(\alpha))$ via the implicit function theorem: since $(x_z(\alpha),y_z(\alpha))$ solves
  \[
    \nabla \Phi_{z,\alpha}(x_z(\alpha),y_z(\alpha)) = 0\,,
  \]
  differentiating in $\alpha$ yields
  \[
    \partial_\alpha \nabla \Phi_{z,\alpha}(x_z(\alpha),y_z(\alpha))
    + \nabla^2 \Phi_{z,\alpha}(x_z(\alpha),y_z(\alpha)) (x'_z(\alpha),y'_z(\alpha)) = 0\,,
  \]
  that is
  \[
    (x'_z(\alpha),y'_z(\alpha))
    = - \left(\nabla^2 \Phi_{z,\alpha}(x_z(\alpha),y_z(\alpha))\right)^{-1} \partial_\alpha \nabla \Phi_{z,\alpha}(x_z(\alpha),y_z(\alpha))\,.
  \]
  Hence
  \[
    \frac{\differential^2}{\differential^2 \alpha} \Phi_{z,\alpha}(x_z(\alpha),y_z(\alpha))
    = \partial_\alpha^2 \Phi_{z,\alpha}
    - \left\langle 
      \partial_\alpha \nabla \Phi_{z,\alpha},
      \left(\nabla^2 \Phi_{z,\alpha}\right)^{-1} 
      \partial_\alpha \nabla \Phi_{z,\alpha}
    \right\rangle
  \]
  (To conserve notation we have omitted that all the functions on the right-hand side are evaluated at $(x_z(\alpha),y_z(\alpha))$.)
  At $\alpha = 1/2$, we have $x_z(\alpha) = y_z(\alpha) = \lambda_\star$.
  At this input we evaluate
  \begin{align*}
    \partial_\alpha^2 \Phi_{z,\alpha} &= 0\,, & 
    \partial_\alpha \nabla \Phi_{z,\alpha} &= - \frac{1}{\lambda_\star} \begin{bmatrix}1 \\ -1\end{bmatrix}\,,
  \end{align*}
  and
  \[
    \nabla^2 \Phi_{z,\alpha} = \begin{bmatrix}
      1 + (\frac12 + \frac{1}{N}) \frac{1}{\lambda_\star^2} & z \\ 
      z & 1 + (\frac12 + \frac{1}{N}) \frac{1}{\lambda_\star^2}
    \end{bmatrix} = \begin{bmatrix}
      2+z & z \\ 
      z & 2+z
    \end{bmatrix}
  \]
  Thus
  \begin{align*}
    \frac{\differential^2}{\differential^2 \alpha} \Phi_{z,\alpha}(x_z(\alpha),y_z(\alpha)) \Big|_{\alpha = 1/2}
    &= -\frac{1}{\lambda_\star^2} \left\langle 
      \begin{bmatrix}1 \\ -1\end{bmatrix}, 
      \begin{bmatrix}
        2+z & z \\ 
        z & 2+z
      \end{bmatrix}^{-1}
      \begin{bmatrix}1 \\ -1\end{bmatrix}
    \right\rangle \\
    &= -\frac{1}{\lambda_\star^2}
    = - (1+z) \left(\frac12 + \frac{1}{N}\right)^{-1}\,.
  \end{align*}
\end{proof}
\end{lemma}
\subsubsection{Proof of~\cref{l:1mt}}

    Put $N=nd$ and $z=e^\beta$ and define
    \[
        \mathcal A_{n,\varepsilon}
        =
        \left\{
          \alpha\in\left\{0,\frac{1}{n},\ldots,1\right\}:
          \left|\alpha-\frac{1}{2}\right|<\varepsilon
        \right\}.
    \]
    By the coefficient representation for the truncated first moment (~\cref{lem:1st moment contour representation}) and ~\cref{p:contour-integral},
    \begin{align*}
        \E\left[Z_{M,<\varepsilon}(\beta)\right]
        &=
        \sum_{\alpha\in\mathcal A_{n,\varepsilon}}
        \binom{n}{\alpha n}
        \frac{(\alpha N)!((1-\alpha)N)!}{(N-1)!!\,N^{\frac{N}{2}}}
        \frac{1}{(2\pi i)^2}
        \\
        &\qquad\qquad\times
        \left[
          -\left(1+O\left(\frac{1}{n}\right)\right)
          \frac{4\pi}{N}
          \frac{
            \exp\left(
              N\Phi^{(N)}_{z,\alpha}(x_z(\alpha),y_z(\alpha))
            \right)
          }{
            \sqrt{
              \det\left(
                -\nabla^2\Phi^{(N)}_{z,\alpha}(x_z(\alpha),y_z(\alpha))
              \right)
            }
          }
        \right].
  \end{align*}
  
    By Stirling-approximation we have
    \begin{align*}
        &\binom{n}{\alpha n}
        \frac{(\alpha N)!((1-\alpha)N)!}{(N-1)!!\,N^{\frac{N}{2}}}
        \frac{1}{\pi N}
        \\
        &\qquad=
        \frac{1+O\left(\frac{1}{n}\right)}{\sqrt{\pi n}}
        \exp\left(
          -n(d-1)H(\alpha)-\frac{N}{2}
        \right),
    \end{align*}
    uniformly for $\alpha\in \mathcal{A}_{n,\varepsilon}$, where
    \[ 
    H(\alpha    )=-\alpha\log\alpha-(1-\alpha)\log(1-\alpha)
    \]
    denotes the entropy function (with natural logarithms).
    
    Now define
    \[
    f_{N    ,z}=(d-1)H(\alpha)-d(\Phi_{z,\alpha}(x_z(\alpha),y_z(\alpha))-\tfrac{1}{2})
    \]
    and
    \[
        a_{N,z}(\alpha)=\left(
            \det(-\nabla^2 \Phi_{z,\alpha}(x_z(\alpha),y_z(\alpha)) )
        \right)^{-1/2}.
    \]
Then    
    \begin{equation}\label{e:1mt-simp}
        \E\left[Z_{M,<\varepsilon}(\beta)\right]
        =
        \frac{1}{\sqrt{\pi n}}
        \sum_{\alpha\in\mathcal A_{n,\varepsilon}}
        a_{N,z}(\alpha)
        \exp\left(-nf_{N,z}(\alpha)\right)
        \left(1+O\left(\frac{1}{n}\right)\right),
    \end{equation}
    where the error is uniform over $\alpha\in\mathcal A_{n,\varepsilon}$ and over the parameter region under consideration.
    
    Interchanging $x$ and $y$ in the stationary-point equations and using uniqueness of the stationary points gives
    \[
        x_z(1-\alpha)=y_z(\alpha),
        \qquad
        y_z(1-\alpha)=x_z(\alpha).
    \]
    Consequently,
    \[
        f_{N,z}(1-\alpha)=f_{N,z}(\alpha),
        \qquad
        a_{N,z}(1-\alpha)=a_{N,z}(\alpha).
    \]
    In particular, both functions are even around $\alpha=\frac{1}{2}$.
    
    By~\cref{l:2deriv},
    \[
        f'_{N,z}\left(\frac{1}{2}\right)=0
    \]
    and
    \begin{equation} \label{eq:1mt-kappa}
        \kappa_{N,z}=f''_{N,z}\left(\frac{1}{2}\right)
        =
        d(1+z)\left(\frac{1}{2}+\frac{1}{N}\right)^{-1}
        -4(d-1)
        =
        2d(z-1)+4+O\left(\frac{1}{n}\right).
    \end{equation}

    After decreasing the neighborhood in~\cref{p:contour-integral} if necessary, there exists a constant $c>0$ such that
  \[
    \re\kappa_{N,z}\geq c
  \]
  uniformly over all sufficiently large $n$.

  Symmetry implies that uniformly for sufficiently small $u$
  \[
    f_{N,z}\left(\frac{1}{2}+u\right)=f_{N,z}\left(\frac{1}{2}\right)+\frac{\kappa_{N,z}}{2}u^2+O(u^4),
  \]
  and
  \[
    a_{N,z}\left(\frac{1}{2}+u\right)=a_{N,z}\left(\frac{1}{2}\right)+O(u^2).
  \]
  By further decreasing $\varepsilon$ if necessary, we have
  \[
    \re\left(f_{N,z}\left(\frac{1}{2}+u\right)-f_{N,z}\left(\frac{1}{2}\right)\right)\ge cu^2
  \]
  for all $|u|\le \varepsilon$ for some $c>0$. It follows that the contribution to \eqref{e:1mt-simp} from
  \[
    \left|\alpha-\frac{1}{2}\right|>n^{-\frac{2}{5}}
  \]
  is exponentially small relative to the main contribution.

  For $\alpha=\frac{m}{n}$ satisfying
  \[
    \left|\alpha-\frac{1}{2}\right|\leq n^{-\frac{2}{5}}, \text{ let }
    t_m=\frac{m-\frac{n}{2}}{\sqrt{n}}.
  \]

  The preceeding Taylor expansion gives
  \begin{align*}
    &a_{N,z}\left(\frac{m}{n}\right)
    \exp\left(
      -n\left[
        f_{N,z}\left(\frac{m}{n}\right)
        -
        f_{N,z}\left(\frac{1}{2}\right)
      \right]
    \right)
    \\
    &\qquad=
    a_{N,z}\left(\frac{1}{2}\right)
    \exp\left(
      -\frac{\kappa_{N,z}}{2}t_m^2
    \right)
    \left[
      1+
      O\left(
        \frac{1+t_m^2+t_m^4}{n}
      \right)
    \right].
  \end{align*}

The possible values of
\[
    t_m:=\sqrt n\left(\frac mn-\frac12\right)
        =\frac{m-n/2}{\sqrt n}
\]
form the affine lattice
\[
    \Lambda_n
    =
    \frac1{\sqrt n}\left(\mathbb Z-\frac n2\right).
\]
Since $\kappa_{N,z}$ ranges over a compact subset of
$\{w\in\mathbb C:\Re w>0\}$, Poisson summation gives
\[
    \sum_{m\in\mathbb Z}
    \exp\left(
        -\frac{\kappa_{N,z}}{2n}
        \left(m-\frac n2\right)^2
    \right)
    =
    \sqrt{\frac{2\pi n}{\kappa_{N,z}}}
    \left(1+O(e^{-cn})\right).
\]
The same argument, applied with the polynomial weights $t^2$
and $t^4$, gives
\[
    \sum_{m\in\mathbb Z}
    t_m^{2j}e^{-\kappa_{N,z}t_m^2/2}
    =
    O(\sqrt n),
    \qquad j=1,2.
\]
Moreover, the contribution from $|t_m|>n^{1/10}$ is
$O(\sqrt n\,e^{-c n^{1/5}})$. It follows that the accumulated
Taylor error is $O(n^{-1})$ relative to the leading Gaussian
sum.

Hence
  \begin{align}
    &\sum_{\alpha\in\mathcal A_{n,\varepsilon}}
    a_{N,z}(\alpha)
    \exp\left(-nf_{N,z}(\alpha)\right)
    \left(1+O\left(\frac{1}{n}\right)\right) \nonumber
    \\
    &\qquad=
    a_{N,z}\left(\frac{1}{2}\right)
    \exp\left(
      -nf_{N,z}\left(\frac{1}{2}\right)
    \right)
    \sqrt{\frac{2\pi n}{\kappa_{N,z}}}
    \left(1+O\left(\frac{1}{n}\right)\right).
    \label{eq:1mt-discrete-laplace}
  \end{align}

  Now we have to evaluate this at $\alpha=1/2$. We have,
  \[
    x_z\left(\frac{1}{2}\right)
    =
    y_z\left(\frac{1}{2}\right)
    =
    \lambda_\star,
    \qquad \text{ where }
    \lambda_\star^2
    =
    \frac{\frac{1}{2}+\frac{1}{N}}{1+z}.
  \]
  Moreover,
  \[
    \nabla^2\Phi^{(N)}_{z,1/2}(\lambda_\star,\lambda_\star)
    =
    \begin{bmatrix}
      2+z & z\\
      z & 2+z
    \end{bmatrix},
  \]
  and therefore
  \[
    a_{N,z}\left(\frac{1}{2}\right)
    =
    \frac{1}{2\sqrt{1+z}}.
  \]
  A direct calculation reveals that
  \[
    \exp\left(-nf_{N,z}(1/2)\right)=2(1+z)2^n\left(\frac{1+z}{2}\right)^{N/2}\left(1+O\left(n^{-1}\right)\right),
  \]
 and thus
  \[
     \E\left[Z_{M,<\varepsilon}(\beta)\right]
    =
    \sqrt{\frac{1+z}{d(z-1)+2}}\cdot
    2^n
    \left(\frac{1+z}{2}\right)^{\frac{N}{2}}
    \left(1+O\left(\frac{1}{n}\right)\right),
  \]
  where substituting $z=e^\beta$ and $N=nd$ proves the result.

\subsection{Reweighted second moment: Proof of \texorpdfstring{\cref{thm:reweighted-2ndmoment}}{Reweighted Second Moment}}\label{subsec:2nd-moment-complex}
The proof of \cref{thm:reweighted-2ndmoment} rests on the following key estimate on the conditional second moment. We call it ``decoupling'' because the first term in the right-hand side is equal (up to small error) to the \emph{unconditional} (normalized) second moment.
\begin{proposition}[Conditional Second Moment Decoupling]\label{prop:2nd-moment}
For all $0 < \eta < 1$, all $\beta$ within a sufficiently small constant (depending on $d,\eta$) neighborhood of $[0,(1-\eta)\beta_{\mathsf{KS}}]$, all sufficiently small $\epsilon > 0$, all $m \leq C\log n$ for some constant $C>0$ and all matchings $F \subseteq \binom{V \times [d]}{2}$ with $|F| \leq m$, we have
\begin{align*}
    \frac{\E_{M}\wrapb{Z_{M,<\epsilon}^{(2)}(\beta) \,\, \bigg| \,\, F \subseteq M}}{\abs{\E_{M}\wrapb{Z_{M,<\epsilon}(\beta)}}^{2}} &= \sqrt{\frac{1}{1 - (d-1) \cdot \abs{\delta_{2}}}} \cdot \abs{\widetilde{Z}_{F}(\beta)}^{2} + \frac{K^{m}}{n},
\end{align*}
where $K>1$ is a constant depending on $\eta$ and $d$.
\end{proposition}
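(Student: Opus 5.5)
We follow the template of the first-moment computation in \cref{subsec:1st-moment-complex}, now with two replicas and a planted matching $F$. Conditioned on $F \subseteq M$, the rest of $M$ is a uniform perfect matching on the $N-2|F|$ sockets not covered by $F$. We therefore split $e^{\beta H_M(\sigma^1)+\bar\beta H_M(\sigma^2)}$ into the contribution of $F$ and that of $M\setminus F$. Given $(\sigma^1,\sigma^2)$, the free sockets fall into four types indexed by $(\sigma^1,\sigma^2)\in\{\pm1\}^2$. We record their counts $\momentCounts$, which are the full type counts minus the socket counts of $F$ in each type. The expected weight of $M \setminus F$ then depends only on $\momentCounts$, and it equals $\frac{\prod_j m_j!}{(N-2|F|-1)!!}[\contourVariables^{\momentCounts}]\exp(\frac12\contourVariables^{\top}W\contourVariables)$. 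Here $W$ is the $4\times4$ interaction matrix with entries $e^{\beta\frac{1-s_1s_1'}{2}}e^{\bar\beta\frac{1-s_2s_2'}{2}}$. This is the four-variable analogue of \cref{lem:1st moment contour representation}, and we extract the coefficient by a $4$-dimensional contour integral.

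\textbf{Step 1 (the vertex sum and the $F$-factor).} We sum over the vertex type proportions $\typeProportions\in\Delta_4$ restricted to the near-orthogonal region $|R(\sigma^i,\allone)|,|R(\sigma^1,\sigma^2)|<\epsilon$. For each fixed $\typeProportions$ we also sum over the types of the $\le 2m$ vertices touched by $F$. Since $|F|\le m = O(\log n)$, removing $O(m)$ sockets shifts the exponents of the contour integrand by $O(m)$ in a controlled way. Evaluating at the saddle point, each removed socket of type $j$ contributes a factor $x_j^{*-1}$, and these combine with the edge weights of $F$. At the symmetric saddle ($\typeProportions$ uniform) all four $x_j^*$ are equal, so the sum over spin assignments on $V(F)$, weighted by $e^{\beta H_F(\sigma^1)+\bar\beta H_F(\sigma^2)}$ and normalized, factorizes into $\widetilde Z_F(\beta)\overline{\widetilde Z_F(\beta)}=|\widetilde Z_F(\beta)|^2$. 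The deviation from the saddle costs relative error $O(\|\typeProportions-\frac14\|\cdot m)$. We bound its contribution crudely by $K^m$ times an $O(1/n)$ Gaussian moment. This is the source of the additive $K^m/n$, and we make no attempt to get cancellation here.

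\textbf{Step 2 (saddle point and Laplace in $\typeProportions$).} As in \cref{p:contour-integral}, we locate the stationary point of the $4$-variable phase via the implicit function theorem around the real symmetric point, and deform to the torus through it. We bound the non-dominant region using the real-$\beta$ strict maximality, transferred to small $\eta$ by continuity. We then evaluate the local Gaussian integrals with the complex Gaussian identity, using the $(\pm)$ sign symmetries to account for the multiple saddles. Next comes the discrete Laplace/Poisson summation over $\typeProportions$ in three dimensions, as in \eqref{eq:1mt-discrete-laplace}. Combined with the Stirling factors, the exponential part cancels exactly against $|\E Z_{M,<\epsilon}|^2$, as is forced at real $\beta$ by the known annealed second moment. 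What remains is a ratio of Hessian determinants. We compute this ratio in the overlap direction via an analogue of \cref{l:2deriv}: the curvature there is proportional to $1-(d-1)|\delta_2|$, since the cross term between the replicas is $e^{\beta}$ against $e^{\bar\beta}$, which produces $|\tanh(\beta/2)|^2=|\delta_2|$. This yields the prefactor $\sqrt{1/(1-(d-1)|\delta_2|)}$, with the magnetization directions cancelling against the first-moment prefactors from \cref{l:1mt}.

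\textbf{Main obstacle.} The hard part is the global analysis of the complex $4$-variable phase, specifically proving that the symmetric saddle dominates uniformly near the complex strip. At real $\beta$ this is equivalent to the second-moment optimization on the restricted overlap window, which is only valid up to $\betasecond$ unless one restricts to $|R|<\epsilon$. So we work only locally: we take $\epsilon$ small, so that the Hessian's positivity (Re part $\gtrsim 1-(d-1)|\delta_2|>0$ on the whole KS range, by \cref{lem:gaussian-correction-series}) controls the whole window. Uniform control of the $F$-dependent perturbation in Step 1 is the second delicate point; I would handle it by bounding each derivative in $\typeProportions$ of the $F$-factor by $K^m$.
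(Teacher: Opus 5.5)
Your route is the paper's: the four-variable contour representation with $Q(e^{\beta},e^{\bar\beta})$, the torus analysis with the two saddles $\vec 0$ and $\pi\vec 1$, and a three-dimensional lattice Laplace sum in $\typeProportions$ whose overlap eigenvalue is proportional to $1-(d-1)\abs{\delta_2}$. After conditioning on $F$, you likewise get a uniform matching on the free sockets whose main term does not depend on the spins on $V(F)$, so the boundary sum yields $\abs{\widetilde Z_F(\beta)}^2$ (\cref{lem:boundary-shifted-laplace} and the proof after it). The step that fails to deliver the stated bound is the error term.

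You say the $F$-dependent deviation from the symmetric saddle costs a relative error $O(m\|\typeProportions-\tfrac14\vec 1\|)$, and that you bound it crudely, with no cancellation, by $K^m$ times an $O(1/n)$ Gaussian moment. But on the Gaussian window $\|\typeProportions-\tfrac14\vec 1\|$ is of order $n^{-1/2}$. Crudely bounding a correction that is linear in the deviation therefore gives only $O(mK^m/\sqrt n)$.

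This first-order term is really present for a fixed boundary assignment. With $\vec t=\sqrt{n'}(\typeProportions-\typeProportions_\star)$, the boundary factor is $R_Fe^{\Delta_F}=1+(n')^{-1/2}L_F(\vec t)+O\bigl((1+g)^2(1+\|\vec t\|^2)/n\bigr)$. Here $L_F$ is linear with $O(g)$ coefficients: $\Delta_F(\typeProportions_\star)=O(g^2/n)$ but $\nabla\Delta_F=O(g)$, and this gradient is nonzero in general.

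To reach $K^m/n$ you must use that this term is odd in $\vec t$. Its sum against the centred complex Gaussian over the boundary-dependent shifted lattice is then $O(e^{-cn^{1/5}})$ by Poisson summation (\cref{cor:poisson-affine-lattice-truncated}). Only the second-order remainder may be bounded crudely, and it must be bounded additively via $Z_F^{\mathrm{abs}}$, since $\abs{Z_F(\beta)}$ has no lower bound.

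The same parity argument is already needed in the unconditioned sum, so appealing to \eqref{eq:1mt-discrete-laplace} is not enough. There, $f_{N,z}$ and $a_{N,z}$ are even about $\alpha=\tfrac12$. The second-moment rate function instead has a cubic term: writing $\typeProportions-\typeProportions_\star=c_2v_2+c_3v_3+c_4v_4$, the monomial $c_2c_3c_4$ is invariant under both spin flips. It is present already at $\beta=0$, coming from the multinomial coefficient, and its $n^{-1/2}$ contribution vanishes only by oddness.

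With $m=\frac{\log N}{(\log\log N)^2}$, an error $K^m/\sqrt n$ would still suffice for \cref{thm:reweighted-2ndmoment}, but it is not what the proposition claims.
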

In the remainder of this subsection, we use \cref{prop:2nd-moment} to complete the proof of \cref{thm:reweighted-2ndmoment}; the proof of \cref{prop:2nd-moment} is provided in \cref{s:deferred-complex-moment}. Observe that
\begin{align*}
    \abs{A_{L,m}(\beta)}^{2} &= \exp\wrapp{\sum_{k=1}^{L} \lambda_{k} \cdot 2\re \delta_{k}} \cdot \sum_{\Gamma_{1},\Gamma_{2} \in \mathscr{G}(\leq m)} \allone[\Gamma_{1} \cup \Gamma_{2} \subseteq M] \prod_{C \in \Gamma_{1}} b_{\abs{C}} \prod_{C \in \Gamma_{2}} \overline{b_{\abs{C}}}.
\end{align*}
The quantity we are interested in can then be written as
\begin{align}\label{eq:reweighted-2ndmoment}
\begin{split}
    &\frac{\E_{M}\wrapb{Z_{M,<\epsilon}^{(2)}(\beta) \cdot \abs{A_{L,m}(\beta)}^{2}}}{\abs{\E_{M}\wrapb{Z_{M,<\epsilon}(\beta)}}^{2}} \\
    &= \exp\wrapp{\sum_{k=1}^{L} \lambda_{k} \cdot 2\re \delta_{k}} \cdot \sum_{\Gamma_{1},\Gamma_{2} \in \mathscr{G}(\leq m)} \frac{\E_{M}\wrapb{Z_{M,<\epsilon}^{(2)}(\beta) \cdot \allone[\Gamma_{1} \cup \Gamma_{2} \subseteq M]}}{\abs{\E_{M}\wrapb{Z_{M,<\epsilon}(\beta)}}^{2}} \prod_{C \in \Gamma_{1}} b_{\abs{C}} \prod_{C \in \Gamma_{2}} \overline{b_{\abs{C}}}.
\end{split}
\end{align}
We will see that the dominant contributions to the above double sum stem from pairs $\Gamma_{1},\Gamma_{2}$ such that for all pairs of cycles $C,C' \in \Gamma_{1} \cup \Gamma_{2}$, either $C,C'$ are vertex-disjoint \emph{or $C = C'$}. Roughly speaking, the reason is that $\Pr[\Gamma \subseteq M] \approx \wrapp{\frac{1}{dn}}^{m(\Gamma)}$, where $m(\Gamma)$ denotes the number of \emph{distinct} edges in the graph $G_{\Gamma} = (V(\Gamma),E(\Gamma))$ formed from $\Gamma$ by contracting the clouds and throwing away the isolated vertices, and we have $\abs{E(\Gamma)} = \abs{V(\Gamma)}$ if and only if the cycles in $\Gamma$ are either pairwise disjoint or identical. Because of this, we will eventually instead stratify everything by \emph{three} tuples $\bm{c} = \bm{c}^{(1)} + \bm{c}^{(2)} + \bm{c}^{(3)}$ with $\bm{c}(\Gamma^{(i)}) = \bm{c}^{(i)}$ for $i=1,2,3$, where one should think of $\Gamma_{1} = \Gamma^{(1)} \cup \Gamma^{(3)}$, $\Gamma_{2} = \Gamma^{(2)} \cup \Gamma^{(3)}$, and $\Gamma^{(3)}$ collects the cycles shared between $\Gamma_{1},\Gamma_{2}$.

In anticipation of the above, for $\bm{c} \in \N^{L}$, define its ``idealized weight'' by
\begin{align*}
    w(\bm{c}) \defeq \prod_{k=1}^{L} \frac{\lambda_{k}^{\bm{c}_{k}}\abs{1 + \delta_{k}}^{2\bm{c}_{k}}}{\bm{c}_{k}!},
\end{align*}
and for $\bm{c}^{(1)},\bm{c}^{(2)},\bm{c}^{(3)} \in \N^{L}$,
\begin{align*}
    \psi\wrapp{\bm{c}^{(1)},\bm{c}^{(2)},\bm{c}^{(3)}} &\defeq \sqrt{\frac{1}{1 - (d-1) \cdot \abs{\delta_{2}}}} \cdot w\wrapp{\bm{c}^{(1)}}w\wrapp{\bm{c}^{(2)}}w\wrapp{\bm{c}^{(3)}}.
\end{align*}
Also, to shorten some of the expressions we will obtain, define
\begin{align*}
    \Xi \defeq \exp\wrapp{\sum_{k=1}^{L} \lambda_{k} \cdot 2\re \delta_{k}}.
\end{align*}
Note that $\Xi \leq \exp(\exp(O(L)))$, and so $\Xi$ can be rendered irrelevant by letting $L$ be a sufficiently slow-growing function of $N$. The idealized sum we wish to relate the left-hand side of \cref{thm:reweighted-2ndmoment} to is
\begin{align*}
    \Psi &\defeq \Xi \cdot \sum_{\bm{c}^{(1)},\bm{c}^{(2)},\bm{c}^{(3)} \in \N^{L}} \psi\wrapp{\bm{c}^{(1)},\bm{c}^{(2)},\bm{c}^{(3)}} \cdot \prod_{k=1}^{L} b_{k}^{\bm{c}_{k}^{(1)}} \cdot \overline{b_{k}}^{\bm{c}_{k}^{(2)}} \cdot \abs{b_{k}}^{2\bm{c}_{k}^{(3)}} \\
    &= \sqrt{\frac{1}{1 - (d-1) \cdot \abs{\delta_{2}}}} \cdot \exp\wrapp{\sum_{k=1}^{L} \lambda_{k} \cdot \wrapp{2 \re \delta_{k} + b_{k} \cdot \abs{1 + \delta_{k}}^{2} + \overline{b_{k}} \cdot \abs{1 + \delta_{k}}^{2} + \abs{b_{k}}^{2} \cdot \abs{1 + \delta_{k}}^{2}}} \tag{Definition of $\Xi$ and Taylor series of $e^{x}$} \\
    &= \sqrt{\frac{1}{1 - (d-1) \cdot \abs{\delta_{2}}}} \cdot \exp\wrapp{-\sum_{k=1}^{L} \lambda_{k}\abs{\delta_{k}}^{2}} \tag{Using $\wrapp{\abs{b_{k}}^{2} + 2\re b_{k}} \cdot \abs{1 + \delta_{k}}^{2} = -2\re \delta_{k} - \abs{\delta_{k}}^{2}$} \\
    &= \exp\wrapp{\sum_{k=L+1}^{\infty} \frac{\wrapp{(d-1) \cdot \abs{\delta_{2}}}^{k}}{2k}}. \tag{\cref{lem:gaussian-correction-series}}
\end{align*}
In the penultimate equality, we used the fact that
\begin{align*}
    b_{k} = (1 + \delta_{k})^{-1} - 1 = - \frac{\delta_{k}}{1 + \delta_{k}} = -\frac{\delta_{k} \cdot \wrapp{1 + \overline{\delta_{k}}}}{\abs{1 + \delta_{k}}^{2}}
\end{align*}
implies that
\begin{align*}
    \wrapp{\abs{b_{k}}^{2} + 2 \re b_{k}} \cdot \abs{1 + \delta_{k}}^{2} &= \abs{\delta_{k}}^{2} - 2 \re \wrapp{\delta_{k} \cdot \wrapp{1 + \overline{\delta_{k}}}} \\
    &= -2 \re \delta_{k} - \abs{\delta_{k}}^{2}.
\end{align*}
Clearly $\Psi \geq 1$, and we have
\begin{align*}
    \Psi \leq \exp\wrapp{\frac{1}{2(L+1)} \sum_{k=L+1}^{\infty} (1 - \eta/2)^{k}} = \exp\wrapp{\frac{(1 - \eta/2)^{L+1}}{\eta(L+1)}} \leq 1 + \exp(-O_{\eta}(L))
\end{align*}
for $\beta$ in a sufficiently small constant neighborhood of $[0,(1-\eta)\beta_{\mathsf{KS}}]$.

For $\bm{c}^{(1)},\bm{c}^{(2)},\bm{c}^{(3)} \in \N^{L}$, let us now define
\begin{align*}
    \varphi\wrapp{\bm{c}^{(1)},\bm{c}^{(2)},\bm{c}^{(3)}} &\defeq \sum_{\substack{\Gamma^{(i)} \in \mathscr{G}(\bm{c}^{(i)}), i=1,2,3 \\
    \Gamma^{(1)} \cup \Gamma^{(3)}, \Gamma^{(2)} \cup \Gamma^{(3)} \in \mathscr{G}(\leq m)}} \frac{\E_{M}\wrapb{Z_{M,<\epsilon}^{(2)}(\beta) \cdot \allone[\Gamma^{(1)} \cup \Gamma^{(2)} \cup \Gamma^{(3)} \subseteq M]}}{\abs{\E_{M}\wrapb{Z_{M,<\epsilon}(\beta)}}^{2}}
\end{align*}
and
\begin{align*}
    \Phi &\defeq \Xi \cdot \sum_{\bm{c}^{(1)},\bm{c}^{(2)},\bm{c}^{(3)} \in \N^{L}} \varphi\wrapp{\bm{c}^{(1)},\bm{c}^{(2)},\bm{c}^{(3)}} \cdot \prod_{k=1}^{L} b_{k}^{\bm{c}_{k}^{(1)}} \cdot \overline{b_{k}}^{\bm{c}_{k}^{(2)}} \cdot \abs{b_{k}}^{2\bm{c}_{k}^{(3)}}.
\end{align*}
Note that by \cref{eq:reweighted-2ndmoment} and decomposing $\Gamma_{1} = \Gamma^{(1)} \cup \Gamma^{(3)}, \Gamma_{2} = \Gamma^{(2)} \cup \Gamma^{(3)}$,
\begin{align*}
    \Phi = \frac{\E_{M}\wrapb{Z_{M,<\epsilon}^{(2)}(\beta) \cdot \abs{A_{L,m}(\beta)}^{2}}}{\abs{\E_{M}\wrapb{Z_{M,<\epsilon}(\beta)}}^{2}},
\end{align*}
which is precisely the left-hand side in \cref{thm:reweighted-2ndmoment}. Since, as established above, we have $1 \leq \Psi \leq 1 + o(1)$, \cref{thm:reweighted-2ndmoment} is proved if we can show that $\abs{\Phi - \Psi} \leq o(1)$.

To formally prove \cref{thm:reweighted-2ndmoment}, we rely on the following key estimate.
\begin{proposition}\label{prop:uniform-converge-small-Gamma}
Fix $L,m \in \N$. Then for all $\bm{c}^{(1)},\bm{c}^{(2)},\bm{c}^{(3)} \in \N^{L}$ satisfying $\abs{\bm{c}^{(1)}} + \abs{\bm{c}^{(3)}}, \abs{\bm{c}^{(2)}} + \abs{\bm{c}^{(3)}} \leq m$,
\begin{align*}
    \varphi\wrapp{\bm{c}^{(1)},\bm{c}^{(2)},\bm{c}^{(3)}} = \wrapp{1 \pm \frac{\widehat{\mathcal{E}}(\eta,L,m)}{n}} \cdot \psi\wrapp{\bm{c}^{(1)},\bm{c}^{(2)},\bm{c}^{(3)}},
\end{align*}
where
\begin{align*}
    \widehat{\mathcal{E}}(\eta,L,m) \defeq O\wrapp{\exp\wrapp{e^{O_{d}(L)} + O(m\log m)}} + \exp(O_{d,\eta}(m)).
\end{align*}
\end{proposition}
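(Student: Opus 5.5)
Fix $\bm{c}^{(1)},\bm{c}^{(2)},\bm{c}^{(3)}$ as in the statement and write $\Gamma=\Gamma^{(1)}\cup\Gamma^{(2)}\cup\Gamma^{(3)}$. In each summand of $\varphi$ we will condition on $\{\Gamma\subseteq M\}$, writing
\begin{align*}
\E_{M}\wrapb{Z_{M,<\epsilon}^{(2)}(\beta)\allone[\Gamma\subseteq M]}=\Pr[\Gamma\subseteq M]\cdot\E_{M}\wrapb{Z_{M,<\epsilon}^{(2)}(\beta)\,\big|\,\Gamma\subseteq M}.
\end{align*}
The summands then split into two classes:
\begin{itemize}
\item \emph{generic} triples, where all cycles of $\Gamma^{(1)},\Gamma^{(2)},\Gamma^{(3)}$ are pairwise vertex-disjoint, so that $F=\Gamma$ is a union of $\abs{\bm{c}^{(1)}}+\abs{\bm{c}^{(2)}}+\abs{\bm{c}^{(3)}}\le 2m$ disjoint edges;
\item \emph{overlapping} triples, where some cycle of $\Gamma^{(1)}\cup\Gamma^{(3)}$ meets a cycle of $\Gamma^{(2)}$ (or $\Gamma^{(1)}$ meets $\Gamma^{(3)}$) without being equal to it. Within each $\Gamma^{(i)}$ there is no overlap, since $\Gamma^{(i)}\in\mathscr{G}$ and the unions are required to lie in $\mathscr{G}$.
\end{itemize}

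\textbf{Generic triples.} We apply \cref{prop:2nd-moment} with $F$ the matching underlying $\Gamma$, valid since $\abs{F}\le 2m\le C\log n$. By \cref{lem:pfn-disj-cycles},
\begin{align*}
\abs{\widetilde Z_F(\beta)}^2=\prod_k\abs{1+\delta_k}^{2(\bm{c}^{(1)}_k+\bm{c}^{(2)}_k+\bm{c}^{(3)}_k)}.
\end{align*}
By \cref{lem:shortcyc-approxindep}, $\Pr[\Gamma\subseteq M]=\exp(\pm O(m^2/n))(dn)^{-\abs{\bm{c}}}$. Next we count generic triples: there are
\begin{align*}
\prod_{i}\prod_k\frac{\abs{\mathscr{C}_k}^{\bm{c}^{(i)}_k}}{\bm{c}^{(i)}_k!}\cdot\bigl(1-O(m^2 e^{O(L)}/n)\bigr)
\end{align*}
of them, since the disjointness constraint removes only an $O(\abs{\Gamma}^2 L/n)$ fraction. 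Using \eqref{eq:num-kcycle}, $\abs{\mathscr{C}_k}=(1\pm O(k^2/n))(dn)^k\lambda_k$, the powers of $dn$ cancel and we get exactly $w(\bm{c}^{(1)})w(\bm{c}^{(2)})w(\bm{c}^{(3)})$ times the Gaussian factor. This is $\psi$ up to relative error $O(m^2L/n)$, plus the additive term $K^{2m}/n$ from \cref{prop:2nd-moment}.

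That additive error must be converted into a relative one, which requires a lower bound on $\psi$. Since $\abs{1+\delta_k}\ge 1-\abs{\delta_k}$ is bounded below, the product $\prod_k\abs{1+\delta_k}^{2\bm{c}_k}\ge e^{-O(m)}$. The prefactors, however, are not bounded below in general: $\lambda_k^{\bm{c}_k}/\bm{c}_k!$ can be as small as $m^{-O(m)}$. To handle this, we instead compare the \emph{counts} directly. The additive error summed over generic $\Gamma$ is at most
\begin{align*}
(\#\Gamma)\cdot(dn)^{-\abs{\bm{c}}}\cdot K^{2m}/n\approx\Bigl(\prod \lambda^{\bm{c}}/\bm{c}!\Bigr)K^{2m}/n,
\end{align*}
which is at most $w\cdot e^{O(m)}/n$ relative to $\psi$. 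This accounts for the $\exp(O_{d,\eta}(m))$ term in $\widehat{\mathcal{E}}$.

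\textbf{Overlapping triples.} Here $\Gamma$ has fewer distinct vertices than edges: if $e(\Gamma)-v(\Gamma)=j\ge 1$, the number of such $\Gamma$ is a factor $n^{-j}$ smaller than $(dn)^{m(\Gamma)}$ times combinatorial factors. We bound each summand crudely, using \cref{prop:2nd-moment} with $\abs{\widetilde Z_F}\le e^{O(m)}$ (or simply the triangle-inequality bound $\abs{\widetilde Z_F}\le 1$ per edge, after checking the normalization). The number of ways to glue up to $m$ cycles of length at most $L$ into a configuration with excess $j$ is at most $\exp(O(m\log m))\cdot e^{e^{O(L)}}$, coming from choosing which cycles intersect and where, and from the $\bm{c}_k!$ ratios. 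The resulting contribution is therefore $\psi\cdot\exp(e^{O(L)}+O(m\log m))/n$, which gives the first term of $\widehat{\mathcal{E}}$.

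\textbf{Main obstacle.} The hardest step is this counting of overlapping configurations relative to $\psi$, in particular when $\bm{c}$ has large entries. The factorials $\bm{c}_k!$ in $w$ must be matched against overcounting of labeled cycle families. I would handle this by encoding an overlapping triple as a generic triple plus a bounded description of the identifications: choose the $\le m$ pairs of intersecting cycles and the shared vertices, costing $m^{O(1)}L^{O(1)}$ per identification, with each identification saving a factor $n$.
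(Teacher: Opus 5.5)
Your proposal is correct and is essentially the paper's proof. The all-disjoint triples are handled as in \cref{lem:good}: apply \cref{prop:2nd-moment}, \cref{lem:pfn-disj-cycles}, \cref{lem:shortcyc-approxindep} and the cycle count of \cref{claim:goodcount-approx}, and make the additive $K^{2m}/n$ relative via $\prod_k|1+\delta_k|^{2\bm{c}_k}\ge e^{-O(m)}$. The overlapping triples are bounded crudely using excess $m(\Gamma)-|V(\Gamma)|\ge 1$, as in \cref{lem:bad}. There the paper simply divides an absolute bound on $\sum_{\Gamma\in\mathscr{B}}\Pr[\Gamma\subseteq M]$ by the crude bound $\psi\ge e^{-O(m\log m)}$ that you already noted; it organizes that sum through a cluster decomposition, which is the source of its $e^{e^{O(L)}}$.

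So the relative encoding in your last paragraph is unnecessary, and as stated it is slightly off. Two cycles sharing a path of $s$ edges need $s+1$ vertex identifications but raise $\Pr[\Gamma\subseteq M]$ by $(dn)^{s}$. The saving is therefore $n^{-j}$ with $j$ the excess, as in your previous paragraph, not one factor of $n$ per identification. The $(mL)^{O(m)}$ cost of describing the identifications must be paid globally rather than per identification; this is affordable within $\widehat{\mathcal{E}}$.
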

Note that once we have \cref{prop:uniform-converge-small-Gamma}, we can deduce \cref{thm:reweighted-2ndmoment} just by bounding the tails of a compound Poisson. We now spell out this argument.

First, note that the outer sum in the definition of $\Phi$ can be truncated to $\abs{\bm{c}^{(1)}} + \abs{\bm{c}^{(3)}}, \abs{\bm{c}^{(2)}} + \abs{\bm{c}^{(3)}} \leq m$ with zero loss, since $\varphi(\cdot)$ is zero for $\wrapp{\bm{c}^{(1)},\bm{c}^{(2)},\bm{c}^{(3)}}$ violating these constraints. Hence,
\begin{align*}
    \abs{\Phi - \Psi} &\leq \Xi \cdot \sum_{\substack{\bm{c}^{(1)},\bm{c}^{(2)},\bm{c}^{(3)} \in \N^{L} \\ \abs{\bm{c}^{(1)}} + \abs{\bm{c}^{(3)}}, \abs{\bm{c}^{(2)}} + \abs{\bm{c}^{(3)}} \leq m}} \abs{\varphi\wrapp{\bm{c}^{(1)},\bm{c}^{(2)},\bm{c}^{(3)}} - \psi\wrapp{\bm{c}^{(1)},\bm{c}^{(2)},\bm{c}^{(3)}}} \cdot \prod_{k=1}^{L} \abs{b_{k}}^{\bm{c}_{k}^{(1)} + \bm{c}_{k}^{(2)} + 2\bm{c}_{k}^{(3)}} \\
    &\quad + \Xi \cdot \sum_{\substack{\bm{c}^{(1)},\bm{c}^{(2)},\bm{c}^{(3)} \in \N^{L} \\ \abs{\bm{c}^{(1)}} + \abs{\bm{c}^{(3)}} + \abs{\bm{c}^{(3)}} > m}} \psi\wrapp{\bm{c}^{(1)},\bm{c}^{(2)},\bm{c}^{(3)}} \prod_{k=1}^{L} \abs{b_{k}}^{\bm{c}_{k}^{(1)} + \bm{c}_{k}^{(2)} + 2\bm{c}_{k}^{(3)}} \tag{Triangle Inequality} \\
    &\leq \frac{\widehat{\mathcal{E}}(\eta,L,m)}{n}  \cdot \Xi \cdot \sum_{\substack{\bm{c}^{(1)},\bm{c}^{(2)},\bm{c}^{(3)} \in \N^{L} \\ \abs{\bm{c}^{(1)}},\abs{\bm{c}^{(2)}},\abs{\bm{c}^{(3)}} \leq m}} \psi\wrapp{\bm{c}^{(1)},\bm{c}^{(2)},\bm{c}^{(3)}} \cdot \prod_{k=1}^{L} \abs{b_{k}}^{\bm{c}_{k}^{(1)} + \bm{c}_{k}^{(2)} + 2\bm{c}_{k}^{(3)}} \\
    &\quad + \Xi \cdot \sum_{\substack{\bm{c}^{(1)},\bm{c}^{(2)},\bm{c}^{(3)} \in \N^{L} \\ \abs{\bm{c}^{(1)}} + \abs{\bm{c}^{(2)}} + \abs{\bm{c}^{(3)}} > m}} \psi\wrapp{\bm{c}^{(1)},\bm{c}^{(2)},\bm{c}^{(3)}} \cdot \prod_{k=1}^{L} \abs{b_{k}}^{\bm{c}_{k}^{(1)} + \bm{c}_{k}^{(2)} + 2\bm{c}_{k}^{(3)}} \tag{\cref{prop:uniform-converge-small-Gamma}} \\
    &\lesssim \frac{\widehat{\mathcal{E}}(\eta,L,m)}{n}  \cdot \Xi \cdot \sum_{\substack{\bm{c}^{(1)},\bm{c}^{(2)},\bm{c}^{(3)} \in \N^{L} \\ \abs{\bm{c}^{(1)}},\abs{\bm{c}^{(2)}},\abs{\bm{c}^{(3)}} \leq m}} \prod_{k=1}^{L} \frac{\lambda_{k}^{\bm{c}_{k}^{(1)} + \bm{c}_{k}^{(2)} + \bm{c}_{k}^{(3)}}\abs{b_{k}}^{\bm{c}_{k}^{(1)} + \bm{c}_{k}^{(2)} + 2\bm{c}_{k}^{(3)}}\abs{1 + \delta_{k}}^{2\bm{c}_{k}^{(1)} + 2\bm{c}_{k}^{(2)} + 2\bm{c}_{k}^{(3)}}}{\bm{c}_{k}^{(1)}!\bm{c}_{k}^{(2)}!\bm{c}_{k}^{(3)}!} \\
    &\quad + \Xi \cdot \sum_{\substack{\bm{c}^{(1)},\bm{c}^{(2)},\bm{c}^{(3)} \in \N^{L} \\ \abs{\bm{c}^{(1)}} + \abs{\bm{c}^{(2)}} + \abs{\bm{c}^{(3)}} > m}} \prod_{k=1}^{L} \frac{\lambda_{k}^{\bm{c}_{k}^{(1)} + \bm{c}_{k}^{(2)} + \bm{c}_{k}^{(3)}}\abs{b_{k}}^{\bm{c}_{k}^{(1)} + \bm{c}_{k}^{(2)} + 2\bm{c}_{k}^{(3)}}\abs{1 + \delta_{k}}^{2\bm{c}_{k}^{(1)} + 2\bm{c}_{k}^{(2)} + 2\bm{c}_{k}^{(3)}}}{\bm{c}_{k}^{(1)}!\bm{c}_{k}^{(2)}!\bm{c}_{k}^{(3)}!} \tag{Treating $\sqrt{\frac{1}{1 - (d-1) \cdot \abs{\delta_{2}}}}$ as a constant} \\
    &\leq \frac{\widehat{\mathcal{E}}(\eta,L,m)}{n}  \cdot \Xi \cdot \sum_{\substack{\bm{c}^{(1)},\bm{c}^{(2)},\bm{c}^{(3)} \in \N^{L} \\ \abs{\bm{c}^{(1)}},\abs{\bm{c}^{(2)}},\abs{\bm{c}^{(3)}} \leq m}} \prod_{k=1}^{L} \frac{\lambda_{k}^{\bm{c}_{k}^{(1)} + \bm{c}_{k}^{(2)} + \bm{c}_{k}^{(3)}}B^{k\bm{c}_{k}^{(1)} + k\bm{c}_{k}^{(2)} + k\bm{c}_{k}^{(3)}}}{\bm{c}_{k}^{(1)}!\bm{c}_{k}^{(2)}!\bm{c}_{k}^{(3)}!} \\
    &\quad + \Xi \cdot \sum_{\substack{\bm{c}^{(1)},\bm{c}^{(2)},\bm{c}^{(3)} \in \N^{L} \\ \abs{\bm{c}^{(1)}} + \abs{\bm{c}^{(2)}} + \abs{\bm{c}^{(3)}} > m}} \prod_{k=1}^{L} \frac{\lambda_{k}^{\bm{c}_{k}^{(1)} + \bm{c}_{k}^{(2)} + \bm{c}_{k}^{(3)}}B^{k\bm{c}_{k}^{(1)} + k\bm{c}_{k}^{(2)} + k\bm{c}_{k}^{(3)}}}{\bm{c}_{k}^{(1)}!\bm{c}_{k}^{(2)}!\bm{c}_{k}^{(3)}!} \tag{There exists a constant $B = B(d) > 0$ such that $\max\wrapc{\abs{b_{k}(1 + \delta_{k})^{2}}, \abs{b_{k}^{2}(1+\delta_{k})^{2}}} \leq B^{k}$} \\
    &\leq \frac{\widehat{\mathcal{E}}(\eta,L,m)}{n}  \cdot \Xi \cdot \sum_{\bm{c}^{(1)},\bm{c}^{(2)},\bm{c}^{(3)} \in \N^{L}} \prod_{k=1}^{L} \frac{\lambda_{k}^{\bm{c}_{k}^{(1)} + \bm{c}_{k}^{(2)} + \bm{c}_{k}^{(3)}}B^{k\bm{c}_{k}^{(1)} + k\bm{c}_{k}^{(2)} + k\bm{c}_{k}^{(3)}}}{\bm{c}_{k}^{(1)}!\bm{c}_{k}^{(2)}!\bm{c}_{k}^{(3)}!} \\
    &\quad + \Xi \cdot e^{-sm} \cdot \sum_{\bm{c}^{(1)},\bm{c}^{(2)},\bm{c}^{(3)} \in \N^{L}} \prod_{k=1}^{L} \frac{\lambda_{k}^{\bm{c}_{k}^{(1)} + \bm{c}_{k}^{(2)} + \bm{c}_{k}^{(3)}}\wrapp{Be^{s}}^{k\bm{c}_{k}^{(1)} + k\bm{c}_{k}^{(2)} + k\bm{c}_{k}^{(3)}}}{\bm{c}_{k}^{(1)}!\bm{c}_{k}^{(2)}!\bm{c}_{k}^{(3)}!} \tag{Using the trick $\allone\wrapb{\abs{\bm{c}^{(1)}} + \abs{\bm{c}^{(2)}} + \abs{\bm{c}^{(3)}} > m} \leq \exp\wrapp{s \cdot \wrapp{\abs{\bm{c}^{(1)}} + \abs{\bm{c}^{(2)}} + \abs{\bm{c}^{(3)}} - m}}$} \\
    &= \frac{\widehat{\mathcal{E}}(\eta,L,m)}{n} \cdot \Xi \cdot \exp\wrapp{3\sum_{k=1}^{L} \lambda_{k}B^{k}} + \Xi \cdot e^{-sm} \cdot \exp\wrapp{3 \sum_{k=1}^{L} \lambda_{k}B^{k}e^{sk}} \\
    &\leq \wrapp{\frac{\widehat{\mathcal{E}}(\eta,L,m)}{n} + e^{-m}} \cdot \exp\wrapp{\exp(O_{d}(L))}. \tag{Setting $s = 1$}
\end{align*}
Putting everything together, we obtain
\begin{align*}
    \Phi = 1 + \exp\wrapp{-O_{\eta}(L)} + \wrapp{\frac{\widehat{\mathcal{E}}(\eta,L,m)}{n}  + e^{-m}} \cdot \exp\wrapp{\exp(O_{d}(L))},
\end{align*}
where recall
\begin{align*}
    \widehat{\mathcal{E}}(\eta,L,m) = O\wrapp{\exp\wrapp{e^{O_{d}(L)} + O(m\log m)}} + \exp(O_{d,\eta}(m)).
\end{align*}
Simplifying (and leaving out constant factors),
\begin{align*}
    \Phi = 1 + \exp\wrapp{-O_{\eta}(L)} + \frac{\exp\wrapp{e^{O_{d}(L)} + O_{d}(m\log m)}}
    {n} + e^{-m} \cdot \exp\wrapp{\exp(O_{d}(L))}.
\end{align*}
We have $\exp(-O_{\eta}(L)) \leq o(1)$ as long as $L \geq \omega(1)$. We also have $e^{-m} \cdot \exp(\exp(O_{d}(L))) \leq o(1)$ as long as $m \gtrsim \exp(\Omega_{d}(L))$ and $L \geq \omega(1)$. Finally, we have $\frac{\exp\wrapp{e^{O_{d}(L)} + O_{d}(m\log m)}}{n} \leq o(1)$ as long as $m \leq o\wrapp{\frac{\log N}{\log\log N}}$ and $L \leq o(\log\log N)$. Setting $L =\frac{\log\log N}{\log\log\log N}$ and $m =\frac{\log N}{(\log\log N)^2}$, we obtain $\Phi = 1 \pm o(1)$ as desired. 

\subsection{Proof of \cref{prop:uniform-converge-small-Gamma}}\label{subsec:uniform-converge-small-Gamma}
For $\bm{c}^{(1)},\bm{c}^{(2)},\bm{c}^{(3)} \in \N^{L}$, let us define the intermediate quantity
\begin{align*}
    \varphi^{\mathscr{G}}\wrapp{\bm{c}^{(1)},\bm{c}^{(2)},\bm{c}^{(3)}} &\defeq \sum_{\substack{\Gamma^{(i)} \in \mathscr{G}(\bm{c}^{(i)}), i=1,2,3 \\
    \Gamma^{(1)} \cup \Gamma^{(2)} \cup \Gamma^{(3)} \in \mathscr{G}}} \frac{\E_{M}\wrapb{Z_{M,<\epsilon}^{(2)}(\beta) \cdot \allone[\Gamma^{(1)} \cup \Gamma^{(2)} \cup \Gamma^{(3)} \subseteq M]}}{\abs{\E_{M}\wrapb{Z_{M,<\epsilon}(\beta)}}^{2}}.
\end{align*}
Note that when $\abs{\bm{c}^{(1)}} + \abs{\bm{c}^{(3)}},\abs{\bm{c}^{(2)}} + \abs{\bm{c}^{(3)}} \leq m$, the only difference between $\varphi\wrapp{\bm{c}^{(1)},\bm{c}^{(2)},\bm{c}^{(3)}}$ and $\varphi^{\mathscr{G}}\wrapp{\bm{c}^{(1)},\bm{c}^{(2)},\bm{c}^{(3)}}$ is that in the latter, we have introduced an additional truncation, where we additionally enforce that all pairs of cycles in $\Gamma^{(1)} \cup \Gamma^{(2)}$ are pairwise disjoint. In this case, the former upper bounds the latter.

\cref{prop:uniform-converge-small-Gamma} is an immediate consequence of the following pair of lemmas.
\begin{lemma}\label{lem:good}
Under the assumptions of \cref{prop:uniform-converge-small-Gamma},
\begin{align*}
    \varphi^{\mathscr{G}}\wrapp{\bm{c}^{(1)},\bm{c}^{(2)},\bm{c}^{(3)}} = \wrapp{1 \pm \frac{O(L^{3}m^{2}) + \exp(O_{d,\eta}(m))}{n}} \cdot \psi\wrapp{\bm{c}^{(1)},\bm{c}^{(2)},\bm{c}^{(3)}}.
\end{align*}
\end{lemma}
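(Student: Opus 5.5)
The plan is to evaluate each summand of $\varphi^{\mathscr{G}}$ separately and then count the summands. Fix $\Gamma^{(i)} \in \mathscr{G}(\bm{c}^{(i)})$ such that $\Gamma \defeq \Gamma^{(1)}\cup\Gamma^{(2)}\cup\Gamma^{(3)} \in \mathscr{G}$. Then $F \defeq E(\Gamma)$ is a union of vertex-disjoint cycles with $|F| = \abs{\bm{c}^{(1)}}+\abs{\bm{c}^{(2)}}+\abs{\bm{c}^{(3)}} \le 2m \le C\log n$. We factor each summand as
\[
\E_M\wrapb{Z^{(2)}_{M,<\epsilon}(\beta)\allone[F\subseteq M]} = \Pr[F\subseteq M]\cdot \E_M\wrapb{Z^{(2)}_{M,<\epsilon}(\beta)\mid F\subseteq M}.
\]
\cref{prop:2nd-moment} handles the conditional factor, giving $\sqrt{1/(1-(d-1)\abs{\delta_2})}\,\abs{\widetilde Z_F(\beta)}^2 + K^{2m}/n$ after normalisation. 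By \cref{lem:pfn-disj-cycles}, $\abs{\widetilde Z_F}^2 = \prod_{C\in\Gamma}\abs{1+\delta_{\abs{C}}}^2$. Because $\beta$ is near the positive real axis, $\abs{\delta_k}\le 1-c$, so $\abs{1+\delta_k}^2 \ge c^k$. The additive error $K^{2m}/n$ is therefore at most $\exp(O_{d,\eta}(m))/n$ times the main term. This step is the source of the $\exp(O_{d,\eta}(m))$ term in the statement.

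For the probability factor, \cref{lem:shortcyc-approxindep} applies, since in $\Gamma\in\mathscr{G}$ no edge is repeated. It gives $\Pr[F\subseteq M] = (dn)^{-\abs{F}}\exp(\pm O(m^2/n))$, and $\exp(\pm O(m^2/n)) = 1\pm O(m^2/n)$.

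Every summand then equals $(1\pm E/n)\sqrt{1/(1-(d-1)\abs{\delta_2})}\,(dn)^{-\abs{F}}\prod_{k}\abs{1+\delta_k}^{2(\bm{c}^{(1)}_k+\bm{c}^{(2)}_k+\bm{c}^{(3)}_k)}$, where $E = O(m^2)+\exp(O_{d,\eta}(m))$. This quantity depends only on the triple $(\bm{c}^{(1)},\bm{c}^{(2)},\bm{c}^{(3)})$. It remains to count ordered triples $(\Gamma^{(1)},\Gamma^{(2)},\Gamma^{(3)})$ of pairwise vertex-disjoint cycle families with the prescribed cycle counts. We choose the cycles one length class at a time. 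The number of ways to pick an unordered set of $\bm{c}_k$ vertex-disjoint $k$-cycles, avoiding the $O(m)$ vertices already used, is
\[
\frac{1}{\bm{c}_k!}\prod_{j}\wrapp{\abs{\mathscr{C}_k}\,(1 - O(mk/n))}.
\]
Dividing the set among the three families multiplies this by the multinomial $\bm{c}_k!/(\bm{c}^{(1)}_k!\,\bm{c}^{(2)}_k!\,\bm{c}^{(3)}_k!)$. Each factor $\abs{\mathscr{C}_k}$ contributes $(dn)^k\lambda_k(1\pm O(k^2/n))$ by \cref{eq:num-kcycle}; the $(n)_k/n^k$ correction is $1-O(k^2/n)$. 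Multiplying over at most $O(m)$ cycles, and summing the errors $O(k^2/n)$ and $O(mk/n)$ with $k\le L$, gives a total relative error of $O(L^2 m^2/n)$, comfortably within the stated $O(L^3m^2)/n$. The factors $(dn)^{\abs{F}}$ cancel, and the product over $k$ reproduces $w(\bm{c}^{(1)})w(\bm{c}^{(2)})w(\bm{c}^{(3)})$, so $\varphi^{\mathscr{G}} = (1\pm E'/n)\psi$.

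The main obstacle is the first step. \cref{prop:2nd-moment}'s additive error has to be converted into a relative one, which needs a uniform lower bound on $\abs{\widetilde Z_F}$. That bound in turn requires keeping $\abs{1+\delta_k}$ away from zero for all $\beta$ in the complex neighbourhood, and all relevant errors must also be uniform in $\beta$. If $\abs{1+\delta_k}\ge c^k$ fails to hold uniformly, I would shrink the neighbourhood so that $\abs{\delta_1}\le 1-c$. The counting step is routine, but it needs care with the ordering conventions: the $1/\bm{c}_k!$ factors, and the fact that $\mathscr{G}$ forbids coincidences between $\Gamma^{(3)}$ and $\Gamma^{(1)}$ or $\Gamma^{(2)}$.
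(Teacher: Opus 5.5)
Your proposal is correct and follows the paper's argument. You apply \cref{prop:2nd-moment} together with \cref{lem:pfn-disj-cycles}, convert the additive $K^{O(m)}/n$ error into a relative one using the lower bound $\prod_k|1+\delta_k|^{2\bm{c}_k}\ge e^{-O_{d,\eta}(m)}$, use \cref{lem:shortcyc-approxindep} for $\Pr[F\subseteq M]$, and count the admissible triples. The only difference is in the counting: you pick the vertex-disjoint cycles directly by sequential selection, which gives a slightly better error, whereas the paper's \cref{claim:goodcount-approx} reduces to $|\mathscr{G}(\bm{c})|$ via the multinomial and bounds $|\mathscr{B}(\bm{c})|$ against $|\mathscr{A}(\bm{c})|=\prod_k\binom{|\mathscr{C}_k|}{\bm{c}_k}$.
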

\begin{lemma}\label{lem:bad}
Under the assumptions of \cref{prop:uniform-converge-small-Gamma},
\begin{align*}
    \abs{\varphi\wrapp{\bm{c}^{(1)},\bm{c}^{(2)},\bm{c}^{(3)}} - \varphi^{\mathscr{G}}\wrapp{\bm{c}^{(1)},\bm{c}^{(2)},\bm{c}^{(3)}}} \leq \frac{O\wrapp{\exp\wrapp{e^{O_{d}(L)} + O(m\log m)}}}{n} \cdot \psi\wrapp{\bm{c}^{(1)},\bm{c}^{(2)},\bm{c}^{(3)}}.
\end{align*}
\end{lemma}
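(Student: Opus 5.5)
We prove \cref{lem:bad}. The difference $\varphi - \varphi^{\mathscr{G}}$ is a sum over triples $(\Gamma^{(1)},\Gamma^{(2)},\Gamma^{(3)})$ with each $\Gamma^{(i)} \in \mathscr{G}(\bm{c}^{(i)})$ and with $\Gamma^{(1)}\cup\Gamma^{(3)}$ and $\Gamma^{(2)}\cup\Gamma^{(3)}$ good, but with $\Gamma \defeq \Gamma^{(1)}\cup\Gamma^{(2)}\cup\Gamma^{(3)} \notin \mathscr{G}$. Such a $\Gamma$ has some cycle of $\Gamma^{(1)}$ meeting, but not equal to, some cycle of $\Gamma^{(2)}$. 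All summands are nonnegative, since $\beta$ real is the relevant case for the bound; for complex $\beta$ we bound $|Z^{(2)}|$ via $|\cdot|$ as in the zero-free section. The plan has three steps.

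\textbf{Step 1: conditional second moment bound.} For each such triple we write
\[
\E\bigl[Z^{(2)}_{M,<\epsilon}\,\allone[\Gamma\subseteq M]\bigr] = \Pr[\Gamma\subseteq M]\cdot \E\bigl[Z^{(2)}_{M,<\epsilon} \mid E(\Gamma)\subseteq M\bigr].
\]
We apply \cref{prop:2nd-moment} with $F = E(\Gamma)$; this is allowed since $|F| \le 2m \le C\log n$. It bounds the normalized conditional moment by $O(1)\,|\widetilde Z_F(\beta)|^2 + K^{2m}/n$. We bound $|\widetilde Z_F| \le \widetilde Z_F(E) \le 2^{O(|F|)}$ trivially, since each edge factor is at most $2/|1+e^\beta|=O(1)$. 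So the normalized conditional moment is at most $e^{O(m)}$.

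\textbf{Step 2: probability and enumeration.} Let $G_\Gamma$ be the contracted union graph. It is connected-component-wise a union of cycles, and in every component with a nontrivial overlap we have $|E(G_\Gamma)| \ge |V(G_\Gamma)|+1$. Hence, by \cref{lem:shortcyc-approxindep},
\[
\Pr[\Gamma\subseteq M] \le (dn)^{-|V(G_\Gamma)|-1}\, e^{O(m^2/n)}.
\]
We then count configurations by fixing the abstract shape. Choose which cycles of $\Gamma^{(1)}$ and $\Gamma^{(2)}$ overlap and how they are glued; there are at most $m^{O(m)} = e^{O(m\log m)}$ such patterns. Then embed the $|V(G_\Gamma)|$ vertices with socket choices, giving at most $n^{|V|}(d(d-1))^{|V|}$ options. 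Compared with the disjoint count behind $\psi$, namely $\prod_k \bigl((dn)^k\lambda_k\bigr)^{\bm{c}_k}/\bm{c}_k!$ from \eqref{eq:num-kcycle}, we lose one factor $n$ and gain a combinatorial factor $e^{O(m\log m)}$ times the ratio of factorials $\prod_k \bm{c}_k!$. That ratio is also at most $e^{O(m\log m)}$.

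\textbf{Step 3: compare to $\psi$ and assemble.} The weights $|1+\delta_k|^{2\bm{c}_k}$ in $\psi$ are bounded below by $e^{-O(m)}$, since $|1+\delta_k| \ge 1-|\delta_1|^k$ with $|\delta_1|<1$ on the strip. To account for the lost power of $\lambda_k$, we use $\lambda_k^{-1}\le 2L$. The overall normalization $\psi \ge e^{-e^{O_d(L)}}$ is dominated by $\exp(-\sum_k\lambda_k\ldots)$-type factors of size at most $\exp(e^{O(L)})$. Multiplying Steps 1--2 and dividing by $\psi$ gives a relative error of $\exp(e^{O_d(L)}+O(m\log m))/n$, as claimed.

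The main obstacle is the enumeration in Step 2. We must show rigorously that every overlapping pattern costs a full factor $1/n$, uniformly over the many ways cycles can share partial paths, and that the number of gluing patterns is only $e^{O(m\log m)}$. I would handle this by encoding each overlap via its excess $|E|-|V|\ge 1$ and summing over the excess $j$: each unit of excess costs a factor $n^{-1}$ and contributes at most $m^{O(1)}$ choices, so the sum over $j\ge1$ is dominated by $j=1$.
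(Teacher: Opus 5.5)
Your argument is correct in substance. It rests on the same two inputs as the paper:
\begin{itemize}
\item \cref{prop:2nd-moment}, combined with a crude $e^{O(|F|)}$ bound on $|\widetilde{Z}_{F}(\beta)|$, reduces each summand to $e^{O(m)}\Pr[\Gamma\subseteq M]$;
\item a union containing two distinct intersecting cycles has at least one more distinct edge than vertices, so every embedding pays an extra factor $1/n$, against only $m^{O(m)}$ gluing patterns.
\end{itemize}

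The difference is in the bookkeeping. The paper first replaces the triple sum by $\sum_{\Gamma\in\mathscr{B}(\leq 2m)}\Pr[\Gamma\subseteq M]$, using a multinomial factor and the lower bound $\prod_{k}\lambda_{k}^{\bm{c}_{k}}/\bm{c}_{k}!\geq m^{-O(m)}$. It then factors each bad collection into vertex-disjoint clusters and bounds the sum by $\exp(\Theta^{\mathsf{simple}})(\exp(\Theta^{\mathsf{bad}})-1)$. The $e^{e^{O_{d}(L)}}$ in the statement comes from $\exp(\Theta^{\mathsf{simple}})$, which is an artifact of summing over all subcollections of short cycles with no length cap.

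You instead encode the entire union at once: an abstract vertex set of size at most $2m$, the cycles as vertex sequences, socket labels, and an injection into $[n]$. This gives the cleaner relative error $e^{O(m\log m)}/n$ with no $e^{e^{O_{d}(L)}}$ term. With this encoding the ``main obstacle'' you flag disappears. Each encoding of a bad configuration contributes at most $n^{|V|}d^{2m(\Gamma)}(dn)^{-m(\Gamma)}e^{O(m^{2}/n)}\leq d^{O(m)}/n$, so no summation over the excess is needed. Your claim that each unit of excess brings only $m^{O(1)}$ new choices is therefore neither justified nor required.

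A few statements should be fixed, though none affects the conclusion.
\begin{itemize}
\item The summands are not nonnegative for complex $\beta$, which is the case that matters. You only need the triangle inequality, since \cref{prop:2nd-moment} controls the modulus of each complex conditional moment directly. Do not replace $\beta$ by $\re\beta$ inside $Z^{(2)}_{M,<\epsilon}$: that loses a factor exponential in $n$ against $|\E[Z_{M,<\epsilon}(\beta)]|^{2}$.
\item The assertion that $\psi\geq e^{-e^{O_{d}(L)}}$ because of ``$\exp(-\sum_{k}\lambda_{k}\ldots)$-type factors'' is false. $\psi$ contains no such factor, and the $1/\bm{c}_{k}!$ terms can make it as small as $e^{-\Theta(m\log m)}$. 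Since you already account separately for the factorials and for $\lambda_{k}^{-1}\leq 2L$, simply delete that sentence.
\item The inequality $|\widetilde{Z}_{F}(\beta)|\leq\widetilde{Z}_{F}(\re\beta)$ is not literally true, because of the complex normalization $((1+e^{\beta})/2)^{-|F|}$. It only fails by an $e^{O(|F|)}$ factor, which is harmless.
\item Socket choices should be bounded by $d^{2m(\Gamma)}$ rather than $(d(d-1))^{|V|}$, because overlap vertices have degree at least $3$ in $G_{\Gamma}$. This only changes $d^{O(m)}$ factors.
\end{itemize}
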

We prove each in turn.

\begin{proof}[Proof of \cref{lem:good}]
Let $\Gamma^{(i)} \in \mathscr{G}(\bm{c}^{(i)})$ for $i=1,2,3$ such that $\Gamma^{(1)} \cup \Gamma^{(2)} \cup \Gamma^{(3)} \in \mathscr{G}$. By \cref{prop:2nd-moment} and \cref{lem:pfn-disj-cycles}, for a constant $K > 0$ depending only on $d,\eta$,
\begin{align*}
    &\frac{\E_{M}\wrapb{Z_{M,<\epsilon}^{(2)}(\beta) \,\, \bigg| \,\, \Gamma^{(1)} \cup \Gamma^{(2)} \cup \Gamma^{(3)} \subseteq M}}{\abs{\E_{M}\wrapb{Z_{M,<\epsilon}(\beta)}}^{2}} \\
    &= \sqrt{\frac{1}{1 - (d-1) \cdot \abs{\delta_{2}}}} \cdot \prod_{k=1}^{L} \abs{1 + \delta_{k}}^{2\bm{c}_{k}^{(1)} + 2\bm{c}_{k}^{(2)} + 2\bm{c}_{k}^{(3)}} + \frac{K^{m}}{n}.
\end{align*}
Observing that $\sqrt{\frac{1}{1 - (d-1) \cdot \abs{\delta_{2}}}} \geq \Omega(1)$ and $\prod_{k=1}^{L} \abs{1 + \delta_{k}}^{2\bm{c}_{k}^{(1)} + 2\bm{c}_{k}^{(2)} + 2\bm{c}_{k}^{(3)}} \gtrsim \exp(-O_{d}(m))$, we can convert the above into relative error approximation at the expense of blowing up the error by a factor of $\exp(O_{d,\eta}(m))$, i.e. we have
\begin{align*}
    &\frac{\E_{M}\wrapb{Z_{M,<\epsilon}^{(2)}(\beta) \,\, \bigg| \,\, \Gamma^{(1)} \cup \Gamma^{(2)} \cup \Gamma^{(3)} \subseteq M}}{\abs{\E_{M}\wrapb{Z_{M,<\epsilon}(\beta)}}^{2}} \\
    &= \wrapp{1 \pm \frac{\exp(O_{d,\eta}(m))}{n}} \cdot \sqrt{\frac{1}{1 - (d-1) \cdot \abs{\delta_{2}}}} \cdot \prod_{k=1}^{L} \abs{1 + \delta_{k}}^{2\bm{c}_{k}^{(1)} + 2\bm{c}_{k}^{(2)} + 2\bm{c}_{k}^{(3)}}.
\end{align*}
Summing, we have
\begin{align*}
    \varphi^{\mathscr{G}}\wrapp{\bm{c}^{(1)},\bm{c}^{(2)},\bm{c}^{(3)}} &= \wrapp{1 \pm \frac{\exp(O_{d,\eta}(m))}{n}} \cdot \sqrt{\frac{1}{1 - (d-1) \cdot \abs{\delta_{2}}}} \cdot \prod_{k=1}^{L} \abs{1 + \delta_{k}}^{2\bm{c}_{k}^{(1)} + 2\bm{c}_{k}^{(2)} + 2\bm{c}_{k}^{(3)}} \\
    &\quad \cdot \sum_{\substack{\Gamma^{(i)} \in \mathscr{G}(\bm{c}^{(i)}), i=1,2,3 \\
    \Gamma^{(1)} \cup \Gamma^{(2)} \cup \Gamma^{(3)} \in \mathscr{G}}} \Pr[\Gamma^{(1)} \cup \Gamma^{(2)} \cup \Gamma^{(3)} \subseteq M] \\
    &= \wrapp{1 \pm \frac{\exp(O_{d,\eta}(m))}{n}} \cdot \sqrt{\frac{1}{1 - (d-1) \cdot \abs{\delta_{2}}}} \cdot \prod_{k=1}^{L} \abs{1 + \delta_{k}}^{2\bm{c}_{k}^{(1)} + 2\bm{c}_{k}^{(2)} + 2\bm{c}_{k}^{(3)}} \\
    &\quad\cdot \exp\wrapp{\pm O\wrapp{\frac{m^{2}}{n}}} \cdot \wrapp{\frac{1}{dn}}^{\abs{\bm{c}^{(1)}} + \abs{\bm{c}^{(2)}} + \abs{\bm{c}^{(3)}}} \cdot \abs{\widehat{\mathscr{G}}\wrapp{\bm{c}^{(1)},\bm{c}^{(2)},\bm{c}^{(3)}}}, \tag{\cref{lem:shortcyc-approxindep}}
\end{align*}
where
\begin{align*}
    \widehat{\mathscr{G}}\wrapp{\bm{c}^{(1)},\bm{c}^{(2)},\bm{c}^{(3)}} \defeq \wrapc{\wrapp{\Gamma^{(1)},\Gamma^{(2)},\Gamma^{(3)}} : \substack{\bm{c}(\Gamma^{(i)}) = \bm{c}^{(i)}, i=1,2,3 \\ \Gamma^{(1)} \cup \Gamma^{(2)} \cup \Gamma^{(3)} \in \mathscr{G}}}.
\end{align*}
Given this, all that remains is to approximate the cardinality of this collection, which is achieved in the following claim.
\begin{claim}\label{claim:goodcount-approx}
Under the assumptions of \cref{prop:uniform-converge-small-Gamma}, then
\begin{align*}
    1 \geq \frac{\abs{\widehat{\mathscr{G}}\wrapp{\bm{c}^{(1)},\bm{c}^{(2)},\bm{c}^{(3)}}}}{(dn)^{\abs{\bm{c}^{(1)}} + \abs{\bm{c}^{(2)}} + \abs{\bm{c}^{(3)}}} \cdot \prod_{k=1}^{L} \frac{\lambda_{k}^{\bm{c}_{k}^{(1)} + \bm{c}_{k}^{(2)} + \bm{c}_{k}^{(3)}}}{\bm{c}_{k}^{(1)}!\bm{c}_{k}^{(2)}!\bm{c}_{k}^{(3)}!}} \geq \exp\wrapp{-O\wrapp{\frac{L^{3}m^{2}}{n}}}.
\end{align*}
\end{claim}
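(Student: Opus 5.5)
We count the triples $(\Gamma^{(1)},\Gamma^{(2)},\Gamma^{(3)})$ with $\bm{c}(\Gamma^{(i)})=\bm{c}^{(i)}$ whose union is a family of pairwise vertex-disjoint cycles, by ordered sequential placement of cycles. Set $\bm{c}=\bm{c}^{(1)}+\bm{c}^{(2)}+\bm{c}^{(3)}$ and let $J=\sum_k \bm{c}_k$ be the total number of cycles. Then $|\bm{c}|\le 2m$, so $J\le 2m$, and every cycle has length at most $L$. Since all cycles in the union are vertex-disjoint, they are in particular distinct. Each triple therefore corresponds to exactly $\prod_{k}\bm{c}^{(1)}_k!\,\bm{c}^{(2)}_k!\,\bm{c}^{(3)}_k!$ ordered lists of cycles, in which the cycles with label $(i,k)$ occupy prescribed positions and have prescribed lengths.

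\textbf{Upper bound.} Count ordered lists while dropping the disjointness constraint. Position $j$ has $|\mathscr{C}_{k_j}|=\frac{(n)_{k_j}d^{k_j}(d-1)^{k_j}}{2k_j}\le (dn)^{k_j}\lambda_{k_j}$ choices by \cref{eq:num-kcycle}. Multiplying over positions and dividing by the factorials gives exactly the denominator in the claim, so the ratio is at most $1$.

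\textbf{Lower bound.} Now count ordered lists that do satisfy disjointness. Suppose the first $j-1$ cycles are fixed. They use at most $s\le |\bm{c}|\le 2m$ clouds. The $j$th cycle, of length $k=k_j$, must avoid these clouds, so the number of choices is $\frac{(n-s)_k\, d^k(d-1)^k}{2k}$. (The sockets inside a fresh cloud are unconstrained, because the cycle only uses clouds disjoint from the earlier ones.) Compared with $(dn)^k\lambda_k$, this count has ratio
\[
\frac{(n-s)_k}{n^k}\ \ge\ \Bigl(1-\frac{s+k}{n}\Bigr)^{k}\ \ge\ \exp\Bigl(-O\Bigl(\frac{k(m+L)}{n}\Bigr)\Bigr),
\]
using $s+k\le 2m+L=o(n)$. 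Take the product over the $J$ positions and use $\sum_j k_j=|\bm{c}|\le 2m$. The total loss is $\exp(-O(m(m+L)/n))$, which is at least $\exp(-O(L^3m^2/n))$. Dividing by the same factorials gives the lower bound.

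The step that needs care is the bookkeeping: making sure the factorial division is exact, i.e., that no ordered list is counted twice. This holds because all cycles are distinct in the good family and the positions are labeled by $(i,k)$. One should also confirm that $|\mathscr{C}_k|$ is given exactly by the falling-factorial formula, including for small $k$ with $k\ge 1$; any multigraph convention for $k=1,2$ only changes the counts by factors of the same form. The claimed error $L^3m^2/n$ is weaker than what this argument gives, so there is slack to spare.
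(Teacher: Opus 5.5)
Your argument is correct, and it takes a different route from the paper's.

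\textbf{The paper's route.} The paper first collapses the three labels via the multinomial identity
$\prod_k \frac{\bm c^{(1)}_k!\,\bm c^{(2)}_k!\,\bm c^{(3)}_k!}{\bm c_k!}\,|\widehat{\mathscr G}(\bm c^{(1)},\bm c^{(2)},\bm c^{(3)})| = |\mathscr G(\bm c)|$.
It then argues by complementary counting. It writes $|\mathscr G(\bm c)| = |\mathscr A(\bm c)| - |\mathscr B(\bm c)|$ with $|\mathscr A(\bm c)| = \prod_k \binom{|\mathscr C_k|}{\bm c_k}$. A union bound over pairs of intersecting cycles gives $|\mathscr B(\bm c)| \le O(L^3m^2/n)\,|\mathscr A(\bm c)|$. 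Finally it estimates the falling factorials $(|\mathscr C_k|)_{\bm c_k}$ against $\lambda_k^{\bm c_k}(dn)^{k\bm c_k}$.

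\textbf{Your route.} You keep the labels as labelled positions in an ordered list and place the cycles greedily. Disjointness makes each step's count exact, namely $\frac{(n-s)_k d^k(d-1)^k}{2k}$, and it enforces distinctness automatically. So you need neither a bad-set estimate nor a separate falling-factorial estimate for $\binom{|\mathscr C_k|}{\bm c_k}$.

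\textbf{What each buys.} Your argument is shorter and gives a sharper error. You state $\exp(-O(m(m+L)/n))$, and in fact $\exp(-O(m^2/n))$ holds, since $s + k_j = \sum_{j' \le j} k_{j'} \le |\bm c| \le 2m$. The paper's $L^3$ factor comes from its crude pairwise union bound. Your counting also carries over verbatim to the $2^{2K}-1$ labels needed in \cref{lem:good-K}. The paper's route instead produces the decomposition $\mathscr A = \mathscr G \sqcup \mathscr B$ and the estimate \eqref{eq:Gc-size-estimate} on $|\mathscr G(\bm c)|$ itself, which it reuses later.

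One minor point: your hedge about $k=1,2$ is unnecessary. The formula $|\mathscr C_k| = \frac{(n)_k d^k (d-1)^k}{2k}$ is exact there too, since there are $n\binom d2$ loops and $\binom n2 \cdot 2\binom d2^2$ double edges.
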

\begin{proof}
Observe that by erasing the ordering of the $k$-cycles among the $\Gamma^{(1)},\Gamma^{(2)},\Gamma^{(3)}$ for each $k=1,\dots,L$ and taking their union, we can see that for $\bm{c} = \bm{c}^{(1)} + \bm{c}^{(2)} + \bm{c}^{(3)}$,
\begin{align*}
    \prod_{k=1}^{L} \frac{\bm{c}_{k}^{(1)}!\bm{c}_{k}^{(2)}!\bm{c}_{k}^{(3)}!}{\bm{c}_{k}!} \cdot \abs{\widehat{\mathscr{G}}\wrapp{\bm{c}^{(1)},\bm{c}^{(2)},\bm{c}^{(3)}}} = \abs{\mathscr{G}(\bm{c})}.
\end{align*}
Hence, it suffices to prove that
\begin{align}\label{eq:Gc-size-estimate}
    1 \geq \frac{\abs{\mathscr{G}(\bm{c})}}{(dn)^{\abs{\bm{c}}} \cdot \prod_{k=1}^{L} \frac{\lambda_{k}^{\bm{c}_{k}}}{\bm{c}_{k}!}} \geq \exp\wrapp{-O\wrapp{\frac{L^{3}m^{2}}{n}}}.
\end{align}
Observe that $\abs{\mathscr{G}(\bm{c})} + \abs{\mathscr{B}(\bm{c})} = \abs{\mathscr{A}(\bm{c})} = \prod_{k=1}^{L} \binom{\abs{\mathscr{C}_{k}}}{\bm{c}_{k}}$. Moreover, by first determining which pair of cycles intersect at a given cloud, we have the crude upper bound
\begin{align*}
    \abs{\mathscr{B}(\bm{c})} &\leq \sum_{1 \leq i < j \leq L} \frac{j}{n} \cdot \abs{\mathscr{C}_{i}} \cdot \abs{\mathscr{C}_{j}} \cdot \binom{\abs{\mathscr{C}_{i}} - 1}{\bm{c}_{i} - 1} \cdot \binom{\abs{\mathscr{C}_{j}} - 1}{\bm{c}_{j} - 1} \cdot \prod_{k \neq i,j} \binom{\abs{\mathscr{C}_{k}}}{\bm{c}_{k}} \\
    &\leq \frac{L}{n} \cdot \prod_{k=1}^{L} \binom{\abs{\mathscr{C}_{k}}}{\bm{c}_{k}} \cdot \sum_{1 \leq i < j \leq L} \bm{c}_{i}\bm{c}_{j} \\
    &\leq O\wrapp{\frac{L^{3}m^{2}}{n}} \cdot \wrapp{\abs{\mathscr{G}(\bm{c})} + \abs{\mathscr{B}(\bm{c})}}. \tag{Using $j \leq L$ and $\bm{c}_{i},\bm{c}_{j} \leq 2m$}
\end{align*}
Hence,
\begin{align*}
    \prod_{k=1}^{L} \binom{\abs{\mathscr{C}_{k}}}{\bm{c}_{k}} \geq \abs{\mathscr{G}(\bm{c})} \geq \exp\wrapp{-O\wrapp{\frac{L^{3}m^{2}}{n}}} \cdot \prod_{k=1}^{L} \binom{\abs{\mathscr{C}_{k}}}{\bm{c}_{k}}
\end{align*}
Finally, recall that $\abs{\mathscr{C}_{k}} = (n)_{k} \cdot d^{k} \cdot \lambda_{k}$. Clearly, we have $\wrapp{\abs{\mathscr{C}_{k}}}_{\bm{c}_{k}} \leq \lambda_{k}^{\bm{c}_{k}}(dn)^{k \cdot \bm{c}_{k}}$. In the other direction,
\begin{align*}
    \frac{\wrapp{\abs{\mathscr{C}_{k}}}_{\bm{c}_{k}}}{\lambda_{k}^{\bm{c}_{k}}(dn)^{k\bm{c}_{k}}} &\geq \frac{\wrapp{(n-k)^{k}d^{k}\lambda_{k} - \bm{c}_{k}}^{\bm{c}_{k}}}{\lambda_{k}^{\bm{c}_{k}}(dn)^{k \cdot \bm{c}_{k}}} \\
    &= \wrapp{\wrapp{1 - \frac{k}{n}}^{k} - \frac{\bm{c}_{k}}{\lambda_{k} (dn)^{k}}}^{\bm{c}_{k}} \\
    &\geq \wrapp{1 - \frac{k^{2}}{n} - \frac{\bm{c}_{k}}{\lambda_{k} (dn)^{k}}}^{\bm{c}_{k}} \tag{Bernoulli Inequality} \\
    &\geq \wrapp{1 - \frac{L^{2}}{n} - \frac{4m}{n^{k}d^{k}(d-1)^{k}}}^{2m} \tag{Using $k \leq L$, $k\bm{c}_{k} \leq 2m$, and $\lambda_{k} = \frac{(d-1)^{k}}{2k}$} \\
    &\geq \wrapp{1 - \frac{L^{2}}{n} - \frac{2m}{dn}}^{2m} \tag{Using $k \geq 1$} \\
    &\geq \exp\wrapp{-O\wrapp{\frac{L^{2}m + m^{2}}{n}}}.
\end{align*}
Combined with the above approximation to $\abs{\mathscr{G}(\bm{c})}$ yields the claim.
\end{proof}
Putting everything together completes the proof.
\end{proof}

\begin{proof}[Proof of \cref{lem:bad}]
Treating $\sqrt{\frac{1}{1 - (d-1) \cdot \abs{\delta_{2}}}}$ as a constant depending only on $d, \beta$, \cref{prop:2nd-moment} tells us that for any $\Gamma \in \mathscr{A}(\bm{c})$,
\begin{align*}
    \frac{\E_{M}\wrapb{Z_{M,<\epsilon}^{(2)}(\beta) \cdot \allone[\Gamma \subseteq M]}}{\abs{\E_{M}\wrapb{Z_{M,<\epsilon}(\beta)}}^{2}} \lesssim \Pr[\Gamma \subseteq M] \cdot \abs{Z_{\Gamma}(\beta)}^{2} \lesssim \Pr[\Gamma \subseteq M] \cdot \exp\wrapp{O\wrapp{\abs{\Gamma}}},
\end{align*}
where in the final step, we used a very crude upper bound on $\abs{Z_{\Gamma}(\beta)}$ combined with the fact that $\beta$ is bounded by some constant depending on $d$ but not on $N$. Here, note that $Z_{\Gamma}(\beta)$ only considers the number $m(\Gamma)$ of distinct edges occurring in $\Gamma$, which is always at most $\abs{\Gamma}$. Combining this with the fact that $b_{\abs{C}} \leq \exp\wrapp{O\wrapp{\abs{C}}}$ as above, the constraints on $\bm{c}^{(1)},\bm{c}^{(2)},\bm{c}^{(3)}$, and setting $\bm{c} = \bm{c}^{(1)} + \bm{c}^{(2)} + \bm{c}^{(3)}$, we have
\begin{align*}
    &\abs{\varphi\wrapp{\bm{c}^{(1)},\bm{c}^{(2)},\bm{c}^{(3)}} - \varphi^{\mathscr{G}}\wrapp{\bm{c}^{(1)},\bm{c}^{(2)},\bm{c}^{(3)}}} \\
    &\lesssim \exp\wrapp{O(m)} \cdot \sum_{\substack{\Gamma^{(i)} \in \mathscr{G}(\bm{c}^{(i)}),i=1,2,3 \\ \Gamma^{(1)} \cup \Gamma^{(3)}, \Gamma^{(2)} \cup \Gamma^{(3)} \in \mathscr{G}(m) \\ \Gamma^{(1)} \cup \Gamma^{(2)} \cup \Gamma^{(3)} \notin \mathscr{G}}} \Pr[\Gamma^{(1)} \cup \Gamma^{(2)} \cup \Gamma^{(3)} \subseteq M] \\
    &\leq \exp\wrapp{O(m)} \cdot \prod_{k=1}^{L} \frac{\bm{c}_{k}!}{\bm{c}_{k}^{(1)}!\bm{c}_{k}^{(2)}!\bm{c}_{k}^{(3)}!} \cdot \sum_{\Gamma \in \mathscr{B}(\bm{c})} \Pr[\Gamma \subseteq M] \tag{Setting $\Gamma = \Gamma^{(1)} \cup \Gamma^{(2)} \cup \Gamma^{(3)}$} \\
    &\lesssim \exp\wrapp{O(m\log m)} \cdot \sum_{\Gamma \in \mathscr{B}(\bm{c})} \Pr[\Gamma \subseteq M]. \tag{Using the constraints on $\bm{c}$}
\end{align*}
Given this, our goal will be to prove
\begin{align}\label{eq:sumbad-unif-bound}
    \sum_{\Gamma \in \mathscr{B}(\bm{c})} \Pr[\Gamma \subseteq M] \leq \frac{\exp(\exp(O_{d}(L)) + O(m\log m))}{n} \cdot \prod_{k=1}^{L} \frac{\lambda_{k}^{\bm{c}_{k}}}{\bm{c}_{k}!}.
\end{align}
We will again use very crude estimates. Note that $\prod_{k=1}^{L} \frac{\lambda_{k}^{\bm{c}_{k}}}{\bm{c}_{k}!} \gtrsim m^{-O(m)}$ since $\abs{\bm{c}} \leq 2m$ implies $\prod_{k=1}^{L} \frac{1}{\bm{c}_{k}!} \geq \frac{1}{(2m)!}$ and $\prod_{k=1}^{L} \lambda_{k}^{\bm{c}_{k}} \geq \prod_{k=1}^{L} \frac{1}{2k} \geq \frac{1}{(2m)!}$. While this may seem tiny, we're again thinking of $m,L$ as extremely slow-growing functions in $N$. Hence, even with such a weak lower bound, we will be done if we can show
\begin{align*}
    \sum_{\Gamma \in \mathscr{B}(\leq 2m)} \Pr[\Gamma \subseteq M] \leq \frac{\exp(\exp(O_{d}(L)) + O(m\log m))}{n}.
\end{align*}
Note that we're even summing over all $\bm{c} \in \N^{L}$ satisfying $\abs{\bm{c}} \leq 2m$, which makes the left-hand side only larger. This is primarily to help us cut down on the notation required for the proof.

Following the usual terminology of cluster expansions, let us say two cycles are compatible if and only if they do not share any clouds, i.e. they are vertex-disjoint. Otherwise, we say they are incompatible. In the language of abstract polymer models, each cycle is its own polymer. A \emph{cluster} is a connected component within the incompatibility graph on cycles. Observe that any collection $\Gamma$ of cycles can be decomposed into a union of vertex-disjoint \emph{clusters}. The collections $\Gamma \in \mathscr{B}(\leq 2m)$ are precisely the ones such that at least one cluster in $\Gamma$ has at least two cycles.

For a given collection of cycles $\Gamma$, let $m(\Gamma)$ denote the number of distinct edges in $\Gamma$. Note that if $\Gamma$ consists of clusters $\Gamma_{1},\dots,\Gamma_{t}$, then $m(\Gamma) = \sum_{s=1}^{t} m(\Gamma_{s})$, $\abs{\Gamma} = \sum_{s=1}^{t} \abs{\Gamma_{s}}$, and by \cref{lem:shortcyc-approxindep},
\begin{align*}
    \Pr[\Gamma \subseteq M] &= \exp\wrapp{\pm O(m^{2}/n)} \cdot \prod_{s=1}^{t} \Pr[\Gamma_{s} \subseteq M].
\end{align*}

Let $\mathsf{Clusters}$ denote the collection of all clusters. Let $\mathsf{Simple} \defeq \bigcup_{k=1}^{L} \mathscr{C}_{k} \subseteq \mathsf{Clusters}$ denote the collection of ``simple clusters'', i.e. clusters which consist of a single cycle. Let $\mathsf{Bad} = \mathsf{Clusters} \setminus \mathsf{Simple}$ denote the set of all ``bad clusters'', i.e. the clusters which consist of multiple cycles. Define
\begin{align*}
    \Theta^{\mathsf{simple}} &= \sum_{\Gamma \in \mathsf{Simple}} \Pr[\Gamma \subseteq M] \\
    \Theta^{\mathsf{bad}} &= \sum_{\Gamma \in \mathsf{Bad}} \Pr[\Gamma \subseteq M].
\end{align*}
Since every $\Gamma \in \mathscr{B}(\leq 2m)$ can be decomposed into a vertex-disjoint union of ``simple clusters'' and at least one ``bad cluster'', combined with the above upper bound on $\Pr[\Gamma \subseteq M]$ yields
\begin{align*}
    \sum_{\Gamma \in \mathscr{B}(\leq 2m)} \Pr[\Gamma \subseteq M] &\leq \exp(O(m^{2}/n)) \cdot \sum_{\substack{\mathcal{S} \subseteq \mathsf{Simple} \\ \mathcal{B} \subseteq \mathsf{Bad} : \mathcal{B} \neq \emptyset}} \prod_{\Gamma \in \mathcal{S}} \Pr[\Gamma \subseteq M] \prod_{\Gamma \in \mathcal{B}} \Pr[\Gamma \subseteq M] \\
    &\leq \exp(O(m^{2}/n)) \cdot \prod_{\Gamma \in \mathsf{Simple}} (1 + \Pr[\Gamma \subseteq M]) \cdot \wrapp{\prod_{\Gamma \in \mathsf{Bad}} (1 + \Pr[\Gamma \subseteq M]) - 1} \\
    &\leq \exp\wrapp{O\wrapp{\frac{m^{2}}{n}} + \Theta^{\mathsf{simple}}} \cdot \wrapp{\exp\wrapp{\Theta^{\mathsf{bad}}} - 1},
\end{align*}
Observe that
\begin{align*}
    \Theta^{\mathsf{simple}} &= \sum_{k=1}^{L} \sum_{C \in \mathscr{C}_{k}} \Pr[C \subseteq M] \\
    &\leq \sum_{k=1}^{L} \lambda_{k} \cdot (dn)^{k} \cdot \Pr[C \subseteq M] \tag{Using $\abs{\mathscr{C}_{k}} = \frac{(n)_{k} \cdot d^{k}(d-1)^{k}}{2k} \leq \lambda_{k} \cdot (dn)^{k}$} \\
    &\leq \exp(O(L^{2}/n)) \cdot \sum_{k=1}^{L} \lambda_{k} \tag{Again using $\Pr[C \subseteq M] \leq \exp(O(L^{2}/n)) \cdot \wrapp{\frac{1}{dn}}^{\abs{C}}$} \\
    &\leq \exp\wrapp{O\wrapp{\frac{L^{2}}{n}} + O_{d}(L)}. \tag{Using $\lambda_{k} = \frac{(d-1)^{k}}{2k}$}
\end{align*}
For $\Theta^{\mathsf{bad}}$, observe that
\begin{align*}
    \sum_{\Gamma \in \mathsf{Bad}} \Pr[\Gamma \subseteq M] \leq \exp\wrapp{O\wrapp{\frac{m^{2}}{n}}} \cdot \sum_{\Gamma \in \mathsf{Bad}} \frac{1}{(dn)^{m(\Gamma)}}.
\end{align*}
Now, every choice of $\Gamma$ gives rise to a connected (multi)graph $G_{\Gamma} = (V(\Gamma),E(\Gamma))$ with $|E(\Gamma)| = m(\Gamma)$ by contracting the clouds and removing the isolated vertices. For a given $G_{\Gamma}$, there are at most $n^{\abs{V(\Gamma)}}d^{2m(\Gamma)}$ possible ways of embedding $G_{\Gamma}$ into the configuration model, since there are $n(n-1)\dotsb\wrapp{n-\abs{V(\Gamma)}-1}$ choices for the clouds, and then for each edge in $G_{\Gamma}$, there are at most $d^{2}$ choices for the pair of sockets giving rise to that edge. Moreover, if $\Gamma \in \mathsf{Bad}$, then $G_{\Gamma}$ must satisfy $m(\Gamma) \geq \abs{V(\Gamma)} + 1$, since $\Gamma$ has at least two cycles which intersect at some vertex/cloud but which are not identical. It follows that
\begin{align*}
    \sum_{\Gamma \in \mathsf{Bad}} \frac{1}{(dn)^{m(\Gamma)}} \leq \sum_{\text{possible } G_{\Gamma}} \frac{n^{\abs{V(\Gamma)}}d^{2m(\Gamma)}}{(dn)^{m(\Gamma)}} \leq \frac{d^{2m}}{n} \cdot \#\{\text{possible } G_{\Gamma}\}.
\end{align*}
Since the number of vertices $\abs{V(\Gamma)}$ is at most $m$, the number of possible $G_{\Gamma}$ is at most $m^{m+1} \cdot \binom{m^{2}}{m} \leq m^{O(m)}$; the factor of $m \cdot \binom{m^{2}}{m}$ comes from choosing at most $m$ distinct pairs of vertices to form edges in $G_{\Gamma}$, a graph with at most $m$ vertices, and the extra factor of $m^{m}$ comes from the fact that each such edge can occur at most $m$ times. This may seem crude, but it is sharp up to the constant in the exponent, since the number of spanning trees of the complete graph on $m$ vertices is $m^{m-2}$ by Cayley's Formula. Putting these together, we have
\begin{align*}
    \Theta^{\mathsf{bad}} \leq \frac{1}{n} \exp\wrapp{O(m\log m) + O\wrapp{\frac{m^{2}}{n}}}
\end{align*}
Putting the upper bounds for $\Theta^{\mathsf{simple}}$ and $\Theta^{\mathsf{bad}}$ together, we obtain
\begin{align*}
    \sum_{\Gamma \in \mathscr{B}(\leq 2m)} \Pr[\Gamma \subseteq M] \lesssim \frac{1}{n} \cdot \exp(\exp(O_{d}(L)) + O(m\log m)).
\end{align*}
Combined with $\prod_{k=1}^{L} \frac{\lambda_{k}^{\bm{c}_{k}}}{\bm{c}_{k}!} \gtrsim m^{-O(m)} = \exp(-O(m\log m))$, this yields the relative upper bound
\begin{align*}
    \sum_{\Gamma \in \mathscr{B}(\bm{c})} \Pr[\Gamma \subseteq M] \leq \frac{\exp(\exp(O_{d}(L)) + O(m\log m))}{n} \cdot \prod_{k=1}^{L} \frac{\lambda_{k}^{\bm{c}_{k}}}{\bm{c}_{k}!}.
\end{align*}
Plugging in this upper bound completes the proof.
\end{proof}

{\sloppy
\printbibliography

@inproceedings{BHLLR26,
author = {Bencs, Ferenc and Huang, Brice and Lee, Daniel Z. and Liu, Kuikui and Regts, Guus},
title = {On Zeros and Algorithms for Disordered Systems: Mean-Field Spin Glasses},
year = {2026},
isbn = {9798400725364},
publisher = {Association for Computing Machinery},
address = {New York, NY, USA},
url = {https://doi.org/10.1145/3798129.3800769},
doi = {10.1145/3798129.3800769},
booktitle = {Proceedings of the 58th Annual ACM Symposium on Theory of Computing},
pages = {508–517},
numpages = {10},
location = {Salt Lake City, UT, USA},
series = {STOC '26}
}

@article{BGT13,
  author  = {Bayati, Mohsen and Gamarnik, David and Tetali, Prasad},
  title   = {Combinatorial approach to the interpolation method and scaling limits in sparse random graphs},
  journal = {The Annals of Probability},
  volume  = {41},
  number  = {6},
  pages   = {4080--4115},
  year    = {2013},
  doi     = {10.1214/12-AOP816},
  url     = {https://doi.org/10.1214/12-AOP816}
}

@article{CF16,
  author  = {Csikv\'{a}ri, P\'{e}ter and Frenkel, P\'{e}ter E.},
  title   = {{Benjamini--Schramm} continuity of root moments of graph polynomials},
  journal = {European Journal of Combinatorics},
  volume  = {52},
  pages   = {302--320},
  year    = {2016},
  doi     = {10.1016/j.ejc.2015.07.009},
  url     = {https://doi.org/10.1016/j.ejc.2015.07.009}
}

@article{KL26,
  author  = {Koehler, Frederic and Leung, Pui Kuen},
  title   = {Approximating the Permanent of a Random Matrix with Polynomially Small Mean: Zeros and Universality},
  journal = {arXiv preprint arXiv:2604.01367},
  year    = {2026},
  doi     = {10.48550/arXiv.2604.01367},
  url     = {https://arxiv.org/abs/2604.01367}
}

@article{CLMS22,
  title={The {I}sing antiferromagnet and max cut on random regular graphs},
  author={Coja-Oghlan, Amin and Loick, Philipp and Mezei, Bal{\'a}zs F and Sorkin, Gregory B},
  journal={SIAM J. Disc. Math.},
  volume={36},
  number={2},
  pages={1306--1342},
  year={2022}
}

@incollection{FL21,
  author    = {Christian Fabian and Philipp Loick},
  title     = {The Ising Antiferromagnet in the Replica Symmetric Phase},
  booktitle = {Extended Abstracts EuroComb 2021},
  series    = {Trends in Mathematics},
  publisher = {Birkh{\"a}user},
  address   = {Cham},
  pages      = {299--306},
  year       = {2021},
  doi        = {10.1007/978-3-030-83823-2_47},
  url        = {https://doi.org/10.1007/978-3-030-83823-2_47}
}

@article{BBD25,
  author  = {Roland Bauerschmidt and Thierry Bodineau and Benoit Dagallier},
  title   = {Kawasaki Dynamics Beyond the Uniqueness Threshold},
  journal = {Probability Theory and Related Fields},
  volume  = {192},
  number  = {1--2},
  pages   = {267--350},
  year    = {2025},
  doi     = {10.1007/s00440-024-01326-9},
  url     = {https://doi.org/10.1007/s00440-024-01326-9}
}

@article{BRZ95,
  title={On the purity of the limiting {G}ibbs state for the {I}sing model on the Bethe lattice},
  author={Bleher, Pavel M and Ruiz, Jean and Zagrebnov, Valentin A},
  journal={Journal of Statistical Physics},
  volume={79},
  number={1},
  pages={473--482},
  year={1995},
  publisher={Springer}
}

@article{HS26,
  title={Strong Topological Trivialization of Multi-Species Spherical Spin Glasses},
  author={Huang, Brice and Sellke, Mark},
  journal={Annals of Probability},
  volume={54},
  number={2},
  pages={1034--1107},
  year={2026}
}

@book{Roc70,
  title={Convex Analysis},
  author={Rockafellar, R. Tyrrell},
  year={1970},
  publisher={Princeton University Press}
}

@book{Bar16book,
  TITLE = {Combinatorics and Complexity of Partition Functions},
  AUTHOR = {Barvinok, Alexander},
  YEAR = {2016},
  PUBLISHER = {Springer Algorithms and Combinatorics},
  volume={30}
}

@inproceedings{CLV21a,
    author = {Chen, Zongchen and Liu, Kuikui and Vigoda, Eric},
    title = {Optimal Mixing of Glauber Dynamics: Entropy Factorization via High-Dimensional Expansion},
    year = {2021},
    publisher = {Association for Computing Machinery},
    address = {New York, NY, USA},
    booktitle = {Proceedings of the 53rd Annual ACM SIGACT Symposium on Theory of Computing},
    pages = {1537--1550},
    location = {Virtual, Italy},
    series = {STOC 2021}
}

@inproceedings{CLV20,
    author = {Chen, Zongchen and Liu, Kuikui and Vigoda, Eric},
    title={Rapid Mixing of Glauber Dynamics up to Uniqueness via Contraction},
    year={2020},
    booktitle={2020 IEEE 61st Annual Symposium on Foundations of Computer Science (FOCS)},
    pages={1307--1318},
    series={FOCS 2020}
}

@article{CCYZ25,
  title={Rapid Mixing at the Uniqueness Threshold},
  author={Chen, Xiaoyu and Chen, Zongchen and Yin, Yitong and Zhang, Xinyuan},
  journal={arXiv preprint arXiv:2411.03413},
  year={2025}
}

@article{Bar16,
 author = {Barvinok, Alexander},
 title = {Computing the Permanent of (Some) Complex Matrices},
 journal = {Found. Comput. Math.},
 issue_date = {April 2016},
 volume = {16},
 number = {2},
 month = apr,
 year = {2016},
 issn = {1615-3375},
 pages = {329--342},
 numpages = {14},
 acmid = {2898324},
 publisher = {Springer-Verlag New York, Inc.},
 address = {Secaucus, NJ, USA}
}

@article{Bar17perm,
title={Approximating permanents and hafnians},
author={Barvinok, Alexander},
year={2017},
journal={Discrete Analysis},
numpages={34},
volume={2},
}

@inproceedings{CFYZ21a,
  author={Chen, Xiaoyu and Feng, Weiming and Yin, Yitong and Zhang, Xinyuan},
  booktitle={2021 IEEE 62nd Annual Symposium on Foundations of Computer Science (FOCS)}, 
  title={Rapid mixing of Glauber dynamics via spectral independence for all degrees}, 
  year={2021},
  volume={},
  number={},
  pages={137-148},
}

@book {Stanleybookvol2,
    AUTHOR = {Stanley, Richard P.},
     TITLE = {Enumerative combinatorics. {V}ol. 2},
    SERIES = {Cambridge Studies in Advanced Mathematics},
    VOLUME = {62},
      NOTE = {With a foreword by Gian-Carlo Rota and appendix 1 by Sergey
              Fomin},
 PUBLISHER = {Cambridge University Press, Cambridge},
      YEAR = {1999},
     PAGES = {xii+581},
      ISBN = {0-521-56069-1; 0-521-78987-7},
   MRCLASS = {05A15 (05-02 05E05 05E10 68R05)},
  MRNUMBER = {1676282},
MRREVIEWER = {Ira\ Gessel},
       DOI = {10.1017/CBO9780511609589},
       URL = {https://doi-org.proxy.uba.uva.nl/10.1017/CBO9780511609589},
}

@article {BarBarIsing,
    AUTHOR = {Barvinok, Alexander and Barvinok, Nicholas},
     TITLE = {More on zeros and approximation of the {I}sing partition
              function},
   JOURNAL = {Forum Math. Sigma},
  FJOURNAL = {Forum of Mathematics. Sigma},
    VOLUME = {9},
      YEAR = {2021},
     PAGES = {Paper No. e46, 18},
      ISSN = {2050-5094},
   MRCLASS = {82B20 (30C15 68Q09)},
  MRNUMBER = {4271356},
       DOI = {10.1017/fms.2021.40},
       URL = {https://doi-org.proxy.uba.uva.nl/10.1017/fms.2021.40},
}

@article {GGHPcomplexityIsing,
    AUTHOR = {Galanis, Andreas and Goldberg, Leslie A. and Herrera-Poyatos,
              Andres},
     TITLE = {The complexity of approximating the complex-valued {I}sing
              model on bounded degree graphs},
   JOURNAL = {SIAM J. Discrete Math.},
  FJOURNAL = {SIAM Journal on Discrete Mathematics},
    VOLUME = {36},
      YEAR = {2022},
    NUMBER = {3},
     PAGES = {2159--2204},
      ISSN = {0895-4801,1095-7146},
   MRCLASS = {68R05 (05C31 82B20)},
  MRNUMBER = {4476941},
       DOI = {10.1137/21M1454043},
       URL = {https://doi-org.proxy.uba.uva.nl/10.1137/21M1454043},
}

@article {PRSdiskIsing,
    AUTHOR = {Patel, Viresh and Regts, Guus and Stam, Ayla},
     TITLE = {A near-optimal zero-free disk for the {I}sing model},
   JOURNAL = {Comb. Theory},
  FJOURNAL = {Combinatorial Theory},
    VOLUME = {4},
      YEAR = {2024},
    NUMBER = {2},
     PAGES = {Paper No. 9, 15},
      ISSN = {2766-1334},
   MRCLASS = {05C31 (68Q87 82B20)},
  MRNUMBER = {4807148},
}

@book {SteinShakarchi2003,
    AUTHOR = {Stein, Elias M. and Shakarchi, Rami},
     TITLE = {Fourier analysis},
    SERIES = {Princeton Lectures in Analysis},
    VOLUME = {1},
      NOTE = {An introduction},
 PUBLISHER = {Princeton University Press, Princeton, NJ},
      YEAR = {2003},
     PAGES = {xvi+311},
      ISBN = {0-691-11384-X},
   MRCLASS = {42-01},
  MRNUMBER = {1970295},
MRREVIEWER = {Steven\ George\ Krantz},
}

@inproceedings{AJKPV22,
    author = {Anari, Nima and Jain, Vishesh and Koehler, Frederic and Pham, Huy Tuan and Vuong, Thuy-Duong},
    title = {Entropic Independence: Optimal Mixing of down-up Random Walks},
    year = {2022},
    publisher = {Association for Computing Machinery},
    address = {New York, NY, USA},
    booktitle = {Proceedings of the 54th Annual ACM SIGACT Symposium on Theory of Computing},
    pages = {1418--1430},
    numpages = {13},
    location = {Rome, Italy},
    series = {STOC 2022}
}

@INPROCEEDINGS{CE22,
author = {Chen, Yuansi and Eldan, Ronen},
booktitle = {2022 IEEE 63rd Annual Symposium on Foundations of Computer Science (FOCS)},
title = {Localization Schemes: A Framework for Proving Mixing Bounds for Markov Chains (extended abstract)},
year = {2022},
volume = {},
issn = {},
pages = {110-122},
doi = {10.1109/FOCS54457.2022.00018},
url = {https://doi.ieeecomputersociety.org/10.1109/FOCS54457.2022.00018},
publisher = {IEEE Computer Society},
address = {Los Alamitos, CA, USA},
month = nov
}

@article{Eld13,
    title={Thin Shell Implies Spectral Gap Up to Polylog via a Stochastic Localization Scheme},
    author={Eldan, Ronen},
    year={2013},
    journal={Geometric and Functional Analysis},
    pages={532--569},
    volume={23},
}

@article{BB19,
title = {A very simple proof of the LSI for high temperature spin systems},
journal = {Journal of Functional Analysis},
volume = {276},
number = {8},
pages = {2582-2588},
year = {2019},
issn = {0022-1236},
doi = {https://doi.org/10.1016/j.jfa.2019.01.007},
url = {https://www.sciencedirect.com/science/article/pii/S0022123619300278},
author = {Roland Bauerschmidt and Thierry Bodineau},
}

@article{Len20,
  title={Beitrag zum Verst{\"a}ndnis der magnetischen Erscheinungen in festen K{\"o}rpern},
  author={W. Lenz},
  year={1920},
  journal={Physikalische Zeitschrift},
  volume={21},
  pages={613--615}
}

@article{Isi25,
    title={Beitrag zur Theorie des Ferromagnetismus},
    author={Ernst Ising},
    year={1925},
    journal={Zeitschrift f\"{u}r Physik},
    volume={31},
    pages={253--258},
}

@InProceedings{KLR22,
  title = 	 {Sampling Approximately Low-Rank Ising Models: MCMC meets Variational Methods},
  author =       {Koehler, Frederic and Lee, Holden and Risteski, Andrej},
  booktitle = 	 {Proceedings of Thirty Fifth Conference on Learning Theory},
  pages = 	 {4945--4988},
  year = 	 {2022},
  editor = 	 {Loh, Po-Ling and Raginsky, Maxim},
  volume = 	 {178},
  series = 	 {Proceedings of Machine Learning Research},
  month = 	 {7},
  publisher =    {PMLR},
  url = 	 {https://proceedings.mlr.press/v178/koehler22a.html},
}

@article{EKZ22,
    title={A spectral condition for spectral gap: fast mixing in high-temperature Ising models},
    author={Ronen Eldan and Frederic Koehler and Ofer Zeitouni},
    year={2022},
    journal={Probability Theory and Related Fields},
    volume={182},
    pages={1035--1051},
}

@inproceedings{Wei06,
    author = {Weitz, Dror},
    title = {Counting Independent Sets Up to the Tree Threshold},
    booktitle = {Proceedings of the 38th Annual ACM Symposium on Theory of Computing (STOC)},
    year = {2006},
    location = {Seattle, WA, USA},
    pages = {140--149},
    numpages = {10},
}

@inproceedings{Sly10,
  author={Sly, Allan},
  booktitle={2010 IEEE 51st Annual Symposium on Foundations of Computer Science}, 
  title={Computational Transition at the Uniqueness Threshold}, 
  year={2010},
  volume={},
  number={},
  pages={287--296},
  doi={10.1109/FOCS.2010.34}
}

@article{SS14,
author = {Sly, Allan and Sun, Nike},
title = {The Computational Hardness of Counting in Two-Spin Models on $d$-Regular Graphs},
journal={The Annals of Probability},
year={2014},
volume={42},
number={6},
pages={2383--2416}
}

@article{EKPS00,
title={Broadcasting on trees and the Ising model},
author={Evans, William and Kenyon, Claire and Peres, Yuval and Schulman, Leonard J.},
year={2000},
journal={The Annals of Applied Probability},
volume={10},
number={2},
pages={410--433},
}

@article{SST14,
author={Sinclair, Alistair
and Srivastava, Piyush
and Thurley, Marc},
title={Approximation Algorithms for Two-State Anti-Ferromagnetic Spin Systems on Bounded Degree Graphs},
journal={Journal of Statistical Physics},
year={2014},
volume={155},
number={4},
pages={666--686}
}

@article{PR17,
title = {Deterministic Polynomial-Time Approximation Algorithms for Partition Functions and Graph Polynomials},
journal = {SIAM Journal on Computing},
volume = {46},
pages = {1893--1919},
year = {2017},
number={6},
author = {Viresh Patel and Guus Regts}
}

@article{PR22,
    AUTHOR = {Patel, Viresh and Regts, Guus},
     TITLE = {Approximate counting using {T}aylor's theorem: a survey},
   JOURNAL = {Bull. Eur. Assoc. Theor. Comput. Sci. EATCS},
  FJOURNAL = {Bulletin of the European Association for Theoretical Computer
              Science. EATCS},
    NUMBER = {138},
      YEAR = {2022},
     PAGES = {40--70},
      ISSN = {0252-9742},
   MRCLASS = {68R05 (05C31)},
  MRNUMBER = {4608123},
MRREVIEWER = {Paul\ Andrew\ Dreyer, Jr.},
}

@article{LSS19b,
  title={Fisher zeros and correlation decay in the Ising model},
  author={Liu, Jingcheng and Sinclair, Alistair and Srivastava, Piyush},
  journal={Journal of Mathematical Physics},
  volume={60},
  number={10},
  pages = {103304},
  year={2019},
}

@article{JVV86,
  title={Random generation of combinatorial structures from a uniform distribution},
  author={Jerrum, Mark R. and Valiant, Leslie G. and Vazirani, Vijay V.},
  journal={Theoretical Computer Science},
  volume={43},
  pages={169--188},
  year={1986},
  publisher={Elsevier}
}

@article{Bar15,
 author = {Barvinok, Alexander},
 title = {Computing the Partition Function for Cliques in a Graph},
 year = {2015},
 pages = {339--355},
 doi = {10.4086/toc.2015.v011a013},
 publisher = {Theory of Computing},
 journal = {Theory of Computing},
 volume = {11},
 number = {13},
 URL = {http://www.theoryofcomputing.org/articles/v011a013},
}

@book{Bol01,
    series={Cambridge Studies in Advanced Mathematics},
    title={Random Graphs},
    publisher={Cambridge University Press},
    author={Bollob\'{a}s, B\'{e}la},
    year={2001},
}

@article{BCG22,
  author  = {Bordenave, Charles and Chafa{\"i}, Djalil and Garc{\'i}a-Zelada, David},
  title   = {Convergence of the spectral radius of a random matrix through its characteristic polynomial},
  journal = {Probability Theory and Related Fields},
  year    = {2022},
  volume  = {182},
  number  = {3--4},
  pages   = {1163--1181},
  month   = apr,
  doi     = {10.1007/s00440-021-01079-9},
  url     = {https://doi.org/10.1007/s00440-021-01079-9}
}

@INPROCEEDINGS{EM18,
  author={Eldar, Lior and Mehraban, Saeed},
  booktitle={2018 IEEE 59th Annual Symposium on Foundations of Computer Science (FOCS)}, 
  title={Approximating the Permanent of a Random Matrix with Vanishing Mean}, 
  year={2018},
  volume={},
  number={},
  pages={23-34},
  doi={10.1109/FOCS.2018.00012}}

@inproceedings{AKV24,
author = {Anari, Nima and Koehler, Frederic and Vuong, Thuy-Duong},
title = {Trickle-Down in Localization Schemes and Applications},
year = {2024},
isbn = {9798400703836},
publisher = {Association for Computing Machinery},
address = {New York, NY, USA},
url = {https://doi.org/10.1145/3618260.3649622},
doi = {10.1145/3618260.3649622},
booktitle = {Proceedings of the 56th Annual ACM Symposium on Theory of Computing},
pages = {1094–1105},
numpages = {12},
location = {Vancouver, BC, Canada},
series = {STOC 2024}
}

@book{Pan13b,
    title={The Sherrington-Kirkpatrick Model},
    author={Panchenko, Dmitry},
    year={2013},
    publisher={Springer New York, NY},
}
}

\appendix

\section{Basic moment estimates for real \texorpdfstring{$\beta$}{beta}}
\label{a:basic-mt}

In this appendix we prove~\cref{l:1mt-balanced} and \cref{l:basic-mt} using Laplace's method.

\subsection{First moment}

This subsection is devoted to the proof of~\cref{l:1mt-balanced}.
Let
\[
  {\mathcal{L}} = \left\{
    (\alpha, \gamma) \in \R^2: \alpha \in [0,1], \gamma \in [0,\min(\alpha,1-\alpha)]
  \right\}\,.
\]
For $x\in [0,1]$ let ${\mathcal{H}}(x) = -x\log x - (1-x)\log(1-x)$ denote the binary entropy function.
By slight abuse of notation, for a vector ${\underline{x}}$ with nonnegative entries and $\|{\underline{x}}\|_1 = 1$, let ${\mathcal{H}}({\underline{x}})$ denote the entropy of the discrete distribution with probabilities ${\underline{x}}$.
For $(\alpha, \gamma) \in {\mathcal{L}}$ let
\begin{align}\label{e:basic-mt-Phi}
  \Phi^\beta(\alpha,\gamma) &= {\mathcal{H}}(\alpha) + d \Psi^\beta(\alpha,\gamma)\,, \qquad \text{where} \\
  \nonumber
  \Psi^\beta(\alpha,\gamma) &= - {\mathcal{H}}(\alpha)
  + \frac12 {\mathcal{H}}\left(\gamma,\gamma,\alpha-\gamma,1-\alpha-\gamma\right)
  + \gamma \beta\,.
\end{align}
Let
\[
  {\mathcal{L}}_n = \left\{
    (\alpha, \gamma) \in {\mathcal{L}} :
    \text{$\alpha n$ is an integer and $(\alpha + \gamma) dn$ is an even integer}
  \right\}\,.
\]
For $\sigma \in \Sigma_n$ let $n_+(\sigma)$ denote the number of $+1$-valued entries of $\sigma$.
It is simple to check that there exists $\sigma \in \Sigma_n$ and a realization of $G \sim \configurationModel(n,d)$ with $n_+(\sigma) = \alpha n$ and $\isingHamiltonian_{G}(\sigma) = \gamma dn$ if and only if $(\alpha,\gamma) \in {\mathcal{L}}_n$.
\begin{lemma}\label{l:1mt-exp-rate}
  For any $(\alpha,\gamma) \in {\mathcal{L}}_n$ and $\beta \in \R$, the contribution to $\E[Z_{G}(\beta)]$ from $\sigma$ with $n_+(\sigma) = \alpha n$ and $\isingHamiltonian_{G}(\sigma) = \gamma dn$ is
  \begin{equation}\label{e:1mt-exp-rate}
    \sum_{\sigma \in \Sigma_n} \allone\{n_+(\sigma) = \alpha n\}
    \E\left[
      \allone\{\isingHamiltonian_{G}(\sigma) = \gamma dn\}
      e^{\beta \isingHamiltonian_{G}(\sigma)}
    \right]
    = \exp\left(
      n \Phi^\beta(\alpha,\gamma) + o(n)
    \right)\,.
  \end{equation}
  Consequently,
  \begin{align}
    \label{e:1mt-variational-inside}
    \E[Z_{G,< \varepsilon}(\beta)]
    &= \exp\left(
      n \sup\{
        \Phi^\beta(\alpha,\gamma) : (\alpha,\gamma) \in {\mathcal{L}}, |2\alpha - 1| < \varepsilon
      \} + o(n)
    \right)\,, \\
    \label{e:1mt-variational-outside}
    \E[Z_{G,\ge \varepsilon}(\beta)]
    &= \exp\left(
      n \sup\{
        \Phi^\beta(\alpha,\gamma) : (\alpha,\gamma) \in {\mathcal{L}}, |2\alpha - 1| \ge \varepsilon
      \} + o(n)
    \right)\,.
  \end{align}
\begin{proof}
  Let $m = \alpha n$ and $L = \gamma dn$.
  We claim that the left-hand side of \eqref{e:1mt-exp-rate} equals
  \begin{equation}\label{e:1mt-exact}
    \binom{n}{m} \cdot \frac{1}{(nd-1)!!} \cdot \binom{md}{L} \binom{(n-m)d}{L} L! (md-L-1)!! ((n-m)d-L-1)!! \cdot e^{\beta L}\,.
  \end{equation}
  The number of $\sigma$ with $n_+(\sigma) = \alpha n = m$ is clearly $\binom{n}{m}$.
  Consider any one of these $\sigma$.
  We have $\isingHamiltonian_{G}(\sigma) = \gamma dn$ if and only if $G$ has $L$ non-monochromatic edges (with respect to the vertex labeling $\sigma$).

  We next calculate the probability of this event for $G$ drawn from the configuration model.
  There are $(nd-1)!!$ possible perfect matchings of the half-edges of $G$.
  We claim that the number of perfect matchings resulting in $L$ non-monochromatic edges is
  \begin{equation}\label{e:1mt-count}
    \binom{md}{L} \binom{(n-m)d}{L} L! (md-L-1)!! ((n-m)d-L-1)!!\,.
  \end{equation}
  We label the $dn$ half-edges of $G$ by the spin value $\sigma_v$ of their attached vertex $v$.
  Let $S_+$ and $S_-$ be the sets of half-edges with labels $+1$ and $-1$, so that $|S_+| = md$, $|S_-| = (n-m) d$.
  A perfect matching with $L$ non-monochromatic edges can be specified by choosing $T_+\subseteq S_+$ and $T_- \subseteq S_-$ of size $|T_+| = |T_-| = L$ ($\binom{md}{L} \binom{(n-m)d}{L}$ choices), choosing a matching between $T_+$ and $T_-$ ($L!$ choices), and choosing a perfect matching among the remaining half-edges in $S_+\setminus T_+$ ($(md-L-1)!!$ choices) and in $S_- \setminus T_-$ ($((n-m)d-L-1)!!$ choices).
  This shows \eqref{e:1mt-count}.

  Finally, on this event, we have $e^{\beta \isingHamiltonian_{G}(\sigma)} = e^{\beta L}$.
  This proves the formula \eqref{e:1mt-exact}.
  A routine computation using Stirling's formula derives \eqref{e:1mt-exp-rate} from \eqref{e:1mt-exact}.
  The conclusions \eqref{e:1mt-variational-inside}, \eqref{e:1mt-variational-outside} are then standard from Laplace's method.
\end{proof}
\end{lemma}
\begin{lemma}\label{l:1mt-inner-stationarity}
  For any fixed $\alpha \in [0,1]$, $\beta \in \R$, the function $[0, {\color{blue} \min} 
  (\alpha,1-\alpha)] \ni \gamma \mapsto \Psi^\beta(\alpha,\gamma)$ is concave with unique maximum $\gamma = \gamma^\beta(\alpha)$ solving
  \begin{equation}\label{e:1mt-inner-stationarity}
    \log (\alpha-\gamma)
    + \log (1-\alpha-\gamma)
    - 2\log \gamma
    + 2\beta
    = 0\,.
  \end{equation}
\begin{proof}
  We directly compute that $2\partial_\gamma \Psi^\beta(\alpha,\gamma)$ equals the left-hand side of \eqref{e:1mt-inner-stationarity}.
  This is strictly decreasing in $\gamma$, with limits $+\infty$ and $-\infty$ as $\gamma \downarrow 0$ and $\gamma \uparrow {\color{blue} \min(\alpha,1-\alpha)}$
  So, $\gamma \mapsto \Psi^\beta(\alpha,\gamma)$ is concave and uniquely maximized at the $\gamma$ where $\partial_\gamma \Psi^\beta(\alpha,\gamma) = 0$, i.e. \eqref{e:1mt-inner-stationarity} holds.
\end{proof}
\end{lemma}
Then, for $\alpha \in [0,1]$, $\beta \in \R$ let
\begin{align*}
  \Psi^\beta(\alpha) = \sup\left\{
    \Psi^\beta(\alpha,\gamma) : \gamma \in \left[0, \min(\alpha,1-\alpha)\right]
  \right\} &= \Psi^\beta(\alpha,\gamma^\beta(\alpha))\,, \\
  \Phi^\beta(\alpha) = \sup\left\{
    \Phi^\beta(\alpha,\gamma) : \gamma \in \left[0, \min(\alpha,1-\alpha)\right]
  \right\} &= {\mathcal{H}}(\alpha) + d \Psi^\beta(\alpha)\,.
\end{align*}
\begin{fact}\label{f:1mt-beta-zero-trivial}
  At $\beta = 0$, we have $\Psi^0(\alpha) = 0$ for all $\alpha \in [0,1]$.
\begin{proof}
  From~\cref{l:1mt-inner-stationarity} it is straightforward to check that $\gamma^0(\alpha) = \alpha(1-\alpha)$.
  Plugging back into the definition of $\Psi$ yields $\Psi(\alpha,\gamma^0(\alpha),0) = 0$.
\end{proof}
\end{fact}
\begin{lemma}\label{l:1mt-alpha-beta-deriv}
  For any $\alpha \in [0,1]$, $\beta \in \R$, we have
  \[
    \partial_\alpha \partial_\beta \Psi^\beta(\alpha) = - \frac{2\alpha-1}{
      1 - 2\gamma (1-e^{-2\beta})
    }\,.
  \]
\begin{proof}
  Note that $\partial_\beta \Psi^\beta(\alpha) = \gamma^\beta(\alpha)$, and thus
  \[
    \partial_\alpha \partial_\beta \Psi^\beta(\alpha) = \partial_\alpha \gamma^\beta(\alpha)\,.
  \]
  We calculate this derivative using the implicit function theorem.
  From \eqref{e:1mt-inner-stationarity} we obtain (abbreviating $\gamma = \gamma^\beta(\alpha)$)
  \[
    \left(
      \frac{1}{\alpha-\gamma}
      - \frac{1}{1-\alpha-\gamma}
    \right)
    - \left(
      \frac{1}{\alpha-\gamma}
      + \frac{1}{1-\alpha-\gamma}
      + \frac{2}{\gamma}
    \right)
    \partial_\alpha \gamma^\beta(\alpha) = 0\,.
  \]
  This rearranges to
  \[
    \partial_\alpha \gamma^\beta(\alpha)
    = \frac{
      \frac{1}{\alpha-\gamma}
      - \frac{1}{1-\alpha-\gamma}
    }{
      \frac{1}{\alpha-\gamma}
      + \frac{1}{1-\alpha-\gamma}
      + \frac{2}{\gamma}
    }
    = - \frac{2\alpha-1}{
      1 - 2\gamma
      + \frac{2}{\gamma} (\alpha-\gamma)(1-\alpha-\gamma)
    }\,.
  \]
  From \eqref{e:1mt-inner-stationarity} we also have
  \[
    \frac{1}{\gamma}
    (\alpha-\gamma)(1-\alpha-\gamma)
    = 
    \gamma e^{-2\beta}\,.
  \]
  Thus,
  \[
    \partial_\alpha \gamma^\beta(\alpha)
    = - \frac{2\alpha-1}{
      1 - 2\gamma (1-e^{-2\beta})
    }. \qedhere
  \]
\end{proof}
\end{lemma}
\begin{lemma}\label{l:1mt-Phi-concave}
  For all $\alpha \in [\frac12,1]$ and $\beta \in [2\beta_\lowerBoundTag,+\infty)$, we have $\partial_\alpha \Phi^\beta(\alpha) \le -(2\alpha-1)$.
  (And by symmetry, for $\alpha \in [0,\frac12]$ we have $\partial_\alpha \Phi^\beta(\alpha) \ge 2\alpha-1$).
\begin{proof}
  Assume $\alpha \in [\frac12,1)$.
  Note that
  \[
    \frac{\differential}{\differential \alpha} {\mathcal{H}}(\alpha)
    = \log \frac{1-\alpha}{\alpha}
    \le -2(2\alpha-1)\,.
  \]
  So it suffices to show that
  \begin{equation}\label{e:1mt-Phi-concave-goal}
    \partial_\alpha \Psi^\beta(\alpha) \le \frac{2\alpha-1}{d}\,.
  \end{equation}
  If $\beta = 0$, this follows from~\cref{f:1mt-beta-zero-trivial}, which implies $\partial_\alpha \Psi(\alpha,0) = 0$.
  We now separately consider the cases $\beta > 0$ and $\beta \in [2\beta_\lowerBoundTag,0)$.

  \textit{Case 1: $\beta > 0$.}
  By~\cref{l:1mt-alpha-beta-deriv},
  \[
    \partial_\alpha \Psi^\beta(\alpha)
    = \partial_\alpha \Psi(\alpha,0) + \int_0^\beta \partial_\alpha \partial_\beta \Psi(\alpha,{\widetilde \beta}) \,\differential {\widetilde \beta} \\
    = - \int_0^\beta \frac{2\alpha-1}{
      1 - 2\gamma(\alpha,{\widetilde \beta}) (1-e^{-2{\widetilde \beta}})
    } \,\differential {\widetilde \beta}\,.
  \]
  Note the denominator $1 - 2\gamma(\alpha,{\widetilde \beta}) (1 - e^{-2{\widetilde \beta}})$ is always positive because $\gamma(\alpha,{\widetilde \beta}) \in [0,\min(\alpha, 1-\alpha)] \subseteq [0,\frac12]$.
  It follows that $\partial_\alpha \Psi^\beta(\alpha) < 0$ and \eqref{e:1mt-Phi-concave-goal} is proved.
  
  \textit{Case 2: $\beta \in [2\beta_\lowerBoundTag,0)$}.
  Then,
  \[
    \partial_\alpha \Psi^\beta(\alpha)
    = \partial_\alpha \Psi(\alpha,0) - \int_\beta^0 \partial_\alpha \partial_\beta \Psi(\alpha,{\widetilde \beta}) \,\differential {\widetilde \beta}
    = \int_\beta^0 \frac{2\alpha-1}{
      1 - 2\gamma(\alpha,{\widetilde \beta}) (1-e^{-2{\widetilde \beta}})
    } \,\differential {\widetilde \beta}\,.
  \]
  Recall that $|2\beta_\lowerBoundTag| = 2d^{-10} \le 2\cdot 3^{-10}$.
  For ${\widetilde \beta} \in [2\beta_\lowerBoundTag,0)$ we then have
  \[
    |2\gamma(\alpha,{\widetilde \beta}) (1 - e^{-2{\widetilde \beta}})| 
    \le 2 \cdot \frac12 \cdot 4|{\widetilde \beta}|
    \le 8 \cdot 3^{-10}\,.
  \]
  So, $1 - 2\gamma(\alpha,{\widetilde \beta}) (1 - e^{-2{\widetilde \beta}}) \ge \frac12$.
  It follows that
  \[
    \partial_\alpha \Psi^\beta(\alpha)
    \le \int_\beta^0 2(2\alpha-1) \,\differential {\widetilde \beta}
    = 2|\beta| (2\alpha-1)
    \le 4\cdot d^{-10} (2\alpha-1)
    \le 4\cdot 3^{-9} \cdot \frac{2\alpha-1}{d}
  \]
  and \eqref{e:1mt-Phi-concave-goal} is proved.
\end{proof}
\end{lemma}

\begin{proof}[Proof of~\cref{l:1mt-balanced}]
  The conclusions \eqref{e:1mt-variational-inside} and \eqref{e:1mt-variational-outside} of~\cref{l:1mt-exp-rate} can be rewritten as
  \begin{align*}
    \E[Z_{G,< \varepsilon}(\beta)]
    &= \exp\left(
      n \sup_{|2\alpha-1| < \varepsilon} \Phi^\beta(\alpha)
      + o(n)
    \right)\,, \\
    \E[Z_{G,\ge \varepsilon}(\beta)]
    &= \exp\left(
      n \sup_{|2\alpha-1| \ge \varepsilon} \Phi^\beta(\alpha)
      + o(n)
    \right)\,.
  \end{align*}
~\cref{l:1mt-Phi-concave} implies that for any $\beta \in [2\beta_\lowerBoundTag,+\infty)$, the function $[0,1] \ni \alpha \mapsto \Phi^\beta(\alpha)$ is maximized at $\alpha = \frac12$, with
  \[
    \sup_{|2\alpha-1| \ge \varepsilon} \Phi^\beta(\alpha)
    \le \Phi({\color{blue} 1/2},\beta) - \int_{1/2}^{(1+\varepsilon)/2} (2\alpha-1) \,\differential \alpha
    = \Phi({\color{blue} 1/2},\beta) - \frac{\varepsilon^2}{4}\,.
  \]
  This proves part \cref{i:1mt-balanced-deviation}.
  The same calculation with $s$ in place of $\varepsilon$ proves part \cref{i:1mt-balanced-subgaussian}.
\end{proof}

\subsection{Second and third moments}

In this subsection we prove ~\cref{l:basic-mt}.
We begin by deriving a formula for the (exponential rate of) the contribution to the $k$-th moment from any ``type'' (see~\cref{d:basic-mt-type}) which we will eventually apply with $k=2,3$.

For $k\ge 1$, recall $\Sigma_k = \{+1,-1\}^k$. 
For $\omega,\rho \in \Sigma_k$, let $\Delta(\omega,\rho)$ denote the Hamming distance between $\omega$ and $\rho$.
Let
\[
  {\mathcal{P}}_k = \left\{
    {\typeProportions} \in [0,1]^{\Sigma_k} : \|{\typeProportions}\|_1 = 1
  \right\}
\]
be the space of probability measures on the discrete set $\Sigma_k$.
For ${\typeProportions} \in {\mathcal{P}}_k$, let
\[
  {\mathcal{Q}}_k({\typeProportions}) = \left\{
    \Gamma = (\gamma_{\omega,\rho})_{\omega,\rho \in \Sigma_k} \in [0,1]^{\Sigma_k \times \Sigma_k} : 
    \text{$\Gamma$ is symmetric with row sums}
    \sum_{\rho\in \Sigma_k} \gamma_{\omega,\rho} = \alpha_\omega
  \right\}\,.
\]
Define
\[
  {\mathcal{L}}_k = \left\{
    ({\typeProportions}, \Gamma) : {\typeProportions} \in {\mathcal{P}}_k, \Gamma \in {\mathcal{Q}}_k({\typeProportions})
  \right\}\,.
\]
For $({\typeProportions}, \Gamma) \in {\mathcal{L}}_k$ let
\[
  \Phi_k^\beta({\typeProportions},\Gamma)
  = - (d-1) {\mathcal{H}}({\typeProportions}) + \frac{d}{2} {\mathcal{H}}(\Gamma) 
  + d\beta \sum_{(\omega,\rho) \in \binom{\Sigma_k}{2}} \gamma_{\omega,\rho} \Delta(\omega,\rho)\,.
\]
Here ${\mathcal{H}}(\Gamma)$ denotes the entropy of the discrete distribution on $\Sigma_k \times \Sigma_k$ with entries $\gamma_{\omega,\rho}$.
Note that in the case $k=1$, this reduces to $\Phi$ defined in \eqref{e:basic-mt-Phi} with the identification
\begin{align*}
  {\typeProportions} &= \begin{bmatrix}
    \alpha_+ \\ \alpha_-
  \end{bmatrix} = \begin{bmatrix}
    \alpha \\ 1-\alpha
  \end{bmatrix} \,, &
  \Gamma = \begin{bmatrix}
    \gamma_{+,+} & \gamma_{+,-} \\
    \gamma_{-,+} & \gamma_{-,-}
  \end{bmatrix} = \begin{bmatrix}
    \alpha - \gamma & \gamma \\
    \gamma & 1 - \alpha - \gamma
  \end{bmatrix}\,.
\end{align*}
Further define
\[
  {\mathcal{L}}_{k,n} = \left\{
    ({\typeProportions},\Gamma) \in {\mathcal{L}}_k : \text{${\typeProportions} n$ and $\Gamma dn$ have integer entries, and the diagonal entries of $\Gamma dn$ are even}
  \right\}\,.
\]
\begin{definition}
  \label{d:basic-mt-type}
  Let $\sigma^1,\ldots,\sigma^k \in \Sigma_n$ and $G$ be a $d$-regular graph on $[n]$.
  We say $(\sigma^1,\ldots,\sigma^k,G)$ has \textbf{type} $({\typeProportions},\Gamma)$, which we denote $(\sigma^1,\ldots,\sigma^k,G) \in {\mathcal{T}}_n({\typeProportions},\Gamma)$, if the following holds.
  \begin{enumerate}[label=(\roman*)]
    \item \label{i:basic-mt-type-spins} For each $\omega\in \Sigma_k$, there are exactly $\alpha_\omega n$ indices $i$ for which $(\sigma^1_i,\ldots,\sigma^k_i) = \omega$.
    Let $S(\omega)$ be the set of such $i$.
    \item \label{i:basic-mt-type-edges} For all distinct $\omega,\rho \in \Sigma_k$, $G$ has exactly $\gamma_{\omega,\rho} dn$ edges between $S(\omega)$ and $S(\rho)$.
    Consequently, for all $\omega \in \Sigma_k$, $G$ has $\gamma_{\omega,\omega} dn / 2$ edges with both endpoints in $S(\omega)$.
  \end{enumerate}
\end{definition}
It is simple to check that the set of $({\typeProportions},\Gamma)$ for which ${\mathcal{T}}_n({\typeProportions},\Gamma)$ is nonempty is precisely ${\mathcal{L}}_{k,n}$.
\begin{lemma}
  \label{l:kmt-exp-rate}
  For any $({\typeProportions},\Gamma) \in {\mathcal{L}}_{k,n}$ and $\beta \in \R$, the contribution to $\E[Z_{G}(\beta)^k]$ from $(\sigma^1,\ldots,\sigma^k,G)$ of type $({\typeProportions},\Gamma)$ is
  \begin{equation}\label{e:kmt-exp-rate}
    \sum_{\sigma^1,\ldots,\sigma^k \in \Sigma_n}
    \E\left[
      \allone\{(\sigma^1,\ldots,\sigma^k,G) \in {\mathcal{T}}_n({\typeProportions},\Gamma)\}
      e^{\beta \sum_{i=1}^k \isingHamiltonian_{G}(\sigma^i)}
    \right]
    = \exp\left(
      n \Phi_k^\beta({\typeProportions},\Gamma) + o(n)
    \right)\,.
  \end{equation}
\begin{proof}
  We claim the left-hand side of \eqref{e:kmt-exp-rate} is exactly
  \begin{align}\nonumber
    \binom{n}{{\typeProportions} n}
    &\times
    \frac{1}{(nd-1)!!}
    \times
    \prod_{\omega \in \Sigma_k}
    \binom{\alpha_\omega dn}{(\gamma_{\omega,\rho} dn : \rho \in \Sigma_k)}
    \times
    \prod_{(\omega,\rho) \in \binom{\Sigma_k}{2}} 
    (\gamma_{\omega,\rho} dn)! \\
    \label{e:kmt-exact}
    &\times
    \prod_{\omega \in \Sigma_k}
    (\gamma_{\omega,\omega} dn - 1)!!
    \times
    \prod_{(\omega,\rho) \in \binom{\Sigma_k}{2}} 
    e^{\beta \Delta(\omega,\rho) \cdot \gamma_{\omega,\rho} dn}
    \,.
  \end{align}
  This is proved by a similar counting argument as that of~\cref{l:1mt-exp-rate}.
  The number of $(\sigma^1,\ldots,\sigma^k)$ satisfying item \cref{i:basic-mt-type-spins} of~\cref{d:basic-mt-type} is precisely $\binom{n}{{\typeProportions} n}$.
  Then, out of the $(nd-1)!!$ possible realizations of $G$ from the configuration model, each given by a perfect matching of $nd$ half-edges, the number satisfying item \cref{i:basic-mt-type-edges} can be counted as follows.
  For each $\omega \in \Sigma_k$, out of the $\alpha_\omega dn$ half-edges emanating from $S(\omega)$, we choose $\gamma_{\omega,\rho} dn$ to match to half-edges emanating from $S(\rho)$ for each $\rho \in \Sigma_k$.
  This can be done in
  \[
    \binom{\alpha_\omega dn}{(\gamma_{\omega,\rho} dn : \rho \in \Sigma_k)}
  \]
  ways.
  Then, for distinct $\omega,\rho \in \Sigma_k$, the edges between $S(\omega)$ and $S(\rho)$ can be formed in $(\gamma_{\omega,\rho} dn)!$ ways, and the edges within $S(\omega)$ can be formed in $(\gamma_{\omega,\omega} dn - 1)!!$ ways.
  Finally the total energy of these configurations is
  \[
    \sum_{i=1}^k \isingHamiltonian_{G}(\sigma^i)
    = \sum_{(\omega,\rho) \in \binom{\Sigma_k}{2}} \Delta(\omega,\rho) \cdot \gamma_{\omega,\rho} dn\,.
  \]
  This proves \eqref{e:kmt-exact}.
  The conclusion \eqref{e:kmt-exp-rate} then follows by Stirling's formula.
\end{proof}
\end{lemma}
Let ${\underline{1}} \in \R^{\Sigma_k}$ be the all-$1$ vector.
We introduce a few notations for the following lemmas:
\begin{align*}
  {\mathcal{Q}}_k^+ &= \{\Gamma \in [0,+\infty)^{\Sigma_k\times \Sigma_k}\,\text{symmetric}\}\,, \\
  {\mathcal{P}}_k^\circ &= \{{\typeProportions} \in {\mathcal{P}}_k : \min_\omega \alpha_\omega > 0\}\,, \\
  {\mathcal{U}}_k &= \{{\underline{u}} \in \R^{\Sigma_k} : ({\underline{u}},{\underline{1}}) = 0\,\,\text{and}\,\,\|u\| = 1\}\,.
\end{align*}
\begin{lemma}\label{l:kmt-lagrange}
  For any $\Gamma \in {\mathcal{Q}}_k^+$, $\beta \in \R$ define
  \[
    \Lambda^\beta(\Gamma)
    = - \frac12 \sum_{\omega,\rho \in \Sigma_k} \gamma_{\omega,\rho} \log \gamma_{\omega,\rho}
    + \beta \sum_{\omega,\rho \in \binom{\Sigma_k}{2}} \gamma_{\omega,\rho} \Delta(\omega,\rho)\,.
  \]
  Then, for any ${\typeProportions} \in {\mathcal{P}}_k^\circ$, $\beta \in \R$,
  \begin{equation}\label{e:kmt-lagrange}
    \sup_{\Gamma \in {\mathcal{Q}}_k({\typeProportions})} \Lambda^\beta(\Gamma)
    = \inf_{{\underline{\lambda}} \in \R^{\Sigma_k}}
    \sup_{\Gamma \in {\mathcal{Q}}_k^+} \Lambda^\beta(\Gamma) 
    + \sum_{\omega \in \Sigma_k} \lambda_\omega \left(\sum_{\rho \in \Sigma_k} \gamma_{\omega,\rho} - \alpha_\omega \right)\,.
  \end{equation}
  The inf-sup on the right-hand side is attained at a unique stationary point $({\underline{\lambda}},\Gamma) = ({\underline{\lambda}}^\beta({\typeProportions}), \Gamma^\beta({\typeProportions}))$, which satisfies
  \begin{equation}\label{e:kmt-stationarity-lambda}
    \gamma_{\omega,\rho} = e^{-1 + \lambda_\omega + \lambda_\rho + \beta \Delta(\omega,\rho)} \qquad\qquad
    \text{for all $\omega,\rho \in \Sigma_k$}\,,
  \end{equation}
  and 
  \begin{equation}
    \label{e:kmt-stationarity-rows}
    \sum_{\rho \in \Sigma_k} \gamma_{\omega,\rho} = \alpha_\omega \qquad\qquad
    \text{for all $\omega \in \Sigma_k$}\,.
  \end{equation}
  Finally, ${\underline{\lambda}}^\beta$, $\Gamma^\beta$ are continuously differentiable functions of ${\typeProportions} \in {\mathcal{P}}_k^\circ$.
\begin{proof}
  The function $\Lambda^\beta$ is strictly concave on ${\mathcal{Q}}_k^+$, with the convention $0\log 0 = 0$.
  Since ${\typeProportions} \in {\mathcal{P}}_k^\circ$, the affine constraint $\Gamma \in {\mathcal{Q}}_k({\typeProportions})$ has a strictly feasible point, for instance $\gamma_{\omega,\rho} = \alpha_\omega \alpha_\rho$.
  Hence the Karush--Kuhn--Tucker theorem (equivalently, strong Lagrange duality for a concave maximization problem with affine constraints and Slater's condition; see \cite[Theorem~28.2]{Roc70}) gives \eqref{e:kmt-lagrange}, and the primal optimizer is unique by strict concavity.

  We next identify the saddle point, prove its uniqueness, and prove its regularity in ${\typeProportions}$.
  For fixed ${\underline{\lambda}}$, the supremum over $\Gamma \in {\mathcal{Q}}_k^+$ in \eqref{e:kmt-lagrange} is attained at the unique point
  \[
    \gamma_{\omega,\rho}({\underline{\lambda}})
    = e^{-1 + \lambda_\omega + \lambda_\rho + \beta \Delta(\omega,\rho)}\,,
  \]
  which is strictly positive.
  Thus the dual problem is the minimization of the smooth function
  \[
    \frac12 \sum_{\omega,\rho \in \Sigma_k}
    e^{-1 + \lambda_\omega + \lambda_\rho + \beta \Delta(\omega,\rho)}
    - \sum_{\omega \in \Sigma_k} \lambda_\omega \alpha_\omega
  \]
  over ${\underline{\lambda}} \in \R^{\Sigma_k}$.
  Its Hessian at ${\underline{\lambda}}$ is
  \[
    \left(
      \allone\{\omega=\rho\}\sum_{\tau \in \Sigma_k}\gamma_{\omega,\tau}({\underline{\lambda}})
      + \gamma_{\omega,\rho}({\underline{\lambda}})
    \right)_{\omega,\rho \in \Sigma_k}\,,
  \]
  and for any nonzero ${\underline{v}} \in \R^{\Sigma_k}$ the associated quadratic form is
  \[
    \frac12 \sum_{\omega,\rho \in \Sigma_k}
    \gamma_{\omega,\rho}({\underline{\lambda}}) (v_\omega + v_\rho)^2 > 0\,.
  \]
  Therefore the dual objective is strictly convex, so the minimizing ${\underline{\lambda}}^\beta({\typeProportions})$ is unique.
  Its first-order conditions are precisely the row-sum equations \eqref{e:kmt-stationarity-rows}, and the corresponding unique $\Gamma^\beta({\typeProportions})$ is given by \eqref{e:kmt-stationarity-lambda}.
  Since the Hessian above is nonsingular at the minimizer, the implicit function theorem applied to these first-order conditions implies that ${\underline{\lambda}}^\beta({\typeProportions})$, and hence $\Gamma^\beta({\typeProportions})$, depend continuously differentiably on ${\typeProportions} \in {\mathcal{P}}_k^\circ$.
\end{proof}
\end{lemma}
\begin{lemma}\label{l:kmt-lambda-deriv}
  Suppose ${\typeProportions} \in {\mathcal{P}}_k^\circ$ and ${\underline{u}} \in {\mathcal{U}}_k$.
  Let $D({\typeProportions}) \in \R^{\Sigma_k\times \Sigma_k}$ be the diagonal matrix with entries $D({\typeProportions})_{\omega,\omega} = \alpha_\omega$ and
  \[
    Q^\beta({\typeProportions}) = D({\typeProportions}) + \Gamma^\beta({\typeProportions})\,.
  \]
  Then for all $\beta \in \R$, $Q^\beta({\typeProportions}) \succ 0$.
  In particular $Q^\beta({\typeProportions})$ is invertible and
  \[
    \frac{\differential}{\differential s} {\underline{\lambda}}^\beta({\typeProportions} + s{\underline{u}}) \Big|_{s=0} = Q^\beta({\typeProportions})^{-1} {\underline{u}}\,.
  \]
\begin{proof}
  Note that $D({\typeProportions}) - \Gamma^\beta({\typeProportions}) \succeq 0$ because for any ${\underline{v}} \in \R^{\Sigma_k}$,
  \[
    \langle {\underline{v}}, (D({\typeProportions}) - \Gamma^\beta({\typeProportions})) {\underline{v}}\rangle
    = \sum_{(\omega,\rho) \in \binom{\Sigma_k}{2}} \gamma_{\omega,\rho}^\beta({\typeProportions}) (v_\omega - v_\rho)^2 \ge 0\,.
  \]
  Let ${\widetilde Q}^\beta({\typeProportions})$ be the matrix whose entries are the absolute values of those of $D({\typeProportions}) - \Gamma^\beta({\typeProportions})$. 
  That is for distinct $\omega,\rho \in \Sigma_k$,
  \begin{align*}
    {\widetilde Q}^\beta_{\omega,\rho}({\typeProportions}) &= \gamma_{\omega,\rho}^\beta({\typeProportions}), &
    {\widetilde Q}^\beta_{\omega,\omega}({\typeProportions}) &= \alpha_\omega - \gamma_{\omega,\omega}^\beta({\typeProportions})\,.
  \end{align*}
  Then, \cite[Lemma 2.6]{HS26} implies ${\widetilde Q}^\beta({\typeProportions}) \succeq 0$.
  ~\cref{l:kmt-lagrange} implies each $\gamma_{\omega,\omega}^\beta({\typeProportions})$ is strictly positive, so
  \[
    Q^\beta_{\omega,\omega}({\typeProportions}) = \alpha_\omega + \gamma_{\omega,\omega}^\beta({\typeProportions}) > {\widetilde Q}^\beta_{\omega,\omega}({\typeProportions})\,.
  \]
  Thus $Q^\beta({\typeProportions}) \succ {\widetilde Q}^\beta({\typeProportions}) \succeq 0$, proving the first conclusion.
  We will prove the second conclusion using the implicit function theorem.
  Let ${\typeProportions}(s) = {\typeProportions} + s{\underline{u}}$ and
  \[
    {\underline{w}} = ({\underline{\lambda}}^\beta \circ {\typeProportions})'(0)
    = \frac{\differential}{\differential s} {\underline{\lambda}}^\beta({\typeProportions}(s)) \big|_{s=0}\,.
  \]
  Equations \eqref{e:kmt-stationarity-lambda}--\eqref{e:kmt-stationarity-rows}, applied to ${\typeProportions}(s)$, imply
  \[
    \alpha_\omega(s)
    = \sum_{\rho \in \Sigma_k} e^{-1 + \lambda^\beta_\omega({\typeProportions}(s)) + \lambda^\beta_\rho({\typeProportions}(s)) + \beta \Delta(\omega,\rho)}
  \]
  for all $\omega \in \Sigma_k$.
  Differentiating in $s$ and evaluating at $s=0$ yields
  \begin{align*}
    u_\omega
    &= \sum_{\rho \in \Sigma_k} e^{-1 + \lambda^\beta_\omega({\typeProportions}(0)) + \lambda^\beta_\rho({\typeProportions}(0)) + \beta \Delta(\omega,\rho)} ((\lambda^\beta_\omega \circ {\typeProportions})'(0) + (\lambda^\beta_\rho \circ {\typeProportions})'(0)) \\
    &= \sum_{\rho \in \Sigma_k} \gamma_{\omega,\rho}^\beta({\typeProportions}) (w_\omega + w_\rho)
    = \alpha_\omega w_\omega + \sum_{\rho \in \Sigma_k} \gamma_{\omega,\rho}^\beta({\typeProportions}) w_\rho
    = \left[Q^\beta({\typeProportions}) {\underline{w}}\right]_\omega\,.
  \end{align*}
  Thus ${\underline{u}} = Q^\beta({\typeProportions}) {\underline{w}}$, and by invertibility of $Q^\beta({\typeProportions})$ the result follows.
\end{proof}
\end{lemma}
\begin{lemma}\label{l:kmt-gradient-hessian}
  Suppose ${\typeProportions} \in {\mathcal{P}}_k^\circ$ and ${\underline{u}} \in {\mathcal{U}}_k$.
  Let
  \[
    \Phi_k^\beta({\typeProportions})
    = \sup_{\Gamma \in {\mathcal{Q}}_k({\typeProportions})} \Phi_k^\beta({\typeProportions},\Gamma)
    = - (d-1) {\mathcal{H}}({\typeProportions}) + d \sup_{\Gamma \in {\mathcal{Q}}_k({\typeProportions})} \Lambda^\beta(\Gamma)\,.
  \]
  Then,
  \begin{equation}\label{e:kmt-gradient}
    \frac{\differential}{\differential s} \Phi_k^\beta({\typeProportions} + s{\underline{u}}) \Big|_{s=0}
    = (d-1) \sum_{\omega \in \Sigma_k} u_\omega \log \alpha_\omega
    - d \sum_{\omega \in \Sigma_k} u_\omega \lambda_\omega({\typeProportions})
  \end{equation}
  and
  \begin{equation}\label{e:kmt-hessian}
    \frac{\differential^2}{\differential^2 s} \Phi_k^\beta({\typeProportions} + s{\underline{u}}) \Big|_{s=0}
    = \langle {\underline{u}}, M^\beta({\typeProportions}) {\underline{u}}\rangle \quad \text{where} \quad
    M^\beta({\typeProportions}) = (d-1) D({\typeProportions})^{-1} - d Q^\beta({\typeProportions})^{-1}\,.
  \end{equation}
\begin{proof}
  Using the representation \eqref{e:kmt-lagrange},
  \[
    \Phi_k^\beta({\typeProportions})
    = - (d-1) {\mathcal{H}}({\typeProportions}) 
    + d \Lambda^\beta(\Gamma^\beta({\typeProportions})) 
    + d \sum_{\omega \in \Sigma_k} \lambda^\beta_\omega({\typeProportions}) \left(\sum_{\rho \in \Sigma_k} \gamma^\beta_{\omega,\rho}({\typeProportions}) - \alpha_\omega \right)\,.
  \]
  Let ${\typeProportions}(s) = {\typeProportions} + s{\underline{u}}$ and $f(s) = \Phi_k^\beta({\typeProportions}(s))$.
  Differentiating in $s$ yields
  \begin{align*}
    f'(s)
    &= (d-1) \sum_{\omega \in \Sigma_k} u_\omega (1 + \log \alpha_\omega(s))
    - d \sum_{\omega \in \Sigma_k} u_\omega \lambda_\omega({\typeProportions}(s)) \\
    &= (d-1) \sum_{\omega \in \Sigma_k} u_\omega \log \alpha_\omega(s)
    - d \sum_{\omega \in \Sigma_k} u_\omega \lambda_\omega({\typeProportions}(s))\,.
  \end{align*}
  Note that the derivatives through $\lambda^\beta_\omega({\typeProportions}(s))$ and $\gamma^\beta_{\omega,\rho}({\typeProportions}(s))$ vanish by stationarity of $({\underline{\lambda}}^\beta({\typeProportions}), \Gamma^\beta({\typeProportions}))$.
  Setting $s=0$ in the last display proves \eqref{e:kmt-gradient}.
  Differentiating again using ~\cref{l:kmt-lambda-deriv} and evaluating at $s=0$ shows
  \[
    f''(0)
    = (d-1) \sum_{\omega \in \Sigma_k} \frac{(u_\omega)^2}{\alpha_\omega}
    - d \sum_{\omega \in \Sigma_k} u_\omega \left[Q^\beta({\typeProportions})^{-1} {\underline{u}}\right]_\omega
    = \langle {\underline{u}}, M^\beta({\typeProportions}) {\underline{u}}\rangle\,,
  \]
  proving \eqref{e:kmt-hessian}.
\end{proof}
\end{lemma}
Henceforth, let ${\typeProportions}_* = 2^{-k} {\underline{1}} \in {\mathcal{P}}_k^\circ$.
\begin{lemma}\label{l:kmt-stationarity}
  For all $k\ge 1$, $\beta \in \R$, and ${\underline{u}} \in {\mathcal{U}}_k$, 
  \[
    \frac{\differential}{\differential s} \Phi_k^\beta({\typeProportions}_* + s{\underline{u}}) \Big|_{s=0} = 0\,.
  \]
\begin{proof}
  It is easy to check that both ${\typeProportions}_*$ and ${\underline{\lambda}}^\beta({\typeProportions}_*)$ are multiples of ${\underline{1}}$.
  The result is immediate from \eqref{e:kmt-gradient}.
\end{proof}
\end{lemma}
\begin{lemma}\label{l:kmt-concavity}
  For each $\delta > 0$, $k\ge 1$ there exists $\iota = \iota(d,\delta,k) > 0$ such that for all $\beta \in [2\beta_\lowerBoundTag,\betaKS - \delta]$,
  \[
    M^\beta({\typeProportions}_*) - d {\underline{1}} {\underline{1}}^\matrixtranspose \preceq - \iota I\,.
  \]
\begin{proof}
  By \eqref{e:kmt-stationarity-lambda}--\eqref{e:kmt-stationarity-rows}, $\Gamma^\beta({\typeProportions}_*)$ has entries
  \[
    \gamma_{\omega,\rho}^\beta({\typeProportions}_*)
    = \frac{e^{\beta \Delta(\omega,\rho)}}{2^k (1+e^\beta)^k}\,.
  \]
  If we regard $\R^{\Sigma_k}$ as the tensor product $(\R^2)^{\otimes k}$, then $\Gamma^\beta({\typeProportions}_*)$ also tensorizes as
  \[
    \Gamma^\beta({\typeProportions}_*)
    = \frac{1}{2^k} \Xi^{\otimes k}\,, \qquad \text{where} \qquad
    \Xi = \frac{1}{1+e^\beta}
    \begin{bmatrix}
      1 & e^\beta \\
      e^\beta & 1
    \end{bmatrix}\,.
  \]
  Let $\hat{{\underline{1}}} = {\underline{1}} / \|{\underline{1}}\| = {\underline{1}} / 2^{k/2}$.
  Since $D({\typeProportions}_*) = \frac{1}{2^k} I$, a simple calculation shows the matrix we are interested in is
  \begin{equation}\label{e:kmt-concavity-tilde-M}
    M^\beta({\typeProportions}_*) - d {\underline{1}} {\underline{1}}^\matrixtranspose
    = 2^k \left[
      (d-1) I - d \left(
        I + \Xi^{\otimes k}
      \right)^{-1} - d\hat{{\underline{1}}} \hat{{\underline{1}}}^\matrixtranspose
    \right]
    \equiv 2^k {\widetilde M}\,.
  \end{equation}
  Note that $\Xi$ has eigenvectors $(1,1)$ and $(1,-1)$, with eigenvalues $1$ and 
  \[
    \frac{1-e^\beta}{1+e^\beta} = - \tanh(\beta/2)\,.
  \]
  Thus the distinct eigenvalues of $\Xi^{\otimes k}$ are of the form 
  \[
    a_j = (-\tanh(\beta/2))^j
  \]
  for $0\le j\le k$.
  The eigenvalue $a_0 = 1$ has multiplicity one, with corresponding eigenvector $\hat{{\underline{1}}}$.
  So, if ${\underline{u}} = \hat{{\underline{1}}}$, then
  \[
    \langle {\widetilde M} {\underline{u}}, {\underline{u}}\rangle 
    = d-1 - d (1+1)^{-1} - d = -\frac{d}{2} - 1\,.
  \]
  We will next argue that there exists $\iota_1 = \iota_1(d,\delta) > 0$ such that $\beta \in [2\beta_\lowerBoundTag,\betaKS - \delta]$, the remaining eigenvalues satisfy
  \begin{equation}\label{e:kmt-concavity-eigenvalues}
    \max_{1\le j\le k} a_j \le \frac{1}{d-1} - \iota_1\,.
  \end{equation}
  Indeed, if $\beta \in [0,\betaKS - \delta]$ and $k\ge 2$, then
  \[
    \max_{1\le j\le k} a_j = a_2 = \tanh^2(\beta/2)
    \le \tanh^2(\betaKS/2) - \iota_1
    = \frac{1}{d-1} - \iota_1\,,
  \]
  while if $\beta \in [2\beta_\lowerBoundTag,0]$ or $k=1$, then
  \[
    \max_{1\le j\le k} a_j = a_1 = -\tanh(\beta/2)
    \le \tanh(|\beta_\lowerBoundTag|) \le |\beta_\lowerBoundTag| = d^{-10} \le \frac{1}{d-1} - \iota_1\,.
  \]
  This proves \eqref{e:kmt-concavity-eigenvalues}. Finally, if ${\underline{u}}$ is a (unit) eigenvector corresponding to $a_j$, then for suitable $\iota_2 = \iota_2(d,\delta) > 0$,
  \[
    \langle {\widetilde M} {\underline{u}}, {\underline{u}}\rangle
    = d-1 - d(1 + a_j)^{-1}
    \le d-1 - d\left(1 + \frac{1}{d-1} - \iota_1\right)^{-1}
    \le -\iota_2\,.
  \]
  Together with \eqref{e:kmt-concavity-tilde-M} this implies the conclusion.
\end{proof}
\end{lemma}

\begin{lemma}\label{l:kmt-lambda-min-Q}
  For all $k\ge 1$ and all $\beta \in [2\beta_\lowerBoundTag,\betaKS]$, we have
  \[
    \lambda_{\min}(Q^\beta({\typeProportions}_*)) \ge 2^{-(k+2)} \equiv \lambda_{\lowerBoundTag,k}\,.
  \]
\begin{proof}
  As calculated in the proof of ~\cref{l:kmt-concavity}, for $\Xi$ defined therein,
  \[
    Q^\beta({\typeProportions}_*) = 2^{-k} \left(I + \Xi^{\otimes k}\right)
  \]
  Since $\Xi^{\otimes k}$ has eigenvalues $(-\tanh(\beta/2))^j$ for $0\le j\le k$, we have
  \[
    \lambda_{\min}(Q^\beta({\typeProportions}_*)) 
    \ge 2^{-k} (1 - \tanh(\betaKS/2)) 
    = 2^{-k} \left(1 - \frac{1}{\sqrt{d-1}} \right)
    \ge 2^{-(k+2)}\,,
  \]
  where in the last inequality we used that $d\ge 3$.
\end{proof}
\end{lemma}

\begin{lemma}\label{l:kmt-Q-lipschitz}
  For each $k\ge 1$, there exists $s_\upperBoundTag = s_\upperBoundTag(d,k) > 0$ such that for all $\beta \in [2\beta_\lowerBoundTag,\betaKS]$, the following holds.
  For all ${\typeProportions} \in {\mathcal{P}}_k^\circ$ such that $\|{\typeProportions} - {\typeProportions}_*\| \le s_\upperBoundTag$,
  \[
    \|Q^\beta({\typeProportions}) - Q^\beta({\typeProportions}_*)\|_{\mathsf{op}} \le \frac{\|{\typeProportions} - {\typeProportions}_*\|}{s_\upperBoundTag}\,.
  \]
\begin{proof}
  Any such ${\typeProportions}$ can be expressed in the form ${\typeProportions}(s) = {\typeProportions}_* + s{\underline{u}}$ for some ${\underline{u}} \in {\mathcal{U}}_k$.
  So it suffices to prove that for all ${\underline{u}} \in {\mathcal{U}}_k$ and $s \in [0,s_\upperBoundTag]$,
  \[
    \|Q^\beta({\typeProportions}(s)) - Q^\beta({\typeProportions}(0))\|_{\mathsf{op}} \le s / s_\upperBoundTag\,.
  \]
  Write
  \[
    L(s) = \|Q^\beta({\typeProportions}(s)) - Q^\beta({\typeProportions}(0))\|_{\mathsf{op}}\,.
  \]
  We estimate
  \begin{align}\nonumber
    L(s,{\underline{u}})
    &\le \|D({\typeProportions}(s)) - D({\typeProportions}(0))\|_{\mathsf{op}}
    + \|\Gamma^\beta({\typeProportions}(s)) - \Gamma^\beta({\typeProportions}(0))\|_{\mathsf{op}} \\
    \label{e:kmt-L-bd1}
    &\le s + \int_0^s \|(\Gamma^\beta \circ {\typeProportions})'(r)\|_{\mathsf{op}}\,\differential r\,.
  \end{align}
  Since the entries of $\Gamma^\beta$ have the form \eqref{e:kmt-stationarity-lambda}, we have
  \[
    (\Gamma^\beta \circ {\typeProportions})'(r)
    = D(({\underline{\lambda}}^\beta \circ {\typeProportions})'(r)) \Gamma^\beta({\typeProportions}(r))
    + \Gamma^\beta({\typeProportions}(r)) D(({\underline{\lambda}}^\beta \circ {\typeProportions})'(r))\,,
  \]
  where $D(({\underline{\lambda}}^\beta \circ {\typeProportions})'(r))$ is the diagonal matrix with $(\omega,\omega)$ entry $({\underline{\lambda}}^\beta_\omega \circ {\typeProportions})'(r)$.
  Since $\Gamma^\beta({\typeProportions}(r))$ is symmetric, nonnegative, and has row sums bounded by $1$, we have $\|\Gamma^\beta({\typeProportions}(r))\|_{\mathsf{op}} = 1$.
  Thus
  \begin{align*}
    \|(\Gamma^\beta \circ {\typeProportions})'(r)\|_{\mathsf{op}}
    &\le 2\|D(({\underline{\lambda}}^\beta \circ {\typeProportions})'(r))\|_{\mathsf{op}}
    = 2\|D(Q^\beta({\typeProportions}(r))^{-1} {\underline{u}})\|_{\mathsf{op}} \\
    &\le 2 \|Q^\beta({\typeProportions}(r))^{-1} {\underline{u}}\|
    \le 2\|Q^\beta({\typeProportions}(r))^{-1}\|_{{\mathsf{op}}}\,.
  \end{align*}
  Plugging back into \eqref{e:kmt-L-bd1} yields
  \begin{equation}\label{e:kmt-L-bd2}
    L(s,{\underline{u}}) \le s + 2\int_0^s \|Q^\beta({\typeProportions}(r))^{-1}\|_{{\mathsf{op}}} \,\differential r\,.
  \end{equation}
  Let $\lambda_\lowerBoundTag = \lambda_\lowerBoundTag(k)$ be defined in ~\cref{l:kmt-lambda-min-Q} and
  \[
    s_+({\underline{u}}) = \inf\left\{
      r \ge 0 : L(r,{\underline{u}}) \ge \frac12 \lambda_\lowerBoundTag
    \right\}\,,
  \]
  so that $L(r,{\underline{u}}) < \frac12 \lambda_\lowerBoundTag$ for all $r < s_+({\underline{u}})$.
  For such $r$, ~\cref{l:kmt-lambda-min-Q} implies
  \[
    \lambda_{\min}(Q^\beta({\typeProportions}(r)))
    \ge \lambda_{\min}(Q^\beta({\typeProportions}(0))) - L(r,{\underline{u}})
    \ge \lambda_\lowerBoundTag / 2\,.
  \]
  Plugging into \eqref{e:kmt-L-bd2} shows that for all $s\le s_+({\underline{u}})$,
  \begin{equation}\label{e:kmt-L-bd3}
    L(s,{\underline{u}})
    \le s + 2\int_0^s (\lambda_\lowerBoundTag / 2)^{-1} \,\differential r
    = \frac{\lambda_\lowerBoundTag + 4}{\lambda_\lowerBoundTag} \cdot s\,.
  \end{equation}
  On the other hand, $s \mapsto L(s,{\underline{u}})$ is continuous.
  We conclude that
  \[
    \frac12 \lambda_\lowerBoundTag \le L(s_+({\underline{u}}),{\underline{u}}) \le \frac{\lambda_\lowerBoundTag + 4}{\lambda_\lowerBoundTag} \cdot s_+({\underline{u}})\,.
  \]
  Therefore
  \[
    s_+({\underline{u}}) \ge \frac12 \cdot \frac{(\lambda_\lowerBoundTag)^2}{\lambda_\lowerBoundTag + 4}\,.
  \]
  Let us now take
  \[
    s_\upperBoundTag = \min\left(
      \frac12 \cdot \frac{(\lambda_\lowerBoundTag)^2}{\lambda_\lowerBoundTag + 4}, \frac{\lambda_\lowerBoundTag}{\lambda_\lowerBoundTag + 4}
    \right)\,.
  \]
  Then $s_+({\underline{u}}) \ge s_\upperBoundTag$.
  So, for all $s\le s_\upperBoundTag$, \eqref{e:kmt-L-bd3} (which applies because $s\le s_+({\underline{u}})$) implies $L(s,{\underline{u}}) \le s / s_\upperBoundTag$.
  As this choice of $s_\upperBoundTag$ works for all ${\underline{u}} \in {\mathcal{U}}_k$, the result follows.
\end{proof}
\end{lemma}
\begin{proposition}\label{p:kmt-concavity-in-region}
  For each $\delta > 0$, $k\ge 1$ there exists $\iota = \iota(d,\delta,k) > 0$ and $s_0 = s_0(d,\delta,k) > 0$ such that for all $\beta \in [2\beta_\lowerBoundTag,\betaKS - \delta]$, the following holds.
  For all ${\typeProportions} \in {\mathcal{P}}_k^\circ$ with $\|{\typeProportions} - {\typeProportions}_*\| \le s_0$,
  \[
    M^\beta({\typeProportions}) - d {\underline{1}} {\underline{1}}^\matrixtranspose \preceq - \iota I\,.
  \]
\begin{proof}
  Let $\iota^{\cref{l:kmt-concavity}}(d,\delta,k)$ be given by ~\cref{l:kmt-concavity}, and set $\iota = \iota^{\cref{l:kmt-concavity}}(d,\delta,k) / 2$.
  Let $\lambda_\lowerBoundTag = \lambda_{\lowerBoundTag,k}$ be given by ~\cref{l:kmt-lambda-min-Q} and $s_\upperBoundTag = s_\upperBoundTag(d,k)$ be given by ~\cref{l:kmt-Q-lipschitz}.
  Then set
  \[
    s_0 = \min\left(
      2^{-(k+1)}, 
      \frac{\iota}{2^{2k+2}(d-1)},
      \frac12 \lambda_\lowerBoundTag s_\upperBoundTag, 
      \frac{\iota (\lambda_\lowerBoundTag)^2 s_\upperBoundTag}{4d}
    \right)\,,
  \]
  ~\cref{l:kmt-concavity} ensures
  \[
    M^\beta({\typeProportions}_*) - d {\underline{1}} {\underline{1}}^\matrixtranspose \preceq - 2\iota I\,.
  \]
  We will show that for all ${\typeProportions} \in {\mathcal{P}}_k^\circ$ with $\|{\typeProportions} - {\typeProportions}_*\| \le s_0$,
  \[
    \|M^\beta({\typeProportions}) - M^\beta({\typeProportions}_*)\|_{\mathsf{op}} \le \iota\,,
  \]
  which clearly implies the result.
  We estimate
  \[
    \|M^\beta({\typeProportions}) - M^\beta({\typeProportions}_*)\|_{\mathsf{op}}
    \le (d-1) \|D({\typeProportions})^{-1} - D({\typeProportions}_*)^{-1}\|_{\mathsf{op}} 
    + d \|Q^\beta({\typeProportions})^{-1} - Q^\beta({\typeProportions}_*)^{-1}\|_{\mathsf{op}}\,.
  \]
  Let $s = \|{\typeProportions} - {\typeProportions}_*\|$. Since $s\le s_0$,
  \begin{align*}
    (d-1) \|D({\typeProportions})^{-1} - D({\typeProportions}_*)^{-1}\|_{\mathsf{op}} 
    &\le (d-1) \left[\frac{1}{2^{-k} - s} - \frac{1}{2^{-k}}\right] \\
    &\le \frac{(d-1) s}{(2^{-k} - s) 2^{-k}}
    \le \frac{(d-1) s}{2^{-(k+1)} \cdot 2^{-k}}
    \le \iota/2\,.
  \end{align*}
  Also, using Lemmas~\cref{l:kmt-lambda-min-Q} and \cref{l:kmt-Q-lipschitz},
  \begin{align*}
    d \|Q^\beta({\typeProportions})^{-1} - Q^\beta({\typeProportions}_*)^{-1}\|_{\mathsf{op}}
    &\le d \|Q^\beta({\typeProportions})^{-1}\|_{\mathsf{op}} \|Q^\beta({\typeProportions}) - Q^\beta({\typeProportions}_*)\|_{\mathsf{op}} \|Q^\beta({\typeProportions}_*)^{-1}\|_{\mathsf{op}} \\
    &\le d \left(\lambda_\lowerBoundTag - \frac{s}{s_\upperBoundTag}\right)^{-1} (s/s_\upperBoundTag) (\lambda_\lowerBoundTag)^{-1} \\
    &\le d (\lambda_\lowerBoundTag/2)^{-1} (s/s_\upperBoundTag) (\lambda_\lowerBoundTag)^{-1}
    \le \iota/2\,. \qedhere
  \end{align*}
\end{proof}
\end{proposition}
For the proof of ~\cref{l:kmt-k-3} below we need the following fact, which is proved by direct algebra.
\begin{fact}\label{f:quadratic-system}
  For $A,B,u,v > 0$, consider the system
  \begin{align*}
    Ax^2 + Bxy &= u\,, & Ay^2 + Bxy &= v\,.
  \end{align*}
  This has a unique nonnegative solution $(x,y)$, which is furthermore positive and satisfies
  \[
    \frac{x}{y} = \frac{\vartheta t + \sqrt{1 + (\vartheta^2 - 1)t^2}}{1-t}\,,
  \]
  where $\vartheta = B/A$ and $t = \frac{u-v}{u+v}$.
\begin{proof}
  If $A=B$, dividing the given equations gives
  \[
    \frac{x}{y} = \frac{u}{v} = \frac{1+t}{1-t} = \frac{\vartheta t + \sqrt{1 + (\vartheta^2 - 1)t^2}}{1-t}\,,
  \]
  as desired. Otherwise assume $A\neq B$.
  Write $s = x+y$, $r = x-y$. Adding and subtracting the given equations yields
  \begin{align*}
    Asr &= u-v\,, & (A+B) s^2 + (A-B) r^2 = 2(u+v)\,.
  \end{align*}
  The first equation gives $r = \frac{u-v}{As}$, and plugging into the second gives the quadratic equation in $s^2$
  \[
    A(A+B) s^4 - 2A(u+v) s^2 + \frac{A-B}{A} (u-v)^2 = 0\,.
  \]
  This has solution
  \begin{align*}
    s^2 &= \frac{A(u+v) \pm \sqrt{A^2 (u+v)^2 + (B^2-A^2) (u-v)^2}}{A(A+B)}\,, \\
    r^2 = \frac{(u-v)^2}{A^2s^2} &= 
    \frac{A(u+v) \mp \sqrt{A^2 (u+v)^2 + (B^2-A^2) (u-v)^2}}{A(A-B)}\,.
  \end{align*}
  Note that the assumption $A,B,u,v > 0$ ensures this solution is real.
  At any nonnegative solution $(x,y)$ we must have $|s| \ge |r|$.
  Thus the choice of signs $(\pm,\mp)$ above must be $(+,-)$.
  Since $A,B,u,v > 0$, any nonnegative solution $(x,y)$ must in fact be positive.
  Then,
  \begin{align*}
    \frac{x}{y} &= \frac{s+r}{s-r} 
    = \frac{s + \frac{u-v}{As}}{s - \frac{u-v}{As}}
    = \frac{As^2 + (u-v)}{As^2 - (u-v)} \\
    &= \frac{
      A(u+v) + \sqrt{A^2 (u+v)^2 + (B^2-A^2) (u-v)^2} + (A+B)(u-v)
    }{
      A(u+v) + \sqrt{A^2 (u+v)^2 + (B^2-A^2) (u-v)^2} - (A+B)(u-v)
    } \\
    &= \frac{
      1 + \sqrt{1 + (\vartheta^2 - 1) t^2} + (\vartheta + 1) t
    }{
      1 + \sqrt{1 + (\vartheta^2 - 1) t^2} - (\vartheta + 1) t
    }= \frac{\vartheta t + \sqrt{1 + (\vartheta^2 - 1)t^2}}{1-t}\,,
  \end{align*}
  as desired.
  Call this value $\kappa$.
  The positive ray $x/y = \kappa$ clearly intersects $Ax^2 + Bxy = u$ in exactly one point.
  This proves that any nonnegative solution must be unique.
  Finally, it is straightforward to check that this intersection also solves $Ay^2 + Bxy = v$, so the solution exists.
\end{proof}
\end{fact}

\begin{lemma}\label{l:kmt-k-3}
  Consider $k=3$ and any $\beta \ge 2\beta_\lowerBoundTag$.
  On the dimension-$1$ affine space
  \[
    {\mathcal{P}}_3^\dagger = \left\{
      {\typeProportions} \in {\mathcal{P}}_3 : 
      \alpha_{+++} = 
      \alpha_{+--} = 
      \alpha_{-+-} = 
      \alpha_{--+}
      \,\,\text{and}\,\,
      \alpha_{++-} = 
      \alpha_{+-+} = 
      \alpha_{-++} = 
      \alpha_{---}
    \right\}\,,
  \]
  the function $\Phi_3^\beta$ is maximized uniquely at ${\typeProportions}_*$.
\begin{proof}
  We may parametrize ${\mathcal{P}}_3^\dagger = \{{\typeProportions}^\dagger(t) : t\in [-1,1]\}$, where ${\typeProportions}^\dagger(t)$ is defined by
  \begin{align*}
    \alpha^\dagger_{+++}(t) =
    \alpha^\dagger_{+--}(t) = 
    \alpha^\dagger_{-+-}(t) = 
    \alpha^\dagger_{--+}(t) &=
    \frac{1+t}{8}\,, \\
    \alpha^\dagger_{++-}(t) =
    \alpha^\dagger_{+-+}(t) = 
    \alpha^\dagger_{-++}(t) = 
    \alpha^\dagger_{---}(t) &=
    \frac{1-t}{8}\,.
  \end{align*}
  By symmetry, we may assume without loss $t\in [0,1]$.
  For $t\in [0,1)$, let ${\underline{\lambda}}^\dagger(t)$ be the vector ${\underline{\lambda}}$ associated to ${\typeProportions}^\dagger(t)$ by ~\cref{l:kmt-lagrange}, see \eqref{e:kmt-stationarity-lambda}--\eqref{e:kmt-stationarity-rows} therein.
  By symmetry, ${\underline{\lambda}}^\dagger(t)$ satisfies
  \[
    \lambda^\dagger_{+++}(t) =
    \lambda^\dagger_{+--}(t) =
    \lambda^\dagger_{-+-}(t) =
    \lambda^\dagger_{--+}(t)
    \quad\text{and}\quad
    \lambda^\dagger_{++-}(t) =
    \lambda^\dagger_{+-+}(t) =
    \lambda^\dagger_{-++}(t) =
    \lambda^\dagger_{---}(t)\,.
  \]
  To reduce notation, we henceforth write
  \begin{align*}
    \alpha_+ &= \alpha^\dagger_{+++}(t)\,, &
    \alpha_- &= \alpha^\dagger_{---}(t)\,, &
    \lambda_+ &= \lambda^\dagger_{+++}(t)\,, &
    \lambda_- &= \lambda^\dagger_{---}(t)\,,
  \end{align*}
  and suppress these variables' dependence on $t$.
  We will show that for all $t\in (0,1)$,
  \begin{equation}\label{e:kmt-k-3-goal-1}
    (d-1) \log \frac{\alpha_+}{\alpha_-}
    - d (\lambda_+ - \lambda_-) < 0\,.
  \end{equation}
  By the formula \eqref{e:kmt-gradient} from ~\cref{l:kmt-gradient-hessian}, this implies that $(\Phi_3^\beta \circ {\typeProportions}^\dagger)'(t) < 0$ for all $t\in (0,1)$.
  Then $\Phi_3^\beta({\typeProportions}^\dagger(t))$ is maximized over $t\in [0,1]$ at $t=0$, corresponding to ${\typeProportions}^\dagger(0) = {\typeProportions}_*$, which implies the conclusion.
  Define
  \begin{align*}
    x_+ &= e^{\lambda_+ - \frac12}\,, & 
    x_- &= e^{\lambda_- - \frac12}\,.
  \end{align*}
  Then \eqref{e:kmt-k-3-goal-1} is equivalent to
  \begin{equation}\label{e:kmt-k-3-goal-2}
    (d-1) \log \frac{\alpha_+}{\alpha_-}
    < d \log \frac{x_+}{x_-}
  \end{equation}
  for all $t\in (0,1)$.
  The identities \eqref{e:kmt-stationarity-lambda}--\eqref{e:kmt-stationarity-rows} imply that for $A = 1 + 3e^{2\beta}$, $B = e^\beta (3 + e^{2\beta})$, we have
  \begin{align*}
    A (x_+)^2 + B x_+ x_- &= \alpha_+\,, &
    A (x_-)^2 + B x_+ x_- &= \alpha_-\,.
  \end{align*}
  Write $\vartheta = B/A$, and note that $t = \frac{\alpha_+ - \alpha_-}{\alpha_+ + \alpha_-}$, like in the setup of Fact~\cref{f:quadratic-system}.
  Fact~\cref{f:quadratic-system} implies that this system has a unique nonnegative solution $(x_+,x_-)$, which is positive and satisfies
  \[
    \frac{x_+}{x_-} = \frac{\vartheta t + \sqrt{1 + (\vartheta^2 - 1)t^2}}{1-t}\,.
  \]
  Therefore, \eqref{e:kmt-k-3-goal-2} reduces to
  \begin{equation}\label{e:kmt-k-3-goal-3}
    (d-1) \log \frac{1+t}{1-t}
    < d \log \frac{\vartheta t + \sqrt{1 + (\vartheta^2 - 1)t^2}}{1-t}
  \end{equation}
  for all $t\in (0,1)$.
  We now recall
  \[
    \vartheta = \frac{B}{A} = \frac{e^\beta (3 + e^{2\beta})}{1 + 3e^{2\beta}}\,.
  \]
  Let $f(x) = \frac{x(3+x^2)}{1+3x^2}$.
  A simple calculation shows $f'(x) \ge 0$ on $x\in [0,+\infty)$, with strict equality except at $x=1$.
  Thus $f$ is strictly increasing on $[0,+\infty)$, which implies $\vartheta$ is strictly increasing in $\beta$.
  Since the right-hand side of \eqref{e:kmt-k-3-goal-3} is nondecreasing in $\vartheta$, it suffices to prove \eqref{e:kmt-k-3-goal-3} at $\beta = 2\beta_\lowerBoundTag$.
  Recall $\beta_\lowerBoundTag = -d^{-10}$, so $e^\beta \in [1-2d^{-10}, 1]$.
  We estimate
  \[
    f(1 - 2d^{-10}) \ge \frac{(1-2d^{-10})(3 + (1 - 2 \cdot 2d^{-10}))}{1 + 3\cdot 1}
    \ge 1 - 3d^{-10}\,.
  \]
  Thus
  \[
    \vartheta \in [f(1-2d^{-10}), f(1)] \subseteq [1 - 3d^{-10}, 1]\,.
  \]
  For such $\vartheta$,
  \[
    \vartheta t + \sqrt{1 + (\vartheta^2 - 1)t^2}
    \ge \vartheta t + 1 + (\vartheta^2 - 1)t^2
    \ge 1 + [1 - 3(1 - \vartheta)] t
    \ge 1 + (1 - 9d^{-10}) t\,.
  \]
  Let $\iota = 9d^{-10}$.
  Then \eqref{e:kmt-k-3-goal-3} reduces to showing
  \begin{equation}\label{e:kmt-k-3-goal-4}
    (d-1) \log (1+t) + \log(1-t) < d \log (1 + (1 - \iota)t)
  \end{equation}
  for all $t\in (0,1)$.
  Note that equality holds in \eqref{e:kmt-k-3-goal-4} at $t=0$, while
  \begin{align*}
    &\frac{\differential}{\differential t}\Big\{
      d \log (1 + (1 - \iota)t)
      - (d-1) \log (1+t) - \log(1-t)
    \Big\}
    = d\left[
      \frac{1-\iota}{1 + (1 - \iota)t} - \frac{1}{1+t}
    \right] + \frac{1}{1+t} + \frac{1}{1-t} \\
    &= - \frac{d\iota}{(1 + (1 - \iota)t)(1+t)} + \frac{1}{1+t} + \frac{1}{1-t}
    \ge - \frac{d\iota}{1+t} + \frac{1}{1+t} > 0
  \end{align*}
  since $d\iota = 9d^{-9} < 1$.
  This proves \eqref{e:kmt-k-3-goal-4} and the conclusion follows.
\end{proof}
\end{lemma}

\begin{proof}[Proof of ~\cref{l:basic-mt}]
  Let $s_0 = \min(s_0(d,\delta,2), s_0(d,\delta,3))$ be given by ~\cref{p:kmt-concavity-in-region}.
  The partition function $Z_{G,< \varepsilon}^{(2)}(\beta)$ is a sum over $(\sigma^1,\sigma^2) \in (\Sigma_n)^2$ satisfying
  \[
    \max\left(|R(\sigma^1,\allone)|, |R(\sigma^2,\allone)|, |R(\sigma^1,\sigma^2)| \right) < \varepsilon\,.
  \]
  By setting $\varepsilon_0$ small enough, we can ensure that $\varepsilon_0 < 1$ and if $\varepsilon < \varepsilon_0$, the ${\typeProportions}$ associated to such $(\sigma^1,\sigma^2)$ (in the sense of ~\cref{d:basic-mt-type}~\cref{i:basic-mt-type-spins}) satisfies $\|{\typeProportions} - {\typeProportions}_*\| \le s_0$.
  Then ~\cref{l:kmt-exp-rate} yields
  \begin{align*}
    \E[Z_{G,< \varepsilon}^{(2)}(\beta)]
    &\le \exp\left(
      n \sup\{\Phi_2^\beta({\typeProportions},\Gamma) : \|{\typeProportions} - {\typeProportions}_*\| \le s_0\} + o(n)
    \right) \\
    &= \exp\left(
      n \sup\{\Phi_2^\beta({\typeProportions}) : \|{\typeProportions} - {\typeProportions}_*\| \le s_0\} + o(n)
    \right)\,,
  \end{align*}
  where we recall $\Phi_2^\beta({\typeProportions}) = \sup_{\Gamma \in {\mathcal{Q}}_2({\typeProportions})} \Phi_2^\beta({\typeProportions},\Gamma)$ is defined in ~\cref{l:kmt-gradient-hessian}.
  ~\cref{l:kmt-gradient-hessian}, together with ~\cref{l:kmt-stationarity} and ~\cref{p:kmt-concavity-in-region}, ensure that in the set
  \[
    \{{\typeProportions} \in {\mathcal{P}}^\circ_2 : \|{\typeProportions} - {\typeProportions}_*\| \le s_0\}\,,
  \]
  the function ${\typeProportions} \mapsto \Phi_2^\beta({\typeProportions})$ is $\iota$-strongly concave and stationary at ${\typeProportions}_*$.
  It follows that
  \[
    \E[Z_{G,< \varepsilon}^{(2)}(\beta)]
    = \exp(n\Phi_2^\beta({\typeProportions}_*) + o(n))
    = e^{o(n)} \E[Z_{G,< \varepsilon}(\beta)]^2\,,
  \]
  which proves part \cref{i:basic-mt-annealed-2}.
  The same argument with $s$ in place of $\varepsilon$ proves part \cref{i:basic-mt-annealed-2-subgaussian}.

  We turn to part \cref{i:basic-mt-annealed-3}.
  ~\cref{l:kmt-k-3} implies that for ${\mathcal{P}}_3^\dagger$ defined therein,
  \[
    \sup\left\{
      \Phi_3^\beta({\typeProportions}) : {\typeProportions} \in {\mathcal{P}}_3^\dagger, \|{\typeProportions} - {\typeProportions}_*\| > s_0
    \right\} < \Phi_3^\beta({\typeProportions}_*)\,.
  \]
  Note that $\Phi_3^\beta$ is continuous on compact domain ${\mathcal{P}}_3$, and thus uniformly continuous.
  So there exists $\iota' > 0$ such that
  \begin{equation}\label{e:cP-3-dagger-away-from-center}
    \sup\left\{
      \Phi_3^\beta({\typeProportions}) : {\typeProportions} \in {\mathcal{P}}_3, d({\typeProportions}, {\mathcal{P}}_3^\dagger) \le \iota', \|{\typeProportions} - {\typeProportions}_*\| > s_0
    \right\} < \Phi_3^\beta({\typeProportions}_*)\,,
  \end{equation}
  where $d({\typeProportions}, {\mathcal{P}}_3^\dagger)$ denotes point-to-set distance.
  Note that $Z_{G,< \varepsilon}^{(3)}(\beta)$ is a sum over $(\sigma^1,\sigma^2,\sigma^3) \in (\Sigma_n)^3$ satisfying
  \[
    \max\left(
      |R(\sigma^1,\allone)|, 
      |R(\sigma^2,\allone)|, 
      |R(\sigma^3,\allone)|, 
      |R(\sigma^1,\sigma^2)|, 
      |R(\sigma^1,\sigma^3)|, 
      |R(\sigma^2,\sigma^3)|
    \right) < \varepsilon\,.
  \]
  By setting $\varepsilon_0$ small enough, we can ensure that $\varepsilon_0 < 1$ and if $\varepsilon < \varepsilon_0$, the ${\typeProportions}$ associated to such $(\sigma^1,\sigma^2,\sigma^3)$ satisfy $d({\typeProportions},{\mathcal{P}}_3^\dagger) \le \iota'$.
  ~\cref{l:kmt-exp-rate} implies
  \[
    \E[Z_{G,< \varepsilon}^{(3)}(\beta)]
    \le \exp\left(
      n \sup\{\Phi_3^\beta({\typeProportions}) : d({\typeProportions},{\mathcal{P}}_3^\dagger) \le \iota'\} + o(n)
    \right)\,.
  \]
  The estimate \eqref{e:cP-3-dagger-away-from-center} bounds this supremum by $\Phi_3^\beta({\typeProportions}_*)$ for the subset of ${\typeProportions}$ where $\|{\typeProportions} - {\typeProportions}_*\| > s_0$.
  The same proof as for part \cref{i:basic-mt-annealed-2} shows
  \[
    \sup\{\Phi_3^\beta({\typeProportions}) : \|{\typeProportions} - {\typeProportions}_*\| \le s_0\} = \Phi_3^\beta({\typeProportions}_*)\,.
  \]
  Thus
  \[
    \E[Z_{G,< \varepsilon}^{(3)}(\beta)]
    = \exp(n\Phi_3^\beta({\typeProportions}_*) + o(n))
    = e^{o(n)} \E[Z_{G,< \varepsilon}(\beta)]^3\,,
  \]
  which proves part \cref{i:basic-mt-annealed-3}.
\end{proof}

\section{Complex truncated second moment}
\label{s:deferred-complex-moment}

Here we provide a proof of \cref{prop:2nd-moment}.
Conceptually, it will go along the same lines as our proof of \cref{l:1mt}. 
We first follow the same steps as in the proof of~\cref{l:1mt} to proof the unconditioned statement, after which we give the full proof.

Throughout this section, put
\[
    N:=nd,\qquad p:=e^\beta,\qquad q:=e^{\overline{\beta}},\qquad h(\beta):=(1+p)(1+q),
\]
and
\[
    \delta_2(\beta):=\frac{p-1}{p+1}\frac{q-1}{q+1}
    =\left|\tanh\left(\frac{\beta}{2}\right)\right|^2.
\]
Thus, for real $\beta\geq0$,
\[
    \betaKS
    =
    2\arctanh\left(\frac{1}{\sqrt{d-1}}\right),
    \qquad
    (d-1)\delta_2(\betaKS)=1.
\]

Fix $0<\eta<1$. For $\rho>0$, define
\[
    K_\eta(\rho)
    :=
    \left\{
        \beta\in\C:
        \dist\left(
            \beta,[0,(1-\eta)\betaKS]
        \right)
        \leq\rho
    \right\}.
\]
After choosing $\rho=\rho(d,\eta)>0$ sufficiently small, there exists $c_\eta>0$ such that
\begin{equation}
\label{eq:second-moment-KS-gap}
    1-(d-1)\delta_2(\beta)\geq c_\eta
\end{equation}
for every $\beta\in K_\eta(\rho)$.

We start with obtaining a formula for the (unconditioned) truncated second moment, after which we shall handle the general case.
As for the first moment, we need a representation of the second moment in terms of a contour integral.

\subsection{A contour integral representation of the second moment}
Let us recall that we work with the configuration model on $N=nd$ sockets and $n$ clouds. 

For a pair of configurations $(\sigma,\sigma')\in\{-1,1\}^n\times\{-1,1\}^n$, let
\[
    \momentCounts=(m_{++},m_{+-},m_{-+},m_{--}),
\]
where $m_{ab}$ is the number of coordinates $i$ for which
\[
    (\sigma_i,\sigma'_i)=(a,b).
\]
Thus
\[
    m_{++}+m_{+-}+m_{-+}+m_{--}=n.
\]
The number of pairs $(\sigma,\sigma')$ having type-count vector $\momentCounts$ is
\[
    \binom{n}{\momentCounts}
    :=
    \frac{n!}{
        m_{++}!m_{+-}!m_{-+}!m_{--}!
    }.
\]

For a fixed pair $(\sigma,\sigma')$ with type-count vector $\momentCounts$, put
\[
    f_N(\beta,\momentCounts)
    :=
    \E\left[
        \exp\left(
            \beta \isingHamiltonian(\sigma)
            +
            \overline{\beta}\isingHamiltonian(\sigma')
        \right)
    \right].
\]
A socket incident to a cloud of type $a\in\{++,+-,-+,--\}$ is said to have type $a$. There are $dm_a$ sockets of type $a$.

Define
\[
    Q(p,q)
    :=
    \begin{pmatrix}
        1&q&p&pq\\
        q&1&pq&p\\
        p&pq&1&q\\
        pq&p&q&1
    \end{pmatrix}.
\]
The entry $Q(p,q)_{ab}$ is the weight contributed by an edge joining sockets of types $a$ and $b$.
Define 
\begin{equation}
\label{eq:second-moment-phase}
    \Psi_{\typeProportions,\beta}^{(N)}(\contourVariables)
    :=
    \frac{1}{2}\contourVariables^{\matrixtranspose}Q(p,q)\contourVariables
    -
    \left(
        \typeProportions+\frac{1}{N}\vec{1}
    \right)^{\matrixtranspose}\log\contourVariables,
\end{equation}
and let
\[
    \mathcal A_{n,\varepsilon}^{(2)}
    :=
    \left\{
        \typeProportions\in
        \left(\frac{1}{n}\Z_{\geq0}\right)^4:
        \sum_a\alpha_a=1,
        \quad
        \left\|
            \typeProportions-\frac{1}{4}\vec{1}
        \right\|_\infty<\varepsilon
    \right\}.
\]

\begin{lemma}\label{lem:2nd moment contour representation}
We have the following representation for the truncated second moment:
\begin{equation}
\label{eq:second-moment-exact-sum}
\begin{aligned}
    \E_M\left[
        Z_{M,<\varepsilon}^{(2)}(\beta)
    \right]
    &=
    \sum_{\typeProportions\in\mathcal A_{n,\varepsilon}^{(2)}}
    \binom{n}{n\typeProportions}
    \frac{
        \prod_a(\alpha_aN)!
    }{
        (N-1)!!\,N^{\frac{N}{2}}
    }\\
    &\qquad\times
    \frac{1}{(2\pi i)^4}
    \oint
    \exp\left(
        N\Psi_{\typeProportions,\beta}^{(N)}(\contourVariables)
    \right)
    \,\differential\contourVariables.
\end{aligned}
\end{equation}    
\end{lemma}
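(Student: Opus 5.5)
We prove \cref{lem:2nd moment contour representation} exactly as \cref{lem:1st moment contour representation}, now with four socket types instead of two.

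\textbf{Step 1: restrict to the truncation region.} Write $Z_{M,<\varepsilon}^{(2)}(\beta)$ as a sum over pairs $(\sigma,\sigma')$. Note that $R(\sigma,\allone)$, $R(\sigma',\allone)$ and $R(\sigma,\sigma')$ are linear functions of $\typeProportions=\momentCounts/n$. For instance, $R(\sigma,\sigma')=\alpha_{++}-\alpha_{+-}-\alpha_{-+}+\alpha_{--}$. So the constraint that all three overlaps are below $\varepsilon$ in absolute value is a condition on the type vector $\typeProportions$ alone. We take $\mathcal A^{(2)}_{n,\varepsilon}$ to encode this region. The printed definition uses $\|\typeProportions-\frac14\vec 1\|_\infty<\varepsilon$; if the two regions differ, only the summation set changes, and the identity below holds for whichever set is used. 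By exchangeability of clouds in the configuration model, $f_N(\beta,\momentCounts)$ depends only on $\momentCounts$. Grouping the pairs $(\sigma,\sigma')$ by type therefore gives
\[
\E_M\wrapb{Z^{(2)}_{M,<\varepsilon}(\beta)}=\sum_{\typeProportions}\binom{n}{n\typeProportions}f_N(\beta,n\typeProportions).
\]

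\textbf{Step 2: generating function for $f_N$.} Fix $\sigma,\sigma'$, which places $N_a=d m_a$ sockets of type $a$. An edge between a type-$a$ socket and a type-$b$ socket contributes $e^{\beta}$ if the first coordinates disagree and $e^{\bar\beta}$ if the second coordinates disagree, multiplicatively. This is exactly $Q(p,q)_{ab}$. So $f_N$ equals $\frac{1}{(N-1)!!}$ times the weighted count of perfect matchings on the labelled sockets.

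By the exponential formula for matchings (as in \cref{lem:1st moment contour representation}), the weighted count of perfect matchings on labelled sockets with $N_a$ of colour $a$ is
\[
\prod_a N_a!\cdot[\vec u^{\vec N}]\exp\wrapp{\tfrac12\vec u^{\matrixtranspose}Q\vec u}.
\]
To check this, expand the exponential as a sum over multisets of edges. The factor $\frac12$ on the diagonal handles unordered pairs within one colour. The off-diagonal terms appear twice in $\frac12\vec u^{\matrixtranspose}Q\vec u$, giving each cross pair weight $Q_{ab}$. Finally, $\prod N_a!$ converts monomial coefficients into counts of labelled matchings.

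\textbf{Step 3: Cauchy's formula and rescaling.} Extract the coefficient with Cauchy's formula in four variables:
\[
[\vec u^{\vec N}]\exp\wrapp{\tfrac12\vec u^{\matrixtranspose}Q\vec u}=\frac{1}{(2\pi i)^4}\oint \exp\wrapp{\tfrac12\vec u^{\matrixtranspose}Q\vec u}\prod_a u_a^{-N_a-1}\,\differential\vec u.
\]
Substitute $\vec u=\sqrt N\contourVariables$. Since $N_a=\alpha_aN$, the quadratic term becomes $N\cdot\frac12\contourVariables^{\matrixtranspose}Q\contourVariables$. The product $\prod_a u_a^{-N_a-1}$ becomes $\exp\wrapp{-N(\typeProportions+\frac1N\vec1)^{\matrixtranspose}\log\contourVariables}$ times $N^{-\sum_a(N_a+1)/2}$. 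The Jacobian contributes $N^{4/2}=N^{2}$. Because $\sum_a N_a=N$, the powers of $N$ combine to $N^{-N/2}$.

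Assembling the three steps gives \eqref{eq:second-moment-exact-sum}, with integrand $\exp(N\Psi^{(N)}_{\typeProportions,\beta})$.

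The only real obstacle is bookkeeping. One must verify that every cross-type edge receives the correct factor $Q_{ab}$, including that the diagonal entries equal $1$ (same-type edges cut in neither replica). One must also confirm that the truncation $\{|R|<\varepsilon\}$ is expressed through $\typeProportions$ as claimed. Everything else is identical to the two-type computation.
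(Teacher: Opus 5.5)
Your proposal is correct and follows the paper's proof: group the pairs $(\sigma,\sigma')$ by type counts, apply the exponential formula for matchings with edge weights $Q(p,q)_{ab}$, extract the coefficient by Cauchy's formula in four variables, and rescale $\vec u=\sqrt N\vec x$ so that the powers of $N$ collapse to $N^{-N/2}$. Your caveat about the summation set is also correct. The paper's proof simply sums over the printed $\mathcal A^{(2)}_{n,\varepsilon}$, but the truncation in $Z^{(2)}_{M,<\varepsilon}$ is the linear condition $\max(|R(\sigma,\mathbf 1)|,|R(\sigma',\mathbf 1)|,|R(\sigma,\sigma')|)<\varepsilon$ on $\vec\alpha$, so the identity holds exactly when the sum runs over that region.
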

\begin{proof}
The exponential formula for matchings (see e.g.~\cite[Chapter 5]{Stanleybookvol2}) gives
\begin{equation}
\label{eq:second-moment-coefficient}
    f_N(\beta,\momentCounts)
    =
    \frac{
        \prod_a(dm_a)!
    }{
        (N-1)!!
    }
    [\contourCoefficients^{d\momentCounts}]
    \exp\left(
        \frac{1}{2}\contourCoefficients^{\matrixtranspose}Q(p,q)\contourCoefficients
    \right),
\end{equation}
where
\[
    \contourCoefficients=(u_{++},u_{+-},u_{-+},u_{--})
\]
and
\[
    [\contourCoefficients^{d\momentCounts}]
    :=
    [u_{++}^{dm_{++}}u_{+-}^{dm_{+-}}u_{-+}^{dm_{-+}}u_{--}^{dm_{--}}].
\]
Equivalently,
\begin{align}
    f_N(\beta,\momentCounts)
    &=
    \frac{
        \prod_a(dm_a)!
    }{
        (N-1)!!
    }
    \frac{1}{(2\pi i)^4}
    \oint
    \exp\left(
        \frac{1}{2}\contourCoefficients^{\matrixtranspose}Q(p,q)\contourCoefficients
    \right)
    \prod_a\frac{\differential u_a}{u_a^{dm_a+1}}.
\label{eq:second-moment-contour-unscaled}
\end{align}

Write
\[
    \momentCounts=n\typeProportions,
    \qquad
    \contourCoefficients=\sqrt{N}\,\contourVariables.
\]
Then \eqref{eq:second-moment-contour-unscaled} becomes
\begin{equation}
\label{eq:second-moment-contour-scaled}
    f_N(\beta,n\typeProportions)
    =
    \frac{
        \prod_a(\alpha_aN)!
    }{
        (N-1)!!\,N^{\frac{N}{2}}
    }
    \frac{1}{(2\pi i)^4}
    \oint
    \exp\left(
        N\Psi_{\typeProportions,\beta}^{(N)}(\contourVariables)
    \right)
    \,\differential\contourVariables.
\end{equation}
Summing over all $\alpha\in  \mathcal A_{n,\varepsilon}^{(2)}$ gives the required representation.
\end{proof}

\subsection{The contour integral}\label{sec:2nd m contour integral}
Here we focus on the contour integral in \cref{lem:2nd moment contour representation}

Write
\[
    M(p,q):=Q(p,q)-I.
\]
The eigenvectors of $M(p,q)$ are
\[
    v_1=(1,1,1,1)^{\matrixtranspose},\qquad
    v_2=(-1,-1,1,1)^{\matrixtranspose},
\]
\[
    v_3=(-1,1,-1,1)^{\matrixtranspose},\qquad
    v_4=(1,-1,-1,1)^{\matrixtranspose},
\]
with respective eigenvalues
\[
    p+q+pq,\qquad
    -p+q-pq,\qquad
    p-q-pq,\qquad
    -p-q+pq.
\]
Let \[\lambda_\beta^{(N)}
    =
    \sqrt{\frac{
        \frac{1}{4}+\frac{1}{N}
    }{
        h(\beta)
    }}.
\]

\begin{theorem}[Uniform four-dimensional contour estimate]
\label{thm:second-moment-contour}
There exist $\rho=\rho(d,\eta)>0$, $\varepsilon_0=\varepsilon_0(d,\eta)>0$, and $N_0>0$ such that the following holds for every $N\geq N_0$, every $\beta\in K_\eta(\rho)$, and every $\typeProportions$ satisfying
\[
    \left\|
        \typeProportions-\frac{1}{4}\vec{1}
    \right\|_\infty
    \leq\varepsilon_0.
\]
There exists a unique stationary point $\contourVariables^{(N)}(\typeProportions,\beta)=\lambda_\beta^{(N)}\vec{1}$ of $\Psi_{\typeProportions,\beta}^{(N)}$.

Moreover,
\begin{align}
    &\oint
    \exp\left(
        N\Psi_{\typeProportions,\beta}^{(N)}(\contourVariables)
    \right)
    \,\differential\contourVariables
    \nonumber\\
    &\qquad=
    \left(
        1+O\left(\frac{1}{n}\right)
    \right)
    \frac{8\pi^2}{N^2}
    \frac{
        \exp\left(
            N\Psi_{\typeProportions,\beta}^{(N)}
            \left(
                \contourVariables^{(N)}(\typeProportions,\beta)
            \right)
        \right)
    }{
        \sqrt{
            \det\left(
                \nabla_{\contourVariables}^2
                \Psi_{\typeProportions,\beta}^{(N)}
                \left(
                    \contourVariables^{(N)}(\typeProportions,\beta)
                \right)
            \right)
        }
    }.
\label{eq:second-moment-contour-asymptotic}
\end{align}
The square-root branch is chosen by continuation from real $\beta$, where the determinant at the symmetric saddle is positive.
\end{theorem}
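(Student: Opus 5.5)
The plan is to run the two-dimensional argument of \cref{p:contour-integral} in four variables, organised around the symmetric point $\typeProportions=\frac14\vec 1$ and real $\beta$, and then to extend by continuity and compactness. Throughout, the substantive output is the saddle asymptotic \eqref{eq:second-moment-contour-asymptotic}. The stationary point is exactly $\lambda_\beta^{(N)}\vec 1$ only at $\typeProportions=\frac14\vec1$; for general $\typeProportions$ it is the holomorphic continuation $\contourVariables^{(N)}(\typeProportions,\beta)$ of that point.

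\textbf{Step 1: the stationary point.} The stationarity equations are $x_a\,(Q(p,q)\contourVariables)_a=\alpha_a+\frac1N$ for each type $a$. At $\typeProportions=\frac14\vec1$, every row of $Q(p,q)$ sums to $h(\beta)=(1+p)(1+q)$. Hence $\contourVariables=\lambda_\beta^{(N)}\vec1$ solves the system.

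The Jacobian of the map $\contourVariables\mapsto(x_a(Q\contourVariables)_a)_a$ at this point is $\lambda\,(Q+hI)$, writing $\lambda=\lambda_\beta^{(N)}$ and $h=h(\beta)$. Using the eigenvectors $v_1,\dots,v_4$ of $M(p,q)=Q-I$, its eigenvalues are $\lambda(1+h+\mu_i)$, where $\mu_i$ are the listed eigenvalues of $M$.

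For real $\beta$ we have $p=q$, and these eigenvalues are $1+(1+p)^2+\{2p+p^2,\,-p^2,\,-p^2,\,p^2-2p\}$. All of them are at least $2+2p>0$. The holomorphic implicit function theorem then gives a unique nearby solution $\contourVariables^{(N)}(\typeProportions,\beta)$, depending holomorphically on $(\typeProportions,\beta,1/N)$. This holds on a neighbourhood of $\{\frac14\vec1\}\times[0,(1-\eta)\betaKS]$, with uniformity coming from compactness. This fixes $\rho$ and $\varepsilon_0$. Nonvanishing of the determinant there also pins down the square-root branch by continuation.

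\textbf{Step 2: global maxima on the torus.} Deform the contour to the torus $x_a=|x_a^{(N)}|e^{i\theta_a}$, which is legitimate by Cauchy. Because $\alpha_aN\in\Z$, the integrand is single-valued there. On the torus, $|\exp(N\Psi)|$ is a constant times $\exp\bigl(\tfrac N2\,\re\sum_{a,b}Q_{ab}x_ax_b\bigr)$.

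For real $\beta$ all entries $Q_{ab}$ are strictly positive. The triangle inequality gives $\re\sum_{a,b} Q_{ab}r_ar_be^{i(\theta_a+\theta_b)}\le\sum_{a,b} Q_{ab}r_ar_b$. Equality forces $\theta_a+\theta_b\equiv 0 \pmod{2\pi}$ for every pair $a,b$, including the diagonal pairs $a=b$. Hence all $\theta_a$ are equal and $2\theta_a\equiv0$, so the only maximisers are $\vartheta=0$ and $\vartheta=\pi\vec1$.

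At these two points the angular Hessian is $-D H D$, with $H=\nabla^2\Psi$ positive definite by Step 1 and $D$ the diagonal matrix of the stationary coordinates. It is therefore negative definite, and compactness gives a uniform gap away from the two maximisers. Perturbing to complex $\beta\in K_\eta(\rho)$ and to $\typeProportions$ near $\frac14\vec1$ preserves the analogues of P(1)–P(3) from \cref{p:contour-integral}: the real part of the Hessian stays uniformly negative near $\vartheta^*$ and $\vartheta^*+\pi\vec1$, and there is a uniform gap $\delta$ elsewhere.

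\textbf{Step 3: local Gaussian evaluation.} On $\vartheta^*+[-N^{-2/5},N^{-2/5}]^4$, rescale $s=t/\sqrt N$ and Taylor expand to fourth order. The cubic term is odd and integrates to zero, so only the complex Gaussian survives. It gives $(2\pi)^{2}N^{-2}\det(-A)^{-1/2}$ with relative error $O(1/n)$.

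Converting $\det(-A)=(\prod_a x_a)^2\det H$ cancels the Jacobian factor $\prod_a x_a=\prod_a r_ae^{i\theta_a}$ coming from $\differential\contourVariables$. The symmetric saddle at $\vartheta^*+\pi\vec1$ contributes identically. The two saddles together produce $\frac{8\pi^2}{N^2}$ times the ratio in \eqref{eq:second-moment-contour-asymptotic}. The complement contributes a relative error of size $e^{-cN^{1/5}}$.

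The main obstacle I expect is the uniformity in Step 2 once $\beta$ is complex, where the entries of $Q(p,q)$ are no longer positive reals. The triangle-inequality argument then fails exactly and survives only through continuity from real $\beta$. One therefore has to make sure $\rho$ is small enough, uniformly along $[0,(1-\eta)\betaKS]$, that no new near-maximiser appears on the four-torus. Compactness of the torus together with the strict gap at real $\beta$ should suffice. The Kesten--Stigum gap \eqref{eq:second-moment-KS-gap} is not needed here; it enters only later, in the sum over $\typeProportions$.
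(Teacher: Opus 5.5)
Your proposal is correct and follows essentially the same route as the paper: implicit function theorem at the symmetric saddle $\lambda_\beta^{(N)}\vec{1}$, deformation to the torus, identification of $\vec{0}$ and $\pi\vec{1}$ as the only maximisers at real $\beta$ by positivity of the entries of $Q$, extension to complex $\beta$ and nearby $\vec\alpha$ by continuity and compactness, and a local complex-Gaussian evaluation at each of the two saddles. One small imprecision: because $q=e^{\overline{\beta}}$, the saddle is holomorphic in $(\vec\alpha,p,q)$ with $p,q$ treated as independent variables (as the paper does), but only continuous (indeed real-analytic) in $\beta$, and continuity is all the argument needs.
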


\begin{proof}
The stationary-point equation for $\Psi_{\typeProportions,\beta}^{(N)}$ is
\begin{equation}
\label{eq:second-moment-stationary}
    x_a\left(Q(p,q)\contourVariables\right)_a
    =
    \alpha_a+\frac{1}{N},
    \qquad
    a\in\{++,+-,-+,--\}.
\end{equation}
At $\typeProportions_\star:=\tfrac{1}{4}\vec{1}$, and $\beta\geq 0$, the two relevant solutions are
\[
    \contourVariables=\pm\lambda_\beta^{(N)}\vec{1}.
\]
The Jacobian of \eqref{eq:second-moment-stationary} with respect to $\contourVariables$, evaluated at the positive solution, is
\[
    \lambda_\beta^{(N)}
    \left(
        h(\beta)I+Q(p,q)
    \right)
    =
    \lambda_\beta^{(N)}
    \left(
        (1+h(\beta))I+M(p,q)
    \right).
\]
Its eigenvalues are
\[
    2\lambda_\beta^{(N)}h(\beta),\qquad
    2\lambda_\beta^{(N)}(1+q),
\]
\[
    2\lambda_\beta^{(N)}(1+p),\qquad
    2\lambda_\beta^{(N)}(1+pq).
\]
These eigenvalues are uniformly bounded away from zero for $\beta\in K_\eta(\rho)$ after decreasing $\rho$ if necessary. Applying the holomorphic implicit-function theorem in the independent variables $(p,q)$, and then specializing to
\[
    p=e^\beta,
    \qquad
    q=e^{\overline{\beta}},
\]
gives the uniqueness of the asserted stationary points uniformly over the parameter set.

We next use Cauchy's theorem to evaluate the contour integral on the torus $\T=\{\lambda^{(N)}_{\beta}e^{i\vec{\theta}}\mid \vec{\theta}\in [0,2\pi)^4\}$.
It will be convenient to change coordinates accordingly. 
Write
\[
    \contourVariables^{(N)}(\typeProportions,\beta)
    =
     e^{i\vec{\theta}^{\,*}}
\]
coordinatewise, and define
\[
    \tilde{\Psi}_{\typeProportions,\beta}^{(N)}(\vec{\theta})
    :=
    \Psi_{\typeProportions,\beta}^{(N)}
    \left(
         e^{i\vec{\theta}}
    \right).
\]
Let $\vec{\vartheta^*}$ denote the stationary point corresponding to the continuation of $\contourVariables(\tfrac14\vec{1},\beta))$ for $\beta\geq 0$.

We next identify the dominant points of the torus. 
First assume that $\beta$ is real and that $\typeProportions=\typeProportions_\star=\tfrac14\vec{1}$. Write
\[
    x_a=\lambda_\beta^{(N)}e^{i\theta_a}.
\]
Since all entries of $Q(e^\beta,e^\beta)$ are positive,
\begin{align}
    &\Psi_{\typeProportions_\star,\beta}^{(N)}
    \left(
        \lambda_\beta^{(N)}\vec{1}
    \right)
    -
    \re\Psi_{\typeProportions_\star,\beta}^{(N)}
    \left(
        \lambda_\beta^{(N)}e^{i\vec{\theta}}
    \right)
    \nonumber\\
    &\qquad=
    \frac{
        \left(\lambda_\beta^{(N)}\right)^2
    }{2}
    \sum_{a,b}
    Q(e^\beta,e^\beta)_{ab}
    \left(
        1-\cos(\theta_a+\theta_b)
    \right).
\label{eq:second-moment-real-gap}
\end{align}
The right-hand side of \eqref{eq:second-moment-real-gap} is non negative. It vanishes precisely when
\[
    \vec{\theta}=\vec{0}
    \qquad\text{or}\qquad
    \vec{\theta}=\pi\vec{1}.
\]
Thus these are the only global maxima of the real part on the torus.

At $\contourVariables=\lambda^{(N)}_{\beta}\vec{1}$, the Hessian, $H$ of $\Psi^{(N)}_{\typeProportions^\star,\beta}$ is given by
\begin{equation}\label{eq:2nd m formula Hessian Psi}
    \left(
        h(\beta)I+Q(e^\beta,e^\beta)
    \right),
\end{equation}
which is uniformly positive definite for real $\beta$ in any bounded set and in particular for
\[
    \beta\in[0,(1-\eta)\betaKS].
\]
The angular Hessian at $\vec{0}$ satisfies
\[
    A
    :=
    \nabla_{\vec{\theta}}^2
    \tilde{\Psi}_{\typeProportions^\star,\beta}^{(N)}
    \left(
        \vec{\theta}^{\,*}
    \right)=-\left(\lambda_\beta^{(N)}\right)^2H.
\]
Thus $A$ is uniformly negative definite.


It thus follows from the above and compactness, that there exists constants $\gamma,\delta,c>0$ such that, uniformly in $\beta\in[0,(1-\eta)\betaKS]$ and all sufficiently large $N$ we have
\begin{itemize}
    \item the eigenvalues of the real part of the Hessian of $\Tilde \Psi^{(N)}_{\typeProportions_\star,\beta}$ are bounded above by $-2c$ in the $\gamma$-neighborhoods of $\vec 0$ (and $\vec\pi$ by symmetry);
    \item outside these two neighborhoods, 
    \[
    \re \tilde\Psi_{\typeProportions_\star,\beta}^{(N)}(\vec\theta)\le \re\tilde\Psi_{\typeProportions_\star,\beta}^{(N)}(\vec 0)-2\delta.
    \]
\end{itemize}
We now perturb $(\typeProportions_\star,\beta)$ and extend these properties to the general setting.
To this end note that the higher order derivatives of $\tilde\Psi$ and the stationary point $\contourVariables(\typeProportions,\beta))$ depend continuously on $(\typeProportions,\beta,1/N)$. 
Therefore, provided $\varepsilon$ and $\rho$ are small enough, the proceeding uniform real-case estimates, continuity, symmetry and compactness imply that there exists a neighbourhood $U_1$ of $\vec 0$ containing $\vec{\vartheta^*}$ such that the following holds for every $\beta\in K_\eta(\rho)$ and every $\typeProportions$ satisfying
\[
    \left\|
        \typeProportions-\typeProportions_\star
    \right\|_\infty
    \leq\varepsilon,
\]
\begin{enumerate}
    \item[P$^{(2)}$(1)] the eigenvalues of the real part of the $\tilde\Psi^{(N)}_{\typeProportions,\beta}$ are bounded above by $-c$
    on $U_1$ and on $U_2:=U_1+\pi\vec{1}$;
    \item[P$^{(2)}$(2)] for all $\vec \theta\in \T\setminus (U_1\cup U_2)$,
    \begin{equation}
    \label{eq:second-moment-perturbed-gap}
        \re\tilde{\Psi}^{(N)}_{\typeProportions,\beta}(\vec\theta)
        \leq
        \re\tilde{\Psi}^{(N)}_{\typeProportions,\beta}(\vec{\vartheta^*})   - \delta;
    \end{equation}

    \item[P$^{(2)}$(3)] all derivatives $\tilde\Psi$ of order at most four are uniformly bounded on $U_1\cup U_2$.
\end{enumerate}

Taylor's theorem through fourth order gives 
\[
\begin{aligned}\label{eq:taylor second mom}
    \tilde{\Psi}_{\typeProportions,\beta}^{(N)}
    \left(
        \vec{\vartheta}^{\,*}+\vec{s}
    \right)
    &=
    \tilde{\Psi}_{\typeProportions,\beta}^{(N)}
    \left(
        \vec{\vartheta}^{\,*}
    \right)
    +
    \frac{1}{2}\vec{s}^{\,\matrixtranspose}A\vec{s}
    +
    P_3(\vec{s})
    +
    \mathcal{E}(\vec{s})
\end{aligned}
\]
where $P_3$ is an odd homogeneous polynomials of degree three, and
\[
    | \mathcal{E}(\vec{s})|
    \leq
    C\|\vec{s}\|^4
\]
for some constant $C>0$.
After the change of variables $\vec{s}=\frac{\vec{t}}{\sqrt{N}}$,
we can write
\begin{align}
    &N\left[
        \tilde{\Psi}_{\typeProportions,\beta}^{(N)}
        \left(
            \vec{\theta}^{\,*}
            +
            \frac{\vec{t}}{\sqrt{N}}
        \right)
        -
        \tilde{\Psi}_{\typeProportions,\beta}^{(N)}
        \left(
            \vec{\theta}^{\,*}
        \right)
    \right]
    +
    \frac{i}{\sqrt{N}}
    \sum_a t_a
    \nonumber\\
    &\qquad=
    \frac{1}{2}\vec{t}^{\,\matrixtranspose}A\vec{t}
    +
    \frac{1}{\sqrt{N}}Q_3(\vec{t})
    +
    \frac{1}{N}\tilde{\mathcal{E}}(\vec{t}),
\label{eq:second-moment-local-expansion}
\end{align}
where $Q_3$ is odd, and
\[
    |\tilde{\mathcal{E}}(\vec{t})|
    \leq
    C\|\vec{t}\|^4.
\]

Next let
\[
    B_N:=
    \left[
        -N^{-\frac{2}{5}},
        N^{-\frac{2}{5}}
    \right]^4.
\]
Since $B_N$ is symmetric under $\vec{t}\mapsto-\vec{t}$, we use the same argument as in the proof of the complex first moment to see that,
\begin{align}
   & \int_{B_N}\exp \left(N\left(
        \tilde{\Psi}_{\typeProportions,\beta}^{(N)}
        \left(
            \vec{\theta}^{\,*}
            +
            \vec{s}
        \right)
        -
        \tilde{\Psi}_{\typeProportions,\beta}^{(N)}
        \left(
            \vec{\theta}^{\,*}
        \right)
    \right)\right)\differential(\vec s)=
    \\
  &  \frac{1}{N^2} \int_{
        \left[
            -N^{\frac{1}{10}},
            N^{\frac{1}{10}}
        \right]^4
    }
    \exp\left(
        \frac{1}{2}\vec{t}^{\,\matrixtranspose}A\vec{t}
        +
        \frac{1}{\sqrt{N}}Q_3(\vec{t})
        +
        \frac{1}{N}\tilde{\mathcal{E}}(\vec{t})
    \right)
    \,\differential\vec{t}
    \nonumber\\
    &\qquad=
   \frac{1}{N^2} \int_{\R^4}
    \exp\left(
        \frac{1}{2}\vec{t}^{\,\matrixtranspose}A\vec{t}
    \right)
    \,\differential\vec{t}
    +
    O\left(\frac{1}{n}\right).
\label{eq:second-moment-local-gaussian}
\end{align}
The complex Gaussian identity gives
\[
    \int_{\R^4}
    \exp\left(
        \frac{1}{2}\vec{t}^{\,\matrixtranspose}A\vec{t}
    \right)
    \,\differential\vec{t}
    =
    \frac{(2\pi)^2}{\sqrt{\det(-A)}}.
\]
The contribution around each stationary point is therefore given by
\[
    \left(
        1+O\left(\frac{1}{n}\right)
    \right)
    \frac{(2\pi)^2}{N^2}
    \frac{
        \exp\left(
            N\Psi_{\typeProportions,\beta}^{(N)}
            \left(
                \contourVariables^{(N)}(\typeProportions,\beta)
            \right)
        \right)
    }{
        \sqrt{
            \det\left(
                \nabla_{\contourVariables}^2
                \Psi_{\typeProportions,\beta}^{(N)}
                \left(
                    \contourVariables^{(N)}(\typeProportions,\beta)
                \right)
            \right)
        }
    }.
\]
By P$^{(2)}$(1) and P$^{(2)}$(2), the integral outside the two $N^{-\frac{2}{5}}$ neighborhoods is bounded in absolute value by
\[
   O\left(
    \exp\left(
        N\re\Psi_{\typeProportions,\beta}^{(N)}
        \left(
            \contourVariables^{(N)}(\typeProportions,\beta)
        \right)
    \right)
    e^{-cN^{\frac{1}{5}}}\right),
\]
which is $O(n^{-1})$ relative to the two local contributions. Adding the two contributions proves \eqref{eq:second-moment-contour-asymptotic}.
\end{proof}

\subsection{Derivative calculations}
Here we calculate the first and second derivative of $\Psi^{(N)}_{\typeProportions,\beta}$ at the stationary point, when regarded as a function of $\typeProportions$. 
We will use these to prove a formula for the truncated second moment in the next subsection.

To avoid ambiguity we denote for fixed $\beta$ the map $\typeProportions\mapsto \Psi^{(N)}_{\typeProportions,\beta}(\contourVariables(\typeProportions,\beta))$ by $\Psi^{(N)}_\beta(\typeProportions)$. 
We want to compute the gradient and Hessian of this map near $\typeProportions=\tfrac{1}{4}\vec{1}.$
Write
\[
    \alpha_{--}
    =
    1-\alpha_{++}-\alpha_{+-}-\alpha_{-+},
\]
Let
\[
    B
    :=
    \begin{pmatrix}
        -1&0&0\\
        0&-1&0\\
        0&0&-1\\
        1&1&1
    \end{pmatrix}.
\]

\begin{lemma}
\label{lem:second-moment-saddle-derivatives}
At $\typeProportions_\star=\frac{1}{4}\vec{1}$,
we have
\[
    \nabla_{\typeProportions}\Psi^{(N)}_{\beta}(\typeProportions_\star)=0
\]
and
\begin{equation}
\label{eq:second-moment-saddle-hessian}
    \nabla_{\typeProportions}^2\Psi^{(N)}_{\beta}(\typeProportions_\star)
    =
    -\frac{h(\beta)}{\frac{1}{4}+\frac{1}{N}}
    B^{\matrixtranspose}
    \left(
        (1+h(\beta))I+M(p,q)
    \right)^{-1}
    B.
\end{equation}

\end{lemma}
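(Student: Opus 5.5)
We follow the template of \cref{l:2deriv}, now in four (effectively three) dimensions. Write $\vec x(\typeProportions)=\contourVariables^{(N)}(\typeProportions,\beta)$ for the holomorphic stationary point given by \cref{thm:second-moment-contour}, and regard $\Psi^{(N)}_\beta$ as a function of the free coordinates $(\alpha_{++},\alpha_{+-},\alpha_{-+})$, with $\alpha_{--}$ eliminated. Then $\typeProportions=\typeProportions_\star+B\,\vec{s}$ for $\vec s\in\C^3$, and the chain rule gives $\nabla_{\vec s}=B^{\matrixtranspose}\nabla_{\typeProportions}$. This accounts for the $B^{\matrixtranspose}(\cdot)B$ sandwich in \eqref{eq:second-moment-saddle-hessian}.

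For the gradient, the envelope argument applies. Since $\nabla_{\contourVariables}\Psi^{(N)}_{\typeProportions,\beta}(\vec x(\typeProportions))=0$, we have
\[
\frac{\differential}{\differential s_j}\Psi^{(N)}_\beta=\partial_{s_j}\Psi^{(N)}_{\typeProportions,\beta}\big|_{\contourVariables=\vec x}=-(B^{\matrixtranspose}\log\vec x)_j .
\]
At $\typeProportions_\star$ all coordinates of $\vec x$ equal $\lambda^{(N)}_\beta$, so $\log\vec x$ is a multiple of $\vec 1$. Since $B^{\matrixtranspose}\vec 1=0$, the gradient vanishes.

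For the Hessian, we differentiate $-B^{\matrixtranspose}\log\vec x(\typeProportions)$ once more. This gives $-B^{\matrixtranspose}D^{-1}\,\partial_{\vec s}\vec x$, where $D=\diag(\vec x)=\lambda^{(N)}_\beta I$. There is no explicit second $\typeProportions$-derivative term, because $\Psi$ is affine in $\typeProportions$. We obtain $\partial_{\vec s}\vec x$ by implicit differentiation of the stationarity equations \eqref{eq:second-moment-stationary}, $x_a(Q\vec x)_a=\alpha_a+1/N$. The Jacobian with respect to $\vec x$ at the symmetric point was already computed in the proof of \cref{thm:second-moment-contour}: it is $\lambda^{(N)}_\beta\big((1+h)I+M\big)$, using $Q\vec 1=h\vec 1$ and hence $\diag(Q\vec x)=\lambda^{(N)}_\beta hI$. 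The right-hand side derivative is $B$. Therefore
\[
\partial_{\vec s}\vec x=(\lambda^{(N)}_\beta)^{-1}\big((1+h)I+M\big)^{-1}B ,
\]
and the Hessian equals $-(\lambda^{(N)}_\beta)^{-2}B^{\matrixtranspose}\big((1+h)I+M\big)^{-1}B$. Substituting $(\lambda^{(N)}_\beta)^{-2}=h/(\tfrac14+\tfrac1N)$ gives \eqref{eq:second-moment-saddle-hessian}.

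The only delicate point is bookkeeping. First, we must confirm that $Q(p,q)\vec 1=h(\beta)\vec 1$; its row sum is $1+p+q+pq=(1+p)(1+q)$, so this holds. This identity is what makes $\lambda^{(N)}_\beta\vec 1$ the stationary point and makes the Jacobian take the stated form. Second, we must confirm that $(1+h)I+M$ is invertible for $\beta\in K_\eta(\rho)$. Its eigenvalues are $2h$, $2(1+q)$, $2(1+p)$ and $2(1+pq)$, which are bounded away from zero, as already noted in the proof of \cref{thm:second-moment-contour}. Holomorphy of the branch of $\log\vec x$ near the symmetric point is inherited from the implicit function theorem used there.
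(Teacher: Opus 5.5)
Your proof is correct and is essentially the paper's argument. The envelope identity removes the $\vec x$-dependence from the gradient, and implicit differentiation of the stationarity equations at the symmetric saddle $\lambda^{(N)}_\beta\vec 1$, where the Jacobian is $\lambda^{(N)}_\beta\big((1+h(\beta))I+M(p,q)\big)$ and $\partial_{\typeProportions}\nabla_{\contourVariables}\Psi$ is $(\lambda^{(N)}_\beta)^{-1}B$, gives the sandwich formula.

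The differences are only cosmetic. You differentiate $-B^{\top}\log\vec x$ via the multiplied-through equations $x_a(Q\vec x)_a=\alpha_a+1/N$, rather than quoting $-(\nabla_{\typeProportions}\nabla_{\contourVariables}\Psi)^{\top}(\nabla^2_{\contourVariables}\Psi)^{-1}\nabla_{\typeProportions}\nabla_{\contourVariables}\Psi$. Also, your $\vec s$ is the negative of the free coordinates $\alpha_a-\tfrac14$, which changes neither the vanishing of the gradient nor the Hessian.
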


\begin{proof}
Since $\contourVariables^{(N)}(\typeProportions,\beta)$ is a stationary point,
\[
    \nabla_{\contourVariables}
    \Psi_{\typeProportions,\beta}^{(N)}
    \left(
        \contourVariables^{(N)}(\typeProportions,\beta)
    \right)
    =
    0.
\]
Therefore the chain rule gives
\[
    \nabla_{\typeProportions}\Psi^{(N)}_{\beta}(\typeProportions)
    =
    \nabla_{\typeProportions}
    \Psi_{\typeProportions,\beta}^{(N)}
    \left(
        \contourVariables^{(N)}(\typeProportions,\beta)
    \right).
\]
At $\typeProportions=\typeProportions_\star$, all four coordinates of $  \contourVariables^{(N)}(\typeProportions,\beta)$ are equal, and hence
\[
    \nabla_{\typeProportions}\Psi^{(N)}_{\beta}(\typeProportions_\star)=0.
\]

Differentiating the stationary-point equation with respect to the three independent coordinates of $\typeProportions$ gives
\[
    \nabla_{\typeProportions}\contourVariables^{(N)}
    =
    -
    \left(
        \nabla_{\contourVariables}^2
        \Psi_{\typeProportions,\beta}^{(N)}
    \right)^{-1}
    \nabla_{\typeProportions}\nabla_{\contourVariables}
    \Psi_{\typeProportions,\beta}^{(N)}.
\]
Consequently,
\[
    \nabla_{\typeProportions}^2\Psi^{(N)}_{\beta}
    =
    -
    \left(
        \nabla_{\typeProportions}\nabla_{\contourVariables}
        \Psi_{\typeProportions,\beta}^{(N)}
    \right)^{\matrixtranspose}
    \left(
        \nabla_{\contourVariables}^2
        \Psi_{\typeProportions,\beta}^{(N)}
    \right)^{-1}
    \nabla_{\typeProportions}\nabla_{\contourVariables}
    \Psi_{\typeProportions,\beta}^{(N)}.
\]
At $\typeProportions=\typeProportions_\star$,
\[
    \contourVariables^{(N)}(\typeProportions_\star,\beta)
    =
    \lambda_\beta^{(N)}\vec{1},
\]
and
\[
    \nabla_{\contourVariables}^2
    \Psi_{\typeProportions_\star,\beta}^{(N)}
    \left(
        \lambda_\beta^{(N)}\vec{1}
    \right)
    =
    (1+h(\beta))I+M(p,q).
\]
Moreover,
\[
    \nabla_{\typeProportions}\nabla_{\contourVariables}
    \Psi_{\typeProportions_\star,\beta}^{(N)}
    \left(
        \lambda_\beta^{(N)}\vec{1}
    \right)
    =
    \frac{1}{\lambda_\beta^{(N)}}B.
\]
Since
\[
    \left(\lambda_\beta^{(N)}\right)^{-2}
    =
    \frac{h(\beta)}{\frac{1}{4}+\frac{1}{N}},
\]
substitution gives \eqref{eq:second-moment-saddle-hessian}.
\end{proof}

\subsection{A formula for the second moment}
Define
\[
    H(\typeProportions):=-\sum_a\alpha_a\log\alpha_a
\]
and
\begin{equation}\label{eq:2nd m combined entropy and energy}
    \mathcal F_{N,\beta}(\typeProportions)
    :=
    d\left(
        \Psi^{(N)}_{\beta}(\typeProportions)-\frac{1}{2}
    \right)
    +
    (1-d)H(\typeProportions).
\end{equation}
We will need the following lemma.
\begin{lemma}\label{lem:Hessian determinant}
For $\beta\in K_\eta(\rho)$ and all sufficiently large $n$ we have that
\[
    \nabla_{\typeProportions}^2\mathcal F_{N,\beta}(\typeProportions_\star)
\]
is uniformly negative definite,
Moreover,
\begin{align}
    &\det\left(
        -\nabla_{\typeProportions}^2
        \mathcal F_{N,\beta}(\typeProportions_\star)
    \right)
    \nonumber\\
    &\qquad=
    32
    \left(d(p-1)+2\right)
    \left(d(q-1)+2\right)
    \frac{h(\beta)}{1+pq}
    \left(
        1-(d-1)\delta_2(\beta)
    \right)
    \left(
        1+O\left(\frac{1}{n}\right)
    \right).
\label{eq:second-moment-rate-determinant}
\end{align}
\end{lemma}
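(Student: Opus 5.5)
The plan is to compute $\nabla^2_{\typeProportions}\mathcal F_{N,\beta}(\typeProportions_\star)$ explicitly in the three independent coordinates $(\alpha_{++},\alpha_{+-},\alpha_{-+})$. The Hessian of $\Psi^{(N)}_\beta$ is already given by \cref{lem:second-moment-saddle-derivatives}, so only the entropy term needs to be added and the result diagonalized. The entropy term has Hessian $\nabla^2 H(\typeProportions_\star) = -B^{\matrixtranspose}\diag(\alpha_a^{-1})B = -4B^{\matrixtranspose}B$. Hence
\[
  -\nabla^2\mathcal F_{N,\beta}(\typeProportions_\star)
  = B^{\matrixtranspose}\Big(\frac{d\,h(\beta)}{\frac14+\frac1N}\big((1+h)I+M(p,q)\big)^{-1} - 4(d-1)I\Big)B .
\]

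The key observation is that both matrices in the middle are diagonal in the common eigenbasis $v_1,\dots,v_4$ of $M(p,q)$. Moreover, the column space of $B$ is exactly $v_1^\perp = \mathrm{span}(v_2,v_3,v_4)$, since the columns of $B$ have zero sum. The eigenvalues of $(1+h)I+M$ on $v_2,v_3,v_4$ were computed in the proof of \cref{thm:second-moment-contour}: they are $2(1+q)$, $2(1+p)$ and $2(1+pq)$, up to the $\lambda_\beta^{(N)}$ normalization. Replacing $\frac14+\frac1N$ by $\frac14(1+O(1/n))$, the restriction to $v_1^\perp$ has eigenvalues
\[
  \mu_2 = \frac{2d(1+p)}{1}\cdot\!\!\!\;\frac{(1+q)}{(1+q)} - 4(d-1),
\]
and it is cleaner to write them all at once:
\[
  \mu_{v} = \frac{4d\,h}{2\,e_v} - 4(d-1), \qquad e_{v_2}=1+q,\quad e_{v_3}=1+p,\quad e_{v_4}=1+pq .
\]
This gives $\mu_{v_2} = 2d(1+p)-4(d-1) = 2(d(p-1)+2)$, and symmetrically $\mu_{v_3} = 2(d(q-1)+2)$. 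For the last eigenvalue,
\[
  \mu_{v_4} = \frac{2d\,h}{1+pq} - 4(d-1) = \frac{2}{1+pq}\big(d(1+p)(1+q) - 2(d-1)(1+pq)\big).
\]
Expanding, $d(1+p)(1+q) - 2(d-1)(1+pq) = h - (d-1)(p-1)(q-1)$, so $\mu_{v_4} = \frac{2h}{1+pq}\big(1-(d-1)\delta_2\big)$.

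To get the determinant, write $-\nabla^2\mathcal F = B^{\matrixtranspose} S B$ with $S$ acting on $v_1^\perp$. Then $\det = \det(B^{\matrixtranspose}B)\prod_v \mu_v$. Here $B^{\matrixtranspose}B = I+\mathbf 1\mathbf 1^{\matrixtranspose}$ in $\R^3$, which has determinant $4$. The product $4\cdot 2\cdot 2\cdot 2$ gives the constant $32$ together with the claimed factors, and the $1/N$ corrections yield the $(1+O(1/n))$ factor.

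For uniform negative definiteness, it suffices that the real parts of the $\mu_v$ are bounded below, together with a symmetry argument. At real $\beta\in[0,(1-\eta)\betaKS]$ all $\mu_v$ are real and positive: $d(p-1)+2 \ge 2$, and $1-(d-1)\delta_2 \ge c_\eta$ by \eqref{eq:second-moment-KS-gap}. For complex $\beta$, the matrix $S$ is not Hermitian, so "negative definite" must be read as the real part of the quadratic form being negative, as in the first-moment argument. I would argue this by continuity: the real part of $S$ is within $O(\rho)$ in operator norm of its value at $\re\beta$, and the latter is positive definite with eigenvalues uniformly bounded below. Shrinking $\rho$ then keeps it positive definite.

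The main care point is the bookkeeping of the $\lambda_\beta^{(N)}$ normalization when converting the Jacobian eigenvalues to those of $(1+h)I+M$. Note that $(1+h)I+M$ has eigenvalue $1+h+(p+q+pq)=2h$ on $v_1$, and $2(1+q)$, $2(1+p)$, $2(1+pq)$ on $v_2,v_3,v_4$. With this checked, the rest is routine.
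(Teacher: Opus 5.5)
Your proposal is correct and follows the paper's proof essentially step for step. Like the paper, you add $\nabla^2 H(\typeProportions_\star)=-4B^{\mathsf T}B$ to the Hessian from the saddle-derivative lemma and diagonalize the middle matrix on $v_2,v_3,v_4$, which span the image of $B$. You then get uniform definiteness from the Kesten--Stigum gap and the constant $32$ from $\det(B^{\mathsf T}B)=4$. Your sign convention (working with $-\nabla^2\mathcal F$) and the identity $dh-2(d-1)(1+pq)=h-(d-1)(p-1)(q-1)$ only spell out simplifications the paper states without detail.
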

\begin{proof}
At $\typeProportions_\star$ we have
\[
    \nabla_{\typeProportions}H(\typeProportions_\star)=0
\]
and
\[
    \nabla_{\typeProportions}^2H(\typeProportions_\star)
    =
    -4B^{\matrixtranspose}B.
\]
Combining this identity with~\cref{lem:second-moment-saddle-derivatives} gives
\begin{equation}
\label{eq:second-moment-rate-hessian}
    \nabla_{\typeProportions}^2\mathcal F_{N,\beta}(\typeProportions_\star)
    =
    B^{\matrixtranspose}
    \left[
        4(d-1)I
        -
        \frac{dh(\beta)}{\frac{1}{4}+\frac{1}{N}}
        \left(
            (1+h(\beta))I+M(p,q)
        \right)^{-1}
    \right]
    B.
\end{equation}

The image of $B$ is the orthogonal complement of $v_1$, and is spanned by $v_2,v_3,v_4$ (which we recall were defined in \cref{sec:2nd m contour integral}). The eigenvalues of the matrix in square brackets in \eqref{eq:second-moment-rate-hessian} in these three directions are
\[
    \mu_{2,N}
    =
    4(d-1)
    -
    \frac{d(1+p)}{
        2\left(
            \frac{1}{4}+\frac{1}{N}
        \right)
    },
\]
\[
    \mu_{3,N}
    =
    4(d-1)
    -
    \frac{d(1+q)}{
        2\left(
            \frac{1}{4}+\frac{1}{N}
        \right)
    },
\]
and
\[
    \mu_{4,N}
    =
    4(d-1)
    -
    \frac{dh(\beta)}{
        2\left(
            \frac{1}{4}+\frac{1}{N}
        \right)(1+pq)
    }.
\]
Thus
\[
    \mu_{2,N}
    =
    -2\left(d(p-1)+2\right)
    +
    O\left(\frac{1}{n}\right),
\]
\[
    \mu_{3,N}
    =
    -2\left(d(q-1)+2\right)
    +
    O\left(\frac{1}{n}\right),
\]
and
\[
    \mu_{4,N}
    =
    -\frac{2h(\beta)}{1+pq}
    \left(
        1-(d-1)\delta_2(\beta)
    \right)
    +
    O\left(\frac{1}{n}\right).
\]
Therefore by \eqref{eq:second-moment-KS-gap}, the real part of
\[
    \nabla_{\typeProportions}^2\mathcal F_{N,\beta}(\typeProportions_\star)
\]
is uniformly negative definite for $\beta\in K_\eta(\rho)$ and all sufficiently large $n$.
Finally, for the determinant we use 
\[
    \det(B^{\matrixtranspose}B)=4,
\]
to arrive at~\cref{eq:second-moment-rate-determinant}.
\end{proof}

We now state and prove the promised formula for the truncated second moment.
\begin{theorem}[Unconditioned second moment]
\label{thm:unconditioned-second-moment}
Fix $0<\eta<1$. There exist $\rho=\rho(d,\eta)>0$ and $\varepsilon_0=\varepsilon_0(d,\eta)>0$ such that, for every $\beta\in K_\eta(\rho)$ and every $0<\varepsilon\leq\varepsilon_0$,
\begin{align}
    \E_M\left[
        Z_{M,<\varepsilon}^{(2)}(\beta)
    \right]
    &=
    \left(
        1+O\left(\frac{1}{n}\right)
    \right)
    4^n
    \left(
        \frac{h(\beta)}{4}
    \right)^{\frac{N}{2}}
    \frac{
        \sqrt{h(\beta)}
    }{
        \sqrt{
            \left(d(p-1)+2\right)
            \left(d(q-1)+2\right)
        }
    }
    \nonumber\\
    &\qquad\times
    \frac{1}{
        \sqrt{
            1-(d-1)\delta_2(\beta)
        }
    }.
\label{eq:unconditioned-second-moment}
\end{align}
\end{theorem}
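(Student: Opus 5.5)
The plan mirrors the proof of \cref{l:1mt}. Start from the exact representation of \cref{lem:2nd moment contour representation} and substitute the contour asymptotics of \cref{thm:second-moment-contour} for each $\typeProportions\in\mathcal A_{n,\varepsilon}^{(2)}$; this is legitimate once $\varepsilon\le\varepsilon_0$. Next, apply Stirling's formula to the prefactor
\[
\binom{n}{n\typeProportions}\frac{\prod_a(\alpha_aN)!}{(N-1)!!\,N^{N/2}}.
\]
It produces $\exp\bigl((1-d)nH(\typeProportions)-\tfrac N2\bigr)$ times a polynomial factor of the form $(1+O(1/n))\,C\,n^{-3/2}\prod_a\alpha_a^{(d-1)/2}\cdots$. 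At $\typeProportions_\star$ this factor is an explicit constant times $n^{-3/2}N^{2}$, and it varies smoothly in $\typeProportions$. Together with the factor $8\pi^2/N^2$ and $(2\pi i)^{-4}$, the summand becomes
\[
a_{N,\beta}(\typeProportions)\,\exp\bigl(n\mathcal F_{N,\beta}(\typeProportions)\bigr)\bigl(1+O(1/n)\bigr),
\]
where $\mathcal F_{N,\beta}$ is as in \eqref{eq:2nd m combined entropy and energy}. The amplitude is $a_{N,\beta}(\typeProportions)=c\,n^{-3/2}\det(\nabla^2_{\contourVariables}\Psi^{(N)})^{-1/2}$, with $c$ absorbing the Stirling constants.

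Second, perform the discrete Laplace sum over the three free coordinates of $\typeProportions$. \cref{lem:second-moment-saddle-derivatives} gives stationarity at $\typeProportions_\star$, and \cref{lem:Hessian determinant} gives a Hessian whose real part is uniformly negative definite. Shrinking $\varepsilon_0$ then gives $\re(\mathcal F_{N,\beta}(\typeProportions)-\mathcal F_{N,\beta}(\typeProportions_\star))\le -c\|\typeProportions-\typeProportions_\star\|^2$ on $\mathcal A^{(2)}_{n,\varepsilon}$. This uses continuity in $(\typeProportions,\beta)$ and $\beta\in K_\eta(\rho)$, with $\rho$ small. Terms with $\|\typeProportions-\typeProportions_\star\|>n^{-2/5}$ are therefore exponentially negligible. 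Near $\typeProportions_\star$, Taylor expand to fourth order. The map $\typeProportions\mapsto\mathcal F$ is invariant under the symmetry group (flipping $\sigma$, flipping $\sigma'$) permuting coordinates, and the natural involution $\typeProportions_\star+u\mapsto\typeProportions_\star-u$ holds by the same symmetry argument as $f(1-\alpha)=f(\alpha)$ in the first moment. So odd terms cancel, or at worst contribute $O(1/n)$ after summation. Poisson summation on the lattice $n^{-1}\Z^3$, viewed in coordinates $t=\sqrt n(\typeProportions-\typeProportions_\star)$, turns the sum into
\[
n^{3/2}\,\frac{(2\pi)^{3/2}}{\sqrt{\det(-\nabla^2\mathcal F_{N,\beta}(\typeProportions_\star))}}\bigl(1+O(1/n)\bigr),
\]
with the branch fixed by continuation from real $\beta$.

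Third, evaluate everything at $\typeProportions_\star$ with the stationary point $\lambda_\beta^{(N)}\vec 1$. Directly, $\Psi^{(N)}_{\typeProportions_\star,\beta}(\lambda\vec1)=\tfrac12\lambda^2h(\beta)\cdot 4-(1+\tfrac4N)\log\lambda$ with $\lambda^2=(\tfrac14+\tfrac1N)/h$. Exponentiating $n\mathcal F_{N,\beta}(\typeProportions_\star)$ gives $4^n(h/4)^{N/2}$ times an $O(1)$ correction, $e^{O(1)}$, coming from the $1/N$ shifts; as in the first moment, this correction is a power of $h$. The determinant of $\nabla^2_{\contourVariables}\Psi$ is the product of its eigenvalues $2h(1+q)(1+p)(1+pq)$, read off from \eqref{eq:2nd m formula Hessian Psi} via $v_1,\dots,v_4$. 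Combining this with \eqref{eq:second-moment-rate-determinant} should make the factors $(1+pq)$ and $(1+p)(1+q)=h$ cancel. What remains is $\sqrt h\,\bigl((d(p-1)+2)(d(q-1)+2)(1-(d-1)\delta_2)\bigr)^{-1/2}$, and every power of $n$ and $2\pi$ must cancel. As a consistency check, I would verify this constant bookkeeping against the first-moment formula by setting $\delta_2\to0$ formally (factorization).

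The main obstacle is the uniformity of the Laplace step for complex $\beta$. The Hessian of $\mathcal F$ degenerates exactly at $(d-1)\delta_2=1$, so the bound \eqref{eq:second-moment-KS-gap} is what keeps the real part negative definite. The quartic Taylor and lattice error bounds must also hold uniformly with a relative $O(1/n)$ error. A second, more tedious risk is the exact tracking of constants, namely the $1/N$ shifts in $\lambda_\beta^{(N)}$ producing the $\sqrt h$ factor. I would handle this by redoing the computation at $\beta=0$, where the answer is known exactly, to fix the constants.
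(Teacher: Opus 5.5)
Your plan is essentially the paper's proof. Both start from the exact representation and the four-dimensional saddle-point estimate, then apply Stirling and a discrete Laplace sum over the three free coordinates of $\typeProportions$ via Poisson summation, and finally evaluate at $\typeProportions_\star$. Two points need correcting.

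\textbf{The odd terms.} The involution $\typeProportions_\star+u\mapsto\typeProportions_\star-u$ is not a symmetry for two replicas. The sign-flip group (flip $\sigma$, flip $\sigma'$, flip both) acts on the directions $v_2,v_3,v_4$ by the three distinct nontrivial characters of $\mathbb Z_2^2$ and never as $-\mathrm{id}$. Consequently $\mathcal F_{N,\beta}$ has a genuine cubic term. For example, at $\beta=0$ one has $\mathcal F_{N,0}(\typeProportions)=H(\typeProportions)+\mathrm{const}+O(1/N)$. Its cubic part $\tfrac83\sum_a u_a^3$ is nonzero for $u\propto(3,-1,-1,-1)$.

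Your fallback is what actually works, and it is what the paper does. The full $n^{-1/2}$ term is an odd polynomial in $\vec t$: it consists of the cubic Taylor part of $\mathcal F$ plus the linear part of the amplitude. Its Gaussian sum over the affine lattice is therefore exponentially small by \cref{cor:poisson-affine-lattice-truncated}, leaving only even terms that contribute a relative $O(1/n)$. This uses no symmetry of $\mathcal F$.

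\textbf{The constants.}
\begin{itemize}
\item The $\typeProportions$-dependent Stirling factors cancel exactly: $\prod_a\alpha_a^{-1/2}$ from the multinomial cancels $\prod_a\alpha_a^{1/2}$ from $\prod_a(\alpha_aN)!$. The prefactor is $\sqrt{\pi}\,d^2n^{1/2}e^{n(1-d)H(\typeProportions)-N/2}$, with no $\prod_a\alpha_a^{(d-1)/2}$.
\item The inner Hessian $hI+Q(p,q)$ has eigenvalues $2h$, $2(1+q)$, $2(1+p)$, $2(1+pq)$, so its determinant is $16h^2(1+pq)$, not $2h^2(1+pq)$.
\item $e^{n\mathcal F_{N,\beta}(\typeProportions_\star)}=(4h)^2\,4^n(h/4)^{N/2}(1+O(1/n))$.
\end{itemize}
Combining these with \eqref{eq:second-moment-rate-determinant} reproduces exactly the stated constant. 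Your proposed check at $\beta=0$ would catch both slips, since both are $\beta$-independent.
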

\begin{proof}
    We follow the same steps as in the proof of \cref{l:1mt}. 
    Let $z=e^\beta$. Then
    Stirling-approximations give
    \begin{align}
        &\binom{n}{n\typeProportions}
        \frac{
            \prod_a(\alpha_aN)!
        }{
            (N-1)!!\,N^{\frac{N}{2}}
        }
        \nonumber\\
        &\qquad=
        \left(
            1+O\left(\frac{1}{n}\right)
        \right)
        \sqrt{\pi}\,d^2n^{\frac{1}{2}}
        \exp\left(
            n(1-d)H(\typeProportions)-\frac{N}{2}
        \right),
    \label{eq:second-moment-stirling}
    \end{align}
    uniformly in a fixed neighborhood of $\typeProportions_\star$. Combining
    \eqref{eq:second-moment-exact-sum},
    \eqref{eq:second-moment-contour-asymptotic}, and
    \eqref{eq:second-moment-stirling} gives
    \begin{equation}
    \label{eq:second-moment-pre-laplace}
    \begin{aligned}
        \E_M\left[
            Z_{M,<\varepsilon}^{(2)}(\beta)
        \right]
        &=
        \frac{1}{2(\pi n)^{\frac{3}{2}}}
        \sum_{\typeProportions\in\mathcal A_{n,\varepsilon}^{(2)}}
        \frac{
            \exp\left(
                -n \mathcal{F}_{N,z}(\typeProportions)
            \right)
        }{
            \sqrt{
                \det\left(
                    \nabla_{\contourVariables}^2
                    \Psi_{\typeProportions,z}^{(N)}
                    \left(
                        \contourVariables^{(N)}(\typeProportions,z)
                    \right)
                \right)
            }
        }\\
        &\qquad\times
        \left(
            1+O\left(\frac{1}{n}\right)
        \right).
    \end{aligned}
    \end{equation}
    Let us denote by
    \[
        a_{N,z}(\vec\alpha)=
        \left(
                \det\left(
                    \nabla_{\contourVariables}^2
                    \Psi_{\typeProportions,z}^{(N)}
                    \left(
                        \contourVariables^{(N)}(\typeProportions,z)
                    \right)
                \right)
        \right)^{-1/2},
    \]
    and note that by~\cref{eq:2nd m formula Hessian Psi} this is uniformly bounded away from $0$ and $\infty$ for $\varepsilon$ small enough, since the stationary point depends continuously on $\typeProportions.$

By the previous lemma the Hessian of $\mathcal{F}_{N,\beta}(\typeProportions)$ is uniformly negative definite, therefore the contribution to
\eqref{eq:second-moment-pre-laplace} from $\typeProportions$ satisfying
\[
    \left\|
        \typeProportions-\typeProportions_\star
    \right\|_\infty
    >
    n^{-\frac{2}{5}}
\]
are exponentially small relative to the main contribution.

On the remaining region, write
\[
    \typeProportions
    =
    \typeProportions_\star+\frac{\vec{t}}{\sqrt{n}}.
\]
Taylor's theorem gives
\begin{align}
    n\left(
        \mathcal F_{N,\beta}(\typeProportions)
        -
        \mathcal F_{N,\beta}(\typeProportions_\star)
    \right)
    &=
    \frac{1}{2}
    \vec{t}^{\,\matrixtranspose}
    \nabla_{\typeProportions}^2
    \mathcal F_{N,\beta}(\typeProportions_\star)
    \vec{t}
    +
    \frac{1}{\sqrt{n}}P_3(\vec{t})
    \nonumber\\
    &\qquad+
    O\left(
        \frac{1+\|\vec{t}\|^4}{n}
    \right),
\label{eq:second-moment-outer-rate-expansion}
\end{align}
where $P_3$ is homogeneous of degree three. Similarly,
\begin{equation}
\label{eq:second-moment-outer-amplitude-expansion}
     a_{N,z}(\typeProportions)
    =
     a_{N,z}(\typeProportions_\star)
    +
    \frac{1}{\sqrt{n}}L(\vec{t})
    +
    O\left(
        \frac{1+\|\vec{t}\|^2}{n}
    \right),
\end{equation}
where $L$ is linear. The complete term of order $n^{-\frac{1}{2}}$ obtained from
\cref{eq:second-moment-outer-rate-expansion} and
\cref{eq:second-moment-outer-amplitude-expansion} is odd.

Write
\[
    A_{N,\beta}:=- \nabla_{\typeProportions}^2
    {\mathcal F}_{N,\beta}(\typeProportions^\star).
\]
In the three independent coordinates, the possible values of
$\vec t=\sqrt n(\typeProportions-
\typeProportions^\star)$ form the affine lattice
\[
    \Lambda_n
    =
    \frac1{\sqrt n}
    \left(
        \mathbb Z^3-\frac n4\vec 1
    \right).
\]
Combining the expansions above, we obtain
\begin{align*}
    &a_{N,\beta}(\typeProportions)
    \exp\left(
        n\bigl(
            {\mathcal F}_{N,\beta}(\typeProportions)
            -
            {\mathcal F}_{N,\beta}(\typeProportions^\star)
        \bigr)
    \right)
    \\
    &\quad =
    e^{-\frac12\vec t^{\mathsf T}
        A_{N,\beta}\vec t}
    \left[
        a_{N,\beta}(\typeProportions^\star)
        +
        \frac1{\sqrt n}R_{N,\beta}(\vec t)
        +
        O\left(
            \frac{1+\|\vec t\|^6}{n}
        \right)
    \right],
\end{align*}
where $R_{N,\beta}$ is an odd polynomial of degree at most
three, with uniformly bounded coefficients.

Applying \cref{cor:poisson-affine-lattice-truncated} with
$\gamma=1/10$ and $P\equiv1$ gives
\[
    \frac1{n^{3/2}}
    \sum_{\substack{\vec t\in\Lambda_n\\
                    \|\vec t\|_\infty\leq n^{1/10}}}
    e^{-\frac12\vec t^{\mathsf T}
        A_{N,\beta}\vec t}
    =
    \frac{(2\pi)^{3/2}}
         {\sqrt{\det A_{N,\beta}}}
    +
    O(e^{-c n^{1/5}}).
\]
Applying the same corollary with $P=R_{N,\beta}$, and using
that $R_{N,\beta}$ is odd, gives
\[
    \frac1{n^{3/2}}
    \sum_{\substack{\vec t\in\Lambda_n\\
                    \|\vec t\|_\infty\leq n^{1/10}}}
    R_{N,\beta}(\vec t)
    e^{-\frac12\vec t^{\mathsf T}
        A_{N,\beta}\vec t}
    =
    O(e^{-c n^{1/5}}).
\]
Finally, the polynomially weighted Gaussian estimate in the
same corollary shows that the remaining terms contribute a
relative error $O(n^{-1})$. 
Consequently,
\begin{align}
    &\sum_{\typeProportions \in A_{n,\varepsilon}^{(2)}}
    a_{N,\beta}(\typeProportions)
    e^{nF_{N,\beta}(\typeProportions)} \label{eq:unconditioned type summation}
    \\
    &\quad =
    n^{3/2}
    \frac{(2\pi)^{3/2}
          a_{N,\beta}(\typeProportions^\star)
          e^{nF_{N,\beta}(\typeProportions^\star)}}
         {\sqrt{
             \det\bigl(
                 -\nabla_{\typeProportions}^2
                 F_{N,\beta}(\typeProportions^\star)
             \bigr)
         }}
    \left(1+O(n^{-1})\right).\nonumber
\end{align}
By \eqref{eq:second-moment-pre-laplace} we have
\begin{equation}
\label{eq:second-moment-inner-determinant-center}
    \det\left(
        \nabla_{\contourVariables}^2
        \Psi_{\typeProportions_\star,\beta}^{(N)}
        \left(
            \lambda_\beta^{(N)}\vec{1}
        \right)
    \right)
    =
    16h(\beta)^2(1+pq).
\end{equation}
Moreover,
\begin{equation}
\label{eq:second-moment-rate-center}
    \exp\left(
        n\mathcal F_{N,\beta}(\typeProportions_\star)
    \right)
    =
    (4h(\beta))^2
    4^n
    \left(
        \frac{h(\beta)}{4}
    \right)^{\frac{N}{2}}
    \left(
        1+O\left(\frac{1}{n}\right)
    \right).
\end{equation}
Combining these two expressions finishes the proof.    
\end{proof}

\subsection{Contour integral for conditioning on a fixed matching \texorpdfstring{$F$}{F}}
We now extend the preceding second-moment calculation to the setting where we condition on a fixed partial matching $F\subseteq {\bf M}$. The main point is that, after conditioning on $F$, the remaining unmatched sockets are still paired uniformly at random. Thus the same contour-integral calculation as in the unconditioned case continues to apply, but with slightly shifted type counts.

Let $F$ be a partial matching. Put
\[
    g:=|F|,
    \qquad
    v:=|V(F)|,
    \qquad
    V:=[n]\setminus V(F).
\]
Fix boundary assignments
\[
    \sigma^F,{\sigma'}^F\in\{-1,1\}^{V(F)}.
\]
For $a\in\{++,+-,-+,--\}$, let $m_a^F$ be the number of clouds in $V(F)$ of type $a$, and let $s_a^F$ be the number of sockets of type $a$ already matched by $F$. Thus
\[
    \sum_a m_a^F=v,
    \qquad
    \sum_a s_a^F=2g.
\]
Define
\[
    b_a^F:=m_a^F-\frac{s_a^F}{d},
    \qquad
    B_F:=\sum_a b_a^F=v-\frac{2g}{d}.
\]

For an assignment $(\sigma,\sigma')$ on $V$, let $m_a$ be the number of clouds of type $a$. The number of unmatched sockets of type $a$ is
\[
    d(m_a+m_a^F)-s_a^F
    =
    d(m_a+b_a^F).
\]
Define
\[
    m_a':=m_a+b_a^F,
    \qquad
    N':=N-2g,
    \qquad
    n':=\frac{N'}{d}=n-\frac{2g}{d}.
\]
Then
\[
    \sum_a m_a'=n'.
\]
Although $n'$ need not be an integer, each $dm_a'$ and $N'$ is an integer.

Next, write
\[
    \typeProportions:=\frac{\momentCounts'}{n'},
\]
and define
\begin{equation}
\label{eq:conditioned-coefficient}
    f_{N'}(\beta,\momentCounts')
    =
    \frac{
        \prod_a(\alpha_aN')!
    }{
        (N'-1)!!\,(N')^{\frac{N'}{2}}
    }
    \frac{1}{(2\pi i)^4}
    \oint
    \exp\left(
        N'\Psi_{\typeProportions,\beta}^{(N')}(\contourVariables)
    \right)
    \,\differential\contourVariables.
\end{equation}
Since after conditioning on $F$, the unmatched sockets are paired by a uniform perfect matching, the proof of~\cref{lem:2nd moment contour representation} immediately gives the following.

\begin{lemma}\label{lem:conditioned 2nd moment contour representation}
With notation as above we have 
\begin{align}
    &\E_M\left[
        Z_{M,<\varepsilon}^{(2)}(\beta)
        \,\middle|\,
        F\subseteq M
    \right]
    \nonumber\\
    &\qquad=
    \sum_{\sigma^F,{\sigma'}^F}
    \exp\left(
        \beta \isingHamiltonian_{F}(\sigma^F)
        +
        \overline{\beta}\isingHamiltonian_{F}({\sigma'}^F)
    \right)
    \mathcal S_F(\sigma^F,{\sigma'}^F),
\label{eq:conditioned-boundary-sum}
\end{align}
where
\begin{equation}
\label{eq:conditioned-fixed-boundary}
    \mathcal S_F(\sigma^F,{\sigma'}^F)
    :=
    \sum_{\momentCounts'}
    \binom{n-v}{\momentCounts'-\vec{b}^{\,F}}
    f_{N'}(\beta,\momentCounts'),
\end{equation}
restricted to those $\vec{m'}$ satisfying the truncation condition.
\end{lemma}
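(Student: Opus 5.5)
The identity follows by conditioning on the spins on $V(F)$ and then counting completions of the matching on the remaining sockets, exactly as in \cref{lem:2nd moment contour representation}.

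First, split $Z_{M,<\varepsilon}^{(2)}(\beta)$ according to the restrictions $(\sigma^F,{\sigma'}^F)$ of $(\sigma,\sigma')$ to $V(F)$. On the event $F\subseteq M$, the Hamiltonian decomposes as $\isingHamiltonian_M(\sigma)=\isingHamiltonian_F(\sigma^F)+\isingHamiltonian_{M\setminus F}(\sigma)$, and likewise for $\sigma'$. The factor $\exp(\beta\isingHamiltonian_F(\sigma^F)+\overline{\beta}\isingHamiltonian_F({\sigma'}^F))$ is deterministic given the boundary assignment, so it comes out of the conditional expectation. This produces the outer sum in \eqref{eq:conditioned-boundary-sum}.

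Second, use the fact that conditional on $F\subseteq M$, the set $M\setminus F$ is a uniform perfect matching on the $N'=N-2g$ unmatched sockets. Fix the boundary and an assignment on $V$ with cloud counts $\momentCounts$. By the computation preceding the lemma, there are exactly $dm_a'=d(m_a+b_a^F)$ unmatched sockets of type $a$. The exponential formula for matchings applied to these socket classes then gives
\[
  \E\left[\exp\left(\beta\isingHamiltonian_{M\setminus F}(\sigma)+\overline{\beta}\isingHamiltonian_{M\setminus F}(\sigma')\right)\,\middle|\,F\subseteq M\right]
  =\frac{\prod_a(dm_a')!}{(N'-1)!!}\,[\contourCoefficients^{d\momentCounts'}]\exp\left(\tfrac12\contourCoefficients^{\matrixtranspose}Q(p,q)\contourCoefficients\right).
\]
This is the same generating-function identity as \eqref{eq:second-moment-coefficient}: the edge weight $Q(p,q)_{ab}$ depends only on the types of the two sockets it joins, so cross-cloud edges between $V$ and $V(F)$ are handled automatically.

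Third, extract the coefficient by Cauchy's formula and rescale $\contourCoefficients=\sqrt{N'}\contourVariables$ with $\typeProportions=\momentCounts'/n'$. One checks that $\alpha_aN'=dm_a'$, which yields exactly $f_{N'}(\beta,\momentCounts')$ in \eqref{eq:conditioned-coefficient}. The number of assignments on $V$ with counts $\momentCounts=\momentCounts'-\vec b^{\,F}$ is the multinomial $\binom{n-v}{\momentCounts'-\vec b^{\,F}}$. Summing over admissible $\momentCounts'$, meaning those for which the full configuration $(\sigma,\sigma')$ satisfies the balance and overlap truncation, gives $\mathcal S_F$. This completes \eqref{eq:conditioned-fixed-boundary}.

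No step is a real obstacle; the only point needing care is bookkeeping. The truncation condition must be expressed through the full counts $\momentCounts+\momentCounts^F$, while the socket exponents use the shifted counts $\momentCounts'$. The proof should also note that non-integrality of $n'$ is harmless, because only the integers $dm_a'$ and $N'$ appear in the factorials and exponents.
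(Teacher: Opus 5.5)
Your proposal is correct and follows the same route as the paper. The paper proves this lemma in one line: conditional on $F\subseteq M$, the remaining $N'$ sockets are paired by a uniform perfect matching, so the proof of \cref{lem:2nd moment contour representation} applies verbatim. You have simply written out what that one line leaves implicit: the split $\isingHamiltonian_M=\isingHamiltonian_F+\isingHamiltonian_{M\setminus F}$ at the boundary, the unmatched-socket counts $dm_a'=d(m_a+b_a^F)$, the rescaling with $\alpha_aN'=dm_a'$ that produces $f_{N'}$, and the multinomial over the counts on $V$.
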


For our proof of~\cref{prop:2nd-moment} at the level of the contour integral, no new analysis is required: we may reuse the uniform contour estimate proved above with $N'$ in place of $N$.
The only difference from the unconditioned second moment is now in the multinomial coefficient. 
Indeed, the contour term depends naturally on the effective proportions $\typeProportions=\momentCounts'/n'$, whereas the number of spin assignments on $V$ is determined by the proportions on the actual unexposed vertex set $V$. 
The next subsection deals with this.

\subsection{Proof of \cref{prop:2nd-moment}}
We start by estimating  $\mathcal{S}_F$ for a fixed matching $F$.
To this end we introduce some notation.

Let $L:=n-v$ and define
\[
    \typeProportions^V
    :=
    \frac{
        \momentCounts'-\vec{b}^{\,F}
    }{
        L
    }.
\]
Since $n'=L+B_F$, we have
\begin{equation}
\label{eq:boundary-alpha-relation}
    \typeProportions^V-\typeProportions
    =
    \frac{
        B_F\typeProportions-\vec{b}^{\,F}
    }{
        L
    }.
\end{equation}
In particular,
\[
    \left\|
        \typeProportions^V-\typeProportions
    \right\|_\infty
    =
    O\left(\frac{g}{n}\right)
\]
uniformly over the boundary assignments.

Define
\begin{equation}
\label{eq:boundary-entropy-correction}
    \Delta_F(\typeProportions)
    :=
    LH(\typeProportions^V)-n'H(\typeProportions)+B_F\log4
\end{equation}
and
\begin{equation}
\label{eq:boundary-amplitude-correction}
    R_F(\typeProportions)
    :=
    \left(
        \frac{n'}{L}
    \right)^{\frac{3}{2}}
    \prod_a
    \left(
        \frac{\alpha_a}{\alpha_a^V}
    \right)^{\frac{1}{2}}.
\end{equation}

\begin{lemma}
\label{lem:boundary-shifted-laplace}
Fix $0<\eta<1$. There exist $\rho=\rho(d,\eta)>0$, $\varepsilon_0=\varepsilon_0(d,\eta)>0$, and $C=C(d,\eta)>0$ such that the following holds uniformly for $\beta\in K_\eta(\rho)$, $0<\varepsilon\leq\varepsilon_0$, every partial matching $F$ of size $g=O(\log n)$, and every pair of boundary assignments $\sigma^F,{\sigma'}^F$, we have
\begin{align}
    \mathcal S_F(\sigma^F,{\sigma'}^F)
    &=
    \left(
        1+
        C\left(
            \frac{(1+g)^2}{n}
        \right)
    \right)
    4^{n-v}
    \left(
        \frac{h(\beta)}{4}
    \right)^{\frac{N-2g}{2}}
    \frac{
        \sqrt{h(\beta)}
    }{
        \sqrt{
            \left(d(p-1)+2\right)
            \left(d(q-1)+2\right)
        }
    }
    \nonumber\\
    &\qquad\times
    \frac{1}{
        \sqrt{
            1-(d-1)\delta_2(\beta)
        }
    }.
\label{eq:conditioned-bulk-estimate}
\end{align}
\end{lemma}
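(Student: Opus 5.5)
The plan is to repeat the proof of \cref{thm:unconditioned-second-moment} with $N'$ in place of $N$, isolating the boundary effects in the two correction factors $\Delta_F$ and $R_F$.

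First, I rewrite each summand of \eqref{eq:conditioned-fixed-boundary} in the form used there. By \cref{thm:second-moment-contour} applied with $N'$ (allowed, since $\|\typeProportions-\typeProportions_\star\|_\infty\le\varepsilon+O(g/n)\le\varepsilon_0$), the contour integral in $f_{N'}$ is
\[
\bigl(1+O(1/n)\bigr)\,\frac{8\pi^2}{N'^2}\, e^{N'\Psi^{(N')}_{\typeProportions,\beta}}\, a_{N',\beta}(\typeProportions).
\]
Next I handle the multinomial coefficient. Writing
\[
\binom{L}{L\typeProportions^V}=\binom{n'}{n'\typeProportions}\cdot\frac{\binom{L}{L\typeProportions^V}}{\binom{n'}{n'\typeProportions}},
\]
Stirling gives the ratio as $e^{\Delta_F(\typeProportions)-B_F\log 4}\,R_F(\typeProportions)\,(1+O((1+g)/n))$. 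Here the half-integrality of $n'\alpha_a$ is harmless, using Gamma functions. After this step, each summand equals the unconditioned summand at size $N'$, multiplied by $4^{-B_F}e^{\Delta_F}R_F$.

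The key computation is then to expand $\Delta_F$ and $R_F$ around $\typeProportions_\star$, using \eqref{eq:boundary-alpha-relation}. Put $u:=B_F\typeProportions-\vec b^{\,F}$, so $\typeProportions^V=\typeProportions+u/L$ with $\|u\|=O(g)$. A first-order expansion gives
\[
LH(\typeProportions^V)=LH(\typeProportions)-\sum_a u_a(1+\log\alpha_a)+O(g^2/n).
\]
Since $\sum_a u_a=B_F-B_F=0$, this yields
\[
\Delta_F=-B_FH(\typeProportions)+B_F\log4-\sum_a u_a\log\alpha_a+O(g^2/n).
\]
At $\typeProportions_\star$ we have $H=\log 4$ and $\sum_a u_a\log\alpha_a=0$. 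For $\typeProportions=\typeProportions_\star+\vec t/\sqrt{n'}$, both $H(\typeProportions)-\log4$ and $\sum_a u_a\log\alpha_a$ are $O(\|\vec t\|/\sqrt n)$ times $g$. This gives $\Delta_F=O(g\|\vec t\|/\sqrt n+g^2/n)$ and likewise $R_F=1+O(g/n)$.

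These corrections are then fed into the Gaussian lattice sum. The linear term $g\,\ell(\vec t)/\sqrt n$ is odd and cancels against the Gaussian weight, exactly as the odd terms do in the proof of \cref{thm:unconditioned-second-moment}; its square contributes $O(g^2/n)$. The lattice is now a shift of $\frac{1}{\sqrt{n'}}\mathbb Z^3$ by a boundary offset, but \cref{cor:poisson-affine-lattice-truncated} handles arbitrary affine shifts, so the analysis goes through unchanged.

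It remains to collect the prefactors. The unconditioned result at size $n'$ gives $4^{n'}(h/4)^{N'/2}\times(\text{constant factors})$. Combined with $4^{-B_F}$, this becomes $4^{n'-B_F}=4^{L}=4^{n-v}$, and $N'=N-2g$, which matches \eqref{eq:conditioned-bulk-estimate}. The replacement of $n,N$ by $n',N'$ inside $O(1/n)$ terms and in the determinant of \cref{lem:Hessian determinant} costs only $O(g/n)$.

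The main obstacle is making the error uniform, with size $(1+g)^2/n$, over all boundary assignments, in particular for the near-tail region $\|\vec t\|\le n^{1/10}$. Cross terms such as $g\|\vec t\|^3/n$ must be bounded there. My first approach is to treat $e^{\Delta_F}R_F$ as a smooth amplitude whose derivatives are $O(g)$ and absorb it into $a_{N',\beta}$. Outside that region, the factor $e^{O(g n^{-2/5})}$ is dominated by the Gaussian decay, since $g=O(\log n)$.
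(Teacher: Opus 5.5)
Your proposal is correct and follows the paper's proof essentially step for step. Both arguments use the same chain: Stirling to extract the factor $4^{-B_F}R_F(\typeProportions)e^{\Delta_F(\typeProportions)}$, the contour estimate at size $N'$, the bounds $\Delta_F(\typeProportions_\star)=O(g^2/n)$ with $O(g)$ derivatives and $R_F=1+O(g/n)$, cancellation of the odd $O(g)/\sqrt{n'}$ linear term via \cref{cor:poisson-affine-lattice-truncated} on the shifted lattice, and finally $4^{-B_F}4^{n'}=4^{n-v}$.

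Your explicit expansion $\Delta_F=-B_F(H(\typeProportions)-\log 4)-\sum_a u_a\log\alpha_a+O(g^2/n)$, using $\sum_a u_a=0$, is a cleaner justification of the Taylor step that the paper only asserts. One small correction: away from the $n^{-2/5}$-neighbourhood the correction factor is $e^{O(g\|\typeProportions-\typeProportions_\star\|)}$, not $e^{O(gn^{-2/5})}$. It is still swamped by $e^{-cn\|\typeProportions-\typeProportions_\star\|^2}$ there, so your conclusion stands.
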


\begin{proof}
Uniform Stirling estimates give
\begin{align}
    \binom{L}{\momentCounts'-\vec{b}^{\,F}}
    &=
    4^{-B_F}
    R_F(\typeProportions)
    \exp\left(
        \Delta_F(\typeProportions)
    \right)
    \frac{
        \exp\left(
            n'H(\typeProportions)
        \right)
    }{
        (2\pi n')^{\frac{3}{2}}
        \prod_a\alpha_a^{\frac{1}{2}}
    }
    \left(
        1+O\left(\frac{1+g}{n}\right)
    \right).
\label{eq:boundary-stirling}
\end{align}
A Taylor expansion, using
\eqref{eq:boundary-alpha-relation}, gives
\begin{equation}
\label{eq:boundary-entropy-center}
    \Delta_F(\typeProportions_\star)
    =
    O\left(\frac{g^2}{n}\right).
\end{equation}
Moreover, uniformly on a fixed neighborhood of $\typeProportions_\star$,
\begin{equation}
\label{eq:boundary-entropy-derivatives}
    \left|
        \frac{\partial}{\partial\typeProportions_a} \Delta_F(\typeProportions)
    \right|
    \leq
    C_a g,
    \qquad
    a\in \{--,-+,+-,++\}.
\end{equation}
Similarly,
\begin{equation}
\label{eq:boundary-amplitude-bound}
    R_F(\typeProportions)
    =
    1+O\left(\frac{g}{n}\right),
\end{equation}
and all derivatives of $R_F$ of order at most four are $O(g/n)$.

Substituting \eqref{eq:conditioned-coefficient} and
\eqref{eq:boundary-stirling} into
\eqref{eq:conditioned-fixed-boundary}, and using the contour estimate with $N'$ in place of $N$, gives the same three-dimensional Laplace sum as in the unconditioned calculation~\cref{eq:unconditioned type summation}, multiplied by the factor
\[
     4^{-B_F}
    R_F(\typeProportions)
    \exp\left(
        \Delta_F(\typeProportions)
    \right).   
\]
The replacement of $N$ and $n$ by $N'$ and $n'$ changes the uniform constants only by
\[
    1+O\left(\frac{g}{n}\right).
\]
The contribution outside the $n^{-\frac{2}{5}}$ neighborhood of $\typeProportions_\star$ is exponentially small, uniformly in the boundary assignments provided $g$ is small enough; $g=O(\log n)$ clearly suffices. 
For $\typeProportions$ contained in a $n^{-\frac{2}{5}}$ neighborhood of $\typeProportions_\star$ we rescale 
\[
    \typeProportions
    =
    \typeProportions_\star+\frac{\vec{t}}{\sqrt{n'}},
\]
with $\|\vec t\|_\infty\leq n^{1/10}$.
Equations
\eqref{eq:boundary-entropy-center},
\eqref{eq:boundary-entropy-derivatives}, and
\eqref{eq:boundary-amplitude-bound} imply
\begin{align}
    R_F(\typeProportions)
    \exp\left(
        \Delta_F(\typeProportions)
    \right)
    &=
    1
    +
    \frac{1}{\sqrt{n'}}L_F(\vec{t})
    \nonumber\\
    &\qquad+
    O\left(
        \frac{
            (1+g)^2
            \left(
                1+\|\vec{t}\|^{2}
            \right)
        }{n}
    \right),
\label{eq:boundary-local-expansion}
\end{align}
where $L_F$ is linear and its coefficients are $O(g)$. The order $(n')^{-\frac{1}{2}}$ term in the unconditioned Laplace expansion is also odd. Hence the complete term of order $(n')^{-\frac{1}{2}}$ is odd. 

In the three independent coordinates, the possible values of
\[
    \vec t
    =
    \sqrt{n'}
    \bigl(
        \typeProportions- \typeProportions^\star
    \bigr)
\]
form the affine lattice
\[
    \Lambda_{n',F}
    :=
    \frac1{\sqrt{n'}}
    \left(
        \mathbb Z^3+\vec b^F_{[3]}
        -\frac{n'}4\vec 1
    \right).
\]
Write
\[
    A_{N',\beta}
    :=
    -\nabla_{\typeProportions}^2
    F_{N',\beta}(\typeProportions^\star).
\]

Combining the unconditioned Laplace expansion with
\cref{eq:boundary-local-expansion}, we obtain, uniformly
for $\|\vec t\|_\infty\leq n^{1/10}$,
\begin{align*}
    &a_{N',\beta}(\typeProportions)
    R_F(\typeProportions)e^{\Delta_F(\typeProportions)}
    \exp\left(
        n'\bigl(
            {\mathcal F}_{N',\beta}(\typeProportions)
            -
            {\mathcal F}_{N',\beta}(\typeProportions^\star)
        \bigr)
    \right)
    \\
    &\quad =
    e^{-\frac12\vec t^{\mathsf T}
        A_{N',\beta}\vec t}
    \left[
        a_{N',\beta}(\typeProportions^\star)
        +
        \frac1{\sqrt{n'}}
        R_{N',\beta,F}(\vec t)
        +
        O\left(
            \frac{(1+g)^2(1+\|\vec t\|^6)}{n}
        \right)
    \right],
\end{align*}
where $R_{N',\beta,F}$ is an odd polynomial of degree at most
three whose coefficients are $O(1+g)$.

Applying \cref{cor:poisson-affine-lattice-truncated} with
scaling parameter $n'$, $\gamma=1/10$, and $P\equiv1$, gives
\[
    \frac1{(n')^{3/2}}
    \sum_{\substack{\vec t\in\Lambda_{n',F}\\
                    \|\vec t\|_\infty\leq n^{1/10}}}
    e^{-\frac12\vec t^{\mathsf T}
        A_{N',\beta}\vec t}
    =
    \frac{(2\pi)^{3/2}}
         {\sqrt{\det A_{N',\beta}}}
    +
    O(e^{-cn^{1/5}}).
\]
Since $R_{N',\beta,F}$ is odd, another application of the
corollary gives
\[
    \frac1{(n')^{3/2}}
    \sum_{\substack{\vec t\in\Lambda_{n',F}\\
                    \|\vec t\|_\infty\leq n^{1/10}}}
    R_{N',\beta,F}(\vec t)
    e^{-\frac12\vec t^{\mathsf T}
        A_{N',\beta}\vec  t}
    =
    O\left((1+g)e^{-cn^{1/5}}\right).
\]
Finally, the polynomially weighted Gaussian estimate in the
same corollary shows that the remaining terms contribute a
relative error
\[
    O\left(\frac{(1+g)^2}{n}\right).
\]
Consequently, the conditioned central sum is equal to
$4^{-B_F}$ times the unconditioned central expression with
$n'$ and $N'$ in place of $n$ and $N$, up to a relative error
$O((1+g)^2/n)$.

Thus the conditioned sum equals $4^{-B_F}$ times the unconditioned central expression with $n'$ and $N'$ in place of $n$ and $N$, up to the asserted relative error. Since
\[
    4^{-B_F}4^{n'}=4^{n-v},
\]
this gives \eqref{eq:conditioned-bulk-estimate}.
\end{proof}

With the first moment calculation we can now prove \cref{prop:2nd-moment}
\begin{proof}[Proof of \cref{prop:2nd-moment}]
We assume that $\varepsilon>0$ and $\eta>0$ are small enough as demanded by the previous lemma as well by~\cref{l:1mt}.

The full conditioned second moment is given by
\begin{align}
    &\E_M\left[
        Z_{M,<\varepsilon}^{(2)}(\beta)
        \,\middle|\,
        F\subseteq M
    \right]
    \nonumber\\
    &\qquad=
    \sum_{\sigma^F,{\sigma'}^F}
    \exp\left(
        \beta \isingHamiltonian_{F}(\sigma^F)
        +
        \overline{\beta}\isingHamiltonian_{F}({\sigma'}^F)
    \right)
    \mathcal S_F(\sigma^F,{\sigma'}^F).
\label{eq:conditioned-boundary-sum}
\end{align}
The main term in ~\cref{lem:boundary-shifted-laplace} is independent of the boundary assignments. Moreover,
\begin{align}
    &\sum_{\sigma^F,{\sigma'}^F}
    \exp\left(
        \beta \isingHamiltonian_{F}(\sigma^F)
        +
        \overline{\beta}\isingHamiltonian_{F}({\sigma'}^F)
    \right)
    \nonumber\\
    &\qquad=
    Z_F(\beta)\overline{Z_F(\beta)}
    =
    |Z_F(\beta)|^2.
\label{eq:boundary-partition-function}
\end{align}

To control the error without assuming a lower bound on $|Z_F(\beta)|$, define
\[
    Z_F^{\mathrm{abs}}(\beta)
    :=
    \sum_{\sigma^F}
    \exp\left(
        \re(\beta)\isingHamiltonian_{F}(\sigma^F)
    \right).
\]
Then
\[
    \sum_{\sigma^F,{\sigma'}^F}
    \left|
        \exp\left(
            \beta \isingHamiltonian_{F}(\sigma^F)
            +
            \overline{\beta}\isingHamiltonian_{F}({\sigma'}^F)
        \right)
    \right|
    =
    \left(
        Z_F^{\mathrm{abs}}(\beta)
    \right)^2.
\]
Since $v\leq2g$ and $\beta$ ranges over the compact set $K_\eta(\rho)$, there exists $C_\eta>1$ such that
\begin{equation}
\label{eq:boundary-absolute-bound}
    4^{-v}
    \left|
        \frac{h(\beta)}{4}
    \right|^{-g}
    \left(
        Z_F^{\mathrm{abs}}(\beta)
    \right)^2
    \leq
    C_\eta^g.
\end{equation}

Combining
\eqref{eq:conditioned-bulk-estimate},
\eqref{eq:conditioned-boundary-sum},
\eqref{eq:boundary-partition-function}, and
\eqref{eq:boundary-absolute-bound}, we obtain
\begin{align}
    &\E_M\left[
        Z_{M,<\varepsilon}^{(2)}(\beta)
        \,\middle|\,
        F\subseteq M
    \right]
    \nonumber\\
    &\qquad=
    4^{n-v}
    \left(
        \frac{h(\beta)}{4}
    \right)^{\frac{N-2g}{2}}
    \frac{
        \sqrt{h(\beta)}
    }{
        \sqrt{
            \left(d(p-1)+2\right)
            \left(d(q-1)+2\right)
        }
    }
    \frac{
        |Z_F(\beta)|^2
    }{
        \sqrt{
            1-(d-1)\delta_2(\beta)
        }
    }
    +
    \mathcal R_F,
\label{eq:conditioned-before-normalization}
\end{align}
where
\begin{equation}
\label{eq:conditioned-error-bound}
    |\mathcal R_F|
    \leq
    C_\eta^g
    \frac{(1+g)^2}{n}
    4^n
    \left|
        \frac{h(\beta)}{4}
    \right|^{\frac{N}{2}}
    \frac{
        \sqrt{h(\beta)}
    }{
        \sqrt{
            \left(d(p-1)+2\right)
            \left(d(q-1)+2\right)
        }
    }.
\end{equation}

With the first moment calculation we can now obtain the following

\begin{align}
    \frac{
        \E_M\left[
            Z_{M,<\varepsilon}^{(2)}(\beta)
            \,\middle|\,
            F\subseteq M
        \right]
    }{
        \left|
            \E_M\left[
                Z_{M,<\varepsilon}(\beta)
            \right]
        \right|^2
    }
    &=
    \frac{
        |\widetilde Z_F(\beta)|^2
    }{
        \sqrt{
            1-(d-1)\delta_2(\beta)
        }
    }
    +
    O\left(
        \frac{
            \mathcal E(\eta,m)
        }{n}
    \right),
\end{align}
where one may take
\[
    \mathcal E(\eta,m)
    :=
    C_\eta^m(1+m)^2.
\]
This finishes the proof.
\end{proof}

\section{Broadcast analysis of augmented planted model}
\label{s:planted-augmented}

In this appendix, we will prove ~\cref{l:planted-augmented}.
Our argument will reduce to analyzing the following broadcast problem on a tree.
This is done by a slight adaptation of an argument of \cite{BRZ95}.
\begin{definition}[Antiferromagnetic Ising broadcast model with side information]\label{d:antiferro-broadcast}
  Let $d\ge 3$, $\beta \in \R$, $t \ge 0$, and $R \in \N$.
  Let $\T$ be a tree with root $o$ and depth $R$, where every vertex of depth $<R$ has degree $d$.
  For $0\le r\le R$, let $V_r$ be the set of vertices of depth $r$.
  For each $u \in V_r$ where $r<R$, write $S(u)$ for the set of children of $u$.
  Let $\nu^{\broadcastTag,R}_{\beta,t}$ denote the law of $(\sigma_\T, {\boldsymbol{v}}_\T) = (\sigma_u, v_u : u \in \T)$ generated as follows:
  \begin{itemize}
    \item Sample $\sigma_o \sim \unif(\{\pm 1\})$.
    \item For each $u \in V_r$ where $r<R$, conditional on $\sigma_u$, sample the spins of its children $(\sigma_w : w \in S(u))$ IID from the distribution
    \[
      \Pr(\sigma_w = \sigma_u) = \frac{1}{1+e^\beta}\,, \qquad 
      \Pr(\sigma_w = -\sigma_u) = \frac{e^\beta}{1+e^\beta}\,.
    \]
    \item For each $u\in \T$, sample $v_u = \sqrt{t} \sigma_u + Z_u$, for $Z_u \sim {\mathcal{N}}(0,1)$ independent of everything else.
  \end{itemize}
  Let $\E^{\broadcastTag,R}_{\beta,t}$ denote expectation with respect to $\nu^{\broadcastTag,R}_{\beta,t}$.
  Further define ${\mathbb{L}} = V_R$ and $\sigma_{\mathbb{L}} = (\sigma_u : u\in {\mathbb{L}})$.
  Note that $\T,{\mathbb{L}}$ both depend on $d,R$ but we suppress this from the notation.
\end{definition}
We will interpret this model as the problem of inferring the root spin $\sigma_o$ from $(\sigma_{\mathbb{L}}, {\boldsymbol{v}}_\T)$.
\begin{lemma}\label{l:broadcast}
  For all $\varepsilon_2 > 0$, there exists $t_0 = t_0(d,\delta,\varepsilon_2) > 0$ and $R_0 = R_0(d,\delta,\varepsilon_2) > 0$ such that for all $\beta \in [2\beta_\lowerBoundTag, \betaKS - \delta]$, $t \in [0,t_0]$, and $R \ge R_0$ the following holds.
  Let $(\sigma_\T, {\boldsymbol{v}}_\T) \sim \nu^{\broadcastTag,R}_{\beta,t}$.
  Then,
  \[
    \E^{\broadcastTag,R}_{\beta,t} \Big[
      \,\big|\E^{\broadcastTag,R}_{\beta,t}[\sigma_o | \sigma_{\mathbb{L}}, {\boldsymbol{v}}_\T]\big|\,
    \Big] \le \varepsilon_2\,.
  \]
\begin{proof}
  In this proof we write $\E$ for $\E^{\broadcastTag,R}_{\beta,t}$.
  For vertex $u\in \T$, let 
  \[
    \observedData(u) = \Big(
      (\sigma_w : w \in {\mathbb{L}}, w\,\text{descendant of}\,u), 
      (v_w : w \in \T, w\,\text{descendant of}\,u)
    \Big)
  \]
  denote the part of $(\sigma_{\mathbb{L}}, {\boldsymbol{v}}_\T)$ in the descendant subtree of $u$ (including $v_u$).
  Note that $\observedData(o) = (\sigma_{\mathbb{L}}, {\boldsymbol{v}}_\T)$.
  Define the belief propagation message
  \[
    h_u = \frac12 \log \frac{
      \Pr(\sigma_u = 1 | \observedData(u))
    }{
      \Pr(\sigma_u = -1 | \observedData(u))
    }\,,
  \]
  so that $\E[\sigma_u | \observedData(u)] = \tanh(h_u)$.
  Note that for $u\in V_R$, we have $h_u = \sigma_u \cdot \infty$ because $\sigma_u$ is revealed.
  For any non-leaf $u$,
  \[
    \frac{
      \Pr(\sigma_u = 1 | \observedData(u))
    }{
      \Pr(\sigma_u = -1 | \observedData(u))
    }
    = \frac{p(v_u | \sigma_u = 1)}{p(v_u | \sigma_u = -1)} 
    \prod_{w\in S(u)} \frac{
      \Pr(\sigma_w = 1 | \observedData(w)) + e^\beta \Pr(\sigma_w = -1 | \observedData(w))
    }{
      e^\beta \Pr(\sigma_w = 1 | \observedData(w)) + \Pr(\sigma_w = -1 | \observedData(w))
    }
  \]
  This can be simplified by
  \[
    \frac{p(v_u | \sigma_u = 1)}{p(v_u | \sigma_u = -1)} 
    = \frac{
      p(Z_u = v_u - \sqrt{t})
    }{
      p(Z_u = v_u + \sqrt{t})
    }
    = \frac{
      \exp(-\frac{1}{2} (v_u - \sqrt{t})^2)
    }{
      \exp(-\frac{1}{2} (v_u + \sqrt{t})^2)
    }
    = e^{2\sqrt{t} v_u}
  \]
  and
  \[
    \frac{
      \Pr(\sigma_w = 1 | \observedData(w)) + e^\beta \Pr(\sigma_w = -1 | \observedData(w))
    }{
      e^\beta \Pr(\sigma_w = 1 | \observedData(w)) + \Pr(\sigma_w = -1 | \observedData(w))
    }
    = \frac{
      e^{-\beta/2} e^{h_w} + e^{\beta/2} e^{-h_w} 
    }{
      e^{\beta/2} e^{h_w} + e^{-\beta/2} e^{-h_w} 
    }\,.
  \]
  Thus
  \[
    h_u
    = \sqrt{t} v_u + \sum_{w\in S(u)} \frac12 \log \frac{
      e^{-\beta/2} e^{h_w} + e^{\beta/2} e^{-h_w} 
    }{
      e^{\beta/2} e^{h_w} + e^{-\beta/2} e^{-h_w} 
    }
    = \sqrt{t} v_u - \sum_{w\in S(u)} \arctanh(\theta \tanh(h_w))\,,
  \]
  for $\theta = \tanh(\beta/2)$.
  Let
  \[
    \tilde h_u = - \sum_{w\in S(u)} \arctanh(\theta \tanh(h_w))\,.
  \]
  Then, \cite[Lemma 4.1]{BRZ95} shows that for each non-root, non-leaf $u$ and any $w\in S(u)$,
  \[
    \E[\sinh^2(\tilde h_u)] \le (d-1) \theta^2 \E[\sinh^2(h_w)].
  \]
  where the expectation is over the randomness of the generative process described in ~\cref{d:antiferro-broadcast}.
  Thus
  \begin{align*}
    \E[\sinh^2(h_u)]
    &= \frac14 \E\left[
      \exp(2h_u) + \exp(-2h_u) - 2
    \right] \\
    &= \frac14 \E\left[
      \exp(2\tilde h_u + 2t \sigma_u + 2\sqrt{t} Z_u) + \exp(- 2\tilde h_u - 2t \sigma_u - 2\sqrt{t} Z_u) - 2
    \right] \\
    &= \frac14 \E\left[
      \exp(2\tilde h_u + 2t \sigma_u + 2t) + \exp(- 2\tilde h_u - 2t \sigma_u + 2t) - 2
    \right] \\
    &\le \frac14 \E\left[
      e^{4t} (\exp(2\tilde h_u) + \exp(- 2\tilde h_u)) - 2
    \right]
    = e^{4t} \E\sinh^2(\tilde h_u) + \frac12 (e^{4t}-1) \\
    &\le e^{4t} (d-1) \theta^2 \E[\sinh^2(h_w)] + (e^{4t}-1)\,.
  \end{align*}
  \cite[Lemma 4.1]{BRZ95} also shows that at the root $o$, and $w\in S(o)$,\footnote{The root requires special handling because it has $d$ children, whereas all other nodes have $d-1$ children. The argument in \cite{BRZ95} avoids this issue by working with a $k$-ary tree (where $k=d-1$) rather than a $d$-regular tree.} 
  \[
    \E[\sinh^2(\tilde h_o)] \le d \theta^2 \E[\sinh^2(h_w)],
  \]
  and arguing similarly as above
  \[
    \E[\sinh^2(h_o)] \le e^{4t} d \theta^2 \E[\sinh^2(h_w)] + (e^{4t}-1)\,.
  \]
  Let $x_r$ be the value of $\E[\sinh^2(h_u)]$ for $u\in V_r$, which clearly depends on $u$ only through its depth $r$.
  We have thus shown
  \begin{align}
    \nonumber
    x_0 &\le e^{4t} d\theta^2 x_1 + (e^{4t}-1)\,, \\
    \label{e:sinh-recursion}
    x_r &\le e^{4t} (d-1)\theta^2 x_{r+1} + (e^{4t}-1)\,, \qquad \text{for}\, 1\le r\le R-2\,.
  \end{align}
  We next bound $x_{R-1}$ to obtain an initial condition for this recursion.
  (Note that we cannot start from $x_R = \infty$.)
  For any $u\in V_{R-1}$, almost surely
  \[
    |\tilde h_u| \le \sum_{w\in S(u)} |\arctanh(\theta)| = \frac{(d-1) |\beta|}{2} \equiv h_{\upperBoundTag}\,.
  \]
  Thus,
  \begin{align*}
    x_{R-1} &= \E[\sinh^2(h_u)]
    = \frac14 \E[e^{2\tilde h_u + 2t\sigma_u + 2\sqrt{t}Z_u} + e^{-2\tilde h_u - 2t\sigma_u - 2\sqrt{t}Z_u} - 2] \\
    &\le \frac14 \E[(e^{2\tilde h_u} + e^{-2\tilde h_u}) e^{4t}]
    \le e^{2h_{\upperBoundTag} + 4t}
    \equiv x_{R-1,\upperBoundTag}\,.
  \end{align*}
  Since $\beta \in [2\beta_\lowerBoundTag, \betaKS - \delta]$, and $|2\beta_\lowerBoundTag| < \betaKS$, there exists $\iota = \iota(d,\delta) > 0$ such that
  \[
    \theta^2 = \tanh^2(\beta/2) 
    \le \tanh^2(\betaKS/2)(1-2\iota)
    = \frac{1-2\iota}{d-1}\,.
  \]
  The last equality follows from the definition $\betaKS = 2 \arctanh(\frac{1}{\sqrt{d-1}})$.
  Thus, there exists $t_{0,+} = t_{0,+}(d,\delta)$ such that for all $t\in [0,t_{0,+}]$,
  \[
    e^{4t} (d-1)\theta^2 \le 1-\iota\,.
  \]
  Iteratively applying \eqref{e:sinh-recursion} yields
  \[
    x_1 \le (e^{4t}-1) \sum_{\ell = 0}^{R-3} (1-\iota)^\ell + (1-\iota)^{R-2} x_{R-1,\upperBoundTag}
    \le \frac{e^{4t}-1}{\iota} + (1-\iota)^{R-2} x_{R-1,\upperBoundTag}\,.
  \]
  Note that
  \[
    e^{4t} d\theta^2 \le \frac{d}{d-1} (1-\iota) \le 2\,,
  \]
  so
  \[
    x_0 \le 2x_1 + (e^{4t}-1)
    \le \frac{3(e^{4t}-1)}{\iota} + 2(1-\iota)^{R-2} x_{R-1,\upperBoundTag}\,.
  \]
  We then set $t_0$ and $R_0$, depending on only $d,\delta,\varepsilon_2$, such that $t_0 \le t_{0,+}$ and for all $t\in [0, t_0]$, $R \ge R_0$,
  \[
    \max\left(
      \frac{3(e^{4t}-1)}{\iota},
      2(1-\iota)^{R-2} x_{R-1,\upperBoundTag}
    \right) \le (\varepsilon_2)^2/2\,.
  \]
  This ensures $x_0 \le (\varepsilon_2)^2$. Then,
  \[
    \E [|\E[\sigma_o | \observedData(o)]|]^2
    \le \E [\E[\sigma_o | \observedData(o)]^2]
    = \E[\tanh^2(h_o)]
    \le \E[\sinh^2(h_o)]
    = x_0 \le (\varepsilon_2)^2\,,
  \]
  as desired.
\end{proof}
\end{lemma}
We will also need the following two-replica version of ~\cref{d:antiferro-broadcast}.
\begin{definition}[Two-replica broadcast model]\label{d:antiferro-broadcast-two-replica}
  Let $\nu^{\broadcastTag,2,R}_{\beta,t}$ denote the law of $(\sigma_\T, {\sigma'}_\T, {\boldsymbol{v}}_\T)$ where $(\sigma_\T, {\boldsymbol{v}}_\T) \sim \nu^{\broadcastTag,R}_{\beta,t}$ and ${\sigma'}_\T$ is an independent copy of $\sigma_\T$.
  Let $\E^{\broadcastTag,2,R}_{\beta,t}$ denote expectation with respect to $\nu^{\broadcastTag,2,R}_{\beta,t}$.
\end{definition}
We interpret this model as the problem of inferring $\sigma_o$ from $(\sigma_{\mathbb{L}}, {\sigma'}_\T, {\boldsymbol{v}}_\T)$.
Of course, ${\sigma'}_\T$ gives no information about $\sigma_o$ so the following corollary is immediate.
\begin{corollary}\label{c:broadcast}
  For all $\varepsilon_2 > 0$, there exists $t_0 = t_0(d,\delta,\varepsilon_2) > 0$ and $R_0 = R_0(d,\delta,\varepsilon_2) > 0$ such that for all $\beta \in [2\beta_\lowerBoundTag, \betaKS - \delta]$, $t \in [0,t_0]$, and $R \ge R_0$ the following holds.
  Let $(\sigma_\T, {\sigma'}_\T, {\boldsymbol{v}}_\T)$ be generated as in ~\cref{d:antiferro-broadcast-two-replica} with parameters $d,\beta,t,R$.
  Then,
  \[
    \E [|\E[\sigma_o | \sigma_{\mathbb{L}}, {\sigma'}_\T, {\boldsymbol{v}}_\T]|] \le \varepsilon_2\,.
  \]
\end{corollary}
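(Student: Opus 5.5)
The corollary will follow directly from \cref{l:broadcast}, because the extra replica ${\sigma'}_\T$ carries no information about $\sigma_o$. I will take $t_0$ and $R_0$ to be exactly the values that \cref{l:broadcast} provides for the same $(d,\delta,\varepsilon_2)$. Under $\nu^{\broadcastTag,2,R}_{\beta,t}$, the definition makes ${\sigma'}_\T$ independent of the pair $(\sigma_\T,{\boldsymbol{v}}_\T)$, and that pair has law $\nu^{\broadcastTag,R}_{\beta,t}$.

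The key step is a conditional independence identity. Because ${\sigma'}_\T$ is independent of $(\sigma_o,\sigma_{\mathbb{L}},{\boldsymbol{v}}_\T)$, we have, almost surely,
\[
  \E[\sigma_o \mid \sigma_{\mathbb{L}}, {\sigma'}_\T, {\boldsymbol{v}}_\T] = \E[\sigma_o \mid \sigma_{\mathbb{L}}, {\boldsymbol{v}}_\T].
\]
To check this, note that everything lives on a finite discrete space times a Gaussian density. The joint density therefore factorizes as $p(\sigma_o,\sigma_{\mathbb{L}},{\boldsymbol{v}}_\T)\,p({\sigma'}_\T)$, and the factor $p({\sigma'}_\T)$ cancels in the Bayes ratio defining the posterior of $\sigma_o$. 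The right-hand side is a function of $(\sigma_{\mathbb{L}},{\boldsymbol{v}}_\T)$ alone, whose law under $\nu^{\broadcastTag,2,R}_{\beta,t}$ is the same as under $\nu^{\broadcastTag,R}_{\beta,t}$.

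Taking absolute values and expectations then gives
\[
  \E^{\broadcastTag,2,R}_{\beta,t}\big[|\E[\sigma_o \mid \sigma_{\mathbb{L}}, {\sigma'}_\T, {\boldsymbol{v}}_\T]|\big] = \E^{\broadcastTag,R}_{\beta,t}\big[|\E[\sigma_o\mid\sigma_{\mathbb{L}},{\boldsymbol{v}}_\T]|\big] \le \varepsilon_2,
\]
where the last inequality is \cref{l:broadcast}. This holds uniformly over $\beta\in[2\beta_\lowerBoundTag,\betaKS-\delta]$, $t\in[0,t_0]$ and $R\ge R_0$, as required. No step presents a real obstacle. The only point needing care is the measure-theoretic justification of the conditional-independence identity, which is routine given the explicit densities.
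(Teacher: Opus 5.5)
Your proposal is correct and matches the paper. The paper also observes that ${\sigma'}_\T$ is independent of $(\sigma_\T,{\boldsymbol{v}}_\T)$, so it carries no information about $\sigma_o$, and the bound therefore reduces directly to \cref{l:broadcast} with the same $t_0$ and $R_0$. Your explicit density-factorization justification of the conditional-expectation identity is a fine way to make that one-line remark rigorous.
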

The proof of ~\cref{l:planted-augmented} given ~\cref{l:broadcast} and Corollary~\cref{c:broadcast} is similar to \cite[Lemmas 4.7--4.9 and Corollary 4.10]{CLMS22}.
For $u$ a vertex in $G$, let $B_R(u)$ be the induced subgraph on $\{w\in G : \dist(w,u) \le R\}$ and ${\overline B}_R(u)$ the induced subgraph on $\{w\in G : \dist(w,u) \ge R\}$.
For $G'$ a subgraph of $G$, let $\sigma_{G'} = (\sigma_u: u\in G')$, which we will interpret as the graph $G'$ labeled by the $\sigma_u$.
We will also interpret a sample $(\sigma^1_\T,{\boldsymbol{v}}^1_\T,\sigma^2_\T,{\boldsymbol{v}}^2_\T) \sim (\nu^{\broadcastTag,R}_{\beta,t})^{\otimes 2}$ as the pair of labeled graphs 
\[
  (\sigma_{\T \cup \T'}, {\boldsymbol{v}}_{\T \cup \T'})
  = \Big((\sigma^1_\T,\sigma^2_\T),({\boldsymbol{v}}^1_\T,{\boldsymbol{v}}^2_\T)\Big)
\] 
where $\T'$ is a disjoint copy of $\T$.
Similarly we will interpret a sample from $(\nu^{\broadcastTag,2,R}_{\beta,t})^{\otimes 2}$ as a triple of labeled graphs $(\sigma_{\T \cup \T'}, {\sigma'}_{\T \cup \T'}, {\boldsymbol{v}}_{\T \cup \T'})$.

Recall the planted models $\nu^\plantedTag_{\beta,t}$, $\nu^{\plantedTag,2}_{\beta,t}$ from Definitions~\cref{d:augmented-planted-model} and \cref{d:two-replica-augmented-planted-model}, and their associated posterior measures $\mu_{G,\beta,t,<\varepsilon}^{{\boldsymbol{v}}}$ and $\mu_{G,\beta,t,<\varepsilon}^{2,{\sigma'}^*,{\boldsymbol{v}}}$.
These definitions all depend on $\varepsilon$: for example, any $(\sigma^*,G,{\boldsymbol{v}})$ in the support of $\nu^\plantedTag_{\beta,t}$ must satisfy $\sigma^* \in \Sigma_n(\varepsilon)$, and any $(\sigma^*,{\sigma'}^*,G,{\boldsymbol{v}})$ in the support of $\nu^{\plantedTag,2}_{\beta,t}$ must satisfy $\sigma^*,{\sigma'}^* \in \Sigma_n(\varepsilon)$, $|R(\sigma^*,{\sigma'}^*)| \le \varepsilon$.
In this subsection, we write these measures as $\nu^\plantedTag_{\beta,t,\varepsilon}$, $\nu^{\plantedTag,2}_{\beta,t,\varepsilon}$, $\mu_{G,\beta,t,<\varepsilon}^{{\boldsymbol{v}}}$, $\mu_{G,\beta,t,<\varepsilon}^{2,{\sigma'}^*,{\boldsymbol{v}}}$ to make this dependence explicit.
Similarly we write $\E^\plantedTag_{\beta,t,\varepsilon}$, $\E^{\plantedTag,2}_{\beta,t,\varepsilon}$ instead of $\E^\plantedTag_{\beta,t}$, $\E^{\plantedTag,2}_{\beta,t}$ for expectation with respect to $\nu^\plantedTag_{\beta,t,\varepsilon}$, $\nu^{\plantedTag,2}_{\beta,t,\varepsilon}$.
The following lemma connects $\nu^\plantedTag_{\beta,t,\varepsilon}$, $\nu^{\plantedTag,2}_{\beta,t,\varepsilon}$ to $\nu^{\broadcastTag,R}_{\beta,t}$, $\nu^{\broadcastTag,2,R}_{\beta,t}$.
\begin{lemma}\label{l:broadcast-tv-error}
  Consider any $R \in \N$ independent of $n$.
  Sample $(\sigma^*,G,{\boldsymbol{v}}) \sim \nu^\plantedTag_{\beta,t,\varepsilon}$ and independent uniformly random vertices $u,w$ of $G$.
  Write $B_R(uw) = B_R(u) \cup B_R(w)$. Then,
  \begin{equation}\label{e:broadcast-tv-error-one-replica}
    \TV\Big(
      \Law(\sigma^*_{B_R(uw)},{\boldsymbol{v}}_{B_R(uw)}),
      (\nu^{\broadcastTag,R}_{\beta,t})^{\otimes 2}
    \Big) = o_n(1)\,.
  \end{equation}
  Likewise, for $(\sigma^*,{\sigma'}^*,G,{\boldsymbol{v}}) \sim \nu^{\plantedTag,2}_{\beta,t,\varepsilon}$ and independent uniformly random vertices $u,w$ of $G$,
  \begin{equation}\label{e:broadcast-tv-error-two-replica}
    \TV\Big(
      \Law(\sigma^*_{B_R(uw)},{\sigma'}^*_{B_R(uw)},{\boldsymbol{v}}_{B_R(uw)}),
      (\nu^{\broadcastTag,2,R}_{\beta,t})^{\otimes 2}
    \Big) = o_n(1)\,.
  \end{equation}
  Finally, $\TV(\nu^\plantedTag_{\beta,t,\varepsilon},\nu^\plantedTag_{\beta,t,\varepsilon/2}) = o_n(1)$ and $\TV(\nu^{\plantedTag,2}_{\beta,t,\varepsilon},\nu^{\plantedTag,2}_{\beta,t,\varepsilon/2}) = o_n(1)$.
\end{lemma}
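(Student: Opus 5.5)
We first reduce to a statement about the spin-labeled local neighborhoods. Since ${\boldsymbol{v}} = \sqrt{t}\sigma^* + {\boldsymbol{g}}$ with ${\boldsymbol{g}}$ independent standard Gaussian (Remark~\cref{r:augmented-planted-interpretation}), and $v_u = \sqrt t \sigma_u + Z_u$ in the broadcast model, the vector ${\boldsymbol{v}}_{B_R(uw)}$ is obtained from $\sigma^*_{B_R(uw)}$ by the same Markov kernel in both laws. It therefore suffices to prove \eqref{e:broadcast-tv-error-one-replica} with ${\boldsymbol{v}}$ removed, i.e.\ that the law of the labeled graph $\sigma^*_{B_R(uw)}$ under $\nu^\plantedTag_{\beta,0,\varepsilon}$ is $o_n(1)$-close in TV to two independent copies of the broadcast tree with $t=0$. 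The analogous reduction applies to \eqref{e:broadcast-tv-error-two-replica}.

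Next, we decompose $\nu^\plantedTag_{\beta,\varepsilon}$ by the type of $\sigma^*$. Conditional on $\sigma^*$ with $n_+(\sigma^*)=\alpha n$, the graph $G$ has law $\propto e^{\beta\isingHamiltonian_G(\sigma^*)}$. By the counting in Lemma~\cref{l:1mt-exp-rate}, this is a mixture over the number $L=\gamma dn$ of bichromatic edges of uniform configurations with prescribed edge-type counts. The key concentration input is that, under $\nu^\plantedTag_{\beta,\varepsilon}$, $\alpha = 1/2 + O_P(n^{-1/2})$ by Lemma~\cref{l:1mt-balanced}\cref{i:1mt-balanced-subgaussian}, and $\gamma = \gamma^\beta(1/2) + O_P(n^{-1/2})$ by strict concavity (Lemma~\cref{l:1mt-inner-stationarity}), with $\gamma^\beta(1/2)=\frac{e^\beta}{2(1+e^\beta)}$ solving \eqref{e:1mt-inner-stationarity}. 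Given such typical $(\alpha,\gamma)$, we explore $B_R(u)\cup B_R(w)$ from $u,w$ in the configuration model, revealing half-edges one at a time. Each revealed half-edge from a $\pm$ vertex is matched to an opposite-spin half-edge with probability $\gamma/\alpha_\pm + O(R d^R/n) = \frac{e^\beta}{1+e^\beta}+o(1)$, and to a previously explored vertex with probability $O(d^{2R}/n)$. Coupling step by step gives TV $o_n(1)$ to two independent trees whose children flip spin with probability $e^\beta/(1+e^\beta)$, and with root spins $\sigma^*_u,\sigma^*_w$ that are asymptotically i.i.d.\ uniform since $\alpha\approx1/2$ and $u,w$ are uniform. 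This matches Definition~\cref{d:antiferro-broadcast}.

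For the two-replica model we repeat the argument with the four-type configuration from Lemma~\cref{l:kmt-exp-rate}. There the typical type vector is ${\typeProportions}_*$ with $\Gamma^\beta({\typeProportions}_*)$ given explicitly in the proof of Lemma~\cref{l:kmt-concavity}, and concentration follows from Lemma~\cref{l:basic-mt}\cref{i:basic-mt-annealed-2-subgaussian} together with strong concavity in $\Gamma$. The product form $\Gamma^\beta({\typeProportions}_*) = 2^{-2}\Xi^{\otimes2}$ shows that the edge transition kernel factorizes into independent flips for $\sigma$ and $\sigma'$, giving $(\nu^{\broadcastTag,2,R}_{\beta,t})^{\otimes2}$.

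For the final claim, note that $\nu^\plantedTag_{\beta,t,\varepsilon/2}$ is exactly $\nu^\plantedTag_{\beta,t,\varepsilon}$ conditioned on $|R(\sigma^*,\allone)|<\varepsilon/2$. By Lemma~\cref{l:1mt-balanced}\cref{i:1mt-balanced-subgaussian} this event has probability $1-e^{-\Omega(n)}$, and conditioning on an event of probability $1-p$ moves TV by at most $p$. The two-replica case follows in the same way from Lemma~\cref{l:basic-mt}\cref{i:basic-mt-annealed-2-subgaussian}.

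The main obstacle is the concentration of the edge-type counts $\Gamma$ under the planted law. We would handle it by Laplace's method on \eqref{e:kmt-exact}, using the strict concavity from Lemma~\cref{l:kmt-lagrange}. Everything else is a standard exploration coupling.
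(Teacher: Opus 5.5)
Your proposal is correct and follows essentially the same route as the paper. You strip off the Gaussians via a common Markov kernel, concentrate the spin proportions (\cref{l:1mt-balanced}\cref{i:1mt-balanced-subgaussian}, \cref{l:basic-mt}\cref{i:basic-mt-annealed-2-subgaussian}) and the edge-type counts (strict concavity of the configuration-model exponent), couple a breadth-first exploration of $B_R(uw)$ to the broadcast trees, and handle the $\varepsilon$ versus $\varepsilon/2$ claim by conditioning.

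One small correction: the cited lemmas give concentration only at fixed scales $s$ independent of $n$, not the $O_P(n^{-1/2})$ rates you assert. Fixed-scale concentration, followed by sending $n\to\infty$ and then $\eta\downarrow 0$ as the paper does, is all the coupling needs.
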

\begin{remark}
  Note that (e.g.) $\sigma^*_{B_R(uw)}$ is a labeling of the graph $B_R(uw)$, which is itself random and not necessarily isomorphic to $\T \cup \T'$.
  ~\cref{l:broadcast-tv-error} includes the claim that $B_R(uw) \cong \T \cup \T'$ with probability $1-o_n(1)$.
\end{remark}
\begin{proof}
  Let $(\nu^{\broadcastTag,R}_\beta)^{\otimes 2}$ denote the marginal law of $\sigma_{\T \cup \T'}$ under $(\nu^{\broadcastTag,R}_{\beta,t})^{\otimes 2}$, and $(\nu^{\broadcastTag,2,R}_\beta)^{\otimes 2}$ denote the marginal law of $(\sigma_{\T \cup \T'},{\sigma'}_{\T \cup \T'})$ under $(\nu^{\broadcastTag,2,R}_{\beta,t})^{\otimes 2}$.
  We will first prove that
  \begin{align}
    \label{e:broadcast-tv-error-one-replica-no-v}
    \TV\Big(
      \Law(\sigma^*_{B_R(uw)}),
      (\nu^{\broadcastTag,R}_\beta)^{\otimes 2}
    \Big) &= o_n(1)\,, \\
    \label{e:broadcast-tv-error-two-replica-no-v}
    \TV\Big(
      \Law(\sigma^*_{B_R(uw)},{\sigma'}^*_{B_R(uw)}),
      (\nu^{\broadcastTag,2,R}_\beta)^{\otimes 2}
    \Big) &= o_n(1)\,,
  \end{align}
  i.e. that \eqref{e:broadcast-tv-error-one-replica}, \eqref{e:broadcast-tv-error-two-replica} hold if we omit the gaussians ${\boldsymbol{v}}_{B_R(uw)}$ and ${\boldsymbol{v}}_{\T \cup \T'}$.
  We treat the one- and two-replica cases simultaneously.
  For $k\in\{1,2\}$, set $\Sigma_k=\{+1,-1\}^k$ and let $\Delta$ denote Hamming distance on $\Sigma_k$.
  Let $\xi_v=\sigma_v^*$ for $k=1$, and let $\xi_v=(\sigma_v^*,{\sigma'}_v^*)$ for $k=2$.
  For $\omega\in\Sigma_k$, let
  \begin{align*}
    S(\omega) &= \{v\in[n]:\xi_v=\omega\}\,, &
    \alpha_\omega &= |S(\omega)| / n\,.
  \end{align*}
  Let $\gamma_{\omega,\rho}dn$ be the number of half-edges incident to $S(\omega)$ whose matched half-edges are incident to $S(\rho)$.
  Write ${\typeProportions}=(\alpha_\omega)_\omega \in \R^{\Sigma_k}$ and $\Gamma=(\gamma_{\omega,\rho})_{\omega,\rho} \in \R^{\Sigma_k \times \Sigma_k}$.
  Thus $\sum_{\rho \in \Sigma_k} \gamma_{\omega,\rho} = \alpha_\omega$ for every $\omega \in \Sigma_k$.
  For $\omega,\rho\in\Sigma_k$, define
  \[
    P^{(k)}_\beta(\omega,\rho)
    = \frac{e^{\beta\Delta(\omega,\rho)}}{(1+e^\beta)^k}\,.
  \]
  Also let
  \begin{align*}
    \alpha^*_{\omega} &= 2^{-k}\,,&
    \gamma^*_{\omega,\rho} &= 2^{-k}P^{(k)}_\beta(\omega,\rho)\,,
  \end{align*}
  for all $\omega,\rho \in \Sigma_k$, and write ${\typeProportions}^*=(\alpha^*_{\omega})_\omega$ and $\Gamma^*=(\gamma^*_{\omega,\rho})_{\omega,\rho}$.

  Recall from Remark~\cref{r:augmented-planted-interpretation} that (if $k=1$) $(\sigma^*,G) \sim \nu^\plantedTag_{\beta,\varepsilon}$, and (if $k=2$) $(\sigma^*,{\sigma'}^*,G) \sim \nu^{\plantedTag,2}_{\beta,\varepsilon}$.
  ~\cref{l:1mt-balanced}~\cref{i:1mt-balanced-subgaussian} for $k=1$ and ~\cref{l:basic-mt}~\cref{i:basic-mt-annealed-2-subgaussian} for $k=2$ imply that, with $\Pr$ denoting probability under $\nu^\plantedTag_{\beta,\varepsilon}$ or $\nu^{\plantedTag,2}_{\beta,\varepsilon}$, for every fixed $\eta > 0$,
  \[
    \Pr\left(
      \max_{\omega\in\Sigma_k}
      |\alpha_\omega-\alpha^*_\omega| \le \eta
    \right) = 1 - o_n(1)\,.
  \]
  Conditional on ${\typeProportions}$, a standard configuration-model count (see the proof of ~\cref{l:kmt-exp-rate}) and Stirling's formula show that the probability of realizing $\Gamma$ is, up to factors polynomial in $n$, proportional to $\exp(dn\Lambda_\beta(\Gamma))$, where
  \[
    \Lambda_\beta(\Gamma)
    =-\frac12\sum_{\omega,\rho\in\Sigma_k}
      \gamma_{\omega,\rho}\log\gamma_{\omega,\rho}
      +\beta\sum_{(\omega,\rho)\in\binom{\Sigma_k}{2}}
      \gamma_{\omega,\rho}\Delta(\omega,\rho)\,.
  \]
  The functional $\Lambda_\beta$ is strictly concave on the convex set of symmetric nonnegative matrices with row sums ${\typeProportions}$, and its unique maximizer $\Gamma^\beta({\typeProportions})$ is continuous in ${\typeProportions}$ near ${\typeProportions}^*$.
  A standard first-order check shows $\Gamma^\beta({\typeProportions}^*) = \Gamma^*$.
  It follows that
  \[
    \Pr\left(
      \max_{\omega\in\Sigma_k}
      |\alpha_\omega-\alpha^*_\omega|
      + \max_{\omega,\rho\in\Sigma_k}
      |\gamma_{\omega,\rho}-\gamma^*_{\omega,\rho}|
      \le \eta
    \right) = 1 - o_n(1)\,.
  \]
  Fix $\eta > 0$ and condition on the event that the ${\typeProportions}, \Gamma$ associated to $(\sigma^*,G)$ or $(\sigma^*,{\sigma'}^*,G)$ satisfies the event in the last display.
  Condition further on the sets $(S(\omega) : \omega \in \Sigma_k)$.
  Then, the conditional distribution of the edges of $G$ is uniform among all perfect matchings of half-edges consistent with $\Gamma$, i.e. where exactly $\gamma_{\omega,\rho} dn$ half-edges incident to $S(\omega)$ are matched to half-edges incident to $S(\rho)$, for all $\omega,\rho \in \Sigma_k$.

  Let $O_R(1)$ denote a constant that may depend on $R$.
  We will explore $B_R(uv)$ by breadth-first search and record the spins $(\xi_w \in B_R(uv))$ encountered; note that this exploration involves only $O_R(1)$ steps.
  First, $\xi_u$ and $\xi_v$ are independent with distribution ${\typeProportions}$.
  At each step in the BFS exploration, we randomly match a half-edge $h$ incident to a discovered vertex $w$, to a half-edge $h'$ incident to a vertex $w'$.
  Suppose $w$ has spins $\xi_w = \omega$.
  Then, with ${\mathcal{F}}$ denoting any realization of the BFS history up to this point, we have
  \[
    \Pr(\text{$w'$ is undiscovered and $\xi_{w'} = \rho$} | {\mathcal{F}})
    = \frac{\gamma_{\omega,\rho} dn + O_R(1)}{\alpha_\omega dn + O_R(1)}
    = P^{(k)}_\beta(\omega,\rho) + O_R(1) (\eta + n^{-1})\,.
  \]
  We will couple this BFS to a BFS of $\T \cup \T'$.
  For $v\in \T \cup \T'$, similarly let $\xi_v = \sigma_v$ if $k=1$, and $\xi_v = (\sigma_v,{\sigma'}_v)$ if $k=2$.
  Write $o,o'$ for the roots of $\T,\T'$.
  Then, under the laws $\nu^{\broadcastTag,R}_\beta$ or $\nu^{\broadcastTag,2,R}_{\beta}$, $\xi_o$ and $\xi_{o'}$ are independent with distribution ${\typeProportions}^*$.
  By Definitions~\cref{d:antiferro-broadcast} and \cref{d:antiferro-broadcast-two-replica}, at each step in the BFS exploration of $\T \cup \T'$, if a vertex $w$ has spin $\xi_w = \omega$, its children have spins distributed as
  \[
    \Pr(\xi_{w'} = \rho) = P^{(k)}_\beta(\omega,\rho)\,.
  \]
  It follows that we can couple these BFS explorations to coincide with failure probabilty $O_R(1) (\eta + n^{-1})$.
  Sending $n\to\infty$ and then $\eta \downarrow 0$ proves \eqref{e:broadcast-tv-error-one-replica-no-v}--\eqref{e:broadcast-tv-error-two-replica-no-v}.

  Conditional on $G$ and either $\sigma^*$ or $(\sigma^*,{\sigma'}^*)$, the coordinates of ${\boldsymbol{v}}$ are independent with $v_i\sim{\mathcal{N}}(\sqrt{t}\sigma_i^*,1)$.
  The same Markov kernel generates ${\boldsymbol{v}}_{\T \cup \T'}$ from either $\sigma_{\T \cup \T'} \sim \nu^{\broadcastTag,R}_\beta$ or $(\sigma_{\T \cup \T'}, {\sigma'}_{\T \cup \T'}) \sim \nu^{\broadcastTag,2,R}_\beta$.
  Since total variation distance does not increase under a Markov kernel, the estimates \eqref{e:broadcast-tv-error-one-replica-no-v}, \eqref{e:broadcast-tv-error-two-replica-no-v} imply the conlusions \eqref{e:broadcast-tv-error-one-replica}, \eqref{e:broadcast-tv-error-two-replica}.
  For the final assertion of the lemma, let
  \begin{align*}
    A_1 &= \{|R(\sigma^*,\allone)|<\varepsilon/2\}\,, & 
    A_2 &= \left\{
      \max\left(|R(\sigma^*,\allone)|,|R({\sigma'}^*,\allone)|,|R(\sigma^*,{\sigma'}^*)|\right)<\varepsilon/2
    \right\}\,.
  \end{align*}
  By definition, $\nu^\plantedTag_{\beta,t,\varepsilon/2}$ is $\nu^\plantedTag_{\beta,t,\varepsilon}$ conditioned on $A_1$, so
  \[
    \TV(\nu^\plantedTag_{\beta,t,\varepsilon},\nu^\plantedTag_{\beta,t,\varepsilon/2})
    = \nu^\plantedTag_{\beta,t,\varepsilon}(A_1^c)=o_n(1)
  \]
  by ~\cref{l:1mt-balanced}~\cref{i:1mt-balanced-subgaussian}.
  Similarly, $\nu^{\plantedTag,2}_{\beta,t,\varepsilon/2}$ is $\nu^{\plantedTag,2}_{\beta,t,\varepsilon}$ conditioned on $A_2$, so ~\cref{l:basic-mt}~\cref{i:basic-mt-annealed-2-subgaussian} gives
  \[
    \TV(\nu^{\plantedTag,2}_{\beta,t,\varepsilon},\nu^{\plantedTag,2}_{\beta,t,\varepsilon/2})
    = \nu^{\plantedTag,2}_{\beta,t,\varepsilon}(A_2^c)=o_n(1)\,. \qedhere
  \]
\end{proof}
Let $\E_{u,w}$ denote expectation with respect to $u,w$ independent uniformly random vertices of $G$.
\begin{proposition}\label{p:averaged-rerandomized-magnetization}
  For any $\varepsilon_3 > 0$, there exists $t_0 = t_0(d,\delta,\varepsilon_3) > 0$ 
  such that for all $\beta \in [2\beta_\lowerBoundTag, \betaKS - \delta]$, $t \in [0,t_0]$, 
  \begin{equation}\label{e:averaged-rerandomized-magnetization-one-replica}
    \E^\plantedTag_{\beta,t,\varepsilon} \E_{u,w}
    |\mu_{G,\beta,t,<\varepsilon}^{{\boldsymbol{v}}}(\sigma_u \sigma_w)|
    \le \varepsilon_3\,.
  \end{equation}
  In this inequality, recall that $\E^\plantedTag_{\beta,t,\varepsilon}$ denotes expectation with respect to $(\sigma^*,G,{\boldsymbol{v}}) \sim \nu^\plantedTag_{\beta,t,\varepsilon}$, and (recalling \eqref{e:gibbs-avg-of-function}) $\mu_{G,\beta,t,<\varepsilon}^{{\boldsymbol{v}}}(\cdot)$ denotes expectation over $\sigma \sim \mu_{G,\beta,t,<\varepsilon}^{{\boldsymbol{v}}}$ conditional on $G,{\boldsymbol{v}}$.
  Similarly,
  \begin{equation}\label{e:averaged-rerandomized-magnetization-two-replica}
    \E^{\plantedTag,2}_{\beta,t,\varepsilon} \E_{u,w}
    |\mu_{G,\beta,t,<\varepsilon}^{2,{\sigma'}^*,{\boldsymbol{v}}}(\sigma_u \sigma_w)|
    \le \varepsilon_3\,.
  \end{equation}
\begin{proof}
  For an induced subgraph $J \subseteq G$ and spins ${\boldsymbol{z}} \in \R^n$, write $\mu_{G,\beta,t,<\varepsilon}^{{\boldsymbol{v}}}(\cdot | \sigma_J = {\boldsymbol{z}}_J)$ for the conditional Gibbs measure
  \begin{equation}\label{e:conditional-gibbs-msr}
    \mu_{G,\beta,t,<\varepsilon}^{{\boldsymbol{v}}}(\sigma | \sigma_J = {\boldsymbol{z}}_J)
    \propto \allone\{\sigma \in \Sigma_n(\varepsilon), \sigma_J = {\boldsymbol{z}}_J\} \exp(\isingHamiltonian_{G,\beta,t}(\sigma,{\boldsymbol{v}}))\,,
  \end{equation}
  where $\isingHamiltonian_{G,\beta,t}$ is defined in \eqref{e:augmented-planted-model}.
  Here $G,{\boldsymbol{v}}$ are also given, so this measure is a function of $G,{\boldsymbol{v}},{\boldsymbol{z}}_J$.
  We extend the notation \eqref{e:gibbs-avg-of-function} by writing $\mu_{G,\beta,t,<\varepsilon}^{{\boldsymbol{v}}}(f | \sigma_J = {\boldsymbol{z}}_J)$ for the expectation of $f$ with respect to this measure.
  Let $R$ be a constant (independent of $n$) we will set below.
  Define
  \[
    g_\varepsilon(G,{\boldsymbol{v}},{\boldsymbol{z}},u,w) = \mu_{G,\beta,t,<\varepsilon}^{{\boldsymbol{v}}}\Big(
        \sigma_u \sigma_w
        \Big| 
        \sigma_{{\overline B}_R(u) \cap {\overline B}_R(w)} = {\boldsymbol{z}}_{{\overline B}_R(u) \cap {\overline B}_R(w)}
    \Big)\,.
  \]
  By the tower property of expectations and Nishimori's property,
  \begin{align}\label{e:averaged-rerandomized-magnetization-step1}
    \E^\plantedTag_{\beta,t,\varepsilon} \E_{u,w}
    |\mu_{G,\beta,t,<\varepsilon}^{{\boldsymbol{v}}}(\sigma_u \sigma_w)|
    &= \E^\plantedTag_{\beta,t,\varepsilon} \E_{u,w}
    |\mu_{G,\beta,t,<\varepsilon}^{{\boldsymbol{v}}}(g_\varepsilon(G,{\boldsymbol{v}},\sigma,u,w))| \\
    \nonumber
    &\le \E^\plantedTag_{\beta,t,\varepsilon} \E_{u,w}
    \mu_{G,\beta,t,<\varepsilon}^{{\boldsymbol{v}}}(|g_\varepsilon(G,{\boldsymbol{v}},\sigma,u,w)|) \\
    \nonumber
    &= \E^\plantedTag_{\beta,t,\varepsilon} \E_{u,w} |g_\varepsilon(G,{\boldsymbol{v}},\sigma^*,u,w)|\,.
  \end{align}
  Since $|g_\varepsilon|$ is bounded by $1$, by the final assertion of ~\cref{l:broadcast-tv-error},
  \begin{equation}\label{e:averaged-rerandomized-magnetization-step2}
    \E^\plantedTag_{\beta,t,\varepsilon} \E_{u,w} |g_\varepsilon(G,{\boldsymbol{v}},\sigma^*,u,w)|
    = \E^\plantedTag_{\beta,t,\varepsilon/2} \E_{u,w} |g_\varepsilon(G,{\boldsymbol{v}},\sigma^*,u,w)| + o_n(1)\,.
  \end{equation}
  Any $(\sigma^*,G,{\boldsymbol{v}})$ in the support of $\nu^\plantedTag_{\beta,t,\varepsilon/2}$ satisfies $\sigma^* \in \Sigma_n(\varepsilon/2)$.
  For any $u,w \in G$, ${\overline B}_R(u) \cap {\overline B}_R(w)$ contains all but at most $2d^R \ll n\varepsilon/2$ vertices.
  Thus, any $\sigma$ such that $\sigma_{{\overline B}_R(u) \cap {\overline B}_R(w)} = \sigma^*_{{\overline B}_R(u) \cap {\overline B}_R(w)}$ satisfies $\sigma \in \Sigma_n(\varepsilon)$.
  So
  \[
    \mu_{G,\beta,t,<\varepsilon}^{{\boldsymbol{v}}}\Big(\sigma \Big| \sigma_{{\overline B}_R(u) \cap {\overline B}_R(w)} = \sigma^*_{{\overline B}_R(u) \cap {\overline B}_R(w)}\Big)
    \propto \allone\{\sigma_{{\overline B}_R(u) \cap {\overline B}_R(w)} = \sigma^*_{{\overline B}_R(u) \cap {\overline B}_R(w)}\} \exp(\isingHamiltonian_{G,\beta,t}(\sigma,{\boldsymbol{v}}))\,,
  \]
  as in the definition \eqref{e:conditional-gibbs-msr} we can remove the redundant truncation to $\sigma \in \Sigma_n(\varepsilon)$.
  Hence this measure is a Markov random field, and can be written explicitly as
  \begin{align}\label{e:conditional-gibbs-msr-as-markov-field}
    \mu_{G,\beta,t,<\varepsilon}^{{\boldsymbol{v}}}\Big(\sigma \Big| \sigma_{{\overline B}_R(u) \cap {\overline B}_R(w)} = &\sigma^*_{{\overline B}_R(u) \cap {\overline B}_R(w)}\Big)
    \propto \allone\{\sigma_{{\overline B}_R(u) \cap {\overline B}_R(w)} = \sigma^*_{{\overline B}_R(u) \cap {\overline B}_R(w)}\} \\
    \nonumber
    &\times \exp\left(
        \beta \sum_{ij \in E(B_R(uw))} \frac{1 - \sigma_i\sigma_j}{2} - \frac12 \sum_{i\in B_R(uw)} (\sqrt{t} \sigma_i - v_i)^2
    \right)\,.
  \end{align}
  In particular, if we define the boundary set
  \[
    \partial B_R(uw) = \{v\in [n] : \min(\dist(v,u), \dist(v,w)) = R\}\,,
  \]
  then the measure \eqref{e:conditional-gibbs-msr-as-markov-field} depends on only $(\sigma_{\partial B_R(uw)}, {\boldsymbol{v}}_{B_R(uw)})$.
  We can therefore write
  \[
    g_\varepsilon(G,{\boldsymbol{v}},\sigma^*,u,w) = {\tilde g}_\varepsilon(\sigma_{\partial B_R(uw)}, {\boldsymbol{v}}_{B_R(uw)})\,,
  \]
  where ${\widetilde g}_\varepsilon$ denotes the average of $\sigma_u\sigma_w$ under \eqref{e:conditional-gibbs-msr-as-markov-field}.
  Since $|{\widetilde g}_\varepsilon|$ is bounded by $1$, we may use ~\cref{l:broadcast-tv-error} to compare $\E^\plantedTag_{\beta,t,\varepsilon} \E_{u,w}$ to expectation over $(\nu^{\broadcastTag,R}_{\beta,t})^{\otimes 2}$.
  So, for $\T'$ a disjoint copy of $\T$ with leaf set ${\mathbb{L}}'$,
  \begin{align}\nonumber
    \E^\plantedTag_{\beta,t,\varepsilon/2} \E_{u,w} |g_\varepsilon(G,{\boldsymbol{v}},\sigma^*,u,w)|
    &= \E^\plantedTag_{\beta,t,\varepsilon/2} \E_{u,w} |{\widetilde g}_\varepsilon(\sigma_{\partial B_R(uw)}, {\boldsymbol{v}}_{B_R(uw)})| \\
    \label{e:averaged-rerandomized-magnetization-step3}
    &= (\E^{\broadcastTag,R}_{\beta,t})^{\otimes 2} |{\widetilde g}_\varepsilon(\sigma_{{\mathbb{L}} \cup {\mathbb{L}}'}, {\boldsymbol{v}}_{\T \cup \T'})| + o_n(1)\,.
  \end{align}
  When the graph $B_R(uw)$ in \eqref{e:conditional-gibbs-msr-as-markov-field} is $\T \cup \T'$, the measure \eqref{e:conditional-gibbs-msr-as-markov-field} is precisely the posterior measure of (two independent copies of) the broadcast problem from ~\cref{d:antiferro-broadcast}, given observations $\sigma_{{\mathbb{L}} \cup {\mathbb{L}}'}, {\boldsymbol{v}}_{\T \cup \T'}$.
  It follows that
  \begin{equation}\label{e:averaged-rerandomized-magnetization-step4}
    (\E^{\broadcastTag,R}_{\beta,t})^{\otimes 2} |{\widetilde g}_\varepsilon(\sigma_{{\mathbb{L}} \cup {\mathbb{L}}'}, {\boldsymbol{v}}_{\T \cup \T'})|
    = \E^{\broadcastTag,R}_{\beta,t} \Big[
      \,\big|\E^{\broadcastTag,R}_{\beta,t}[\sigma_o | \sigma_{\mathbb{L}}, {\boldsymbol{v}}_\T]\big|\,
    \Big]^2\,.
  \end{equation}
  Finally, ~\cref{l:broadcast} provides $t_0, R_0$ such that for $t \in [0,t_0]$, $R \ge R_0$, 
  \begin{equation}\label{e:averaged-rerandomized-magnetization-step5}
    \E^{\broadcastTag,R}_{\beta,t} \Big[
      \,\big|\E^{\broadcastTag,R}_{\beta,t}[\sigma_o | \sigma_{\mathbb{L}}, {\boldsymbol{v}}_\T]\big|\,
    \Big] \le \varepsilon_2 \equiv \sqrt{\varepsilon_3 /2}\,.
  \end{equation}
  Combining \eqref{e:averaged-rerandomized-magnetization-step1}, \eqref{e:averaged-rerandomized-magnetization-step2}, \eqref{e:averaged-rerandomized-magnetization-step3}, \eqref{e:averaged-rerandomized-magnetization-step4}, and \eqref{e:averaged-rerandomized-magnetization-step5} yields
  \[
    \E^\plantedTag_{\beta,t,\varepsilon} \E_{u,w}
    |\mu_{G,\beta,t,<\varepsilon}^{{\boldsymbol{v}}}(\sigma_u \sigma_w)|
    \le o_n(1) + \varepsilon_3 / 2 \le \varepsilon_3\,.
  \]
  This proves \eqref{e:averaged-rerandomized-magnetization-one-replica}. 
  For the proof of \eqref{e:averaged-rerandomized-magnetization-two-replica}, we similarly define
  \begin{equation}\label{e:conditional-gibbs-msr-two-replica}
    \mu_{G,\beta,t,<\varepsilon}^{2,{\sigma'}^*,{\boldsymbol{v}}}(\sigma | \sigma_J = {\boldsymbol{z}}_J)
    \propto \allone\{\sigma \in \Sigma_n(\varepsilon), |R(\sigma,{\sigma'}^*)| < \varepsilon, \sigma_J = {\boldsymbol{z}}_J\} \exp(\isingHamiltonian_{G,\beta,t}(\sigma,{\boldsymbol{v}}))\,,
  \end{equation}
  and
  \[
    g_{\varepsilon,2}(G,{\sigma'}^*,{\boldsymbol{v}},{\boldsymbol{z}},u,w) = \mu_{G,\beta,t,<\varepsilon}^{2,{\sigma'}^*,{\boldsymbol{v}}}\Big(\sigma_u \sigma_w \Big| \sigma_{{\overline B}_R(u) \cap {\overline B}_R(w)} = {\boldsymbol{z}}_{{\overline B}_R(u) \cap {\overline B}_R(w)}\Big)\,.
  \]
  Then, identically to \eqref{e:averaged-rerandomized-magnetization-step1}, \eqref{e:averaged-rerandomized-magnetization-step2},
  \begin{align*}
    \E^{\plantedTag,2}_{\beta,t,\varepsilon} \E_{u,w}
    |\mu_{G,\beta,t,<\varepsilon}^{2,{\sigma'}^*,{\boldsymbol{v}}}(\sigma_u \sigma_w)|
    &= \E^{\plantedTag,2}_{\beta,t,\varepsilon} \E_{u,w}
    |\mu_{G,\beta,t,<\varepsilon}^{2,{\sigma'}^*,{\boldsymbol{v}}}(g_{\varepsilon,2}(G,{\sigma'}^*,{\boldsymbol{v}},\sigma,u,w))| \\
    &\le \E^{\plantedTag,2}_{\beta,t,\varepsilon} \E_{u,w}
    \mu_{G,\beta,t,<\varepsilon}^{2,{\sigma'}^*,{\boldsymbol{v}}}(|g_{\varepsilon,2}(G,{\sigma'}^*,{\boldsymbol{v}},\sigma,u,w)|) \\
    &= \E^{\plantedTag,2}_{\beta,t,\varepsilon} \E_{u,w} |g_{\varepsilon,2}(G,{\sigma'}^*,{\boldsymbol{v}},\sigma^*,u,w)| \\
    &= \E^{\plantedTag,2}_{\beta,t,\varepsilon/2} \E_{u,w} |g_{\varepsilon,2}(G,{\sigma'}^*,{\boldsymbol{v}},\sigma^*,u,w)| + o_n(1)\,.
  \end{align*}
  Any $(\sigma^*,{\sigma'}^*,G,{\boldsymbol{v}})$ in the support of $\nu^\plantedTag_{\beta,t,\varepsilon/2}$ satisfies $\sigma^* \in \Sigma_n(\varepsilon/2)$ and $|R(\sigma^*,{\sigma'}^*)| < \varepsilon/2$.
  Arguing similarly to above, any $\sigma$ such that $\sigma_{{\overline B}_R(u) \cap {\overline B}_R(w)} = \sigma^*_{{\overline B}_R(u) \cap {\overline B}_R(w)}$ must satisfy $\sigma \in \Sigma_n(\varepsilon)$ and $|R(\sigma,{\sigma'}^*)| < \varepsilon$.
  So these truncations in \eqref{e:conditional-gibbs-msr-two-replica} are both redundant, and $\mu_{G,\beta,t,<\varepsilon}^{2,{\sigma'}^*,{\boldsymbol{v}}}$ is the same Markov random field as \eqref{e:conditional-gibbs-msr-as-markov-field}.
  The rest of the proof of \eqref{e:averaged-rerandomized-magnetization-one-replica} proceeds identically to prove \eqref{e:averaged-rerandomized-magnetization-two-replica}, using Corollary~\cref{c:broadcast} in place of ~\cref{l:broadcast}.
\end{proof}
\end{proposition}

\begin{proof}[Proof of ~\cref{l:planted-augmented}]
  We provide the proof for part \cref{i:planted-augmented-one-replica}, as the proof for part \cref{i:planted-augmented-two-replica} is analogous.
  We estimate
  \begin{align*}
    \E^\plantedTag_{\beta,t,\varepsilon} \mu_{G,\beta,t,<\varepsilon}^{{\boldsymbol{v}}}(|R(\sigma,\sigma^*)| \ge \varepsilon_1)
    &\le \frac{1}{(\varepsilon_1)^2} \E^\plantedTag_{\beta,t,\varepsilon} \mu_{G,\beta,t,<\varepsilon}^{{\boldsymbol{v}}}(|R(\sigma,\sigma^*)|^2) \\
    &= \frac{1}{(\varepsilon_1)^2} \E^\plantedTag_{\beta,t,\varepsilon} \E_{u,w} \sigma^*_u \sigma^*_w \mu_{G,\beta,t,<\varepsilon}^{{\boldsymbol{v}}}(\sigma_u\sigma_w) \\
    &\le \frac{1}{(\varepsilon_1)^2} \E^\plantedTag_{\beta,t,\varepsilon} \E_{u,w} |\mu_{G,\beta,t,<\varepsilon}^{{\boldsymbol{v}}}(\sigma_u\sigma_w)|\,.
  \end{align*}
  By ~\cref{p:averaged-rerandomized-magnetization}, there exist $t_0,R_0$ such that for all $t\in [0,t_0]$ and $R \ge R_0$,
  \[
    \E^\plantedTag_{\beta,t,\varepsilon} \E_{u,w} |\mu_{G,\beta,t,<\varepsilon}^{{\boldsymbol{v}}}(\sigma_u\sigma_w)|
    \le \varepsilon_3 \equiv (\varepsilon_1)^3\,.
  \]
  This implies the result.
\end{proof}

\section{Lipschitzness of conditional means of restricted free energies}
\label{a:restricted-fe-lipschitz}

This appendix is devoted to the proof of ~\cref{p:fe-conc-conditional-mean-lipschitz}.
In \cref{aa:wass-stability} we will prove ~\cref{l:wass-stability-of-sbm}, which shows that the conditional laws of $G \sim \nu^\plantedTag_\beta(\cdot | \sigma^*)$ and $G \sim \nu^{\plantedTag,2}_\beta(\cdot | \sigma^*,{\sigma'}^*)$ are Wasserstein continuous in $\sigma^*$ and $(\sigma^*,{\sigma'}^*)$.
This will be the main input to ~\cref{p:fe-conc-conditional-mean-lipschitz}, whose proof is completed in \cref{aa:restricted-fe-lipschitz-completion}.

\subsection{Wasserstein stability of planted model in latent signal}
\label{aa:wass-stability}

The main result of this subsection is the following.
\begin{lemma}\label{l:wass-stability-of-sbm}
  There exists an absolute constant $C_0$ such that for any $\beta \in [2\beta_\lowerBoundTag, \betaKS - \delta]$, the following holds.
  \begin{enumerate}[label=(\alph*)]
    \item \label{i:wass-stability-of-sbm-one-replica} For any $\sigma^*,\widetilde{\sigma}^* \in \Sigma_n(\varepsilon)$ with Hamming distance $1$, it is possible to couple $G \sim \nu^\plantedTag_\beta(\cdot | \sigma^*)$ and ${\widetilde G} \sim \nu^\plantedTag_\beta(\cdot | \widetilde{\sigma}^*)$ such that $\E |E(G) \triangle E({\widetilde G})| \le C_0d$.
    \item \label{i:wass-stability-of-sbm-two-replica} For any $(\sigma^*,{\sigma'}^*), (\widetilde{\sigma}^*,\widetilde{\sigma'}^*) \in \Sigma_{n,2}(\varepsilon)$ (recall \eqref{e:SigmaN2}) with Hamming distance $1$, it is possible to couple $G \sim \nu^{\plantedTag,2}_\beta(\cdot | \sigma^*,{\sigma'}^*)$ and ${\widetilde G} \sim \nu^{\plantedTag,2}_\beta(\cdot | \widetilde{\sigma}^*,\widetilde{\sigma'}^*)$ such that $\E |E(G) \triangle E({\widetilde G})| \le C_0d$.
  \end{enumerate}
\end{lemma}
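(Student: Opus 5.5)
The plan is to build the coupling by conditioning on the number of non-monochromatic edges and then using a local switching argument on the configuration model. Under $\nu^\plantedTag_\beta(\cdot\mid\sigma^*)$ a perfect matching $M$ on the $dn$ half-edges has probability proportional to $e^{\beta L(M)}$. Here $L(M)$ is the number of bichromatic pairs, and the law of $M$ depends on $\sigma^*$ only through the coloring of half-edges.

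Let $\widetilde\sigma^*$ differ from $\sigma^*$ only at vertex $v$. Then the two half-edge colorings differ exactly on the $d$ half-edges $h_1,\dots,h_d$ at $v$. Write $\pi$ for the law of $M$ under the first coloring and $\widetilde\pi$ for the law under the second. The identity coupling $M=\widetilde M$ fails only because the weights differ: for the same matching, $L(M)$ and $\widetilde L(M)$ differ by at most $d$. So the two measures are mutually absolutely continuous, with Radon--Nikodym derivative $e^{\beta(\widetilde L-L)}Z/\widetilde Z$, which is bounded by $e^{2d|\beta|}$. This crude bound is not enough, since it gives a total variation distance of order $1$ rather than a symmetric difference of $O(d)$. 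What it does give is a constant-probability statement, and the refinement below sharpens it.

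The actual coupling is a Markov-chain/switching coupling. Sample $M\sim\pi$. Then resample only the partners of $h_1,\dots,h_d$ as follows. Let $M^-$ be $M$ with the $d$ pairs containing $h_1,\dots,h_d$ removed, which frees $h_1,\dots,h_d$ and their partners $h_1',\dots,h_d'$ (at most $2d$ half-edges). Under $\pi$, conditional on $M^-$, the matching of these $\le 2d$ free half-edges has law proportional to $e^{\beta L}$ restricted to matchings of the free set. I define $\widetilde M$ by keeping $M^-$ and rematching the free set according to $\widetilde\pi(\cdot\mid M^-)$. Then $|M\triangle\widetilde M|\le 4d$ deterministically. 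The issue is that the law of the kept part $M^-$ under $\pi$ differs from its law under $\widetilde\pi$.

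To fix this, I would reweight the laws of $M^-$. The marginal density of $M^-$ under each measure is proportional to $e^{\beta L(M^-)}\,W(M^-)$, where $W$ (respectively $\widetilde W$) is the partition function of matchings on the free set. That free set has at most $2d$ half-edges, so $W,\widetilde W\in[e^{-2d|\beta|},e^{2d|\beta|}]\cdot(\text{count})$. Hence the two laws of $M^-$ have density ratio bounded by $e^{O(d|\beta|)}$, and $|\beta|\le\betaKS\le 2$. However, $M^-$ is not the same object under the two measures, because $h_i'$ depends on $M$, so this is not yet a valid coupling. I would therefore iterate with a standard switching/path coupling: with probability $1-\mathrm{TV}$ the two $M^-$ agree, and otherwise I repeat the same local rematching at the mismatched locations. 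Each round touches $O(d)$ half-edges and fails with probability at most some $q<1$ depending only on $d$ and $\beta$. So the expected number of rounds is $O(1)$, and the expected total symmetric difference is $O(d)$.

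The main obstacle is making the mismatch probability in each round a constant strictly less than $1$, uniformly in $n$, $\sigma^*$ and $\beta$. I would control it by choosing the free set so that the ratio $\widetilde W/W$ concentrates, which is a local estimate on $O(d)$ half-edges. The two-replica case is identical: the type of $v$ changes in one coordinate of $(\sigma,\sigma')$, the weight is $e^{\beta(L+L')}$, and the truncation $|R|<\varepsilon$ concerns only spins, not $G$.
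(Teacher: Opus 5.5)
Your construction breaks down at the step you yourself flag as ``not yet a valid coupling,'' and the proposed iteration does not repair it. You couple the two laws of $M^-$ maximally and, on failure, ``repeat the same local rematching at the mismatched locations.'' A maximal coupling of two laws on near-perfect matchings of $dn$ half-edges gives no locality on the failure event. There, $M^-$ and $\widetilde M^-$ are drawn from the normalized positive and negative parts of the difference of the two laws, and they can differ in $\Theta(n)$ pairs. So there is no bounded set of mismatched locations to recurse on.

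The failure probability is also not small. The density ratio of $M^-$ is $(\widetilde Z/Z)\,W(M^-)/\widetilde W(M^-)$. In the one-replica case it generically equals $e^{\beta(2j-d)}\widetilde Z/Z$, where $j$ is the number of freed partners $h_i'$ with colour opposite to $\sigma^*_v$. This fluctuates with $M^-$, so for $\beta\neq 0$ the total variation distance does not vanish as $n\to\infty$. Since a failure costs $\Theta(n)$, you would need it to be $O(d/n)$.

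The per-round bound $q<1$ that you call the main obstacle is actually automatic from the bounded density ratio. What is missing is a coupling step whose failure leaves a discrepancy of the same bounded form, so that a path-coupling recursion applies. Separately, resampling all $d$ half-edges at $v$ at once produces ratios $e^{O(d|\beta|)}$ and hence a $d$-dependent constant, while the statement asks for an absolute $C_0$.

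The paper obtains this localization by working with the edge-type profile $L(G;\sigma^*)$ and changing the type of one half-edge at a time.
\begin{itemize}
\item Exposing the partner type of the changed half-edge gives two laws whose ratio lies in $[e^{-8},e^{8}]$, using $|\beta|<2$ and $k\le 2$.
\item With probability at least $e^{-8}$ the partner types agree, and then the residual type-count vectors coincide and can be coupled perfectly. Otherwise the residual vectors again differ by a single $e_a-e_b$.
\item This gives $D_N\le 4+(1-e^{-8})D_{N-2}$, hence $\E\|L-\widetilde L\|_1\le 4e^{8}d$ after $d$ iterations.
\item Conditionally on the profile, $G$ is uniform among matchings realizing it. The paper nests the non-monochromatic edge classes and applies the switching fact $|M_A\triangle M_B|\le\frac32|A\triangle B|$ to the monochromatic ones, giving $|E(G)\triangle E(\widetilde G)|\le 13d+2\|L-\widetilde L\|_1$.
\end{itemize}

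Your half-edge-level picture can be repaired the same way. Recolour a single half-edge $h$ and expose its partners $x$ and $x'$ in the two models. The unnormalized partner weights differ by a factor in $[e^{-2|\beta|},e^{2|\beta|}]$, and the residual laws coincide when $x=x'$. On disagreement, relabel $x\leftrightarrow x'$, so that the two residual instances again differ only in the colour of one half-edge. Each round costs $O(1)$ pairs and succeeds with probability at least $e^{-4|\beta|}$, which gives an absolute constant per half-edge and $O(d)$ in total.
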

Recall that $\Sigma_1 = \{+1,-1\}$, $\Sigma_2 = \{(+1,+1), (+1,-1), (-1,+1), (-1,-1)\}$.
For $\sigma^*,{\sigma'}^* \in \Sigma_n$, let $m(\sigma^*) \in \Z_{\ge 0}^{\Sigma_1}$, $m(\sigma^*,{\sigma'}^*) \in \Z_{\ge 0}^{\Sigma_2}$ have entries
\begin{align*}
  m_\omega(\sigma^*) = d|\{v\in [n] : \sigma^*_v = \omega\}|\,, &
  m_\omega(\sigma^*,{\sigma'}^*) = d|\{v\in [n] : (\sigma^*_v,{\sigma'}^*_v) = \omega\}|\,,
\end{align*}
for $\omega \in \Sigma_1$ or $\omega \in \Sigma_2$, respectively.
For a graph $G$ on vertex set $[n]$, let $L(G;\sigma^*) \in \Z_{\ge 0}^{\Sigma_1 \times \Sigma_1}$ and $L(G;\sigma^*,{\sigma'}^*) \in \Z_{\ge 0}^{\Sigma_2 \times \Sigma_2}$ have entries
\begin{align}\nonumber
  L_{\omega,\rho}(G;\sigma^*) &= 
  \begin{cases}
    \#\{\text{edges $e=uv$ such that $\sigma^*_u = \omega$ and $\sigma^*_v = \rho$} \} & \omega \neq \rho\,, \\
    2\#\{\text{edges $e=uv$ such that $\sigma^*_u = \sigma^*_v = \omega$} \} & \omega = \rho\,,
  \end{cases} \\
  \label{e:L-two-replica}
  L_{\omega,\rho}(G;\sigma^*,{\sigma'}^*) &= 
  \begin{cases}
    \#\{\text{edges $e=uv$ such that $(\sigma^*_u,{\sigma'}^*_u) = \omega$ and $(\sigma^*_v,{\sigma'}^*_v) = \rho$} \} & \omega \neq \rho\,, \\
    2\#\{\text{edges $e=uv$ such that $(\sigma^*_u,{\sigma'}^*_u) = (\sigma^*_v,{\sigma'}^*_v) = \omega$} \} & \omega = \rho\,,
  \end{cases}
\end{align}
for $\omega,\rho \in \Sigma_1$ and $\omega,\rho \in \Sigma_2$, respectively.
We next work towards the following lemma.
\begin{lemma}\label{l:wass-stability-of-sbm-counts}
  There exists an absolute constant $C_1$ such that for any $\beta \in [2\beta_\lowerBoundTag, \betaKS - \delta]$, the following holds.
  \begin{enumerate}[label=(\alph*)]
    \item \label{i:wass-stability-of-sbm-counts-one-replica} For any $\sigma^*,\widetilde{\sigma}^* \in \Sigma_n(\varepsilon)$ with Hamming distance $1$, let $G \sim \nu^\plantedTag_\beta(\cdot | \sigma^*)$ and ${\widetilde G} \sim \nu^\plantedTag_\beta(\cdot | \widetilde{\sigma}^*)$.
    It is possible to couple $L = L(G;\sigma^*)$ and ${\widetilde L} = L({\widetilde G};\widetilde{\sigma}^*)$ such that $\E \|L - {\widetilde L}\|_1 \le C_1 d$.
    \item \label{i:wass-stability-of-sbm-counts-two-replica} For any $(\sigma^*,{\sigma'}^*), (\widetilde{\sigma}^*,\widetilde{\sigma'}^*) \in \Sigma_{n,2}(\varepsilon)$ with Hamming distance $1$, let $G \sim \nu^{\plantedTag,2}_\beta(\cdot | \sigma^*,{\sigma'}^*)$ and ${\widetilde G} \sim \nu^{\plantedTag,2}_\beta(\cdot | \widetilde{\sigma}^*,\widetilde{\sigma'}^*)$. It is possible to couple $L = L(G;\sigma^*,{\sigma'}^*)$ and ${\widetilde L} = L({\widetilde G};\widetilde{\sigma}^*,\widetilde{\sigma'}^*)$ such that $\E \|L - {\widetilde L}\|_1 \le C_1 d$.
  \end{enumerate}
  In either case, $\|M\|_1 = \sum_{\omega,\rho} |M_{\omega,\rho}|$, where $\omega,\rho$ belong to $\Sigma_1$ or $\Sigma_2$.
\end{lemma}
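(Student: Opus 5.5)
The plan is to make the law of the count matrix $L$ given the planted signal fully explicit, and then couple the two laws by a monotone (one-dimensional) or transport argument in a low-dimensional lattice. Take part (a); part (b) is identical with $\Sigma_2$ in place of $\Sigma_1$. Given $\sigma^*$ with socket counts $m=m(\sigma^*)$, the counting argument in the proof of \cref{l:1mt-exp-rate} (resp.\ \cref{l:kmt-exp-rate}) shows that
\[
\nu^\plantedTag_\beta(L\mid\sigma^*)\propto w_m(L)\defeq\prod_\omega \binom{m_\omega}{(L_{\omega,\rho})_\rho}\prod_{\omega<\rho}L_{\omega,\rho}!\prod_\omega (L_{\omega,\omega}-1)!!\;e^{\beta\sum_{\omega<\rho}\Delta(\omega,\rho)L_{\omega,\rho}}.
\]
This is supported on symmetric nonnegative integer $L$ with row sums $m_\omega$ and even diagonal. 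Changing one coordinate of $\sigma^*$ changes $m$ to $\widetilde m=m+d(e_\rho-e_\omega)$.

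First, I would reduce to a unit change $m\to m+e_\rho-e_\omega$, iterated $d$ times, via the triangle inequality for the coupling (Wasserstein-$\ell_1$) distance. Parity issues (the diagonal must be even) are handled by moving two sockets at a time. It therefore suffices to show that the laws $P_m$ and $P_{m'}$, for $m'$ a two-socket perturbation of $m$, can be coupled with $\E\|L-L'\|_1\le C$, uniformly over $m$ close to balanced.

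For $k=1$ this is a one-parameter family: $L$ is determined by $\ell=L_{+-}$, with $L_{++}=m_+-\ell$ and $L_{--}=m_--\ell$. The weight ratio $w_{m'}(\ell)/w_m(\ell)$ is an explicit rational function of $\ell$, obtained from the ratio of binomials and double factorials, and it is monotone in $\ell$. So $P_{m'}$ stochastically dominates (or is dominated by) $P_m$, and the monotone coupling gives $\E|\ell-\ell'|=|\E\ell-\E\ell'|$. It then remains to bound the shift in the mean. The log-weight is a discrete concave function (log-concave, as in the Lagrangian analysis of \cref{l:kmt-lagrange}) with curvature of order $1/n$, and perturbing $m$ by $O(1)$ shifts the maximizer $\gamma^\beta(\alpha)$ by $O(1)$, by the derivative formula in \cref{l:1mt-alpha-beta-deriv}. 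Log-concavity together with a tilt argument then bounds the mean shift by $O(1)$.

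For $k=2$, $L$ lives on a $6$-dimensional affine lattice, and stochastic dominance is not available. Here I would instead build a coupling by a Markov chain/path argument. View $L$ as generated by the configuration model: the uniform perfect matching conditioned on types, reweighted by $e^{\beta H}$. Then couple the two matchings on $dn$ versus $dn$ sockets which differ only in the type labels of $d$ sockets (one vertex recolored). Concretely, I would use a switching coupling: run the reweighted matching measures from a common matching, and relabel the $d$ sockets of the changed vertex. The true laws differ from the naive relabelling only through the tilt factor $e^{\beta\Delta H}$ with $|\Delta H|\le d$. One therefore needs a coupling of a measure with its $O(d)$-bounded reweighting, and local switchings (Glauber dynamics on matchings, which is contractive at these $\beta$ by the same strict concavity of $\Phi_2$ near $\typeProportions_*$ established in \cref{p:kmt-concavity-in-region}) should keep the expected discrepancy $O(d)$.

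The main obstacle is this last step: obtaining an $n$-independent bound for the multi-dimensional case. I would try first to prove a local central limit theorem for $L$ under $P_m$, namely a Gaussian with mean $n\Gamma^\beta(\alpha)$ and covariance of order $n$, using the uniform strong concavity from \cref{p:kmt-concavity-in-region} and the contour/saddle estimates of \cref{s:deferred-complex-moment}. That would show $\TV(P_m,P_{m'})=O(n^{-1/2})$ and mean shift $O(1)$, after which a maximal coupling plus Cauchy–Schwarz on the $O(\sqrt n)$ fluctuations gives $\E\|L-L'\|_1=O(1)$ per unit step.
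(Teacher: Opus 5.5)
\paragraph{Where the gap is.} Your part (a) route is a plausible alternative, modulo details. It uses the monotone likelihood ratio in $\ell=L_{+-}$ under two-socket moves, the quantile coupling, and an $O(1)$ mean-shift bound from strong log-concavity. You would still have to handle the single-socket move that is unavoidable when $d$ is odd. There the supports of $\ell$ and $\ell'$ have opposite parity, so the likelihood-ratio argument does not apply as stated.

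The genuine gap is part (b).
\begin{itemize}
\item \emph{The switching coupling is unsupported.} It rests on the claim that Glauber/switch dynamics on matchings is contractive ``by the strict concavity of $\Phi_2$ near $\vec\alpha_*$''. That concavity is a large-deviation statement about the exponential rate of the annealed second moment over overlap types. It gives no path-coupling contraction for a chain on matchings.
\item \emph{The LCLT fallback is unproved and also misassembled.} With $d_{\mathrm{TV}}(P_m,P_{m'})\asymp n^{-1/2}$ and $\|L-L'\|_1$ of order $\sqrt n$, a maximal coupling plus Cauchy--Schwarz gives
\[
\E\|L-L'\|_1\le \Pr(L\ne L')^{1/2}\bigl(\E\|L-L'\|_1^2\bigr)^{1/2}=O\bigl(n^{-1/4}\cdot n^{1/2}\bigr)=O\bigl(n^{1/4}\bigr),
\]
not $O(1)$. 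To get $O(1)$ you would need a pointwise bound such as $|P_{m'}(x)/P_m(x)-1|\lesssim (1+\|x-\E L\|)/n$ throughout the bulk, together with tail control. That is a full relative-error local CLT on the $6$-dimensional lattice.
\item \emph{The constant is not absolute.} Any such asymptotic route gives a constant depending on $d,\delta,\varepsilon$, valid only for large $n$. The lemma asserts an absolute $C_1$.
\end{itemize}

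\paragraph{The missing idea.} The paper uses an exact recursion with no asymptotics at all. Take a unit move $m'=m+e_a-e_b$.
\begin{itemize}
\item Sample $L\sim\mu^{(k)}_m$ by first matching one half-edge of type $b$. Its partner type satisfies $\Pr(\rho=\tau)\propto\lambda_{b,\tau}Z_k(m-e_b-e_\tau)$, and the remainder is distributed as $\mu^{(k)}_{m-e_b-e_\rho}$. Sample $L'$ the same way, starting from a half-edge of type $a$.
\item Since $m-e_b-e_\tau=m'-e_a-e_\tau$, the two partner-type laws share the same $Z_k$ factors. They differ only through $\lambda_{b,\tau}/\lambda_{a,\tau}\in[e^{-4},e^{4}]$, which holds because $|\beta|<2$ and $|\Delta(a,\tau)-\Delta(b,\tau)|\le 2$. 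After normalization the probability ratio lies in $[e^{-8},e^{8}]$, so the partner types can be coupled to agree with probability at least $e^{-8}$.
\item On agreement the residual laws coincide and can be coupled exactly. Otherwise the residual row-sum vectors $m-e_b-e_\rho$ and $m-e_b-e_{\rho'}$ are again adjacent, with total size $N-2$.
\end{itemize}
This gives $D_N\le 4+(1-e^{-8})D_{N-2}$, hence $D_N\le 4e^8$. The bound holds uniformly in $m$ and $n$, for $k=1$ and $k=2$ alike, with no parity issue. Iterating $d$ unit moves yields $C_1=4e^8$.
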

We will address both parts simultaneously.
For part~\cref{i:wass-stability-of-sbm-one-replica}, take $k=1$, and for part~\cref{i:wass-stability-of-sbm-two-replica}, take $k=2$; let $Q = \Sigma_k$.
Let $\Delta$ denote Hamming distance on $Q$, and for $\omega,\rho \in Q$ let
\[
  \lambda_{\omega,\rho} = \exp\{\beta\Delta(\omega,\rho)\}\,.
\]
For $m \in \Z_{\ge 0}^Q$ with even sum, let $S(m)$ denote the set of symmetric $L \in \Z_{\ge 0}^{Q \times Q}$ with even diagonal satisfying, for all $\omega \in Q$,
\[
  \sum_{\rho \in Q} L_{\omega,\rho} = m_\omega\,.
\]
Further, for $L \in S(m)$, define
\[
  W_k(L)
  =
  \prod_{\omega \in Q} \frac{1}{2^{L_{\omega,\omega}/2}(L_{\omega,\omega}/2)!}
  \prod_{\{\omega,\rho\} \in \binom{Q}{2}}
  \frac{\lambda_{\omega,\rho}^{L_{\omega,\rho}}}{L_{\omega,\rho}!}\,,
\]
and let $\mu_m^{(k)}$ be the probability measure on $S(m)$ given by
\[
  \mu_m^{(k)}(L) = \frac{W_k(L)}{Z_k(m)}\,,
\]
for normalizing constant $Z_k(m)$.
By the same configuration-model count as in \eqref{e:kmt-exact}, for $G \sim \nu^\plantedTag(\cdot | \sigma^*)$ we have $L(G;\sigma^*) \sim \mu_{m(\sigma^*)}^{(1)}$, and for $G \sim \nu^{\plantedTag,2}(\cdot | \sigma^*,{\sigma'}^*)$ we have $L(G;\sigma^*,{\sigma'}^*) \sim \mu_{m(\sigma^*,{\sigma'}^*)}^{(2)}$.
\begin{lemma}\label{l:adjacent-row-sum-count-coupling}
  Suppose $m,m' \in \Z_{\ge 0}^Q$ have even sum and
  \[
    m' = m + e_a - e_b\,,
  \]
  for some distinct $a,b \in Q$, where $e_a$ denotes the coordinate vector indexed by $a$.
  Then there is a coupling of $L\sim\mu_m^{(k)}$ and $L'\sim\mu_{m'}^{(k)}$ such that $\E\|L-L'\|_1 \le 4e^8$.
\begin{proof}
  For even $N\ge 0$, let
  \begin{align*}
    D_N &= \sup\left\{
      D(m,m') : m,m' \in \Z_{\ge 0}^Q, \|m\|_1 = N, m' = m + e_a - e_b\,\,\text{for distinct}\,\,a,b\in Q
    \right\}\,, \\
    D(m,m') &= \inf\left\{
      \E_\pi \|L-L'\|_1 : \pi\,\,\text{coupling of}\,\,L\sim\mu_m^{(k)},L'\sim\mu_{m'}^{(k)}
    \right\}\,.
  \end{align*}
  We will argue below that
  \begin{equation}\label{e:adjacent-row-sum-count-coupling}
    D_N \le 4 + (1-e^{-8})D_{N-2}\,.
  \end{equation}
  Since $D_0 = 0$, a standard induction then implies the result.
  For $\omega,\rho \in Q$, define $A^{\omega,\rho} \in \Z_{\ge 0}^{Q\times Q}$ by
  \[
    A^{\omega,\rho}_{\omega',\rho'}
    = \allone\{(\omega',\rho')=(\omega,\rho)\}
    + \allone\{(\omega',\rho')=(\rho,\omega)\}\,.
  \]
  Suppose $\omega \in Q$ is such that $m_\omega > 0$.
  We claim that $L\sim\mu_m^{(k)}$ can be sampled as follows. 
  First sample $\rho$ from the probability measure on $Q$ with
  \begin{equation}\label{e:adjacent-row-sum-count-coupling-one-edge}
    \Pr(\rho = \tau)
    =
    \frac{\lambda_{\omega,\tau}Z_k(m-e_\omega-e_\tau)}{m_\omega Z_k(m)}\,.
  \end{equation}
  Conditional on $\rho$, sample $R \sim \mu_{m-e_\omega-e_\rho}^{(k)}$.
  Then, let $L = A^{\omega,\rho} + R$.
  Indeed, in the configuration model, the above procedure corresponds to first matching an arbitrary half-edge of type $\omega$ (its partner's type $\rho$ has distribution \eqref{e:adjacent-row-sum-count-coupling-one-edge}) and then matching the remaining half-edges.

  Consider any $m,m' \in \Z_{\ge 0}^Q$ with $\|m\|_1 = N$, $m' = m + e_a - e_b$.
  Then $m_b > 0$ and $m'_a > 0$.
  For $\tau \in Q$, let
  \[
    Z_{k,\tau} = Z_k(m - e_b - e_\tau) = Z_k(m' - e_a - e_\tau)\,.
  \]
  Then we can sample $L\sim\mu_m^{(k)}$, $L'\sim\mu_{m'}^{(k)}$ by sampling $\rho,\rho'$ from
  \begin{align*}
    \Pr(\rho = \tau)
    &= \frac{\lambda_{b,\tau}Z_{k,\tau}}{m_b Z_k(m)}
    = \frac{\lambda_{b,\tau}Z_{k,\tau}}{\sum_{\tau'\in Q}\lambda_{b,\tau'}Z_{k,\tau'}}\,, &
    \Pr(\rho' = \tau)
    &= \frac{\lambda_{a,\tau}Z_{k,\tau}}{m'_a Z_k(m')}
    = \frac{\lambda_{a,\tau}Z_{k,\tau}}{\sum_{\tau'\in Q}\lambda_{a,\tau'}Z_{k,\tau'}}\,,
  \end{align*}
  then sampling $R \sim \mu_{m-e_b-e_\rho}^{(k)}$, $R \sim \mu_{m'-e_a-e_\rho'}^{(k)}$, and finally setting $L = A^{b,\rho} + R$, $L' = A^{a,\rho'} + R'$.
  Let
  \[
    p = \sup\left\{
      \Pr_\pi(\rho = \rho') : 
      \pi\,\,\text{coupling of $\rho,\rho'$}
    \right\}\,.
  \]
  We claim that
  \begin{equation}\label{e:adjacent-row-sum-count-coupling-p}
    D(m,m') \le 4 + (1-p)D_{N-2}\,.
  \end{equation}
  Note that pointwise,
  \[
    \|L - L'\|_1 \le \|A^{b,\rho} - A^{a,\rho}\|_1 + \|R - R'\|_1
    \le 4 + \|R - R'\|_1\,.
  \]
  Let us couple $\rho,\rho'$ such that $\Pr(\rho = \rho') = p$.
  On the event $\rho = \rho'$, the distributions of $R$ and $R'$ are the same so we can couple them perfectly.
  Otherwise, $m-e_b-e_\rho$ and $m'-e_a-e_\rho'$ are adjacent, so we can couple $R$ and $R'$ such that $\E\|R - R'\|_1 \le D_{N-2}$.
  This proves \eqref{e:adjacent-row-sum-count-coupling-p}.

  To show \eqref{e:adjacent-row-sum-count-coupling}, it remains to show $p \ge e^{-8}$.
  Note that $|\beta|<2$, as
  \begin{align*}
    \betaKS(d)
    &=
    2\arctanh\left(\frac{1}{\sqrt{d-1}}\right)
    < 2\,,&
    |2\beta_\lowerBoundTag| &= 2d^{-10} < 1\,.
  \end{align*}
  Also note that for all $\tau \in Q$,
  \[
    |\Delta(b,\tau)-\Delta(a,\tau)| \le k \le 2\,.
  \]
  This implies $e^{-4} \le \frac{\lambda_{b,\tau}}{\lambda_{a,\tau}} \le e^4$, and thus
  \[
    e^{-4}
    \le
    \frac{\sum_{\tau'\in Q}\lambda_{b,\tau'}Z_{k,\tau'}}{\sum_{\tau'\in Q}\lambda_{a,\tau'}Z_{k,\tau'}}
    \le
    e^4\,.
  \]
  Consequently,
  \[
    \frac{\Pr(\rho = \tau)}{\Pr(\rho' = \tau)}
    =
    \frac{\lambda_{b,\tau}/\lambda_{a,\tau}}
    {\left(\sum_{\tau'\in Q}\lambda_{b,\tau'}Z_{k,\tau'}\right)/\left(\sum_{\tau'\in Q}\lambda_{a,\tau'}Z_{k,\tau'}\right)}
    \in [e^{-8},e^8]\,.
  \]
  This implies that a maximal coupling of $\rho,\rho'$ has $\Pr(\rho = \rho') \ge e^{-8}$, as desired.
\end{proof}
\end{lemma}

\begin{proof}[Proof of ~\cref{l:wass-stability-of-sbm-counts}]
  For part \cref{i:wass-stability-of-sbm-counts-one-replica}, let $m = m(\sigma^*)$, ${\widetilde m} = m(\widetilde{\sigma}^*)$.
  For part \cref{i:wass-stability-of-sbm-counts-two-replica}, let $m = m(\sigma^*,{\sigma'}^*)$, ${\widetilde m} = m(\widetilde{\sigma}^*,\widetilde{\sigma'}^*)$.
  In either case, we have $L\sim\mu_m^{(k)}$ and ${\widetilde L}\sim\mu_{{\widetilde m}}^{(k)}$.
  Furthermore, we have 
  \[
    {\widetilde m} = m + d(e_a - e_b)
  \]
  for some distinct $a,b \in Q$.
  The result follows with $C_1 = 4e^8$ by iterating ~\cref{l:adjacent-row-sum-count-coupling} $d$ times.
\end{proof}

\begin{fact}\label{f:switching}
  Suppose $A$ and $B$ are finite sets with $|A|$, $|B|$ even, and $M_A$, $M_B$ are uniformly random perfect matchings of $A$ and $B$.
  Then, there is a coupling of $M_A$ and $M_B$ such that
  \begin{equation}\label{e:switching}
    |M_A \triangle M_B| \le \frac32|A \triangle B|
  \end{equation}
  almost surely.
  Here $A \triangle B = (A \setminus B) \cup (B\setminus A)$ denotes the symmetric difference of $A$ and $B$. 
\begin{proof}
  We first consider three simple cases $B = A \cup \{j\} \setminus \{i\}$, $B = A \setminus \{i,j\}$, and $B = A \cup \{i,j\}$.
  In each case, given a uniformly random perfect matching $M_A$ of $A$, we show how to construct a perfect matching $M_B$ of $B$, marginally distributed as a uniformly random perfect matching of $B$, such that \eqref{e:switching} holds.
  The general result follows by composing these elementary operations.

  \textit{Case 1: $B = A \cup \{j\} \setminus \{i\}$.}
  Let $(i,\ell)$ be the pair containing $i$ in $M_A$.
  Set $M_B = M_A \cup \{(j,\ell)\} \setminus \{(i,\ell)\}$.
  Then $|M_A \triangle M_B| = 2 = |A \triangle B|$, so \eqref{e:switching} holds.

  \textit{Case 2: $B = A \setminus \{i,j\}$.}
  If $(i,j) \in M_A$, set $M_B = M_A \setminus \{(i,j)\}$.
  Otherwise, suppose $(i,\ell), (j,m) \in M_A$.
  Set $M_B = M_A \setminus \{(i,\ell), (j,m)\} \cup \{(\ell,m)\}$.

  \textit{Case 3: $B = A \cup \{i,j\}$.}
  With probability $1/(|A|+1)$, set $M_B = M_A \cup \{(i,j)\}$ (as this is the probability that a uniformly random perfect matching of $B$ matches $i$ with $j$).
  Otherwise, select a uniformly random pair $(\ell,m)$ of $M_A$, and set $M_B$ to either $M_A \setminus \{(\ell,m)\} \cup \{(i,\ell), (j,m)\}$ or $M_A \setminus \{(\ell,m)\} \cup \{(i,m), (j,\ell)\}$, each with probability $1/2$.

  It is easy to check that $M_B$ is marginally uniformly random in all cases.
  In cases 2 and 3, $|M_A \triangle M_B| \le 3 = \frac32 |A \triangle B|$, so \eqref{e:switching} holds.
\end{proof}
\end{fact}

\begin{proof}[Proof of ~\cref{l:wass-stability-of-sbm}]
  We will only prove part \cref{i:wass-stability-of-sbm-two-replica}, as part \cref{i:wass-stability-of-sbm-one-replica} is strictly simpler.
  Assume without loss that $(\sigma^*.{\sigma'}^*),(\widetilde{\sigma}^*,\widetilde{\sigma'}^*)$ differ only in the coordinate $(\sigma^*_1,\widetilde{\sigma}^*_1) = (+1,-1)$.
  We will construct the coupling in several steps.

  \textit{Step 1: coupling the edge-type profile}.
  By ~\cref{l:wass-stability-of-sbm-counts}, we can couple $L = L(G;\sigma^*,{\sigma'}^*)$ and ${\widetilde L} = L({\widetilde G};\widetilde{\sigma}^*,\widetilde{\sigma'}^*)$ such that $\E\|L - {\widetilde L}\|_1 \le C_1 d$.  
  Let $L^- = L^-(G;\sigma^*,{\sigma'}^*)$ and ${\widetilde L}^- = {\widetilde L}^-({\widetilde G};\widetilde{\sigma}^*,\widetilde{\sigma'}^*)$ be defined as in \eqref{e:L-two-replica}, except where we do not count the edges $e=uv$ where $1 \in \{u,v\}$.
  Sample $L^-, {\widetilde L}^-$ conditional on $L,{\widetilde L}$.
  Since there are only $d$ edges incident to vertex $1$, we have
  \[
    \|L^- - {\widetilde L}^-\|_1 \le 4d + \|L - {\widetilde L}\|_1\,.
  \]
  Condition on $L,{\widetilde L},L^-,{\widetilde L}^-$ for the rest of the construction.

  \textit{Step 2: coupling non-monochromatic edges.}
  Call an edge $e = uv$ non-monochromatic in $G$ if $(\sigma^*_u,{\sigma'}^*_u) \neq (\sigma^*_v,{\sigma'}^*_v)$ (and in ${\widetilde G}$ if $(\widetilde{\sigma}^*_u,\widetilde{\sigma'}^*_u) \neq (\widetilde{\sigma}^*_v,\widetilde{\sigma'}^*_v)$).
  Let $E_{{\mathsf{nc}}}(G)$ and $E_{{\mathsf{nc}}}({\widetilde G})$ be the non-monochromatic edge sets of $G$ and ${\widetilde G}$.
  We next sample $E_{{\mathsf{nc}}}(G)$ and $E_{{\mathsf{nc}}}({\widetilde G})$ such that
  \begin{equation}\label{e:non-monochromatic-coupling}
    |E_{{\mathsf{nc}}}(G) \triangle E_{{\mathsf{nc}}}({\widetilde G})|
    \le 4d + \frac12 \|L - {\widetilde L}\|_1\,.
  \end{equation}
  Recall that vertices $2,\ldots,n$ have the same spins under $(\sigma^*,{\sigma'}^*)$ and $(\widetilde{\sigma}^*,\widetilde{\sigma'}^*)$.
  For $\omega \in \Sigma_2$ write
  \[
    V_\omega = \{
      v \in \{2,\ldots,n\} : (\sigma^*_v,{\sigma'}^*_v) = (\widetilde{\sigma}^*_v,\widetilde{\sigma'}^*_v) = \omega
    \}\,,
  \]
  and let $S_\omega$ be the set of half-edges incident to $V_\omega$.

  For distinct $\omega,\rho \in \Sigma_2$, let $E_{{\mathsf{nc}},\omega,\rho}(G) \subseteq E_{{\mathsf{nc}}}(G)$ be the set of edges with endpoints in $V_\omega$ and $V_\rho$.
  Note that $|E_{{\mathsf{nc}},\omega,\rho}(G)| = L^-_{\omega,\rho}$.
  Furthermore, the sets $E_{{\mathsf{nc}},\omega,\rho}(G)$ can be sampled as follows.
  For each $\omega \in \Sigma_2$, generate a uniformly random partition of $S_\omega$
  \[
    S_\omega = \bigcup_{\rho \in \Sigma_2} S_{\omega\rightarrow\rho}
  \]
  where for $\rho\neq\omega$, $|S_{\omega\rightarrow\rho}| = L^-_{\omega,\rho}$ (and this determines $|S_{\omega\rightarrow\omega}|$).
  Then, $E_{{\mathsf{nc}},\omega,\rho}(G)$ is a uniformly random perfect matching of $S_{\omega\rightarrow\rho}$ and $S_{\rho\rightarrow\omega}$.

  Define $E_{{\mathsf{nc}},\omega,\rho}({\widetilde G}) \subseteq E_{{\mathsf{nc}}}({\widetilde G})$ analogously to above.
  We will couple $E_{{\mathsf{nc}},\omega,\rho}(G)$ and $E_{{\mathsf{nc}},\omega,\rho}({\widetilde G})$ such that the smaller of these sets is contained in the larger one.
  For distinct $\omega,\rho \in \Sigma_2$, let $\ell_{\omega,\rho} = \min(L^-_{\omega,\rho}, {\widetilde L}^-_{\omega,\rho})$.
  For each $\omega \in \Sigma_2$, generate a uniformly random partition of $S_\omega$
  \[
    S_\omega = \bigcup_{\rho \in \Sigma_2} S^0_{\omega\rightarrow\rho}
  \]
  where for $\rho\neq\omega$, $|S^0_{\omega\rightarrow\rho}| = \ell_{\omega,\rho}$.

  Then, for distinct $\omega,\rho \in \Sigma_2$, generate $E^0_{{\mathsf{nc}},\omega,\rho}$ as a uniformly random perfect matching of $S^0_{\omega\rightarrow\rho}$ and $S^0_{\rho\rightarrow\omega}$.
  If $L^-_{\omega,\rho} \ge {\widetilde L}^-_{\omega,\rho}$, set $E_{{\mathsf{nc}},\omega,\rho}({\widetilde G}) = E^0_{{\mathsf{nc}},\omega,\rho}$, and let $E_{{\mathsf{nc}},\omega,\rho}(G)$ be the union of $E^0_{{\mathsf{nc}},\omega,\rho}$ and $L^-_{\omega,\rho} - {\widetilde L}^-_{\omega,\rho}$ edges, formed by matching this many uniformly random half-edges from $S^0_{\omega\rightarrow\omega}$ and $S^0_{\rho\rightarrow\rho}$.
  If $L^-_{\omega,\rho} < {\widetilde L}^-_{\omega,\rho}$ we reverse the roles of $G,{\widetilde G}$.

  It is clear that the sets $E_{{\mathsf{nc}},\omega,\rho}(G)$, $E_{{\mathsf{nc}},\omega,\rho}({\widetilde G})$ thus generated have the correct marginals, and
  \begin{align}\nonumber
    \sum_{(\omega,\rho) \in \binom{\Sigma_2}{2}}
    |E_{{\mathsf{nc}},\omega,\rho}(G) \triangle E_{{\mathsf{nc}},\omega,\rho}({\widetilde G})| 
    &= \sum_{(\omega,\rho) \in \binom{\Sigma_2}{2}} \|L^-_{\omega,\rho} - {\widetilde L}^-_{\omega,\rho}\| \\
    \label{e:non-monochromatic-non-1}
    &\le \frac12 \|L^- - {\widetilde L}^-\|_1
    \le 2d + \frac12 \|L - {\widetilde L}\|_1\,.
  \end{align}
  Conditional on the above, we sample the non-monochromatic edges incident to $1$ in $G,{\widetilde G}$.
  Since there are at most $d$ such edges in each of $G$ and ${\widetilde G}$,
  \[
    |E_{{\mathsf{nc}}}(G) \triangle E_{{\mathsf{nc}}}({\widetilde G})|
    \le 2d + \sum_{(\omega,\rho) \in \binom{\Sigma_2}{2}} 
    |E_{{\mathsf{nc}},\omega,\rho}(G) \triangle E_{{\mathsf{nc}},\omega,\rho}({\widetilde G})|\,,
  \]
  which proves \eqref{e:non-monochromatic-coupling}.

  \textit{Step 3: coupling monochromatic edges.}
  We finally generate the monochromatic edge sets $E_{{\mathsf{mc}}}(G)$ and $E_{{\mathsf{mc}}}({\widetilde G})$ of $G$ and ${\widetilde G}$.
  For $\omega \in \Sigma_2$, let $T_\omega(G)$ denote the sets of half-edges that are not yet matched in $G$, and that are incident to a vertex $v$ with $(\sigma^*_v,{\sigma'}^*_v) = \omega$.
  (Note that unlike the definition of $S_\omega$ above, we do not exclude half-edges incident to $1$).
  Analogously define $T_\omega({\widetilde G})$.

  The set $E_{{\mathsf{mc}}}(G)$ is distributed as the union of uniformly random perfect matchings of $T_\omega(G)$, for $\omega \in \Sigma_2$.
  The set $E_{{\mathsf{mc}}}({\widetilde G})$ is analogous.
  We claim that
  \begin{equation}\label{e:monochromatic-discrepancy-half-edges}
    \sum_{\omega\in\Sigma_2}
    |T_\omega(G) \triangle T_\omega({\widetilde G})| \le 6d + \|L - {\widetilde L}\|_1\,.
  \end{equation}
  Indeed, there are two possible sources of half-edges to the left-hand side of \eqref{e:monochromatic-discrepancy-half-edges}: 
  \begin{enumerate}[label=(\roman*)]
    \item For each edge in $E_{{\mathsf{nc}},\omega,\rho}(G) \triangle E_{{\mathsf{nc}},\omega,\rho}({\widetilde G})$, its constituent half-edges may belong to $T_\omega({\widetilde G}) \setminus T_\omega(G)$ and $T_\rho({\widetilde G}) \setminus T_\rho(G)$ if $E_{{\mathsf{nc}},\omega,\rho}({\widetilde G}) \subseteq E_{{\mathsf{nc}},\omega,\rho}(G)$, and to $T_\omega(G) \setminus T_\omega({\widetilde G})$ and $T_\rho(G) \setminus T_\rho({\widetilde G})$ if $E_{{\mathsf{nc}},\omega,\rho}(G) \subseteq E_{{\mathsf{nc}},\omega,\rho}({\widetilde G})$. By \eqref{e:non-monochromatic-non-1}, the number of half-edges of this form is at most
    \[
      2\sum_{(\omega,\rho) \in \binom{\Sigma_2}{2}}
      |E_{{\mathsf{nc}},\omega,\rho}(G) \triangle E_{{\mathsf{nc}},\omega,\rho}({\widetilde G})| 
      \le 4d + \|L - {\widetilde L}\|_1\,.
    \]
    \item There are at most $d$ half-edges incident to $1$ in $G$ that were not matched above, and similarly for ${\widetilde G}$.
    The total number of such half-edges is at most $2d$.
  \end{enumerate}
  This proves \eqref{e:monochromatic-discrepancy-half-edges}.
  Fact~\cref{f:switching} and \eqref{e:monochromatic-discrepancy-half-edges} imply that we can couple $E_{{\mathsf{mc}}}(G)$ and $E_{{\mathsf{mc}}}({\widetilde G})$ such that 
  \[
    |E_{{\mathsf{mc}}}(G) \triangle E_{{\mathsf{mc}}}({\widetilde G})|
    \le \frac32 \sum_{\omega\in\Sigma_2}
    |T_\omega(G) \triangle T_\omega({\widetilde G})|
    \le 9d + \frac32\|L - {\widetilde L}\|_1\,.
  \]
  Together with \eqref{e:non-monochromatic-coupling} this yields
  \[
    |E(G) \triangle E({\widetilde G})| \le 13d + 2\|L - {\widetilde L}\|_1\,.
  \]
  Finally, the guarantee $\E\|L - {\widetilde L}\|_1 \le C_1 d$ from ~\cref{l:wass-stability-of-sbm-counts} implies the conclusion.
\end{proof}

\subsection{Completion of the proof}
\label{aa:restricted-fe-lipschitz-completion}

In this subsection, we make the dependence of $\Sigma_n(\sigma^*,I)$ on $\varepsilon$ explicit, writing
\[
  \Sigma_n(\sigma^*,\varepsilon,I) = \Sigma_n(\sigma^*,I) = \left\{
    \sigma \in \Sigma_n(\varepsilon) : R(\sigma^*,\sigma) \in I
  \right\}\,.
\]
\begin{fact}\label{f:hamming-nbrs}
  Suppose $\sigma^* \in \Sigma_n(\varepsilon)$ and $I \in {\mathcal{I}}$.
  Write $\partial \Sigma_n(\sigma^*,\varepsilon,I) = \Sigma_n(\sigma^*,\varepsilon,I) \setminus \Sigma_n(\sigma^*,\varepsilon-\frac{2}{n},I)$.
  If $\sigma \in \partial \Sigma_n(\sigma^*,\varepsilon,I)$, then there are at least $n/50$ points $\widetilde{\sigma} \in \Sigma_n(\sigma^*,\varepsilon - \frac{2}{n},I)$ with Hamming distance $1$ from $\sigma$.
\begin{proof}
  Let $p = R(\sigma^*,\allone)$, $q = R(\sigma,\sigma^*)$, and $r = R(\sigma,\allone)$.
  The assumption $\sigma \in \partial \Sigma_n(\sigma^*,\varepsilon,I)$ implies $\varepsilon - \frac{2}{n} \le |r| < \varepsilon$ and $q\in I$.
  Assume without loss that $r>0$.
  Define
  \[
    T = \{i\in [n] : \sigma_i = 1\}\,,
  \]
  so that $|T| \ge n/2$.
  For $i\in T$, write $\sigma^{\oplus i}$ for the vector obtained from $\sigma$ by negating $\sigma_i$.
  Note that $R(\sigma^{\oplus i},\allone) = r - \frac{2}{n} < \varepsilon - \frac{2}{n}$.
  We will find $n/50$ values $i\in T$ such that $R(\sigma^{\oplus i}, \sigma^*) \in I$, which ensures $\sigma^{\oplus i} \in \Sigma_n(\sigma^*,\varepsilon - \frac{2}{n},I)$.
  Write
  \begin{align*}
    T_+ &= \{i\in T : \sigma^*_i = 1\}\,, & 
    T_- &= \{i\in T : \sigma^*_i = -1\}\,.
  \end{align*}
  For all $i\in T_+$, we have $R(\sigma^{\oplus i}, \sigma^*) = q - \frac{2}{n}$, and for all $i\in T_-$, we have $R(\sigma^{\oplus i}, \sigma^*) = q + \frac{2}{n}$.
  If $q - \frac{2}{n}$ and $q + \frac{2}{n}$ are both in $I$, then $R(\sigma^{\oplus i}, \sigma^*) \in I$ for all $i\in T$, as desired.
  
  Otherwise, exactly one of $q - \frac{2}{n}$ and $q + \frac{2}{n}$ is in $I$, and $q$ is within $\frac{2}{n}$ of an endpoint of $I$.
  The assumption $I\in {\mathcal{I}}$ ensures $|q| \le 0.9 + \frac{2}{n}$.
  Note that
  \begin{align*}
    \frac{|T_+|}{n} &= \frac{1+r+p+q}{4}\,, &
    \frac{|T_-|}{n} &= \frac{1+r-p-q}{4}\,.
  \end{align*}
  Finally, note that $r-|p| \ge (\varepsilon - \frac{2}{n}) - \varepsilon = -\frac{2}{n}$.
  So,
  \[
    \frac{\min(|T_+|,|T_-|)}{n}
    \ge \frac{1+r-|p|-|q|}{4}
    \ge \frac{1}{40} - \frac{1}{n}
    \ge \frac{1}{50}\,.
  \]
  The conclusion follows by taking all $i\in T_+$ if $q - \frac{2}{n} \in I$, and all $i\in T_-$ if $q + \frac{2}{n} \in I$.
\end{proof}
\end{fact}
\begin{proof}[Proof of ~\cref{p:fe-conc-conditional-mean-lipschitz}~\cref{i:fe-conc-conditional-mean-lipschitz-one-replica}]
  Recall that we abbreviate
  \[
    \widetilde{\isingHamiltonian}_{G,\beta,t}(\sigma;\sigma^*,{\boldsymbol{g}}) = \beta \isingHamiltonian_{G}(\sigma) + t \langle \sigma,\sigma^*\rangle + \sqrt{t} \langle \sigma - \sigma^*, {\boldsymbol{g}}\rangle\,.
  \]
  Define
  \[
    {\breve F}_{n,\beta,t}(\sigma^*,I) = \E^\plantedTag_{\beta,t}\left[
      \frac1n \log \sum_{\sigma \in \Sigma_n(\sigma^*,\varepsilon-\frac{2}{n},I)}
      \exp(\widetilde{\isingHamiltonian}_{G,\beta,t}(\sigma;\sigma^*,{\boldsymbol{g}}))
      \,\Big|\, \sigma^*
    \right]\,.
  \]
  This is ${\widehat F}_{n,\beta,t}(\sigma^*,I)$, except with the sum over $\sigma \in \Sigma_n(\sigma^*,\varepsilon,I)$ replaced by a sum over $\sigma \in \Sigma_n(\sigma^*,\varepsilon-\frac{2}{n},I)$.
  We will write
  \[
    {\widehat F}_{n,\beta,t}(\sigma^*,I) - {\widehat F}_{n,\beta,t}(\widetilde{\sigma}^*,I)
    = \left(
      {\widehat F}_{n,\beta,t}(\sigma^*,I) - {\breve F}_{n,\beta,t}(\sigma^*,I)
    \right) + \left(
      {\breve F}_{n,\beta,t}(\sigma^*,I) - {\widehat F}_{n,\beta,t}(\widetilde{\sigma}^*,I)
    \right)
  \]
  and separately bound these terms.
  Note that
  \begin{align}\nonumber
    n \Big({\widehat F}_{n,\beta,t}(\sigma^*,I) &- {\breve F}_{n,\beta,t}(\sigma^*,I)\Big)
    = \E^\plantedTag_\beta \E_{\boldsymbol{g}} \left[
      \log \frac{
        \sum_{\sigma \in \Sigma_n(\sigma^*,\varepsilon,I)}
        \exp(\widetilde{\isingHamiltonian}_{G,\beta,t}(\sigma;\sigma^*,{\boldsymbol{g}}))
      }{
        \sum_{\widetilde{\sigma} \in \Sigma_n(\sigma^*,\varepsilon-\frac{2}{n},I)}
        \exp(\widetilde{\isingHamiltonian}_{G,\beta,t}(\widetilde{\sigma};\sigma^*,{\boldsymbol{g}}))
      }
      \,\Big|\, \sigma^*
    \right] \\
    \label{e:fe-conc-conditional-mean-lipschitz-one-replica-step1}
    &= \E^\plantedTag_\beta \E_{\boldsymbol{g}} \left[
      \log \left(1 + \frac{
        \sum_{\sigma \in \partial \Sigma_n(\sigma^*,\varepsilon,I)}
        \exp(\widetilde{\isingHamiltonian}_{G,\beta,t}(\sigma;\sigma^*,{\boldsymbol{g}}))
      }{
        \sum_{\widetilde{\sigma} \in \Sigma_n(\sigma^*,\varepsilon-\frac{2}{n},I)}
        \exp(\widetilde{\isingHamiltonian}_{G,\beta,t}(\widetilde{\sigma};\sigma^*,{\boldsymbol{g}}))
      }
      \right)
      \,\Big|\, \sigma^*
    \right]
  \end{align}
  For each $\widetilde{\sigma} \in \Sigma_n(\sigma^*,\varepsilon-\frac{2}{n},I)$, let 
  \[
    S(\widetilde{\sigma}) = \left\{
      \sigma \in \partial \Sigma_n(\sigma^*,\varepsilon,I) : \text{$\sigma$ has Hamming distance $1$ from $\widetilde{\sigma}$}
    \right\}\,.
  \]
  Write
  \begin{align*}
    {\dot F}(\widetilde{\sigma}) 
    = \sum_{\sigma \in S(\widetilde{\sigma})}
    \exp(\widetilde{\isingHamiltonian}_{G,\beta,t}(\sigma;\sigma^*,{\boldsymbol{g}}))
    &\le \exp(\widetilde{\isingHamiltonian}_{G,\beta,t}(\widetilde{\sigma};\sigma^*,{\boldsymbol{g}})) \cdot e^{d|\beta| + 2t}
    \sum_{\sigma \in S(\widetilde{\sigma})}
    \exp(
      \sqrt{t} \langle \sigma - \widetilde{\sigma}, {\boldsymbol{g}}\rangle
    ) \\
    &\le \exp(\widetilde{\isingHamiltonian}_{G,\beta,t}(\widetilde{\sigma};\sigma^*,{\boldsymbol{g}})) \cdot e^{d|\beta| + 2t}
    \sum_{i=1}^n e^{2\sqrt{t} |g_i|}\,.
  \end{align*}
  By \cref{f:hamming-nbrs}, $|S^{-1}(\sigma)| \ge n/50$ for all $\sigma \in \partial \Sigma_n(\sigma^*,\varepsilon,I)$.
  Thus,
  \begin{align*}
    \sum_{\sigma \in \partial \Sigma_n(\sigma^*,\varepsilon,I)}
    &\exp(\widetilde{\isingHamiltonian}_{G,\beta,t}(\sigma;\sigma^*,{\boldsymbol{g}}))
    \le \frac{50}{n} \sum_{\widetilde{\sigma} \in \Sigma_n(\sigma^*,\varepsilon - \frac{2}{n}, I)}
    {\dot F}(\widetilde{\sigma}) \\
    &\le \frac{50}{n} e^{d|\beta| + 2t}
    \left(\sum_{i=1}^n e^{2\sqrt{t} |g_i|}\right)
    \sum_{\widetilde{\sigma} \in \Sigma_n(\sigma^*,\varepsilon - \frac{2}{n}, I)}
    \exp(\widetilde{\isingHamiltonian}_{G,\beta,t}(\widetilde{\sigma};\sigma^*,{\boldsymbol{g}}))\,.
  \end{align*}
  Together with \eqref{e:fe-conc-conditional-mean-lipschitz-one-replica-step1} this implies
  \begin{align}\nonumber
    n \Big({\widehat F}_{n,\beta,t}(\sigma^*,I) &- {\breve F}_{n,\beta,t}(\sigma^*,I)\Big)
    \le \E_{\boldsymbol{g}} \left[
      \log\left(
        1 + \frac{50}{n} e^{d|\beta| + 2t}
        \sum_{i=1}^n e^{2\sqrt{t} |g_i|}
      \right)
    \right] \\
    \label{e:fe-conc-conditional-mean-lipschitz-one-replica-step2}
    &\le \log\left(
      1 + \frac{50}{n} e^{d|\beta| + 2t}
      \sum_{i=1}^n \E_{\boldsymbol{g}} e^{2\sqrt{t} |g_i|}
    \right)
    \le \log\left(1 + 100e^{d|\beta| + 4t}\right)\,.
  \end{align}
  For the remaining term, assume without loss that $\sigma^*$ and $\widetilde{\sigma}^*$ differ in coordinate $1$.
  For ${\boldsymbol{v}} \in \R^n$, write ${\boldsymbol{v}}^{\oplus 1}$ for ${\boldsymbol{v}}$ with $v_1$ negated.
  Note that if $\sigma \in \Sigma_n(\sigma^*,\varepsilon - \frac{2}{n}, I)$, then $\sigma^{\oplus 1} \in \Sigma_n(\widetilde{\sigma}^*,\varepsilon, I)$.
  Then, if $\E$ denotes expectation over any coupling of $(G,{\boldsymbol{g}}) \sim \nu^\plantedTag_\beta(\cdot | \sigma^*) \otimes {\mathcal{N}}(0,{\boldsymbol{I}}_n)$ and $({\widetilde G},{\widetilde {\boldsymbol{g}}}) \sim \nu^\plantedTag_\beta(\cdot | \widetilde{\sigma}^*) \otimes {\mathcal{N}}(0,{\boldsymbol{I}}_n)$,
  \begin{align*}
    n\left({\breve F}_{n,\beta,t}(\sigma^*,I) - {\widehat F}_{n,\beta,t}(\widetilde{\sigma}^*,I)\right)
    &= \E \left[
      \log \frac{
        \sum_{\sigma \in \Sigma_n(\sigma^*,\varepsilon - \frac{2}{n}, I)}
        \exp(\widetilde{\isingHamiltonian}_{G,\beta,t}(\sigma;\sigma^*,{\boldsymbol{g}}))
      }{
        \sum_{\widetilde{\sigma} \in \Sigma_n(\widetilde{\sigma}^*,\varepsilon, I)}
        \exp(\widetilde{\isingHamiltonian}_{{\widetilde G},\beta,t}(\widetilde{\sigma};\widetilde{\sigma}^*,{\widetilde {\boldsymbol{g}}}))
      }
    \right] \\
    &\le \E \left[
      \log \frac{
        \sum_{\sigma \in \Sigma_n(\sigma^*,\varepsilon - \frac{2}{n}, I)}
        \exp(\widetilde{\isingHamiltonian}_{G,\beta,t}(\sigma;\sigma^*,{\boldsymbol{g}}))
      }{
        \sum_{\sigma \in \Sigma_n(\sigma^*,\varepsilon - \frac{2}{n}, I)}
        \exp(\widetilde{\isingHamiltonian}_{{\widetilde G},\beta,t}(\sigma^{\oplus 1};(\sigma^*)^{\oplus 1},{\widetilde {\boldsymbol{g}}}))
      }
    \right]\,.
  \end{align*}
  Let $\pi$ be the coupling of $G \sim \nu^\plantedTag_\beta(\cdot | \sigma^*)$ and ${\widetilde G} \sim \nu^\plantedTag_\beta(\cdot | \widetilde{\sigma}^*)$ provided by ~\cref{l:wass-stability-of-sbm}~\cref{i:wass-stability-of-sbm-one-replica}, so that
  \[
    \E_\pi |E(G) \triangle E({\widetilde G})| \le C_0d\,.
  \]
  Couple ${\boldsymbol{g}},{\widetilde {\boldsymbol{g}}}$ by ${\widetilde {\boldsymbol{g}}} = {\boldsymbol{g}}^{\oplus 1}$, independently of $(G,{\widetilde G})$.
  For each $\sigma \in \Sigma_n(\sigma^*,\varepsilon - \frac{2}{n}, I)$, we have
  \[
    |\widetilde{\isingHamiltonian}_{G,\beta,t}(\sigma;\sigma^*,{\boldsymbol{g}})
    - \widetilde{\isingHamiltonian}_{{\widetilde G},\beta,t}(\sigma^{\oplus 1};(\sigma^*)^{\oplus 1},{\widetilde {\boldsymbol{g}}})|
    = |\beta| |\isingHamiltonian_{G}(\sigma) - \isingHamiltonian_{{\widetilde G}}(\sigma^{\oplus 1})|
    \le |\beta| \left(|E(G) \triangle E({\widetilde G})| + d\right)\,,
  \]
  since changing $G$ to ${\widetilde G}$ changes at most $|E(G) \triangle E({\widetilde G})|$ edge contributions, and flipping coordinate $1$ changes only the edges incident to that coordinate.
  Applying this bound inside the preceding log-ratio gives
  \begin{equation}\label{e:fe-conc-conditional-mean-lipschitz-one-replica-step3}
    n\left({\breve F}_{n,\beta,t}(\sigma^*,I) - {\widehat F}_{n,\beta,t}(\widetilde{\sigma}^*,I)\right)
    \le |\beta| \E_{\pi,{\boldsymbol{g}}}\left[|E(G) \triangle E({\widetilde G})| + d\right]
    \le (C_0+1) |\beta| d\,.
  \end{equation}
  Combining \eqref{e:fe-conc-conditional-mean-lipschitz-one-replica-step2} and \eqref{e:fe-conc-conditional-mean-lipschitz-one-replica-step3} yields
  \[
    n\left({\widehat F}_{n,\beta,t}(\sigma^*,I)
    - {\widehat F}_{n,\beta,t}(\widetilde{\sigma}^*,I)\right)
    \le \log\left(1 + 100e^{d|\beta| + 4t}\right) + (C_0+1) |\beta| d
    \le C(d)(1+t)\,.
  \]
  The reverse inequality follows by symmetry.
\end{proof}
For the proof of ~\cref{p:fe-conc-conditional-mean-lipschitz}~\cref{i:fe-conc-conditional-mean-lipschitz-two-replica}, we introduce the notation
\begin{align*}
  \Sigma_n(\sigma^*,{\sigma'}^*,\varepsilon,I)
  &= \left\{
    \sigma \in \Sigma_n(\sigma^*,\varepsilon,I) : |R(\sigma,{\sigma'}^*)| < \varepsilon
  \right\} \\
  &= \left\{
    \sigma \in \Sigma_n : \max(|R(\sigma,\allone)|,|R(\sigma,{\sigma'}^*)|) < \varepsilon, 
    R(\sigma^*,\sigma) \in I
  \right\}\,.
\end{align*}
The following fact is analogous to Fact~\cref{f:hamming-nbrs}.
\begin{fact}\label{f:hamming-nbrs-two-replica}
  Suppose $(\sigma^*,{\sigma'}^*) \in \Sigma_{n,2}(\varepsilon)$ and $I \in {\mathcal{I}}$.
  Write $\partial \Sigma_n(\sigma^*,{\sigma'}^*,\varepsilon,I) = \Sigma_n(\sigma^*,{\sigma'}^*,\varepsilon,I) \setminus \Sigma_n(\sigma^*,{\sigma'}^*,\varepsilon-\frac{2}{n},I)$.
  If $\sigma \in \partial \Sigma_n(\sigma^*,{\sigma'}^*,\varepsilon,I)$, then there are at least $n^3/10^6$ points $\widetilde{\sigma} \in \Sigma_n(\sigma^*,{\sigma'}^*,\varepsilon - \frac{2}{n},I)$ with Hamming distance $3$ from $\sigma$.
\end{fact}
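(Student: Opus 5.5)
The plan is to mimic Fact~\ref{f:hamming-nbrs}, now with three linear constraints instead of two. Write
\[
  r = R(\sigma,\allone)\,,\qquad s = R(\sigma,{\sigma'}^*)\,,\qquad q = R(\sigma,\sigma^*)\,.
\]
The boundary assumption gives $\max(|r|,|s|) \ge \varepsilon - \frac2n$, $|r|,|s|<\varepsilon$ and $q\in I$. The goal is to find many $3$-element sets $\{i,j,k\}$ such that flipping $\sigma$ on them gives $\widetilde{\sigma}$ with $|R(\widetilde{\sigma},\allone)|,|R(\widetilde{\sigma},{\sigma'}^*)| < \varepsilon-\frac2n$ and $R(\widetilde{\sigma},\sigma^*)\in I$.

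Classify each coordinate $i$ by its \emph{class} $(a_i,b_i,c_i) = (\sigma_i,\sigma_i{\sigma'}^*_i,\sigma_i\sigma^*_i)\in\{\pm1\}^3$. Flipping $i$ moves $(r,s,q)$ by $-\frac2n(a_i,b_i,c_i)$. Hence flipping a triple moves $(r,s,q)$ by $-\frac2n$ times the sum of the three class vectors.

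Fix signs $a=\mathrm{sgn}(r)$, $b=\mathrm{sgn}(s)$ (with an arbitrary choice if a coordinate is $0$). The sufficient condition for the first two constraints is that the $a$-components of the triple sum to a positive multiple of $a$, and similarly for $b$. Then $r$ and $s$ each move toward $0$ by between $\frac2n$ and $\frac6n$. This keeps both strictly below $\varepsilon-\frac2n$ in absolute value: on the boundary side the move is toward zero by at least $\frac2n$, and since $\varepsilon \gg 6/n$ no sign overshoot can push $|\cdot|$ past $\varepsilon-\frac2n$. Admissible patterns of $(a,b)$-types are
\[
  (a,b)^3\,,\qquad (a,b),(a,-b),(-a,b)\,,\qquad (a,b),(a,b),(-a,-b)\,,
\]
and analogous mixtures.

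The size of each $(a',b')$-type is controlled by the small overlaps. The fraction of $i$ with $(\sigma_i,\sigma_i{\sigma'}^*_i)=(a',b')$ equals
\[
  \tfrac14\bigl(1+a'r+b's+a'b'R({\sigma'}^*,\allone)\bigr) \;\ge\; \tfrac14-\tfrac34\varepsilon \;\ge\; \tfrac15\,,
\]
using $\varepsilon<\varepsilon_0<1$, and shrinking $\varepsilon_0$ if needed. For the $q$-constraint, if $q$ is at distance more than $\frac6n$ from $\partial I$, any admissible triple works. This gives at least $\binom{n/5}{3}\ge n^3/10^6$ choices.

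Otherwise $q$ is near an endpoint, and there is a required direction $d\in\{\pm1\}$. We then need at least two of the three chosen coordinates to have $c=-d$, since a sum of three $\pm1$'s is odd and must have sign $-d$. Here $I\in\mathcal I$ gives $|q|\le 0.9+\frac2n$, so the coordinates with $c=-d$ form a fraction $(1-dq)/2\ge 1/25$. By pigeonhole, some $(a',b')$-type contains at least $n/100$ of them.

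The main obstacle is this last step. The joint distribution of $(a,b,c)$ involves the uncontrolled triple correlation $\frac1n\sum_i\sigma_i{\sigma'}^*_i\sigma^*_i$. Consequently we only know that \emph{some} $(a',b')$-type is rich in $c=-d$, not which one. I would resolve it by a short case analysis on $(a',b')$, choosing for each case a pattern that draws two coordinates from the rich type with $c=-d$ and the third from a type of size $\ge n/5$, while keeping the $r,s$ sign conditions.
\begin{itemize}
  \item If $(a',b')=(a,b)$, use the pattern $(a,b)^3$.
  \item If $(a',b')=(-a,-b)$, use $(-a,-b)$ twice together with a partner. This fails the sign condition, so here one must use that only one of $|r|,|s|$ need be near $\varepsilon$. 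The coordinate that is not near the boundary is allowed to move slightly away from $0$, since its slack exceeds $6/n$. The intended fix is to re-choose which constraint is tight.
\end{itemize}
If some case still fails, I would exploit the sign ambiguity (negating roles via $q\mapsto$ moves by $\pm\frac2n$ in the case where both $q\pm\frac2n$ endpoints matter) or allow Hamming distance $3$ with one of the flips being an $(a,b)$-coordinate of arbitrary $c$. Each case yields at least $(n/100)^2\cdot(n/5)/6\ge n^3/10^6$ triples. It is exactly to absorb this slack that the constant $10^{-6}$ is so small.
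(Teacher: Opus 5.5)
Your non-boundary case (three flips from the $(a,b)$-class, which has size $\ge n/5$) is fine. The near-endpoint case has a genuine gap, exactly at the step you flagged.

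For the case where the pigeonholed class is $(-a,-b)$, your fix relies on ``only one of $|r|,|s|$ need be near $\varepsilon$''. This is false. Membership in $\partial\Sigma_n(\sigma^*,{\sigma'}^*,\varepsilon,I)$ only gives $\max(|r|,|s|)\ge\varepsilon-\frac2n$, and both $|r|$ and $|s|$ may lie in $[\varepsilon-\frac2n,\varepsilon)$. In that situation any triple containing two $(-a,-b)$-coordinates has $a$-sum in $\{-a,-3a\}$ and $b$-sum in $\{-b,-3b\}$. It therefore increases both $|r|$ and $|s|$ by at least $\frac2n$ and leaves the set entirely.

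The classes $(a,-b)$ and $(-a,b)$ are not handled either: two coordinates from $(a,-b)$ push $s$ the wrong way. More fundamentally, your only inputs are the $(a,b)$-class sizes and the marginal fraction $\frac{1-dq}{2}$ of $\{c=-d\}$. Nothing in those inputs excludes that every coordinate with $c=-d$ lies in the $(-a,-b)$-class. So no case analysis built on them can close, and the listed fallbacks are not arguments.

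The missing observation is that the triple correlation is never needed, because your $(a,b)$ computation applies verbatim to the other two pairs. Since $a_ic_i=\sigma^*_i$ and $b_ic_i=\sigma^*_i{\sigma'}^*_i$:
\begin{itemize}
\item the class $\{a_i=a',c_i=c'\}$ has size $\frac n4\bigl(1+a'r+c'q+a'c'R(\sigma^*,\allone)\bigr)$;
\item the class $\{b_i=b',c_i=c'\}$ has size $\frac n4\bigl(1+b's+c'q+b'c'R(\sigma^*,{\sigma'}^*)\bigr)$;
\item $|R(\sigma^*,\allone)|,|R(\sigma^*,{\sigma'}^*)|<\varepsilon$ because $(\sigma^*,{\sigma'}^*)\in\Sigma_{n,2}(\varepsilon)$.
\end{itemize}
With $|q|\le 0.91$, which holds when the binding endpoint of $I$ is not $\pm1$, both classes have size $\ge n/50$ for small $\varepsilon$. (If the nearby endpoint is $\pm1$, the $q$-constraint is vacuous and you are back in your first case.)

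Now take one coordinate from each of $\{a_i=a,b_i=b\}$, $\{a_i=a,c_i=-d\}$ and $\{b_i=b,c_i=-d\}$. Each of $r,s,q$ then receives two of three votes in its required direction, so it moves by $\frac2n$ or $\frac6n$ that way. This gives at least $\frac16\bigl(\frac{n^3}{5\cdot 50^2}-O(n^2)\bigr)\ge n^3/10^6$ distinct $\widetilde\sigma$, which is the paper's argument.

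Within your own framework, the same pairwise counts also close the case analysis. Either $\{a,b,-d\}$ has linear size, and your $(a,b)^3$ pattern works. Or both $\{a,-b,-d\}$ and $\{-a,b,-d\}$ have linear size, and the pattern $(a,-b,-d),(-a,b,-d),(a,b,\ast)$ works.
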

\begin{remark}
  The most direct adaptation of Fact~\cref{f:hamming-nbrs}, that there are at least $\Omega(n)$ points $\widetilde{\sigma} \in \Sigma_n(\sigma^*,{\sigma'}^*,\varepsilon - \frac{2}{n},I)$ with Hamming distance $1$ from $\sigma$, is not true.
  In fact, the number of such points can be $0$.
  To see this, consider $n$ a multiple of $4$ and a partition $[n] = S_{+++} \cup S_{+--} \cup S_{-+-} \cup S_{--+}$ of sizes
  \begin{align*}
    \frac{|S_{+++}|}{n} &= \frac14 + \frac{\varepsilon'}{2}\,, & 
    \frac{|S_{+--}|}{n} &= \frac14\,, & 
    \frac{|S_{-+-}|}{n} &= \frac14\,, & 
    \frac{|S_{--+}|}{n} &= \frac14 - \frac{\varepsilon'}{2}\,,
  \end{align*}
  where $\varepsilon' \in [\varepsilon - \frac{2}{n}, \varepsilon)$ is set so that $\frac{\varepsilon'}{2}$ is a multiple of $\frac{1}{n}$.
  Consider $I = [0,\frac12]$ and $\sigma,\sigma^*,{\sigma'}^* \in \Sigma_n$ such that for $i\in S_\omega$ we have $(\sigma_i,\sigma^*_i,{\sigma'}^*_i) = (\omega_1,\omega_2,\omega_3)$.
  We easily check that $(\sigma^*,{\sigma'}^*) \in \Sigma_{n,2}(\varepsilon)$ and $\sigma \in \partial \Sigma_n(\sigma^*,{\sigma'}^*,\varepsilon,I)$, with 
  \begin{align*}
    R(\sigma,\allone) = R(\sigma,{\sigma'}^*)
    &= \varepsilon' \in [\varepsilon - \textstyle\frac{2}{n}, \varepsilon)\,, & 
    R(\sigma,\sigma^*) &= 0 \in I\,.
  \end{align*}
  If $\widetilde{\sigma}$ is a Hamming distance $1$ neighbor of $\sigma$ satisfying $\max(|R(\sigma,\allone)|, |R(\sigma,{\sigma'}^*)|) < \varepsilon - \frac{2}{n}$, it must differ from $\sigma$ in a coordinate $i\in S_{+++}$.
  But then $R(\widetilde{\sigma},\sigma^*) = -\frac{2}{n} \not\in I$, so no such $\widetilde{\sigma}$ exist.
  This is why Fact~\cref{f:hamming-nbrs-two-replica} considers points with Hamming distance $3$.
\end{remark}
\begin{proof}[Proof of Fact~\cref{f:hamming-nbrs-two-replica}]
  Let $p = R(\sigma,\allone)$, $q = R(\sigma,\sigma^*)$, and $r = R(\sigma,{\sigma'}^*)$.
  For $\omega_1,\omega_2 \in \{\pm 1\}$, let
  \begin{align*}
    S_{p,q}(\omega_1,\omega_2) &= \{i \in [n] : \sigma_i = \omega_1, \sigma_i\sigma^*_i = \omega_2\}\,, \\
    S_{p,r}(\omega_1,\omega_2) &= \{i \in [n] : \sigma_i = \omega_1, \sigma_i{\sigma'}^*_i = \omega_2\}\,, \\
    S_{q,r}(\omega_1,\omega_2) &= \{i \in [n] : \sigma_i\sigma^*_i = \omega_1, \sigma_i{\sigma'}^*_i = \omega_2\}\,.
  \end{align*}
  Note that
  \begin{equation}\label{e:hamming-nbrs-two-replica}
    \frac{|S_{p,q}(\omega_1,\omega_2)|}{n} = \frac{1}{4} \left(1 + \omega_1 p + \omega_2 q + \omega_1 \omega_2 R(\sigma^*,\allone)\right)\,,
  \end{equation}
  and analogously for $S_{p,r}$ and $S_{q,r}$.
  Also define
  \begin{align*}
    \Sigma_p &= \{\omega \in \{\pm 1\} : |p - \textstyle \frac{6}{n} \omega| < \varepsilon\}\,, \\
    \Sigma_q &= \{\omega \in \{\pm 1\} : |q - \textstyle \frac{6}{n} \omega| < \varepsilon\}\,, \\
    \Sigma_r &= \{\omega \in \{\pm 1\} : r - \textstyle \frac{6}{n} \omega \in I \cup (-\infty,-1) \cup (1,+\infty)\}\,.
  \end{align*}
  That is, $\Sigma_q$ is the set of $\omega$ such that, if we obtain $\widetilde{\sigma}$ from $\sigma$ by negating between one and three coordinates $i$ with $\sigma_i\sigma^*_i = \omega$, the constraint $|R(\widetilde{\sigma},\sigma^*)| < \varepsilon$ holds.
  The analogous interpretation holds for $\Sigma_p$ and the constraint $|R(\widetilde{\sigma},\allone)| < \varepsilon$, and for $\Sigma_r$ and the constraint $R(\widetilde{\sigma},\sigma^*) \in I$.
  Note that in the last case, if $1$ is an endpoint of $I$ and $r \in (1-\frac{6}{n}, 1]$, we still include $-1 \in \Sigma_r$ because it is not possible to negate coordinates of $\sigma$ to obtain $\widetilde{\sigma}$ with $R(\widetilde{\sigma},\sigma^*) > 1$. Similarly, if $-1$ is an endpoint of $I$ and $r \in [-1,-1+\frac{6}{n})$, we still include $+1 \in \Sigma_r$.

  Note that $p,q$ can each be close to only one boundary of $(-\varepsilon,\varepsilon)$, and $r$ can be close to only one boundary of $I$, so each of $\Sigma_p,\Sigma_q,\Sigma_r$ is nonempty.
  Thus we may choose $\omega_p \in \Sigma_p$, $\omega_q \in \Sigma_q$, $\omega_r \in \Sigma_r$.
  We now perform casework on $|\Sigma_r|$.

  \textit{Case 1: $|\Sigma_r| = 1$}.
  By the above discussion, for any triples
  \[
    (i_1,i_2,i_3) \in 
    S_{p,q}(\omega_p,\omega_q) 
    \times S_{p,r}(\omega_p,\omega_r)
    \times S_{q,r}(\omega_q,\omega_r)
  \]
  where $i_1,i_2,i_3$ are distinct, the configuration $\widetilde{\sigma}$ obtained from $\sigma$ by negating coordinates $i_1,i_2,i_3$ is in $\Sigma_n(\sigma^*,{\sigma'}^*,\varepsilon - \frac{2}{n},I)$.
  We now count the number of $\widetilde{\sigma}$ of this form.
  Since $|\Sigma_r| = 1$, $r$ is within $\frac{6}{n}$ of an endpoint of $I$ other than $\pm 1$. 
  By the definition of $I$, we have $|r| \le 0.9 + \frac{6}{n} \le 0.91$.
  Together with the assumptions $|p|,|q| < \varepsilon$ and $(\sigma^*,{\sigma'}^*) \in \Sigma_{n,2}(\varepsilon)$, \eqref{e:hamming-nbrs-two-replica} this implies
  \begin{equation}\label{e:hamming-nbrs-two-replica-balanced}
    \frac{|S_{p,q}(\omega_p,\omega_q)|}{n} \ge \frac{1}{4} \left(1 - 3\varepsilon\right) \ge \frac15
  \end{equation}
  and the analogous formulas for $|S_{p,r}(\omega_1,\omega_2)|$ and $|S_{q,r}(\omega_1,\omega_2)|$ imply
  \[
    \min\left(
      \frac{|S_{p,r}(\omega_p,\omega_r)|}{n}, 
      \frac{|S_{q,r}(\omega_q,\omega_r)|}{n}
    \right) \ge \frac{1}{4} \left(1 - 2\varepsilon - 0.91\right)
    \ge \frac{1}{50}\,.
  \]
  This yields $\frac{n^3}{5\cdot 50^2}$ ordered triples $(i_1,i_2,i_3)$, where $i_1,i_2,i_3$ are not necessarily distinct.
  The number of $(i_1,i_2,i_3)$ with non-distinct entries is at most $O(n^2)$, and at most $3!=6$ triples $(i_1,i_2,i_3)$ can yield the same $\widetilde{\sigma}$.
  Thus the number of $\widetilde{\sigma}$ is at least
  \[
    \frac{1}{6} \left(\frac{n^3}{5\cdot 50^2} - O(n^2)\right) \ge \frac{n^3}{10^6}\,.
  \]
  \textit{Case 2: $|\Sigma_r| = 2$}.
  In this case, the constraint $R(\widetilde{\sigma},\sigma^*) \in I$ is satisfied for any $\widetilde{\sigma}$ with Hamming distance $3$ to $\sigma$.
  Then, for any distinct $i_1,i_2,i_3 \in S_{p,q}(\omega_p,\omega_q)$, the configuration $\widetilde{\sigma}$ obtained from $\sigma$ by negating coordinates $i_1,i_2,i_3$ is in $\Sigma_n(\sigma^*,{\sigma'}^*,\varepsilon - \frac{2}{n},I)$.
  The estimate \eqref{e:hamming-nbrs-two-replica-balanced} remains valid in this case.
  By a similar counting argument as above the number of such $\widetilde{\sigma}$ is at least
  \[
    \frac{1}{6} \left(\frac{n^3}{5^3} - O(n^2)\right) \ge \frac{n^3}{10^6}\,. \qedhere
  \]
\end{proof}

\begin{proof}[Proof of ~\cref{p:fe-conc-conditional-mean-lipschitz}~\cref{i:fe-conc-conditional-mean-lipschitz-two-replica}]
  Define
  \[
    {\breve F}^2_{n,\beta,t}(\sigma^*,{\sigma'}^*,I) = \E^{\plantedTag,2}_{\beta,t}\left[
      \frac1n \log \sum_{\sigma \in \Sigma_n(\sigma^*,{\sigma'}^*,\varepsilon-\frac{2}{n},I)}
      \exp(\widetilde{\isingHamiltonian}_{G,\beta,t}(\sigma;\sigma^*,{\boldsymbol{g}})) 
      \,\Big|\, \sigma^*,{\sigma'}^*
    \right]\,.
  \]
  This is ${\widehat F}_{n,\beta,t}^2(\sigma^*,{\sigma'}^*,I)$, except with the sum over $\sigma \in \Sigma_n(\sigma^*,{\sigma'}^*,\varepsilon,I)$ replaced by a sum over $\sigma \in \Sigma_n(\sigma^*,{\sigma'}^*,\varepsilon-\frac{2}{n},I)$.
  We write
  \begin{align*}
    {\widehat F}_{n,\beta,t}^2(\sigma^*,{\sigma'}^*,I) - {\widehat F}_{n,\beta,t}^2(\widetilde{\sigma}^*,\widetilde{\sigma'}^*,I)
    &= \left(
      {\widehat F}_{n,\beta,t}^2(\sigma^*,{\sigma'}^*,I) - {\breve F}^2_{n,\beta,t}(\sigma^*,{\sigma'}^*,I)
    \right) \\
    &\qquad + \left(
      {\breve F}^2_{n,\beta,t}(\sigma^*,{\sigma'}^*,I) - {\widehat F}_{n,\beta,t}^2(\widetilde{\sigma}^*,\widetilde{\sigma'}^*,I)
    \right)
  \end{align*}
  and separately bound these terms.
  For each $\widetilde{\sigma} \in \Sigma_n(\sigma^*,{\sigma'}^*,\varepsilon-\frac{2}{n},I)$, let 
  \[
    S(\widetilde{\sigma}) = \left\{
      \sigma \in \partial \Sigma_n(\sigma^*,{\sigma'}^*,\varepsilon,I) : \text{$\sigma$ has Hamming distance $3$ from $\widetilde{\sigma}$}
    \right\}\,.
  \]
  Similarly to above, write
  \[
    {\dot F}(\widetilde{\sigma}) 
    = \sum_{\sigma \in S(\widetilde{\sigma})}
    \exp(\widetilde{\isingHamiltonian}_{G,\beta,t}(\sigma;\sigma^*,{\boldsymbol{g}}))
    \le \exp(\widetilde{\isingHamiltonian}_{G,\beta,t}(\widetilde{\sigma};\sigma^*,{\boldsymbol{g}})) \cdot e^{3d|\beta| + 6t}
    \left(\sum_{i=1}^n e^{2\sqrt{t} |g_i|}\right)^3\,.
  \]
  By Fact~\cref{f:hamming-nbrs-two-replica}, $|S^{-1}(\sigma)| \ge n^3/10^6$ for all $\sigma \in \partial \Sigma_n(\sigma^*,{\sigma'}^*,\varepsilon,I)$.
  Thus
  \begin{align*}
    \sum_{\sigma \in \partial \Sigma_n(\sigma^*,{\sigma'}^*,\varepsilon,I)}
    &\exp(\widetilde{\isingHamiltonian}_{G,\beta,t}(\sigma;\sigma^*,{\boldsymbol{g}}))
    \le \frac{10^6}{n^3} \sum_{\widetilde{\sigma} \in \Sigma_n(\sigma^*,{\sigma'}^*,\varepsilon - \frac{2}{n}, I)}
    {\dot F}(\widetilde{\sigma}) \\
    &\le \frac{10^6}{n^3} e^{3d|\beta| + 6t}
    \left(\sum_{i=1}^n e^{2\sqrt{t} |g_i|}\right)^3
    \sum_{\widetilde{\sigma} \in \Sigma_n(\sigma^*,{\sigma'}^*,\varepsilon - \frac{2}{n}, I)}
    \exp(\widetilde{\isingHamiltonian}_{G,\beta,t}(\widetilde{\sigma};\sigma^*,{\boldsymbol{g}}))\,,
  \end{align*}
  which implies (similarly to above)
  \begin{align*}
    &n \Big({\widehat F}^2_{n,\beta,t}(\sigma^*,{\sigma'}^*,I) - {\breve F}^2_{n,\beta,t}(\sigma^*,{\sigma'}^*,I)\Big) \\
    &= \E^{\plantedTag,2}_\beta \E_{\boldsymbol{g}} \left[
      \log \left(1 + \frac{
        \sum_{\sigma \in \partial \Sigma_n(\sigma^*,{\sigma'}^*,\varepsilon,I)}
        \exp(\widetilde{\isingHamiltonian}_{G,\beta,t}(\sigma;\sigma^*,{\boldsymbol{g}}))
      }{
        \sum_{\widetilde{\sigma} \in \Sigma_n(\sigma^*,{\sigma'}^*,\varepsilon-\frac{2}{n},I)}
        \exp(\widetilde{\isingHamiltonian}_{G,\beta,t}(\widetilde{\sigma};\sigma^*,{\boldsymbol{g}}))
      }
      \right)
      \,\Big|\, \sigma^*,{\sigma'}^*
    \right] \\
    &\le \E_{\boldsymbol{g}} \left[\log\left(
      1 + \frac{10^6}{n^3} e^{3d|\beta| + 6t}
      \left(\sum_{i=1}^n e^{2\sqrt{t} |g_i|}\right)^3
    \right)\right] \\
    &\le 3\E_{\boldsymbol{g}} \left[\log\left(
      1 + \frac{100}{n} e^{d|\beta| + 2t}
      \sum_{i=1}^n e^{2\sqrt{t} |g_i|}
    \right)\right]
    \le 3\log\left(
      1 + 200 e^{d|\beta| + 4t}
    \right)\,.
  \end{align*}
  It remains to compare ${\breve F}^2_{n,\beta,t}(\sigma^*,{\sigma'}^*,I)$ and ${\widehat F}_{n,\beta,t}^2(\widetilde{\sigma}^*,\widetilde{\sigma'}^*,I)$.
  Let $\pi$ be the coupling of $G \sim \nu_\beta^{\plantedTag,2}(\cdot | \sigma^*,{\sigma'}^*)$ and ${\widetilde G} \sim \nu_\beta^{\plantedTag,2}(\cdot | \widetilde{\sigma}^*,\widetilde{\sigma'}^*)$ provided by ~\cref{l:wass-stability-of-sbm}~\cref{i:wass-stability-of-sbm-two-replica}, so that
  \[
    \E_\pi |E(G) \triangle E({\widetilde G})| \le C_0d\,.
  \]
  We consider two cases.
  First suppose $(\sigma^*,{\sigma'}^*)$ and $(\widetilde{\sigma}^*,\widetilde{\sigma'}^*)$ differ in $\sigma^*$, and assume without loss that $\sigma^*_1 \neq \widetilde{\sigma}^*_1$.
  Then $\widetilde{\sigma}^* = (\sigma^*)^{\oplus 1}$ and $\widetilde{\sigma'}^* = {\sigma'}^*$, and $\sigma \mapsto \sigma^{\oplus 1}$ is an injective map from $\Sigma_n(\sigma^*,{\sigma'}^*,\varepsilon - \frac{2}{n},I)$ to $\Sigma_n(\widetilde{\sigma}^*,\widetilde{\sigma'}^*,\varepsilon,I)$.
  Couple ${\boldsymbol{g}},{\widetilde {\boldsymbol{g}}}$ by ${\widetilde {\boldsymbol{g}}} = {\boldsymbol{g}}^{\oplus 1}$, independently of $(G,{\widetilde G})$.
  As in the proof of part \cref{i:fe-conc-conditional-mean-lipschitz-one-replica}, applying the Hamiltonian comparison inside the log-ratio gives
  \[
    n\left({\breve F}^2_{n,\beta,t}(\sigma^*,{\sigma'}^*,I) - {\widehat F}_{n,\beta,t}^2(\widetilde{\sigma}^*,\widetilde{\sigma'}^*,I)\right)
    \le |\beta| \E_{\pi,{\boldsymbol{g}}}\left[|E(G) \triangle E({\widetilde G})| + d\right]
    \le (C_0+1) |\beta| d\,.
  \]
  Otherwise, suppose $(\sigma^*,{\sigma'}^*)$ and $(\widetilde{\sigma}^*,\widetilde{\sigma'}^*)$ differ in ${\sigma'}^*$, and assume without loss that ${\sigma'}^*_1 \neq \widetilde{\sigma'}^*_1$.
  Then $\widetilde{\sigma}^* = \sigma^*$, and the identity map is an injection from $\Sigma_n(\sigma^*,{\sigma'}^*,\varepsilon - \frac{2}{n},I)$ to $\Sigma_n(\widetilde{\sigma}^*,\widetilde{\sigma'}^*,\varepsilon,I)$.
  The remaining argument is analogous to the first case, except that we couple ${\widetilde {\boldsymbol{g}}}={\boldsymbol{g}}$ instead of ${\widetilde {\boldsymbol{g}}} = {\boldsymbol{g}}^{\oplus 1}$.
  In either case this shows
  \[
    n\left({\breve F}^2_{n,\beta,t}(\sigma^*,{\sigma'}^*,I) - {\widehat F}_{n,\beta,t}^2(\widetilde{\sigma}^*,\widetilde{\sigma'}^*,I)\right)
    \le (C_0+1) |\beta| d\,.
  \]
  Combining this with the boundary estimate gives
  \[
    n\left({\widehat F}_{n,\beta,t}^2(\sigma^*,{\sigma'}^*,I) - {\widehat F}_{n,\beta,t}^2(\widetilde{\sigma}^*,\widetilde{\sigma'}^*,I)\right)
    \le 3\log\left(1 + 200 e^{d|\beta| + 4t}\right) + (C_0+1)|\beta|d
    \le C(d)(1+t)\,.
  \]
  The reverse inequality is symmetric.
\end{proof}

\section{Moments bounds for the reweighting factor}\label{sec:reweight-bounded-moments}
In this section, we sketch the proof of \cref{thm:bounded-moments}. The conceptual ideas are identical to those used to prove \cref{thm:reweighted-2ndmoment}. The main technical differences stem from the fact that here, we do not need to work with ``planted'' measure as in \cref{subsec:2nd-moment-complex} (which arose due to the presence of the truncated partition function inside the expectations), but we do need to bound higher moments. Throughout, we point out the necessary adjustments to establish \cref{thm:bounded-moments}.

Without loss of generality, we may consider only the even moments, since for any random variable $X$, we have $\E\wrapb{\abs{X}^{K}} \leq \sqrt{\E\wrapb{\abs{X}^{2K}}}$. Recalling the definitions of $\mathscr{G},\mathscr{B}$ from \cref{eq:good-bad-graphs} and the definition of the reweighting factor from \cref{eq:reweight-factor}, observe that
\begin{align*}
    \E\wrapb{\abs{A_{L,m}(\beta)}^{2K}} &= \exp\wrapp{K \sum_{k=1}^{L} \lambda_{k} \cdot 2\re \delta_{k}} \\
    &\qquad \cdot \underset{(\ast)}{\underbrace{\sum_{\Gamma_{1},\dots,\Gamma_{2K} \in \mathscr{G}(\leq m)} \Pr[\Gamma_{1} \cup \dotsb \cup \Gamma_{2K} \subseteq M] \cdot \prod_{j=1}^{K} \prod_{C \in \Gamma_{j}} b_{\abs{C}} \cdot \prod_{j=K+1}^{2K} \prod_{C \in \Gamma_{j}} \overline{b_{\abs{C}}}}}.
\end{align*}
The key technical result is the following.
\begin{proposition}\label{prop:ast-bound}
Under the assumptions of \cref{thm:bounded-moments},
\begin{align*}
    (\ast) &= (1 \pm o(1)) \cdot \prod_{k=1}^{L} \E_{\Lambda_{k} \sim \mathsf{Poi}(\lambda_{k})}\wrapb{\abs{1 + \delta_{k}}^{-2K \cdot \Lambda_{k}}} = (1 \pm o(1)) \cdot \exp\wrapp{\sum_{k=1}^{L} \lambda_{k} \cdot \wrapp{\abs{1 + \delta_{k}}^{-2K} - 1}}.
\end{align*}
\end{proposition}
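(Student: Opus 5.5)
The plan mirrors the argument that reduced \cref{thm:reweighted-2ndmoment} to \cref{prop:uniform-converge-small-Gamma}, with the planted second-moment factor replaced by the bare probability $\Pr[\Gamma_1\cup\dots\cup\Gamma_{2K}\subseteq M]$.

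\textbf{Decomposition by shared cycles.} Given $\Gamma_1,\dots,\Gamma_{2K}\in\mathscr{G}(\le m)$, let $\Gamma=\bigcup_j\Gamma_j$. Record, for each distinct cycle $C\in\Gamma$, the nonempty set $S(C)\subseteq[2K]$ of indices $j$ with $C\in\Gamma_j$. Stratify $(\ast)$ by the profile $(\bm{c}^{(S)})_{\emptyset\neq S\subseteq[2K]}$, where $\bm{c}^{(S)}_k$ counts the $k$-cycles with $S(C)=S$. A cycle with label $S$ contributes the weight
\[
b_k^{|S\cap[K]|}\,\overline{b_k}^{|S\setminus[K]|}.
\]

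\textbf{Good configurations.} First treat tuples for which $\Gamma\in\mathscr{G}$, i.e.\ all distinct cycles are vertex-disjoint. By \cref{lem:shortcyc-approxindep}, $\Pr[\Gamma\subseteq M]=e^{\pm O(m^2K^2/n)}(dn)^{-|\Gamma|}$. By the counting of \cref{claim:goodcount-approx}, applied with $2^{2K}-1$ labels instead of $3$, the number of such tuples is
\[
(dn)^{|\bm{c}|}\prod_k\frac{\lambda_k^{\bm{c}_k}}{\prod_S\bm{c}^{(S)}_k!}\cdot\big(1\pm O(L^3m^2K^2/n)\big).
\]
Summing the weights over all labels $S$ without truncation gives
\[
\exp\Big(\sum_k\lambda_k\Big[\prod_{j\le K}(1+b_k)\prod_{j>K}(1+\overline{b_k})-1\Big]\Big)
=\exp\Big(\sum_k\lambda_k\big(|1+\delta_k|^{-2K}-1\big)\Big),
\]
using $1+b_k=(1+\delta_k)^{-1}$. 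This is the claimed compound-Poisson expression.

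\textbf{Truncation error.} Removing the constraint $|\Gamma_j|\le m$ costs only the tail. Using $\mathbf 1[|\bm{c}|>m]\le e^{s(|\bm{c}|-m)}$ and $|b_k|\le B^k$, as in the tail estimate following \cref{prop:uniform-converge-small-Gamma}, the tail is at most $e^{-m}\exp(2^{2K}\sum_k\lambda_kB^ke^k)=e^{-m}\exp(e^{O_{d,K}(L)})=o(1)$ for the given choices of $L,m$. Since the prefactor $\exp(K\sum\lambda_k2\re\delta_k)$ and the main term are both $\exp(e^{O(L)})$, every additive error must carry a factor $1/n$ or $e^{-m}$. Note the main term may be large rather than $1$, so this point needs care.

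\textbf{Bad configurations.} Tuples with $\Gamma\notin\mathscr{G}$ are bounded exactly as in \cref{lem:bad}. The multiplicity of labelings of a fixed $\Gamma$ is at most $(2^{2K})^{2Km}$, which is $e^{O_K(m)}$. Decomposing $\Gamma$ into simple and bad clusters gives the bound
\[
\exp(\Theta^{\mathsf{simple}})\big(e^{\Theta^{\mathsf{bad}}}-1\big)\le\frac{\exp\big(e^{O(L)}+O_K(m\log m)\big)}{n}=o(1)
\]
once weighted by $|b_k|\le B^k$.

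I expect the main obstacle to be bookkeeping the relative errors. The factor $1\pm o(1)$ must hold multiplicatively against a main term that can be exponentially small or large in $e^{O(L)}$. I would handle this by bounding every error relative to the absolute-value series $\exp(\sum_k\lambda_k 2^{2K}B^k)$. I would then check that its ratio to the main term is $\exp(e^{O(L)})$, which is absorbed by the $1/n$ and $e^{-m}$ gains. Finally, \cref{thm:bounded-moments} follows: with the prefactor, the total is
\[
\exp\Big(\sum_k\lambda_k\big(2K\re\delta_k+|1+\delta_k|^{-2K}-1\big)\Big),
\]
whose summand is $O(\lambda_k|\delta_k|^2)$. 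Its sum converges because $\lambda_k|\delta_k|^2\lesssim((d-1)|\delta_2|)^k/k$ decays geometrically below $\betaKS$, by \cref{lem:gaussian-correction-series}.
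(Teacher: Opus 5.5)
Your proposal is correct and follows essentially the paper's route: you stratify by the label profile $\bm{c}(S)$ and compare good (vertex-disjoint) configurations termwise to the Poisson weights via \cref{lem:shortcyc-approxindep} and the count \eqref{eq:Gc-size-estimate}. You then control bad configurations by the cluster bound \eqref{eq:sumbad-unif-bound} and remove the truncation with the exponential-tilt tail trick, exactly as the paper does. Your explicit point that all errors must be measured against the absolute series $\exp(e^{O(L)})$ and then divided by the main term, which is at least $\exp(-\sum_k\lambda_k)$, is precisely the ``mutatis mutandis'' step the paper leaves implicit. It is needed here because, unlike $\Psi$ in the second-moment argument, this main term need not be close to $1$.
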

We will also need the following simple analytic lemma, which follows from direct Taylor expansion; we omit the proof.
\begin{lemma}\label{lem:analytic-error-small}
For every $K,\delta > 0$, there exists a constant $C = C(K,\delta) > 0$ such that for all $z \in \C$ satisfying $\abs{1 + z} \geq \delta$, we have
\begin{align*}
    \abs{1 + z}^{-2K} - 1 + 2K \re z \leq C \cdot \abs{z}^{2}.
\end{align*}
\end{lemma}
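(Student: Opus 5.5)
The statement to be proved is \cref{lem:analytic-error-small}: for $\abs{1+z}\ge\delta$,
\begin{align*}
    \abs{1+z}^{-2K}-1+2K\re z\le C\abs{z}^2 .
\end{align*}
The plan is to split into the regimes $\abs{z}\le 1/2$ and $\abs{z}>1/2$. On the first regime Taylor's theorem applies uniformly. On the second, the quadratic right-hand side dominates everything once we use the hypothesis $\abs{1+z}\ge\delta$.

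Small regime, $\abs{z}\le 1/2$. Write
\begin{align*}
    f(z)=\abs{1+z}^{-2K}=\exp\bigl(-K\log\abs{1+z}^2\bigr), \qquad \abs{1+z}^2 = 1+2\re z+\abs{z}^2 .
\end{align*}
Set $u=2\re z+\abs{z}^2$, so that $\abs{u}\le 2\abs{z}+\abs{z}^2\le 5/4$ and $1+u\ge 1/4$. The real-variable function $g(u)=(1+u)^{-K}$ is smooth on $[-3/4,5/4]$. It satisfies $g(0)=1$ and $g'(0)=-K$, and $\sup\abs{g''}\le C_1(K)$ on this interval. Hence
\begin{align*}
    g(u)\le 1-Ku+\tfrac12 C_1 u^2 .
\end{align*}
Substituting $u=2\re z+\abs{z}^2$ gives
\begin{align*}
    f(z)-1+2K\re z\le K\abs{z}^2+\tfrac12 C_1\bigl(2\abs{z}+\abs{z}^2\bigr)^2\le C_2(K)\abs{z}^2 .
\end{align*}
Note that this step does not use $\delta$.

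Large regime, $\abs{z}>1/2$. Here we bound each term crudely:
\begin{align*}
    \abs{1+z}^{-2K}\le\delta^{-2K}, \qquad 2K\re z\le 2K\abs{z}, \qquad -1\le 0 .
\end{align*}
Since $\abs{z}>1/2$, we have $1\le 4\abs{z}^2$ and $\abs{z}\le 2\abs{z}^2$. So the left-hand side is at most
\begin{align*}
    \bigl(4\delta^{-2K}+4K\bigr)\abs{z}^2 .
\end{align*}

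Taking $C=\max\bigl(C_2(K),\,4\delta^{-2K}+4K\bigr)$ covers both regimes and completes the proof. There is no real obstacle. The only point needing care is that the Taylor constant must be uniform, and this is why we restrict to $\abs{z}\le 1/2$: it keeps $1+u$ away from $0$. The hypothesis $\abs{1+z}\ge\delta$ is what controls the singularity at $z=-1$, which lies in the large regime.
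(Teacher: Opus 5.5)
Your proof is correct and is essentially the ``direct Taylor expansion'' the paper invokes without giving details: the second-order expansion of $u\mapsto(1+u)^{-K}$ about $u=0$, evaluated at $u=|1+z|^2-1$, handles $|z|\le 1/2$ with a uniform constant. Your crude bound $\delta^{-2K}+2K|z|\le(4\delta^{-2K}+4K)|z|^2$ for $|z|>1/2$ supplies the step the paper leaves implicit, namely that the estimate must hold on all of $\{|1+z|\ge\delta\}$ and not just near $z=0$.
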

Combining \cref{prop:ast-bound} with the first display in this section yields
\begin{align*}
    \E\wrapb{\abs{A_{L,m}(\beta)}^{2K}} &= (1 \pm o(1)) \cdot \exp\wrapp{\sum_{k=1}^{L} \lambda_{k} \cdot \wrapp{\abs{1 + \delta_{k}}^{-2K} - 1 + 2K \re \delta_{k}}} \tag{\cref{prop:ast-bound}} \\
    &= (1 \pm o(1)) \cdot \exp\wrapp{O_{K,d}(1) \cdot \sum_{k=1}^{L} \lambda_{k} \cdot \abs{\delta_{k}}^{2}} \tag{\cref{lem:analytic-error-small} and using $\abs{\delta_{k}} \leq \abs{\delta_{2}} < 1$ for $\beta$ in a small neighborhood of $[0,(1 - \eta)\beta_{\mathsf{KS}}]$} \\
    &\leq (1 \pm o(1)) \cdot \exp\wrapp{O_{K,d}(1) \cdot \sum_{k=0}^{\infty} \wrapp{(d-1) \cdot \abs{\delta_{2}}}^{k}} \\
    &\leq O_{K,d,\eta}(1) \tag{$\beta$ is sufficiently close to $[0,(1-\eta)\beta_{\mathsf{KS}}]$ so that $(d-1) \cdot \abs{\delta_{2}} < 1 - \eta/2$}.
\end{align*}
This proves \cref{thm:bounded-moments}, assuming \cref{prop:ast-bound}. We now prove \cref{prop:ast-bound}, which follows a similar argument to the proof of \cref{thm:reweighted-2ndmoment}. The only change is the fact that we are now looking at the $2K$th moment, i.e. we must sum over $2K$ collections of cycles $\Gamma_{1},\dots,\Gamma_{2K}$, and we do not need to invoke \cref{prop:2nd-moment}.

\subsection{Proof (Sketch) of \cref{prop:ast-bound}}
As in the proof of \cref{thm:reweighted-2ndmoment}, the dominant contributions to $(\ast)$ will come from $\Gamma_{1},\dots,\Gamma_{2K}$ for which all pairs of cycles $C,C' \in \Gamma_{1} \cup \dotsb \cup \Gamma_{2K}$ are either vertex-disjoint or identical. To capture this, for $\emptyset \subsetneq S \subseteq [2K]$, we will write $\bm{c}(S) = (\bm{c}_{1}(S),\dots,\bm{c}_{L}(S)) \in \N^{L}$ for the number of $k$-cycles which appear simultaneously in $\Gamma_{i}$ for all $i \in S$ and do not appear in $\Gamma_{i}$ for any $i \notin S$. We will collect these vectors into a function $\bm{c} : 2^{[2K]} \setminus \{\emptyset\} \to \N^{L}$. For any such $\bm{c}$, we write
\begin{align*}
    \varphi(\bm{c}) &\defeq \sum_{\substack{\Gamma^{(S)} \in \mathscr{G}(\bm{c}(S)) \\ \forall \emptyset \subsetneq S \subseteq [2K]}} \Pr\wrapb{\bigcup_{\emptyset \subsetneq S \subseteq [2K]} \Gamma^{(S)} \subseteq M} \\
    \psi(\bm{c}) &\defeq \prod_{k=1}^{L} \prod_{\emptyset \subsetneq S \subseteq [2K]} \frac{\lambda_{k}^{\bm{c}_{k}(S)}}{\bm{c}_{k}(S)!}.
\end{align*}
Then
\begin{align*}
    (\ast) = \sum_{\substack{\bm{c} : 2^{[2K]} \setminus \{\emptyset\} \to \N^{L} \\ \forall i \in [2K], \sum_{S \ni i} \abs{\bm{c}(S)} \leq m}} \varphi(\bm{c}) \cdot \prod_{k=1}^{L} \prod_{\emptyset \subsetneq S \subseteq [2K]} b_{k}^{\abs{S \cap \{1,\dots,K\}} \cdot \bm{c}_{k}(S)} \cdot \overline{b_{k}}^{\abs{S \cap \{K+1,\dots,2K\}} \cdot \bm{c}_{k}(S)}.
\end{align*}
Our goal is to show that $(\ast)$ is close to the following:
\begin{align*}
    &\sum_{\bm{c} : 2^{[2K]} \setminus \{\emptyset\} \to \N^{L}} \psi(\bm{c}) \cdot \prod_{k=1}^{L} \prod_{\emptyset \subsetneq S \subseteq [2K]} b_{k}^{\abs{S \cap \{1,\dots,K\}} \cdot \bm{c}_{k}(S)} \cdot \overline{b_{k}}^{\abs{S \cap \{K+1,\dots,2K\}} \cdot \bm{c}_{k}(S)} \\
    &= \exp\wrapp{\sum_{k=1}^{L} \lambda_{k} \sum_{\emptyset \subsetneq S \subseteq [2K]} b_{k}^{\abs{S \cap \{1,\dots,K\}}}\overline{b_{k}}^{\abs{S \cap \{K+1,\dots,2K\}}}} \\
    &= \exp\wrapp{\sum_{k=1}^{L} \lambda_{k} \cdot \wrapp{\abs{1 + b_{k}}^{2K} - 1}} \\
    &= \exp\wrapp{\sum_{k=1}^{L} \lambda_{k} \cdot \wrapp{\abs{1 + \delta_{k}}^{-2K} - 1}}.
\end{align*}
As before, we will achieve this via termwise comparison, i.e. we will show the following.
\begin{proposition}\label{prop:bddmom-termwise}
Fix $L,m \in \N$. Then for all fixed $K \in \N$ (independent of $N$) and $\bm{c} : 2^{[2K]} \setminus \{\emptyset\} \to \N^{L}$ satisfying $\sum_{S \ni i} \abs{\bm{c}(S)} \leq m$ for all $i \in [2K]$, we have
\begin{align*}
    \varphi(\bm{c}) = \wrapp{1 \pm \frac{\widetilde{\mathcal{E}}(\eta,L,m)}{n}} \cdot \psi(\bm{c})
\end{align*}
where
\begin{align*}
    \widetilde{\mathcal{E}}(\eta,L,m) \defeq O\wrapp{\exp\wrapp{e^{O_{d}(L)} + O(m\log m)}}.
\end{align*}
\end{proposition}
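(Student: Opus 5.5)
The plan mirrors the proof of \cref{prop:uniform-converge-small-Gamma}: split $\varphi(\bm{c})$ into a ``good'' part and a ``bad'' part. The good part $\varphi^{\mathscr{G}}(\bm{c})$ restricts the sum to tuples $(\Gamma^{(S)})_{S}$ whose union $\Gamma = \bigcup_S \Gamma^{(S)}$ lies in $\mathscr{G}$, i.e.\ all cycles in the union are pairwise vertex-disjoint. The remainder $\varphi(\bm{c}) - \varphi^{\mathscr{G}}(\bm{c})$ collects the tuples whose union contains two intersecting, non-identical cycles. Write $\bm{c}^{\mathrm{tot}} = \sum_{S} \bm{c}(S)$. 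The constraint $\sum_{S\ni i}\abs{\bm{c}(S)}\le m$ gives $\abs{\bm{c}^{\mathrm{tot}}} \le 2Km = O_K(m)$. Since $K$ is fixed, every bound from the second-moment argument carries over with $m$ replaced by $O_K(m)$. Unlike \cref{lem:good}, no partition-function factor appears, so \cref{prop:2nd-moment} is not needed; only cycle counting and \cref{lem:shortcyc-approxindep} are used.

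For the good part, each summand is $\Pr[\Gamma \subseteq M]$ with $\Gamma$ a vertex-disjoint union of $\abs{\bm{c}^{\mathrm{tot}}}$ edges. By \cref{lem:shortcyc-approxindep} this equals $\exp(\pm O(m^2/n))\,(dn)^{-\abs{\bm{c}^{\mathrm{tot}}}}$. It then remains to count tuples. Erasing the labels $S$ shows that the number of labelled tuples is $\prod_k \bm{c}^{\mathrm{tot}}_k!/\prod_{S}\bm{c}_k(S)!$ times $\abs{\mathscr{G}(\bm{c}^{\mathrm{tot}})}$. The estimate \eqref{eq:Gc-size-estimate} from the proof of \cref{claim:goodcount-approx} gives $\abs{\mathscr{G}(\bm{c}^{\mathrm{tot}})} = \exp(\pm O(L^3m^2/n))\,(dn)^{\abs{\bm{c}^{\mathrm{tot}}}}\prod_k \lambda_k^{\bm{c}^{\mathrm{tot}}_k}/\bm{c}^{\mathrm{tot}}_k!$. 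Multiplying, the factorials and powers of $dn$ cancel, leaving $\varphi^{\mathscr{G}}(\bm{c}) = (1\pm O_K(L^3m^2)/n)\,\psi(\bm{c})$.

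For the bad part, the union $\Gamma$ lies in $\mathscr{B}(\bm{c}^{\mathrm{tot}})$ with $\abs{\bm{c}^{\mathrm{tot}}}\le 2Km$. Each such $\Gamma$ arises from at most $\prod_k \bm{c}^{\mathrm{tot}}_k!/\prod_S \bm{c}_k(S)! \le (2^{2K})^{2Km} = e^{O_K(m)}$ labellings, but one subtlety needs checking. Cycles may be shared across several $\Gamma^{(S)}$, and then a single cycle in the union is attributed to one label $S$. The map from tuples to (union, labelling) must therefore be injective, with $\Pr$ depending only on the union; this holds because distinct $S$-classes are disjoint as sets of cycles by construction. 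It then suffices to bound $\sum_{\Gamma\in\mathscr{B}(\le 2Km)}\Pr[\Gamma\subseteq M]$. This is exactly the cluster-expansion estimate \eqref{eq:sumbad-unif-bound} in the proof of \cref{lem:bad}, which gives $\exp(e^{O_d(L)} + O(m\log m))/n$. Dividing by the crude lower bound $\psi(\bm{c}) \ge m^{-O_K(m)}$ converts this into the relative error $\widetilde{\mathcal{E}}(\eta,L,m)/n$.

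The step I expect to be the main obstacle is the bookkeeping for the bad part. The issue is ensuring that, when a cycle belongs to several $\Gamma_i$, it is assigned to a unique $S$, so that the labelling multiplicity is genuinely $e^{O_K(m)}$ and not larger. A secondary point is confirming that \eqref{eq:sumbad-unif-bound} was proved uniformly for sizes up to $O(m)$ rather than only $2m$. This holds because the proof only uses $\abs{V(\Gamma)}\le \abs{\Gamma}$ and the $m^{O(m)}$ count of multigraphs, so the constant $2K$ is absorbed into the $O(\cdot)$ terms.
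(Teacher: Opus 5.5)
Your proposal is correct and follows the paper's proof essentially step for step. The paper also splits off $\varphi^{\mathscr{G}}(\bm{c})$, handles the good part with \cref{lem:shortcyc-approxindep} and \eqref{eq:Gc-size-estimate} via label erasure, and handles the bad part with the multinomial labelling factor together with \eqref{eq:sumbad-unif-bound} applied to $\bm{c}^{\mathsf{ALL}}$. The differences are cosmetic: the paper keeps the exact multinomial so that it cancels against $\prod_k \lambda_k^{\bm{c}^{\mathsf{ALL}}_k}/\bm{c}^{\mathsf{ALL}}_k!$ to give $\psi(\bm{c})$ exactly, whereas you bound it by $e^{O_K(m)}$ and divide by a crude lower bound on $\psi(\bm{c})$. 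Your explicit requirement that the classes $\Gamma^{(S)}$ be pairwise cycle-disjoint is genuinely needed for the estimate, and the paper leaves it implicit.
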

The proof that \cref{prop:bddmom-termwise} implies \cref{prop:ast-bound} is identical, mutatis mutandis, to the proof that \cref{prop:uniform-converge-small-Gamma} implies \cref{thm:reweighted-2ndmoment}. We omit this calculation for brevity. To prove \cref{prop:bddmom-termwise}, following what was executed in \cref{subsec:uniform-converge-small-Gamma}, let us introduce the intermediate quantity
\begin{align*}
    \varphi^{\mathscr{G}}(\bm{c}) &\defeq \sum_{\substack{\Gamma^{(S)} \in \mathscr{G}(\bm{c}(S)) \\ \forall \emptyset \subsetneq S \subseteq [2K] \\ \bigcup_{\emptyset \subsetneq S \subseteq [2K]} \Gamma^{(S)} \in \mathscr{G}}} \Pr\wrapb{\bigcup_{\emptyset \subsetneq S \subseteq [2K]} \Gamma^{(S)} \subseteq M}.
\end{align*}

\cref{prop:bddmom-termwise} is an immediate consequence of the following pair of lemmas. Note that since we require that $\sum_{S \ni i} \abs{\bm{c}(S)} \leq m$ for all $i \in [2K]$, the upper bound $m$ already depends implicitly on $K$. This explains why the estimates below only involve $m$, and $K$ does not explicitly appear anywhere.
\begin{lemma}\label{lem:good-K}
Under the assumptions of \cref{prop:bddmom-termwise},
\begin{align*}
    \varphi^{\mathscr{G}}(\bm{c}) = \wrapp{1 \pm O\wrapp{\frac{L^{3}m^{2}}{n}}} \cdot \psi(\bm{c}).
\end{align*}
\end{lemma}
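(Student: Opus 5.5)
The plan is to follow the proof of \cref{lem:good} essentially verbatim, with the simplification that no partition function appears. By definition, in $\varphi^{\mathscr{G}}(\bm{c})$ the union $\Gamma \defeq \bigcup_{S} \Gamma^{(S)}$ belongs to $\mathscr{G}$. So $\Gamma$ is a collection of pairwise vertex-disjoint cycles, and each distinct cycle is assigned to exactly one block $S$. Its total length is
\[
\abs{\bm{c}} \defeq \sum_S \abs{\bm{c}(S)} \leq \sum_i \sum_{S\ni i}\abs{\bm{c}(S)} \leq 2Km.
\]
Since $K$ is fixed, this is $O(m)$. The second part of \cref{lem:shortcyc-approxindep} then gives, uniformly over all such tuples,
\[
\Pr[\Gamma\subseteq M] = \exp\wrapp{\pm O(m^2/n)}\cdot (dn)^{-\abs{\bm{c}}}.
\]
Consequently
\[
\varphi^{\mathscr{G}}(\bm{c}) = \exp\wrapp{\pm O(m^2/n)}\,(dn)^{-\abs{\bm{c}}}\,\abs{\widehat{\mathscr{G}}(\bm{c})},
\]
where $\widehat{\mathscr{G}}(\bm{c})$ denotes the set of tuples $(\Gamma^{(S)})_{\emptyset\neq S\subseteq[2K]}$ with $\bm{c}(\Gamma^{(S)})=\bm{c}(S)$ whose union lies in $\mathscr{G}$.

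The next step is to count $\widehat{\mathscr{G}}(\bm{c})$ exactly as in \cref{claim:goodcount-approx}. Set $\bm{c}^{\mathrm{tot}} = \sum_S \bm{c}(S)$. Forgetting which block each $k$-cycle came from gives
\[
\prod_{k=1}^L \frac{\prod_S \bm{c}_k(S)!}{\bm{c}^{\mathrm{tot}}_k!}\,\abs{\widehat{\mathscr{G}}(\bm{c})} = \abs{\mathscr{G}(\bm{c}^{\mathrm{tot}})}.
\]
This holds because the cycles are distinct, so assigning them to the blocks is just a multinomial choice. We then invoke \eqref{eq:Gc-size-estimate} with $\bm{c}^{\mathrm{tot}}$ in place of $\bm{c}$; its hypothesis $\abs{\bm{c}^{\mathrm{tot}}} = O(m)$ holds with constants depending on $K$. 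This yields
\[
\abs{\mathscr{G}(\bm{c}^{\mathrm{tot}})} = \exp\wrapp{\pm O(L^3m^2/n)}\,(dn)^{\abs{\bm{c}}}\prod_k \frac{\lambda_k^{\bm{c}^{\mathrm{tot}}_k}}{\bm{c}^{\mathrm{tot}}_k!}.
\]
Substituting back, the factorials recombine into $\prod_{k,S} \lambda_k^{\bm{c}_k(S)}/\bm{c}_k(S)! = \psi(\bm{c})$, and the powers of $dn$ cancel.

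Collecting the errors gives
\[
\varphi^{\mathscr{G}}(\bm{c}) = \exp\wrapp{\pm O\wrapp{(L^3m^2+m^2)/n}}\,\psi(\bm{c}) = \wrapp{1\pm O(L^3m^2/n)}\,\psi(\bm{c}).
\]
The last equality uses $L^3m^2/n = o(1)$ for the chosen $L,m$, so the exponential can be linearized.

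The only step needing real care is the bookkeeping in the multinomial identity. One must check that disjointness of the union forces the blocks $\Gamma^{(S)}$ to be pairwise disjoint as sets of cycles, which holds by the definition of $\bm{c}(S)$ as counting cycles lying in exactly the $\Gamma_i$ with $i\in S$. One must also check that the bound $\bm{c}_k\leq 2m$ used inside \eqref{eq:Gc-size-estimate} only needs to become $O_K(m)$, which costs nothing beyond constants depending on $K$.
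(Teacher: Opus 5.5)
Your proposal is correct and follows the paper's own proof sketch essentially step for step. You estimate each $\Pr[\Gamma\subseteq M]$ via \cref{lem:shortcyc-approxindep}, reduce $\abs{\widehat{\mathscr{G}}(\bm{c})}$ to $\abs{\mathscr{G}(\bm{c}^{\mathsf{ALL}})}$ by the multinomial ``forget the block labels'' identity from \cref{claim:goodcount-approx}, and apply \eqref{eq:Gc-size-estimate} to the aggregated vector, noting (as the paper does only implicitly) that the total-length bound $2Km$ changes only constants and that the blocks are disjoint by construction.
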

\begin{lemma}\label{lem:bad-K}
Under the assumptions of \cref{prop:bddmom-termwise},
\begin{align*}
    \abs{\varphi(\bm{c}) - \varphi^{\mathscr{G}}(\bm{c})} \leq \frac{O\wrapp{\exp\wrapp{e^{O_{d}(L)} + O(m\log m)}}}{n} \cdot \psi(\bm{c}).
\end{align*}
\end{lemma}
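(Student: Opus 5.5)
We prove \cref{lem:bad-K} by mirroring the proof of \cref{lem:bad}, which is in fact simpler here: the summands in $\varphi(\bm{c})$ are pure matching probabilities, so no appeal to \cref{prop:2nd-moment} is needed.

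First we bound the difference directly. Terms common to $\varphi$ and $\varphi^{\mathscr{G}}$ cancel, so
\begin{align*}
    0 \le \varphi(\bm{c}) - \varphi^{\mathscr{G}}(\bm{c}) = \sum \Pr\wrapb{\Gamma \subseteq M},
\end{align*}
where the sum runs over tuples $(\Gamma^{(S)})_{S}$ with each $\Gamma^{(S)} \in \mathscr{G}(\bm{c}(S))$ and with union $\Gamma = \bigcup_S \Gamma^{(S)} \notin \mathscr{G}$. Since the union remembers only which cycles occur, each $\Gamma$ arises from at most $\prod_{k=1}^{L} \bm{c}_{k}! / \prod_{S} \bm{c}_{k}(S)!$ tuples. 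Here $\bm{c} = \sum_S \bm{c}(S)$ satisfies $\abs{\bm{c}} \le 2Km$.

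Next we bound this multinomial factor. Because $\sum_k \bm{c}_k \le 2Km$, the factor is at most $(2^{2K})^{2Km} = \exp(O(m))$, where $K$ is fixed. This gives
\begin{align*}
    \abs{\varphi(\bm{c}) - \varphi^{\mathscr{G}}(\bm{c})} \le e^{O(m)} \sum_{\Gamma \in \mathscr{B}(\le 2Km)} \Pr[\Gamma \subseteq M].
\end{align*}

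We then apply the cluster bound from the proof of \cref{lem:bad}, with $2m$ replaced by $2Km$. Decompose $\Gamma$ into simple clusters and at least one bad cluster, and use \cref{lem:shortcyc-approxindep} to factor the probabilities. The contribution is at most $\exp(O(m^2/n) + \Theta^{\mathsf{simple}})(e^{\Theta^{\mathsf{bad}}} - 1)$, with the following bounds.
\begin{itemize}
\item $\Theta^{\mathsf{simple}} \le e^{O_d(L)}$.
\item $\Theta^{\mathsf{bad}} \le n^{-1} \exp(O(m \log m))$. This uses the excess-edge count $m(\Gamma) \ge \abs{V(\Gamma)} + 1$ together with the bound $m^{O(m)}$ on the number of possible shapes $G_\Gamma$.
\end{itemize}
Together these give $n^{-1}\exp(e^{O_d(L)} + O(m\log m))$.

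Finally we convert to a relative error. Since $\abs{\bm{c}} = O(m)$, each factor of $\psi(\bm{c})$ is controlled: $\prod_{k,S} 1/\bm{c}_k(S)! \ge 1/(2Km)!$ and $\lambda_k \ge 1/(2k)$. Hence $\psi(\bm{c}) \ge \exp(-O(m\log m))$. Dividing by $\psi(\bm{c})$ absorbs this loss into the $O(m \log m)$ term of the exponent, which is exactly the claimed bound.

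The main obstacle is the $\Theta^{\mathsf{bad}}$ estimate. It requires checking that overlaps among the $2K$ collections still produce a cluster with strictly more distinct edges than vertices; identical cycles have already been merged by $\bm{c}(S)$. It also requires that the shape enumeration stays $m^{O(m)}$ with $K$ fixed, and I expect both to carry over from \cref{lem:bad}.
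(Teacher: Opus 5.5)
Your proposal is correct and follows essentially the same route as the paper: reduce to the bad-cluster estimate \eqref{eq:sumbad-unif-bound} from the proof of \cref{lem:bad}, applied to $\bm{c}^{\mathsf{ALL}}=\sum_S \bm{c}(S)$ with $\abs{\bm{c}^{\mathsf{ALL}}}\le 2Km$. The only difference is bookkeeping. The paper uses the relative form of that estimate, so the multinomial factor $\prod_k \bm{c}_k^{\mathsf{ALL}}!/\prod_S \bm{c}_k(S)!$ cancels exactly against $\prod_k 1/\bm{c}^{\mathsf{ALL}}_k!$ to give $\psi(\bm{c})$. You instead bound the multinomial by $e^{O_K(m)}$ and convert using $\psi(\bm{c})\ge e^{-O(m\log m)}$, which is equivalent because the paper derived that estimate through the same crude lower bound on $\psi$.
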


\begin{proof}[Proof Sketch of \cref{lem:good-K}]
Reasoning as in the proof of \cref{lem:good}, we have
\begin{align*}
    \varphi^{\mathscr{G}}(\bm{c}) = \exp\wrapp{\pm O\wrapp{\frac{m^{2}}{n}}} \cdot \wrapp{\frac{1}{dn}}^{\sum_{\emptyset \subsetneq S \subseteq [2K]} \abs{\bm{c}(S)}} \cdot \abs{\widehat{\mathscr{G}}(\bm{c})},
\end{align*}
where
\begin{align*}
    \widehat{\mathscr{G}}(\bm{c}) \defeq \wrapc{(\Gamma^{(S)})_{\emptyset \subsetneq S \subseteq [2K]} : \substack{\bm{c}(\Gamma^{(S)}) = \bm{c}(S), \forall \emptyset \subsetneq S \subseteq [2K] \\ \bigcup_{\emptyset \subsetneq S \subseteq [2K]} \Gamma^{(S)} \in \mathscr{G}}}.
\end{align*}
Following the proof of \cref{claim:goodcount-approx}, we see that
\begin{align*}
    1 \geq \frac{\abs{\widehat{\mathscr{G}(\bm{c})}}}{(dn)^{\sum_{\emptyset \subsetneq S \subseteq [2K]} \abs{\bm{c}(S)}} \cdot \prod_{k=1}^{L} \prod_{\emptyset \subsetneq S \subseteq [2K]} \frac{\lambda_{k}^{\bm{c}_{k}(S)}}{\bm{c}_{k}(S)!}} \geq \exp\wrapp{-O\wrapp{\frac{L^{3}m^{2}}{n}}},
\end{align*}
crucially using the inequality in \cref{eq:Gc-size-estimate}, which was already proved there (in the context of \cref{lem:good}) in full generality.
\end{proof}
\begin{proof}[Proof Sketch of \cref{lem:bad-K}]
Writing $\bm{c}^{\mathsf{ALL}} = \sum_{\emptyset \subsetneq S \subseteq [2K]} \bm{c}(S) \in \N^{L}$, observe that
\begin{align*}
    \abs{\varphi(\bm{c}) - \varphi^{\mathscr{G}}(\bm{c})} &= \sum_{\substack{\Gamma^{(S)} \in \mathscr{G}(\bm{c}(S)) \\ \forall \emptyset \subsetneq S \subseteq [2K] \\ \bigcup_{\emptyset \subsetneq S \subseteq [2K]} \Gamma^{(S)} \notin \mathscr{G}}} \Pr\wrapb{\bigcup_{\emptyset \subsetneq S \subseteq [2K]} \Gamma^{(S)} \subseteq M} \\
    &\leq \prod_{k=1}^{L} \frac{\bm{c}_{k}^{\mathsf{ALL}}!}{\prod_{\emptyset \subsetneq S \subseteq [2K]} \bm{c}_{k}(S)!} \cdot \sum_{\Gamma \in \mathscr{B}(\bm{c}^{\mathsf{ALL}})} \Pr[\Gamma \subseteq M]. \tag{Setting $\Gamma = \bigcup_{\emptyset \subsetneq S \subseteq [2K]} \Gamma^{(S)}$}
\end{align*}
Applying the estimate from \cref{eq:sumbad-unif-bound} to $\bm{c}^{\mathsf{ALL}}$, which was already proved in full generality within the context of \cref{lem:bad-K}, we see that
\begin{align*}
    \sum_{\Gamma \in \mathscr{B}(\bm{c}^{\mathsf{ALL}})} \Pr[\Gamma \subseteq M] \leq \frac{\exp\wrapp{\exp(O_{d}(L)) + O(m\log m)}}{n} \cdot \prod_{k=1}^{L} \frac{\lambda_{k}^{\bm{c}_{k}^{\mathsf{ALL}}}}{\bm{c}_{k}^{\mathsf{ALL}}!}.
\end{align*}
Noting that $\lambda_{k}^{\bm{c}_{k}^{\mathsf{ALL}}} = \prod_{\emptyset \subsetneq S \subseteq [2K]} \lambda_{k}^{\bm{c}_{k}(S)}$, the combination of the preceding two displays proves the claim.
\end{proof}

\section{Poisson summation for shifted complex Gaussians}
Here we state and prove a consequence of the Poisson summation formula for shifted complex Gaussians that we use in the complex first and second moment calculations.

\begin{lemma}[Poisson summation over an affine lattice]
\label{lem:poisson-affine-lattice}
Fix $r\in\mathbb{N}$, and let $\mathcal{K}$ be a compact family of
complex symmetric $r\times r$ matrices such that
\[
    \Re(A)\succeq c_0 I
    \qquad\text{for every }A\in\mathcal{K}
\]
for some $c_0>0$. For $n>0$ and $a\in\mathbb{R}^r$,
define the affine lattice
\[
    \Lambda_{n,a}
    :=
    \frac{1}{\sqrt n}\bigl(\mathbb{Z}^r+a\bigr).
\]
Let $P\colon\mathbb{R}^r\to\mathbb{C}$ be a polynomial. Then
there exists $c>0$ such that, uniformly over $A\in\mathcal{K}$
and $a\in\mathbb{R}^r$,
\begin{align}
    \frac{1}{n^{r/2}}
    \sum_{t\in\Lambda_{n,a}}
    P(t)\exp\left(-\frac12t^{\mathsf T}At\right)
    &=
    \int_{\mathbb{R}^r}
    P(t)\exp\left(-\frac12t^{\mathsf T}At\right)\,dt
    +O_P(e^{-cn}).
    \label{eq:poisson-affine-lattice}
\end{align}
The implicit constant may depend on $P$ and $\mathcal{K}$, but
not on $A$, $a$, or $n$.

In particular, if $P$ is odd, then
\begin{equation}
    \frac{1}{n^{r/2}}
    \sum_{t\in\Lambda_{n,a}}
    P(t)\exp\left(-\frac12t^{\mathsf T}At\right)
    =
    O_P(e^{-cn}).
    \label{eq:poisson-affine-lattice-odd}
\end{equation}
For $P\equiv1$, we have
\begin{equation}
    \frac{1}{n^{r/2}}
    \sum_{t\in\Lambda_{n,a}}
    \exp\left(-\frac12t^{\mathsf T}At\right)
    =
    \frac{(2\pi)^{r/2}}{\sqrt{\det A}}
    \left(1+O(e^{-cn})\right),
    \label{eq:poisson-affine-lattice-gaussian}
\end{equation}
where the branch of $\sqrt{\det A}$ is obtained by continuous
extension from the real positive-definite matrices.
\end{lemma}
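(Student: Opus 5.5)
The plan is to apply the Poisson summation formula to $f(t) := P(t)\exp(-\tfrac12 t^{\mathsf T}At)$ on the lattice $\Lambda_{n,a}$, after a rescaling. Write $t = (k+a)/\sqrt n$ with $k\in\mathbb Z^r$ and set $g(x) := f((x+a)/\sqrt n)$. Since $f$ is a Schwartz function (a polynomial times a Gaussian whose real part is uniformly negative definite), Poisson summation gives
\[
\sum_{k\in\mathbb Z^r} g(k) = \sum_{\xi\in\mathbb Z^r}\widehat g(\xi),
\qquad
\widehat g(\xi) = n^{r/2}\, e^{2\pi i \langle \xi,a\rangle}\,\widehat f(\sqrt n\,\xi),
\]
with $\widehat f(\omega)=\int_{\mathbb R^r} f(t)e^{-2\pi i\langle\omega,t\rangle}\,dt$. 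Dividing by $n^{r/2}$, the term $\xi=0$ is exactly $\int f$. The shift $a$ only contributes unimodular phases, so all bounds will automatically be uniform in $a$.

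The main step is to bound the nonzero frequencies uniformly in $A\in\mathcal K$. I would compute $\widehat f$ explicitly. For $P\equiv1$, analytic continuation from the real positive-definite case gives
\[
\widehat f(\omega) = \frac{(2\pi)^{r/2}}{\sqrt{\det A}}\exp\bigl(-2\pi^2\,\omega^{\mathsf T}A^{-1}\omega\bigr).
\]
For general $P$, one writes $P(t)$ as the differential operator $P\bigl(\tfrac{i}{2\pi}\nabla_\omega\bigr)$ acting on this, so $\widehat f(\omega)=Q_A(\omega)\exp(-2\pi^2\omega^{\mathsf T}A^{-1}\omega)$. Here $Q_A$ is a polynomial whose coefficients are continuous in $A$, hence bounded on $\mathcal K$.

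The key fact needed is $\Re(A^{-1})\succeq c_1 I$ uniformly on $\mathcal K$. To see it, take $u=A^{-1}v$. Then
\[
\Re\bigl(\bar v^{\mathsf T}A^{-1}v\bigr)=\Re\bigl(\bar u^{\mathsf T}\bar A u\bigr)\ge c_0|u|^2\ge c_0|v|^2/\|A\|^2,
\]
and $\|A\|$ is bounded on the compact family. Consequently $|\widehat f(\sqrt n\xi)|\le C(1+n|\xi|^2)^{D}e^{-2\pi^2c_1 n|\xi|^2}$. Summing over $\xi\neq0$ gives $O(e^{-cn})$ for any $c<2\pi^2c_1$.

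This yields \eqref{eq:poisson-affine-lattice}. For \eqref{eq:poisson-affine-lattice-odd}, note that $\int f=0$ when $P$ is odd, since the Gaussian weight is even. For \eqref{eq:poisson-affine-lattice-gaussian}, the complex Gaussian identity gives $\int f = (2\pi)^{r/2}/\sqrt{\det A}$, with the branch fixed by continuation, as in \eqref{eq:fm-complex-gaussian}. This quantity is bounded away from zero on $\mathcal K$ because $\det A\neq0$ (from $\Re A\succ0$) and $\mathcal K$ is compact, so the additive error converts to a relative one.

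I expect the only delicate point to be justifying the complex-$A$ Fourier transform formula and its uniformity. I would handle this by noting that both sides are analytic in the entries of $A$ on the convex domain $\{\Re A\succ0\}$ and agree for real $A$. Uniform integrable domination then comes from $\Re A\succeq c_0 I$.
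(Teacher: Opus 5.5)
Your proposal is correct and is essentially the paper's proof. Both use Poisson summation with the shift $a$ contributing only unimodular phases, the complex Gaussian Fourier transform $\widehat f(\xi)=Q_{A}(\xi)\exp(-2\pi^2\xi^{\mathsf T}A^{-1}\xi)$ obtained by analytic continuation and differentiation, and the uniform bound $\Re(A^{-1})\succeq c_0\|A\|^{-2}I$ to make all nonzero frequencies $O(e^{-cn})$ (your quadratic-form computation is equivalent to the paper's identity $\Re(A^{-1})=(A^{-1})^*\Re(A)A^{-1}$). Your explicit conversion of the additive error into a relative one for $P\equiv1$, using that $(2\pi)^{r/2}/\sqrt{\det A}$ is bounded away from zero on the compact family $\mathcal K$, fills in a step the paper leaves implicit.
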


\begin{proof}
We use the Fourier-transform convention
\[
    \widehat f(\xi)
    =
    \int_{\mathbb{R}^r}
    f(t)e^{-2\pi i\langle t,\xi\rangle}\,dt.
\]
For $A\in\mathcal{K}$, define
\[
    f_{A,P}(t)
    :=
    P(t)\exp\left(-\frac12t^{\mathsf T}At\right).
\]
Since $\Re(A)\succeq c_0I$, the function $f_{A,P}$ belongs to
the Schwartz space.

Poisson summation
\cite[Chapter~5, Section~3, Theorem~3.1]{SteinShakarchi2003},
applied coordinatewise, gives
\begin{equation}
    \frac{1}{n^{r/2}}
    \sum_{t\in\Lambda_{n,a}} f_{A,P}(t)
    =
    \sum_{k\in\mathbb{Z}^r}
    e^{2\pi i\langle k,a\rangle}
    \widehat f_{A,P}(\sqrt n\,k).
    \label{eq:poisson-affine-proof}
\end{equation}
The sign in the phase corresponds to the convention
$\Lambda_{n,a}=n^{-1/2}(\mathbb{Z}^r+a)$; its precise sign will
not be important below.

The zero-frequency term on the right-hand side is
\[
    \widehat f_{A,P}(0)
    =
    \int_{\mathbb{R}^r}
    P(t)\exp\left(-\frac12t^{\mathsf T}At\right)\,dt.
\]
By completing the square, it follows that for $P\equiv 1$,  the complex Gaussian integral gives
\begin{equation}
    \widehat{f}_{A,1}(\xi)
    =
    \frac{(2\pi)^{r/2}}{\sqrt{\det A}}
    \exp\left(-2\pi^2\xi^{\mathsf T}A^{-1}\xi\right).
    \label{eq:fourier-complex-gaussian}
\end{equation}
The identity initially follows for real positive-definite $A$ and
extends to $\Re(A)\succ 0$ by analytic continuation. The same
continuation determines the branch of $\sqrt{\det A}$.

For a general polynomial $P$, differentiating both sides of~\cref{eq:fourier-complex-gaussian} implies that
\begin{equation}
    \widehat{f}_{A,P}(\xi)
    =
    Q_{A,P}(\xi)
    \exp\left(-2\pi^2\xi^{\mathsf T}A^{-1}\xi\right),
    \label{eq:fourier-polynomial-gaussian}
\end{equation}
where $Q_{A,P}$ is a polynomial of degree at most $\deg P$.
Its coefficients are uniformly bounded for $A\in\mathcal{K}$.

We also have
\[
    \Re(A^{-1})
    =
    (A^{-1})^*\Re(A)A^{-1}.
\]
If
\[
    M:=\sup_{A\in\mathcal{K}}\|A\|<\infty,
\]
then
\[
    \Re(A^{-1})\succeq \frac{c_0}{M^2}I
\]
uniformly over $A\in\mathcal{K}$. Hence, for $k\neq0$,
\[
    \left|
        \widehat f_{A,P}(\sqrt n\,k)
    \right|
    \leq
    C_P(1+\sqrt n\,\|k\|)^{\deg P}
    e^{-c_1n\|k\|^2}.
\]
Since the phase factor in \eqref{eq:poisson-affine-proof} has
absolute value one, summing over $k\neq0$ gives
\[
    \sum_{k\in\mathbb{Z}^r\setminus\{0\}}
    \left|
        \widehat f_{A,P}(\sqrt n\,k)
    \right|
    =
    O_P(e^{-cn}).
\]
This proves \eqref{eq:poisson-affine-lattice}.

If $P$ is odd, then the zero-frequency term vanishes because
\[
    t\longmapsto
    \exp\left(-\frac12t^{\mathsf T}At\right)
\]
is even. This proves \eqref{eq:poisson-affine-lattice-odd}.
Taking $P\equiv1$ and using the complex Gaussian integral
\[
    \int_{\mathbb{R}^r}
    \exp\left(-\frac12t^{\mathsf T}At\right)\,dt
    =
    \frac{(2\pi)^{r/2}}{\sqrt{\det A}}
\]
proves \eqref{eq:poisson-affine-lattice-gaussian}.
\end{proof}

\begin{corollary}[Truncated affine lattice sum]
\label{cor:poisson-affine-lattice-truncated}
Under the assumptions of
\cref{lem:poisson-affine-lattice}, fix $\gamma\in(0,1/2)$.
Then, uniformly over $A\in\mathcal{K}$ and $a\in\mathbb{R}^r$,
\begin{align}
    &\frac{1}{n^{r/2}}
    \sum_{\substack{t\in\Lambda_{n,a}\\
                    \|t\|_\infty\leq n^\gamma}}
    P(t)\exp\left(-\frac12t^{\mathsf T}At\right)
    \nonumber\\
    &\qquad =
    \int_{\mathbb{R}^r}
    P(t)\exp\left(-\frac12t^{\mathsf T}At\right)\,dt
    +
    O_P\left(e^{-c n^{2\gamma}}\right).
    \label{eq:poisson-affine-lattice-truncated}
\end{align}
In particular, if $P$ is odd, then
\begin{equation}
    \frac{1}{n^{r/2}}
    \sum_{\substack{t\in\Lambda_{n,a}\\
                    \|t\|_\infty\leq n^\gamma}}
    P(t)\exp\left(-\frac12t^{\mathsf T}At\right)
    =
    O_P\left(e^{-c n^{2\gamma}}\right).
    \label{eq:poisson-affine-lattice-truncated-odd}
\end{equation}
\end{corollary}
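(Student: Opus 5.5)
We will deduce \cref{cor:poisson-affine-lattice-truncated} from \cref{lem:poisson-affine-lattice} by bounding the discarded tail of the lattice sum. Write
\[
    \frac{1}{n^{r/2}}\sum_{\substack{t\in\Lambda_{n,a}\\ \|t\|_\infty\le n^\gamma}} f_{A,P}(t)
    = \frac{1}{n^{r/2}}\sum_{t\in\Lambda_{n,a}} f_{A,P}(t) - \frac{1}{n^{r/2}}\sum_{\substack{t\in\Lambda_{n,a}\\ \|t\|_\infty> n^\gamma}} f_{A,P}(t),
\]
where $f_{A,P}(t)=P(t)\exp(-\frac12 t^{\mathsf T}At)$ as in the proof of the lemma. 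By \cref{lem:poisson-affine-lattice}, the full sum equals the Gaussian integral up to $O_P(e^{-cn})$. Since $\gamma<1/2$, we have $e^{-cn}\le e^{-cn^{2\gamma}}$, so it suffices to show the tail is $O_P(e^{-cn^{2\gamma}})$ uniformly in $A\in\mathcal K$ and $a\in\R^r$.

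For the tail we use only the modulus. Since $\Re(A)\succeq c_0 I$, we have $|f_{A,P}(t)|\le |P(t)|e^{-c_0\|t\|^2/2}\le C_P e^{-c_0\|t\|^2/4}$ for all $t$, absorbing the polynomial into half the Gaussian decay. On the region $\|t\|_\infty>n^\gamma$ we have $\|t\|^2\ge n^{2\gamma}$. We then split the exponent once more: $e^{-c_0\|t\|^2/4}\le e^{-c_0n^{2\gamma}/8}e^{-c_0\|t\|^2/8}$. It remains to check that $n^{-r/2}\sum_{t\in\Lambda_{n,a}}e^{-c_0\|t\|^2/8}$ is bounded uniformly in $a$ and $n$. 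This is a Riemann-sum bound: the sum factorizes over coordinates, and in one dimension $n^{-1/2}\sum_{j\in\Z}e^{-c'(j+a)^2/n}\le n^{-1/2}+\int_\R e^{-c' s^2}ds$ by monotonicity on each half-line. (Alternatively, apply \cref{lem:poisson-affine-lattice} to the real Gaussian with $P\equiv1$.) Combining these gives a tail of order $O_P(e^{-c_0n^{2\gamma}/8})$, which yields \eqref{eq:poisson-affine-lattice-truncated} with $c$ replaced by $\min(c,c_0/8)$.

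The odd case \eqref{eq:poisson-affine-lattice-truncated-odd} follows immediately, since the Gaussian integral of an odd polynomial vanishes; this is \eqref{eq:poisson-affine-lattice-odd} at the level of integrals. The only point requiring care is the uniformity of the tail bound in the shift $a$ and in $A$, and the Riemann-sum estimate above handles both. Note that the truncation set $\{\|t\|_\infty\le n^\gamma\}$ need not be symmetric under $t\mapsto -t$ on the shifted lattice; this is why the approach bounds the tail absolutely rather than appealing to cancellation.
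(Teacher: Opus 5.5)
Your proposal is correct and follows the paper's proof. You split off the tail $\|t\|_\infty>n^\gamma$, handle the full lattice sum with \cref{lem:poisson-affine-lattice} (using $e^{-cn}\le e^{-cn^{2\gamma}}$), and bound the tail in absolute value via $\Re(A)\succeq c_0 I$. You also spell out the uniform-in-$a$ Riemann-sum bound on $n^{-r/2}\sum_{t\in\Lambda_{n,a}}e^{-c_0\|t\|^2/8}$, which the paper leaves implicit.
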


\begin{proof}
By \cref{lem:poisson-affine-lattice}, the corresponding
sum over the full affine lattice differs from the Gaussian
integral by $O_P(e^{-cn})$. On the other hand,
$\Re(A)\succeq c_0I$ implies
\[
    \left|
        P(t)e^{-\frac12t^{\mathsf T}At}
    \right|
    \leq
    C_P(1+\|t\|)^{\deg P}e^{-\frac{c_0}{2}\|t\|^2}.
\]
Consequently, the contribution of the lattice points satisfying
$\|t\|_\infty>n^\gamma$ is
$O_P(e^{-c n^{2\gamma}})$, after decreasing $c$ if necessary.
\end{proof}

\section{Technical Lemmata}\label{sec:technical}
\begin{proof}[Proof of \cref{lem:reweighting-analytic}]
Since $A_{L}$ and $A_{L,m}$ are analytic functions in $\delta_{k}$ and $b_{k} = \frac{1}{1 + \delta_{k}} - 1$, it suffices to prove that the functions $\delta_{k}(\beta)$ and $\frac{1}{1 + \delta_{k}(\beta)}$ are all analytic on the claimed set for all $k \in \N$.

The function $\delta_{k}(\beta)$ is analytic on the set of $\beta$ such that the denominator $(e^{\beta} + 1)^{k}$ is nonzero. The zeros of this function are the solutions to the equation $e^{\beta} = -1$, i.e. $\beta$ of the form $i(2\ell + 1)\pi$ for $\ell \in \Z$. Thus, for every $k$, the term $\exp(\lambda_{k} \cdot \delta_{k}(\beta))$ is analytic on the set $\wrapc{\beta \in \C : \abs{\imag \beta} < \pi}$, which contains the claimed strip.

The function $\frac{1}{1 + \delta_{k}(\beta)}$ is analytic for $\beta$ such that $1 + \delta_{k}(\beta) \neq 0$. The zeros of $1 + \delta_{k}(\beta)$ are such that for some $\ell \in \Z$,
\begin{align*}
    &\frac{e^{-\beta} - 1}{e^{-\beta} + 1} = e^{i\pi \cdot \frac{2\ell+1}{k}} \\
    &\iff e^{-\beta} = \frac{1 + e^{i\pi \cdot \frac{2\ell + 1}{k}}}{1 - e^{i\pi \cdot \frac{2\ell+1}{k}}} \\
    &\iff \beta = \log\wrapp{\frac{1 - e^{i\pi \cdot \frac{2\ell + 1}{k}}}{1 + e^{i\pi \cdot \frac{2\ell+1}{k}}}}.
\end{align*}
Letting $w = e^{i\pi \cdot \frac{2\ell+1}{k}} \in \partial \D$ and noting that the M\"{o}bius transformation $z \mapsto \frac{1-z}{1+z}$ sends the unit circle $\partial \D$ to the imaginary axis, we see that for any $k \in \N$, any solution to the equation $1 + \delta_{k}(\beta) = 0$ must satisfy $\imag \beta = \pi \cdot \wrapp{\frac{1}{2} + \ell}$ for some $\ell \in \Z$, i.e. $1 + \delta_{k}(\beta)$ is zero-free in the strip $\wrapc{\beta \in \C : \abs{\imag \beta} < \pi/2}$. This shows that $\frac{1}{1 + \delta_{k}(\beta)}$ is analytic on the claimed strip for any $k$.
\end{proof}

\begin{proof}[Proof of \cref{lem:pfn-disj-cycles}]
We use the standard ``even subgraphs expansion'' trick.
\begin{align*}
    Z_{F}(\beta) &= \sum_{\sigma \in \{\pm1\}^{k}} \exp\wrapp{\beta \sum_{uv \in E} \frac{1 - \sigma_{u}\sigma_{v}}{2}} \\
    &= \exp\wrapp{\beta k/2} \cdot \sum_{\sigma \in \{\pm1\}^{k}} \prod_{uv \in E} \exp\wrapp{-\frac{\beta}{2}\sigma_{u}\sigma_{v}} \\
    &= \exp\wrapp{\beta k/2} \cdot \sum_{\sigma \in \{\pm1\}^{k}} \prod_{uv \in E} \cosh\wrapp{-\frac{\beta}{2}\sigma_{u}\sigma_{v}}\cdot \wrapp{1 + \tanh\wrapp{-\frac{\beta}{2}\sigma_{u}\sigma_{v}}} \tag{Using $\exp(x) = \cosh(x) \cdot (1 + \tanh(x))$} \\
    &= \exp\wrapp{\beta k/2} \cdot \cosh(\beta/2)^{k} \cdot \sum_{\sigma \in \{\pm1\}^{k}} \prod_{uv \in E} \wrapp{1 - \sigma_{u}\sigma_{v}\tanh(\beta/2)} \tag{$\cosh(x)$ is even and $\tanh(x)$ is odd} \\
    &= \wrapp{\frac{1 + e^{\beta}}{2}}^{k} \cdot \sum_{S \subseteq E} (-1)^{\abs{S}}\tanh(\beta/2)^{\abs{S}} \sum_{\sigma \in \{\pm1\}^{k}} \prod_{uv \in S} \sigma_{u}\sigma_{v} \\
    &= 2^{k}\wrapp{\frac{1 + e^{\beta}}{2}}^{k} \cdot \sum_{S \subseteq E \text{ even}} (-1)^{\abs{S}}\tanh(\beta/2)^{\abs{S}} \tag{$S$ is even if all vertices have even degree in $S$} \\
    &= 2^{k}\wrapp{\frac{1 + e^{\beta}}{2}}^{k} \cdot \wrapp{1 + (-1)^{k}\tanh(\beta/2)^{k}} \tag{A cycle only has the two even subgraphs $S=\emptyset,E$} \\
    &= 2^{k}\wrapp{\frac{1 + e^{\beta}}{2}}^{k} \cdot (1 + \delta_{k}(\beta)).
\end{align*}
\end{proof}

\begin{proof}[Proof of \cref{lem:shortcyc-approxindep}]
The first claim is clear. For the second, note that if every edge appears at most once in $\Gamma$, then $m(\Gamma) = \abs{\Gamma}$. Hence,
\begin{align*}
    \Pr[\Gamma \subseteq M] = \prod_{j=0}^{\abs{\Gamma} - 1} \frac{1}{dn - (2j + 1)} \in \wrapb{\wrapp{\frac{1}{dn}}^{\abs{\Gamma}}, \wrapp{\frac{1}{dn - 2m}}^{\abs{\Gamma}}}
\end{align*}
Applying the above to each $C \in \Gamma$ also shows that $\prod_{C \in \Gamma} \Pr[C \subseteq M]$ lands in the same interval. Hence,
\begin{align*}
    \frac{\Pr[\Gamma \subseteq M]}{\prod_{C \in \Gamma} \Pr[C \subseteq M]} &\leq \wrapp{\frac{dn}{dn - 2m}}^{\abs{\Gamma}} \\
    &\leq \exp(O(m^{2}/n)). \tag{Using $\abs{\Gamma} \leq m$}
\end{align*}
An analogous argument shows that this ratio is at least $\exp\wrapp{-O(m^{2}/n)}$.
\end{proof}

\end{document}